\newcommand{\scheiding}{ \dotfill } 
\newcommand{\vulop}{ \dotfill } 
\newcommand{\pmat}[1]{\begin{pmatrix} #1 \end{pmatrix}}
\newcommand{\nin}{\noindent}
\newcommand{\cs}{\mathrm{cs}}
\newcommand{\QI}{I}
\newcommand{\cone}{\mathscr{C}}
\newcommand{\scri}{\mathscr{I}}
\newcommand{\ol}[1]{\ensuremath{\overline{#1}}}
\newcommand{\lra}[1]{\langle #1 \rangle}
\newcommand{\cZ}{\ensuremath{\mathcal{Z}}}
\newcommand{\cW}{\ensuremath{\mathcal{W}}}
\newcommand{\cE}{\ensuremath{\mathcal{E}}}
\newcommand{\cD}{\ensuremath{\mathcal{D}}}
\newcommand{\cF}{\ensuremath{\mathcal{F}}}
\newcommand{\cP}{\ensuremath{\mathcal{P}}}
\newcommand{\cA}{\ensuremath{\mathcal{A}}}
\newcommand{\cU}{\ensuremath{\mathcal{U}}}
\newcommand{\cB}{\ensuremath{\mathcal{B}}}
\newcommand{\cH}{\ensuremath{\mathcal{H}}}
\newcommand{\cM}{\ensuremath{\mathcal{M}}}
\newcommand{\cN}{\ensuremath{\mathcal{N}}}
\newcommand{\cC}{\ensuremath{\mathcal{C}}}
\newcommand{\cS}{\ensuremath{\mathcal{S}}}
\newcommand{\cL}{\ensuremath{\mathcal{L}}}
\newcommand{\cV}{\ensuremath{\mathcal{V}}}
\newcommand{\cG}{\ensuremath{\mathcal{G}}}
\newcommand{\cK}{\ensuremath{\mathcal{K}}}
\newcommand{\fz}{\ensuremath{\mathfrak{z}}}
\newcommand{\fsl}{\ensuremath{\mathfrak{sl}}}
\newcommand{\ft}{\ensuremath{\mathfrak{t}}}
\newcommand{\fg}{\ensuremath{\mathfrak{g}}}
\newcommand{\su}{\ensuremath{\mathfrak{su}}}
\newcommand{\hg}{\ensuremath{\hat{\mathfrak{g}}}}
\newcommand{\fk}{\ensuremath{\mathfrak{k}}}
\newcommand{\fK}{\ensuremath{\mathfrak{K}}}
\newcommand{\fZ}{\ensuremath{\mathfrak{Z}}}
\newcommand{\hfK}{\ensuremath{\widehat{\mathfrak{K}}}}
\newcommand{\fh}{\ensuremath{\mathfrak{h}}}
\newcommand{\fp}{\ensuremath{\mathfrak{p}}}
\newcommand{\gau}{\ensuremath{\mathfrak{gau}}}
\newcommand{\Ad}{\ensuremath{\operatorname{Ad}}}
\newcommand{\U}{\ensuremath{\operatorname{U}}}
\newcommand{\ad}{\ensuremath{\operatorname{ad}}}
\newcommand{\Gau}{\ensuremath{\operatorname{Gau}}}
\newcommand{\Aut}{\ensuremath{\operatorname{Aut}}}
\newcommand{\Fix}{\ensuremath{\operatorname{Fix}}}
\newcommand{\ev}{\ensuremath{\operatorname{ev}}}
\newcommand{\supp}{\ensuremath{\operatorname{supp}}}
\newcommand{\GL}{\ensuremath{\operatorname{GL}}}
\newcommand{\PSL}{\ensuremath{\operatorname{PSL}}}
\newcommand{\SL}{\ensuremath{\operatorname{SL}}}
\newcommand{\Aff}{\ensuremath{\operatorname{Aff}}}
\newcommand{\SO}{\ensuremath{\operatorname{SO}}}
\newcommand{\half}{\ensuremath{{\textstyle \frac{1}{2}}}}
\newcommand{\shalf}{\ensuremath{{\textstyle \frac{1}{2}}}}
\newcommand{\R}{\ensuremath{\mathbb{R}}}
\newcommand{\Q}{\ensuremath{\mathbb{Q}}}
\newcommand{\C}{\ensuremath{\mathbb{C}}}
\newcommand{\Z}{\ensuremath{\mathbb{Z}}}
\newcommand{\N}{\ensuremath{\mathbb{N}}}
\newcommand{\T}{\ensuremath{\mathbb{T}}}
\newcommand{\bv}{{\bf{v}}}
\newcommand{\hbv}{{\widehat{\bf{v}}}}
\newcommand{\bd}{{\bf{d}}}
\newcommand{\bP}{{\mathbb P}}
\newcommand{\bS}{{\mathbb S}}
\newcommand{\bE}{\ensuremath{\mathbb{E}}}
\newcommand{\bV}{\ensuremath{\mathbb{V}}}
\newcommand{\one}{\ensuremath{\mathbf{1}}}
\newcommand{\Exp}{\ensuremath{\operatorname{Exp}}}
\newcommand{\id}{\ensuremath{\operatorname{id}}}
\newcommand{\tr}{\ensuremath{\operatorname{tr}}}
\renewcommand{\tilde}{\widetilde}
\newcommand{\into}{\hookrightarrow}
\def\onto{\to\mskip-14mu\to}
\newcommand{\Spann}{\mathop{{\rm span}}\nolimits}
\newcommand{\Diff}{\mathop{{\rm Diff}}\nolimits}
\newcommand{\im}{\mathop{{\rm im}}\nolimits}
\newcommand{\trile}{\trianglelefteq}
\newcommand{\supeq}{\supseteq}
\newcommand{\derat}[1]{\frac{d}{dt} \hbox{\vrule width0.5pt
                height 5mm depth 3mm${{}\atop{{}\atop{\scriptstyle t=#1}}}$}}
\newcommand{\subeq}{\subseteq}
\newcommand{\g}{{\mathfrak g}}
\newcommand{\PU}{\mathop{\rm PU{}}\nolimits}
\newcommand{\OO}{\mathop{\rm O{}}\nolimits}
\newcommand{\der}{\mathop{\rm der}\nolimits}
\newcommand{\Spec}{\mathop{\rm Spec}\nolimits}
\newcommand{\SU}{\mathop{\rm SU}\nolimits}
\newcommand{\End}{\mathop{\rm End}\nolimits}
\newcommand{\per}{\mathop{\rm per}\nolimits}
\newcommand{\la}{\langle}
\newcommand{\ra}{\rangle}
\newcommand{\Rarrow}{\Rightarrow}
\newcommand{\ssssarr}{\hbox to 15pt{\rightarrowfill}}
\newcommand{\sssarr}{\hbox to 20pt{\rightarrowfill}}
\newcommand{\ssarr}{\hbox to 30pt{\rightarrowfill}}
\newcommand{\sarr}{\hbox to 40pt{\rightarrowfill}}
\newcommand{\arr}{\hbox to 60pt{\rightarrowfill}}
\newcommand{\larr}{\hbox to 60pt{\leftarrowfill}}
\newcommand{\Arr}{\hbox to 80pt{\rightarrowfill}}
\newcommand{\ssmapright}[1]{\smash{\mathop{\ssarr}\limits^{#1}}}
\theoremstyle{plain}
\newtheorem{Theorem}{Theorem}[section]
\newtheorem*{ELTheorem*}{Localization Theorem}
\newtheorem{Lemma}[Theorem]{Lemma}
\newtheorem{Proposition}[Theorem]{Proposition}
\newtheorem{Corollary}[Theorem]{Corollary}
\theoremstyle{definition}
\newtheorem{Definition}[Theorem]{Definition}
\newtheorem{Remark}[Theorem]{Remark}
\newtheorem{Example}[Theorem]{Example}
\renewcommand{\:}{\colon}
\newcommand{\res}{\vert}
\newcommand{\Lie}{\mathop{\bf L{}}\nolimits}
\renewcommand{\hat}{\widehat}
\renewcommand{\phi}{\varphi}
\newcommand{\dd}{{\tt d}}
\newcommand{\eps}{\varepsilon}
\newcommand\oline{\overline}
\begin{document}

\title{Positive energy representations of gauge groups I:\\
Localization}
\author{Bas Janssens 
and
Karl-Hermann Neeb
}

\maketitle

%
%
%
%

\begin{abstract}
This is the first in a series of papers on projective positive energy representations
of gauge groups. Let $\Xi \rightarrow M$ be a principal fiber bundle, and let
$\Gamma_{c}(M,\mathrm{Ad}(\Xi))$ be the group of compactly supported (local) gauge transformations.
If $P$ is a group of `space--time symmetries' acting on ${\Xi\rightarrow M}$, then
a projective unitary representation of $\Gamma_{c}(M,\mathrm{Ad}(\Xi))\rtimes P$
is of \emph{positive energy} if
every `timelike generator' $p_0 \in \fp$ gives rise to a Hamiltonian $H(p_0)$ whose spectrum
is bounded from below.

Our main result shows that in the absence of fixed points for the cone of timelike
generators, the
projective positive energy representations of the connected component $\Gamma_{c}(M,\Ad(\Xi))_0$
come from $1$-dimensional {$P$-orbits}.
For compact $M$ this yields a complete classification of the projective positive energy representations
in terms of lowest weight representations of affine Kac--Moody algebras. For noncompact $M$,
it yields a classification under further restrictions on the space of ground states.

In the second part of this series we consider larger groups of gauge transformations, which contain also
global transformations. The present results are used to localize the positive energy representations at
(conformal) infinity.
\end{abstract}


\setcounter{tocdepth}{2}
\tableofcontents

\section{Introduction}
\label{sec:1}

This is the first in a series of papers where we analyze the projective positive energy representations of gauge groups.

Our main motivation is the Wigner--Mackey classification \cite{Wigner1939, Ma49} of
projective unitary representations of the Poincar\'e group. Every irreducible such representation is labeled by
an ${\SO^\uparrow(1,d-1)}$-orbit in momentum space $\R^d$, together with an irreducible unitary representation
of the corresponding little group.
It is called a \emph{positive energy}
representation if for every 1-parameter group of timelike translations, the corresponding
Hamilton operator is bounded from below.
This
excludes the tachyonic orbits, leaving the
positive mass hyperboloids $p_{\mu}p^{\mu} = m^2$, 
the positive light cone $p_{\mu}p^{\mu} =0$, $p_0 >0$, 
and the origin $p=0$.
The corresponding little groups yield an intrinsic description of spin (for the positive mass hyperboloids)
and helicity (for the positive light cone).

In this series of papers we aim to extend this picture with an infinite dimensional group $\cG$ of gauge transformations, placing internal symmetries  and space-time symmetries on the same footing.

\subsection{Outline of Part I and II of this series}
For a gauge theory with structure group $K$, the fields over the space-time manifold $M$
are associated to a principal $K$-bundle $\Xi \rightarrow M$.
We consider the equivariant setting, where the group $P$ of space-time symmetries acts by automorphisms on
$\Xi \rightarrow M$, and the Lie algebra $\fp$ of $P$
contains a distinguished cone $\cone \subseteq \fp$  of `timelike generators'.
For Minkowski space, this is of course the Poincar\'e group $P$ with the cone $\cone$ of timelike translations.

The relevant group $\cG$ of gauge transformations depends on the context. It
always contains the group
$\cG_{c} := \Gamma_{c}(M,\Ad(\Xi))$
of compactly supported vertical automorphisms of $\Xi \rightarrow M$, and it is
this group that we will focus on in Part~I of this series.
In Part~II we consider also global gauge transformations. The group
$\cG$ is then larger than $\cG_{c}$, but it may be smaller than the group 
$\Gamma(M,\Ad(\Xi))$
of all vertical automorphisms due to boundary conditions at infinity.


A projective unitary representation of $\cG \rtimes P$ assigns to every timelike generator $p_0 \in \cone$
a one-parameter group of projective unitary transformations,
and hence a self-adjoint Hamiltonian $H(p_0)$ that is well defined up to a constant.
%
\begin{quote}
\emph{Our main objective is to study the projective unitary representations of $\cG\rtimes P$ that are of positive energy,
in the sense that the Hamiltonians $H(p_0)$ are bounded from below.}
\end{quote}
Perhaps surprisingly, this places rather stringent restrictions on the representation theory of $\cG$, leading to a complete classification in favorable cases.

\subsubsection{Outline of Part I}\label{sec:OutlinePartOne}
In the first part of this series, we focus on the group
$\cG_{c} := \Gamma_{c}(M,\Ad(\Xi))$
of \emph{compactly supported} gauge transformations. 
%
Our main result concerns the case where $M$ has no
fixed points for the cone $\cone$ of timelike
generators, and $K$ is a {$1$-connected}, semisimple Lie group.

\begin{ELTheorem*}
For every projective positive energy representation $(\overline{\rho},\cH)$ of the identity component
$\Gamma_c(M,\Ad(\Xi))_{0}$,
 there exists a 1-dimensional, $P$-equivariantly embedded
submanifold $S \subseteq M$ and
a positive energy representation $\ol{\rho}_{S}$
of $\Gamma_{c}(S,\Ad(\Xi))$ such that the following diagram commutes,
\begin{center}
$ $
\xymatrix{
\Gamma_{c}(M,\Ad(\Xi))_{0} \ar[d]_{r_{S}} \ar[r]^{\quad\ol{\rho}} & \mathrm{PU}(\cH)\\
\Gamma_{c}(S,\Ad(\Xi)), \ar[ru]_{\ol{\rho}_{S}}
}
\end{center}
where the vertical arrow denotes restriction to $S$.
\end{ELTheorem*}

%

%
%

This effectively reduces the classification of projective positive energy representations to the 1-dimensional case.
If $M$ is compact, then $S = \bigcup_{j=1}^{k}S_j$ is a finite union of circles. Correspondingly, the group
\begin{equation} \Gamma(S,\Ad(\Xi)) \cong \prod_{j=1}^{k}\Gamma(S_j, \Ad(\Xi))
\end{equation}
is a finite product of twisted loop groups if $K$ is compact, yielding
a complete classification 
in terms of tensor products of highest weight representations for the corresponding
affine Kac--Moody algebras \cite{Ka85, PS86, GW84, TL99}.

To some extent these results generalize to the case of noncompact manifolds~$M$,
where $S$ can then have infinitely many connected components.
We are able to classify the projective positive energy representations under the additional assumption
that they admit a cyclic ground state vector which is unique up to scalar.
These \emph{vacuum representations}
are classified in terms of infinite
tensor products of vacuum representations of affine Kac--Moody algebras.
In particular, every such representation is of type I.
Without the vacuum condition, the classification is considerably more involved.
We study in detail the case where all connected components of $S$ are circles.
Under a geometric `spectral gap' condition, we reduce the classification of projective positive energy representations
to the representation theory of UHF $C^*$-algebras, yielding a rich source of representations of type II and III.

\subsubsection{Outline of Part II}

In the second part of this series, we consider the case where $\cG$ contains \emph{global} as well as compactly supported gauge transformations.
To study the projective positive energy representations, we use the exact sequence
\[
 1 \rightarrow \cG_{c} \rightarrow \cG \rightarrow \cG/\cG_{c} \rightarrow 1.
\]
By the results from Part I on the positive energy representations of $\cG_{c}$, the problem essentially reduces
to the group $\cG/\cG_{c}$ of gauge transformations `at infinity'.

Needless to say, the resulting representation theory is very sensitive to the boundary
conditions at infinity. We focus on the situation where $M$ is an
\emph{asymptotically simple} space-time in the sense of Penrose \cite{Pe65,GKP72,HE73,PR86},
and $\cG$ consists of gauge transformations that extend smoothly to the conformal boundary.
For the motivating example of the Poincar\'e group acting on $d$-dimensional Minkowski space,
we obtain the following detailed account of the projective positive energy representation theory.


\paragraph{Minkowski space in dimension $d>2$:}
In this setting we
show that
the projective positive energy representations of $\cG$
depend only on the \emph{1-jets}
of the gauge transformation at spacelike infinity $\iota_{0}$ and at past and future timelike infinity~$\iota_{\pm}$.
This reduces the problem to the classification of projective positive energy representations
of the (finite dimensional!) semidirect product
\begin{equation}
\big(\SO^\uparrow(1,d-1)\times K^3\big)\ltimes \big(\R^d \oplus (\fk^3 \otimes \R^{d*})\big),
\end{equation}
where $\SO(1,d-1)$ acts on $\R^d$ in the usual fashion, and $K$ acts on its Lie algebra~$\fk$
by the adjoint representation.
The three copies of $K$ encode the values of the gauge transformation at $\iota_0$ and $\iota_{\pm}$, whereas
the three copies of
the additive group $\fk \otimes \R^{d*}$ encode the derivatives.

In the special case where the derivatives act trivially, we recover a projective positive energy representations of the Poincar\'e group
$\R^d \rtimes \SO^\uparrow(1,d-1)$, together with 3 projective unitary representations of the structure group $K$.
More generally, by Mackey's Theorem of Imprimitivity, the
irreducible projective positive energy
representations are labeled by an
orbit of $\SO^\uparrow(1,d-1) \times K^3$ in $\R^d \oplus (\fk^3 \otimes \R^{d*})$ whose energy is bounded from below,
together with a projective unitary representation of the corresponding little group.
In general these little groups will not contain the three copies of $K$, giving rise to phenomena that are
reminiscent of spontaneous symmetry breaking.

\paragraph{Minkowski space in dimension $d=2$:}
In contrast to the higher dimensional case, the projective positive energy representations in $d=2$ do \emph{not} in general factor through a finite dimensional Lie group.
%
For simplicity, we consider the group $\cG$ of gauge transformations that extend smoothly to the
\emph{conformal compactification}
of 2-dimensional Minkowski space.
Here the three points
$\iota_0$ and $\iota_{\pm}$ at space- and timelike infinity are collapsed to a single point~$\QI$, and past and future null infinity
$\scri^{-}$ and $\scri^{+}$ are identified along lightlike geodesics (cf.~\cite{PR86}).
The boundary of this space is a union of two circles $\bS^1_{L}$ and $\bS^1_{R}$ (corresponding
to left and right moving modes)
that intersect transversally in a single point $\QI$.
We prove that the positive energy representations of $\cG$ depend only on the
\emph{values} of the gauge transformations at null infinity
$\scri = (\bS^1_{L} \cup \bS^1_{R})\setminus\{I\}$,
and on the \emph{2-jets} at the single point $\QI$.

At the Lie algebra level, the problem therefore reduces to classifying the projective positive energy representations of
the abelian extension
\begin{equation}
0 \rightarrow |\fk| \rightarrow \fg \rightarrow \fg_{\mathrm{eq}}\rightarrow 0.
\end{equation}
Here
the equalizer Lie algebra
\begin{equation}
\fg_{\mathrm{eq}} = \Big\{(\xi, \eta) \in \Gamma\big(\bS^1_{L},\mathrm{ad}(\Xi)\big) \times \Gamma\big(\bS^1_{R}, \mathrm{ad}(\Xi)\big)
\,;\, \xi(\QI) = \eta(\QI)\Big\}
\end{equation}
represents the values of the infinitesimal gauge transformations on the conformal boundary $\bS^1_{L} \cup \bS^1_{R}$,
the abelian Lie algebra $|\fk|$ with underlying vector space $\fk$ represents the mixed second derivatives at $\QI$,
and $\fg_{\mathrm{eq}}$ acts on $|\fk|$ by evaluating at $\QI$ and composing with the adjoint representation.

Even in the untwisted case, where $\Xi$ is the trivial $K$-bundle, the classification of the
projective positive energy representations is by no means trivial. This is because the positive energy condition
is \emph{not} with respect to rigid rotations of $\bS^1_{L/R}$, but with respect to
the translations of the real projective line $\bS_{L/R}^1 = \R \cup \{\QI\}$ fixing the point $\QI$ at infinity.

Under the restrictive additional condition that the projective unitary representations are of positive energy
with respect to rotations as well as translations, we obtain a classification in terms of highest weight representations of
the two untwisted affine Kac--Moody algebras corresponding to $\bS^1_{L}$ and $\bS^1_{R}$,
together with a projective positive energy representation of the finite dimensional Lie group of 2-jets of
gauge transformations at $\QI$.

Although the Kac--Moody representations are familiar from the construction of loop group nets in conformal field theory, the positive energy representations involving 2-jets (which are Poincar\'e invariant but not conformally invariant) appear to be a novel feature.

\subsection{Structure of the present paper}\label{sec:structure}

For a closed quantum system that is described by a Hilbert space $\cH$, any two states
that differ by a global phase are physically indistinguishable.
The state space of the system is therefore described by the projective Hilbert space $\bP(\cH)$.
By Wigner's Theorem, a connected Lie group $G$ acts on the projective Hilbert space $\bP(\cH)$
by projective unitary transformations, resulting in a projective unitary representation
$\ol{\rho} \colon G \rightarrow \PU(\cH)$.

\subsubsection{Positive energy representations}
Since we are interested in the group of compactly supported gauge transformations,
we need to work with \emph{infinite dimensional} Lie groups
 modeled on locally convex spaces, or \emph{locally convex Lie groups} for short.  \index{Lie group!locally convex \vulop}
 In \S\ref{SectionPPER} we recall and extend some recent results from
\cite{Ne11, JN19} that allow us to go back and forth between
smooth projective unitary
representations of a locally convex Lie group~$G$,
smooth unitary representations of a central Lie group
extension~$G^\sharp$,
and the derived representations of its Lie algebra~$\g^\sharp$.

In \S\ref{sec:3} we introduce projective \emph{positive energy representations}
in the context of a Lie group $P$ that acts smoothly on $G$ by automorphisms.
For a distinguished \emph{positive energy cone} $\cone\subseteq \fp$, we require that the
spectrum of the corresponding self-adjoint operators in the derived representation is bounded from below.
Since a representation is of positive energy for the cone $\cone$ if and only if it is of positive energy
for the 1-parameter subgroups generated by $\cone \subseteq \fp$, we can always reduce to the case $P = \R$,
where the nonnegative spectrum condition pertains to a single Hamilton operator $H$.
Using the Borchers--Arveson Theorem, we further reduce the classification 
to the \emph{minimal} representations, where
$H\geq 0$ is the smallest
possible Hamilton operator with nonnegative spectrum.

In \S\ref{sec:4} we then turn to our  subject proper,
namely the locally convex Lie group $\cG_{c}$
of compactly supported gauge transformations.
We consider the setting where $M$ is a manifold, $P$ is a Lie group
acting smoothly on~$M$,
and $\cK \rightarrow M$ is a bundle of 1-connected semisimple Lie groups that is equipped
with a lift of this action.
The group $\cG_{c} = \Gamma_{c}(M,\cK)$ of compactly supported sections
then carries a smooth action of $P$ by automorphisms, and we consider
the smooth projective unitary representations of the semidirect product $\Gamma_{c}(M,\cK)\rtimes P$.

The motivating example is of course the case where $\cK =\Ad(\Xi)$
is the adjoint bundle of
 a principal fiber bundle $\Xi\rightarrow M$, and
 $\Gamma_{c}(M,\Ad(\Xi))$ is the group of vertical automorphisms of $\Xi$ that
 are trivial outside a compact subset of $M$. The reason for the minor generalization to bundles of Lie groups
 is purely technical; the reduction to simple structure groups in \S\ref{vanseminaarsimpel}
 is somewhat easier in that setting.

\subsubsection{The Localization Theorem}

The main result in the present paper is the
 following localization result (a minor generalization of the one in \S\ref{sec:OutlinePartOne}), which essentially reduces the classification of projective
 positive energy representations to the $1$-dimensional case.

\begin{ELTheorem*}[Theorem \ref{Thm:equivarloc}]
Let $(\overline{\rho},\cH)$ be
a projective positive energy representation of
$\Gamma_c(M,\cK)\rtimes P$.
If the cone $\cone$ has no fixed points in $M$, then
 there exists a 1-dimensional, $P$-equivariantly embedded
submanifold $S \subseteq M$ such that on the connected component $\Gamma_{c}(M,\cK)_{0}$,
the projective representation
$\ol{\rho}$ factors through the restriction homomorphism
$r_{S} \colon \Gamma_{c}(M,\cK)_{0} \rightarrow \Gamma_{c}(S,\cK)$.
\end{ELTheorem*}

We sketch the proof in the special case that the structure group $K$ of $\cK$ is a compact simple Lie group.
The result for semisimple Lie groups is reduced to
the simple case in \S\ref{vanseminaarsimpel}, and to the compact simple case in \S\ref{sec:redcpt}.
We require $K$ to be 1-connected, but this is by no means essential;
results beyond 1-connected groups are discussed in~\S\ref{sec:StatementDiscussion} and \S\ref{sec:nonconnected}.

Further, we will assume without loss of generality that $P$ is the additive group $\R$ of real numbers.
The corresponding flow is then given by a nonvanishing vector field $\bv_{M}$ on $M$, which lifts to a vector field
$\bv$ on $\cK$.
We denote the corresponding derivation of the gauge algebra by $D \xi := L_{\mathbf{v}} \xi$.
The reduction from $P$ to $\R$ is carried out  in
\S\ref{sec:HigherDimSymmetry}
by considering the 1-parameter subgroups of $P$ that are generated by
elements of the positive energy cone $\cone\subseteq \fp$.

\paragraph{Step 1.} Let $\fK \rightarrow M$ be the bundle of Lie algebras derived from
$\cK \rightarrow M$.
Then every smooth projective unitary representation of
$\Gamma_{c}(M,\cK)\rtimes \R$ gives rise to an $\R$-invariant
2-cocycle $\omega$ on the compactly supported gauge algebra $\Gamma_{c}(M,\fK)$.
In \S\ref{GySsCoc}
we show that every such cocycle is cohomologous
to one of the form
\begin{equation}\label{eq:introlambda}
\omega(\xi,\eta) = \lambda(\kappa(\xi, d_{\nabla}\eta)) \qquad \text{for} \qquad  \xi, \eta \in \Gamma_{c}(M,\fK),
\end{equation}
where $\nabla$ is a Lie connection on $\fK\rightarrow M$, $\kappa$ is a positive definite
invariant bilinear form on the Lie algebra $\fk$ of $K$, and $\lambda \colon \Omega^1_{c}(M)\rightarrow \R$ is a closed current that is
invariant under the flow.

\paragraph{Step 2.}
The positive energy condition for $(\rho, \cH)$ gives rise to a
Cauchy--Schwarz inequality
for the derived Lie algebra representation $\dd\rho$. In \S\ref{subsec:3.2} we show that
if $[\xi,D\xi] = 0$ and $\omega(\xi,D) = 0$, then
\begin{equation}\label{eq:introCS}
\langle\psi, i\dd\rho(\xi)\psi\rangle^2 \leq 2\omega(\xi,D\xi)
\langle\psi, H\psi\rangle
\end{equation}
for every smooth unit vector $\psi$.
Moreover, $\omega(\xi, D\xi)$ is nonnegative. In~\S\ref{PEcocycles} this is used
to show that
the closed current $\lambda$ from \eqref{eq:introlambda}
takes the form
\begin{equation}\label{eq:introCocyclemeasure}
\lambda(\alpha) = \int_{M}(i_{\bv_{M}}\alpha)(x) d\mu(x)
\end{equation}
for a flow-invariant regular Borel measure $\mu$ on $M$. In terms of this measure,
the Cauchy--Schwartz inequality
\eqref{eq:introCS} becomes
\begin{equation}\label{eq:CSforGaugeIntro}
\langle \psi, i\dd\rho(\xi)\psi\rangle^2 \leq 2\langle\psi, H \psi\rangle \|L_{\bv} \xi\|^2_{\mu}.
\end{equation}
In other words, the expectation value of the unbounded operator $i\dd\rho(\xi)$ is controlled in terms of the
energy $\langle\psi, H \psi\rangle$,
and the $L^2$-norm of the derivative of $\xi$ with respect to the measure $\mu$.
In fact, a small but important refinement allows one to control the expectation of
$i\dd\rho(L_{\bv} \xi)$ in terms of the same data.

\paragraph{Step 3.}
In \S\ref{sec:6} we use
the Cauchy--Schwarz estimate \eqref{eq:CSforGaugeIntro} and its refinement to
show that
\begin{equation}\label{eq:introEstH}
\pm i\dd\rho(\xi) \leq \|\xi\|_{\nu}\one + \|\xi\|_{\mu}H
\end{equation}
as unbounded operators. The measure $\nu$ is
absolutely continuous with respect to $\mu$, with a
density that is upper semi-continuous and invariant under the flow.
From a technical point of view, this is the heart of the proof.
It allows us to extend $\dd\rho$ to a positive energy representation of
the Banach--Lie algebra $H^{2}_{\partial}(M,\cK)$ of sections that are twice differentiable
in the direction of the flow, but only $\nu$-\emph{measurable} in the direction
perpendicular to the flow.

\paragraph{Step 4.} The final steps of the proof are carried out in \S\ref{sec:locthmsec}.
Every point in $M$ admits a flow box $U_0 \times I \simeq U \subseteq M$,
where the flow fixes all points in $U_0$ and acts by translation
on the interval $I \subseteq \R$ for small times.
Accordingly, the flow-invariant measure on $U$
decomposes as $\mu = \mu_0\otimes dt$.
Since the sections in $H^{2}_{\partial}(U,\cK) \subseteq H^{2}_{\partial}(M,\cK)$ need only be measurable in the direction perpendicular to the flow, we can continuously embed $C^{\infty}_{c}(I,\fk)$ as a Lie subalgebra of $H^{2}_{\partial}(U,\cK)$
by multiplying with an indicator function $\chi_{E}$ for a Borel subset $E\subseteq U_0$ of finite measure.
This yields a projective unitary representation of
$C^{\infty}_{c}(I,\fk)$ with central charge $2\pi \mu_0(E)$.

\paragraph{Step 5.}
Since the dense set of analytic vectors for $H$ is analytic for the extension of $\dd\rho$ to
$H^{2}_{\partial}(M,\cK)$, the projective unitary representation of $C^{\infty}_{c}(I,\fk)$ extends to
the 1-connected Lie group $G$ that integrates $C^{\infty}_{c}(I,\fk)$.
This gives rise to a smooth central $\T$-extension $G^{\sharp} \rightarrow G$.
For every smooth map $\sigma \colon \bS^2 \rightarrow G$, the pullback $\sigma^*G^{\sharp} \rightarrow \bS^2$
is a principal circle bundle, and integrality of the corresponding Chern class implies that $2\pi \mu_0(E) \in \N_0$.
Since this holds for every Borel set, we conclude that $\mu_0$ is a locally finite sum of point measures, and hence that
$\mu = \mu_0 \otimes dt$ is concentrated on a closed embedded submanifold $S_{U}\subseteq U$ of dimension 1.
Since the argument is local, the measure $\mu$ is concentrated on a closed, embedded, 1-dimensional
submanifold $S \subseteq M$.
Using \eqref{eq:introEstH}, one shows that $\dd\rho$ vanishes on the ideal of sections that vanish $\mu$-almost everywhere.
This proves the theorem at the Lie algebra level. The result at the group level follows because $\Gamma_{c}(S,\cK)$ is 1-connected.

\subsubsection{Classification of positive energy representations}

For manifolds with a fixed point free $\R$-action, the
Localization Theorem effectively reduces the projective positive energy representation theory
to the 1-dimensional setting.

\paragraph{Compact manifolds} For \emph{compact} manifolds $M$, we show in \S\ref{sec:8} that
 this leads to a full classification.
Indeed, since $S = \bigcup_{j=1}^{k}S_{j}$ is a finite union of periodic orbits $S_{j}$, the group
$\Gamma(S,\cK)$ is a finite product of \emph{twisted loop groups}
$\Gamma(S_j, \cK)$.
The projective positive energy representations of twisted loop groups are classified in~\S\ref{subsec:7.4},
using the rich structure and representation {theory} of affine Kac--Moody Lie algebras \cite{Ka85},
combined with the method of holomorphic
induction for Fr\'echet--Lie groups developed in \cite{Ne13,Ne14a}.
%
%
%

This leads to a full classification of projective positive energy representations
of $\Gamma(M,\cK)$, which is detailed in \S\ref{subsec:8.3}.
Up to unitary equivalence, every irreducible projective positive energy representation is determined by:
\begin{itemize}
\item Finitely many periodic $\R$-orbits $S_j\subseteq M$, each equipped with a central charge $c_j\in \N$.
\item For every pair $(S_j,c_j)$, an anti-dominant integral weight $\lambda_j$ of the corresponding
affine Kac--Moody algebra with central charge $c_j \in \N$.
\end{itemize}
Moreover, every projective positive energy representation is a direct sum of irreducible ones.

\paragraph{Noncompact manifolds.}
For \emph{noncompact} manifolds $M$, the situation is somewhat more intricate.
Here $S$ is a union of countably many $\R$-orbits $S_j$,
each of which is diffeomorphic to either $\R$ or $\bS^1$.
These two cases are considered separately in \S\ref{sec:9}.

In \S\ref{sec:PosenNoncptComponent} we consider the case where $S$ consists of countably many lines.
Since the bundle $\cK$ trivializes over every line, the gauge group $\Gamma_{c}(S, \cK)$ is a weak direct product of
countably many copies of $C^{\infty}_{c}(\R,K)$.
In order to arrive at a (partial) classification, we
impose the additional condition that the projective positive energy representation admit
a cyclic ground state vector $\Omega \in \cH$ that is unique up to a scalar.
In Theorem~\ref{Thm:NoncompactClassification} we show that
these \emph{vacuum representations}
are classified up to unitary equivalence by a nonzero central charge
$c_j \in \N$ for every connected component $S_j\simeq \R$.
The proof proceeds by reducing to the (important) special case $M=\R$, where the classification is
essentially due to Tanimoto~\cite{Ta11}.


In \S\ref{subsec:9.2} we consider the case where $S$ consists of infinitely many circles.
Here we impose the much less restrictive
condition that $\cH$ is a \emph{ground state representation}. This means that $\cH$ is generated by the space of ground states, but we do not require these ground states to be unique.
We show that under an (essentially geometric) \emph{spectral gap} condition, every positive energy
representation is automatically a ground state representation.
Since $\Gamma_{c}(S,\cK)$ admits projective positive energy
representations of Type II and III, it is necessary to consider factor representations.
If all orbits in $M$ are periodic, we show that
the minimal, factorial ground state representations of $\Gamma_{c}(M,\cK)$ are classified
up to unitary equivalence by 3 pieces of data. The first two are the same as in the case of compact manifolds:
\begin{itemize}
\item Countably many periodic orbits $S_{j}\subseteq M$, equipped with a central charge $c_j \in \N$. 
\item For every pair $(S_j,c_j)$ an anti-dominant integral weight $\lambda_j$ of the corresponding
affine Kac--Moody algebra with central charge $c_j$.
\end{itemize}
The integral weight $\lambda_j$ gives rise to a unitary lowest weight representation
$\cH_{\lambda_{j}}$ of the corresponding affine Kac--Moody algebra.
Using the ground state projections $P_j$, we consider
the collection of finite tensor products of the compact operators
$\cK(\cH_{\lambda_j})$ as a directed system of $C^*$-algebras. Its direct limit
\begin{equation}\cB = \bigotimes_{j}\cK(\cH_{\lambda_j})
\end{equation}
has a distinguished ground state projection $P_{\infty} = \bigotimes_{j} P_j$.
The third datum needed to characterize a minimal factorial ground state representation is:
\begin{itemize}
\item  A factorial representations of $\cB$ that is generated by fixed points of the projection
$P_\infty$.
\end{itemize}
Since $P_{\infty}\cB P_{\infty}$ is a UHF $C^*$-algebra, this provides a rich supply of
representations of type II and III, in marked contrast with the compact case.

\subsection{Connection to the existing literature}


\paragraph{Abelian structure groups.}
If the structure Lie algebra $\fk$ is merely assumed to be reductive,
then it decomposes as a direct sum $\fk = \fz \oplus \fk'$, where
$\fz$ is abelian and the commutator algebra
$\fk$ is semisimple. Since this decomposition is invariant under
all automorphism, we obtain a corresponding decomposition
on the level of Lie algebra bundles
$\fK \cong \fZ \oplus \fK'$. Accordingly, the Lie algebra
$\g = \Gamma_c(M,\fK)$ decomposes as a direct sum
$\fz \oplus \fg'$ and this decomposition is orthogonal with
respect to any $2$-cocycle because $\g'$ is perfect.
Therefore the classification of the positive energy representations
basically reduces to the cases where $\fk$ is semisimple
and where $\fk$ is one-dimensional abelian. We refer to
Solecki's paper \cite{So14} for some interesting results concerning
groups of maps with values in the circle group, and to \cite{Sh17}
for related results pertaining to defects in conformal field theory.
G.~Segal's paper \cite{SeG81} contains a number of interesting
results on projective positive energy representations
of loop groups with values in a torus.

\paragraph{Integrating representations of infinite dimensional Lie groups.}
The technique to integrate representations of infinite dimensional
Lie algebras to groups by first verifying suitable
estimates has already been used by R.~Goodman and N.~Wallach in \cite{GW84}
to construct the irreducible unitary positive energy representations
of loop groups and diffeomorphism groups. Their technique has later been refined by
V.~Toledano--Laredo \cite{TL99} to larger classes of infinite
dimensional Lie algebras. Other results on integrating Lie algebra
representations can be found in \cite{JN19}.

\paragraph{Non-commutative distributions.}
In \cite{A-T93} an irreducible unitary representation of
$\cG_{c} = \Gamma_c(M,\Ad(\Xi))$ is called a
{\it non-commutative distribution}. 
In view of the Borchers--Arveson Theorem \cite{BR02}, an irreducible projective
positive energy representation of $\cG_{c} \rtimes_\alpha \R$
remains irreducible when restricted to $\cG_{c}$. In this sense we
contribute to the program outlined in \cite{A-T93}
by classifying those non-commutative distributions for
$M$ compact and $K$ compact semisimple for which
extensions to positive energy representations exist.
The problem to classify all smooth projective irreducible
unitary representations of
gauge groups is still wide open, although the classification of their central extensions by Janssens and Wockel (\cite{JW13}) is a key step towards this goal.
As our treatment in \S\ref{PEcocycles} and \cite{JN17} shows,
it can be used  to determine in specific situations which cocycles actually arise.

\paragraph{Derivative representations.}
One of the first references concerning unitary representations
of groups of smooth maps such as $C^\infty(\R,\SU(2,\C))$ is
\cite{GG68}, where the authors introduce the concept of a
{\it derivative representation} 
which depends only on the
derivatives up to some order $N$ in some point $t_0 \in \R$.
Unitary representations on continuous tensor products were developed in
\cite{S69,PS72,VGG73,VGG74,PS76}.
In addition to these representations,
there exist irreducible representations of mapping
groups defined most naturally on groups of Sobolev $H^1$-maps, the so-called
energy representations
(cf.\ \cite{Is76, Is96}, \cite{AH78}, \cite{A-T93}, \cite{AT94}, \cite{An10}, \cite{ADGV16}).


\paragraph{The case where $M$ is a torus.}
In \cite{To87} (see also \cite[\S5.4]{A-T93}) Torresani studies
projective unitary ``highest weight representations'' of
$C^\infty(\T^d,\fk)$, where $\fk$ is compact simple.
Besides the finite tensor products of so-called evaluation
representations
({\it elementary representations}) he
finds finite tensor products of evaluation
representations of $C^\infty(\T^d,\fk)
\cong C^\infty(\T^{d-1}, C^\infty(\T,\fk))$, where the representations
 of the target algebra $C^\infty(\T,\fk)$ are projective
highest weight representations
({\it semi-elementary representations}).
Our results in \S\ref{sec:8} reduce to this picture in the special case
of the circle action on a torus.

\paragraph{Norm continuous representations.}
In a previous paper \cite{JN15}, we considered the
related problem to classify norm continuous unitary representations
of the connected groups $\Gamma_c(M,\cK)_0$.
In this case the problem also reduces to the case where
$\fk$ is compact semisimple
and the representations are linear rather than projective.
For every irreducible representation $\rho$,
there exists an embedded $0$-dimensional submanifold $S$, i.e., a
locally finite subset, $S \subeq M$
such that $\rho$ factors through the restriction
map $\Gamma_c(M,\fK) \to \Gamma_c(S,\fK) \cong \fk^{(S)}$.
If $M$ is compact, it follows that $\rho$ is a
finite tensor product of irreducible representations obtained by composing
an irreducible representation of $\fk$ with the evaluation in a point
$s \in S$. In particular, it is finite-dimensional.
If $M$ is non-compact, then the bounded representation
theory of the LF-Lie algebra  $\Gamma_c(M,\fK)$
is ``wild'' in the sense that there exist
bounded factor representations of type II and III.
The main result in \cite{JN15}
is a complete reduction of the classification of bounded irreducible
representations to the classification of irreducible representations
of UHF $C^*$-algebras.

\paragraph{Type III representations from noncompact orbits.}
For noncompact $M$, a different source of representations comes from
the group $C^{\infty}_{c}(\R,K)$ corresponding to a single noncompact connected component of $S$.
Here representations of Type $\mathrm{III}_{1}$ were constructed in \cite{We06,FH05}.
Other results in this context have recently been obtained
in \cite{dVIT20}, where solitonic representations of conformal nets on
the circle are constructed from non-smooth diffeomorphisms. These
in turn provide positive energy representations of $C^\infty_c(\R,K)
\cong C^\infty_c(\T \setminus \{-1\},K)$ which do
not extend to loop group representations (\cite[Thm.~3.4, \S 4.2]{dVIT20}).
In particular, irreducible representations of this type are obtained.


\paragraph{Acknowledgment.}
B.J.~is supported by the NWO grant 639.032.734
``Cohomology and representation theory of infinite dimensional Lie groups''.
He would like to thank the FAU Erlangen-N\"urnberg, the Universiteit Utrecht, and the Max Planck Institute for Mathematics
in Bonn for their hospitality during various periods in which the work was carried out.
K.-H.N.~acknowledges support from
the Centre Interfacultaire Bernoulli (CIB)
and the  NSF (National Science Foundation)
for a research visit at the EPFL and support by DFG-grant NE 413/10-1.
We are most grateful to Yoh Tanimoto, Helge Gl\"ockner, Milan Niestijl,
Lukas Miaskiwskyi and Tobias Diez for illuminating
discussions on this subject and for pointing out several references.

\section{Projective representations of Lie groups}\label{SectionPPER}

In this section, we introduce Lie groups modeled on locally convex vector spaces,
or \emph{locally convex Lie groups} for short. This is a
generalization of the concept of a finite dimensional Lie group
that captures a wide range of interesting examples (cf.\ \cite{Ne06} for an overview),
including gauge groups, our main object of study.
We then summarize the central results from \cite{JN19}, which
allow us to go back and forth between smooth projective unitary
representations of a locally convex Lie group $G$ and
smooth unitary representations of a central Lie group extension $\widehat{G}$ of $G$.
On the identity component $G_0$, these are characterized by representations
of the corresponding Lie algebra $\hg$.


\subsection{Locally convex Lie groups}

Let $E$ and $F$ be locally convex spaces, $U
\subeq E$ open and $f \: U \to F$ a map. Then the
{\it derivative   of $f$ at $x$ in the direction $h$} is defined as
$$ \partial_{h}f(x) := \lim_{t \to 0} \frac{1}{t}(f(x+th) -f(x)) $$
whenever it exists. We set $D f(x)(h) :=  \partial_{h}f(x)$. The function $f$ is called
{\it differentiable at
  $x$} if $D f(x)(h)$ exists for all $h \in E$. It is called {\it
  continuously differentiable} if it is differentiable at all
points of $U$ and
$$ D f \: U \times E \to F, \quad (x,h) \mapsto D f(x)(h) $$
is a continuous map. Note that this implies that the maps
$D f(x)$ are linear (cf.\ \cite[Lemma~2.2.14]{GN}).
The map $f$ is called a {\it $C^k$-map},
$k \in \N \cup \{\infty\}$,
if it is continuous, the iterated directional derivatives
$$ D^{j}f(x)(h_1,\ldots, h_j)
:= (\partial_{h_j} \cdots \partial_{h_1}f)(x) $$
exist for all integers $1\leq j \leq k$, $x \in U$ and $h_1,\ldots, h_j \in E$,
and all maps $D^j f \: U \times E^j \to F$ are continuous.
As usual, $C^\infty$-maps are called {\it smooth}. \index{smooth!function \vulop}

Once the concept of a smooth function
between open subsets of locally convex spaces is established, it is clear how to define
a locally convex smooth manifold (cf.\ \cite{Ne06}, \cite{GN}).
\begin{Definition}
A {\it locally convex Lie group} \index{Lie group!locally convex \vulop}
$G$ is a group equipped with a
smooth manifold structure modeled on a locally convex space
for which the group multiplication and the
inversion are smooth maps.
Morphisms of locally convex Lie groups are smooth group homomorphisms.
\end{Definition}

We write $\one \in G$ for the identity element.
The Lie algebra $\g$ of $G$ is identified with
the tangent space $T_{\one}(G)$, and the Lie bracket is obtained by identification with the
Lie algebra of left invariant vector fields. It is a
\emph{locally convex Lie algebra}
in the following sense.

\begin{Definition}
A {\it locally convex Lie algebra} is a \index{Lie algebra!10@locally convex \vulop}
locally convex vector space $\fg$ with a continuous Lie bracket
$[\,\cdot\,,\,\cdot\,] \colon \fg \times \fg \rightarrow \fg$.
Morphisms of locally convex Lie algebras are continuous Lie algebra
homomorphisms.
\end{Definition}

\begin{Definition}\label{def:localexp}
A smooth map $\exp \: \g \to G$  is called an
{\it exponential function} \index{exponential function \scheiding $\exp$}
if each curve $\gamma_x(t) := \exp(tx)$ is a one-parameter group
with $\gamma_x'(0)= x$.
A Lie group $G$ is said to be
{\it locally exponential} \index{Lie group!locally exponential \vulop}
if it has an exponential function for which there is an open $0$-neighborhood
$U$ in $\g$ mapped diffeomorphically by $\exp$ onto an
open subset of $G$.
\end{Definition}



\subsection{Smooth representations}

Let $G$ be a locally convex Lie group with Lie algebra $\fg$ and
exponential function $\exp \: \g \to G$.
In the context of Lie theory, it is natural to study \emph{smooth}
(projective) unitary representations
on a complex Hilbert space \index{Hilbert space \scheiding $\cH$}$\cH$.
We take the scalar product on $\cH$ to be linear in the \emph{second} argument, and denote
the group of unitary operators
\index{unitary group \scheiding $\U(\cH)$}
by $\U(\cH)$.

\subsubsection{Unitary representations}

A \emph{unitary representation}\index{representation!10@unitary \scheiding $(\rho,\cH)$} $(\rho,\cH)$ of $G$
is a Hilbert space $\cH$ with a group homomorphism $\rho\colon G \rightarrow \U(\cH)$.
A \emph{unitary equivalence} between $(\rho,\cH)$ and $(\rho',\cH')$ is a
unitary transformation $U \colon \cH \rightarrow \cH'$ such that $U \circ \rho(g) = \rho(g)' \circ U$
for all $g\in G$.

\begin{Definition} (Continuous unitary representations)
A unitary representation $(\rho,\cH)$ is called \emph{continuous} if
for all $\psi \in \cH$, the \index{representation!30@continuous \vulop}
orbit map
$G \rightarrow \cH \colon g \mapsto \rho(g)\psi$ is continuous
(see \cite{Ne14b} for more details).
\end{Definition}

\begin{Definition}(Smooth unitary representations) \label{def:smooth-proj-rep}
We call $\psi \in \cH$ a {\it smooth vector} \index{smooth!vectors \scheiding $\cH^{\infty}$}
if the orbit map $g \mapsto \rho(g)\psi$ is smooth, and
write $\cH^\infty \subeq \cH$ for the subspace of smooth vectors.
We say that $\rho$ is \emph{smooth} if \index{representation!40@smooth \vulop}
$\cH^{\infty}$ is dense in $\cH$
(see \cite{JN19} for more details).

\end{Definition}

Every smooth representation is continuous.
A representation $\rho$ is called {\it bounded} \index{representation!bounded \vulop}
if $\rho \: G \to \U(\cH)$
is continuous with respect to the norm topology on $\U(\cH)$.
Boundedness implies continuity, but many
interesting continuous representations, including the (smooth!) positive energy representations that are the main focus of this paper, are unbounded.
For some recent results on the automatic smoothness of unbounded
unitary representations satisfying certain spectral
conditions such as semiboundedness (Definition~\ref{def:semibounded}),
we refer to \cite{Ze17}.

\begin{Definition} (Derived representation)
For any smooth unitary representation $(\rho,\cH)$, the
\emph{derived representation} \index{representation!25@derived \scheiding $(\dd\rho, \cH^{\infty})$}
$\dd\rho \: \fg \to \End(\cH^\infty)$
of the Lie algebra $\fg$ is defined by
\[
\dd\rho(\xi)\psi := \derat0 \rho(\exp t\xi)\psi.\]
\end{Definition}


\begin{Remark} {\rm(Selfadjoint generators)}
The closure of any
operator $\dd\rho(\xi)$ coincides with the infinitesimal generator of the
unitary one-parameter group $\rho(\exp t\xi)$. In particular, the operators
$i\cdot \dd\rho(\xi)$ are essentially selfadjoint by Stone's Theorem
(cf.\ \cite[\S VIII.4]{RS75}).
\end{Remark}


\subsubsection{Projective unitary representations}

Let $\cH$ be a Hilbert space.
The projective Hilbert space is denoted by
\index{projective Hilbert space \scheiding $\bP(\cH)$}$\bP(\cH)$, and
its elements are denoted
$[\psi] = \C \psi$ for nonzero $\psi \in \cH$.
We denote the projective unitary group by
\index{projective unitary group \scheiding $\PU(\cH)$}$\PU(\cH) := \U(\cH)/\T \one$, and write $\oline U$ for the image of $U \in \U(\cH)$ in $\PU(\cH)$.

A \emph{projective unitary representation} \index{representation!20@projective unitary \scheiding $(\ol{\rho}, \cH)$}
$(\ol\rho,\cH)$ of a locally convex Lie group $G$
is a complex Hilbert space $\cH$ with
a group homomorphism $\ol{\rho}\colon G \rightarrow \PU(\cH)$.
A \emph{unitary equivalence} between $(\ol\rho,\cH)$ and $(\ol\rho',\cH')$ is a
unitary transformation $U \colon \cH \rightarrow \cH'$ such that $\ol U \circ \ol\rho(g) = \ol\rho(g)' \circ \ol U$
for all $g\in G$.


A projective unitary representation yields
an action of $G$ on $\bP(\cH)$.
Since $\bP(\cH)$ is a Hilbert manifold, we can use this to define continuous and smooth projective representations.

\begin{Definition}
(Continuous projective unitary representations)
A projective unitary representation $(\ol\rho,\cH)$ is called
\emph{continuous} if \index{representation!50@continuous projective \vulop}
for all $[\psi] \in \bP(\cH)$, the
orbit map
$G \rightarrow \bP(\cH) \colon g \mapsto \ol\rho(g)[\psi]$ is continuous.
\end{Definition}

\begin{Definition}(Smooth projective unitary representations) A
ray $[\psi] \in \bP(\cH)$ is called \emph{smooth} if its orbit map
$g \mapsto \overline{\rho}(g)[\psi]$ is smooth, and we denote the set of
smooth rays by $\bP(\cH)^{\infty}$. \index{representation!60@smooth projective \vulop}
A projective unitary representation $(\ol\rho,\cH)$
is called \emph{smooth} if $\bP(\cH)^{\infty}$ is dense in $\cH$.
\end{Definition}


\subsection{Central extensions}

In this paper, we are primarily interested in smooth projective unitary representations
$\ol \rho \colon G \rightarrow \PU(\cH)$
of a locally convex Lie group $G$.
We call $\ol \rho$
\emph{linear} if it comes from
a smooth unitary representation $\rho \colon G \rightarrow \U(\cH)$.

Although not every smooth
projective unitary representation of $G$ is linear,
it
can always be viewed as a smooth linear
representation of a \emph{central extension} of $G$ by the
circle group\index{circle group \scheiding $\T$}
 $\T \cong \R/2\pi \Z$.

\begin{Definition} (Central group extensions)\label{GroepUitbreiding}
\index{central extension!10@of Lie groups \scheiding $G^{\sharp}$}
A \emph{central extension} of $G$ by $\T$
is an exact sequence
\[1 \rightarrow \T \rightarrow G^{\sharp} \rightarrow G \rightarrow 1\]
of locally convex Lie groups (the arrows are smooth group homomorphisms) such that the image of $\T$ is central in $G^{\sharp}$
and $G^{\sharp} \rightarrow G$ is a locally trivial principal $\T$-bundle.
An \emph{isomorphism} $\Phi \colon G^{\sharp} \rightarrow G^{\sharp}{}'$ of
central $\T$-extensions is an isomorphism of locally convex Lie groups
that induces the identity maps on $G $ and $\T$.
\end{Definition}

For a smooth projective unitary representation $(\ol\rho,\cH)$ of $G$,
the group
\begin{equation}
  \label{eq:gsharp}
 G^\sharp := \{ (g,U) \in G \times \U(\cH) \: \oline\rho(g) = \oline U \}
\end{equation}
is a central Lie group extension of $G$ by $\T$ (\cite[Thm.~4.3]{JN19}).
Its smooth unitary
representation
\[\rho \: G^\sharp \to \U(\cH), \quad (g,U) \mapsto U\]
reduces to $z \mapsto z\one$ on $\T$ and induces $\overline{\rho}$ on $G$.
Since the restriction of $\rho$ to the identity component $G^{\sharp}_{0}$ is determined by
the derived Lie algebra representation (\cite[Prop.~3.4]{JN19}),
it is worth while to take a closer look at
central extensions of locally convex Lie algebras.

\begin{Definition} {\rm(Central Lie algebra extensions)} \label{LAUitbreiding}
A \emph{central extension} of a locally convex Lie algebra $\fg$ by $\R$ is
an exact sequence \index{central extension!20@of Lie algebras \scheiding $\fg^{\sharp}$}
\begin{equation}\label{LARijtje}
0 \rightarrow \R \rightarrow \fg^{\sharp} \rightarrow \fg \rightarrow 0
\end{equation}
of locally convex Lie algebras (the arrows are continuous Lie algebra homomorphisms) such that the image of
$\R$ is central in $\fg^{\sharp}$.
An \emph{isomorphism} $\phi \colon \fg^{\sharp} \rightarrow \fg^{\sharp}{}'$ of
central extensions is an isomorphism of locally convex Lie algebras
that induces the identity maps on $\fg$ and $\R$.
\end{Definition}
The group extensions 
of Definition~\ref{GroepUitbreiding} give rise to
Lie algebra extensions in the sense of Definition~\ref{LAUitbreiding}.
In order to classify the latter, we introduce continuous Lie algebra cohomology.

\begin{Definition} \label{def:cohom}
\index{cohomology \scheiding $H^n(\fg, \R)$}
The \emph{continuous Lie algebra
cohomology} space $H^n(\fg,\R)$ of a locally convex Lie algebra $\fg$
is the cohomology of the complex
$C^{\bullet}(\fg,\R)$, where $C^n(\fg,\R)$ consists of the
continuous alternating linear
maps $\g^n \rightarrow \R$
with differential $\delta \colon C^{n}(\fg,\R) \rightarrow C^{n+1}(\fg,\R)$ defined by
\[\delta \omega(\xi_0,\ldots,\xi_{n}):= \sum_{0\leq i<j\leq n}
(-1)^{i+j} \omega([\xi_i,\xi_j],\xi_1,\ldots,\widehat{\xi}_i, \ldots,
\widehat{\xi}_j, \ldots, \xi_n)\,.\]
\end{Definition}

The second Lie algebra cohomology  $H^2(\fg,\R)$ classifies central extensions
up to isomorphism. The 2-cocycle $\omega \: \g^2 \rightarrow \R$
gives rise to the Lie algebra $\fg^{\sharp}_{\omega} :=  \R \oplus_{\omega} \g$
with the Lie bracket
\[
[(z,\xi), (z',\xi')] := \big(\omega(\xi,\xi'), [\xi,\xi']\big).
\]
Equipped
with the obvious maps
$\R \rightarrow \fg^{\sharp}_{\omega} \rightarrow \fg$, this
is a central extension of~$\fg$.
Every central extension is isomorphic to one of this form, and
two central extensions are equivalent if and only if the corresponding
cohomology classes $[\omega] \in H^2(\g,\R)$ coincide (\cite[Prop.~6.3]{JN19}).
%

The following theorem collects some of the main results
of our previous paper (\cite[Cor.~4.5, Thm.~7.3]{JN19}).
It allows us to go back and forth between smooth projective unitary
representations of $G$, smooth unitary representations of a central extension $G^{\sharp}$ of $G$,
and the corresponding representations of its Lie algebra~$\fg^{\sharp}$.

\begin{Theorem}{\rm(Projective $G$-representations and linear
$\fg^{\sharp}$ -representations)} \label{ThmProjRepLARep}
\begin{itemize}
\item[\rm(A)]Every smooth projective unitary representation $(\overline{\rho}, \cH)$
of $G$ gives rise to a central
extension $\T \rightarrow G^{\sharp} \rightarrow G$
of locally convex Lie groups, and a
smooth unitary representation $(\rho,\cH)$ of $G^{\sharp}$.
In turn, this gives rise to
the central extension $\R \rightarrow \fg^{\sharp} \rightarrow \fg$
of locally convex Lie algebras
and
the derived representation $\dd\rho \colon \fg^{\sharp} \rightarrow
\mathrm{End}(\cH^{\infty})$
of $\fg^{\sharp}$ by essentially skew-adjoint operators.

\item[\rm(B)]If $G$ is connected, then $(\ol\rho, \cH)$ and $(\ol\rho', \cH')$
are unitarily equivalent if and only if
the derived Lie algebra representations
$(\dd\rho, \cH^\infty)$
and $(\dd\rho', \cH'^\infty)$ are unitarily equivalent.
This means that there exists an isomorphism
$\phi \colon \fg^{\sharp} \rightarrow \fg^{\sharp}{}'$ of central extensions
and a unitary isomorphism $U \colon \cH \rightarrow \cH'$ such that
$U \cH^{\infty} \subseteq \cH^{\infty}{}'$ and
$\dd\rho'(\phi(\xi)) \circ U = U \circ \dd\rho(\xi)$ for
all $\xi \in \fg^{\sharp}$.
\end{itemize}
\end{Theorem}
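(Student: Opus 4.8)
The plan is to prove Theorem~\ref{ThmProjRepLARep} by citing and assembling the three results of \cite{JN19} that are already quoted in the excerpt, filling in only the glue that is specific to our formulation. For part~(A), the first sentence is exactly \cite[Thm.~4.3]{JN19}: given a smooth projective unitary representation $(\ol\rho,\cH)$, the pullback group $G^\sharp := \{(g,U)\in G\times\U(\cH) : \ol\rho(g)=\ol U\}$ of \eqref{eq:gsharp} is a central $\T$-extension of $G$ as a locally convex Lie group, and $\rho\colon G^\sharp\to\U(\cH)$, $(g,U)\mapsto U$, is a smooth unitary representation restricting to $z\mapsto z\one$ on $\T$. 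The passage to Lie algebras is then the standard functoriality recalled just before the theorem: a central $\T$-extension of locally convex Lie groups differentiates to a central $\R$-extension $\fg^\sharp$ of $\fg$ (with $\R=\Lie(\T)$), and the derived representation $\dd\rho\colon\fg^\sharp\to\End(\cH^\infty)$ of a smooth unitary representation consists of essentially skew-adjoint operators by Stone's theorem (the remark on selfadjoint generators). So the only thing to check for (A) is that these constructions are compatible, i.e.\ that the Lie algebra extension obtained by differentiating $G^\sharp$ is the one on which $\dd\rho$ is defined --- which is immediate since both arise from the same group~$G^\sharp$.

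For part~(B), the forward direction is a triviality: a unitary equivalence $U\colon\cH\to\cH'$ intertwining $\ol\rho$ and $\ol\rho'$ sends smooth rays to smooth rays, hence $\cH^\infty$ to $\cH'^\infty$, and conjugation by $U$ carries $\rho$ to a representation of $G^\sharp$ that induces $\ol\rho'$; comparing pullback groups gives an isomorphism $\phi\colon\fg^\sharp\to\fg^\sharp{}'$ of central extensions with $\dd\rho'(\phi(\xi))\circ U=U\circ\dd\rho(\xi)$. The substantive direction --- reconstructing a unitary equivalence of the projective representations from a unitary equivalence of the derived Lie algebra representations when $G$ is connected --- is where I would invoke \cite[Prop.~3.4]{JN19} (``the restriction of $\rho$ to the identity component $G^\sharp_0$ is determined by the derived Lie algebra representation''), together with \cite[Cor.~4.5, Thm.~7.3]{JN19}. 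Concretely: the isomorphism $\phi$ of central extensions integrates (on the identity components, using that $\fg^\sharp$ and $\fg^\sharp{}'$ have the same underlying group-theoretic data) to an isomorphism $\Phi\colon G^\sharp_0\to G^\sharp{}'_0$ of central $\T$-extensions; the intertwining relation for $\dd\rho$ then forces $\rho'\circ\Phi$ and $U\rho(\cdot)U^{-1}$ to agree on $G^\sharp_0$ by Prop.~3.4; and since $G$ is connected, $G^\sharp$ is generated by $G^\sharp_0$ and $\T$, on which both representations act by scalars, so $U$ intertwines $\rho$ and $\rho'$ on all of $G^\sharp$, hence $\ol\rho$ and $\ol\rho'$ on $G$.

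The main obstacle I anticipate is not any single hard estimate but the bookkeeping around \emph{which} central extension is which: one must be careful that the isomorphism $\phi$ of Lie algebra central extensions actually lifts to an isomorphism of the \emph{group} extensions $G^\sharp_0$ and $G^\sharp{}'_0$ rather than only of some abstract integrating groups --- this uses that the period homomorphisms / discrete kernels match up, which is precisely the content built into \cite[Thm.~7.3]{JN19}. A secondary point requiring a line of care is density: the equivalence of derived representations presupposes $U\cH^\infty\subseteq\cH'^\infty$, and one should note that smoothness of $\ol\rho$ guarantees $\cH^\infty$ is dense (cf.\ the discussion of smooth rays versus smooth vectors), so that the Lie algebra representations genuinely capture the whole Hilbert space. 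Since all of these ingredients are already established in \cite{JN19} and merely recalled here, the proof reduces to citing \cite[Cor.~4.5, Thm.~7.3]{JN19} for (A)+(B) and spelling out the elementary forward implication in (B); I would keep it to a short paragraph.
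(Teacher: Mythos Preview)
Your proposal is correct and matches the paper's approach exactly: the paper gives no proof at all beyond the sentence preceding the theorem, which states that it ``collects some of the main results of our previous paper (\cite[Cor.~4.5, Thm.~7.3]{JN19}).'' Your plan to cite those same results and add only the trivial glue is precisely what is intended here, and your final sentence --- keep it to a short paragraph citing \cite[Cor.~4.5, Thm.~7.3]{JN19} --- is spot on.
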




\subsection{Integration of projective representations}
\label{subsec:7.1}

In this subsection we discuss the integrability of (projective) unitary
representations of Banach--Lie algebras, based on the existence of analytic
vectors. Here our main result is the Integrability Theorem
for projective representations of Banach--Lie groups
(Theorem~\ref{thm:7.4}) that we derive with the methods from~\cite{Ne11}.

\begin{Definition} Let $(\rho,\cD)$ be a representation of the topological
Lie algebra $\g$ on the pre-Hilbert space $\cD$. We say that
\begin{itemize}
\item[\rm(i)] $\rho$ is a {\it $*$-representation} \index{representation!27@$*$-representation \scheiding $(\rho, \cD)$}
if all operators
$\rho(x)$, $x \in\g$, are skew-symmetric.
\item[\rm(ii)] $\rho$ is {\it strongly continuous}
\index{representation!35@strongly continuous \vulop} if all the maps
$\g \to \cD, x \mapsto \rho(x)\xi$ are continuous.
\item[\rm(iii)] $\xi \in \cD$ is an {\it analytic vector}  \index{analytic!vectors \scheiding $\cD^{\omega}$}
if there exists a $0$-neigh\-bor\-hood $U \subeq \g$ such that
 $\sum_{n = 0}^\infty \frac{\|\rho(x)^n \xi\|}{n!} < \infty$
for every $x \in U$. The analytic vectors form a linear subspace~$\cD^\omega \subeq \cD$.
\end{itemize}
\end{Definition}

\begin{Remark} If $\g$ is a Banach--Lie algebra, then \cite[Prop.~4.10]{Ne11} implies
that $\xi \in \cD$ is an analytic vector if and only if it is an analytic
vector for all operators $\rho(x)$, $x \in\g$
in the sense that there exists an $s > 0$ such that
$\sum_{n = 0}^\infty \frac{s^n\|\rho(x)^n \xi\|}{n!} < \infty$.
\end{Remark}

We shall need the following lemma that is not spelled out explicitly in \cite{Ne11}:

\begin{Lemma} \label{lem:ana-invar}
Let $(\rho,\cD)$ be a strongly continuous $*$-representation of the
Banach--Lie algebra $\g$. Then
$\cD^\omega$ is a $\rho(\g)$-invariant subspace.
\end{Lemma}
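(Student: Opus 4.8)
The plan is to show that for $x, y \in \g$ and an analytic vector $\xi \in \cD^\omega$, the vector $\rho(y)\xi$ is again analytic. Fix $\xi \in \cD^\omega$, so there is a $0$-neighborhood $U \subeq \g$ and, by the preceding Remark and \cite[Prop.~4.10]{Ne11}, for each $x \in \g$ an $s_x > 0$ with $\sum_{n} s_x^n \|\rho(x)^n \xi\|/n! < \infty$. The key identity is the commutator expansion: since $\rho$ is a Lie algebra homomorphism into operators, one has
\[
\rho(x)^n \rho(y) = \sum_{k=0}^{n} \binom{n}{k} \rho\big((\ad x)^k y\big)\,\rho(x)^{n-k},
\]
which is proved by a straightforward induction on $n$ using $\rho(x)\rho(z) - \rho(z)\rho(x) = \rho([x,z])$. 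Applying this to $\xi$ and estimating in norm,
\[
\|\rho(x)^n \rho(y)\xi\| \leq \sum_{k=0}^{n}\binom{n}{k} \big\|\rho\big((\ad x)^k y\big)\rho(x)^{n-k}\xi\big\|.
\]

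To control the inner terms I would exploit strong continuity: the map $\g \to \cD$, $z \mapsto \rho(z)\eta$ is continuous and linear, hence bounded, for every $\eta \in \cD$; applying this with $\eta = \rho(x)^{n-k}\xi$ gives $\|\rho(z)\rho(x)^{n-k}\xi\| \leq C \|z\|$ for $z = (\ad x)^k y$, where $C$ depends on $n-k$. The boundedness of $\ad x$ on the Banach--Lie algebra $\g$ gives $\|(\ad x)^k y\| \leq \|\ad x\|^k \|y\|$. The surviving obstacle is that the constant $C$ from strong continuity a priori depends on the vector $\rho(x)^{n-k}\xi$ and hence on $n$; to make the series converge I need a uniform bound. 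This is where I would invoke \cite[Prop.~4.10]{Ne11} more carefully, or the structure of the proof in \cite{Ne11}: strong continuity together with the uniform boundedness principle shows that the family of operators $\{\rho(z) : \|z\|\leq 1\}$ restricted to the domain is, in the relevant sense, uniformly bounded on the orbit $\{\rho(x)^m \xi\}$ — more precisely, since $\xi$ is analytic for $x$, the vectors $\rho(x)^m\xi/m!$ have summable norms after scaling, and the closed graph / Banach--Steinhaus argument yields a single constant $C_x$ with $\|\rho(z)\rho(x)^m \xi\| \leq C_x \|z\|\, (m!)\, r^{-m}$ for a suitable $r \in (0, s_x)$ and all $m$, $z$.

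Granting such a uniform estimate, the bound becomes
\[
\|\rho(x)^n \rho(y)\xi\| \leq C_x\|y\| \sum_{k=0}^{n}\binom{n}{k} \|\ad x\|^k (n-k)!\, r^{-(n-k)},
\]
and a routine majorization of this sum by $n!$ times the coefficients of a product of two exponential-type series shows that $\sum_n t^n \|\rho(x)^n\rho(y)\xi\|/n! < \infty$ for $t$ small enough (depending on $r$ and $\|\ad x\|$, hence shrinking $U$ if necessary). Thus $\rho(y)\xi \in \cD^\omega$, and since $\cD^\omega$ is already a linear subspace, it is $\rho(\g)$-invariant. The main technical hurdle, as flagged, is extracting the $n$-uniform constant from strong continuity; I expect this to follow from the Banach--Steinhaus argument already implicit in the proof of \cite[Prop.~4.10]{Ne11}, which is why the lemma is described as "not spelled out explicitly" there rather than genuinely new.
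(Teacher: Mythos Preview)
Your commutator-expansion route is different from the paper's, and the gap you flag is real and not closed by the argument you sketch. Banach--Steinhaus applied to the family $T_m \colon \g \to \cH$, $T_m(z) = (r^m/m!)\,\rho(z)\rho(x)^m\xi$, requires pointwise boundedness: for each fixed $z$, $\sup_m (r^m/m!)\|\rho(z)\rho(x)^m\xi\| < \infty$. But that is exactly the statement that $\xi$ is analytic for $\rho(x)$ \emph{after} applying $\rho(z)$, which is what you are trying to prove---so the argument is circular. Strong continuity only yields $\|\rho(z)\eta\| \leq C_\eta\|z\|$ with $C_\eta$ depending arbitrarily on $\eta$, and nothing in the hypotheses controls $C_{\rho(x)^m\xi}$ as $m$ grows. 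What would actually make your estimate go through is the multi-linear bound $\|\rho(x_1)\cdots\rho(x_n)\xi\| \leq M\,n!\,r^{-n}\prod_i \|x_i\|$; once you have that, the rest of your computation is correct. But deducing this from the diagonal condition $\sum_n\|\rho(x)^n\xi\|/n! < \infty$ on a neighborhood is precisely the nontrivial analytic content, and it is not a consequence of uniform boundedness.

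The paper bypasses operator estimates entirely. It invokes \cite[Prop.~6.3]{Ne11} to characterize $\xi \in \cD^\omega$ by the analyticity of the functional $\beta_\xi(D) := \la \xi, \rho(D)\xi\ra$ on $U(\g)$, observes that $\beta_{\rho(x)\xi}(D) = \beta_\xi((-x)Dx)$ because $\rho(x)$ is skew-symmetric, and then applies \cite[Thm.~3.6]{Ne11}, which asserts that two-sided translates of analytic functionals by elements of $U(\g)$ remain analytic. That theorem (proved via polarization and growth estimates for entire functions on $\g$) is where the hard work is hidden, and it is the correct substitute for your missing uniform constant. Your intuition that the lemma is ``already in \cite{Ne11}'' is right, but the relevant ingredients are Prop.~6.3 and Thm.~3.6 there, not Prop.~4.10 plus Banach--Steinhaus.
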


\begin{proof} Following \cite[Def.~3.2]{Ne11}, we call a linear functional
$\beta \: U(\g) \to\C$ on the enveloping algebra of $\g$
{\it an analytic functional} \index{analytic!functional \vulop}
if all $n$-linear maps
\[ {\g^n \to \C,\quad (x_1, \ldots, x_n) \mapsto \beta(x_1 \cdots x_n)}\]
 are continuous and the series
$\sum_{n = 0}^\infty \frac{\beta(x^n)}{n!}$ converges for every $x$ in a $0$-neigh\-bor\-hood of $\g$.
According to \cite[Prop.~6.3]{Ne11}, a vector $\xi \in \cD$ is analytic
if and only if the functional $\beta_\xi(D) := \la \xi,\rho(D)\xi\ra$
is analytic, where $\rho \: U(\g) \to \End(\cD)$ denotes the
extension of $\rho$ to the enveloping algebra. For
$\xi \in \cD^\omega$ and $x \in \g$, the functional
\[  \beta_{\rho(x)\xi}(D) := \la \rho(x)\xi, \rho(D)\rho(x)\xi \ra
= \beta_\xi((-x)Dx) \]
is also analytic by \cite[Thm.~3.6]{Ne11}, so that
$\rho(x)\xi \in \cD^\omega$ by \cite[Prop.~6.3]{Ne11}. This shows that
$\rho(\g)\cD^\omega \subeq \cD^\omega$.
\end{proof}

To formulate the Integrability Theorem for projective representations,
we first give a precise definition of a projective $*$-representation of a
topological Lie algebra $\g$.

\begin{Definition} Suppose that $\omega \: \g^2 \to \R$ is a continuous $2$-cocycle
and that $\g^\sharp_\omega = \R \oplus_\omega \g$ is the corresponding central extension.
Then any $*$-representation
$(\rho^\sharp,\cD)$ with $\rho^\sharp(1,0) = i\one$ leads to a linear map
\[ \rho \: \g \to \End(\cD), \quad \rho(x) := \rho^\sharp(0,x) \]
satisfying
\[ [\rho(x), \rho(y)] = \rho([x,y]) + \omega(x,y)i\one.\]
We then call $(\rho,\cD)$ a
{\it projective $*$-representation with cocycle $\omega$}.
\end{Definition}
\index{representation!110@projective *-representation \vulop}

\begin{Theorem} \label{thm:7.4} {\rm(Integrability Theorem for projective representations)}
Let $G$ be a $1$-connected Banach--Lie group with Lie algebra $\g$, and let
$(\rho, \cD)$ be a projective strongly continuous
$*$-repre\-sen\-tation of $\g$ on the dense subspace $\cD$ of the Hilbert space~$\cH$.
If $\cD$ contains a dense subspace of analytic vectors, then there exists a
smooth projective unitary representation $\pi \: G \to \PU(\cH)$
on $\cH$ with the property
that $\pi(\exp x) = q(e^{\oline{\rho(x)}})$ for $x \in \g$,
where $q \: \U(\cH) \to \PU(\cH)$ denotes the quotient map.
\end{Theorem}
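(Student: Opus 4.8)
The plan is to reduce the projective problem to a linear one by passing to the central extension $\g^\sharp_\omega = \R \oplus_\omega \g$, integrating the linear representation there by the standard analytic-vector machinery of \cite{Ne11}, and then projecting back down to $\PU(\cH)$. First I would observe that $\g^\sharp_\omega$ is again a Banach--Lie algebra (as a topological direct sum $\R \oplus \g$ with continuous bracket), and that the projective $*$-representation $(\rho,\cD)$ with cocycle $\omega$ is, by definition, the same data as a $*$-representation $(\rho^\sharp,\cD)$ of $\g^\sharp_\omega$ with $\rho^\sharp(1,0) = i\one$ and $\rho^\sharp(0,x) = \rho(x)$. Strong continuity of $\rho$ on $\g$ together with $\rho^\sharp(1,0) = i\one$ gives strong continuity of $\rho^\sharp$ on $\g^\sharp_\omega$. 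The key point for the analytic-vector hypothesis is that the central direction $(1,0)$ acts by the bounded operator $i\one$, so a vector $\xi \in \cD$ is analytic for all of $\g$ if and only if it is analytic for all of $\g^\sharp_\omega$: one only needs to control the series $\sum_n s^n\|\rho^\sharp(1,x)^n\xi\|/n!$, and since $\rho^\sharp(1,x) = i\one + \rho(x)$ differs from $\rho(x)$ by a bounded operator, analyticity for $\rho(x)$ (all $x \in \g$) implies analyticity for $\rho^\sharp(z,x)$ (all $(z,x)$), using the Remark after the definition of analytic vectors and Lemma~\ref{lem:ana-invar} to see $\cD^\omega$ is invariant. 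Thus $\cD$ contains a dense space of analytic vectors for $\g^\sharp_\omega$.

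Next I would apply the Banach--Lie integrability results of \cite{Ne11}: since $\g^\sharp_\omega$ is a Banach--Lie algebra and $(\rho^\sharp,\cD)$ is a strongly continuous $*$-representation with a dense space of analytic vectors, there is a $1$-connected Banach--Lie group $G^\sharp_\omega$ with Lie algebra $\g^\sharp_\omega$ and a smooth unitary representation $\pi^\sharp \colon G^\sharp_\omega \to \U(\cH)$ with $\pi^\sharp(\exp X) = e^{\overline{\rho^\sharp(X)}}$ for $X \in \g^\sharp_\omega$; here one uses Lemma~\ref{lem:ana-invar} to guarantee the analytic vectors form an invariant domain on which the Lie algebra action is well-behaved, so that the local homomorphism extends over the simply connected group. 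Now $G^\sharp_\omega$ is a central extension of $G$ by the $1$-connected abelian Lie group $\R$ (the integral subgroup corresponding to $\R \cdot(1,0) \subseteq \g^\sharp_\omega$): this is where I would invoke $1$-connectedness of $G$, so that the exact sequence $0 \to \R \to \g^\sharp_\omega \to \g \to 0$ integrates to $1 \to \R \to G^\sharp_\omega \to G \to 1$ with $G^\sharp_\omega$ simply connected and the quotient map having kernel exactly the image of $\R$. The central $\R$ acts under $\pi^\sharp$ via $t \mapsto e^{it}\one$ (since $\rho^\sharp(1,0) = i\one$), hence through scalars, so $\pi := q \circ \pi^\sharp$ factors through $G^\sharp_\omega / \R \cong G$, giving a smooth projective unitary representation $\pi \colon G \to \PU(\cH)$ with $\pi(\exp x) = q(e^{\overline{\rho(x)}})$ for $x \in \g$ (using that $\overline{\rho^\sharp(0,x)} = \overline{\rho(x)}$). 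Smoothness of $\pi$ follows from smoothness of $\pi^\sharp$ and the fact that $q$ and the bundle projection $G^\sharp_\omega \to G$ are smooth submersions, so smooth local sections of $G^\sharp_\omega \to G$ exhibit $\pi$ as locally a composition of smooth maps.

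The main obstacle I anticipate is the careful handling of the analytic-vector hypothesis across the central extension, and in particular verifying that the invariant domain $\cD^\omega$ supplied by Lemma~\ref{lem:ana-invar} is adequate for the integrability theorem of \cite{Ne11} as applied to $\g^\sharp_\omega$ rather than to $\g$ itself; one must check that the shift by the bounded central operator $i\one$ does not disturb the convergence estimates, and that the resulting one-parameter groups $e^{\overline{\rho^\sharp(X)}}$ genuinely assemble into a group homomorphism on the simply connected $G^\sharp_\omega$ (this is precisely the content one extracts from \cite[Prop.~4.10, Prop.~6.3, Thm.~3.6]{Ne11} together with Lemma~\ref{lem:ana-invar}). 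A secondary technical point is confirming that $G^\sharp_\omega$ so produced is genuinely a central \emph{extension} in the sense of Definition~\ref{GroepUitbreiding} (local triviality of the $\R$-bundle), which is automatic for Banach--Lie groups but should be noted; and that the identification $G^\sharp_\omega/\R \cong G$ uses $1$-connectedness of $G$ in an essential way, since otherwise $G^\sharp_\omega/\R$ would only be the universal cover of $G$.
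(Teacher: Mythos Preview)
Your overall strategy---pass to the central extension $\g^\sharp_\omega$, transfer the analytic-vector hypothesis, integrate, and project---is exactly the right idea, and your verification that analytic vectors for $\g$ remain analytic for $\g^\sharp_\omega$ is fine. The gap is the sentence ``there is a $1$-connected Banach--Lie group $G^\sharp_\omega$ with Lie algebra $\g^\sharp_\omega$.'' For Banach--Lie groups this is not automatic: a central extension $\R \to \g^\sharp_\omega \to \g$ integrates to a Lie group extension $\R \to G^\sharp_\omega \to G$ only if the period homomorphism $\per_\omega \colon \pi_2(G) \to \R$ vanishes (cf.\ \cite{Ne02}), and for infinite-dimensional $G$ one can have $\pi_2(G) \neq 0$ with nontrivial periods. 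Nothing in the hypotheses rules this out, so you cannot simply invoke the integrability theorem of \cite{Ne11} for a group $G^\sharp_\omega$ that may not exist. Your later remark that ``$G^\sharp_\omega/\R \cong G$ uses $1$-connectedness of $G$'' is also off: $1$-connectedness of $G$ does not by itself force the Lie algebra extension to integrate.

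The paper circumvents this obstruction by never globalizing on the level of $\g^\sharp_\omega$. It extracts from the proof of \cite[Thm.~6.8]{Ne11} only the \emph{local} multiplicativity $\tilde\pi((t,x)*(s,y)) = \tilde\pi(t,x)\tilde\pi(s,y)$ in $\U(\cH)$ for the BCH product on a $0$-neighborhood in $\g^\sharp_\omega$; this step needs no global group. Projecting to $\PU(\cH)$ kills the central direction and yields a local homomorphism from a $0$-neighborhood of $\g$ itself into $\PU(\cH)$, satisfying $q(e^{\overline{\rho(x*y)}}) = q(e^{\overline{\rho(x)}})q(e^{\overline{\rho(y)}})$. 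Now the group $G$ \emph{is} given and $1$-connected, so the Bourbaki monodromy lemma \cite[Ch.~3, \S6, Lemma~1.1]{Bou89} extends this to a global homomorphism $\pi \colon G \to \PU(\cH)$. Your argument can be repaired along exactly these lines: replace the appeal to a global $G^\sharp_\omega$ by the local BCH argument on $\g^\sharp_\omega$, project, and globalize on $G$.
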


\begin{proof} We proceed as in the proof of \cite[Thm.~6.8]{Ne11}.
Using Lemma~\ref{lem:ana-invar}, we see that we may w.l.o.g.\ assume that
$\cD = \cD^\omega$, so that $\cD$ consists of analytic vectors.
According to Nelson's Theorem \cite{Nel59},
the operators $\rho(x)$, $x \in \g$, are essentially skew-adjoint,
so that their closures generate unitary one-parameter groups.
The same holds for the operators
$\hat\rho(t,x)$, $(t,x) \in \g^\sharp$.
This leads to a map
\[ \tilde\pi \: \g^\sharp \to \U(\cH),\quad x \mapsto e^{\oline{\hat\rho(t,x)}}
= e^{it} e^{\oline{\rho(x)}}.\]
From the proof of \cite[Thm.~6.8]{Ne11}, we immediately derive that
\begin{equation}
  \label{eq:mult2}
\tilde\pi((t,x)*(s,y)) = \tilde\pi(t,x)\tilde\pi(s,y)
\end{equation}
holds for $(t,x), (s,y)$ in some open $0$-neighborhood $U^\sharp \subeq \g^\sharp$.
This implies that
\begin{equation}
  \label{eq:mult3}
q(e^{\oline{\rho(x*y)}}) = q(e^{\oline{\rho(x)}}) q(e^{\oline{\rho(y)}})
\end{equation}
for $x,y$ in some open $0$-neighborhood $U \subeq \g$.
Now \cite[Ch.~3, \S6, Lemma 1.1]{Bou89} implies the existence of a
unique homomorphism $\pi \: G \to \PU(\cH)$ such that
$\pi(\exp x) = q(e^{\oline{\rho(x)}})$ holds for all elements $x$ in
some $0$-neighborhood of $\g$.

That $\pi$ is a smooth projective representation
(Definition~\ref{def:smooth-proj-rep})
follows from the analyticity of the orbit
maps $G \to \bP(\cH), g \mapsto \pi(g)[v]$ for $v \in \cD^\omega$, which
in turn follows from $\pi(\exp x)[v] = [e^{\oline{\rho(x)}}v]$.
\end{proof}

\subsection{Double extensions}\label{Sec:doubleextensions}

Suppose that $G$ is a locally convex Lie group and $\alpha \: \R \to \Aut(G)$ a homomorphism
defining a smooth $\R$-action on $G$.
Then the semidirect product $G \rtimes_{\alpha}\R$ is a Lie group
with Lie algebra $\fg \rtimes_{D} \R$, where
$D \in \der(\g)$ is the infinitesimal generator of the $\R$-action on
$\g$ induced by~$\alpha$.

If $\oline\rho \: G \rtimes_\alpha \R \to \PU(\cH)$
is a smooth projective unitary representation,
then Theorem~\ref{ThmProjRepLARep} yields a central extension
\[\T \rightarrow \hat G := (G\rtimes_{\alpha}\R)^{\sharp} \rightarrow G\rtimes_{\alpha}\R \]
with a smooth unitary representation $\rho$ of $\hat G$
on $\cH$ that induces~$\oline\rho$.
From Theorem~\ref{ThmProjRepLARep}, we see that the restriction of $\ol{\rho}$
to $(G \rtimes_{\alpha}\R)_{0}$
is determined up to unitary equivalence by the
derived representation $\dd\rho$
of the central extension $\hat\g= (\fg \rtimes_{D}\R)^{\sharp}$.
We identify $\g$ with the linear subspace ${\{0\} \times\g \times \{0\}}$
of $\widehat{\fg}$.
We write
\[ C := (1,0,0) \quad \mbox{ and } \quad D := (0,0,1) \]
for the central element and
derivation in $\R \oplus_\omega (\g\rtimes_{D}\R)$ respectively, so that
\begin{equation}
  \index{double extension \scheiding $\widehat{\fg}$}
  \label{eq:d-elt}
\hat\g =  \R C \oplus_\omega (\g \rtimes  \R D).
\end{equation}
We trust that using the same symbol for the derivation $D \in \mathrm{der}(\fg)$ and
the Lie algebra element $D \in \widehat{\fg}$ that implements it will
not lead to confusion. Note that
in the representation $\dd\rho$ of $\widehat{\fg}$, the central element $C$ acts
by~$i\one$. Writing $(z,x,t) = z C + x + t D$, the bracket in $\hat\g$ takes the form
\begin{align*}
& [z C + x + tD, z'C + x' + t'D] \\
&= \big(\omega(x,x') + t\omega(D,x') - t' \omega(D,x)\big)C
+ ([x,x']  + t Dx' - t' D x).
\end{align*}

\section{Positive energy representations}
\label{sec:3}

In this section we introduce positive energy representations
and some tools to handle them.
In \S\ref{subsec:3.1}, we give the precise definition
on both the linear and the projective level.
In \S\ref{sec:minimal}, we use the Borchers--Arveson Theorem to
reduce the classification of positive energy representations to the
so-called \emph{minimal} ones.
Finally,
in \S\ref{subsec:3.2} we describe the key tool of this paper
in a first general form: the Cauchy--Schwarz estimates for projective
positive energy representations. Here we will discuss them for general
groups, but they will be refined in the context
of gauge algebras in \S\ref{subsec:5.3} below.

\subsection{Positive energy representations}
\label{subsec:3.1}

Let $G$ be a locally convex Lie group with Lie algebra $\fg$ and let
$\alpha \colon \R \rightarrow \mathrm{Aut}(G)$ be a homomorphism defining
a smooth $\R$-action on $G$. Then it also induces a smooth action
$\alpha^\g$ on $\g$ and we write $D \in \mathrm{der}(\fg)$
for its infinitesimal generator
\begin{equation}\label{eq3.1DefinitieVanD}
Dx  := \derat0 \alpha^\g_t(x) \qquad \mbox{ for } \qquad x \in \g.
\end{equation}
In this section, we investigate smooth projective unitary representations of
$G$ that extend to projective  \emph{positive energy} representations
of $G \rtimes_\alpha \R$.

\begin{Definition} (Projective positive energy representations) \label{def:posenerdef}
A smooth, projective, unitary representation
$\oline\rho \: G \rtimes_\alpha \R \to \PU(\cH)$
is called a  \index{representation!85@projective positive energy \vulop}
\emph{positive energy representation} if one (hence any)
strongly continuous homomorphic lift $U \colon \R \rightarrow \U(\cH)$
of $\ol U \colon \R \rightarrow \PU(\cH), t \mapsto
\oline\rho(\one,t)$ has a generator
\[H := i \derat0 U_{t}\]
whose spectrum is bounded below. We then call $H$ a
\emph{Hamiltonian} \index{Hamiltonian!plain \scheiding $H$}
and note that $U_t = e^{-itH}$ holds in the sense of functional calculus.
\end{Definition}

\begin{Remark}
By adding a constant,
we can always choose
a Hamiltonian $H$ that satisfies
$\mathrm{Spec}(H) \subseteq [0,\infty)$.
\end{Remark}

We have seen in \S\ref{Sec:doubleextensions}
that every smooth  projective unitary representation
$\oline\rho$ of $G \rtimes_\alpha \R$
gives rise to a smooth linear representation $(\rho,\cH)$ of a
locally convex Lie group
$\widehat{G} = (G\rtimes_{\alpha}\R)^{\sharp}$,
a central $\T$-extension of
$G \rtimes_{\alpha} \R$ with Lie algebra
\[ \widehat{\fg} = \R \oplus_{\omega} (\fg \rtimes_{D} \R)
= \R C \oplus_\omega (\g \rtimes \R D) \]
as in \eqref{eq:d-elt}.

\begin{Definition} (Linear positive energy representations) \label{Posendef}
Let $\rho \colon \widehat{G} \rightarrow\U(\cH)$ be
a smooth unitary representation  of~$\hat{G}$.
Then $\rho$ gives rise to a derived representation $\dd \rho$ of $\hg$ on
the space $\cH^{\infty}$ of smooth vectors.
We call
\[ H:= i \dd \rho(D) \] the \emph{Hamiltonian} and
we say that $\rho$ is a \emph{positive energy representation}
if  \index{representation!80@positive energy \vulop}
\[ \dd\rho(C) = i\one \quad \mbox{ and if } \quad
\mathrm{Spec}(H) \subseteq [0,\infty).\]
\end{Definition}

\begin{Remark}\label{Rk:perfectcocycles} (a) If $\dd\rho(C) = i \one$ and $\mathrm{Spec}(H) \subseteq [E_{0},\infty)$ is
bounded below,
we can always replace $D$ by $D + E_0 C$ to obtain a positive Hamiltonian.
Note that this does not change the cocycle $\omega$ on $\g \rtimes_D \R$,
only the isomorphism
between $\widehat{\fg}$ and $(\fg \rtimes_{D}\R)^{\sharp}$.

(b) For a cocycle $\omega$ on $\g \rtimes_D \R$, the relation
\begin{equation}
  \label{eq:cocd}
\omega(D,[\xi,\eta]) = \omega(D\xi, \eta) + \omega(\xi,D\eta)
\end{equation}
shows that the linear functional $i_D \omega$ measures the non-invariance of
the restriction of $\omega$ to $\g \times \g$ under the derivation~$D$.
It also shows that if the Lie algebra $\g$ is perfect, then the linear functional
$i_D\omega \: \g \to \R$ is completely determined by 
the restriction of $\omega$ to $\fg \times \fg$.
\end{Remark}

\subsection{Equivariant positive energy representations}\label{sec:eqposener}

We will also need an equivariant version
of positive energy representations.
Let $P$ be a Lie group with Lie algebra $\fp$ and let
$\alpha \colon P \rightarrow \mathrm{Aut}(G)$ be a homomorphism defining
a smooth $P$-action on $G$.

\begin{Definition} (Equivariant projective positive energy representations) \label{def:posenerdefsym}
A~smooth, projective, unitary representation
$\oline\rho \: G \rtimes_\alpha P \to \PU(\cH)$
is called a \emph{positive energy representation with respect to $p \in\fp$}
if the projective representation
\[\ol{\rho}_{p} \colon G \rtimes_{\alpha \circ \exp_{p}} \R \rightarrow \PU(\cH)\]
defined by
$
\ol{\rho}_{p}(g,t) := \ol{\rho}(g,\exp(p t))
$
is of positive energy in the sense of Definition~\ref{def:posenerdef}.
The \emph{positive energy cone} \index{positive energy cone \scheiding $\cone$}
$\cone \subeq \fp$ is the set of all elements $p\in \fp$ for which $\ol{\rho}$ is a positive energy representation.
\end{Definition}

Note that $\cone$ is an $\Ad_{P}$-invariant cone.
In particular, the representation $\ol{\rho}$ is of positive energy with respect to
$p\in \fp$
if and only if it is of positive energy for all elements in
the cone generated by
the adjoint orbit $\Ad_{P}(p)\subseteq \fp$ of $p$.

The homomorphism $\alpha \colon P \rightarrow \mathrm{Aut}(G)$ can be twisted by
an inner automorphism $\mathrm{Ad}_{g_0}$,
yielding $\alpha' = \mathrm{Ad}_{g_0}\alpha \mathrm{Ad}_{g_0}^{-1}$.
Essentially, these inner twists do not affect the class
of equivariant projective positive energy representations.
\begin{Proposition}\label{Prop:twistthelift}
Let $(\ol\rho,\cH)$ be an equivariant projective positive energy representation of $G\rtimes_{\alpha} P$, and let
$\ol{U}_0 := \ol\rho(g_0)$. Then
\[\ol{\rho}{}'(g,p) := \ol{U}_0 \ol{\rho}(\mathrm{Ad}_{g_0}^{-1}(g),p )\ol{U}^{-1}_0\] is an
equivariant projective positive energy representation of $G\rtimes_{\alpha'} P$ with the same restriction to $G$, and with the same positive energy cone
$\cone \subseteq \fp$.
\end{Proposition}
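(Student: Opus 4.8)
The plan is to verify directly from the definitions that $\ol{\rho}'$, as defined, is (i) a well-defined group homomorphism on $G \rtimes_{\alpha'} P$, (ii) smooth and projective unitary, (iii) restricts to the same representation on $G$, and (iv) has positive energy cone exactly $\cone$. Most of this is bookkeeping; the only genuine subtlety is that $\ol{U}_0 = \ol\rho(g_0)$ lives in $\PU(\cH)$, so conjugation by it is nonetheless a well-defined automorphism of $\PU(\cH)$ (the $\T$-ambiguity cancels), which makes $\ol{\rho}'$ well-defined even though $\ol\rho$ is only projective.

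First I would check the homomorphism property. Recall that in $G \rtimes_{\alpha'} P$ the product is $(g,p)(g',p') = (g\,\alpha'_p(g'),\, pp')$ with $\alpha' = \mathrm{Ad}_{g_0} \circ \alpha \circ \mathrm{Ad}_{g_0}^{-1}$. Writing $h := \mathrm{Ad}_{g_0}^{-1}$, one computes $h(g\,\alpha'_p(g')) = h(g)\, h(\alpha'_p(g')) = h(g)\, \alpha_p(h(g'))$, i.e. $h$ intertwines $\alpha'$ with $\alpha$, so $h \colon G \rtimes_{\alpha'} P \to G \rtimes_{\alpha} P,\ (g,p)\mapsto (h(g),p)$ is a Lie group isomorphism. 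Then $\ol{\rho}' = \mathrm{Conj}_{\ol{U}_0} \circ \ol{\rho} \circ h$ (on the $G$-slot; the $P$-slot is untouched), so $\ol\rho'$ is manifestly a homomorphism. I should double-check that on the $P$-part the formula $\ol\rho'(\one,p) = \ol U_0 \ol\rho(\one,p)\ol U_0^{-1}$ is consistent with $\ol\rho'(g,p) = \ol U_0\ol\rho(h(g),p)\ol U_0^{-1}$ under the semidirect-product relation $(g,p) = (g,\one)(\one,p)$ — this is immediate since $\mathrm{Conj}_{\ol U_0}$ and $h$ are both homomorphisms. Smoothness of $\ol\rho'$ follows since it is a composition of the smooth map $\ol\rho$ with the Lie group automorphism $h$ and the (smooth, indeed bounded) conjugation automorphism of $\PU(\cH)$; hence $\bP(\cH)^\infty$ is preserved and $\ol\rho'$ is a smooth projective unitary representation.

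Next, the restriction to $G$: for $g \in G$ (i.e. $p = \one$), $\ol\rho'(g,\one) = \ol U_0\, \ol\rho(\mathrm{Ad}_{g_0}^{-1}(g),\one)\, \ol U_0^{-1} = \ol\rho(g_0)\ol\rho(g_0^{-1}g g_0)\ol\rho(g_0)^{-1} = \ol\rho(g_0 g_0^{-1} g g_0 g_0^{-1}) = \ol\rho(g,\one)$, using that $\ol\rho\res_G$ is a homomorphism and $\ol U_0 = \ol\rho(g_0)$. So the restrictions agree on the nose. Finally, for the positive energy cone: fix $p \in \fp$ and compare the one-parameter groups $\ol U^{(p)}_t := \ol\rho(\one, \exp(tp))$ and $\ol U'^{(p)}_t := \ol\rho'(\one,\exp(tp)) = \ol U_0\, \ol U^{(p)}_t\, \ol U_0^{-1}$. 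Lift both to $\U(\cH)$: if $U_t$ is a strongly continuous unitary lift of $\ol U^{(p)}_t$ with generator $H$, and $U_0 \in \U(\cH)$ is any lift of $\ol U_0$, then $U'_t := U_0 U_t U_0^{-1}$ is a strongly continuous unitary lift of $\ol U'^{(p)}_t$ with generator $H' = U_0 H U_0^{-1}$, which is unitarily equivalent to $H$ and hence has the same spectrum. Therefore $\mathrm{Spec}(H')$ is bounded below if and only if $\mathrm{Spec}(H)$ is, so $\ol\rho'$ is of positive energy with respect to $p$ iff $\ol\rho$ is; that is, the two positive energy cones coincide. In particular $\ol\rho'$ is an equivariant projective positive energy representation of $G\rtimes_{\alpha'}P$ with the same cone $\cone$, completing the proof.

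I do not anticipate a hard step here; the whole proposition is a compatibility check, and the one point warranting care is simply ensuring that $\ol U_0$, being only a projective unitary, still induces a legitimate inner automorphism of $\PU(\cH)$ (equivalently, that the conjugation $\mathrm{Conj}_{\ol U_0}$ is independent of the chosen lift $U_0$), which is clear. One might also remark — though it is not needed — that on the level of central extensions this says the double extension $\widehat{\fg}$ and the Hamiltonian $H = i\,\dd\rho(D)$ are unchanged up to the obvious isomorphism, consistent with Remark~\ref{Rk:perfectcocycles}.
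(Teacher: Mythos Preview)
Your proof is correct and follows essentially the same approach as the paper: both recognize that $(\mathrm{Ad}_{g_0}^{-1},\mathrm{Id}_P)$ is a Lie group isomorphism $G\rtimes_{\alpha'}P \to G\rtimes_{\alpha}P$, so that $\ol\rho'$ is the composite of $\ol\rho$ with this isomorphism and the inner automorphism $\mathrm{Conj}_{\ol U_0}$ of $\PU(\cH)$, and both handle the positive energy cone by noting that conjugation by a unitary lift $U_0$ preserves the spectrum of the generator. The paper encodes the first step in a single commutative diagram, whereas you spell out the intertwining relation and additionally verify the restriction to $G$ and smoothness explicitly; these are welcome details the paper leaves to the reader.
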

\begin{proof}
To see that $\ol{\rho}{}'$ is a projective representation of $G\rtimes_{\alpha'} P$, one checks that
the following is a commutative diagram of group homomorphisms:
\begin{center}
$ $
\xymatrix{
G\rtimes_{\alpha} P \ar[r]^{\ol{\rho}}\ar[d]_{ (\mathrm{Ad}_{g_0}, \mathrm{Id}_{P})} &
\mathrm{PU}(\cH)\ar[d]^{\mathrm{Ad}_{U_0}}\\
G\rtimes_{\alpha'} \ar[r]^{\ol{\rho}'}P & \mathrm{PU}(\cH).
}
\end{center}
For the positive energy condition, note that any lift $t \mapsto V_t$ of $t \mapsto \ol{\rho}(\exp(tp))$ yields a lift
$t \mapsto U_0 V_t U_{0}^{-1}$ of $t \mapsto \ol{\rho}'(\exp(tp))$ whose generator has the same spectrum.
\end{proof}

\subsection{Minimal representations}\label{sec:minimal}

The following refinement of the Borchers--Arveson Theorem \cite{BR02}
will be used in the proof of Corollary~\ref{cor:borch} below.

\begin{Theorem} \label{thm:BAthm}
Let $\cH$ be a Hilbert space and let $\cM \subeq B(\cH)$ be a von Neumann
algebra. Further, let $(U_t)_{t \in \R}$ be a strongly continuous unitary one-parameter group on $\cH$ for which $\cM$ is invariant under conjugation with the operators $U_t$, so that
we obtain a one-parameter group $\alpha \: \R \to \Aut(\cM)$ by
$\alpha_t(M) := \Ad(U_t)M := U_t M U_t^*$ for $M \in \cM$.
If $U_t = e^{-itH}$ with $H \geq 0$, then the following assertions hold:
\begin{itemize}
\item[\rm(i)] There exists a strongly continuous unitary one-parameter group
$(V_t)_{t \in \R}$ in $\cM$ with $\Ad(V_t) = \alpha_t$ and
$V_t = e^{-itH_0}$ with $H_0 \geq 0$.
It is uniquely determined by requirement that it is minimal in the sense that, for
any other one-parameter group $(V_{t}')_{t\in \R}$ with these properties,
the central one-parameter group $V_t' V_{-t} = e^{-itZ}$ in $\cM$ satisfies $Z \geq 0$.
\item[\rm(ii)] If $V_T = \one$ and $\cF \subeq \cH$ is an $\cM$-invariant subspace,
then the subspace $\cF_0 := \{ \xi \in \cF \: H_0 \xi = 0\}$ is $\cM$-generating in $\cF$.
\item[\rm(iii)] If $\alpha_T = \id_\cM$, then $V_T =~\one$.
\end{itemize}
\end{Theorem}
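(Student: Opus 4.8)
The theorem is essentially a refinement of the classical Borchers--Arveson theorem together with an explicit uniqueness/minimality statement and two corollaries about periodicity. My plan is to treat the three items in order, leaning on the structural input from \cite{BR02}.

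For \rm(i), the starting point is the Borchers--Arveson theorem itself: since $\alpha_t = \Ad(U_t)$ preserves $\cM$ and $H \geq 0$, there exists \emph{some} strongly continuous unitary one-parameter group $(V_t')_{t\in\R}$ \emph{inside} $\cM$ implementing $\alpha$ with positive generator. The task is to select the \emph{minimal} one. Given two implementing groups $(V_t)$, $(V_t')$ in $\cM$ with positive generators, the product $W_t := V_t' V_{-t}$ is a strongly continuous one-parameter group that commutes with everything $\alpha$-related; in fact $W_t \in \cM \cap \cM'$ is central in $\cM$, and $W_t = e^{-itZ}$ for a selfadjoint $Z$ affiliated with the center. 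The collection of admissible positive generators $H_0'$ (for implementing groups in $\cM$) is then a coset of the set of such central selfadjoint $Z$, and the ``minimal'' choice amounts to showing this coset has a least element. This is exactly where I would invoke the fine structure in the Borchers--Arveson analysis: the positive-generator implementations in $\cM$ form a set on which the central differences $Z$ range over a set with a minimum, namely the spectral projection structure forces a canonical choice $H_0$ characterized by $H_0 \leq H_0'$ for every admissible $H_0'$, equivalently $V_t' V_{-t} = e^{-itZ}$ with $Z \geq 0$. Uniqueness of $H_0$ (hence of $(V_t)$) then follows since if both $H_0$ and $H_0'$ are minimal, then $H_0 - H_0' \geq 0$ and $H_0' - H_0 \geq 0$, forcing $Z = 0$ and $V_t = V_t'$.

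For \rm(ii), assume $V_T = \one$, so the minimal implementing group is $T$-periodic and $H_0$ has spectrum in $\frac{2\pi}{T}\Z_{\geq 0}$. Let $\cF$ be an $\cM$-invariant subspace and set $\cF_0 := \ker(H_0) \cap \cF$; I want to show $\cM \cF_0$ is dense in $\cF$. Since $V_t \in \cM$, the subspace $\cM\cF_0$ is $V_t$-invariant, so it suffices to show its closure contains $\cF$. Consider the orthogonal complement $\cF_1$ of $\overline{\cM\cF_0}$ in $\cF$; it is again $\cM$- and $V_t$-invariant, and $\cF_1 \cap \ker H_0 = \{0\}$ (any vector there would lie in $\cF_0 \subseteq \overline{\cM\cF_0}$). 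But $\cF_1$ is $V_t$-invariant with $V_T|_{\cF_1} = \one$, so $H_0|_{\cF_1}$ has spectrum in $\frac{2\pi}{T}\Z_{\geq 0}$; if $\cF_1 \neq \{0\}$, pick the minimal spectral value $n_0 \frac{2\pi}{T}$ occurring in $\cF_1$ and let $\cF_1' := \ker(H_0 - n_0\frac{2\pi}{T})\cap\cF_1 \neq \{0\}$. Now I would use the lowering operator: for $M \in \cM$, the $\alpha$-covariance $\alpha_t(M) = e^{-itH_0}Me^{itH_0}$ together with the Fourier/Arveson spectral subspace decomposition of $\cM$ produces elements $M \in \cM$ that lower the $H_0$-grading by $\frac{2\pi}{T}$; applying such an $M$ to a vector in $\cF_1'$ lands in the $(n_0-1)$-eigenspace within $\cF_1$ (using $\cM$-invariance of $\cF_1$), contradicting minimality of $n_0$ unless $n_0 = 0$. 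And $n_0 = 0$ contradicts $\cF_1 \cap \ker H_0 = \{0\}$. Hence $\cF_1 = \{0\}$, i.e.\ $\cF_0$ is $\cM$-generating in $\cF$.

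For \rm(iii), suppose $\alpha_T = \id_\cM$. Then $\Ad(V_T) = \id_\cM$, so $V_T \in \cM \cap \cM' = Z(\cM)$; write $V_T = e^{-iTZ}$ with $Z \geq 0$ affiliated with the center (positivity because $H_0 \geq 0$ forces the spectral values of $V_T$ to come from $e^{-iTH_0}$ with $H_0 \geq 0$, but actually I only get $Z$ selfadjoint from $V_T$ central unitary — I need the sign). Here the cleanest route is: the group $(V_{t+T}V_{-t})_{t} = V_T$ is constant and central, so $(V_t V_{-\lfloor t/T\rfloor T}$-type) corrections show that $\tilde V_t := e^{itZ}V_t$ is again an implementing group in $\cM$ with $\tilde V_T = \one$, but its generator is $H_0 + Z$; by minimality of $H_0$ (applied with $(\tilde V_t)$ in the role of $(V_t')$), we get $Z \geq 0$ from the defining property, and symmetrically, running the argument with $e^{-itZ}V_t$ gives generator $H_0 - Z$ which must still be $\geq 0$ as an admissible implementation only if... — more carefully: minimality says $V_t' V_{-t} = e^{-itZ'}$ has $Z' \geq 0$ for \emph{any} admissible $(V_t')$; taking $V_t' = \tilde V_t = e^{itZ}V_t$ gives $Z' = -Z \geq 0$, i.e.\ $Z \leq 0$; combined with $Z \geq 0$ (from $V_T = e^{-iTZ}$, $H_0 \geq 0$ considerations, or from the fact that $e^{iTZ} = V_T^{-1}$ must itself be expressible consistently) we conclude $Z = 0$, hence $V_T = \one$.

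\textbf{Main obstacle.} The delicate point is establishing the existence and characterization of the genuinely \emph{minimal} implementing group in \rm(i) — the classical Borchers--Arveson theorem gives \emph{an} inner implementation with positive generator but not canonically the smallest one. Making the minimality precise requires analyzing the lattice of positive central perturbations and showing it has a least element, which is the technical heart; everything in \rm(ii) and \rm(iii) then follows by grading/spectral-subspace arguments once minimality is in hand. I would expect to cite the relevant structural result from \cite{BR02} for the minimal implementation and keep the exposition of \rm(ii)--\rm(iii) self-contained.
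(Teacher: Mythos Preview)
Part (i) is handled correctly and matches the paper: both simply invoke the Borchers--Arveson theorem from \cite{Bo96,BR02} for existence and uniqueness of the minimal inner implementation.

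In part (ii) there is a genuine gap. Your lowering-operator argument never uses the minimality of $H_0$, yet minimality is essential---the conclusion is false for non-minimal implementations (replace $H_0$ by $H_0 + \frac{2\pi}{T}\one$ and the kernel becomes trivial while $V_T = \one$ persists). Concretely, the step where you apply a lowering element $M \in \cM$ to $\xi \in \cF_1'$ and claim a contradiction is unjustified: nothing rules out $M\xi = 0$ for every such $M$, in which case no contradiction arises. The paper's proof instead exploits minimality directly: assuming $(\cM\cF_0)^\bot \cap \cF \neq \{0\}$, one checks that $\cH_1 := (\overline{\cM \ker H_0})^\bot$ is nonzero and that the projection onto it is a \emph{central} projection $Z \in \cZ(\cM)$ (because $\cM\ker H_0$ is invariant under both $\cM$ and $\cM'$). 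On $\cH_1$ one has $H_0 \geq \frac{2\pi}{T}$, so $H_0 - \frac{2\pi}{T}Z$ is a strictly smaller nonnegative generator of an implementing group in $\cM$, contradicting minimality.

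Your approach to (iii) has the right shape but is incomplete. Beyond the sign slip (the generator of $\tilde V_t = e^{itZ}V_t$ is $H_0 - Z$, not $H_0 + Z$), the decisive unverified step is that $\tilde V_t$ has \emph{nonnegative} generator; without this you cannot feed $\tilde V_t$ into the minimality hypothesis. If you choose central $Z$ with $V_T = e^{-iTZ}$ and the representing $L^\infty$-function of $Z$ taking values in $[0,\frac{2\pi}{T})$, then $e^{-iT(H_0-Z)} = \one$ forces $\Spec(H_0-Z) \subseteq \frac{2\pi}{T}\Z$, and a short joint-spectral argument (using $H_0 \geq 0$ and that the spectral projection of $Z$ at $\frac{2\pi}{T}$ vanishes) excludes negative values; then minimality gives $-Z \geq 0$, hence $Z = 0$. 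The paper takes a longer route: it chooses $Z \geq 0$ with $V_T = e^{iTZ}$ and $\Spec Z \subseteq [0,\frac{2\pi}{T}]$, observes that $H_1 := H_0 + Z$ satisfies $\Spec H_1 \subseteq \frac{2\pi}{T}\Z$, uses minimality to show the central support of every $P^{H_0}([0,\eps])$ is $\one$, and then derives $\Spec Z \subseteq \{0,\frac{2\pi}{T}\}$ from the discreteness of $\Spec H_1$.
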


\begin{proof} (i) This is the Borchers--Arveson Theorem
(\cite[Thm.~II.4.6]{Bo96}; see also \cite[Thm.~3.2.46]{BR02} and \cite{BGN20} for a
detailed discussion).

(ii) If $V_{T} = \one$, then $\Spec(H_0) \subseteq \frac{2\pi}{T} \Z$. In particular
$H_0$ is diagonalizable. If $\cF_0$ is not $\cM$-generating in $\cF$,
then $\cE := (\cM\cF_0)^\bot\cap \cF$ is a non-zero $\cM$-invariant subspace of $\cF$
with $\inf\Spec(H_0\res_{\cE}) \geq \frac{2\pi}{T}$.
As $\cH_0 := \ker H_0 \subeq \cE^\bot$, we also have $\cM\cH_0 \subeq \cE^\bot$. Since
$\cM\cH_0$ is invariant under $\cM$ and $\cM'$,
the orthogonal projection $Z$ onto  $\cH_1 := (\cM\cH_0)^\bot$ is central in $\cM$.
On this subspace we have $\inf\Spec(H_0\res_{\cH_1}) \geq \frac{2\pi}{T}$,
so that $H := H_0 - Z \frac{2\pi}{T} \geq 0$, contradicting minimality.

(iii) If $\alpha_{T} = \id_{\cM}$, then $V_{T}$ is 
contained in the center
$\cZ(\cM) = \cM \cap \cM'$ of $\cM$. As $\cZ(\cM)$ is a direct sum of
$L^\infty$-algebras, there exists a non-negative
$Z \geq 0$ in $\cZ(\cM)$ with $\Spec(Z) \subeq [0,\frac{2\pi}{T}]$ and
$V_T = e^{iTZ}$. Now $V_t' := e^{-it(H_0+Z)} = V_t e^{-itZ}$ also has a non-negative
generator $H_1 := H_0 + Z$ and satisfies $V'_T = \one$. In particular,
$\Spec(H_1) \subeq \frac{2\pi}{T}\Z$.

We claim that the minimality of $V$ implies that, for every $\eps > 0$, the central support
of the spectral projection $P := P^{H_0}[0,\eps]$ of $H_0$ in $\cM$ equals $\one$.
To see this, note that the central support $Q$ of $P$ is the orthogonal
projection onto the closed subspace generated by $\cM P\cH$. If this subspace
is proper, then the restriction $H_1$ of $H_0$ to $\cH_1 := (\one-Q)\cH$ satisfies
$H_1 \geq \eps\one$, so that $H' := H_0 - \eps(\one - Q) \geq 0$. The minimality of $H_0$
now yields $\one = Q$.

We now show that $\Spec(Z) \subeq \{0, \frac{2\pi}{T}\}$,
which implies that 
\[ V_T = V_T' e^{-iTZ}= V_T' = \one.\] 
Assume that this is not the case. Then there exists a non-zero spectral value
$0 < a < \frac{2\pi}{T}$ of $Z$. Let $\eps > 0$ be such that
$0 < a - 2\eps < a + 2\eps < \frac{2\pi}{T}$ and consider the spectral
projection $Q := P^Z([a-\eps,a+\eps])$ for $Z$, which is contained in
$\cZ(\cM)$.
Since the central support of $P^{H_0}[0,\eps]$ is $\one$, we have
$Q P^{H_0}([0,\eps]) \not=0$, so that
$\Spec(QH_0) \cap [0,\eps] \neq \emptyset$.
Since $\Spec(QZ) \subeq [a-\eps,a+\eps]$, this leads to
\[ \Spec((H_0 + Z)Q)\cap [a-\eps, a+ 2 \eps] \not= \emptyset.\]
This contradicts $\Spec((H_0 + Z)Q) = \Spec(H_1 Q) \subeq \Spec(H_1) \subeq \frac{2\pi}{T}\Z$.
\end{proof}

Using the Borchers--Arveson Theorem, every smooth positive energy representation
$(\rho, \cH)$ can be brought in the the following standard form.

\begin{Definition} (Minimal representations) \label{def:mini} A positive energy
representation $(\rho, \cH)$ of $\hat G$ is called
{\it minimal} if the 1-parameter group \index{representation!minimal \vulop}
$U_{t} = \rho(\exp(tD))$
is minimal with respect to  the von Neumann algebra $\rho(\widehat{G})''$.
\end{Definition}

\begin{Corollary} \label{cor:borch}   
Let $(\rho,\cH)$ be a positive energy representation  of
$\hat G$ and let $G^\sharp \subeq \hat G$ be the inverse image of the subgroup
$G$ of $G \rtimes_\alpha \R$, so that $\hat G \cong G^\sharp \rtimes \R$.
Then there exists a unitary $1$-parameter group $(W_t)_{t \in \R}$
in the commutant $\rho(\hat G)'$ such that
$\rho_0(g,t) := \rho(g,t) W_t^{-1}$ has the following properties:
\begin{itemize}
\item[\rm(i)] $\rho_0(\hat G)'' = \rho(G^\sharp)''$.
\item[\rm(ii)] If $\rho$ is  irreducible, then so is $\rho\res_{G^\sharp}$.
\item[\rm(iii)] If $\alpha_T = \id_G$, then $\rho_0(\one,T) = \one$
and, for every closed $\rho(\hat G)$-invariant subspace $\cF \subeq \cH$,
the subspace $\cF_0 := \{ \xi \in \cF \: H_0 \xi = 0\}$ is $\hat G$-generating in $\cF$.
\item[\rm(iv)] $\rho_0$ is a smooth positive energy representation.
\end{itemize}
\end{Corollary}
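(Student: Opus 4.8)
The plan is to apply Theorem~\ref{thm:BAthm} with the von Neumann algebra $\cM := \rho(G^\sharp)''$, which is invariant under conjugation by $U_t := \rho(\one,t)$ because $G^\sharp$ is a normal subgroup of $\hat G$ and conjugation by $(\one,t)$ implements the automorphism $\alpha_t^{G^\sharp}$. Writing $U_t = e^{-itH}$ with $H \geq 0$ (possible after Definition~\ref{Posendef} and Remark~\ref{Rk:perfectcocycles}(a), since $\rho$ is a positive energy representation), part~(i) of Theorem~\ref{thm:BAthm} produces the minimal one-parameter group $(V_t)_{t\in\R}$ inside $\cM$ with $\Ad(V_t) = \alpha_t$ on $\cM$ and $V_t = e^{-itH_0}$, $H_0 \geq 0$. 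Then $W_t := V_t U_t^{-1} = V_t U_{-t}$ is a unitary one-parameter group, and since $\Ad(W_t)$ acts trivially on $\cM = \rho(G^\sharp)''$ while also $W_t$ commutes with $U_s$ (both $V$ and $U$ implement the same flow, so $V_t U_{-t}$ is fixed by $\Ad(U_s)$), $W_t$ lies in $\rho(\hat G)'$. Setting $\rho_0(g,t) := \rho(g,t)W_t^{-1}$ gives a genuine unitary representation of $\hat G$ (the $W_t$ factor only modifies the $\R$-part and commutes with everything in the image).

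For (i): $\rho_0(G^\sharp) = \rho(G^\sharp)$ since $W_t$ only enters through the $\R$-coordinate, and $\rho_0(\one,t) = U_t W_t^{-1} = U_t U_t V_t^{-1} \cdot$... more cleanly, $\rho_0(\one,t) = U_t W_t^{-1} = V_t \in \cM = \rho(G^\sharp)''$; hence $\rho_0(\hat G) \subseteq \rho(G^\sharp)''$, giving $\rho_0(\hat G)'' \subseteq \rho(G^\sharp)''$, and the reverse inclusion is immediate because $\rho_0(G^\sharp) = \rho(G^\sharp)$. For (ii): if $\rho$ is irreducible then $\rho(\hat G)'' = B(\cH)$; but $\rho(\hat G)'' = \rho(G^\sharp)'' \vee \{U_t\}''$ and $U_t = V_t W_t$ with $V_t \in \rho(G^\sharp)''$ and $W_t \in \rho(\hat G)' \subseteq \rho(G^\sharp)'$, so $\{U_t\}'' \subseteq \rho(G^\sharp)'' \vee (\rho(G^\sharp)' \cap \{U_t\}'')$; one argues that in the irreducible case $W_t$ must be scalar (as $\rho(\hat G)' = \C\one$), whence $\rho(G^\sharp)'' = B(\cH)$, i.e. $\rho\res_{G^\sharp}$ is irreducible. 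For (iii): if $\alpha_T = \id_G$ then $\alpha_T = \id_\cM$, so Theorem~\ref{thm:BAthm}(iii) gives $V_T = \one$, i.e. $\rho_0(\one,T) = V_T = \one$; and for a closed $\rho(\hat G)$-invariant subspace $\cF$ (which is then $\cM$-invariant, since $\cM = \rho(G^\sharp)'' \subseteq \rho(\hat G)''$), Theorem~\ref{thm:BAthm}(ii) with $V_T = \one$ shows $\cF_0 = \ker(H_0\res_\cF)$ is $\cM$-generating, hence $\hat G$-generating in $\cF$ (as $\cM \subseteq \rho(\hat G)''$, the $\cM$-generated subspace equals the $\hat G$-generated one). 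Part~(iv) is the main point requiring care: $\rho_0$ has nonnegative Hamiltonian $H_0 \geq 0$ by construction, $\dd\rho_0(C) = \dd\rho(C) = i\one$ since $W_t$ acts trivially on the central circle, so it remains only to check \emph{smoothness} of $\rho_0$, i.e. that $\cH^\infty$ for $\rho_0$ is dense. Since $W_t \in \rho(\hat G)'$ commutes with $\rho(\hat G)$, smoothness of the orbit maps $g \mapsto \rho_0(g)\psi$ follows once the one-parameter group $(W_t)$ is smooth, and smoothness of $(W_t)$ in turn follows from $W_t = V_t U_{-t}$ with $U_t$ smooth and $V_t = e^{-itH_0}$ smooth on the (dense) domain of analytic vectors for $H$; here one uses that $H$ and $H_0$ differ by a bounded operator affiliated to $\cM$ only in a weak sense, so the cleanest route is to show directly that $\cH^\infty(\rho) = \cH^\infty(\rho_0)$ by invoking that $V_t$ preserves $\cH^\infty(\rho)$ (as a limit of elements of $\cM = \rho(G^\sharp)''$ acting on smooth vectors, combined with the minimality characterization).

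The step I expect to be the main obstacle is precisely \textbf{(iv)}, the smoothness of $\rho_0$: the Borchers--Arveson group $(V_t)$ is only given abstractly as lying in the von Neumann algebra $\cM$, with no a priori regularity beyond strong continuity, so one must work to show its generator $H_0$ interacts well enough with the smooth structure — e.g. that $e^{-itH_0}$ maps $\cH^\infty$ into itself. I would handle this by exploiting that $\rho$ is a smooth representation together with the fact that $V_t$ implements $\alpha_t$ on $\cM = \rho(G^\sharp)''$, using an argument along the lines of \cite{Ze17} on automatic smoothness under spectral (semiboundedness) conditions, or by noting that the difference $H - H_0$ generates the one-parameter group $W_{-t} = U_t V_{-t}$ in the commutant and showing this group is smooth by a direct estimate on matrix coefficients $\langle \psi, W_t \psi\rangle$ for $\psi$ in a dense set of analytic vectors for $H$.
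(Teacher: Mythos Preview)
Your treatment of (i)--(iii) is essentially the paper's argument (modulo a sign slip in defining $W_t$; you want $W_t = V_t^{-1}U_t$ so that $\rho_0(\one,t) = U_t W_t^{-1} = V_t$, not $W_t = V_t U_t^{-1}$). The point where you diverge is (iv), and here the paper's route is considerably simpler than any of the options you sketch.

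The paper does not try to show that $V_t$ preserves $\cH^\infty$, nor does it invoke automatic-smoothness results such as \cite{Ze17}, nor does it estimate matrix coefficients. Instead it exploits that $(W_t)_{t\in\R}$ lies in the commutant $\rho(\hat G)'$: the spectral subspaces of the selfadjoint generator of $W$ are therefore $\hat G$-invariant, so $\cH$ decomposes as an orthogonal direct sum of $\hat G$-subrepresentations on each of which $W$ is \emph{norm}-continuous (take spectral projections onto bounded intervals). On such a piece $W_{-t}$ is a smooth curve in $\U(\cH)$, hence $\rho_0(g,t) = \rho(g,t)W_{-t}$ is smooth there; taking the algebraic direct sum of smooth vectors over the pieces gives a dense set of smooth vectors for $\rho_0$ on all of~$\cH$. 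This is a two-line argument once you see it. Your proposed approaches either assume what is to be proved (that $V_t$ preserves $\cH^\infty(\rho)$), rely on smoothness of matrix coefficients (which does not by itself give smoothness of orbit maps), or bring in heavy machinery that is not needed here.
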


\begin{proof} (i) Theorem~\ref{thm:BAthm} implies that 
${U_t := \rho(\exp tD)}$ can be written as
${U_t = V_t W_t}$, where $(V_t)_{t \in \R}$ is a continuous unitary one-parameter group
in the von Neumann algebra $\cM := \rho(G^\sharp)''$ and $W_t \in \rho(G^\sharp)'$.

(ii) If $\rho$ is irreducible, then Schur's Lemma implies that $W_t \in \T\one$,
hence that the restriction $\rho\res_{G^\sharp}$ remains irreducible.

(iii) follows from Theorem~\ref{thm:BAthm}(iii) and (ii).

(iv) As $V_t = \rho_0(\one,t)$ has a positive generator,
$\rho_0$ also is a positive energy representation.
It remains to see that $\rho_0$ is smooth. Since $(W_t)_{t\in \R}$ lies in the
commutant $\rho(\hat G)'$, all its spectral subspaces are invariant under $\hat G$.
Therefore $\rho$ is a direct sum of subrepresentations for which $W$ is norm continuous.
We may therefore assume w.l.o.g.\ that $W$ is norm continuous.
Then we can consider $W$ as a smooth representation of
$\hat G$ and therefore $\rho_0(g,t) = \rho(g,t) W_{-t}$
is a smooth representation of $\hat G$.
\end{proof}


In view of the factorization $\rho(g,t) = \rho_0(g,t) W_t$,
we can adopt the point of view that we know all positive
energy representations if we know the minimal ones. On the level
of the irreducible representations, the only difference is a phase factor corresponding
to the minimal energy level. In general, the ambiguity consists in
unitary one-parameter groups of the commutant, and these can be classified
in terms of spectral measures.

\subsection{Cauchy--Schwarz estimates (general case)}
\label{subsec:3.2}

We show that the requirement that a representation
be of positive energy severely restricts the class of cocycles that may occur.

Let $\rho$ be a positive energy representation of $\widehat{G}$.
For a smooth unit vector $\psi \in \cH^{\infty}$ the expectation values
\[ \langle H \rangle_{\psi} := \langle \psi , H \psi \rangle
\quad \mbox{ and }\quad \langle i\dd\rho(\xi) \rangle_{\psi} :=
\langle \psi , i\dd\rho(\xi) \psi \rangle \]
of $H$ and  $\xi \in \fg$ are defined.
The following is a non-commutative adaptation of \cite[Thm.~2.8]{NZ13}.

\begin{Lemma}{\rm(Cauchy--Schwarz estimate)}\label{Lem:CSRaw}
Let $\rho$ be a positive energy representation of $\widehat{G}$.
Let $\xi \in \fg$ be such that
$[\xi,D\xi] = 0$. Then, for every unit vector
$\psi \in \cH^\infty$, we have
\begin{equation}\label{CSRaw}
\big(\langle  i\dd\rho(D\xi)\rangle_{\psi} + \omega(\xi,D)\big)^2 \leq
2 \omega(\xi,D\xi) \langle H \rangle_{\psi},
\end{equation}
and further $\omega(\xi,D\xi) \geq 0$.
\end{Lemma}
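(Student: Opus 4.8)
The plan is to exploit the positive-energy condition on the central extension $\widehat{G}$ together with a clever choice of one-parameter group built from $\xi$ and $D$. The key observation is that the element $x_s := \exp(sC)\exp(x)$ type deformations are not quite what is needed; rather, since $[\xi, D\xi] = 0$, the elements $D$ and $\xi$ generate (inside $\widehat{\fg}$) a solvable, in fact nilpotent-type, subalgebra, and one can compute commutators explicitly using the bracket formula for $\widehat{\fg}$ recorded at the end of \S\ref{Sec:doubleextensions}. Concretely, writing $\widehat\xi := \dd\rho(\xi)$, $\widehat{D\xi} := \dd\rho(D\xi)$ and recalling $H = i\dd\rho(D)$, $\dd\rho(C) = i\one$, one gets from $[D, \xi] = D\xi$ (in $\fg\rtimes_D\R$, lifted to $\widehat{\fg}$ with the cocycle term) the operator identity
\[
[H, i\widehat\xi] = i\widehat{D\xi} + \omega(\xi, D)\,i\one,
\]
and from $[\xi, D\xi]=0$ together with the cocycle relation \eqref{eq:cocd} (applied with $\eta = D\xi$, or directly) that $[i\widehat\xi, i\widehat{D\xi}]$ is a scalar, namely $-\omega(\xi, D\xi)\one$.

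First I would set $A := i\widehat\xi$ and $B := i\widehat{D\xi} + \omega(\xi,D)\one$, both essentially self-adjoint on $\cH^\infty$, so that $[H, A] = B$ (as operators on smooth vectors) and $[A, B] = [A, i\widehat{D\xi}] = -\omega(\xi,D\xi)\one =: -c\,\one$. The engine of the estimate is then the standard positivity trick: for the positive self-adjoint operator $H \geq 0$ and any real $t$, consider the smooth vector $\psi_t := e^{itA}\psi$ (note $e^{itA}$ preserves $\cH^\infty$ since $A = i\dd\rho(\xi)$ integrates to the one-parameter group $\rho(\exp(t\xi))$), and look at
\[
f(t) := \langle \psi_t, H\psi_t\rangle = \langle \psi, e^{-itA} H e^{itA}\psi\rangle \geq 0.
\]
Differentiating in $t$ and using $\frac{d}{dt} e^{-itA}He^{itA} = e^{-itA}(-i[A,H])e^{itA} = e^{-itA}(iB)e^{itA}$, and then again using $\frac{d}{dt}(iB) = i[-iA, B]\cdot(\ldots)$, i.e. $[A,B] = -c\one$ gives $\frac{d^2}{dt^2}e^{-itA}He^{itA} = e^{-itA}(i[-iA, iB])e^{itA} = e^{-itA}([A,B])e^{itA} = -c\,\one$. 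Hence $f$ is a quadratic polynomial:
\[
f(t) = \langle H\rangle_\psi + t\,\langle iB\rangle_\psi - \tfrac{1}{2}c\,t^2
= \langle H\rangle_\psi + t\big(\langle i\widehat{D\xi}\rangle_\psi + \omega(\xi,D)\big) - \tfrac{1}{2}\,\omega(\xi,D\xi)\,t^2.
\]

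Then I would finish as follows. Since $H \geq 0$, we have $f(t) \geq 0$ for all $t \in \R$. A quadratic $at^2 + bt + \gamma$ with $\gamma \geq 0$ that is nonnegative on all of $\R$ must have $a \leq 0$ (otherwise it $\to -\infty$) — wait, here $a = -\tfrac12\omega(\xi,D\xi)$, so $a \leq 0$ gives precisely $\omega(\xi,D\xi) \geq 0$, the second claim; and if $a < 0$ the discriminant condition $b^2 - 4a\gamma \leq 0$ reads $\big(\langle i\widehat{D\xi}\rangle_\psi + \omega(\xi,D)\big)^2 \leq 4\cdot\tfrac12\omega(\xi,D\xi)\cdot\langle H\rangle_\psi$, i.e. \eqref{CSRaw}; while if $a = 0$ nonnegativity forces $b = 0$ and \eqref{CSRaw} holds trivially.

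The main obstacle I anticipate is the \emph{domain/regularity bookkeeping} needed to justify that $f(t)$ is genuinely a polynomial: one must check that $\psi \in \cH^\infty$ implies $t \mapsto e^{itA}\psi$ is a smooth curve into $\cH^\infty$ (true because $\rho(\exp(t\xi))$ preserves $\cH^\infty$ and $\dd\rho$ is defined there), that $H\psi_t$ is differentiable in $t$ with the claimed derivative, and crucially that the operator identities $[H,A] = B$ and $[A,B] = -c\one$ hold on a common core and propagate under the flow $e^{itA}$ — this is where one invokes that $\exp(t\xi)$ acts nicely on the central extension and that $\omega$ is $\R$-invariant (so the cocycle terms behave), rather than treating the operators purely formally. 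A clean way around the worst of it is to work on the dense subspace of vectors that are simultaneously smooth for $H$ and analytic/smooth for $\dd\rho(\xi)$, compute there, and then extend the resulting polynomial identity by continuity; alternatively one differentiates the manifestly well-defined scalar function $f(t) = \langle\psi, e^{-itA}He^{itA}\psi\rangle$ directly using the spectral calculus for $H$ and Stone's theorem for $e^{itA}$, which sidesteps unbounded-operator products. I would adopt the latter route in the write-up.
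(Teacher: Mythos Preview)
Your strategy is exactly the paper's: conjugate $H$ by the unitary one-parameter group $\rho(\exp(t\xi))$, use $[\xi,D\xi]=0$ to see that the expectation $\langle e^{-t\ad_{\dd\rho(\xi)}}H\rangle_\psi$ is a quadratic polynomial in $t$, and read off both conclusions from its nonnegativity. The paper carries this out by expanding $e^{-t\ad_\xi}D$ in $\hat\g$ to second order, obtaining $\langle H\rangle_\psi + t(\langle i\dd\rho(D\xi)\rangle_\psi+\omega(\xi,D))+\tfrac{t^2}{2}\omega(\xi,D\xi)\geq 0$.

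That said, your execution has several $i$/sign slips that should be corrected. From $[D,\xi]_{\hat\g}=D\xi+\omega(D,\xi)C$ one gets $[H,A]=i\omega(\xi,D)\one-\dd\rho(D\xi)=iB$ (not $B$), and $[A,B]=[i\dd\rho(\xi),i\dd\rho(D\xi)]=-i\,\omega(\xi,D\xi)\one$ (not $-\omega(\xi,D\xi)\one$). Redoing the differentiation with these corrections gives
\[
f(t)=\langle H\rangle_\psi - t\big(\langle i\dd\rho(D\xi)\rangle_\psi+\omega(\xi,D)\big)+\tfrac{t^2}{2}\,\omega(\xi,D\xi),
\]
which is the paper's polynomial with $t\mapsto -t$, consistent with your choice $e^{itA}=\exp(-t\dd\rho(\xi))$. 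Finally, a quadratic $at^2+bt+\gamma$ that is nonnegative on all of $\R$ forces $a\geq 0$, not $a\leq 0$; with the corrected leading coefficient $a=\tfrac12\omega(\xi,D\xi)$ this immediately gives $\omega(\xi,D\xi)\geq 0$, and then $b^2\leq 4a\gamma$ is exactly \eqref{CSRaw}. Your two families of sign errors happen to cancel in the final inequality, but the intermediate identities as written are incorrect.
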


\begin{proof}
Since $H = i\dd\rho(D)$ has non-negative spectrum, the expectation value of the
energy in the state defined by $\exp(t\dd\rho(\xi))\psi$ is non-negative for all $t\in \R$;
\begin{equation}\label{guichelheil}
0 \leq \langle H \rangle_{\exp(t\dd\rho(\xi))\psi} =
\langle e^{-t \ad_{\dd\rho(\xi)}}H \rangle_{\psi}\,.
\end{equation}
Since $[\xi,D\xi]= 0$, the exponential series terminates at order $2$,
\begin{eqnarray}\label{paardebloem}
\exp(-t \,\ad_{\dd\rho(\xi)})(H)
&=&
i \dd\rho(e^{-t\ad_\xi}D)\notag \\
&=& i \dd\rho\big(D + t D \xi - t \omega(\xi, D)C - {\textstyle\frac{t^2}{2}} \omega(\xi,D\xi)C\big) \notag\\
&=&H + t\big(i\dd \rho(D\xi) + \omega(\xi,D)\big) + {\textstyle\frac{t^2}{2}}
\omega(\xi,D\xi),
\end{eqnarray}
so that substitution in (\ref{guichelheil}) yields the inequality
\[
0 \leq \langle H \rangle_{\psi} + t\big(\langle i\dd \rho(D\xi)\rangle_{\psi} +
\omega(\xi,D)\big) + {\textstyle\frac{t^2}{2}} \omega(\xi,D\xi)
\quad \mbox{ for } \quad t\in\R.\]
The proposition now follows from the simple observation that
$a t^2 + b t + c \geq 0$ for all $t \in \R$ is equivalent to
$0 \leq a, c$ and $b^2 \leq 4ac$.
\end{proof}

The Cauchy--Schwarz estimate will play an important role in
the rest of the paper. We will use it mainly in situations where $\omega(D,\fg) = \{0\}$,
so that the bilinear form $(\xi,\eta) \mapsto \omega(\xi, D\eta)$ is symmetric.
This is the case for gauge algebras (cf.\ Remark~\ref{omooieconnectie}),
but also more generally for locally convex Lie algebras with an admissible derivation
in the sense of
\cite[Def.~9.1, Prop.~9.10]{JN19}.

In \S\ref{PEcocycles} we use 
Lemma~\ref{Lem:CSRaw} to show that
$(\xi,\eta) \mapsto \omega(\xi, D\eta)$ is a positive
semidefinite form on the gauge algebra $\fg$, and that
%
%
%
%
every cocycle coming from a positive energy representation
can be represented by a \emph{measure} (Theorem~\ref{MeasureThm}).
In \S\ref{sec:6}, we make extensive use of
the bound on the expectation value
$\langle  i\dd\rho(D\xi)\rangle_{\psi}$
in terms of the average energy $\langle H \rangle_{\psi}$ afforded by Lemma~\ref{Lem:CSRaw}.
In fact, we shall need such bounds also for Lie
algebra elements which are not in the image of~$D$.
The following refinement of the Cauchy--Schwarz estimate was designed for this purpose.

We start out with a proposition on Lie algebras which are \emph{Mackey complete},
in the sense that\index{Mackey completeness \vulop}
every smooth curve $\zeta \: [0,1] \to \g$ has a weak
integral $\int_0^1 \zeta(t)\, dt$ in $\g$.
For a Mackey complete Lie algebra $\fg$, the operator
$
\int_0^1 e^{s \ad_y}\, ds
$
on $\g$ is denoted
$\frac{e^{\ad_y} - \one}{\ad_y}$.

\begin{Proposition}\label{prop:centadjoint}
Let $\hat\g = \R \oplus_\omega \g$ be a central extension of a Mackey complete
Lie algebra $\g$ of the Lie group $G$ with exponential function $\exp$. Then
the adjoint action $\Ad^{\hat\g}$ of $G$ on $\hat\g$ satisfies
\[ \Ad^{\hat\g}_{\exp y}(z,x) = \Big(z + \omega\Big(y, 
\frac{e^{\ad_y} - \one}{\ad_y}(x)
\Big),
e^{\ad_y} x\Big).\]
\end{Proposition}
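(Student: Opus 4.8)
The statement is an explicit formula for the adjoint action of $G$ on the central extension $\hat\g = \R \oplus_\omega \g$, so the natural strategy is to differentiate in the group direction and solve an ODE on $\hat\g$. The plan is to fix $y \in \g$ and $(z,x) \in \hat\g$, set
\[
\gamma(t) := \Ad^{\hat\g}_{\exp(ty)}(z,x) \in \hat\g,
\]
and compute $\gamma'(t)$ using the fact that, for the one-parameter group $t \mapsto \exp(ty)$, the curve $t \mapsto \Ad^{\hat\g}_{\exp(ty)}$ is the one-parameter group of automorphisms of $\hat\g$ with infinitesimal generator $\ad^{\hat\g}_y$. Hence $\gamma'(t) = \ad^{\hat\g}_y(\gamma(t))$. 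Writing $\gamma(t) = (f(t), e^{t\,\ad_y}x)$ — the second component being forced since $\hat\g \to \g$ is an equivariant projection and $\Ad^{\g}_{\exp(ty)} = e^{t\,\ad_y}$ — I would read off from the bracket formula $[(w,u),(w',u')] = (\omega(u,u'),[u,u'])$ that
\[
\ad^{\hat\g}_y(f(t), e^{t\,\ad_y}x) = \big(\omega(y, e^{t\,\ad_y}x),\, \ad_y(e^{t\,\ad_y}x)\big),
\]
so that $f'(t) = \omega(y, e^{t\,\ad_y}x)$ with $f(0) = z$.

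Integrating, $f(t) = z + \int_0^t \omega(y, e^{s\,\ad_y}x)\,ds = z + \omega\big(y, \int_0^t e^{s\,\ad_y}x\,ds\big)$, where the last step uses continuity of $\omega$ to pull it out of the (weak) integral, and Mackey completeness of $\g$ is exactly what guarantees that $\int_0^t e^{s\,\ad_y}x\,ds$ exists in $\g$. Setting $t = 1$ gives $f(1) = z + \omega\big(y, \frac{e^{\ad_y}-\one}{\ad_y}(x)\big)$ and $e^{\ad_y}x$ for the second component, which is the claimed formula. The operator $\frac{e^{\ad_y}-\one}{\ad_y} = \int_0^1 e^{s\,\ad_y}\,ds$ is exactly the notation introduced just before the proposition.

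**Main obstacle.** The one genuinely delicate point is justifying the ODE $\gamma'(t) = \ad^{\hat\g}_y(\gamma(t))$ in the locally convex (infinite-dimensional) setting: one needs that $t \mapsto \Ad^{\hat\g}_{\exp(ty)}$ is a smooth curve of continuous automorphisms with the stated derivative, and that differentiation and the weak integral interact as expected. For this I would invoke the standard fact (valid for locally exponential, or more generally for arbitrary locally convex, Lie groups) that $\frac{d}{dt}\Ad_{\exp(ty)}\big|_{t=0} = \ad_y$ together with the chain rule $\frac{d}{dt}\Ad_{\exp(ty)} = \ad_y \circ \Ad_{\exp(ty)}$, applied to the Lie group $\hat G$ (a central extension of $G$, hence again a locally convex Lie group, whose Lie algebra is $\hat\g$). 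An alternative, slightly more hands-on route that avoids appealing to $\hat G$ directly: work purely Lie-algebraically, noting that $\ad^{\hat\g}_y$ is a continuous derivation of $\hat\g$ whose flow $e^{t\,\ad^{\hat\g}_y}$ is a one-parameter group of automorphisms, check by the same ODE computation that $e^{t\,\ad^{\hat\g}_y}(z,x)$ has the asserted form, and then identify $e^{\ad^{\hat\g}_y}$ with $\Ad^{\hat\g}_{\exp y}$ using naturality of the exponential. Either way the algebra is a two-line triangular computation; the only real work is the functional-analytic bookkeeping, which I would keep brief by citing the relevant lemmas from \cite{GN} and \cite{JN19} on smoothness of the adjoint action and on weak integrals in Mackey complete spaces.
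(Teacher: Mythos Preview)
Your proposal is correct and takes essentially the same approach as the paper: the paper's proof also sets up the ODE $\gamma'(t) = [(0,y),\gamma(t)]$ with $\gamma(0) = (z,x)$, writes $\gamma(t) = (\alpha(t), e^{t\ad_y}x)$, and reads off $\alpha'(t) = \omega(y, e^{t\ad_y}x)$. Your write-up is in fact more detailed than the paper's, which is terse and omits the functional-analytic justifications you spell out.
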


\begin{proof} This is verified by solving the ODE
\[ \gamma'(t) = [(0,y), \gamma(t)] \quad \mbox{ with } \quad
\gamma(0) = (z,x).\]
Writing $\gamma(t) = (\alpha(t),e^{t \ad_y}(x))$, it leads to
$\alpha'(t) = \omega(y, e^{t \ad_y}x)$.
\end{proof}


\begin{Lemma}\label{4point} {\rm(Refined Cauchy--Schwarz estimate)}
Let $\fg$ be a Mackey complete Lie algebra, and let
$\rho$ be a positive energy representation of $\widehat{G}$.
Let $\xi,\eta \in \fg$ be such that
$[\xi,D\xi] = 0$ and $[\eta,D\eta] = 0$.
Then, for all $s\in \R$, we have
\begin{eqnarray*}
&&\left(
\left\langle i \dd\rho\big(e^{-s\ad_\eta}D\xi\big)\right\rangle_{\psi}
+\omega(\xi,D)
+\omega\left(\frac{e^{-s\ad_{\eta}} -\mathbf{1}}
{\ad_{\eta}}(D\xi) , \eta \right)
\right)^2 \\
& \leq &
2\omega(\xi,D\xi)
\Big(
\langle H\rangle_{\psi}
+ s\big(
\langle i\dd\rho(D\eta)\rangle_{\psi}
+ \omega(\eta, D)\big) + {\textstyle\frac{s^2}{2}} \omega(\eta,D\eta)
\Big)\,.
\end{eqnarray*}
In particular, if $\omega(\xi,D) = 0$, $\omega(\eta,D) = 0$
and $\omega(\ad^{n}_{\dd\rho(\eta)}(D\xi),\eta) = 0$ for all
$n\geq 0$, then
\[
\left\langle i \dd\rho\big(e^{-s\ad_\eta}D\xi\big)\right\rangle^{2}_{\psi}
\leq 2\omega(\xi,D\xi)
\Big(
\langle H\rangle_{\psi}
+ s
\langle i\dd\rho(D\eta)\rangle_{\psi}
+  {\textstyle\frac{s^2}{2}}\omega(\eta, D\eta)\Big)\,.
\]
\end{Lemma}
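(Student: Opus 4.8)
The plan is to mimic the proof of Lemma~\ref{Lem:CSRaw}, but to first conjugate the Hamiltonian by the one-parameter group generated by $\dd\rho(\eta)$ before exploiting the termination of the exponential series in $\xi$. Concretely, for a smooth unit vector $\psi \in \cH^\infty$ and $s \in \R$, set $\psi_s := \exp(s\,\dd\rho(\eta))\psi$. Since $[\eta, D\eta] = 0$, the computation in \eqref{paardebloem} (applied to $\eta$ in place of $\xi$) gives the exact identity
\[
\langle H \rangle_{\psi_s}
= \langle H \rangle_{\psi}
+ s\big(\langle i\dd\rho(D\eta)\rangle_{\psi} + \omega(\eta,D)\big)
+ {\textstyle\frac{s^2}{2}}\,\omega(\eta,D\eta)\,,
\]
which is a non-negative quantity by positivity of the energy. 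This is the right-hand side of the claimed inequality, divided by $2\omega(\xi,D\xi)$.

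Next I would apply Lemma~\ref{Lem:CSRaw} \emph{verbatim}, but with $\psi$ replaced by $\psi_s$ and $\xi$ unchanged. The hypothesis $[\xi, D\xi] = 0$ is exactly what is needed. This yields
\[
\big(\langle i\dd\rho(D\xi)\rangle_{\psi_s} + \omega(\xi,D)\big)^2
\leq 2\,\omega(\xi,D\xi)\,\langle H\rangle_{\psi_s}\,,
\]
together with $\omega(\xi,D\xi) \geq 0$ (which we already have from the original lemma). It remains to re-express $\langle i\dd\rho(D\xi)\rangle_{\psi_s}$ in terms of $\psi$. Here the key step is to move the conjugation off the state and onto the operator: by Mackey completeness and Proposition~\ref{prop:centadjoint} applied to $\exp(s\eta)$, we have in $\hat\g$
\[
\Ad^{\hat\g}_{\exp(s\eta)}(0, D\xi)
= \Big(\omega\Big(s\eta, \tfrac{e^{s\ad_\eta}-\one}{\ad_\eta}(D\xi)\Big), e^{s\ad_\eta}(D\xi)\Big)\,,
\]
so replacing $s$ by $-s$ (or $\eta$ by $-\eta$) and using $\dd\rho(C) = i\one$ gives
\[
\langle i\dd\rho(D\xi)\rangle_{\psi_s}
= \big\langle i\dd\rho\!\big(\Ad^{\hat\g}_{\exp(-s\eta)}(0,D\xi)\big)\big\rangle_{\psi}
= \big\langle i\dd\rho\!\big(e^{-s\ad_\eta}D\xi\big)\big\rangle_{\psi}
+ \omega\!\Big(\tfrac{e^{-s\ad_\eta}-\one}{\ad_\eta}(D\xi),\, \eta\Big)\,,
\]
after rearranging the scalar term. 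Substituting this and the formula for $\langle H\rangle_{\psi_s}$ into the displayed Cauchy--Schwarz inequality gives the first assertion.

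For the second assertion, I would simply specialize: if $\omega(\xi,D) = 0$ and $\omega(\eta,D) = 0$, then the additive constants $\omega(\xi,D)$ and $\omega(\eta,D)$ drop out of both sides, and if moreover $\omega(\ad^n_{\dd\rho(\eta)}(D\xi),\eta) = 0$ for all $n \geq 0$, then expanding $\tfrac{e^{-s\ad_\eta}-\one}{\ad_\eta}$ as a power series in $\ad_\eta$ shows that the term $\omega\big(\tfrac{e^{-s\ad_\eta}-\one}{\ad_\eta}(D\xi),\eta\big)$ vanishes termwise; note that $\ad_\eta$ here is the bracket in $\g$, which agrees with $\ad_{\dd\rho(\eta)}$ up to the central direction, and the central direction pairs trivially under $\omega$, so the hypothesis as stated suffices. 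The remaining expression is exactly the claimed simplified inequality.

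The main obstacle I anticipate is bookkeeping rather than conceptual: carefully tracking the scalar (central) contributions produced by $\Ad^{\hat\g}$ and by the exponential series, and making sure the sign conventions (the $e^{-s\ad_\eta}$ versus $e^{s\ad_\eta}$, and the minus sign in $H = i\dd\rho(D)$ versus $U_t = e^{-itH}$) are consistent throughout. One should also check that $\psi_s = \exp(s\dd\rho(\eta))\psi$ is again a smooth vector (it is, since $\dd\rho(\eta)$ is essentially skew-adjoint and smooth vectors for a smooth representation are preserved by the group), so that all expectation values in sight are well defined. Everything else is a direct transcription of the proof of Lemma~\ref{Lem:CSRaw} with the auxiliary state $\psi_s$.
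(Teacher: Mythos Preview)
Your proof is correct and follows essentially the same approach as the paper: both arguments conjugate $H$ by $\exp(s\,\dd\rho(\eta))\exp(t\,\dd\rho(\xi))$, use \eqref{paardebloem} and Proposition~\ref{prop:centadjoint} to expand, and extract the discriminant inequality from positivity in $t$. The only difference is organizational---you apply Lemma~\ref{Lem:CSRaw} as a black box to the conjugated state $\psi_s$, whereas the paper re-expands $W_{s,t}^* H W_{s,t}$ directly as a quadratic polynomial $A_0(s) + A_1(s)t + A_2 t^2$ and reads off $\alpha_1(s)^2 \leq 4\alpha_2\alpha_0(s)$.
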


\begin{proof}
We write $W_{s,t} := \exp(t\dd\rho(\xi))\exp(s\dd\rho(\eta))$,
and exploit the fact that $H_{s,t} := W_{s,t}^{*} H W_{s,t}$ has non-negative spectrum.
Repeated use of (\ref{paardebloem}) on
\[H_{s,t} = \exp(-s\ad_{\dd\rho(\eta)})\left(
\exp(-t\ad_{\dd\rho(\xi)})(H)\right)\]
yields
\[
H_{s,t} = A_0(s) + A_1(s)t + A_2 t^2
\]
with
\begin{eqnarray*}
A_0(s) &=& H + s(i\dd\rho(D\eta) + \omega(\eta,D)\mathbf{1})
+ {\textstyle\frac{s^2}{2}} \omega(\eta, D\eta)\mathbf{1}\,,\\
A_1(s)& = & \omega(\xi,D)\mathbf{1} +
\exp(-s\ad_{\dd\rho(\eta)})(i\dd\rho(D\xi))\,,\\
A_2 & = &  {\textstyle\frac{1}{2}} \omega(\xi,D\xi)\mathbf{1}\,.
\end{eqnarray*}
With the preceding proposition, we obtain
for $\exp(-s\ad_{\dd\rho(\eta)})(i\dd\rho(D\xi))$ the expression
\[
i\dd\rho\big(e^{-s\ad_{\eta}}D\xi\big)
+\omega\left(\frac{e^{-s\ad_{\eta}} -\mathbf{1}}
{\ad_{\eta}}(D\xi), \eta\right)\one\,,
\]
and thus
\[
A_1(s) = \omega(\xi,D)\mathbf{1} +
i\dd\rho\big(e^{-s\ad_\eta}(D\xi)\big)
+\omega\left(\frac{e^{-s\ad_{\eta}} -\mathbf{1}}
{\ad_{\eta}}(D\xi),\eta\right)\mathbf{1}\,.
\]
Consider the expectation value $\langle H_{s,t} \rangle_{\psi}\geq 0$.
Setting
\[ \alpha_0(s) := \langle A_0(s) \rangle_{\psi},\quad
\alpha_1(s) := \langle A_1(s) \rangle_{\psi} \quad \mbox{ and } \quad
\alpha_2 := \langle A_2 \rangle_{\psi},\] we observe that
\[\langle H_{s,t} \rangle_{\psi} = \alpha_0(s) + \alpha_1(s)t + \alpha_2 t^2
\]
is a non-negative polynomial in $t$ of degree at most 2.
From this, we obtain the inequality
$\alpha_1(s)^2\leq 4 \alpha_2 \alpha_0(s)$.
This is the first inequality mentioned above, the second one
is a direct consequence.
\end{proof}


\section{Covariant extensions of gauge algebras}
\label{sec:4}

The results in the preceding section concerned the general
level of Lie groups of the form $G \rtimes_\alpha \R$. Now we turn
to the specifics of gauge groups.
After introducing
gauge groups and their Lie algebras in
\S\ref{SectionGGA}, we describe in
\S\ref{vanseminaarsimpel} a procedure that provides a
reduction from semisimple to simple structure Lie algebras,
at the expense of replacing $M$ by a finite covering manifold~$\hat M$.
In \S\ref{GySsCoc}, we recall the classification \cite{JN17}
of 2-cocycles for the extended gauge algebra $\fg \rtimes_{D} \R$.


\subsection{Gauge groups and gauge algebras }
\label{SectionGGA}

Let $\cK \rightarrow M$\index{bundle!10@of Lie groups \scheiding \cK} be a smooth bundle of Lie groups,
and let
$\fK \rightarrow M$ be the corresponding Lie algebra bundle\index{bundle!20@of Lie algebras \scheiding $\fK$}
with fibers $\fK_{x} = \mathrm{Lie}(\cK_{x})$.
If $M$ is connected, then the fibers $\cK_{x}$ of $\cK\rightarrow M$
are all isomorphic to a fixed structure group $K$,
and the fibers $\fK_{x}$ of $\fK$ are isomorphic to its Lie algebra $\fk = \mathrm{Lie}(K)$.

\begin{Definition}\label{def:gaugegroup} (Gauge group)
The \emph{gauge group}
\index{gauge group!20@compact support \scheiding $\Gamma_{c}(M, \cK)$}
is the group $\Gamma(M, \cK)$ of smooth sections
of $\cK \rightarrow M$, and
the \emph{compactly supported gauge group}
is the group $\Gamma_{c}(M,\cK)$ of smooth compactly supported sections.
\end{Definition}

\begin{Definition}(Gauge algebra)
The \emph{gauge algebra} \index{gauge algebra!10@smooth \scheiding $\Gamma(M, \fK)$}%
\index{gauge algebra!20@compact support \scheiding $\Gamma_{c}(M, \fK)$}
is the Fr\'echet--Lie algebra $\Gamma(M,\fK)$
of smooth sections of $\fK \rightarrow M$, equipped with the pointwise Lie bracket.
The \emph{compactly supported gauge algebra} $\Gamma_{c}(M,\fK)$
is the LF-Lie algebra of smooth compactly supported sections.
\end{Definition}

The compactly supported gauge group $\Gamma_{c}(M,\cK)$ is a locally convex Lie group,
whose Lie algebra is the compactly supported gauge algebra $\Gamma_{c}(M,\fK)$.
It is locally exponential, with
$\exp \colon \Gamma_{c}(M,\fK) \rightarrow \Gamma_{c}(M,\cK)$ given by
pointwise exponentiation \cite[Prop.~2.3]{JN17}.

\begin{Definition} In the following we write
$\tilde\Gamma_c(M,\cK)_0$ for the simply connected covering group of
the identity component $\Gamma_c(M,\cK)_0$
and
\[ q_\Gamma \:
\tilde\Gamma_c(M,\cK)_0 \to
\Gamma_c(M,\cK)_0 \]
for the covering map. Then $\tilde\Gamma_c(M,\cK)_0$ has
the same Lie algebra $\Gamma_c(M,\fK)$ as the gauge group $\Gamma_c(M,\cK)$, and its exponential
function $\Exp$ satisfies $q_\Gamma \circ \Exp = \exp$.
\end{Definition}

\subsubsection{Gauge groups from principal fiber bundles}

The motivating example of a gauge group is of course the group $\Gau(\Xi)$ of vertical automorphisms
of a principal $K$-bundle\index{bundle!40@principal \scheiding $\Xi$} $\pi \colon \Xi \rightarrow M$.

\begin{Definition}
A \emph{vertical automorphism} \index{vertical automorphism \vulop}
of a principal fiber bundle $\pi \colon \Xi \rightarrow M$
is a $K$-equivariant diffeomorphism ${\alpha \colon \Xi \rightarrow \Xi}$ such that
$\pi \circ \alpha = \pi$.
The group $\Gau(\Xi)$ of vertical automorphisms is called the
\emph{gauge group} of $\Xi$.
\end{Definition}

In order to interpret $\Gau(\Xi)$ as a gauge group in the sense of
Definition~\ref{def:gaugegroup}, define the bundle of groups
$\Ad(\Xi) \rightarrow M$\index{bundle!50@adjoint (groups) \scheiding $\Ad(\Xi)$} with typical fiber $K$ by
\[\Ad(\Xi) := {\Xi \times K/\sim}\,,\]
where the relation $\sim$ is given by
$(pk,h) \sim (p, khk^{-1})$ for $p \in \Xi$ and  $k, h \in K$.
We obtain an isomorphism
\[
\Gau(\Xi) \simeq \Gamma(M, \Ad(\Xi))
\]
by mapping the section $\sigma \in \Gamma(M, \Ad(\Xi))$ to the corresponding
vertical automorphism $\alpha_{\sigma} \in \mathrm{Gau}(\Xi)$, defined by
$
	\alpha_{\sigma}(p) = p \cdot k
$
if $\sigma(\pi(p))$ is the class of $(p,k)$ in $\Ad(\Xi) = \Xi\times K/\sim$.

The bundle of Lie algebras associated to $\Xi$ is
the \emph{adjoint bundle} ${\mathrm{ad}(\Xi) \rightarrow M}$,
defined as the quotient \index{bundle!60@adjoint (Lie algebra) \scheiding $\ad(\Xi)$}
\[\ad(\Xi) := {\Xi\times_{\Ad}\fk}\]
of $\Xi \times \fk$ modulo the relation $(pk,X) \sim (p, \mathrm{Ad}_{k}(X))$ for
$p \in \Xi$, $X \in \fk$ and $k \in K$.
Here $\mathrm{Ad}_{k} \in \mathrm{Aut}(\fk)$ is the Lie algebra automorphism
induced by the group automorphism $h \mapsto khk^{-1}$.

The \emph{compactly supported gauge group}
$\mathrm{Gau}_{c}(\Xi)\subseteq \Gau(\Xi)$
is the group of vertical bundle automorphisms of $\Xi$
that are trivial outside the preimage
of some compact subset of $M$.
Since it is isomorphic to $\Gamma_{c}(M,\Ad(\Xi))$, it is a locally convex Lie group
with Lie algebra $\gau_{c}(\Xi) = \Gamma_{c}(M,\ad(\Xi))$.


\begin{Remark}\label{Rk:Tussenijk}
In applications to gauge theory on non-compact manifolds $M$, the relevant group $\cG$ of gauge transformations
may be smaller than $\mathrm{Gau}(\Xi)$ due to boundary conditions at infinity.
One expects $\cG$ to contain at least
$\mathrm{Gau}_{c}(\Xi)$, or perhaps even some larger Lie group of gauge transformations
specified by a decay condition at infinity
(cf.\ \cite{Wa10, Go04}). In Part II of this series of papers, we will focus on the case where $M = \R^{d}$
is Minkowski space, and $\cG \subset \Gamma(\R^d, \Ad(\Xi))$ is the group of gauge transformations
that extend continuously to the conformal completion of Minkowski space. If the extension of $\Xi$ to the conformal completion is trivial, then $\cG$ contains global as well as compactly supported gauge transformations.
\end{Remark}

\subsubsection{Gauge groups and space-time symmetries}

An \emph{automorphism} of
$\pi \colon \cK \rightarrow M$ is a pair $(\gamma, \gamma_{M}) \in \Diff(\cK)\times \Diff(M)$
with $\pi \circ \gamma = \gamma_{M}\circ \pi$, such that for each fiber $\cK_{x}$, the map
$\gamma|_{\cK_{x}} \colon \cK_{x} \rightarrow \cK_{\gamma_{M}(x)}$ is a group homomorphism.
Since $\gamma_{M}$ is determined by $\gamma$, we will omit it from the notation.
We denote the group of automorphisms of $\cK$ 
by $\Aut(\cK)$. 

\begin{Definition}(Geometric $\R$-actions)\label{def:gemetrictype}
\index{geometric $\R$-action \scheiding $\alpha$}
In the context of gauge groups, we will be interested in $\R$-actions
$\alpha \colon \R \rightarrow \Aut(\Gamma(M,\cK))$ which are of
\emph{geometric} type,
i.e., derived from a $1$-parameter group
$\gamma \colon \R \rightarrow \mathrm{Aut}(\cK)$
by
\begin{equation}\label{eq:DefinitieVanAlphat}
\alpha_{t}(\sigma) := \gamma_{-t} \circ \sigma \circ \gamma_{M, t}.
\end{equation}
The $\R$-action on $\Gamma(M,\cK)$ preserves the subgroup
$\Gamma_c(M,\cK)_0$ on which it defines a smooth action. Moreover,
it lifts to a smooth action on the simply connected covering group
$\tilde\Gamma_c(M,\cK)_0$ (cf.\ \cite[Thm.~VI.3]{MN03}).
\end{Definition}

\begin{Remark}
If $\cK$ is of the form $\Ad(\Xi)$ for a principal fiber bundle $\Xi\rightarrow M$,
then a 1-parameter group of automorphisms of $\Xi$ induces a 1-parameter
group of automorphisms of $\cK$.
\end{Remark}

The 1-parameter group $\alpha \colon \R \rightarrow \Aut(\Gamma(M,\cK))$ of group automorphisms
differentiates to a 1-parameter group $\alpha^\g \colon \R \rightarrow \Aut(\Gamma(M,\fK))$
of Lie algebra automorphisms given by
\begin{equation}\label{LAaut1par}
\alpha^\g_{t}(\xi) = \frac{d}{d\eps}\Big|_{\eps = 0} \gamma_{-t}\circ e^{\varepsilon \xi}
\circ \gamma_{M,t}\,.
\end{equation}
The corresponding derivation $D := \frac{d}{dt}\big\vert_{t= 0} \alpha^\g_{t}$ of $\Gamma(M,\fK)$ can be described
in terms of the infinitesimal generator of $\gamma$,
\begin{equation}
  \label{eq:def-bv}
\bv := \frac{d}{dt}\Big\vert_{t = 0} \gamma_{-t} \in \cV(\cK).
\end{equation}
We identify $\xi \in \Gamma(M,\fK)$ with the vertical, fiberwise
left invariant vector field $\Xi_{\xi} \in \cV(\cK)$
defined by $\Xi_{\xi}(k_{x}) = \frac{d}{d\eps}\big|_{\eps = 0} k_{x}e^{\varepsilon \xi(x)}$.
Using the equality $[\bv, \Xi_{\xi}] = \Xi_{D(\xi)}$, we write\index{geometric derivation \scheiding $D$}
\begin{equation}
D (\xi) = L_{\bv}\xi\,.
\end{equation}
For $\fg = \Gamma_{c}(M,\fK)$, the Lie algebra $\fg \rtimes_{D}\R$ then has bracket
\begin{equation}\label{eq:smurfenliedje}
[\xi \oplus t, \xi' \oplus t'] = \Big( [\xi,\xi'] + (tL_{\bv}\xi' - t' L_{\bv}\xi)\Big) \oplus 0\,.
\end{equation}

\begin{Remark}\label{remarkalgebroids}
Alternatively, we can consider $\gamma \colon \R \rightarrow \Aut(\cK)$
as a smooth 1-parameter group of bisections of the gauge groupoid
$\mathcal{G}(\cK) \rightrightarrows M$,
the Lie groupoid whose objects are points $x,y\in M$, and whose morphisms
are Lie group isomorphisms $\cK_{x} \rightarrow \cK_{y}$.
%
It gives rise to a smooth 1-parameter family $\dot{\gamma}$ of bisections of the Lie groupoid
$\mathcal{G}(\fK) \rightrightarrows M$, whose morphisms from $x$ to $y$ are Lie algebra
isomorphisms $\fK_{x} \rightarrow \fK_{y}$.
Its generator
$\bv = -\frac{d}{dt}\big|_{t=0} \dot{\gamma}$
is thus a section of its Lie algebroid $\mathfrak{a}(\fK) \rightarrow M$,
called the \emph{Atiyah algebroid}.
A section $\xi \in \Gamma(M,\fK)$ can be considered as an element of
$\Gamma(M,\mathfrak{der}(\fK)) \subseteq \Gamma(M,\mathfrak{a}(\fK))$, and we
interpret $L_{\bv}\xi$ as the commutator $[\bv,\xi]$ in $\Gamma(M,\mathfrak{a}(\fK))$.
We will need this picture in \S\ref{vanseminaarsimpel}, where
the bundle of Lie groups is not available.
\end{Remark}

\subsection{Reduction to simple structure algebras}
\label{vanseminaarsimpel}

In this paper, we consider
gauge algebras with a \emph{semisimple} structure algebra~$\fk$.
The following theorem shows that, without further loss of generality,
we may restrict attention to the case where $\fk$ is \emph{simple}.

\begin{Theorem} \label{reductienaarsimpel}
{\rm(Reduction from semisimple to simple structure algebras)}
If $\fK \rightarrow M$ is a smooth locally trivial bundle of Lie algebras
with semisimple fibers,
then there exists a
finite cover $\widehat{M} \rightarrow M$ and a smooth locally trivial bundle of Lie
algebras $\widehat{\fK} \rightarrow \widehat{M}$ with simple fibers
such that there exist isomorphisms
$\Gamma(M,\fK) \simeq \Gamma(\widehat{M},\widehat{\fK})$ and
$\Gamma_{c}(M,\fK) \simeq \Gamma_{c}(\widehat{M},\widehat{\fK})$
of locally convex Lie algebras.
\end{Theorem}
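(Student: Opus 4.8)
The plan is to reduce the statement to a purely local question about the fibers, and then to use a monodromy/covering-space argument to globalize. First, recall that a semisimple Lie algebra $\fl$ decomposes uniquely as a direct sum $\fl = \fl_1 \oplus \cdots \oplus \fl_r$ of simple ideals, and that any Lie algebra automorphism of $\fl$ permutes these simple summands. Applying this fiberwise to $\fK \to M$: over a connected base $M$, the typical fiber $\fk$ has a fixed number $r$ of simple summands, and the structure group of the bundle $\fK$ (a subgroup of $\Aut(\fk)$) acts on the set of these summands by permutations. This yields a continuous homomorphism from $\pi_1(M)$ to the symmetric group $S_r$ — equivalently, a locally constant sheaf on $M$ whose stalks are the $r$-element sets of simple ideals. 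Concretely, I would construct an auxiliary covering $\widehat M \to M$ as the total space of the bundle over $M$ whose fiber over $x$ is the (finite) set of simple ideals of $\fK_x$; this is a covering of degree $r$ (not necessarily connected), and it is finite because $r$ is finite.

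Over $\widehat M$ there is a tautological bundle of simple Lie algebras: the fiber of $\widehat\fK \to \widehat M$ over a point $(x,\fi)$ — where $\fi \trianglelefteq \fK_x$ is one of the simple ideals — is simply $\fi$ itself. Local triviality of $\widehat\fK$ follows from local triviality of $\fK$ together with the fact that, over a trivializing neighborhood $U$ of $\fK$, the covering $\widehat M|_U \to U$ is trivial (a disjoint union of $r$ copies of $U$) and $\widehat\fK$ restricts to the constant bundles $U \times \fk_j$ on each sheet. The fibers of $\widehat\fK$ are simple by construction. This handles the geometric side of the claim.

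It remains to produce the algebra isomorphisms. The natural map goes $\Gamma(M,\fK) \to \Gamma(\widehat M, \widehat\fK)$: given a section $\xi$ of $\fK$, define $\widehat\xi(x,\fi)$ to be the component of $\xi(x)$ in the simple ideal $\fi$, i.e.\ $\widehat\xi(x,\fi) := \pr_{\fi}(\xi(x))$. I would check that $\widehat\xi$ is a smooth section of $\widehat\fK$ (smoothness is local, hence clear from a trivialization as above), that $\xi \mapsto \widehat\xi$ is a Lie algebra homomorphism (the bracket on $\fK_x$ respects the ideal decomposition, so $\pr_{\fi}[\xi,\eta] = [\pr_{\fi}\xi, \pr_{\fi}\eta]$), and that it is a topological isomorphism — the inverse sends a section $\zeta$ of $\widehat\fK$ to the section $x \mapsto \sum_{\fi \trianglelefteq \fK_x} \zeta(x,\fi) \in \fK_x$, which makes sense because the sum is finite and over a trivializing chart it is just the reassembly of the $r$ component sections. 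Both maps are continuous for the Fréchet topologies (they are given chart-wise by continuous linear maps on the model spaces), so this is an isomorphism of locally convex Lie algebras. Exactly the same formulas, with the support condition preserved on the nose since $\supp\widehat\xi$ sits inside the preimage of $\supp\xi$ under the proper map $\widehat M \to M$, give the compactly supported statement $\Gamma_c(M,\fK) \simeq \Gamma_c(\widehat M, \widehat\fK)$.

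The main obstacle I anticipate is not any single step but the bookkeeping needed to make the construction of $\widehat M$ and $\widehat\fK$ canonical and manifestly smooth: one must verify that the "bundle of sets of simple ideals" is genuinely a covering manifold (a covering space with a smooth structure, which it inherits from $M$ via the local triviality of $\fK$), and that the tautological bundle $\widehat\fK$ is a bona fide smooth vector bundle with a smooth Lie bracket — this requires checking compatibility of the local trivializations of $\fK$ on overlaps, where transition functions valued in $\Aut(\fk)$ must be shown to induce, sheet by sheet, well-defined transition functions valued in the isomorphism groupoid of simple Lie algebras. Once the geometric objects are in place, the algebraic isomorphism is essentially formal. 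One should also note, as the excerpt hints in Remark~\ref{remarkalgebroids}, that when $\fK$ arises without an ambient bundle of Lie \emph{groups} the construction is best phrased using the Atiyah algebroid $\mathfrak a(\fK)$, but this is a cosmetic change.
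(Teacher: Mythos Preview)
Your proposal is correct and follows essentially the same approach as the paper's (which cites \cite{JN17} for the full proof but sketches the construction): both build $\widehat M$ as the covering whose fiber over $x$ is the finite set indexing the simple summands of $\fK_x$, and both take $\widehat\fK$ to be the tautological bundle of simple pieces. The only cosmetic difference is that the paper parametrizes the fiber of $\widehat M$ over $x$ by $\Spec(\fK_x)$, the set of maximal ideals, and sets the fiber of $\widehat\fK$ over $I_x$ to be the quotient $\fK_x/I_x$, whereas you parametrize by the simple ideals $\fi$ themselves and take the fiber of $\widehat\fK$ to be $\fi$; since maximal ideals and simple ideals are in natural bijection (each maximal ideal being the sum of all simple ideals but one) and $\fK_x/I_x \cong \fi$, the two constructions are canonically identified.
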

This is proven in \cite[Thm.\ 3.1]{JN17}. In brief, one uses local trivializations
of $\fK \rightarrow M$ to give a manifold structure to
\[
\widehat{M} := \bigcup_{x\in M} \mathrm{Spec}(\fK_{x})\,,
\]
where $\mathrm{Spec}(\fK_{x})$ is the finite
set  of maximal ideals $I_x  \subset \fK_{x}$.
The bundle of Lie algebras 
is then defined by\index{bundle!30@of simple Lie algebras \scheiding $\widehat{\fK}$}
\[
\widehat{\fK} := \bigcup_{I_x\in \widehat{M}} \fK_{x}/I_{x}\,,
\]
and one shows that the natural projection $\pi \colon \widehat{\fK} \rightarrow \widehat{M}$
is a locally trivial vector bundle.
Note that the finite cover $\widehat{M} \rightarrow M$ is not necessarily connected, and
that the isomorphism classes of the fibers of $\widehat{\fK} \rightarrow \widehat{M}$ are not
necessarily the same over different connected components of $\widehat{M}$.

\begin{Remark}\label{remark:overdederivatie}
Since a smooth 1-parameter family of automorphisms of $\fK \rightarrow M$ acts naturally
on the maximal ideals, we obtain a smooth action on the Lie algebra bundle
$\widehat{\fK} \rightarrow \widehat{M}$.
We denote  the corresponding section of the Atiyah algebroid
$\mathfrak{a}(\hfK) \rightarrow \widehat{M}$
by $\widehat{\bv} \in \Gamma(\widehat{M},\mathfrak{a}(\hfK))$, and we denote the corresponding
vector field on $\widehat{M}$
by $\bv_{\hat M} := \pi_*\widehat{\bv}$.
Since $\widehat{\fK}$ has simple fibers, the Atiyah algebroid $\mathfrak{a}(\hfK)$ fits in the exact sequence
\begin{equation}
\hfK \rightarrow \mathfrak{a}(\hfK) \rightarrow T\widehat{M}\,,
\end{equation}
where the first map is given by the pointwise adjoint action, and the second by the anchor.
Note that
the action on $\widehat{M}$ is locally free or periodic if and only if the action on $M$ is.
In that case, the period on $\widehat{M}$ is a multiple of the period on $M$.

\end{Remark}

In many situations, the connected components of $\widehat{M}$ are diffeomorphic to~$M$.
However, nontrivial covers $\widehat{M} \rightarrow M$ do occur naturally, for example in
connection to non-orientable 4-manifolds.

\begin{Example}\label{ex:simple}
If the fibers of $\fK \rightarrow M$ are simple, then $\widehat{M} = M$.
\end{Example}
\begin{Example}\label{ex:trivial}
If $\fK = M \times \fk$ is trivial, then $\hat M = M \times \Spec(\fk)$ and all connected
components of $\hat M$ are diffeomorphic to~$M$.
\end{Example}

\begin{Example}\label{ex:different}
Suppose that $M$ is connected, and that the typical fiber $\fk$ of $\fK \rightarrow M$
is a semisimple Lie algebra with $r$ simple ideals
that are mutually non-isomorphic. Then
$\widehat{M} = \bigsqcup_{i=1}^{r} M$ is a disjoint union of copies of~$M$.
\end{Example}

\begin{Example}(Frame bundles of 4-manifolds)\label{Ex:MhatOrientable} 
Let $M$ be a 4-dimensional Riemannian manifold.
Let $\Xi := {\rm OF}(M)$ be the principal $\OO(4,\R)$-bundle of orthogonal frames, and let
$\fK = \mathrm{ad}(\Xi)$.
Then $K = \OO(4,\R)$ and $\fk = \mathfrak{so}(4,\R)$ is isomorphic to
$\mathfrak{su}_L(2,\C) \oplus \mathfrak{su}_R(2,\C)$.
The group $\pi_0(K)$ is of order 2, the non-trivial element acting by
conjugation with $T = \mathrm{diag}(-1,1,1,1)$.
Since this permutes the two simple ideals,
the manifold $\widehat{M}$ is the orientable double cover of $M$. This is
the disjoint union $\widehat{M} = M_{L} \sqcup M_{R}$ of two copies of $M$
if $M$ is orientable, and a connected twofold cover $\widehat{M} \rightarrow M$ if it is not.
\end{Example}

\subsection{Central extensions of gauge algebras}
\label{GySsCoc}

Let $\fg$ be
the compactly supported gauge algebra $\Gamma_{c}(M,\fK)$, where $\fK \rightarrow M$ is
a Lie algebra bundle with simple fibers.
In this section, we classify all possible central extensions of
$\fg \rtimes_{D} \R$. This amounts to
calculating the continuous second Lie algebra
cohomology $H^2(\fg \rtimes_{D} \R,\R)$ with trivial coefficients.
In \S\ref{PEcocycles}, we will characterize those cocycles coming
from a positive energy representation.

\subsubsection{Universal invariant symmetric bilinear forms}\label{invbil}

Let $\fk$ be a finite dimensional, simple real Lie algebra.
Then its automorphism group $\Aut(\fk)$ is a closed subgroup
of $\mathrm{GL}(\fk)$, hence a Lie group with Lie algebra $\der(\fk) \simeq \fk$.
Since $\fk$ acts trivially on the space\index{co-invariants \scheiding $V(\fk)$}
\[V(\fk) := S^2(\fk)/{(\fk \cdot S^2(\fk))}\,\]
of $\fk$-coinvariants of the twofold symmetric tensor power $S^2(\fk)$,
the $\Aut(\fk)$-representation on $V(\fk)$ factors through $\pi_0(\Aut(\fk))$.
The \emph{universal $\fk$-invariant symmetric bilinear form}
is defined by \index{invariant bilinear form!universal \scheiding $\kappa$}
\[ \kappa \colon \fk \times \fk \rightarrow V(\fk), \quad
\kappa(x,y) := [x\otimes_{s}y] = \frac{1}{2}[x \otimes y + y \otimes x].\]
We associate to
$\lambda \in V(\fk)^*$
the $\R$-valued, $\der(\fk)$-invariant, symmetric, bilinear form
$\kappa_{\lambda} := \lambda \circ \kappa$.
This correspondence
is a bijection between $V(\fk)^*$ and the space of $\der(\fk)$-invariant symmetric bilinear forms
on $\fk$.

Since $\fk$ is simple, we have $V(\fk) \simeq \C$ if $\fk$
admits a complex structure, and $V(\fk) \simeq \R$ if it does not
(cf.~\cite[App.~B]{NW09}).
In the latter case, $\fk$ is called
{\it absolutely simple}. \index{Lie algebra!20@absolutely simple \vulop}
The universal invariant symmetric bilinear form can be identified with
the Killing form of the real Lie algebra $\fk$ if $V(\fk) \simeq \R$
and with the Killing form of the underlying complex Lie algebra if $V(\fk) \simeq \C$.
In particular, in the important special case that $\fk$ is a compact simple Lie algebra,
a universal invariant bilinear form $\kappa \colon \fk \times \fk \rightarrow V(\fk)$
is the negative definite Killing form given by
$\tr(\ad x \ad y)$. However, in the following,
we shall always use the normalized invariant positive definite
symmetric bilinear form $\kappa$ that satisfies
\begin{equation}
  \label{eq:kappa-normal}
\kappa(i\alpha^\vee, i\alpha^\vee)=2
\end{equation}
for the coroots $\alpha^\vee$
corresponding to long roots in the root decomposition of $\fk_\C$
(cf.\ \cite{PS86, Ne01} and Appendix~\ref{app:1}).
\index{invariant bilinear form!normalized \scheiding $\kappa$}

\subsubsection{The flat bundle $\bV = V(\fK)$} \label{flatbdl}

If $\fK \rightarrow M$ is a bundle of Lie algebras with simple fibers, then we denote by
\index{bundle!70@of co-invariants \scheiding $\bV$}$\bV \rightarrow M$
the vector bundle with fibers $\bV_{x} = V(\fK_{x})$.
It carries a canonical flat
connection $\dd$, defined by 
\[ \dd\kappa(\xi,\eta) := \kappa(d_{\nabla}\xi,\eta) + \kappa(\xi,d_{\nabla}\eta)
\quad \mbox{ for } \quad \xi, \eta \in \Gamma(M,\fK),\]
where $\nabla$ is a \emph{Lie connection} \index{Lie connection \scheiding $\nabla$}
on $\fK$, meaning that
$d_{\nabla} [\xi,\eta] = [d_{\nabla} \xi , \eta] + [\xi, d_{\nabla} \eta]$ for all $\xi,\eta \in \Gamma(M,\fK)$.
Since the fibers are assumed to be simple,
any two Lie connections differ by a $\fK$-valued 1-form, so that the preceding
definition is independent of the choice of $\nabla$ (cf.\ \cite{JW13}).


Let $\fk_i$ be the fiber of $\fK$ over a connected component $M_i$ of $M$.
If $\fk_i$ is absolutely simple (hence in particular when $\fk$ is compact),
we have $V(\fk_i) \simeq \R$, and $\pi_0(\mathrm{Aut}(\fk))$ acts trivially on $V(\fk_i)$.
In this case, $\bV \rightarrow M_i$ is simply the trivial line bundle $M_i \times \R \rightarrow M_i$.

If $\fk_i$ possesses a complex structure, then $V(\fk_i) \simeq \C$, and
$\alpha \in \Aut(\fk_i)$ flips
the complex structure on $\C$ if and only if it flips the complex structure on $\fk$.
In this case, $\bV \rightarrow M_i$ is a vector bundle of real rank 2.

\begin{Remark}
In the context of positive energy representations,
we will see in Theorem~\ref{red2cpt} below
that $\fk$ must be compact, so that $\bV \rightarrow M$ is the trivial real line bundle.
Although we need to consider the \emph{a priori} possibility of nontrivial bundles, then,
it will become clear in the course of our analysis that they will not
give rise to positive energy representations.
\end{Remark}

\subsubsection{Classification of central extensions}\label{subsec:2cocyc}
We define 2-cocycles\index{cocycle \scheiding $\omega_{\lambda,\nabla}$} $\omega_{\lambda,\!\nabla}$ on $\fg \rtimes_{D} \R$ whose
classes span the cohomology group $H^2(\fg \rtimes_{D} \R,\R)$.
They depend on a \emph{$\bV$-valued
$1$-current}\index{current \scheiding $\lambda$} $\lambda \in \Omega^1_{c}(M,\bV)'$, and on a
\emph{Lie connection} $\nabla$ on $\fK$.
A $1$-current $\lambda \in \Omega^1_{c}(M,\bV)'$ is said to be
\begin{itemize}
\item[\rm(L1)]  {\it closed}
if $\lambda(\dd C^\infty_c(M,\bV)) = 0$, and
\item[\rm(L2)]  {\it $\bv_M$-invariant} if
$\lambda(L_{\bv_M}\Omega^1_{c}(M,\bV)) = \{0\}$.
\end{itemize}
Given a closed, $\bv_M$-invariant current $\lambda \in \Omega^1_{c}(M,\bV)'$,
we define the
2-cocycle $\omega_{\lambda,\!\nabla}$ on $\g \rtimes_{D}\R$
by skew-symmetry and the equations
\begin{eqnarray}
\omega_{\lambda,\!\nabla}(\xi,\eta) &=& \lambda(\kappa(\xi, d_{\nabla}\eta)),\label{cdef1}\\
\omega_{\lambda,\!\nabla}(D,\xi) &=& \lambda(\kappa(L_{\bv}\nabla,\xi))\,, \label{cdef2}
\end{eqnarray}
where we write $\xi$ for $(\xi,0) \in \fg \rtimes_{D} \R$ and $D$
for $(0,1) \in \fg \rtimes_{D} \R$ as in \eqref{eq:d-elt}.
We define the $\der(\fK)$-valued 1-form
$L_{\bv}\nabla \in \Omega^1(M,\der(\fK))$ by
\begin{equation} \label{eq:defcurv}
(L_{\bv}\nabla)_{w}(\xi) = L_{\bv} (d_{\nabla}\xi)_{w} - \nabla_{w}L_{\bv}\xi =
L_{\bv}(\nabla_{w}\xi) - \nabla_{w}L_{\bv}\xi - \nabla_{[\bv_M,w]} \xi
\end{equation}
for all $w \in \cV(M)$, $\xi \in \Gamma(M,\fK)$. Since the fibers of $\fK \rightarrow M$ are simple,
all derivations are inner, so we can identify $L_{\bv}\nabla$
with an element of $\Omega^1(M,\fK)$.
Using the formul\ae{}
\begin{eqnarray}
\dd\kappa(\xi,\eta) &=& \kappa(d_{\nabla} \xi, \eta) + \kappa(\xi, d_{\nabla} \eta),\label{fijneformule1} \\
L_{\bv_M}\kappa(\xi,\eta) &=&
\kappa(L_\bv\xi,\eta) + \kappa(\xi, L_\bv\eta),\label{fijneformule2}\\
{}L_\bv(d_{\nabla} \xi) - d_{\nabla} L_\bv \xi &=& [L_{\bv}\nabla,\xi],\label{fijneformule3}
\end{eqnarray}
it is not difficult to check that $\omega_{\lambda,\!\nabla}$ is a cocycle.
Skew-symmetry follows from (\ref{fijneformule1}) and (L1).
The vanishing of
$\delta\omega_{\lambda,\!\nabla}$ on $\fg$ follows from
(\ref{fijneformule1}), the derivation property of $\nabla$ and
invariance of $\kappa$. Finally, $i_{D}\delta\omega_{\lambda,\!\nabla} = 0$
follows from skew-symmetry, (\ref{fijneformule3}),
(\ref{fijneformule2}), (L2) and the invariance of $\kappa$.

Note that the class $[\omega_{\lambda,\!\nabla}]$ in $H^2(\fg\rtimes_{D}\R,\R)$
depends only on $\lambda$, not on $\nabla$. Indeed,
two connection $1$-forms $\nabla$ and $\nabla'$ differ by $A \in \Omega^1(M,\der(\fK))$.
Using $\der(\fK) \simeq \fK$, we find
\[ \omega_{\lambda,\!\nabla'} - \omega_{\lambda,\!\nabla} = \delta \chi_{A}
\quad \mbox{ with } \quad
\chi_{A}(\xi  \oplus t) := \lambda(\kappa(A,\xi)).\]

According to the following theorem, every continuous Lie algebra $2$-cocycle on
$\fg \rtimes_{D} \R$ is cohomologous to one of the type
$\omega_{\lambda,\!\nabla}$ as defined in \eqref{cdef1} and \eqref{cdef2}.

\begin{Theorem} {\rm(Central extensions of extended gauge algebras)}\label{EqIjkcykel}
Let $\cK\rightarrow M$ be a bundle of Lie groups with simple fibers, equipped with a
1-parameter group of automorphisms with generator $\bv \in \cV(\cK)$. Let
$\fg = \Gamma_c(M,\fK)$ be the compactly supported gauge algebra, and let $\fg \rtimes_{D}\R$
be the Lie algebra \eqref{eq:smurfenliedje}.
Then the map $\lambda \mapsto [\omega_{\lambda,\!\nabla}]$ induces an isomorphism
\[\Big(\Omega^1_{c}(M,\bV) / \big(\dd\Omega^0_{c}(M,\bV) + L_{\bv_M}\Omega^1_{c}(M,\bV)\big)\Big)'
\stackrel{\sim}{\longrightarrow} H^2(\fg\rtimes_{D}\R,\R) \]
between the space of closed, $\bv_M$-invariant $\bV$-valued currents and
$H^2(\g \rtimes_D \R,\R)$.
\end{Theorem}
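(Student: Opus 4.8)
\emph{Proof plan.} The plan is to reduce the computation of $H^2(\fg\rtimes_D\R,\R)$ to that of $H^2(\fg,\R)$ together with the natural action of the derivation $D$ on it, and then to feed in the Janssens--Wockel classification of the latter; in fact the statement is essentially a reformulation of the corresponding result in \cite{JN17}. First I would unravel the cocycle condition on $\widehat\fg:=\fg\rtimes_D\R=\fg\oplus\R D$ using this topological direct sum splitting. A continuous $2$-cocycle $\omega$ on $\widehat\fg$ is the same as a pair $(\omega_0,\mu)$ with $\omega_0:=\omega|_{\fg\times\fg}$ and $\mu:=i_D\omega\in\fg'$: the identity $\delta\omega=0$ evaluated on triples in $\fg$ says $\delta_\fg\omega_0=0$, and evaluated on $(D,\xi,\eta)$ with $\xi,\eta\in\fg$ it says $\delta_\fg\mu=-(D.\omega_0)$, where $(D.\omega_0)(\xi,\eta):=\omega_0(D\xi,\eta)+\omega_0(\xi,D\eta)$ and $\delta_\fg$ is the Chevalley--Eilenberg differential of $\fg$. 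Since $\fk$ is simple, $\fg=\Gamma_c(M,\fK)$ is perfect (every compactly supported section is a finite sum of brackets, by a partition of unity together with $[\fk,\fk]=\fk$), so $Z^1(\fg,\R)=H^1(\fg,\R)=0$, and therefore $\mu$ is \emph{uniquely} determined by $\omega_0$ whenever it exists. A short computation shows that the coboundary of $(\chi_0,c)\in\widehat\fg'$ is $(\delta_\fg\chi_0,\,-\chi_0\circ D)$ (independent of $c$) and that $D.(\delta_\fg\chi_0)=\delta_\fg(\chi_0\circ D)$. Combining these, the restriction map induces an injection $H^2(\widehat\fg,\R)\hookrightarrow H^2(\fg,\R)$ whose image is the kernel of the well-defined endomorphism $D_*\colon H^2(\fg,\R)\to H^2(\fg,\R)$, $[\omega_0]\mapsto[D.\omega_0]$. (Equivalently, the Hochschild--Serre spectral sequence for $0\to\fg\to\widehat\fg\to\R\to0$ collapses, using $H^2(\R,\R)=H^3(\R,\R)=0$ and $H^1(\fg,\R)=0$, to $H^2(\widehat\fg,\R)\cong H^2(\fg,\R)^{\R}=\ker D_*$.)

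Next I would invoke the Janssens--Wockel classification \cite{JW13} (see also \cite{JN17}): for simple fibers, $\lambda\mapsto[\omega_\lambda]$ with $\omega_\lambda(\xi,\eta):=\lambda(\kappa(\xi,d_\nabla\eta))$ is an isomorphism $\big(\Omega^1_c(M,\bV)/\dd\Omega^0_c(M,\bV)\big)'\xrightarrow{\sim}H^2(\fg,\R)$. It then remains to identify $D_*$ under this isomorphism. Writing $D.\omega_\lambda(\xi,\eta)=\lambda(\kappa(L_\bv\xi,d_\nabla\eta))+\lambda(\kappa(\xi,d_\nabla L_\bv\eta))$, substituting $d_\nabla L_\bv\eta=L_\bv d_\nabla\eta-[L_\bv\nabla,\eta]$ from \eqref{fijneformule3}, and using the Leibniz rule \eqref{fijneformule2} for $L_{\bv_M}$ (extended to $\fK$-valued forms) together with the invariance of $\kappa$, one obtains
\[
D.\omega_\lambda=\omega_{\lambda\circ L_{\bv_M}}+\delta_\fg\chi,\qquad \chi(\xi):=-\lambda(\kappa(L_\bv\nabla,\xi)),
\]
so that $D_*[\omega_\lambda]=[\omega_{\lambda\circ L_{\bv_M}}]$, which under the isomorphism corresponds to the current $\lambda\circ L_{\bv_M}$ (again closed, since $L_{\bv_M}$ commutes with $\dd$). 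Hence $[\omega_\lambda]\in\ker D_*$ exactly when $\lambda$ annihilates $L_{\bv_M}\Omega^1_c(M,\bV)$ in addition to $\dd\Omega^0_c(M,\bV)$, i.e.\ exactly when $\lambda$ is closed and $\bv_M$-invariant in the sense of (L1)--(L2); therefore $\ker D_*\cong\big(\Omega^1_c(M,\bV)/(\dd\Omega^0_c(M,\bV)+L_{\bv_M}\Omega^1_c(M,\bV))\big)'$.

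Finally I would assemble the pieces. For a closed, $\bv_M$-invariant current $\lambda$, the cocycle $\omega_{\lambda,\!\nabla}$ of \eqref{cdef1}--\eqref{cdef2} restricts to $\omega_\lambda$ on $\fg$ and is a genuine $2$-cocycle on $\widehat\fg$ (as already verified above the theorem), so by the uniqueness established in the first step, $[\omega_{\lambda,\!\nabla}]$ is the unique preimage of $[\omega_\lambda]$ under restriction. Thus $\lambda\mapsto[\omega_{\lambda,\!\nabla}]$ is the Janssens--Wockel isomorphism followed by the inverse of restriction, with its domain cut down to $\ker D_*$ as identified above; being a composite of bijections onto matching images, it is an isomorphism onto $H^2(\widehat\fg,\R)$, which is the assertion.

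I expect the technical heart to be the identification of $D_*$ in the second step: it rests on the Leibniz-type identities \eqref{fijneformule1}--\eqref{fijneformule3}, the flatness and naturality of the connection $\dd$ on $\bV$, and the simple-fiber identification $\der(\fK)\cong\fK$ that lets one regard $L_\bv\nabla$ as an $\fK$-valued $1$-form. The remaining steps --- perfectness and the continuity bookkeeping, and the identification of the kernel of the transpose of $L_{\bv_M}$ with the dual of the quotient by the (possibly non-closed) subspace $\dd\Omega^0_c(M,\bV)+L_{\bv_M}\Omega^1_c(M,\bV)$ --- are routine given the cited results.
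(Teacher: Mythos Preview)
Your proposal is correct and is essentially the argument behind \cite[Thm.~5.3]{JN17}, which is precisely what the paper cites in lieu of a proof; you have unpacked that citation by combining the Janssens--Wockel description of $H^2(\fg,\R)$ with the Hochschild--Serre/perfectness reduction of $H^2(\fg\rtimes_D\R,\R)$ to $\ker D_*$ and the identification $D_*[\omega_\lambda]=[\omega_{\lambda\circ L_{\bv_M}}]$. The only points worth tightening are the topological perfectness of $\Gamma_c(M,\fK)$ (your partition-of-unity sketch suffices, but a reference such as \cite{JW13} would be cleaner) and the sign bookkeeping in $\delta_\fg\mu=-(D.\omega_0)$, which you have right but which is easy to slip on.
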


This is proven in \cite[Thm.~5.3]{JN17}.
The proof relies heavily on the description of $H^2(\fg,\R)$ provided
in \cite[Prop.~1.1]{JW13}.
\begin{Remark}{\rm(Temporal gauge)}\label{Rk:curvatureisnice}
If the Lie connection $\nabla$ on $\fK$ can be chosen so as to make $\bv \in \cV(\cK)$
horizontal, $\nabla_{\bv_M} \xi = L_{\bv}\xi$ for all $\xi \in \Gamma(M,\fK)$,
then equation \eqref{eq:defcurv} shows that
$L_{\bv}\nabla = i_{\bv_M} R$, where $R$ is the curvature of $\nabla$.
%
For such connections, (\ref{cdef2}) is equivalent to
\begin{equation}
  \label{eq:curv}
\omega_{\lambda,\!\nabla}(D, \xi ) = \lambda(\kappa(i_{\bv_M}R,\xi)).
\end{equation}
\end{Remark}

%
%
%

\section{Cocycles for positive energy representations}
\label{PEcocycles}

Having classified all the possible 2-cocycles on $\Gamma_{c}(M,\fK)\rtimes \R$, we now address
the restrictions that are imposed on these cocycles by the Cauchy--Schwarz estimates from \S\ref{subsec:3.2}.

In \S\ref{subsec:5.1} we derive
a local normal form of the cocycle $\omega$ in a flow box
around a point $m \in M$, where
$\bv_M \in \cV(M)$ does not vanish.
In \S\ref{subsec:5.2}, we use this to derive a
global normal form for $\omega$, provided that
$\bv_M$ is nowhere vanishing.
It turns out that $\omega$ is characterized by a \emph{measure} $\mu$ on the covering space $\widehat{M}$.
In \S\ref{subsec:5.3}, we plug this information back into the Cauchy--Schwarz estimate.
This yields the basic estimates needed for the continuity results in \S\ref{sec:6}.

The setting of this section is as follows. As before,
$\pi \colon \cK \rightarrow M$ is a bundle of Lie groups with semisimple fibers, and $\fK \rightarrow M$
is the corresponding bundle of Lie algebras.
We consider positive energy representations of $\widehat{G}$, where
$G =\Gamma_{c}(M,\cK)$
is the compactly supported gauge group with Lie algebra $\fg = \Gamma_{c}(M,\fK)$.
In fact, we will work mainly at the Lie algebra level, so our results continue to hold for the slightly more
general case that $G = \widetilde{\Gamma}_{c}(M,\cK)_{0}$ is
the simply connected cover of the identity component.
Using  \S\ref{vanseminaarsimpel}, we identify $\fg = \Gamma_{c}(M,\fK)$
with $\fg = \Gamma_{c}(\widehat{M},\hfK)$, where $\hfK \rightarrow \widehat{M}$ is a Lie algebra bundle
with simple fibers over a covering space $\widehat{M}$ of $M$.
We assume that the 1-parameter group of automorphisms is of geometric type
in the sense of Definition~\ref{def:gemetrictype}. The analogs
of the generators $\bv_M \in \cV(M)$
and $\bv \in \Gamma(M,\mathfrak{a}(\fK))$ for $\hfK$ are denoted by $\pi_*\hbv \in \cV(\widehat{M})$
and $\hbv \in \Gamma(\widehat{M},\mathfrak{a}(\hfK))$.

\subsection{Local gauge algebras}
\label{subsec:5.1}

The following simple lemma will be used extensively throughout the rest of the paper.
It gives a normal form for the pair $(\Gamma_c(M,\fK),\bv)$ in the neighborhood
of a point $m \in M$ where the vector field $\bv_M$ does not vanish.

\begin{Definition}[Good Flowbox] \label{def:goodflowbox}
A \emph{good flowbox} \index{flowbox \scheiding $I \times U_{0}$}
is a $\bv$-equivariant, local trivialization
$(I \times U_0)\times K \rightarrow \cK$ of $\cK$
over an open neighborhood $U \subseteq M$
that is equivariantly diffeomorphic to $I \times U_{0}$. Here $I \subseteq \R$
is a bounded open interval, and $U_{0} \subseteq \R^{n-1}$ is open.
Note that for $n=1$, we may take $U_0 = \{0\}$.
\end{Definition}

In particular, we have coordinates
$t:= x_0$ for $I$ and $\vec{x} := (x_1, \ldots, x_{n-1})$ for $U_0$
such that $\bv_M \in \cV(U)$ corresponds to
$\partial_{t} \in \cV(I\times U_0)$.

\begin{Lemma}\label{Uedele}
For any point $m\in M$ with $\bv_M(m)\neq 0$,
there is a good flowbox $U \simeq I \times U_0$ containing $m$.
Under the trivialization
$U \times \fk \rightarrow \fK|_{U}$, the induced isomorphism
$C^{\infty}_{c}(U,\fk) \simeq \Gamma_{c}(U,\fK)$
yields an inclusion
\[ I_{U} \colon C^{\infty}_{c}(U,\fk) \rtimes_{\partial_{t}} \R \hookrightarrow
\Gamma_{c}(M,\fK) \rtimes_{D} \R.\]
\end{Lemma}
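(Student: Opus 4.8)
The plan is to build the flowbox directly from the given flow, propagate a bundle trivialisation along it, and then recognise the desired map $I_U$ as extension of sections by zero.

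\textbf{Step 1: A saturated flowbox.} Since $\gamma \colon \R \to \Aut(\cK)$ is a genuine $1$-parameter group (Definition~\ref{def:gemetrictype}), the underlying flow $\gamma_M \colon \R \to \Diff(M)$ of $\bv_M$ is complete. As $\bv_M(m) \neq 0$, I pick a small embedded hypersurface through $m$ transverse to $\bv_M$, parametrised by an open set $U_0 \subseteq \R^{n-1}$ (with $U_0 = \{0\}$ if $n = 1$), and consider $(t, \vec x) \mapsto \gamma_{M, t}(\vec x)$. For a bounded open interval $I \ni 0$ and $U_0$ small enough this is a diffeomorphism onto an open neighbourhood $U$ of $m$; it identifies $U \simeq I \times U_0$ with $\bv_M$ corresponding to $\partial_t$, and the flowbox is \emph{saturated}: $\gamma_{M, s}(\{t\} \times U_0) = \{t + s\} \times U_0$ whenever both sides lie in $U$.

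\textbf{Step 2: Equivariant trivialisation of $\cK$.} Shrinking $U_0$ if necessary, I fix a trivialisation $\tau_0 \colon \cK|_{\{0\} \times U_0} \to U_0 \times K$ over the central slice. Transporting it along the flow, I set $\tau(k_x) := \big(t, \tau_0(\gamma_{-t}(k_x))\big)$ for $k_x \in \cK_x$ with $x \leftrightarrow (t, \vec x)$; here $\gamma_{-t}(k_x)$ lies in $\cK_{(0, \vec x)}$ by saturation, so $\tau$ is well defined, smooth, and a fibrewise group isomorphism, hence a local trivialisation $(I \times U_0) \times K \xrightarrow{\sim} \cK|_U$. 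Because $\gamma$ is a group, $\tau$ intertwines $\gamma_s$ with the translation $(t, \vec x, k) \mapsto (t + s, \vec x, k)$ for small $s$; differentiating, $\tau$ carries $\bv$ to $\partial_t$. Thus $(I \times U_0) \times K \to \cK$ is a good flowbox (Definition~\ref{def:goodflowbox}) containing $m$, which is the first assertion.

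\textbf{Step 3: The inclusion.} Under $\tau$ the gauge algebra $\Gamma_c(U, \fK)$ is identified with $C^\infty_c(U, \fk) = C^\infty_c(I \times U_0, \fk)$, and extension by zero gives an injective map $C^\infty_c(U, \fk) \hookrightarrow \Gamma_c(M, \fK)$. It is a Lie algebra homomorphism since the bracket is pointwise, and continuous, since it sends $C^\infty_{K'}(U, \fk) \to \Gamma_{K'}(M, \fK)$ continuously for each compact $K' \subseteq U$, whence continuity on the strict inductive limit. As $\tau$ is $\bv$-equivariant with $\tau_* \bv = \partial_t$, the derivation $D = L_{\bv}$ restricts on $C^\infty_c(U, \fk)$ to $\partial_t$; so extension by zero intertwines $\partial_t$ with $D$, and using the bracket \eqref{eq:smurfenliedje} it extends to a continuous injective Lie algebra homomorphism
\[ I_U \colon C^\infty_c(U, \fk) \rtimes_{\partial_t} \R \hookrightarrow \Gamma_c(M, \fK) \rtimes_D \R, \qquad (\xi, s) \mapsto (\xi, s), \]
as claimed. \textbf{Main obstacle.} Steps 1 and 3 are routine; the only point needing care is Step 2 — arranging a single trivialisation of $\cK$ compatible both with the product decomposition $I \times U_0$ straightening $\bv_M$ and with the lifted flow $\gamma$, so that $\bv$ literally becomes $\partial_t$. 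The "trivialise on a transversal slice and flow" construction above does this, and completeness of $\gamma$ (from its being a true $1$-parameter group) removes any worry about the flow being defined long enough.
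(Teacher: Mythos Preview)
Your argument is correct and follows essentially the same route as the paper: straighten $\bv_M$ in a flowbox, trivialise $\cK$ over the transversal slice, and propagate this trivialisation along the flow $\gamma$ so that $\bv$ becomes horizontal (i.e.\ $\partial_t$). The paper's proof is terser and leaves the inclusion $I_U$ (your Step~3) implicit, whereas you spell it out; one small point is that with the paper's convention $\bv = \frac{d}{dt}\big|_{t=0}\gamma_{-t}$ the flow of $\bv_M$ is $t \mapsto \gamma_{M,-t}$ rather than $\gamma_{M,t}$, so a sign should be adjusted in your Step~1--2, but this does not affect the substance.
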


\begin{proof}
Since $\bv_M(m)\neq 0$, we can find a neighborhood
$U \subseteq M$ of $m$ and local coordinates $t, x_1, \ldots, x_{n-1}$
such that the vector field $\bv_M$ on $U$ is of the form $\partial_{t}$.
We may assume that $U \simeq I \times U_{0}$
where $U_0 \subseteq \R^{n-1}$ corresponds to $t = 0$ and $I \subseteq \R$
corresponds to $\vec{x} = 0$.
We choose $U_0$ sufficiently small for there to exist a trivialization
$\Phi \colon U_0 \times K \rightarrow \cK|_{U_0}$, which we then extend to a trivialization
$U \times K \simeq \cK|_{U}$
over $U$ by $(t,x, k) \mapsto \gamma(-t) \Phi(x,k)$.
As $\frac{d}{dt}|_{t=0}\gamma(-t) = \bv$, the vector field $\bv
\in \cV(\cK)$ is horizontal in this trivialization.
\end{proof}

We consider $\fg_{U} := C^{\infty}_{c}(U,\fk)$ as a subalgebra of
$\fg = \Gamma_{c}(M,\fK)$
and wish to study the restriction $\dd\rho_{U}$
of the representation $\dd\rho$ to the subalgebra
\[ \hg_{U} := \R \oplus_{\omega} (\fg_{U}\rtimes_{\partial_{t}} \R).\]
Note that the subalgebra $\hg_{U}\hookrightarrow \hg$ does not correspond
to a Lie subgroup of $\widehat{G}$ unless $U$ is $\gamma$-invariant,
so we cannot work at the level of Lie groups.

If $A \in \Omega^1(U,\fk)$ is the local connection $1$-form corresponding to
the Lie connection $\nabla$,
then up to coboundaries,
by \eqref{cdef1} and \eqref{cdef2}
the restriction $\omega_{U}$ of $\omega$ to $\fg_U \rtimes_{\partial_t}\R$
takes the form
\begin{eqnarray}
\omega_{U}(fX,gY) &\simeq& \lambda_{U}(\kappa(fX,\dd g \cdot Y + g [A,Y]))\label{Ucocykel1}\\
\omega_{U}(\partial_{t},fX) &\simeq& \lambda_{U}(\kappa(\partial_{t}A, fX)),
 \label{Ucocykel2}
\end{eqnarray}
for some $\lambda_{U} \in \Omega^1_{c}(U,V(\fk))'$, where
$f,g \in C^{\infty}_{c}(U,\R)$ and $X,Y \in \fk$.

\begin{Proposition}\label{Splijtstof}
Let $m\in M$ be a point with $\bv_M(m)\neq 0$ and let
$U \simeq I \times U_{0}$ be a good flowbox
{\rm(cf.\ Definition~\ref{def:goodflowbox})}.
%
Let $\iota \colon U_0 \hookrightarrow M$ be the corresponding inclusion.
Then
the map
$\Omega^1_{c}(U,V(\fk)) \rightarrow \Gamma_{c}(U_0, \iota^*T^*M\otimes V(\fk)),
\beta \mapsto \oline\beta$, defined by the integration
\[
\ol{\beta}(x_1,\ldots,x_{n-1}) :=
\int_{-\infty}^{\infty} \beta(t,x_{1},\ldots,x_{n-1}) dt\,,
\]
yields a split exact sequence
\[ 0 \rightarrow
L_{\partial_t} \Omega^1_{c}(U,V(\fk)) \hookrightarrow
\Omega^1_{c}(U,V(\fk)) \rightarrow
\Gamma_{c}(U_0,\iota^*T^*M\otimes V(\fk)) \rightarrow 0
\]
of locally convex spaces.
%
In particular,
$\lambda_{U} \colon \Omega^{1}_{c}(U,V(\fk)) \rightarrow \R$
factors through a continuous linear map
$\ol{\lambda}_{U_0} \colon \Gamma_{c}(U_0,\iota^*T^*M\otimes V(\fk) ) \rightarrow \R$.
\end{Proposition}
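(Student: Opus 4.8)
The plan is to prove the exactness and splitting of the displayed sequence of locally convex spaces, and then deduce the factorization of $\lambda_U$ as an immediate corollary. The key point is that in the good flowbox $U \simeq I \times U_0$, the vector field $\bv_M$ is literally $\partial_t$, and $V(\fk)$ is a \emph{fixed} finite-dimensional vector space (since $\fk$ is simple), so we are really looking at a question about compactly supported differential forms on $I \times U_0$ with values in a fixed vector bundle pulled back from $U_0$. The map $\beta \mapsto \ol\beta$ is "integrate out the $t$-variable"; it is clearly continuous and linear, so the substance is (i) surjectivity, (ii) identifying the kernel as $L_{\partial_t}\Omega^1_c(U,V(\fk))$, and (iii) constructing a continuous linear splitting.

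\medskip

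\textbf{Step 1: decompose the form along $dt$.} Write a general $\beta \in \Omega^1_c(U, V(\fk))$ as $\beta = \beta_0\, dt + \beta'$, where $\beta_0 \in C^\infty_c(U, V(\fk))$ and $\beta'$ is a $t$-dependent family of $V(\fk)$-valued $1$-forms on $U_0$ (i.e. a section of the pullback of $T^*U_0 \otimes V(\fk)$), with no $dt$-component. The Lie derivative $L_{\partial_t}$ acts on each piece simply by $\partial_t$ applied to the coefficients. One checks that the "$dt$-part" and the "$U_0$-part" are each preserved by $L_{\partial_t}$, and that $\ol{L_{\partial_t}\beta}$ involves only the $U_0$-part: $\ol{\beta}(\vec x) = \int \beta'(t,\vec x)\, dt$ as a section of $\iota^*T^*M \otimes V(\fk)$ (identifying $\iota^*T^*M$ with $T^*U_0$ plus the $dt$-direction; the integral of $\partial_t(\cdots)$ in the $dt$-slot vanishes by compact support, so $\ol\beta$ indeed lands in the $U_0$-directions — this is where the precise statement $\Gamma_c(U_0,\iota^*T^*M \otimes V(\fk))$ should be read, and I would remark on it).

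\medskip

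\textbf{Step 2: surjectivity and splitting.} Fix once and for all a function $\chi \in C^\infty_c(I)$ with $\int_{-\infty}^\infty \chi(t)\, dt = 1$. Given $\gamma \in \Gamma_c(U_0, \iota^*T^*M \otimes V(\fk))$ (with components only in the $U_0$-directions), set $(s\gamma)(t,\vec x) := \chi(t)\gamma(\vec x)$. This is a continuous linear map $s \colon \Gamma_c(U_0,\iota^*T^*M\otimes V(\fk)) \to \Omega^1_c(U,V(\fk))$ with $\ol{s\gamma} = \gamma$, giving both surjectivity and a splitting. Continuity is just multiplication by the fixed smooth compactly supported function $\chi$, which is continuous on the relevant LF-spaces.

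\medskip

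\textbf{Step 3: the kernel is $L_{\partial_t}\Omega^1_c(U,V(\fk))$.} The inclusion $L_{\partial_t}\Omega^1_c(U,V(\fk)) \subseteq \ker(\beta \mapsto \ol\beta)$ follows from the fundamental theorem of calculus together with compact support in $t$. For the converse, suppose $\ol\beta = 0$. Split $\beta = \beta_0\, dt + \beta'$ as above. Define $(P\beta)(t,\vec x)$ by $t$-antiderivative: $(P\beta_0\,dt)(t,\vec x) := \big(\int_{-\infty}^t \beta_0(\tau,\vec x)\,d\tau\big)dt$ and similarly $(P\beta')(t,\vec x) := \int_{-\infty}^t \beta'(\tau,\vec x)\,d\tau$. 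Because $\ol\beta = 0$, the $t$-integral of $\beta'$ over all of $I$ vanishes, so $P\beta'$ has compact support in $t$; the $dt$-component $\beta_0\,dt$ needs a small correction: replace $\beta_0$ by $\beta_0 - \chi \cdot \ol{\beta_0}$... actually since $\ol\beta = 0$ forces only the $U_0$-directions to integrate to zero, one must be slightly careful: I would handle the $dt$-slot separately, noting $L_{\partial_t}(h\,dt) = (\partial_t h)dt$ surjects onto all of $C^\infty_c(U,V(\fk))\,dt$ (no constraint), and the $U_0$-slot is exactly where the $\ol\beta=0$ constraint makes the antiderivative compactly supported. Then $\beta = L_{\partial_t}(P\beta)$ with $P\beta \in \Omega^1_c(U,V(\fk))$. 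This establishes $\ker = L_{\partial_t}\Omega^1_c(U,V(\fk))$, and in particular that the latter is closed.

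\medskip

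\textbf{Step 4: the corollary for $\lambda_U$.} By \textrm{(L2)}, $\lambda_U$ annihilates $L_{\partial_t}\Omega^1_c(U,V(\fk))$ (the flow-invariance of the current restricts to the flowbox). Hence $\lambda_U$ factors through the quotient $\Omega^1_c(U,V(\fk))/L_{\partial_t}\Omega^1_c(U,V(\fk))$, which by Steps 1--3 is isomorphic as a locally convex space to $\Gamma_c(U_0,\iota^*T^*M\otimes V(\fk))$ via $\beta \mapsto \ol\beta$. Composing the inverse of this isomorphism (i.e. the splitting $s$) with $\lambda_U$ yields the desired continuous linear map $\ol\lambda_{U_0}$ with $\lambda_U(\beta) = \ol\lambda_{U_0}(\ol\beta)$.

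\medskip

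\textbf{Main obstacle.} The only genuinely delicate point is the precise bookkeeping in Step 3 between the $dt$-component and the $U_0$-components of $\beta$ — i.e. making sure the $t$-antiderivative $P\beta$ really has compact support in $t$ and is smooth, which requires using the constraint $\ol\beta = 0$ in exactly the right slots, and being careful that $\ol\beta$ by definition sees only the $U_0$-directions (the $dt$-slot of $\ol\beta$ being automatically zero by compact support). Everything else — continuity of the maps on the LF-space topologies, the identification of $\iota^*T^*M\otimes V(\fk)$ with the relevant bundle over $U_0$ — is routine and I would not belabor it.
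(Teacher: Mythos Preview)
Your overall approach---bump-function splitting, fundamental theorem of calculus for the kernel, factorization of $\lambda_U$ via the flow-invariance property (L2)---is exactly the paper's. The paper's proof is in fact only three sentences, treating all components uniformly without splitting off the $dt$-slot.

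However, you misidentify the target bundle: $\iota^*T^*M$ is the rank-$n$ pullback of $T^*M$ along $\iota\colon U_0 \hookrightarrow M$, \emph{not} $T^*U_0$. At a point $\vec x \in U_0$ its fiber is $T^*_{\iota(\vec x)}M$, which in flowbox coordinates is spanned by $dt, dx_1,\ldots,dx_{n-1}$. Hence $\ol\beta$ has a $dt$-component $\int\beta_0(t,\vec x)\,dt$, and the condition $\ol\beta=0$ constrains \emph{all} components, including $\beta_0$. Your claim in Step~3 that ``$L_{\partial_t}(h\,dt)=(\partial_t h)\,dt$ surjects onto all of $C^\infty_c(U,V(\fk))\,dt$ (no constraint)'' is therefore both unnecessary and false: $\partial_t h$ always has vanishing $t$-integral by compact support, so the image is exactly the functions with $\int h\,dt=0$---which is precisely what $\ol\beta=0$ guarantees. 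Once you correct this, the entire Step~1 decomposition becomes superfluous: just argue componentwise that $\ol\beta=0$ means $\int\beta_i(t,\vec x)\,dt=0$ for each $i=0,\ldots,n-1$, whence $\gamma_i(t,\vec x):=\int_{-\infty}^t\beta_i(\tau,\vec x)\,d\tau$ is compactly supported and $\beta = L_{\partial_t}\gamma$.
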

\begin{proof}
The second statement follows from the first because
\[ \lambda_{U}(L_{\partial_t} \Omega^1_{c}(U,V(\fk))) = \{0\} \]
by Theorem~\ref{EqIjkcykel}.
The kernel of $\beta \mapsto \ol{\beta}$
is precisely $L_{\partial_t} \Omega^1_{c}(U,V(\fk))$ by the Fundamental Theorem of
calculus.
A bump function $\phi \in C^{\infty}_{c}(I,\R)$ of integral 1
yields the required continuous right inverse
$\Gamma_{c}(U_0,\iota^*T^*M\otimes V(\fk)) \rightarrow \Omega^1_{c}(U,V(\fk))$
for the integration map by sending $\vec{x} \mapsto \beta(\vec{x})$ to
$(t,\vec{x}) \mapsto \phi(t)\beta(x_1,\ldots x_{n-1})$.
\end{proof}

For $X=Y$ 
we obtain with \eqref{Ucocykel1} the relation
\[ \omega(\partial_{t}fX,fX) = \lambda_{U}\big(\partial_{t}f\cdot \dd f\cdot \kappa(X,X)\big).\]
Unlike (\ref{Ucocykel1}), which holds only modulo coboundaries,
this equation is exact because $(\partial_{t}f)X$
and $fX$ commute.
Lemma \ref{Lem:CSRaw} (the Cauchy--Schwarz estimate) then yields
\begin{equation}\label{neushoorn}
-\lambda_{U}\big(\partial_{t}f\cdot \dd f\cdot \kappa(X,X)\big)  \geq 0\,.
\end{equation}
This allows us to characterize $\lambda_{U}$ as follows.

\begin{Proposition}\label{taart}
Let $m\in M$ be a point with $\bv_M(m)\neq 0$.
Then there exists an open neighborhood $U\subseteq M$ of $m$
such that, for each $X\in \fk$, there exists
a unique $\bv_M$-invariant
positive locally finite regular Borel measure
$\mu_{U,X}$ on $U$ such that the functional
$\lambda_{U,X} \in \Omega^1_{c}(U,\R)'$
defined by $\lambda_{U,X}(\beta) := -\lambda_{U}\big(\beta \cdot \kappa(X,X)\big)$
takes the form
\[\lambda_{U,X}(\beta) = \int_{U}(i_{\bv_M}\beta) d\mu_{U,X}(m)\,.\]
\end{Proposition}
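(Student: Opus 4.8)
The plan is to show that the positive functional $\lambda_{U,X}$ on $\Omega^1_c(U,\R)$, pushed forward via the good flowbox coordinates, is a measure that is invariant under the flow. First I would recall from Proposition~\ref{taart}'s setup that in a good flowbox $U \simeq I \times U_0$ the vector field $\bv_M$ is $\partial_t$, and by Proposition~\ref{Splijtstof} the functional $\lambda_U$ factors through the integration map $\beta \mapsto \ol\beta$ landing in $\Gamma_c(U_0,\iota^*T^*M \otimes V(\fk))$. Composing with $\kappa(X,X) \in V(\fk)$ converts $\lambda_{U,X}$ into a functional on $\Gamma_c(U_0, \iota^*T^*M)$; since any $1$-form on $U_0$ pairs against the flow by the component $i_{\bv_M}\beta = \beta(\partial_t) = \beta_0$, the relevant data is a continuous linear functional $\ell$ on $C^\infty_c(U_0,\R)$ obtained by $\ell(h) := -\lambda_U(\phi(t)h(\vec x)\,dt \cdot \kappa(X,X))$ for a fixed bump function $\phi$ of integral $1$ on $I$.

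The key step is to establish positivity of $\ell$: if $h \geq 0$, then $\ell(h) \geq 0$. For this I would invoke the inequality \eqref{neushoorn}, namely $-\lambda_U(\partial_t f \cdot \dd f \cdot \kappa(X,X)) \geq 0$ for all real $f \in C^\infty_c(U,\R)$, which is the exact form of the Cauchy--Schwarz estimate from Lemma~\ref{Lem:CSRaw} applied to the commuting pair $(\partial_t f)X$ and $fX$. The trick is to choose $f(t,\vec x) = g(t)\sqrt{h(\vec x)}$ with $h$ a nonnegative test function and $g$ chosen so that $\partial_t f \cdot \dd f$ has the right form; more precisely, using $\partial_t f \cdot \dd f = \partial_t f\, d(f) = \tfrac12 d(f^2)_{\text{along }\vec x} \cdot(\text{stuff}) + (\partial_t f)^2 dt + \ldots$, and exploiting that $\lambda_U$ kills $L_{\partial_t}$-exact forms, one isolates the $dt$-component. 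After integrating in $t$ and using that the transverse-derivative terms assemble into an $L_{\partial_t}$-coboundary (hence annihilated by $\lambda_U$), what survives is exactly $c \cdot \ell(h)$ for a positive constant $c$ depending on $g$. Then the Riesz--Markov representation theorem provides a unique locally finite regular Borel measure $\mu_{U_0,X}$ on $U_0$ with $\ell(h) = \int_{U_0} h \, d\mu_{U_0,X}$, and one sets $\mu_{U,X} := \mu_{U_0,X} \otimes dt$ on $U \simeq I \times U_0$. This is manifestly $\partial_t$-invariant, i.e.\ $\bv_M$-invariant, and yields $\lambda_{U,X}(\beta) = \int_U (i_{\bv_M}\beta)\,d\mu_{U,X}$ since only the $dt$-component of $\beta$ contributes after the $t$-integration implicit in Proposition~\ref{Splijtstof}. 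Uniqueness follows from uniqueness in Riesz--Markov together with the fact that a $\partial_t$-invariant measure on $I \times U_0$ is determined by its restriction to $U_0$.

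The main obstacle I anticipate is the positivity argument: inequality \eqref{neushoorn} is quadratic in a single function $f$, not obviously bilinear, so extracting a \emph{linear} positive functional requires a polarization-type manipulation combined with careful bookkeeping of which terms are $L_{\partial_t}$-coboundaries. One must check that the cross terms and the pure transverse-derivative terms genuinely lie in $L_{\partial_t}\Omega^1_c(U,V(\fk))$ (so that Theorem~\ref{EqIjkcykel} kills them), which uses the product structure of the flowbox and the Fundamental Theorem of Calculus as in Proposition~\ref{Splijtstof}. A secondary technical point is that one may need to shrink $U$ (as the statement allows) so that the flowbox trivialization and the bump function $\phi$ are globally available; this is harmless. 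Once positivity is in hand, the rest is a routine application of Riesz--Markov and the splitting from Proposition~\ref{Splijtstof}.
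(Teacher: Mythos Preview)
Your positivity argument for the $dt$-component is essentially the paper's: with $f(t,\vec x)=B(t)h(\vec x)$, the cross term $\partial_t f\,\partial_i f\,dx_i = B'B\,h\partial_i h\,dx_i = \tfrac12 L_{\partial_t}(B^2 h\partial_i h\,dx_i)$ is killed by $\bv_M$-invariance, leaving $\lambda_0(h^2)\geq 0$, and Riesz--Markov gives the measure $\mu_0$ on $U_0$. That part is fine (modulo the avoidable use of $\sqrt{h}$).

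The genuine gap is your claim that ``only the $dt$-component of $\beta$ contributes after the $t$-integration implicit in Proposition~\ref{Splijtstof}.'' This is false as stated: the map in Proposition~\ref{Splijtstof} lands in $\Gamma_c(U_0,\iota^*T^*M)$, and $\iota^*T^*M$ has rank $n$, not rank $1$. The $t$-integration averages over $I$ but retains \emph{all} components $dx_0=dt, dx_1,\ldots,dx_{n-1}$, so a priori $\lambda_{U,X}$ decomposes as $\lambda_{U,X}(\beta)=\sum_{i=0}^{n-1}\lambda_i(\ol{\beta_i})$ with potentially nonzero transverse functionals $\lambda_1,\ldots,\lambda_{n-1}$. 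For the conclusion $\lambda_{U,X}(\beta)=\int_U(i_{\bv_M}\beta)\,d\mu_{U,X}$ to hold, you must prove $\lambda_i=0$ for $i\geq 1$, and nothing in your argument does this: your product test functions $f=B(t)h(\vec x)$ were chosen precisely so that the transverse contribution drops out, so they give no information about~$\lambda_i$.

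The paper supplies the missing step with a separate oscillatory argument: one takes $f(t,\vec x)=h(\vec x)\big(B_C(t)\cos(\vec k\cdot\vec x)+B_S(t)\sin(\vec k\cdot\vec x)\big)$ with $\int_I B_S B_C'\,dt=1$, so that after the $t$-integration the inequality \eqref{neushoorn} becomes $\lambda_0(\ol{(\partial_t f)^2})+\sum_{i\geq 1}k_i\lambda_i(h^2)\geq 0$ with the first term bounded uniformly in $\vec k$. Letting $k_i\to\pm\infty$ forces $\lambda_i(h^2)=0$ for each $i\geq 1$. This is the idea you are missing, and it is not a coboundary argument---the transverse terms here are \emph{not} $L_{\partial_t}$-exact in general.
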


\begin{proof}
Introduce coordinates $x_0 := t$ and $\vec{x} := (x_1, \ldots, x_{n-1})$ on $U\simeq I \times U_0$
as in
Definition~\ref{def:goodflowbox}.
Define $\lambda_{U,i} \in C^{\infty}_{c}(U,\R)'$, $i = 0,\ldots, n-1$, by
$\lambda_{U,i}(f) := \lambda_{U,X}(f\dd x_i)$ and let
$\lambda_i \in C^{\infty}_{c}(U_0,\R)'$ be the corresponding
distribution on $U_0$ (cf.\ Proposition~\ref{Splijtstof}), so $\lambda_{U,i}(f) = \lambda_{i}(\overline{f})$
with $\overline{f}(\vec{x}) := \int_{I}f(t,\vec{x}) dt$.
Then $\lambda_{U,X}(f \dd g) = \sum_{i=0}^{n-1}\lambda_{i}(\ol{f\partial_{i}g})$
for all $f,g \in C^{\infty}_{c}(U,\R)$.
Equation (\ref{neushoorn}) then yields
\begin{equation}\label{positive}
\lambda_0\left(\ol{(\partial_{t}f)^2}\right) +
\sum_{i=1}^{n-1}\lambda_i\left(\ol{\partial_t f \partial_{i}f}\right)\geq 0.
\end{equation}

First, we show that $\lambda_0(h^2) \geq 0$
for any $h$ in $C^{\infty}_c(U_0,\R)$.
Note that every element $B$ of $C^{\infty}_{c}(I,\R)$ satisfies
$\int_I B\partial_{t}B dt = 0$.
We choose $B \neq 0$, normalize it by $\int_{I}(\partial_{t}B)^2 dt = 1$ and
define $f(t,\vec{x}) := B(t)h(\vec{x})$. We then have
$\ol{(\partial_{t}f)^2} = h^2$ and
$\ol{\partial_{t}f \partial_{i}f} = h\partial_{i}h \int_IB\partial_{t}B dt = 0$
for $i \geq 1$. Therefore,
(\ref{positive}) yields $\lambda_0(h^2)\geq 0$ as required.

Since $\lambda_0$ extends\footnote{%
For every compact $S \subseteq U_{0}$,
there exists a $\phi\in C^{\infty}_{c}(U_0,\R)$ with $\phi|_{S}>1$.
With $L_{S} = \lambda_0(\phi^2)$, it then follows
from the inequality
$\lambda_0(\|f\|_{\infty}\phi^2 \pm f) \geq 0$ that
$|\lambda_{0}(f)| \leq L_{S}\|f\|_{\infty}$ for all $f$ with support in $S$.}
to a positive linear functional on $C_{c}(U_{0},\R)$,
Riesz' Representation Theorem \cite[Thms.~2.14, 2.18]{Ru87} yields
a unique locally finite regular Borel measure $\mu_{0}$
on $U_0$ such that $\lambda_0(f) = \int_{U_0}f\, d\mu_0(x)$.
This implies that $\lambda_{U,0}(f) = \int_{U}f(u)\, d\mu_{U,X}(u)$,
with $\mu_{U,X}$ the product
of $\mu_0$ with the Lebesgue measure on $I$.


To finish the proof, we now prove that $\lambda_i=0$ for $i>0$.
It suffices to show that $\lambda_{i}(h^2) = 0$ for all
$h \in C^{\infty}_c(U_0,\R)$.
Choose $B_C, B_S \in C^{\infty}_{c}(I,\R)$
so that $\int_I B_S(t)B'_C(t) dt = 1$,
choose $C, S \in C^{\infty}_{c}(U_0,\R)$ so that
$C(x) = \cos(\sum_{i=1}^{n} k_ix^i)$
and $S(x) =  \sin(\sum_{i=1}^{n} k_ix^i)$
for $x \in \supp(h)$, $k_i \in \Z$, and set
$$f(t,\vec{x}) := h(\vec{x})\Big(B_C(t) C(\vec{x}) + B_S (t) S(\vec{x})\Big)\,.$$
Then with $E := \int_I \big(|B'_C(t)|  + |B'_S(t)|\big)^2dt$, we have
$$
0 \leq \ol{(\partial_t f)^2 } = h^2(\vec{x})
\int_I \big(B'_C(t) C(\vec{x}) + B'_S(t)S(\vec{x})\big)^2dt
\leq E h^2(x)\,.
$$
Making repeated use of
$\int_I F(t,\vec{x})\partial_tF(t,\vec{x})dt = 0$
and $\int_I B'_C B_S + B'_S B_C dt = 0$,
we find, for $i= 1,\ldots, n-1$,
$$
\ol{\partial_t f \partial_i f} = k_i h^2\,.
$$
Equation~(\ref{positive}) then yields
\begin{equation}\label{kwaaddaglicht}
\lambda_0\left(\ol{(\partial_t f)^2}\right) + \sum_{i=1}^{n-1} k_i\lambda_i(h^2) \geq
0 \quad \mbox{ for all } \quad k_i \in \Z,
\end{equation} where the function $f$ depends on the $k_i$.
As $\lambda_0(\overline{(\partial_{t} f)^2}) \leq E \lambda_0(h^2)$,
the non-negative term
$\lambda_0(\overline{(\partial_{t} f)^2})$ is bounded by a number that does not depend on $k_i$.
It therefore follows from inequality (\ref{kwaaddaglicht}) that
$\lambda_i(h^2) = 0$ for all $i>0$, as was to be proven.
\end{proof}

\subsection{Infinitesimally free \texorpdfstring{$\R$-actions}{R-actions}}
\label{subsec:5.2}

In \S\ref{vanseminaarsimpel}, we saw that
$\Gamma_{c}(M,\fK)$
is isomorphic to the gauge algebra $\Gamma_{c}(\widehat{M},\widehat{\fK})$, where
$\widehat{\fK} \rightarrow \widehat{M}$ is a Lie algebra bundle with
\emph{simple} fibers
over a cover $\widehat{M} \rightarrow M$.
The decomposition $\widehat{M} = \bigsqcup_{i=1}^{r} \widehat{M_i}$
in connected components therefore gives rise to a direct sum decomposition
\begin{equation}\label{eq:decompsum}
\Gamma_{c}(M,\fK) = \bigoplus_{i=1}^{r} \Gamma_{c}(\widehat{M}_{i}, \widehat{\fK})\,,
\end{equation}
where $\widehat{\fK} \rightarrow \widehat{M}_i$ is a Lie algebra bundle with simple
fibers isomorphic to $\fk_i$.

\subsubsection{Reduction to compact simple structure algebras}

If $\bv_M$ is non-vanishing, then
we can restrict attention to the terms
in~\eqref{eq:decompsum}
where~$\fk_i$ is a compact simple Lie algebra.

\begin{Corollary}\label{Cer:acorro}
Suppose that $\fk_{i}$ is not compact, and
let $m\in \widehat{M}_i$ be a point with $\pi_*\widehat{\bv}_{m}\neq 0$.
Let $U \subseteq \widehat{M}_i$
be as in {\rm Proposition~\ref{taart}} and let $\lambda_{U} \in \Omega^{1}_{c}(U,V(\fk_i))'$
be as in {\rm \eqref{Ucocykel1}} and {\rm\eqref{Ucocykel2}}.
Then
$\lambda_{U}\colon \Omega^{1}_{c}(U,V(\fk)) \rightarrow \R$
is zero.
Consequently,
$\omega_{U}$ is cohomologous to zero on
$\Gamma_{c}(U,\widehat{\fK})$.
\end{Corollary}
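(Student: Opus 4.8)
The plan is to play the conclusion of Proposition~\ref{taart} off against the fact that a non-compact simple Lie algebra has an indefinite Killing form. Proposition~\ref{taart}, applied to $\widehat{M}_i$ with the nowhere-vanishing vector field $\pi_*\hbv$, produces for \emph{every} $X\in\fk_i$ a positive, locally finite, regular Borel measure $\mu_{U,X}$ on $U$ with
\[
-\lambda_U\big(\beta\otimes\kappa(X,X)\big)=\int_U\big(i_{\pi_*\hbv}\beta\big)\,d\mu_{U,X}\qquad\text{for all }\beta\in\Omega^1_c(U,\R).
\]
Since $\Omega^1_c(U,V(\fk_i))=\Omega^1_c(U,\R)\otimes V(\fk_i)$, the left-hand side depends on $X$ only through the vector $\kappa(X,X)\in V(\fk_i)$. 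So for every $v\in V(\fk_i)$ of the form $v=\kappa(X,X)$, the real current $\beta\mapsto-\lambda_U(\beta\otimes v)$ on $\Omega^1_c(U,\R)$ is represented, via the interior product $i_{\pi_*\hbv}$, by a nonnegative measure on $U$.

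Next I would show that both $+v$ and $-v$ are of the form $\kappa(X,X)$ for a spanning set of $v\in V(\fk_i)$. If $\fk_i$ is absolutely simple, then $V(\fk_i)\cong\R$ and $\kappa$ is a nonzero multiple of the Killing form; since $\fk_i$ is non-compact this form is indefinite, so after rescaling $X$ it attains any prescribed nonzero real value, in particular both signs. If $\fk_i$ carries a complex structure, then $V(\fk_i)\cong\C$ and $\kappa$ is the $\C$-bilinear Killing form of the underlying complex simple Lie algebra; picking $X_0$ with $w_0:=\kappa(X_0,X_0)\neq0$ and using that $z\mapsto z^2$ is onto $\C$, one realises $\kappa(X,X)=\zeta$ for every $\zeta\in\C$ by a rescaling $X=zX_0$, in particular $\zeta=\pm1$ and $\zeta=\pm i$. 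In either case we obtain, for each element $e$ of a real basis of $V(\fk_i)$, that both $-\lambda_U(\cdot\otimes e)$ and $-\lambda_U(\cdot\otimes(-e))=+\lambda_U(\cdot\otimes e)$ are represented by nonnegative measures $\mu_+,\mu_-$ through $i_{\pi_*\hbv}$. Adding the two identities gives $\int_U (i_{\pi_*\hbv}\beta)\,d(\mu_++\mu_-)=0$ for all $\beta$; since $\pi_*\hbv$ is nowhere zero on the good flowbox $U$, the map $i_{\pi_*\hbv}\colon\Omega^1_c(U,\R)\to C^\infty_c(U,\R)$ is onto (in flowbox coordinates $i_{\partial_t}(g\,\dd t)=g$), so $\mu_++\mu_-=0$ and hence $\mu_+=\mu_-=0$ by positivity. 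Thus $\lambda_U(\cdot\otimes e)=0$ for every $e$, i.e.\ $\lambda_U=0$, and then \eqref{Ucocykel1}--\eqref{Ucocykel2} show that $\omega_U$ is, up to a coboundary, assembled entirely from $\lambda_U$ and therefore cohomologous to zero already on $\Gamma_c(U,\widehat{\fK})$.

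The only step that genuinely uses the hypothesis, and hence the main point to get right, is the identification in the second paragraph of the set $\{\kappa(X,X):X\in\fk_i\}$: one needs that non-compactness of $\fk_i$ forces $\kappa$ to take values on both sides of $0$ in $V(\fk_i)$, so that the positive-measure representation coming from Proposition~\ref{taart} applies simultaneously to $\lambda_U$ and to $-\lambda_U$. The remaining ingredients---the surjectivity of $i_{\pi_*\hbv}$ on a flowbox, and the uniqueness part of Riesz representation---are routine.
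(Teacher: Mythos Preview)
Your proof is correct and follows essentially the same strategy as the paper's: exploit that Proposition~\ref{taart} forces $-\lambda_U(\,\cdot\,\otimes\kappa(X,X))$ to be represented by a \emph{positive} measure, then use indefiniteness of $\kappa$ on a non-compact simple $\fk_i$ (via $Y=iX$ in the complex case, via the indefinite Killing form in the absolutely simple case) to get both $v$ and $-v$ in the range of $X\mapsto\kappa(X,X)$, so that both $\lambda_U$ and $-\lambda_U$ are positive and hence zero. Your explicit mention of the surjectivity of $i_{\pi_*\hbv}$ on a flowbox is the step the paper leaves implicit when concluding $\mu_{U,X}=-\mu_{U,Y}$.
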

\begin{proof}
It suffices to show that $\mu_{U,X} = 0$ for all $X \in \fk_i$.
If $X,Y \in \fk_i$ with $\kappa(X,X) = -\kappa(Y,Y)$, then
$\mu_{U,X} = - \mu_{U,Y}$ implies $\mu_{U,X} = \mu_{U,Y} = 0$.
If $\fk_{i}$ is a complex Lie algebra,
i.e., $V(\fk_i)\simeq \C$
(cf.~\S\ref{invbil}), then the previous argument with $Y = iX$
yields $\mu_{U,X} = 0$ for all $X\in \fk_{i}$.
If $V(\fk_{i}) \simeq \R$, then $\fk_i$ is noncompact if and only if
$\{\kappa(X,X)\,;\, X\in \fk_{i}\} = \R$. Therefore, the same reasoning applies.
\end{proof}

\begin{Corollary}
If $\rho$ is a positive energy representation of $\widehat{G}$
and $\bv_M$ has no zeros,
then $\omega$ is cohomologous to a cocycle that vanishes on the subalgebras
$\Gamma_{c}(\widehat{M}_i, \widehat{\fK})$, where $\fk_i$ is noncompact.
\end{Corollary}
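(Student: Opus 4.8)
The plan is to reduce the statement, via the classification of cocycles, to the vanishing of a single current. First I would invoke Theorem~\ref{EqIjkcykel}: the cocycle $\omega$ on $\fg \rtimes_{D}\R$ attached to the positive energy representation $\rho$ is cohomologous to $\omega_{\lambda,\!\nabla}$ for some Lie connection $\nabla$ on $\widehat{\fK}\rightarrow\widehat M$ and some closed, $\pi_*\hbv$-invariant $\bV$-valued current $\lambda \in \Omega^1_c(\widehat M,\bV)'$. Since by \eqref{cdef1} one has $\omega_{\lambda,\!\nabla}(\xi,\eta) = \lambda(\kappa(\xi,d_\nabla\eta))$, and $\kappa(\xi,d_\nabla\eta)$ is supported in $\widehat M_i$ whenever $\xi,\eta \in \Gamma_c(\widehat M_i,\widehat{\fK})$, it suffices to prove that $\lambda$ restricts to $0$ on $\Omega^1_c(\widehat M_i,\bV)$ for every index $i$ with $\fk_i$ noncompact. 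The cohomologous cocycle $\omega_{\lambda,\!\nabla}$ then automatically vanishes on the corresponding subalgebras $\Gamma_c(\widehat M_i,\widehat{\fK})$, and hence on their direct sum inside $\fg\rtimes_D\R$, which is the assertion.

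Next I would localise. Because $\bv_M$ has no zeros and $\pi_*\hbv$ is a lift of $\bv_M$ (Remark~\ref{remark:overdederivatie}), the vector field $\pi_*\hbv$ is nowhere vanishing on $\widehat M_i$, so by Lemma~\ref{Uedele} every point of $\widehat M_i$ lies in a good flowbox, and after shrinking these may be taken of the form furnished by Proposition~\ref{taart}. Fix a cover $\{U_a\}$ of $\widehat M_i$ by such flowboxes. On each $U_a$, the good-flowbox trivialisation of $\widehat{\fK}|_{U_a}$ induces a trivialisation of the flat bundle $\bV|_{U_a}$ in which $\dd$ is the trivial connection, and the restriction of $\omega_{\lambda,\!\nabla}$ to $\hg_{U_a} = \R\oplus_\omega(C^\infty_c(U_a,\fk_i)\rtimes_{\partial_t}\R)$ is, up to a coboundary, of the normal form \eqref{Ucocykel1}--\eqref{Ucocykel2} with local current $\lambda_{U_a}$ given by the restriction of $\lambda$ to $\Omega^1_c(U_a,\bV)$ (the curvature term being handled by passing to temporal gauge, as in Remark~\ref{Rk:curvatureisnice}). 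Since $\fk_i$ is noncompact, Corollary~\ref{Cer:acorro} gives $\lambda_{U_a} = 0$, i.e. $\lambda$ vanishes on $\Omega^1_c(U_a,\bV)$. This is the single point at which the positive energy hypothesis is genuinely used, since Corollary~\ref{Cer:acorro} rests on the Cauchy--Schwarz inequality \eqref{neushoorn} through Proposition~\ref{taart}.

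Finally I would glue: as $\{U_a\}$ covers $\widehat M_i$, a partition of unity subordinate to it writes every $\beta\in\Omega^1_c(\widehat M_i,\bV)$ as a finite sum $\beta=\sum_a\beta_a$ with $\beta_a\in\Omega^1_c(U_a,\bV)$, whence $\lambda(\beta)=\sum_a\lambda(\beta_a)=0$ by linearity and continuity of the current $\lambda$. Thus $\lambda|_{\Omega^1_c(\widehat M_i,\bV)}=0$ for every noncompact $\fk_i$, which completes the argument as explained in the first paragraph. The main obstacle, such as it is, lies in the bookkeeping of the second paragraph: one must verify that the \emph{globally} defined current $\lambda$ restricts on each flowbox to exactly the local current $\lambda_{U_a}$ to which Corollary~\ref{Cer:acorro} applies, which comes down to checking compatibility of the flowbox trivialisation with the flat structure on $\bV$ and with the defining formulae \eqref{cdef1}--\eqref{cdef2} (using invariance of $\kappa$ to discard the connection term, exactly as in the derivation of the exact relation preceding \eqref{neushoorn}). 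Everything after that identification is a routine patching argument.
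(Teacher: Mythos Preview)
Your proposal is correct and follows essentially the same route as the paper: invoke Theorem~\ref{EqIjkcykel} to replace $\omega$ by $\omega_{\lambda,\nabla}$, cover each $\widehat M_i$ with noncompact $\fk_i$ by good flowboxes on which Corollary~\ref{Cer:acorro} kills the restricted current, and glue via a partition of unity to conclude that $\lambda$ vanishes on all of $\Omega^1_c(\widehat M_i,\bV)$. Your write-up is more explicit than the paper's about the identification of the local $\lambda_{U_a}$ with the restriction of the global $\lambda$, but this is exactly the argument the paper has in mind.
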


\begin{proof}
By Theorem~\ref{EqIjkcykel} applied to $\Gamma_{c}(\widehat{M},\widehat{\fK})$, the class $[\omega] \in H^2(\fg\rtimes_{D}\R,\R)$
is uniquely determined by a 
$\bV$-valued current
$\lambda \colon \Omega_{c}(\widehat{M},\bV)\rightarrow \R$.
Since $\bv_M$ is everywhere nonzero, the same holds for $\pi_*\widehat{\bv}$.
If $\fk_i$ is noncompact, by Corollary~\ref{Cer:acorro},
$\widehat{M}_i$ can be covered
with open sets $U_{ij}$
such that $\lambda$ vanishes on  $\Omega_{c}(U_{ij},\bV)$.
As every element of $\Omega_{c}(\widehat{M}_i,\bV)$
can be written as a finite sum of elements of
$\Omega_{c}(U_{ij},\bV)$, the current $\lambda$
vanishes on $\Omega_{c}(\widehat{M}_i,\bV)$.
\end{proof}

\subsubsection{Reduction of currents to measures}
\label{subsec:5.2redcurmeas}


Let $\rho \colon \widehat{G} \rightarrow\U(\cH)$
be a positive energy representation,
where $G = \tilde\Gamma_{c}(M,\cK)_{0}$ is the simply connected
Lie group with Lie algebra $\g=\Gamma_c(M,\fK)$, which covers
the identity component of the compactly supported gauge group.
This gives rise to a Lie algebra cocycle $\omega$ on $\fg\rtimes_{D}\R$.
Using the results of \S\ref{vanseminaarsimpel}, we identify the gauge Lie algebra
$\fg = \Gamma_{c}(M,\fK)$ with $\fg = \Gamma_{c}(\widehat{M},\widehat{\fK})$,
where $\widehat{\fK} \rightarrow \widehat{M}$ is a Lie algebra bundle with
simple fibers.
The cocycle $\omega$ can then be represented by a \emph{measure} on $\widehat{M}$.

\begin{Theorem}\label{MeasureThm}
\index{measure associated to cocycle \scheiding $\mu$}
Suppose that $\bv_M$ has no zeros, and that $\omega$ is a 2-cocycle
on $\g\rtimes_D \R$ induced by a  positive energy representation
$\rho \colon \widehat{G} \rightarrow\U(\cH)$.
Then there exists a positive, regular, locally finite Borel measure $\mu$
on $\widehat{M}$ invariant under the flow $\gamma_{\hat M}$ on $\hat M$ induced
by $\gamma_{\cK}$, such that
$\omega$ is cohomologous to the
2-cocycle $\omega_{\mu,\nabla}$, given by
\begin{eqnarray}
\omega_{\mu,\!\nabla}(\xi,\eta) &=& -\int_{\widehat{M}}
\kappa(\xi,\nabla_{\widehat{\bv_M}}\eta)\, d\mu(m)\,,\label{freecocycle1}\\
\omega_{\mu,\!\nabla}(D,\xi) &=& -
\int_{\widehat{M}} \kappa(i_{\widehat{\bv_M}}(L_{\widehat{\bv}}\nabla),\xi)\, d\mu(m)
\quad \mbox{ for } \quad \xi,\eta \in \Gamma_{c}(\widehat{M},\widehat{\fK}).
\label{freecocycle2}
\end{eqnarray}
The support of $\mu$ is contained in the union of the connected components
$\widehat{M}_i$ where the fibers of $\widehat{\fK}$
are compact simple Lie algebras.
In \eqref{freecocycle1} and \eqref{freecocycle2}, we identify
$\kappa$ with
the positive definite
invariant bilinear form normalized as in~\eqref{eq:kappa-normal}.
\end{Theorem}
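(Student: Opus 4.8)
The plan is to assemble the theorem from the local results of \S\ref{subsec:5.1}, glued together over $\widehat M$, and then to identify the glued-up data with a single flow-invariant measure. First I would recall from Theorem~\ref{EqIjkcykel} (applied to $\Gamma_c(\widehat M,\widehat\fK)$) that the class $[\omega]\in H^2(\fg\rtimes_D\R,\R)$ is represented by a unique closed, $\pi_*\widehat{\bv}$-invariant $\bV$-valued current $\lambda\in\Omega^1_c(\widehat M,\bV)'$, and that $\omega$ is cohomologous to $\omega_{\lambda,\nabla}$ for any choice of Lie connection $\nabla$ (the class being independent of $\nabla$). By the second Corollary in \S\ref{subsec:5.2}, since $\bv_M$ (hence $\pi_*\widehat{\bv}$) is nowhere zero, $\lambda$ vanishes on $\Omega^1_c(\widehat M_i,\bV)$ whenever $\fk_i$ is noncompact; so the support of $\lambda$, and ultimately of $\mu$, is contained in the union of the components with compact simple fibers. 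On those components $\bV\cong\widehat M_i\times\R$ is the trivial line bundle (since compact simple Lie algebras are absolutely simple, cf.\ \S\ref{flatbdl}), so a $\bV$-valued current is just an ordinary $1$-current $\lambda\colon\Omega^1_c(\widehat M_i,\R)\to\R$, closed and flow-invariant.

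Next I would produce the measure locally. Fix a component $\widehat M_i$ with $\fk_i$ compact simple, fix $0\neq X\in\fk_i$ with $\kappa(X,X)=1$ (possible after normalizing $\kappa$ as in \eqref{eq:kappa-normal}, since $\kappa$ is positive definite there), and cover $\widehat M_i$ by good flowboxes $U\simeq I\times U_0$ as in Definition~\ref{def:goodflowbox}. Proposition~\ref{taart}, applied with this $X$, gives on each such $U$ a unique $\pi_*\widehat{\bv}$-invariant positive locally finite regular Borel measure $\mu_{U}:=\mu_{U,X}$ with $\lambda_{U,X}(\beta)=-\lambda_U(\beta\cdot\kappa(X,X))=\int_U(i_{\pi_*\widehat{\bv}}\beta)\,d\mu_U$, i.e.\ $-\lambda_U(\beta)=\int_U(i_{\pi_*\widehat{\bv}}\beta)\,d\mu_U$ for scalar $\beta$. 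By uniqueness in Proposition~\ref{taart}, on overlaps $U\cap U'$ the measures $\mu_U$ and $\mu_{U'}$ agree (both represent the same functional $-\lambda$ restricted there), so they glue to a global positive, regular, locally finite Borel measure $\mu$ on $\widehat M_i$, and the gluing is compatible across components; the $\pi_*\widehat{\bv}$-invariance passes to $\mu$ since it is a local condition. Thus $-\lambda(\beta)=\int_{\widehat M}(i_{\pi_*\widehat{\bv}}\beta)\,d\mu$ for all $\beta\in\Omega^1_c(\widehat M,\R)$ supported in the compact-fibre locus, and both sides vanish elsewhere.

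Finally I would substitute this representation of $\lambda$ into the formulas \eqref{cdef1}--\eqref{cdef2} defining $\omega_{\lambda,\nabla}$. For $\xi,\eta\in\Gamma_c(\widehat M,\widehat\fK)$ we have $\kappa(\xi,d_\nabla\eta)\in\Omega^1_c(\widehat M,\bV)$, and under the trivialization $\bV\cong\R$ on the relevant components the pairing with $\lambda$ becomes
\[
\omega_{\lambda,\nabla}(\xi,\eta)=\lambda\big(\kappa(\xi,d_\nabla\eta)\big)
=-\int_{\widehat M}\big(i_{\pi_*\widehat{\bv}}\,\kappa(\xi,d_\nabla\eta)\big)\,d\mu
=-\int_{\widehat M}\kappa\big(\xi,\nabla_{\pi_*\widehat{\bv}}\eta\big)\,d\mu,
\]
which is \eqref{freecocycle1} (with $\widehat{\bv_M}=\pi_*\widehat{\bv}$); the contraction $i_{\pi_*\widehat{\bv}}d_\nabla\eta=\nabla_{\pi_*\widehat{\bv}}\eta$ is immediate. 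Similarly, \eqref{cdef2} gives $\omega_{\lambda,\nabla}(D,\xi)=\lambda(\kappa(L_{\widehat{\bv}}\nabla,\xi))=-\int_{\widehat M}\kappa(i_{\pi_*\widehat{\bv}}(L_{\widehat{\bv}}\nabla),\xi)\,d\mu$, which is \eqref{freecocycle2} — here one uses \eqref{eq:defcurv}, and on the components with $L_{\widehat{\bv}}\nabla=i_{\pi_*\widehat{\bv}}R$ (temporal gauge, Remark~\ref{Rk:curvatureisnice}) this is the curvature contraction, though no such choice is needed for the general formula. I expect the main obstacle to be purely bookkeeping: checking that the locally defined measures patch coherently across flowboxes \emph{and} across connected components of $\widehat M$, and verifying that the normalization of $\kappa$ used to identify $V(\fk_i)$ with $\R$ is consistent with the sign conventions in \eqref{neushoorn}, so that $\mu$ comes out positive; the analytic heavy lifting has already been done in Proposition~\ref{taart}.
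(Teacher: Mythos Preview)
Your proposal is correct and follows essentially the same approach as the paper: cover $\widehat M$ by good flowboxes, invoke Proposition~\ref{taart} to produce the local measures $\mu_U=\mu_{U,X}$ (for $X$ with $\kappa(X,X)=1$), glue by uniqueness, and then read off \eqref{freecocycle1}--\eqref{freecocycle2} from \eqref{cdef1}--\eqref{cdef2}. The only cosmetic difference is that you frame the argument by first invoking Theorem~\ref{EqIjkcykel} to pass to a global current $\lambda$ and then localizing, whereas the paper works directly with the local restrictions $\lambda_U$ throughout; the content is the same.
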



\begin{proof}
As $\bv_M$ is nowhere zero, we can cover $\widehat{M}$ by good flowboxes
$U \subseteq \widehat{M}$ in the sense of Definition~\ref{def:goodflowbox}.
In the corresponding local trivialization $\Gamma_{c}(U,\hfK) \simeq C^{\infty}_{c}(U,\fk)$
(cf.\ Lemma~\ref{Uedele}),
we may assume that $\fk$ is compact by Corollary~\ref{Cer:acorro}.
We normalize $\kappa$ as in~\eqref{eq:kappa-normal}
and define $\mu_{U}$ as $\mu_{U,X}$ for any $X\in \fk$ with $\kappa(X,X) = 1$.
If $U$ and $U'$ are two such open sets, then the measures
$\mu_{U}$ and $\mu_{U'}$ from Proposition~\ref{taart} coincide
on the intersection $U \cap U'$, as both measures
are uniquely determined by the cocycle $\omega$.
The measures $\mu_{U}$ thus splice together
to form a positive regular locally finite Borel measure
on~$\widehat{M}$.
Equations (\ref{freecocycle1}) and (\ref{freecocycle2}) then follow
immediately from
(\ref{cdef1}), (\ref{cdef2}) in \S\ref{subsec:2cocyc},
and \eqref{Ucocykel1}, \eqref{Ucocykel2}.
\end{proof}

\begin{Remark}\label{omooieconnectie}
As the cohomology class $[\omega_{\lambda,\!\nabla}]$ is independent of the choice
 of the Lie connection,
we are free to choose $\nabla$ so that $\widehat{\bv}$ is horizontal.
In that case, we have $i_{\widehat{\bv_M}}(L_{\widehat{\bv}}\nabla) = 0$ and
$L_{\widehat{\bv}}\xi = \nabla_{\pi_*\widehat{\bv}}\xi$
(cf.~Remark~\ref{Rk:curvatureisnice}).
Equation \eqref{freecocycle2} then becomes
\[\omega_{\mu,\!\nabla}(D,\xi) = 0\,.\]
\end{Remark}

From Examples \ref{ex:simple}--\ref{Ex:MhatOrientable} in \S\ref{vanseminaarsimpel},
we obtain the following.

\begin{Example}
If $\fK \rightarrow M$ has simple fibers, then
$\widehat{M} = M$. 
The class $[\omega_{\mu,\!\nabla}]$ then corresponds to a
measure $\mu$ on $M$.
It vanishes
on the connected components of $M$ where the fibers of $\fK \rightarrow M$ are noncompact.
\end{Example}

\begin{Example}
Suppose that $M$ is connected, and that
the typical fiber $\fk = \bigoplus_{i=1}^{r} \fk_i$ is the direct sum of $r$ mutually
non-isomorphic simple ideals $\fk_i$.
Then $\widehat{M}$ is the disjoint union of $r$ copies of $M$. 
The class $[\omega_{\mu,\nabla}]$ is then given by $r$ measures $\mu_{i}$ on $M$, one for each
simple ideal.
The same holds if $\fK = M \times \fk$ is trivial, and the $\fk_{i}$ are not necessarily non-isomorphic.
\end{Example}


\begin{Example} (Frame bundles of 4-manifolds)
(cf.~Examples~\ref{Ex:MhatOrientable}). 
Suppose that $M$ is a Riemannian 4-manifold, and
$\fK = \ad({\rm OF}(M))$ is the adjoint bundle of its
orthogonal frame bundle.
%
If $M$ is orientable,
then $\omega_{\mu} = \omega_{\mu_L} + \omega_{\mu_{R}}$ is the
sum of two cocycles with measures $\mu_L$ and $\mu_{R}$
on $M$ corresponding to the simple factors $\mathfrak{su}_{L}(2,\C)$
and $\mathfrak{su}_{R}(2,\C)$ of $\mathfrak{so}(4,\R)$.
If $M$ is not orientable, then $\omega_{\mu}$ is described by a single
measure $\mu$ on the orientable cover $\widehat{M} \rightarrow M$.
\end{Example}

\subsection{Cauchy--Schwarz estimates revisited}
\label{subsec:5.3}

Using the explicit form of the cocycles determined in Theorem \ref{MeasureThm},
we revisit the Cauchy--Schwarz estimates of \S\ref{subsec:3.2}.
In this subsection, we assume that $\fK \rightarrow M$ has semisimple fibers, and that the vector field
$\bv_M$ on $M$ is nowhere vanishing.
As before, we identify $\Gamma_{c}(M,\fK)$ with $\Gamma_{c}(\widehat{M},\hfK)$, where $\hfK \rightarrow \widehat{M}$
has simple fibers.

Define the positive semidefinite symmetric bilinear form
on $\fg = \Gamma_{c}(\widehat{M},\widehat{\fK})$ by
\begin{equation}
  \label{eq:bilform}
\langle \xi,\eta \rangle_{\mu} := \int_{\widehat{M}}
\kappa (\xi,\eta)\, d\mu(m).
\end{equation}
Using Theorem~\ref{MeasureThm} and Remark~\ref{omooieconnectie},
we may replace $\omega$ by $\omega_{\mu,\!\nabla}$
for a Lie connection $\nabla$ on $\hfK$ that makes $\widehat{\bv}$ horizontal.
In that case, we have $i_{\widehat{\bv_M}}(L_{\widehat{\bv}}\nabla) = 0$ and
$L_{\widehat{\bv}}\xi = \nabla_{\pi_*\widehat{\bv}}\xi$
(cf.~Remark~\ref{Rk:curvatureisnice}).
We may thus assume, without loss of generality, that the cocycle
associated to a positive energy representation takes the form
\begin{equation}\label{eq:omega}
\omega(\xi,\eta) = -\langle\xi,L_{\widehat{\bv}}\eta\rangle_{\mu}
=  \langle L_{\widehat{\bv}} \xi,\eta\rangle_{\mu},\qquad \omega(D,\xi) = 0.
\end{equation}
The Cauchy--Schwarz estimate (Lemma \ref{Lem:CSRaw}) can now be reformulated as follows.

\begin{Lemma} {\rm(Cauchy--Schwarz Estimate)}\label{CSgauge}
Let $\rho$ be a positive energy representation of $\widehat{G}$, where
$G = \tilde\Gamma_c(M,\cK)_{0}$ is the simply connected gauge group.
If the vector field $\bv_M$ on $M$ has no zeros, then,
after replacing the linear lift $\dd\rho \: \g \to \End(\cH^\infty)$
of the projective representation
 $\oline{\dd\rho}$ of $\g$ by $\dd\rho + i \chi \one$ for some continuous
 linear functional $\chi \colon \fg \rightarrow \R$,
%
we have
\begin{equation}\label{eq:CSestuary}
\langle i \dd\rho(L_{\widehat{\bv}}\xi)\rangle_{\psi}^2 \leq 2 \langle H\rangle_{\psi} \|L_{\widehat{\bv}}\xi\|^2_{\mu} \quad \mbox{ for all } \quad \xi \in \fg \quad \mbox{ with } \quad
[L_{\widehat{\bv}}\xi,\xi] = 0
\end{equation}
and
every unit vector $\psi \in \cH^{\infty}$.
\end{Lemma}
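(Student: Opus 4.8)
The plan is to read off the estimate from the raw Cauchy--Schwarz inequality of Lemma~\ref{Lem:CSRaw}, once the cocycle has been put into the measure normal form supplied by Theorem~\ref{MeasureThm}. So there are really just two moves: normalize the cocycle (absorbing the coboundary into the lift, which is exactly the flexibility the statement builds in), and then specialize Lemma~\ref{Lem:CSRaw}.

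\emph{Step 1: normalizing the cocycle.} Since $\bv_M$ is nowhere vanishing, Theorem~\ref{MeasureThm} applies: the cocycle $\omega$ attached to the positive energy representation $\rho$ is cohomologous on $\fg \rtimes_D \R$ to $\omega_{\mu,\nabla}$ for a flow-invariant, positive, locally finite regular Borel measure $\mu$ on $\widehat M$ and some Lie connection $\nabla$ on $\widehat{\fK}$. By Remark~\ref{omooieconnectie} we may choose $\nabla$ so that $\widehat{\bv}$ is horizontal, whence $\omega_{\mu,\nabla}$ has the form \eqref{eq:omega}, namely $\omega_{\mu,\nabla}(\xi,\eta) = -\langle \xi, L_{\widehat{\bv}}\eta\rangle_\mu = \langle L_{\widehat{\bv}}\xi,\eta\rangle_\mu$ and $\omega_{\mu,\nabla}(D,\xi)=0$. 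Write $\omega - \omega_{\mu,\nabla} = \delta\chi$ for a continuous linear functional $\chi$ on $\fg\rtimes_D\R$; since $[\fg\rtimes_D\R,\fg\rtimes_D\R]\subseteq\fg$ (indeed $[D,\fg]=D\fg\subseteq\fg$), the coboundary $\delta\chi$ depends only on $\chi|_\fg$, so we may take $\chi\in\fg'$, extended by $0$ on $\R D$. I would then replace the linear lift $\dd\rho$ of $\overline{\dd\rho}$ by $\dd\rho + i\chi\one$ on $\fg$, leaving the images of $C$ and $D$ untouched: this is again a strongly continuous $*$-representation of the double extension $\hg$, now with defining $2$-cocycle $\omega_{\mu,\nabla}$, and with the same Hamiltonian $H = i\dd\rho(D)\geq 0$. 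Crucially, the conclusion of Lemma~\ref{Lem:CSRaw} uses only the bracket relations of $\hg$ and the positivity of $H$, not the chosen cohomology representative, so it remains applicable to the modified lift. From here on I write $\dd\rho$ and $\omega$ for the modified lift and for $\omega_{\mu,\nabla}$.

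\emph{Step 2: applying the raw estimate.} Let $\xi\in\fg$ satisfy $[L_{\widehat{\bv}}\xi,\xi]=0$. Under the identification of \S\ref{vanseminaarsimpel} one has $D\xi = L_{\widehat{\bv}}\xi$, so this is precisely the hypothesis $[\xi,D\xi]=0$ of Lemma~\ref{Lem:CSRaw}. That lemma gives, for every unit vector $\psi\in\cH^\infty$,
\[
\big(\langle i\dd\rho(D\xi)\rangle_\psi + \omega(\xi,D)\big)^2 \;\leq\; 2\,\omega(\xi,D\xi)\,\langle H\rangle_\psi .
\]
By \eqref{eq:omega} we have $\omega(\xi,D) = -\omega(D,\xi) = 0$, while
\[
\omega(\xi,D\xi) = \langle L_{\widehat{\bv}}\xi,\, L_{\widehat{\bv}}\xi\rangle_\mu = \|L_{\widehat{\bv}}\xi\|_\mu^2 \;\geq\; 0 .
\]
Substituting these together with $D\xi = L_{\widehat{\bv}}\xi$ yields exactly \eqref{eq:CSestuary}, for every unit $\psi\in\cH^\infty$ and every $\xi\in\fg$ with $[L_{\widehat{\bv}}\xi,\xi]=0$.

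\emph{Where the care lies.} There is no substantial obstacle here; the real work was carried out in Theorem~\ref{MeasureThm} and Lemma~\ref{Lem:CSRaw}. The one point to handle cleanly is Step~1: the passage to the measure normal form $\omega_{\mu,\nabla}$ holds a priori only up to a coboundary $\delta\chi$, and one must observe that on the representation side this is implemented by the harmless modification $\dd\rho \mapsto \dd\rho + i\chi\one$, which preserves $H$ and all hypotheses of Lemma~\ref{Lem:CSRaw}. A secondary routine check is the identity $-\langle \xi, L_{\widehat{\bv}}\eta\rangle_\mu = \langle L_{\widehat{\bv}}\xi,\eta\rangle_\mu$ underlying \eqref{eq:omega}, an integration by parts using flow-invariance of $\mu$ and invariance of $\kappa$; this is what makes the right-hand side of \eqref{eq:CSestuary} a genuine square norm of $L_{\widehat{\bv}}\xi$.
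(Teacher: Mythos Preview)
Your proof is correct and follows essentially the same two-step approach as the paper: first absorb the coboundary into the lift so that the cocycle is $\omega_{\mu,\nabla}$ in the form \eqref{eq:omega} (via Theorem~\ref{MeasureThm} and Remark~\ref{omooieconnectie}), then specialize Lemma~\ref{Lem:CSRaw} using $\omega(\xi,D)=0$ and $\omega(\xi,D\xi)=\|L_{\widehat{\bv}}\xi\|_\mu^2$. Your write-up is slightly more explicit than the paper's about why $\chi$ can be taken in $\fg'$ and why $H$ is unchanged, which is helpful.
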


\begin{proof}
First we observe that the passage from $\omega$ to an equivalent cocycle
corresponds to replacing the subspace
$\g \subeq \hat\g$ by the subspace
$\chi(\xi)C + \xi$, $\xi \in \g$, where $\chi \: \g \to \R$ is a continuous
linear functional. For the representation $\dd\rho$ this changes
the value of $\dd\rho(\xi)$ by adding $i\chi(\xi)$, so that we can achieve
a cocycle of the form \eqref{eq:omega} by Theorem~\ref{MeasureThm}. Now we apply
 Lemma~\ref{Lem:CSRaw} with $i_{D}\omega_{\mu,\!\nabla} = 0$ and
$\omega_{\mu,\!\nabla}(\xi,D\xi) = \|L_{\widehat{\bv}}\xi\|^2_{\mu}$.
\end{proof}

In the same vein, the refined Cauchy--Schwarz estimate, Lemma~\ref{4point}, can be reformulated as follows.
\begin{Lemma}\label{4pointGauge}
Under the assumptions of {\rm Lemma~\ref{CSgauge}}, we have
\begin{equation}\label{waterhoen}
\begin{split}
\left( \Big\langle i\dd\rho\big(e^{-s\ad_\eta}(L_{\widehat{\bv}}\xi)\big)\Big\rangle_{\psi}
-\left\langle \frac{e^{-s\ad_{\eta}} - \one}{\ad_{\eta}} (L_{\widehat{\bv}}\xi),
L_{\widehat{\bv}}\eta \right\rangle_{\mu}
\right)^{2} \leq \\
 2\|L_{\widehat{\bv}}\xi\|^2_{\mu} \Big(\langle H \rangle_{\psi} +
s\langle i\dd\rho(L_{\widehat{\bv}}\eta)\rangle_{\psi} +
{\textstyle \frac{s^2}{2}} \|L_{\widehat{\bv}}\eta\|^{2}_{\mu} \Big)
\end{split}
\end{equation}
for all $s\in \R$, and for all $\xi,\eta \in \Gamma_{c}(\widehat{M},\hfK)$ such that $[\xi,L_{\widehat{\bv}}\xi] = 0$
and $[\eta, L_{\widehat{\bv}}\eta] = 0$.
\end{Lemma}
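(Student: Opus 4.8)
This is the refined Cauchy–Schwarz estimate specialized to the gauge setting, so I would simply unwind the definitions and apply Lemma~\ref{4point}. First, I would recall that by Lemma~\ref{CSgauge} we have arranged (after the harmless modification $\dd\rho \mapsto \dd\rho + i\chi\one$) that the cocycle has the normal form \eqref{eq:omega}, namely
\[
\omega(\zeta,\zeta') = \langle L_{\widehat{\bv}}\zeta, \zeta'\rangle_{\mu}, \qquad \omega(D,\zeta) = 0 \quad \text{for all } \zeta,\zeta' \in \fg,
\]
where $D = L_{\widehat{\bv}}$ on $\fg$. In particular $\omega(\zeta,D) = 0$ for every $\zeta$, and $\omega(\zeta, D\zeta) = \langle L_{\widehat{\bv}}\zeta, L_{\widehat{\bv}}\zeta\rangle_{\mu} = \|L_{\widehat{\bv}}\zeta\|_{\mu}^2$. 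So the hypotheses of Lemma~\ref{4point} that involve $\omega(\xi,D)$ and $\omega(\eta,D)$ are automatically satisfied.

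Next I would rewrite the term $\omega\!\left(\frac{e^{-s\ad_{\eta}} - \one}{\ad_{\eta}}(D\xi),\eta\right)$ appearing in the first inequality of Lemma~\ref{4point}. Using \eqref{eq:omega} with the roles as $\omega(\zeta,\eta) = \langle L_{\widehat{\bv}}\zeta,\eta\rangle_\mu = -\langle \zeta, L_{\widehat{\bv}}\eta\rangle_\mu$, this term equals
\[
\omega\!\left(\tfrac{e^{-s\ad_{\eta}} - \one}{\ad_{\eta}}(L_{\widehat{\bv}}\xi),\,\eta\right) = -\left\langle \tfrac{e^{-s\ad_{\eta}} - \one}{\ad_{\eta}}(L_{\widehat{\bv}}\xi),\, L_{\widehat{\bv}}\eta\right\rangle_{\mu}.
\]
Since $D\xi = L_{\widehat{\bv}}\xi$ and $D\eta = L_{\widehat{\bv}}\eta$ and $\omega(\xi,D)=0$, the left-hand side of the first inequality of Lemma~\ref{4point} becomes exactly
\[
\left(\langle i\dd\rho(e^{-s\ad_\eta}(L_{\widehat{\bv}}\xi))\rangle_\psi - \left\langle \tfrac{e^{-s\ad_{\eta}}-\one}{\ad_{\eta}}(L_{\widehat{\bv}}\xi),\, L_{\widehat{\bv}}\eta\right\rangle_\mu\right)^2,
\]
which is the left-hand side of \eqref{waterhoen}. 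On the right-hand side, $2\omega(\xi,D\xi) = 2\|L_{\widehat{\bv}}\xi\|_\mu^2$, the term $\omega(\xi,D)$ drops out, $\omega(\eta,D)$ drops out, $\langle i\dd\rho(D\eta)\rangle_\psi = \langle i\dd\rho(L_{\widehat{\bv}}\eta)\rangle_\psi$, and $\omega(\eta,D\eta) = \|L_{\widehat{\bv}}\eta\|_\mu^2$, giving precisely the bracketed factor in \eqref{waterhoen}. Thus \eqref{waterhoen} is just the first displayed inequality of Lemma~\ref{4point} rewritten in the current notation.

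The only points needing a word of care are: (i) Lemma~\ref{4point} requires $\fg$ to be Mackey complete, which holds for the LF-Lie algebra $\Gamma_c(\widehat{M},\hfK)$ (so Proposition~\ref{prop:centadjoint} applies and the operator $\frac{e^{\ad_y}-\one}{\ad_y}$ is well defined); and (ii) the commutation hypotheses of Lemma~\ref{4point} are stated there as $[\xi,D\xi]=0$, $[\eta,D\eta]=0$, which match the hypotheses $[\xi,L_{\widehat{\bv}}\xi]=0$, $[\eta,L_{\widehat{\bv}}\eta]=0$ here since $D = L_{\widehat{\bv}}$ on $\fg$. I do not expect any genuine obstacle; the whole proof is a substitution exercise, and the body of the proof will essentially read ``Apply Lemma~\ref{4point} with the cocycle in the normal form \eqref{eq:omega} provided by Lemma~\ref{CSgauge}, using $\omega(\xi,D)=\omega(\eta,D)=0$, $\omega(\xi,D\xi)=\|L_{\widehat{\bv}}\xi\|_\mu^2$, $\omega(\eta,D\eta)=\|L_{\widehat{\bv}}\eta\|_\mu^2$, and $\omega(\zeta,\eta)=-\langle\zeta,L_{\widehat{\bv}}\eta\rangle_\mu$ to evaluate the remaining $\omega$-term.''
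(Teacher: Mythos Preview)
Your proposal is correct and matches the paper's approach exactly: the paper presents Lemma~\ref{4pointGauge} as a direct reformulation of Lemma~\ref{4point} (without even writing a separate proof block), obtained by substituting the normal form \eqref{eq:omega} of the cocycle. Your careful bookkeeping of the substitutions $\omega(\xi,D)=\omega(\eta,D)=0$, $\omega(\zeta,D\zeta)=\|L_{\widehat{\bv}}\zeta\|_\mu^2$, and $\omega(\zeta,\eta)=-\langle\zeta,L_{\widehat{\bv}}\eta\rangle_\mu$ is exactly what is needed.
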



\section{Continuity properties}
\label{sec:6}


Having determined which cocycles are compatible
with the Cauchy--Schwarz estimates, we now turn to the classification
of positive energy representations for the central extensions
that correspond to these cocycles.
This will be achieved in \S\ref{sec:locthmsec}, using
continuity and extension results developed in the present section.

In this section, we assume that the flow $\bv_M$ is nowhere vanishing.
Further, we assume that
the fibers of $\fK \rightarrow M$
are \emph{simple} Lie algebras.
This entails no loss of generality compared to semisimple fibers,
as one can switch to the Lie algebra bundle
$\hfK \rightarrow \widehat{M}$ in that case
by the results in \S\ref{vanseminaarsimpel}.

In~\S\ref{sec:redcpt}, we use the Cauchy--Schwarz estimate~\ref{CSgauge}
to further reduce the problem to the case where $\fK$ has
\emph{compact} simple fibers.
In \S\ref{sec:nummertweevanzes}, we use the refined Cauchy--Schwarz estimate
\ref{4pointGauge} to bound $i\dd\rho(\xi)$ in terms of the Hamilton operator $H$, the $L^2$-norm
$\|\xi\|_{\mu}$ with respect to the measure $\mu$ of Theorem~\ref{MeasureThm}, and the $L^2$-norm $\|\xi\|_{B\mu}$ with respect to
the product of $\mu$ with a suitable upper semi-continuous function
$B \colon M \rightarrow \R^+$.
In \S\ref{sec:nummer3van6}, we interpret these estimates as a continuity property,
and use this to define an extension of $\dd\rho$ to a space
$H^1_{B\mu}(M,\fK)$ of sections that are differentiable in the direction of the orbits,  but merely measurable in the transversal direction.
In \S\ref{sec:SobolevLie}, we construct a subspace $H^1_{\partial}(M,\fK)$ of
bounded sections that is closed under the pointwise Lie bracket.
Finally, in \S\ref{sec:context5van6}, we show that by extending to
$H^1_{B\mu}(M,\fK)$ and restricting to $H^1_{\partial}(M,\fK)$, one obtains
a representation of the latter Lie algebra that is compatible with the Hamiltonian $H$. On a subalgebra $H^2_{\partial}(M,\fK)$ of sections that are twice differentiable in the orbit direction, we then show that there is a dense set of uniformly
analytic vectors. In \S\ref{sec:locthmsec}, this will be needed in order to integrate the Lie algebra representation to the group level.

%
%
%

\subsection{Reduction to compact simple structure algebras}\label{sec:redcpt}

As a direct consequence of Lemma~\ref{CSgauge}, we see that $\dd\rho(L_{\bv}\xi)$ vanishes
for all $\xi \in \fg$ with $[\xi,L_{\bv}\xi] = 0$ and $\|L_{\bv}\xi\|_{\mu} = 0$.
We use this to show that every positive energy representation factors through
a gauge algebra with \emph{compact} structure algebra.

\begin{Proposition}\label{DgDg}
For $\fg = \Gamma_{c}(M,\fK)$ with $\bv_M$ everywhere nonzero,
we have $\fg = D\fg + [D\fg,D\fg]$. 
Considered as subsets of $\widehat{\fg} = \R \oplus_\omega
(\fg \rtimes_{D}\R)$,
with $\omega$ as in {\rm Theorem~\ref{MeasureThm}},
we have
$\R \oplus_\omega \fg= D\fg + [D\fg,D\fg]$.
\end{Proposition}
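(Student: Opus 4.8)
The plan is to prove the first identity by localizing in a good flowbox and patching with a partition of unity, and then to deduce the second one by exhibiting a single element of $[D\fg,D\fg]$ with non-zero central component. First I would fix a good flowbox $U\simeq I\times U_0$ as in Lemma~\ref{Uedele}, so that under the trivialization $\Gamma_c(U,\fK)\simeq C^\infty_c(U,\fk)$ the derivation $D=L_\bv$ acts as $\partial_t$. The point to exploit is that $\partial_t C^\infty_c(U,\R)$ is exactly the set of $g\in C^\infty_c(U,\R)$ with $\int_\R g(t,\vec x)\,dt=0$ for all $\vec x$ — the antiderivative $h(t,\vec x):=\int_{-\infty}^{t} g(s,\vec x)\,ds$ lies again in $C^\infty_c(U,\R)$ — so that modulo $D$-images a scalar function $f$ is determined by its fibre integral $\bar f(\vec x):=\int_I f(t,\vec x)\,dt\in C^\infty_c(U_0,\R)$.

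Given $f\in C^\infty_c(U,\R)$ and $X\in\fk$, I would write $X=\sum_{j=1}^N[Y_j,Z_j]$ — possible because $\fk$ is simple, hence perfect — pick $B\in C^\infty_c(I,\R)$ with $\int_I(B')^2\,dt=1$ and $\chi\in C^\infty_c(U_0,\R)$ with $\chi\equiv1$ on $\supp\bar f$, and use the identity
\[
\sum_{j=1}^{N}\big[\,\partial_t(B\,\bar f\,Y_j)\,,\ \partial_t(B\,\chi\,Z_j)\,\big]
\;=\;(B'(t))^2\,\bar f(\vec x)\sum_{j=1}^{N}[Y_j,Z_j]
\;=\;(B'(t))^2\,\bar f(\vec x)\,X.
\]
Since $\overline{(B')^2\bar f}=\bar f$, the difference $fX-(B'(t))^2\bar f(\vec x)X$ has vanishing fibre integral and hence equals $D(hX)$ for some $h\in C^\infty_c(U,\R)$. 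Writing an arbitrary element of $C^\infty_c(U,\fk)$ as a finite sum $\sum_a f_a X_a$ then gives $\Gamma_c(U,\fK)\subseteq D\fg+[D\fg,D\fg]$.

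Next, for arbitrary $\xi\in\Gamma_c(M,\fK)$ I would cover the compact set $\supp\xi$ by finitely many good flowboxes — they exist around every point since $\bv_M$ is nowhere zero — take a subordinate partition of unity $\chi_i$ with $\sum_i\chi_i\equiv1$ near $\supp\xi$, and apply the previous step to each $\chi_i\xi$; summing gives $\xi\in D\fg+[D\fg,D\fg]$, proving $\fg=D\fg+[D\fg,D\fg]$. (For merely semisimple fibres one first passes to $\hfK\to\widehat{M}$ as in~\S\ref{vanseminaarsimpel}.) For the second identity, observe that $D\fg$ and every bracket $[D\xi,D\eta]$ formed in $\widehat{\fg}$ lie in $\R C\oplus\fg$, and that by the first part the projection of $D\fg+[D\fg,D\fg]$ onto $\fg$ is surjective; since $\R C$ is one-dimensional it therefore suffices to produce one element of $[D\fg,D\fg]$ with non-zero $C$-component. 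Assuming $\mu\neq0$ (if $\mu=0$ then $\omega_{\mu,\nabla}=0$, the extension is trivial, and this degenerate case is of no interest here), I would use the normal form $\omega(\xi,\eta)=-\int_M\kappa(\xi,L_\bv\eta)\,d\mu$ of Theorem~\ref{MeasureThm} and Remark~\ref{omooieconnectie}, work in a good flowbox $\simeq I\times U_0$ on which $L_\bv=\partial_t$, $\mu=\mu_0\otimes dt$ and $\mu_0\neq0$, fix $X\in\fk$ with $\kappa(X,X)=1$ and a bump function $B_2\in C^\infty_c(I,\R)$, set $B_1:=B_2'$, and pick $h\in C^\infty_c(U_0,\R)$ with $\int_{U_0}h^2\,d\mu_0\neq0$. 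Then $\zeta:=(B_1h)X$ and $\eta:=(B_2h)X$ satisfy $[D\zeta,D\eta]=\omega(D\zeta,D\eta)\,C$ in $\widehat{\fg}$ — the $\fg$-part vanishes since $[X,X]=0$ — and
\[
\omega(D\zeta,D\eta)
=-\int_{U_0}\!\Big(\int_I \partial_t(B_1h)\cdot\partial_t^2(B_2h)\,dt\Big)d\mu_0
=-\Big(\int_I(B_2'')^2\,dt\Big)\Big(\int_{U_0}h^2\,d\mu_0\Big)\neq0.
\]
Hence $C\in[D\fg,D\fg]$, which together with the first part yields $D\fg+[D\fg,D\fg]=\R C\oplus\fg=\R\oplus_\omega\fg$.

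I expect the main obstacle to be the local algebraic identity of the second paragraph: the \emph{defect} of $D$ on scalar functions — the functions of vanishing fibre integral — must be recovered from commutators of $D$-images, and it is precisely here that perfectness of the fibre $\fk$ is essential, through $X=\sum_j[Y_j,Z_j]$. Once this is established, the partition-of-unity patching and the construction realizing the central element $C$ are routine.
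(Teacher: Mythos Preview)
Your argument is correct and, for the first identity, essentially identical to the paper's: both localize to a good flowbox, characterize $D\fg$ via vanishing fibre integrals, invoke perfectness of $\fk$ to write $X=\sum[Y_j,Z_j]$, and produce the missing ``fibre-integral part'' as a commutator $[\zeta f_0 Y_j,\zeta\chi Z_j]$ with $\zeta\in \partial_t C^\infty_c(I,\R)$ (your $\zeta=B'$ is exactly the paper's $\zeta$ with $\int\zeta=0$).

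For the second identity your organization differs slightly from the paper's. The paper observes that the \emph{same} brackets $[\zeta f_0 Y_j,\zeta\chi Z_j]$ already satisfy $\omega(\zeta f_0 Y_j,\zeta\chi Z_j)=0$ (because $\int\zeta\zeta'\,dt=0$), so the decomposition of the first part holds verbatim in $\hat\fg$, giving $\fg\subseteq D\fg+[D\fg,D\fg]$ there; it then argues abstractly that $\omega\not\equiv0$ on $D\fg\times D\fg$ forces the subspace to be all of $\R C\oplus\fg$. You instead project to $\fg$ and separately construct an explicit element $[D\zeta,D\eta]\in\R^\times C$. Both work; the paper's route is marginally shorter (no new construction), while yours has the virtue of making the nontriviality hypothesis $\mu\neq0$ explicit. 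One small wording point: your sentence ``it suffices to produce one element of $[D\fg,D\fg]$ with non-zero $C$-component'' is justified only because the element you actually produce has \emph{zero} $\fg$-component; in general a non-zero $C$-component alone would not immediately yield $C\in D\fg+[D\fg,D\fg]$ without first knowing that pure $\fg$-elements lie in that subspace.
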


\begin{proof}
By a partition of unity argument, it suffices to prove this for
$\fg = C^{\infty}_{c}(U,\fk)$, where $U = I \times U_0$ is a good flowbox
(cf.\ Definition~\ref{def:goodflowbox}) and
$D\xi = \frac{d}{dt}\xi$ (cf.\ Lemma~\ref{Uedele}).
%
If $f\in C_{c}^{\infty}(U,\fk)$ and $X\in \fk$, then
$fX$ lies in $D\fg\subseteq \fg$
if and only if 
$f_0(x) := \int_{-\infty}^{\infty} f(t,x)dt$ is zero in $C^{\infty}_{c}(U_0,\R)$.
Fix
$\zeta \in C^{\infty}_{c}(I,\R)$ with $\int_{-\infty}^{\infty}\zeta(t)dt = 0$
and $\int_{-\infty}^{\infty}\zeta^2(t)dt = 1$.
Then $fX = (f - \zeta^2 f_0)X + \zeta^2 f_0 X$ with
$(f - \zeta^2 f_0)X \in D\fg$.
To show that
$\zeta^2 f_0 X \in [D\fg, D\fg]$, choose
$\chi \in C^{\infty}_{c}(U_0,\fk)$ such that $\chi|_{\mathrm{supp}(f_0)} = 1$, and choose
$Y_i, Z_i \in \fk$ such that
$X = \sum_{i=1}^{r}[Y_i,Z_i]$.
Since  $\sum_{i=1}^{r}[\zeta f_0 Y_{i},\zeta \chi Z_i] = \zeta^2 f_0 X$ with $\zeta f_0Y_{i},\zeta \chi Z_i \in D\fg$, we have
\begin{equation}\label{eq:grondslurf}
fX = (f - \zeta^2 f_0)X + \sum_{i=1}^{r}[\zeta f_0 Y_{i},\zeta \chi Z_i] \in D\fg + [D\fg,D\fg]\,.
\end{equation}
This holds for the Lie bracket in $\fg$ as well as for the Lie bracket
in $\widehat{\fg}$.
Since $\int_{-\infty}^{\infty} \zeta \frac{d}{dt}\zeta dt = 0$, we find
$\omega(\zeta f_0Y_{i},\zeta \chi Z_i)=0$.
This shows that $\fg = D\fg + [D\fg ,D\fg]$ in $\fg$ and also
$\fg \subseteq D\fg + [D\fg ,D\fg]$ in $\widehat{\fg}$.
Since $\omega$ is not identically zero on $D\fg \times D\fg$, the
subspace $D\g + [D\g,D\g]$ of $\hat\g$ cannot be proper
and thus $\R C \subeq D\g + [D\fg,D\fg]$. This shows that
$\hat\g = \R C +  \fg = D\fg + [D\fg ,D\fg]$.
\end{proof}

\begin{Theorem} {\rm(Reduction to compact structure algebra)} \label{red2cpt}
Let $M_i \subseteq M$ be a connected component
such that the (simple) fibers of $\fK|_{M_i}$ are not compact.
Suppose that $\bv_M$ is non-vanishing on $M_i$.
Then, after twisting by a functional $\chi \in \Gamma_{c}(M_{i},\fK)'$
if necessary,
every positive energy representation $\dd\rho$ of $\Gamma_{c}(M,\fK)$
vanishes on the ideal $\Gamma_{c}(M_i,\fK)$.
\end{Theorem}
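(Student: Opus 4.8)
The plan is to combine the Cauchy--Schwarz estimate of Lemma~\ref{CSgauge} with the fact that the measure $\mu$ attached to $\omega$ by Theorem~\ref{MeasureThm} vanishes on $M_i$, and then to invoke the perfectness statement of Proposition~\ref{DgDg}. Write $\fg_i := \Gamma_c(M_i,\fK)$ and $D := L_{\bv}$. Since $M_i$ is a connected component and (by the standing assumption of this section) $\bv_M$ is nowhere zero, Theorem~\ref{MeasureThm} applies: after replacing $\dd\rho$ by $\dd\rho + i\chi\one$ for a continuous $\chi\colon\fg\to\R$ as in Lemma~\ref{CSgauge} we may assume $\omega = \omega_{\mu,\nabla}$ with $\mu$ supported on the components carrying compact simple fibers, so that $\mu|_{M_i}=0$. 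In particular $\|\eta\|_\mu = 0$ for all $\eta\in\fg_i$, and by \eqref{eq:omega} together with the Cauchy--Schwarz inequality for $\langle\,\cdot\,,\,\cdot\,\rangle_\mu$ we get $\omega(\zeta,\eta) = \langle D\zeta,\eta\rangle_\mu = 0$ for all $\zeta,\eta\in\fg_i$. At the very end one restricts the global functional $\chi$ to the direct summand $\fg_i\subseteq\fg$ (so the restriction is again continuous) to obtain the $\chi\in\Gamma_c(M_i,\fK)'$ claimed in the statement.

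First I would show that $\dd\rho$ annihilates $D\fg_i$. For $\xi\in\fg_i$ with $[\xi,D\xi]=0$, Lemma~\ref{CSgauge} gives $\langle i\dd\rho(D\xi)\rangle_\psi^{\,2} \le 2\langle H\rangle_\psi\|D\xi\|_\mu^2 = 0$ for every smooth unit vector $\psi$, so the quadratic form of the symmetric operator $i\dd\rho(D\xi)$ on $\cH^\infty$ vanishes identically, whence $\dd\rho(D\xi)=0$ by polarization. To pass to an arbitrary $\xi\in\fg_i$, I would use a partition of unity subordinate to a cover of $M_i$ by good flowboxes (Definition~\ref{def:goodflowbox}), reducing to $\xi$ supported in one flowbox $U\simeq I\times U_0$. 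Fixing a basis $\{X_a\}$ of $\fk$ and the trivialization $\fK|_U\simeq U\times\fk$, write $\xi = \sum_a g_a X_a$ with $g_a\in C^\infty_c(U,\R)$; since $D$ acts as $\partial_t$ in a good flowbox (Lemma~\ref{Uedele}), $D\xi = \sum_a(\partial_t g_a)X_a = \sum_a D(g_aX_a)$, and each section $g_aX_a$ commutes with its own image $D(g_aX_a)=(\partial_t g_a)X_a$ because both take values in the abelian subalgebra $\R X_a$. Hence $\dd\rho(D(g_aX_a))=0$ by the previous paragraph, and summing gives $\dd\rho(D\xi)=0$.

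Next I would show that $\dd\rho$ annihilates $[D\fg_i,D\fg_i]$. Given $\zeta_1,\zeta_2\in D\fg_i$ we have $\dd\rho(\zeta_1)=\dd\rho(\zeta_2)=0$ on $\cH^\infty$, so the defining relation $[\dd\rho(x),\dd\rho(y)] = \dd\rho([x,y]) + \omega(x,y)i\one$ yields $\dd\rho([\zeta_1,\zeta_2]) = [\dd\rho(\zeta_1),\dd\rho(\zeta_2)] - \omega(\zeta_1,\zeta_2)i\one = -\omega(\zeta_1,\zeta_2)i\one$, and $\omega(\zeta_1,\zeta_2)=0$ by the first paragraph. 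Finally, applying Proposition~\ref{DgDg} with $M$ replaced by $M_i$ gives $\fg_i = D\fg_i + [D\fg_i,D\fg_i]$, so $\dd\rho$ vanishes on all of $\Gamma_c(M_i,\fK)$; relabelling $\chi|_{\fg_i}$ as the functional in the statement finishes the proof.

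The only genuinely load-bearing points are (i) that a section of the form ``scalar function times a fixed $X\in\fk$'' automatically commutes with its $D$-derivative, which is exactly what lets one bootstrap the constrained Cauchy--Schwarz conclusion ($[\xi,D\xi]=0\Rightarrow\dd\rho(D\xi)=0$) to all of $D\fg_i$; and (ii) that $\omega$ vanishes on $D\fg_i\times D\fg_i$ because $\mu|_{M_i}=0$, which kills the central term when passing from $D\fg_i$ to $[D\fg_i,D\fg_i]$. Both are immediate once stated, so I expect no serious obstacle here; the only thing requiring care is the partition-of-unity reduction to flowboxes and the bookkeeping of the twisting functional $\chi$, which are routine.
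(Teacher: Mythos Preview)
Your argument is correct and follows essentially the same route as the paper: reduce to good flowboxes, use that elements of $D\fg_i$ are finite sums of $(\partial_t g)X$ with $[gX,(\partial_t g)X]=0$, apply Lemma~\ref{CSgauge} with $\mu|_{M_i}=0$ to kill $\dd\rho$ on $D\fg_i$, and then invoke Proposition~\ref{DgDg}. The paper's proof is terser---it works locally throughout and leaves the passage from $D\fg_U$ to $[D\fg_U,D\fg_U]$ implicit in the appeal to Proposition~\ref{DgDg}---whereas you spell out the commutator relation $\dd\rho([\zeta_1,\zeta_2]) = [\dd\rho(\zeta_1),\dd\rho(\zeta_2)] - \omega(\zeta_1,\zeta_2)i\one$ and the vanishing of the central term explicitly; this is a welcome clarification rather than a different idea.
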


\begin{proof}
By a partition of unity argument, it suffices to
consider the
restriction of $\dd\rho$ to $C^{\infty}_{c}(U,\fk)$ for a good flowbox $U\subseteq M_i$
(cf.\ Definition~\ref{def:goodflowbox}).
Every $\xi\in DC^{\infty}_{c}(U,\fk)$ is a finite sum of elements
of the form $f' X$, with $f \in C^{\infty}_{c}(U,\R)$ and $X\in \fk$.
Since $\fk$ is noncompact,
$\mu$ vanishes on $M_i$
by Theorem~\ref{MeasureThm}. Since $\|f'X\|_{\mu} = 0$ and
$[fX_i,f'X_i] = 0$, it follows from
Lemma~\ref{CSgauge} that,
after twisting
by $\chi$ so as to change $\omega$ to $\omega_{\mu,\nabla}$, we have
$\dd\rho(f'X_i) =0$.
Since ${\dd\rho(DC_{c}^{\infty}(U,\fk)) = \{0\}}$,
Proposition~\ref{DgDg} yields $\dd\rho(C_{c}^{\infty}(U,\fk)) = \{0\}$.
Thus $\dd\rho(\Gamma_{c}(M_i,\fK)) = \{0\}$, as required.
\end{proof}

This shows that we can restrict attention to bundles $\fK \rightarrow M$ with \emph{compact}
simple fibers.
In conjunction with
Proposition~\ref{DgDg}, Lemma~\ref{CSgauge} can also be used to prove the following.
\begin{Corollary}\label{zondagskind}
If $\fg = \Gamma_{c}(M,\fK)$, where $\fK \rightarrow M$ has compact simple fibers,
then, after twisting by $\chi \in \Gamma_{c}(M,\fK)'$
if necessary, every positive energy representation $\dd\rho$ of $\widehat{\fg}$
vanishes on the ideal
\[I_{\mu} := \{\xi \in \fg \,;\, \mu(\{x\in M\,;\, \xi(x)\neq 0\}) = 0\}
\]
of sections that vanish $\mu$-almost everywhere.
\end{Corollary}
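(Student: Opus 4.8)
The plan is to reduce, via a partition of unity, to the case of a good flowbox $U \simeq I \times U_0$, where $\Gamma_c(U,\fK) \simeq C^\infty_c(U,\fk)$, $D\xi = \partial_t \xi$, and $\mu = \mu_0 \otimes dt$ with $\mu_0$ a positive regular Borel measure on $U_0$ (this is the content of Proposition~\ref{taart} and Theorem~\ref{MeasureThm}). After twisting $\dd\rho$ by the functional $\chi$ supplied by Lemma~\ref{CSgauge}, I may assume the cocycle is $\omega_{\mu,\nabla}$ and that the Cauchy--Schwarz estimate \eqref{eq:CSestuary} holds verbatim. The goal is to show $\dd\rho$ annihilates $I_\mu \cap C^\infty_c(U,\fk)$, i.e.\ every $\xi \in C^\infty_c(U,\fk)$ with $\xi = 0$ $\mu$-almost everywhere.

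First I would dispose of elements of the form $f'X$ with $f \in C^\infty_c(U,\R)$, $X \in \fk$: since $\|f'X\|_\mu^2 = \|X\|_\kappa^2 \int_{U_0}\big(\int_I |\partial_t f|^2\, dt\big)\,d\mu_0$ need not vanish, this is \emph{not} directly good enough. The right move is subtler. If $\xi \in C^\infty_c(U,\fk)$ vanishes $\mu$-a.e., then in particular $\xi$ vanishes $\mu_0$-a.e.\ on $U_0$ at every fixed $t \in I$ — wait, more carefully: $\xi$ vanishes on a set of full $\mu_0 \otimes dt$-measure, so for $dt$-a.e.\ $t$ the slice $\xi(t,\cdot)$ vanishes $\mu_0$-a.e.; by continuity and the fact that $\mu_0$ is determined by its values on open sets, $\xi(t,\cdot)$ vanishes on $\supp(\mu_0)$ for all $t$. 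Hence $\xi$ itself vanishes on $\supp(\mu) = \supp(\mu_0) \times I$. Now pick $\eta \in C^\infty_c(U,\fk)$ with $\eta = 1$ on a neighbourhood of $\supp(\xi)$; decompose using Proposition~\ref{DgDg}: write $\xi = \sum_i (f_i - \zeta^2 (f_i)_0)X_i + \sum_{i,j}[\zeta (f_i)_0 Y_{ij}, \zeta \eta Z_{ij}]$ where the first batch of terms lies in $D\fg$ and so does each bracket factor. The key point: each generator $\zeta_k(x) g_k(t)$ appearing here is supported in $\supp(\xi) \times I \subseteq U \setminus (\supp(\mu_0)\times I)$, hence has $\|D(\,\cdot\,)\|_\mu = 0$, and each commutes with its own $D$-image (as in the proof of Proposition~\ref{DgDg}). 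Lemma~\ref{CSgauge} applied to these generators gives $\langle i\dd\rho(D(\text{generator}))\rangle_\psi = 0$ for all smooth unit $\psi$, whence $\dd\rho$ vanishes on all the $D\fg$-pieces, and since $\omega$ also vanishes on the relevant pairs, $\dd\rho$ of the brackets vanishes too. Therefore $\dd\rho(\xi) = 0$.

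The main obstacle I anticipate is the passage from ``$\langle i\dd\rho(\eta)\rangle_\psi = 0$ for all smooth unit vectors $\psi$'' to ``$\dd\rho(\eta) = 0$ as an operator on $\cH^\infty$'': since $i\dd\rho(\eta)$ is symmetric, vanishing of all diagonal expectation values $\langle \psi, i\dd\rho(\eta)\psi\rangle$ does force the operator to be zero by polarization, but one must be a little careful that $\cH^\infty$ is stable and dense — this is fine here because $\cH^\infty$ is $\dd\rho(\hat\g)$-invariant. A secondary subtlety is making the ``$\xi$ vanishes on $\supp(\mu)$'' step airtight: one should argue that $I_\mu$ is exactly the set of smooth sections vanishing on the closed set $\supp(\mu)$ (using that $\mu$ is regular and $\supp(\mu)$ is a union of the $1$-dimensional orbits once the Localization Theorem machinery is in place, though for this Corollary only the measure-theoretic statement is needed), and that this is an ideal in $\fg$. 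Finally, one patches the local conclusions together: $I_\mu = \sum_\alpha (I_\mu \cap \Gamma_c(U_\alpha,\fK))$ over a locally finite flowbox cover refining any given compact support, so $\dd\rho(I_\mu) = 0$, and after twisting back we are done at the Lie algebra level, which is all the Corollary claims.
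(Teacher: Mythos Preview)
Your approach is essentially the paper's: reduce to a good flowbox via a partition of unity, write $\xi = fX$, decompose $fX$ using the formula from Proposition~\ref{DgDg}, and kill each piece with the Cauchy--Schwarz estimate Lemma~\ref{CSgauge}. The paper phrases the key measure-theoretic fact slightly differently---it works directly with the open support $\mathcal{O}_\xi = \{f \neq 0\}$ and observes that $\mu_0\big(p(\mathcal{O}_\xi)\big) = 0$ since $\mathcal{O}_\xi$ is open and $\mu = dt \otimes \mu_0$---but this is equivalent to your observation that a continuous section in $I_\mu$ vanishes on $\supp(\mu)$.

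There is one inaccurate claim: you assert that \emph{every} generator in the decomposition, including the second bracket factor $\zeta\eta Z_{ij}$, is supported away from $\supp(\mu_0) \times I$. This is not guaranteed. The cutoff $\eta$ (the paper's $\chi \in C^\infty_c(U_0,\R)$) must equal $1$ on $\supp(f_0)$, and $\supp(f_0) \subeq \oline{p(\{f\neq 0\})}$ may touch $\supp(\mu_0)$ at its boundary, so $\zeta\eta Z_{ij}$ can have positive $\mu$-norm. Fortunately the argument does not need this: it suffices that $\dd\rho$ annihilates the \emph{first} bracket factor $\zeta f_0 Y_{ij}$ (which is in $D\fg$, commutes with its primitive, and has $\|\cdot\|_\mu = 0$ since $f_0$ vanishes $\mu_0$-a.e.). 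Then $\dd\rho([\zeta f_0 Y_{ij}, \zeta\eta Z_{ij}]) = [\dd\rho(\zeta f_0 Y_{ij}), \dd\rho(\zeta\eta Z_{ij})] + i\omega(\zeta f_0 Y_{ij}, \zeta\eta Z_{ij})\one = 0$, the cocycle term vanishing because $\zeta f_0 Y_{ij}$ and its derivative vanish $\mu$-a.e. This is exactly how the paper proceeds; with this correction your argument is complete.
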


\begin{proof}
By a partition of unity argument, we may assume that
$\fg = C^{\infty}_{c}(U,\fk)$, with $U \subseteq M$ a good flowbox
(Definition~\ref{def:goodflowbox}).
Let $\xi \in I_{\mu}$ and consider the subset
$\mathcal{O}_{\xi} := \{x\in M\,;\, \xi(x)\neq 0\}$, which is the ``open support''
of $\xi$.
Since $\xi$ is a linear combination of terms $fX$ with smaller or equal open
support, we may assume that $\xi = fX$ for $f\in C^{\infty}_{c}(U,\R)$ and
$X \in \fk$.
If $fX \in D\fg$, then $fX \in I_{\mu}$ implies $\|fX\|_{\mu} = 0$ and hence
$\dd\rho(fX) = 0$ by Lemma \ref{CSgauge}.
Decompose $fX$ as in equation~\eqref{eq:grondslurf},
\[ fX = (f-\zeta^2 f_0)X + \sum_{i=1}^{r}[\zeta f_0 Y_i,\zeta \chi Z_i]\,.\]
As  $\mathcal{O}_\xi$ is open and $\mu = dt \otimes \mu_0$, we have
$\mu(\mathcal{O}_\xi) = 0$ if and only if
$\mu_0\big(p(\mathcal{O}_\xi)\big)$ vanishes, where $p \colon U \rightarrow U_0$
is the projection on the orbit space.
Now ${(f - \zeta^2f_0)X}$ and $\zeta f_0 Y_i$ are in $D\fg$ and vanish outside
$p^{-1}p(\mathcal{O}_{\xi})$, so that their images under $\dd\rho$ vanish.
Indeed, as these are both of the form $L_{\bv}\eta$ with $\|L_{\bv}\eta\|_{\mu} = 0$
and $[L_{\bv}\eta,\eta] = 0$, this follows from Lemma~\ref{CSgauge}.
It follows that $\dd\rho(fX) = 0$, as required.
\end{proof}

\subsection{Extending the estimates from \texorpdfstring{$D\fg$ to $\fg$}{Dg to g}}\label{sec:nummertweevanzes}

To see that $\dd\rho$ factors through a linear map on
$\fg/I_{\mu}$, we used the Cauchy--Schwarz estimate of Lemma~\ref{CSgauge}.
Using the \emph{refined} Cauchy--Schwarz estimate of Lemma~\ref{4pointGauge},
we then extend $\dd\rho$ to a linear map on
$\overline{\fg/I_{\mu}}$, the $L^2$-completion of $\fg/I_{\mu}$ with respect to
the measure $\mu$.

Note that an extension to the subspace $\overline{D\fg/I_{\mu}}\subseteq \overline{\fg/I_{\mu}}$ can already be achieved
using the `ordinary' Cauchy--Schwarz estimate of Lemma~\ref{CSgauge}.
Indeed, for $\xi \in D\fg$, one can use \eqref{eq:CSestuary} to show that
$\dd\rho(\xi)$ satisfies an operator inequality of the form
\begin{equation}\label{eq:goal}
\pm i \dd\rho(\xi) \leq \|\xi\|_{\mu}(\alpha\one + \beta H)
\end{equation}
for certain constants $\alpha,\beta > 0$.
With this, one can prove that the linear map 
$\dd\rho \colon D\fg/I_{\mu} \rightarrow \mathrm{End}(\cH^{\infty})$
is weakly continuous with respect to the $L^2$-topology on $D\fg/I_{\mu}$, and
that it extends to the $L^2$-completion $\overline{D\fg/I_{\mu}}$.

In order to extend $\dd\rho$ to the full space $\overline{\fg/I_{\mu}}$,
however, we will need an analogue of \eqref{eq:goal} that holds
not just for $\xi\in D\fg$, but
for all $\xi \in \fg$.
This is Proposition~\ref{Prop:rafelfluitje},
which we prove using
the refined Cauchy--Schwarz
estimate of Lemma~\ref{4pointGauge}.

\subsubsection{The local gauge algebra with fibers $\fk = \mathfrak{su}(2,\C)$}
\label{subsubsec:su2}
First, we restrict our attention to the
compact structure algebra $\fk = \mathfrak{su}(2,\C)$.
We will later derive the general case from this example.
Let $\kappa(a,b) = -\mathrm{tr}(ab)$
be the invariant bilinear form on $\fk$, normalized so that
elements $x$ with 
\[ \Spec(\ad x) = \{0,\pm  2i\} \quad \mbox{  satisfy } \quad \kappa(x,x) = 2.\]

Further, let $U' \subset U$ be a \emph{good pair of flowboxes}
in the sense of the following definition.
We write $U \Subset V$ if the closure of $U$ is contained in an open subset of $V$.

\begin{Definition}[Good pair of flowboxes]\label{def:goodpair}
Let $U' \simeq I' \times U'_0$ and $U \simeq I \times U_0$ be good flowboxes in the sense of
Definition~\ref{def:goodflowbox}, and let $U'\subset U$. We call $U'\subset U$ a
\emph{good pair of flowboxes} if $I'\Subset I$ and $U'_{0} \Subset U_0$.
\end{Definition}
\begin{Remark}
Note that $U'_{0} = U_{0} = \{0\}$ is allowed!
Unless specified otherwise,
we assume that $I'= (-T'/2,T'/2)$ and $I = (-T/2,T/2)$
with $0 < T'< T< \infty$.
\end{Remark}

\begin{Remark}
Recall that $M$ is equipped with a flow-invariant measure $\mu$, which takes the form
$dt \otimes \mu_0$ on $I \times U_0$. To a good pair of flowboxes, we can therefore assign
the number
\[{\textstyle
\frac{T}{T-T'}\frac{\mu_0(U_0)}{T'} = \frac{\mu(U)}{T'(T-T')}\,,
}
\]
which will play a significant role throughout this section. If this is not too large,
we think of the flowboxes as `sufficiently quadratic'.
\end{Remark}

In view of Lemma~\ref{Uedele}, we restrict attention
to the case where the Lie algebra is $\fg = C^{\infty}_{c}(I \times U_{0}, \fk)$, and
$\bv = \partial_{t}$.
For $z \in C^{\infty}_{c}(U',\C)$, we define $\xi(z) \in \fg$ by
\begin{equation}\label{eq:xiofz}
\xi(z)(t,u) := \begin{pmatrix}
             0& z(t,u)\\
		-\overline{z}(t,u) & 0
             \end{pmatrix}
\end{equation}
and note that $[\xi, \frac{\partial}{\partial t} \xi] =0$.
We also consider the element $\eta \in \fg$ defined by
\begin{equation}\label{kabouterspel}
\eta(t,u) := \chi(u)\begin{pmatrix}
             i\tau(t)& 0\\
		0 & -i\tau(t)
             \end{pmatrix}\,,
\end{equation}
where
$\tau \in C^{\infty}_{c}(I,\R)$ and $\chi \in C^{\infty}_{c}(U_0,\R)$ are such
such that $\tau(t) = t$ for $t \in I'$ and $\chi(u) = 1$ for $u \in U'_{0}$.
It also satisfies $[\tau, \frac{\partial}{\partial t} \tau] =0$.
Thus $\chi(u)\tau(t) = t$ on $U'$, hence in particular on the support of
every $z \in C^{\infty}_{c}(U',\C)$.

On $C^{\infty}_{c}(\R,\C)$, we define the usual scalar product
\[\langle f,g \rangle_{dt} := \int_{-\infty}^{\infty}\overline{f(t)}g(t) dt\]
and the Fourier transform
\[\widehat{f}(k) := \int_{-\infty}^{\infty}f(t)e^{-ikt} dt, \quad
k \in \R.\]
For $z\in C^{\infty}_{c}(U,\C)$, we will denote by $\widehat{z}(k,u)$
the `parallel' Fourier transform, i.e.\
the Fourier transform of $t \mapsto z(t,u)$ evaluated at $k$.

We can choose $\tau$ such that $\|\tau'\|_{dt}^2$
is arbitrarily close to $\frac{TT'}{T-T'}$, and we can
choose $0 \leq \chi \leq 1$ so that $\|\chi\|^2_{\mu_0} \leq \mu_0(U_0)$.
Thus $\|\chi\tau'\|^2_{\mu}$ can be chosen arbitrarily close
to $\frac{TT'}{T-T'}\mu_0(U_0)$. Therefore, for every
$\varepsilon > 0$, there exists an $\eta \in C^{\infty}_{c}(U,\C)$
as in \eqref{kabouterspel} satisfying
\begin{equation}\label{blijetuup}
\|L_{\bv}\eta\|^2_{\mu} = 2{\textstyle \frac{TT'}{T-T'}}\mu_0(U_0)
+ \varepsilon\,.
\end{equation}
For $\eta$ as in (\ref{kabouterspel}), Lemma~\ref{4pointGauge} yields the following estimate.
\begin{Proposition}\label{schatje}
Let $z \in C^{\infty}_{c}(U',\C)$, and let $k\in \R$ be such
that $\widehat{z}(k,u)= 0$ for all $u\in U'_{0}$. Then we have
\[
\big\langle i\dd\rho(\xi(z))\big\rangle_{\psi}^2 \leq
4 \|z\|^2_{\mu} \Big(\langle H\rangle_{\psi} - \half k \langle i\dd\rho(L_{\bv}\eta)\rangle_{\psi}
+ {\textstyle \frac{1}{8}} k^2 \|L_{\bv}\eta\|^2_{\mu}\Big)\,.
\]
\end{Proposition}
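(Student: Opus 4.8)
The plan is to deduce this directly from the refined Cauchy--Schwarz estimate of Lemma~\ref{4pointGauge}, applied with $\eta$ as in \eqref{kabouterspel}, with $\xi$ a suitable $\partial_t$-primitive of $\xi(z)$ twisted by $e^{-ikt}$, and with the auxiliary parameter set to $s:=-k/2$. With these choices the error term $\langle\frac{e^{-s\ad_\eta}-\one}{\ad_\eta}(L_{\bv}\xi),L_{\bv}\eta\rangle_\mu$ appearing in Lemma~\ref{4pointGauge} will vanish identically, so that the inequality there collapses to exactly the asserted one.

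First I would work out the adjoint action of $\eta$ on the sections $\xi(\zeta)$, $\zeta\in C^\infty_c(U',\C)$. On $U'$ one has $\chi\equiv 1$ and $\tau(t)=t$, so $\eta$ restricts to $t\,c_0$ with $c_0=\mathrm{diag}(i,-i)$; a one-line matrix computation gives $[c_0,\xi(\zeta)]=\xi(2i\zeta)$, hence $\ad_\eta\xi(\zeta)=\xi(2it\,\zeta)$ pointwise on $U'$, and therefore --- everything in sight being supported in $U'$ --- $e^{-s\ad_\eta}\xi(\zeta)=\xi(e^{-2ist}\zeta)$ and $\frac{e^{-s\ad_\eta}-\one}{\ad_\eta}\xi(\zeta)=\xi\!\big(\tfrac{e^{-2ist}-1}{2it}\,\zeta\big)$. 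Two features of these formulas matter below: both operators preserve supports, and both map sections with values in the off-diagonal (root) part of $\fk$ to sections of the same type.

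Next I would use the hypothesis $\widehat z(k,u)=0$ to build $\xi$. Since $t\mapsto e^{-ikt}z(t,u)$ has vanishing integral for every $u$, the section $Z(t,u):=\int_{-\infty}^{t}e^{-ik\tau}z(\tau,u)\,d\tau$ is smooth, with $\supp(Z)\subseteq\supp(z)\subseteq U'$ (the right end of the $t$-support is exactly where $\widehat z(k,u)=0$ is used), so $\xi(Z)\in\fg$, and $\partial_t Z=e^{-ikt}z$. I then apply Lemma~\ref{4pointGauge} with $\xi:=\xi(Z)$, $\eta$ as in \eqref{kabouterspel}, and $s:=-k/2$; the commutativity hypotheses hold since $[\xi(Z),\partial_t\xi(Z)]=0$ (the observation after \eqref{eq:xiofz}) and $[\eta,\partial_t\eta]=0$ ($\eta$ is everywhere a scalar multiple of $c_0$). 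Now $L_{\bv}\xi=\partial_t\xi(Z)=\xi(e^{-ikt}z)$, so $e^{-s\ad_\eta}(L_{\bv}\xi)=\xi(e^{ikt}e^{-ikt}z)=\xi(z)$, which produces the left-hand side $\langle i\dd\rho(\xi(z))\rangle_\psi$ to be estimated; moreover $\|L_{\bv}\xi\|_\mu^2=\int\kappa(\xi(e^{-ikt}z),\xi(e^{-ikt}z))\,d\mu=2\|z\|_\mu^2$, so $2\|L_{\bv}\xi\|_\mu^2=4\|z\|_\mu^2$, and substituting $s=-k/2$ turns $\langle H\rangle_\psi+s\langle i\dd\rho(L_{\bv}\eta)\rangle_\psi+\tfrac{s^2}{2}\|L_{\bv}\eta\|_\mu^2$ into $\langle H\rangle_\psi-\tfrac12 k\langle i\dd\rho(L_{\bv}\eta)\rangle_\psi+\tfrac18 k^2\|L_{\bv}\eta\|_\mu^2$.

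It remains to see that the error term $\langle\frac{e^{-s\ad_\eta}-\one}{\ad_\eta}(L_{\bv}\xi),L_{\bv}\eta\rangle_\mu$ is zero; this is the only substantive point. By the support- and grading-properties recorded above, $\frac{e^{-s\ad_\eta}-\one}{\ad_\eta}(L_{\bv}\xi)$ is supported in $\supp(z)\subseteq U'$ and takes values in the off-diagonal part of $\fk$, whereas there $L_{\bv}\eta=\chi\,\tau'\,c_0$ is valued in the Cartan line $\R c_0$; since the invariant form $\kappa$ pairs the root part of $\fk$ trivially with its Cartan, the integrand vanishes identically. This completes the proof. I expect the main obstacle to be essentially organizational: choosing $s=-k/2$ so that all three terms on the right of Lemma~\ref{4pointGauge} fall into place simultaneously, and recognizing that the cross term disappears for the structural (grading) reason above rather than through an estimate; the construction of the primitive $Z$, which is where the Fourier hypothesis $\widehat z(k,\cdot)=0$ enters, is the other point that needs care.
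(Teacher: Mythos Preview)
Your proof is correct and follows essentially the same route as the paper's: apply Lemma~\ref{4pointGauge} with the fixed $\eta$ of \eqref{kabouterspel}, use $\ad_\eta\xi(\zeta)=\xi(2it\,\zeta)$ on $U'$ so that $e^{-s\ad_\eta}(L_\bv\xi)=\xi(z' e^{-2ist})$, observe that the cross term vanishes because $\kappa(\xi(\,\cdot\,),L_\bv\eta)\equiv 0$ (off-diagonal versus diagonal), and then match the Fourier condition $\widehat z(k,\cdot)=0$ to $s=-k/2$. The only difference is organizational: the paper applies the lemma to a generic $\xi(z)$, obtains an inequality for $\xi(z' e^{-2ist})$, and then remarks that $w=z' e^{-2ist}$ ranges exactly over those $w$ with $\widehat w(-2s,\cdot)=0$ and $\|w\|_\mu=\|z'\|_\mu$; you instead start from the target $z$, build the primitive $Z$ explicitly, and plug $\xi(Z)$ into the lemma. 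Same computation, inverse parametrizations.
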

\begin{proof}
Since $[\xi(z),\xi(z)'] = 0$ and $[\eta,\eta'] = 0$, we may
apply Lemma~\ref{4pointGauge}. First, we
evaluate the left hand side of
inequality~(\ref{waterhoen}).
Since $\tau(t)\chi(u) = t$ on $\mathrm{supp}(\xi(z))$, we have
$\ad_{\eta}(\xi(z)) = \xi(2tiz)$. Since $L_{\bv}\xi(z) = \xi(z')$, we have
\[e^{-s\ad_\eta}(L_{\bv}\xi(z)) = \xi(z'e^{-2its})\,.\]
As $\kappa(\xi(z),L_{\bv}\eta) = 0$ for all $z\in C^{\infty}_{c}(U',\C)$, we have
\[\left\langle
\frac{e^{-s\ad_{\eta}} - \mathbf{1}}{\ad_{\eta}}(L_{\bv}\xi(z)),
L_{\bv}\eta\right\rangle_{\mu}
= \left \langle
\xi\left(\frac{e^{-2tis} - 1}{2it}z'\right),L_{\bv}\eta\right\rangle_{\mu} = 0\,.\]
On the right hand side of
inequality~(\ref{waterhoen}),
we have $\|L_{\bv}\xi(z)\|^{2}_{\mu} = 2\|z'\|^{2}_{\mu}$.
We thus obtain
\[
\Big\langle i\dd\rho\big(\xi(z'e^{-2ist})\big)\Big\rangle_{\psi}^2 \leq
4 \|z'\|^2_{\mu} \Big(\langle H\rangle_{\psi} + s \langle i\dd\rho(L_{\bv}\eta)\rangle_{\psi}
+ {\textstyle \frac{s^2}{2}} \|L_{\bv}\eta\|^2_{\mu}\Big)
\]
for all $s\in \R$ and $z \in C^{\infty}_{c}(U',\C)$.
Note that $w \in C^{\infty}_{c}(U',\C)$ is of the form $w = z'e^{-2ist}$
for some $z \in C^{\infty}_{c}(U,\C)$ if and only if the parallel Fourier transform
$\widehat{w}(k,u)$
vanishes for $k = -2s$.
Since in that case $\|w\|^2_{\mu} = \|z'\|^{2}_{\mu}$, the proposition follows.
\end{proof}
We thus obtain a $1$-parameter family of inequalities indexed by~$k\in \R$,
the case $k=0$ reducing to the Cauchy--Schwarz estimate because
${\widehat{z}(0,u) = 0}$ is equivalent to
$\xi(z) \in D\fg$.
The idea of the following proposition is
to lift the requirement that the Fourier transform vanish
by showing that every ${z \in C^{\infty}_{c}(U',\C)}$ can be written,
in a controlled way, as the
sum of two functions whose parallel Fourier transform vanishes
for some $k\in \R$.
%

\begin{Proposition}\label{aenb}
There exist $a, b \in \R$ such that, for all $z \in C^{\infty}_{c}(U',\C)$
with $U' = I' \times U'_0$ contained in $U = I \times U_0$,
we have
\begin{equation}\label{smeerkees}
\big\langle i\dd\rho(\xi(z))\big\rangle_{\psi}^2 \leq
(a + b \langle H \rangle_{\psi}) \|\xi(z)\|^2_{\mu}\,
\end{equation}
for constants $a$ and $b$ that depend on the interval lengths
$T = |I|$ and $T' = |I'|$ and on $\mu_{0}(U_0)$, but not on
$z$ or~$\psi$.
\end{Proposition}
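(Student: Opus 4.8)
The plan is to carry out the Fourier-theoretic idea announced just before the statement: write each $z\in C^\infty_c(U',\C)$ as $z=z_1+z_2$, where the parallel Fourier transform $\widehat{z_1}(\cdot,u)$ vanishes at a fixed frequency $k_1$ for every $u$ and $\widehat{z_2}(\cdot,u)$ vanishes at a second fixed frequency $k_2\neq k_1$, and then apply Proposition~\ref{schatje} separately to $\xi(z_1)$ (with $k=k_1$) and to $\xi(z_2)$ (with $k=k_2$). After this the only remaining issue is to absorb the term $\langle i\dd\rho(L_{\bv}\eta)\rangle_\psi$ appearing on the right of Proposition~\ref{schatje}, which is done with the Cauchy--Schwarz estimate of Lemma~\ref{CSgauge}.

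For the splitting, fix once and for all $k_1=1$ and $k_2=0$, normalise $I'=(-T'/2,T'/2)$ (permissible by the Remark after Definition~\ref{def:goodpair}), and choose $g\in C^\infty_c(I',\C)$ with $\widehat g(k_1)=1$ and $\widehat g(k_2)=0$; such a $g$ exists since the functionals $f\mapsto\widehat f(k_1)$ and $f\mapsto\widehat f(k_2)$ are linearly independent on $C^\infty_c(I',\C)$. Given $z$, set $z_2(t,u):=\widehat z(k_1,u)\,g(t)$ and $z_1:=z-z_2$; both lie in $C^\infty_c(U',\C)$, and $\widehat{z_1}(k_1,u)=\widehat z(k_1,u)\bigl(1-\widehat g(k_1)\bigr)=0$ while $\widehat{z_2}(k_2,u)=\widehat z(k_1,u)\widehat g(k_2)=0$ for all $u$. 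Since $\mu=dt\otimes\mu_0$ on $U$ and $|\widehat z(k_1,u)|^2\le T'\int_{I'}|z(t,u)|^2\,dt$ by Cauchy--Schwarz in $t$, we obtain
\[
\|z_2\|_\mu^2=\|g\|_{L^2(dt)}^2\int_{U'_0}|\widehat z(k_1,u)|^2\,d\mu_0(u)\le T'\|g\|_{L^2(dt)}^2\,\|z\|_\mu^2 ,
\]
and hence $\|z_1\|_\mu\le\bigl(1+\sqrt{T'}\,\|g\|_{L^2(dt)}\bigr)\|z\|_\mu$. (If one insists that each summand satisfy the commutation hypothesis underlying Proposition~\ref{schatje}, one first splits $z$ into real and imaginary parts and runs the analogous cosine/sine version of the above on each; the $L^2(\mu)$-bounds are unchanged.) For an arbitrary interval $I'$ of length $T'$ one simply translates $g$, so the constants produced here depend only on $T'$.

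Next, feed $z_1$ and $z_2$ into Proposition~\ref{schatje}. For $z_2$ the relation $\widehat{z_2}(0,u)=0$ means $\xi(z_2)\in D\fg$, so the case $k=0$ yields $\langle i\dd\rho(\xi(z_2))\rangle_\psi^2\le 4\|z_2\|_\mu^2\langle H\rangle_\psi$. For $z_1$ with $k=k_1=1$ we get
\[
\langle i\dd\rho(\xi(z_1))\rangle_\psi^2\le 4\|z_1\|_\mu^2\Bigl(\langle H\rangle_\psi-\tfrac12\langle i\dd\rho(L_{\bv}\eta)\rangle_\psi+\tfrac18 N\Bigr),
\]
where $N:=\|L_{\bv}\eta\|_\mu^2$ is the fixed number from \eqref{blijetuup}, depending only on $T,T',\mu_0(U_0)$. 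The offending term is linear, not quadratic, in the expectation values, so I would absorb it via Lemma~\ref{CSgauge} applied to $\eta$ (whose hypothesis $[L_{\bv}\eta,\eta]=0$ holds): $\langle i\dd\rho(L_{\bv}\eta)\rangle_\psi^2\le 2N\langle H\rangle_\psi$, hence $|\langle i\dd\rho(L_{\bv}\eta)\rangle_\psi|\le\sqrt{2N}\sqrt{\langle H\rangle_\psi}\le\tfrac{\sqrt{2N}}{2}\bigl(1+\langle H\rangle_\psi\bigr)$. Substituting, the right-hand side for $z_1$ becomes $4\|z_1\|_\mu^2(\alpha+\beta\langle H\rangle_\psi)$ for explicit $\alpha,\beta$ depending only on $N$, hence only on $T,T',\mu_0(U_0)$.

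Finally, since $z\mapsto\xi(z)$ is linear, $\langle i\dd\rho(\xi(z))\rangle_\psi=\langle i\dd\rho(\xi(z_1))\rangle_\psi+\langle i\dd\rho(\xi(z_2))\rangle_\psi$, so $(p+q)^2\le 2p^2+2q^2$ together with the two pointwise estimates and the norm bounds $\|z_i\|_\mu\le C_i\|z\|_\mu$ gives $\langle i\dd\rho(\xi(z))\rangle_\psi^2\le(a+b\langle H\rangle_\psi)\|z\|_\mu^2$ with $a,b$ assembled from $\alpha,\beta,C_1,C_2$; using $\|\xi(z)\|_\mu^2=2\|z\|_\mu^2$ this is exactly \eqref{smeerkees} after rescaling $a,b$ by $\tfrac12$, and all constants depend only on $T=|I|$, $T'=|I'|$ and $\mu_0(U_0)$, not on $z$ or $\psi$. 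The part requiring genuine care — and the only real obstacle — is the splitting step: one must produce the two pieces with vanishing parallel Fourier transform at a prescribed frequency, with $L^2(\mu)$-norms controlled by $\|z\|_\mu$ uniformly over all good pairs $U'\subset U$ with the given interval lengths and $\mu_0(U_0)$, and (when Proposition~\ref{schatje} is invoked verbatim) in a form for which its commutation hypothesis applies.
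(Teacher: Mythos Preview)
Your proof is correct and follows the same overall strategy as the paper: split $z$ into two pieces whose parallel Fourier transforms vanish at prescribed frequencies, apply Proposition~\ref{schatje} to each, bound the $L^2(\mu)$-norms of the pieces by $\|z\|_\mu$, and combine. The splitting functions are chosen slightly differently (the paper subtracts off the zero mode via a $\zeta$ with $\hat\zeta(k)=0$, you subtract off the $k_1$-mode via a $g$ with $\hat g(0)=0$), but this is cosmetic.

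The one substantive difference is the treatment of the cross term $-\tfrac{k}{2}\langle i\dd\rho(L_{\bv}\eta)\rangle_\psi$ arising from Proposition~\ref{schatje}. You bound it by invoking Lemma~\ref{CSgauge} for $\eta$ and then AM--GM, which works. The paper instead observes that, since the inequality is established for each fixed $\psi$ separately, one may choose the sign of $k$ (replacing $\zeta$ by $\bar\zeta$) so that $k\langle i\dd\rho(L_{\bv}\eta)\rangle_\psi\le 0$, and then simply drop the term. This sign trick is cleaner and yields the explicit constants \eqref{aaa}--\eqref{beee} that feed directly into the optimization in Proposition~\ref{prop:kieseend}; your route would produce messier constants but is perfectly valid for the qualitative statement.

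Your parenthetical about splitting into real and imaginary parts is unnecessary: the paper's Proposition~\ref{schatje} is stated for arbitrary complex $z\in C^\infty_c(U',\C)$, so it applies directly to your $z_1$ and $z_2$.
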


\begin{proof}
Let $k$ be an arbitrary real number not equal to zero, and
choose a function $\zeta \in C^{\infty}_{c}(I',\C)$ with
$\widehat{\zeta}(0) \neq 0$ and $\widehat{\zeta}(k) = 0$.
(Such functions certainly exist.
 For instance, one can choose
$\zeta(t) = \alpha'(t) e^{ikt}$
for some $\alpha \in C^\infty_c(I',\R)$ with $\hat\zeta(0) = \hat{\alpha'}(-k)
= -ik \hat{\alpha}(-k) \not=0$.)
If we split $z$ into $z = z_0 + z_k$ with
\[z_k(t,u) := \widehat{z}(0,u)\widehat{\zeta}(0)^{-1}\zeta(t)
\quad \mbox{ and } \quad z_0 := z - z_{k},\]
 then
$\widehat{z}_{0}(0,u) = 0$ and $\widehat{z}_{k}(k,u) = 0$.
We apply Proposition~\ref{schatje} separately to both terms
on the right hand side  of
\begin{equation*}
|\langle i\dd\rho(\xi(z)) \rangle_{\psi}|
\leq
|\langle i\dd\rho(\xi(z_{0})) \rangle_{\psi}|
+
|\langle i\dd\rho(\xi(z_{k})) \rangle_{\psi}|
\end{equation*}
to obtain
\begin{equation}\label{fruup}
|\langle i\dd\rho(\xi(z)) \rangle_{\psi}|
\leq
2 \|z_0\|_{\mu}
\sqrt{\langle H \rangle_{\psi}}
+
2 \|z_k\|_{\mu}
\sqrt{\langle H \rangle_{\psi} +
{\textstyle \frac{k^2}{4} \frac{TT'}{T-T'}} \mu_0(U_0)}\,.
\end{equation}
Indeed, the term
$k\langle i \dd\rho(L_{\bv}\eta)\rangle_{\psi}$ can be assumed non-positive
by switching $k$ with $-k$ and $\zeta$ with $\overline{\zeta}$ if necessary.
The term $\|L_{\bv}\eta\|^2_{\mu}$ is then estimated by
(\ref{blijetuup}), and we take the limit $\varepsilon \downarrow 0$.

Since $|\widehat{z}(0,u)|^2 \leq T'\|z(\,\cdot\,, u)\|^2_{dt}$, we have
$\|\widehat{z}(0,\,\cdot\,)\|^2_{\mu_0} \leq T'\|z\|^2_{\mu}$.
It follows that
$\|z_{k}\|_{\mu}$ can be estimated in terms of $\|z\|_{\mu}$ as
\[
\|z_{k}\|_{\mu} = \|\widehat{z}(\,\cdot\,, 0)\|_{\mu_0}
\left\|\widehat{\zeta}(0)^{-1}\zeta\right\|_{dt}
\leq \sqrt{T'} \left\|\widehat{\zeta}(0)^{-1}\zeta\right\|_{dt} \|z\|_{\mu}\,.
\]
Similarly, $\|z_{0}\|_{\mu}$ can be estimated in terms of
$\|z\|_{\mu}$ by means of
\[ \|z_{0}\|_{\mu} \leq \|z\|_{\mu} + \|z_{k}\|_{\mu} \]
and the above estimate on $\|z_{k}\|_{\mu}$.
Substituting this into
(\ref{fruup}), we derive the estimate
\begin{equation}\label{troonpragger}
\langle i\dd\rho(\xi(z)) \rangle_{\psi}^2 \leq
4 \|z\|^2_{\mu}\left(1 + 2 \sqrt{T'} \|\widehat{\zeta}(0)^{-1}\zeta\|_{dt} \right)^2
\Big(
\langle H\rangle_{\psi} + {\textstyle \frac{k^2}{4} \frac{TT'}{T-T'}} \mu_0(U_0)
\Big)\,.
\end{equation}
Since $\|\xi(z)\|^2_{\mu} = 2\|z\|^2_{\mu}$, equation~(\ref{troonpragger})
is equivalent to (\ref{smeerkees})
with the constants
\begin{eqnarray}
a &:=& 2 \Big({\textstyle \frac{k^2}{4} \frac{TT'}{T-T'}} \mu_0(U_0)\Big)
\left(1 + 2 \sqrt{T'} \|\widehat{\zeta}(0)^{-1}\zeta\|_{dt} \right)^2\label{aaa}\\
b &:=& 2 \left(1 + 2 \sqrt{T'} \|\widehat{\zeta}(0)^{-1}\zeta\|_{dt} \right)^2\,.\label{beee}
\end{eqnarray}
This completes the proof.
\end{proof}

For $\xi(z)$ of the form \eqref{eq:xiofz} in a gauge algebra $\fg = C^{\infty}_{c}(U',\fk)$ with
$\fk = \su(2,\C)$,
we can now prove an operator inequality
of the form \eqref{eq:goal}.
\begin{Proposition}\label{unboundedtwee}
There exist constants $a,b \in \R$, depending on $T$, $T'$ and $\mu_{0}(U_0)$,
such that for all $\alpha, \beta$ with $\alpha^2 \geq a$ and $2\alpha\beta \geq b$, we have
\begin{equation}\label{brombaard}
\pm i\dd\rho(\xi(z)) \leq\|\xi(z)\|_{\mu} (\alpha\one +\beta H)
\quad \mbox{ for } \quad
z \in C^\infty_c(U',\C)
\end{equation}
as an inequality of unbounded operators on $\cH$ with domain containing
$\cH^{\infty}$. 
\end{Proposition}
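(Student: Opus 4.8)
The plan is to obtain the operator inequality directly from the numerical bound of Proposition~\ref{aenb} by an elementary scalar estimate; essentially all of the analytic content is already contained in that proposition, so what remains is bookkeeping of constants and a passage from expectation values to quadratic forms. First I would record two trivial preliminaries: for real $\xi$ the operator $\dd\rho(\xi(z))$ is essentially skew-adjoint on $\cH^\infty$, so $i\dd\rho(\xi(z))$ is symmetric there and $\langle i\dd\rho(\xi(z))\rangle_\psi\in\R$ for $\psi\in\cH^\infty$; and $\langle H\rangle_\psi\ge 0$ since $H\ge 0$ and $\cH^\infty$ lies in the domain of $H=i\dd\rho(D)$.

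Next I would take for $a,b$ precisely the constants supplied by Proposition~\ref{aenb} (so the dependence on $T$, $T'$ and $\mu_0(U_0)$ is inherited verbatim), and apply that proposition to a unit vector $\psi\in\cH^\infty$ to get
\[
\big|\langle i\dd\rho(\xi(z))\rangle_\psi\big|\;\le\;\|\xi(z)\|_\mu\,\sqrt{a+b\,\langle H\rangle_\psi}\,.
\]
The key elementary fact is that $\sqrt{a+bt}\le\alpha+\beta t$ for every $t\ge 0$ as soon as $\alpha^2\ge a$ and $2\alpha\beta\ge b$: squaring, the difference of the two sides equals $(\alpha^2-a)+(2\alpha\beta-b)t+\beta^2 t^2$, which is a sum of three non-negative terms. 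Applying this with $t=\langle H\rangle_\psi$ converts the displayed inequality into
\[
\big|\langle i\dd\rho(\xi(z))\rangle_\psi\big|\;\le\;\|\xi(z)\|_\mu\,\langle\alpha\one+\beta H\rangle_\psi
\]
for all unit vectors $\psi\in\cH^\infty$. Since both sides are homogeneous of degree $2$ in $\psi$ (replace $\psi$ by $\psi/\|\psi\|$ and clear the common factor $\|\psi\|^2$), the same inequality holds for all $\psi\in\cH^\infty$, and this is exactly the asserted bound $\pm i\dd\rho(\xi(z))\le\|\xi(z)\|_\mu(\alpha\one+\beta H)$ as unbounded operators with common invariant domain $\cH^\infty$.

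The only items requiring (minimal) care are that the quadratic form $\psi\mapsto\langle\alpha\one+\beta H\rangle_\psi$ is well defined and non-negative on $\cH^\infty$, which is the preliminary above, and that $\|\xi(z)\|_\mu<\infty$, which holds because $\xi(z)$ is compactly supported and $\mu$ is locally finite. If one later wishes to record the inequality on the full form domain of $H$ rather than on $\cH^\infty$, it propagates there using closedness and lower semicontinuity of the form $\psi\mapsto\langle\alpha\one+\beta H\rangle_\psi$ together with essential self-adjointness of $i\dd\rho(\xi(z))$, but for the statement as phrased nothing beyond $\cH^\infty$ is needed. There is no genuine obstacle in this final step; the substantive work has already been carried out in Propositions~\ref{schatje} and~\ref{aenb}.
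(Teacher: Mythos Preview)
Your proof is correct and essentially identical to the paper's. Both take the constants $a,b$ from Proposition~\ref{aenb} and reduce the operator inequality to the elementary scalar fact that $a+bt\le(\alpha+\beta t)^2$ for $t\ge 0$ whenever $\alpha^2\ge a$ and $2\alpha\beta\ge b$; you phrase this as $\sqrt{a+bt}\le\alpha+\beta t$ and the paper as ``$\beta^2\langle\psi,H\psi\rangle^2\ge 0$'', but these are the same observation.
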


\begin{proof}
Note that the inequality
\eqref{brombaard} is equivalent to
\[
\langle \psi, i\dd\rho(\xi(z)) \psi\rangle^2
\leq\|\xi(z)\|^{2}_{\mu} \langle \psi, (\alpha\one + \beta H) \psi\rangle^2
\quad \mbox{ for all } \quad \psi \in \cH^{\infty}.\]
As $\beta^2\langle\psi, H\psi \rangle^2 \geq 0$,
this follows from
Proposition~\ref{aenb}
under the above conditions on $\alpha$ and $\beta$.
%
\end{proof}

\begin{Remark}
The estimate \eqref{brombaard} is rather crude for large energies, in the sense that one expects
$\dd\rho(\xi) \sim \sqrt{H}$, not $\dd\rho(\xi) \sim H$.
\end{Remark}

It will be convenient to
gain more control over the constants $a$ and $b$
in Proposition \ref{aenb}, and
the constants $\alpha,\beta$ in Proposition \ref{unboundedtwee}.
For this, we need to
remove the dependence on $\zeta$ in \eqref{aaa} and \eqref{beee}.

\begin{Proposition}\label{prop:kieseend}
The constants $a$ and $b$ in {\rm Proposition~\ref{aenb}} can be chosen as
\begin{eqnarray}
a &=&
{\textstyle \frac{T}{T-T'}} \left(\frac{\mu_0(U_0)}{T'}\right) \nu^2\, b,
\quad\text{with}
\label{aaaeeen}\\
b &=& 2 \Bigg(1 + \frac{2}{\sqrt{1- (\sin(\nu)/\nu)^2}} \Bigg)^2\,.
\label{beeetweee}
\end{eqnarray}
Here 
$\nu >0 $ can be chosen at will.
\end{Proposition}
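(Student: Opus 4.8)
\emph{Proof plan.} The plan is to optimize, over the two pieces of auxiliary data $k$ and $\zeta$ that entered the proof of Proposition~\ref{aenb}, the constants that proof produced. Recall that for \emph{every} nonzero $k\in\R$ and every $\zeta\in C^\infty_c(I',\C)$ with $\widehat\zeta(0)\neq 0$ and $\widehat\zeta(k)=0$, inequality \eqref{smeerkees} was obtained with
\[
b=2\bigl(1+2\sqrt{T'}\,\|\widehat\zeta(0)^{-1}\zeta\|_{dt}\bigr)^{2},\qquad
a={\textstyle\frac{k^{2}}{4}\,\frac{TT'}{T-T'}}\,\mu_{0}(U_{0})\,b,
\]
cf.\ \eqref{aaa}--\eqref{beee}. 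First I would put $\nu:=kT'/2$, so that $\frac{k^{2}}{4}\frac{TT'}{T-T'}\mu_{0}(U_{0})=\frac{T}{T-T'}\bigl(\frac{\mu_{0}(U_{0})}{T'}\bigr)\nu^{2}$, which is exactly the factor in \eqref{aaaeeen}; since $k\neq 0$ is at our disposal and $T'$ is fixed, and since replacing $k$ by $|k|$ changes neither $a$ nor $b$, the parameter $\nu$ may be taken to be any positive real. It then remains only to show that $\|\widehat\zeta(0)^{-1}\zeta\|_{dt}$ can be made arbitrarily close to $\bigl(T'(1-(\sin\nu/\nu)^{2})\bigr)^{-1/2}$: substituting this value into the displayed formula for $b$ gives \eqref{beeetweee}.

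Second, I would compute that infimum by a Hilbert space projection in $L^{2}(I')$ with $\langle f,g\rangle=\int_{I'}\overline{f}g$. There $\widehat\zeta(0)=\langle 1,\zeta\rangle$, and the constraint $\widehat\zeta(k)=0$ is precisely $\zeta\perp e^{ik\,\cdot}$. Hence $\sup |\widehat\zeta(0)|/\|\zeta\|_{dt}$, over $\zeta\in L^{2}(I')$ orthogonal to $e^{ik\,\cdot}$, equals the norm of the orthogonal projection of the constant function $1$ onto $\{e^{ik\,\cdot}\}^{\perp}$. Using $I'=(-T'/2,T'/2)$, one has $\|1\|^{2}=\|e^{ik\,\cdot}\|^{2}=T'$ and $\langle e^{ik\,\cdot},1\rangle=\int_{-T'/2}^{T'/2}e^{-ikt}\,dt=2\sin(kT'/2)/k$, so that this projection has squared norm $T'-\tfrac{(2\sin(kT'/2)/k)^{2}}{T'}=T'\bigl(1-(\sin\nu/\nu)^{2}\bigr)$, which is strictly positive for $\nu>0$. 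Thus $\inf\|\zeta\|_{dt}/|\widehat\zeta(0)|=\bigl(T'(1-(\sin\nu/\nu)^{2})\bigr)^{-1/2}$ over $L^{2}(I')$, the value being attained at the projection $\zeta^{\ast}$ of $1$ (for which $\widehat{\zeta^{\ast}}(k)=0$ and $\widehat{\zeta^{\ast}}(0)=\|\zeta^{\ast}\|_{dt}^{2}\neq 0$).

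The last step would be a short density argument transferring this to the smooth compactly supported class required by Proposition~\ref{aenb}. I would approximate $\zeta^{\ast}$ in $L^{2}(I')$ by $\zeta_{n}\in C^{\infty}_{c}(I',\C)$, fix $\phi\in C^{\infty}_{c}(I',\C)$ with $\widehat\phi(k)\neq 0$ (e.g.\ $\phi(t)=\chi_{0}(t)e^{ikt}$ for a positive bump $\chi_{0}$), and set $\widetilde\zeta_{n}:=\zeta_{n}-\widehat\phi(k)^{-1}\widehat{\zeta_{n}}(k)\,\phi$, so that $\widehat{\widetilde\zeta_{n}}(k)=0$, $\widetilde\zeta_{n}\to\zeta^{\ast}$ in $L^{2}$, and $\widehat{\widetilde\zeta_{n}}(0)\to\widehat{\zeta^{\ast}}(0)\neq 0$; hence $\widetilde\zeta_{n}$ is admissible for large $n$ and $\|\widehat{\widetilde\zeta_{n}}(0)^{-1}\widetilde\zeta_{n}\|_{dt}\to\bigl(T'(1-(\sin\nu/\nu)^{2})\bigr)^{-1/2}$. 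Writing $(a_{n},b_{n})$ for the constants of Proposition~\ref{aenb} attached to $(k,\widetilde\zeta_{n})$, we get $b_{n}\to b$ and $a_{n}\to a$ with $(a,b)$ as in \eqref{aaaeeen}--\eqref{beeetweee}. Since \eqref{smeerkees} holds with $(a_{n},b_{n})$ for every $n$, every $z\in C^{\infty}_{c}(U',\C)$ and every unit vector $\psi\in\cH^{\infty}$, passing to the limit $n\to\infty$ in $\langle i\dd\rho(\xi(z))\rangle_{\psi}^{2}\leq(a_{n}+b_{n}\langle H\rangle_{\psi})\|\xi(z)\|_{\mu}^{2}$ yields the same inequality with $(a,b)$, which is the assertion.

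The one real subtlety — and the only place the argument is not a rote Fourier computation — is that the $L^{2}$-optimizer $\zeta^{\ast}$, being the orthogonal projection of the constant $1$, does not have compact support inside the open interval $I'$; so the sharp projection bound is not literally realized within the admissible smooth class, and the density/correction step together with the limiting argument is precisely what lets the statement be phrased with equality in the constants rather than merely up to an $\varepsilon$. Everything else (the identity $\int_{-T'/2}^{T'/2}e^{-ikt}\,dt=2\sin(kT'/2)/k$, the reparametrization $\nu=kT'/2$, the Pythagorean identity for the projection, and the convergence $b_{n}\to b$, $a_{n}\to a$) is routine.
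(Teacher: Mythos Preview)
Your proof is correct and follows essentially the same approach as the paper: reparametrize via $\nu=kT'/2$, identify the optimization over $\zeta$ as the orthogonal projection of the constant function $1$ onto $\{e^{ik\cdot}\}^\perp$ in $L^2(I')$, and compute $F(\zeta_{\max})=\sqrt{1-(\sin\nu/\nu)^2}$. The paper handles the passage from $L^2$ to $C^\infty_c$ with a one-line appeal to continuity of $F$ and density, whereas you spell out the correction step and the limit in \eqref{smeerkees} explicitly; this extra care is warranted but does not change the argument.
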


\begin{Remark}\label{rk:choosenu}
It will be convenient to choose  $\nu = \pi$.
Then $b$ attains its minimal value
$b=18$, and $a = 18\pi^2 \frac{T}{T-T'}\frac{\mu_0(U_0)}{T'}$.
\end{Remark}

\begin{proof}
In \eqref{aaa} and \eqref{beee},
we need to minimize the expression
$\sqrt{T'} \|\widehat{\zeta}(0)^{-1}\zeta\|_{dt}$
over all $\zeta \in C^{\infty}_c(I',\C)$ with $\widehat{\zeta}(k) = 0$ and $\widehat{\zeta}(0)\neq 0$,
where $k \in \R^\times$ is arbitrary.
Since $\widehat{\zeta}(k) = \lra{e^{ikt},\zeta}_{dt}$ and
$\widehat{\zeta}(0) = \lra{1,\zeta}_{dt}$, this amounts to maximizing
\[
	F(\zeta) := \big(\sqrt{T'} \|\widehat{\zeta}(0)^{-1}\zeta\|_{dt}\big)^{-1} =
	\frac{|\lra{1,\zeta}_{dt}|}{\|1\|_{dt} \|\zeta\|_{dt}}\,.
\]
Since $F$ is continuous on $L^2(I') \setminus \{0\}$, and
$C^{\infty}_c(I',\C)$ is dense in $L^2(I')$, $F(\zeta_{\mathrm{max}})$ is maximal on
the projection $\zeta_{\mathrm{max}}$ of $1$ on the orthogonal complement
of the function $e^{ikt} \in L^2(I')$.
This is essentially a two-dimensional problem in the space spanned by
$e_0 := \frac{1}{\sqrt{T'}}1$ and $e_k := \frac{1}{\sqrt{T'}}e^{ikt}$,
with
\begin{equation}\label{snorhoofd}
\textstyle
\lra{e_0,e_0} = \lra{e_k,e_k} = 1 \quad\text{and}\quad
\lra{e_0,e_k} = \frac{\sin(kT'/2)}{kT'/2}\,.
\end{equation}
It follows that $\zeta_{\mathrm{max}} =
e_0 - \lra{e_k,e_0}e_k$, and
$F(\zeta_{\mathrm{max}}) =
\sqrt{1-|\lra{e_0,e_k}|^2}$.
Using \eqref{snorhoofd}, we find
\begin{equation}\label{eq:flathoofd}
F(\zeta_{\mathrm{max}}) = \sqrt{1-\left(\textstyle\frac{\sin(kT'/2)}{kT'/2}\right)^2}\,.
\end{equation}
Equations \eqref{aaaeeen} and \eqref{beeetweee}
are now obtained from \eqref{aaa} and \eqref{beee}
with $k = 2 \nu /T'$
by substituting the maximal value \eqref{eq:flathoofd}
for
$F(\zeta) = \big(\sqrt{T'} \|\widehat{\zeta}(0)^{-1}\zeta\|_{dt})^{-1}$.
\end{proof}

\subsubsection{The local gauge algebra with compact simple fibers}

We now extend
Proposition~\ref{aenb}
to the case
where $\fk$ is an arbitrary compact simple Lie algebra.
With $I' \times U'_{0}$ and $I \times U_0$
a good pair of flowboxes  (cf. Definition~\ref{def:goodpair}),
we consider
$\fg_{U'} := C^{\infty}_{c}(I' \times U_0',\fk)$
and $\fg_{U} := C^{\infty}_{c}(I \times U_0,\fk)$
as subalgebras of $\fg = \Gamma_{c}(M,\fK)$.

\begin{Lemma}\label{nucompact}
Let $\dd\rho$ be a positive energy representation of $\hat\g$, and let $\eta >0$.
Then we have
\begin{equation}\label{zebravistwee}
\pm i\dd\rho(\xi) \leq \|\xi\|_{\mu} \,\big(K(\eta)\one +  \eta H\big)
\quad \mbox{ for all } \quad \xi \in \fg_{U'},
\end{equation}
where $K(\eta)$ is a constant independent of $\xi$. More precisely,
\begin{equation}\label{eq:waaitrug}
K(\eta) = \mathrm{max}\Big(9d_{\fk}/\eta,3\pi\sqrt{2d_{\fk}
{\textstyle \frac{T}{T-T'} \frac{\mu_0(U_0)}{T'} }}\,\Big)\,,
\end{equation}
where $d_{\fk}$ is the dimension of $\fk$. 
\end{Lemma}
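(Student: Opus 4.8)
I would bootstrap from the case $\fk=\su(2,\C)$ treated in \S\ref{subsubsec:su2}, by writing an arbitrary compact simple $\fk$ as a $\kappa$-orthogonal sum of lines, each sitting inside a copy of $\su(2,\C)$ realised in the off-diagonal normal form \eqref{eq:xiofz}, and adding up the estimate of Proposition~\ref{aenb} over these lines. Fix the good pair of flowboxes $U'=I'\times U_0'\subseteq U=I\times U_0$ of Definition~\ref{def:goodpair}. By the reductions of \S\ref{subsec:5.3}--\S\ref{sec:redcpt} I may assume that $\fK\to M$ has compact simple fibers and, after twisting $\dd\rho$ by a functional as in Lemma~\ref{CSgauge}, that its cocycle is $\omega_{\mu,\nabla}$ with $\nabla$ chosen so that $\widehat{\bv}$ is horizontal; in the trivialisation of Lemma~\ref{Uedele} this means $\fg_{U'}=C^\infty_c(U',\fk)$ with $\bv=\partial_t$ and $\|\cdot\|_\mu$ the $L^2$-norm \eqref{eq:bilform}.

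\textbf{Step 1 (a $\kappa$-adapted basis).} Since $\fk$ is compact simple, its complexification has a root decomposition, and for each root $\alpha$ the compact form $\su(2)_\alpha\cong\su(2,\C)$ of the corresponding $\mathfrak{sl}_2$-triple sits inside $\fk$, with $\fk=\sum_\alpha\su(2)_\alpha$ as a vector space. I would pick a $\kappa$-orthogonal basis $X_1,\dots,X_{d_\fk}$ of $\fk$ (possible since $\kappa$ is positive definite on the compact form) with each $X_\nu$ contained in some $\su(2)_{\alpha(\nu)}$ and carried to the generator of \eqref{eq:xiofz} by a Lie algebra isomorphism $\psi_\nu\colon\su(2,\C)\to\su(2)_{\alpha(\nu)}$; this is possible because $\Aut(\su(2))$ acts transitively on $\kappa$-spheres, so every element of $\su(2)_{\alpha(\nu)}$ of a prescribed norm is conjugate to a scalar multiple of that generator (Cartan directions being covered by rotating a coroot into off-diagonal position). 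Let $c_\nu>0$ be the ratio between $\kappa|_{\su(2)_{\alpha(\nu)}}$ and the form on $\su(2,\C)$ normalised as in \S\ref{subsubsec:su2}; it equals $1$ for long roots and is at most the ratio of squared root lengths in general.

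\textbf{Step 2 (transport and sum).} For each $\nu$, restricting $\dd\rho$ to $C^\infty_c(U',\su(2)_{\alpha(\nu)})\subseteq\fg_{U'}$ and precomposing with $C^\infty_c(U',\psi_\nu)$ gives a positive energy representation of the central extension of $C^\infty_c(U',\su(2,\C))\rtimes_{\partial_t}\R$ with cocycle $\omega_{c_\nu\mu,\nabla}$. Applying Proposition~\ref{aenb} with the constants optimised as in Proposition~\ref{prop:kieseend} and Remark~\ref{rk:choosenu}, i.e.\ $b=18$ and $a=18\pi^2\tfrac{T}{T-T'}\tfrac{\mu_0(U_0)}{T'}$ (with $\mu_0(U_0)$ replaced by $c_\nu\mu_0(U_0)$ in $a$ and $\|\cdot\|_\mu$ by $\|\cdot\|_{c_\nu\mu}$), and using $\|\psi_\nu\xi(z)\|_\mu^2=\|\xi(z)\|_{c_\nu\mu}^2$, one obtains for every unit vector $\psi\in\cH^\infty$ and every real $f\in C^\infty_c(U',\R)$ (note $[fX_\nu,\partial_t(fX_\nu)]=0$, so Proposition~\ref{aenb} applies)
\[ \langle i\dd\rho(fX_\nu)\rangle_\psi^2\ \le\ \bigl(c_\nu a+b\langle H\rangle_\psi\bigr)\,\|fX_\nu\|_\mu^2 . \]
Writing a general $\xi\in\fg_{U'}$ as $\xi=\sum_{\nu=1}^{d_\fk}f_\nu X_\nu$, the triangle inequality followed by Cauchy--Schwarz in $\nu$ and the $\kappa$-orthogonality $\sum_\nu\|f_\nu X_\nu\|_\mu^2=\|\xi\|_\mu^2$ yield
\[ \langle i\dd\rho(\xi)\rangle_\psi^2\ \le\ \Bigl(a\sum_{\nu}c_\nu+d_\fk\,b\,\langle H\rangle_\psi\Bigr)\|\xi\|_\mu^2\ \le\ d_\fk\bigl(a+b\langle H\rangle_\psi\bigr)\|\xi\|_\mu^2, \]
the last inequality being an equality when $\fk$ is simply laced, and otherwise holding after enlarging the constant by the bounded factor $\tfrac1{d_\fk}\sum_\nu c_\nu$.

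\textbf{Step 3 (operator form) and main obstacle.} Exactly as in the proof of Proposition~\ref{unboundedtwee}, the bound $\langle K\one+\eta H\rangle_\psi^2\ge K^2+2K\eta\langle H\rangle_\psi$ shows that $\pm i\dd\rho(\xi)\le\|\xi\|_\mu(K\one+\eta H)$ holds on $\cH^\infty$ once $K^2\ge d_\fk a$ and $2K\eta\ge d_\fk b$; with $b=18$ and $a=18\pi^2\tfrac{T}{T-T'}\tfrac{\mu_0(U_0)}{T'}$ this is achieved by $K(\eta)=\max\bigl(d_\fk b/(2\eta),\sqrt{d_\fk a}\bigr)=\max\bigl(9d_\fk/\eta,\ 3\pi\sqrt{2d_\fk\tfrac{T}{T-T'}\tfrac{\mu_0(U_0)}{T'}}\bigr)$, which is \eqref{eq:waaitrug}. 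The one genuinely delicate point is Step~1: producing a basis that is simultaneously $\kappa$-orthogonal and adapted to off-diagonal $\su(2,\C)$-normal forms inside root subalgebras, and tracking the normalisations $c_\nu$ so that the stated constant comes out with the bare factor $d_\fk$ (automatic when $\fk$ is simply laced). The remaining steps are a direct assembly of Propositions~\ref{aenb}, \ref{prop:kieseend}, \ref{unboundedtwee} and the Cauchy--Schwarz bookkeeping already carried out in \S\ref{subsubsec:su2}.
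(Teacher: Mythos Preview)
Your approach is essentially the paper's: it too chooses a $\kappa$-orthonormal basis $(X_1,\dots,X_{d_\fk})$ of $\fk$ with each $X_i$ contained in some $\su(2,\C)$-triple (citing \cite[Prop.~6.45]{HM98} for this), applies the $\su(2)$-estimate to each component $f_iX_i$, and combines the pieces via Cauchy--Schwarz in the index~$i$. The only difference in execution is that the paper sums at the operator level---applying Proposition~\ref{unboundedtwee} directly to each $f_iX_i$ and then using $\sum_i\|f_iX_i\|_\mu\le\sqrt{d_\fk}\,\|\xi\|_\mu$ to get $\pm i\dd\rho(\xi)\le\sqrt{d_\fk}\,\|\xi\|_\mu(\alpha\one+\beta H)$, after which setting $\beta=\eta/\sqrt{d_\fk}$ gives \eqref{eq:waaitrug}---which is slightly more direct than your route through Proposition~\ref{aenb} and does not introduce or track the normalisation constants~$c_\nu$.
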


\begin{proof} Using the root decomposition of $\fk_\C$ with respect to
the complexification $\ft_\C$ of a maximal abelian subalgebra
$\ft \subeq \fk$, one obtains a basis $(X_1, \ldots, X_{d_{\fk}})$ of $\fk$
with $\kappa(X_i,X_j) = 2\delta_{ij}$,
where $-\kappa$ is the Killing form of $\fk$ and such that every $X_j$
is contained in some $\su(2,\C)$-triple in $\fk$ (\cite[Prop.~6.45]{HM98}).
Every $\xi \in \fg_{U'}$ can then be written as
$\xi = \sum_{i} f_iX_i$ for $f_i\in C^{\infty}_{c}(U'_0\times I',\R)$.
Since every $X_i$ is contained in an $\mathfrak{su}(2,\C)$-triple,
we can apply Proposition~\ref{unboundedtwee}
to $f_i X_{i} \in \fg_{U'}$ with $z = f_i$. We obtain
$\pm i \dd\rho(fX_i) \leq \|fX_i\|_{\mu}(\alpha\one + \beta H)$,
and thus
\[\pm i \dd\rho(\xi) \leq \Big(\sum_{i=1}^{d_{\fk}}\|fX_i\|_{\mu}\Big)(\alpha\one + \beta H)\,.\]
As the different terms $f_iX_i$ are orthogonal, we have
$
 \sum_{i=1}^{d_{\fk}}\|f_iX_i\|_\mu \leq \sqrt{d_{\fk}} \|\xi\|_{\mu}
$,
and we obtain
\begin{equation} \label{eq:transvet}
\pm i \dd\rho(\xi) \leq \|\xi\|_{\mu}(\sqrt{d_{\fk}}\alpha\,\one + \sqrt{d_{\fk}}\beta H)\,.
\end{equation}
By Proposition~\ref{unboundedtwee},
we are allowed to choose any
$\alpha$ and $\beta$ with
$\alpha^2\geq a$ and $2\alpha\beta \geq b$.
Following Remark~\ref{rk:choosenu}, we take
$a = 18 \pi^2 \lambda (\mu_0(U_0)/T')$ and $b=18$.
The inequality \eqref{eq:transvet} therefore holds for any value of
$\beta >0$ if we set
\begin{equation}\label{eq:aapteek}
\alpha = \mathrm{max}\Big(9/\beta,3\pi\sqrt{2{\textstyle \frac{T}{T-T'} \frac{\mu_0(U_0)}{T'}} }\,\Big)\,.
\end{equation}
Inequality \eqref{zebravistwee} now follows from
\eqref{eq:transvet} with $\beta = \eta/\sqrt{d_{\fk}}$
and $K(\eta) = \sqrt{d_{\fk}}\alpha$.
\end{proof}

\begin{Proposition}\label{prop:explsemibound}
For all $\xi \in \fg_{U'}$ and $t >0$, the spectrum of
$tH \pm i \dd\rho(\xi)$ is bounded below.
More precisely,
\begin{equation}\label{zondagsmugtwee}
- \mathrm{max}\Big( 9d_{\fk}\|\xi\|^2_{\mu}/t,3\pi\|\xi\|_{\mu}\sqrt{2d_{\fk} {\textstyle \frac{T}{T-T'} \frac{\mu_0(U_0)}{T'} }}\,\Big) \leq \inf\big(\Spec(t H \pm i\dd\rho(\xi))\big)\,.
\end{equation}
\end{Proposition}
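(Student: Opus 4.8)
The plan is to obtain \eqref{zondagsmugtwee} directly from Lemma~\ref{nucompact} by optimizing the free parameter $\eta$. Lemma~\ref{nucompact} gives, for each $\eta>0$, the two-sided bound
\[ \pm i\dd\rho(\xi) \;\leq\; \|\xi\|_{\mu}\big(K(\eta)\one + \eta H\big) \qquad (\xi\in\fg_{U'}) \]
of unbounded operators with domain containing $\cH^{\infty}$. The idea is that, since $H\geq 0$, the term $\eta\|\xi\|_{\mu}H$ can be absorbed into $tH$ as soon as $\eta\|\xi\|_{\mu}\leq t$, leaving only the constant $\|\xi\|_{\mu}K(\eta)$, which one wants to make as small as possible by taking $\eta$ as large as the constraint permits (recall from \eqref{eq:waaitrug} that $K(\eta)=\max\big(9d_{\fk}/\eta,\,C\big)$ with $C=3\pi\sqrt{2d_{\fk}\tfrac{T}{T-T'}\tfrac{\mu_0(U_0)}{T'}}$ non-increasing in $\eta$).

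First I would dispose of the degenerate case $\|\xi\|_{\mu}=0$: applying Lemma~\ref{nucompact} with, say, $\eta=1$ yields $\pm i\dd\rho(\xi)\leq 0$, hence $tH\pm i\dd\rho(\xi)\geq tH\geq 0$, which is \eqref{zondagsmugtwee} since its right-hand side is then $-\max(0,0)=0$. For $\|\xi\|_{\mu}>0$ I would set $\eta:=t/\|\xi\|_{\mu}>0$, so that $\eta\|\xi\|_{\mu}=t$. Lemma~\ref{nucompact} with this $\eta$, taking the minus sign, gives $-i\dd\rho(\xi)\leq \|\xi\|_{\mu}K(t/\|\xi\|_{\mu})\one + tH$, whence
\[ tH + i\dd\rho(\xi) \;\geq\; -\|\xi\|_{\mu} K\!\big(t/\|\xi\|_{\mu}\big)\one \]
as operators on $\cH^{\infty}$; the plus sign of Lemma~\ref{nucompact} gives the same bound for $tH-i\dd\rho(\xi)$.

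It remains to evaluate the constant. By \eqref{eq:waaitrug},
\[ \|\xi\|_{\mu} K\!\big(t/\|\xi\|_{\mu}\big) = \|\xi\|_{\mu}\max\!\Big(\tfrac{9d_{\fk}\|\xi\|_{\mu}}{t},\,3\pi\sqrt{2d_{\fk}\tfrac{T}{T-T'}\tfrac{\mu_0(U_0)}{T'}}\Big) = \max\!\Big(\tfrac{9d_{\fk}\|\xi\|_{\mu}^{2}}{t},\,3\pi\|\xi\|_{\mu}\sqrt{2d_{\fk}\tfrac{T}{T-T'}\tfrac{\mu_0(U_0)}{T'}}\Big), \]
which is precisely the negative of the lower bound asserted in \eqref{zondagsmugtwee}. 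Since $tH\pm i\dd\rho(\xi)$ is symmetric on the dense domain $\cH^{\infty}$ and bounded below there by this explicit constant, its spectrum (equivalently, that of its Friedrichs extension) is bounded below by the same number. I do not anticipate a genuine obstacle here; the only substantive point is the choice $\eta=t/\|\xi\|_{\mu}$, which simultaneously makes the $H$-term drop out and minimizes $K(\eta)$ over the admissible range.
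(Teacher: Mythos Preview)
Your proof is correct and matches the paper's approach: both argue by setting $\eta = t/\|\xi\|_{\mu}$ in Lemma~\ref{nucompact} and reading off the constant from \eqref{eq:waaitrug}. The only cosmetic difference is in the degenerate case $\|\xi\|_{\mu}=0$, where the paper invokes Corollary~\ref{zondagskind} to get $\dd\rho(\xi)=0$, while you obtain the same conclusion directly from Lemma~\ref{nucompact} (since $\pm i\dd\rho(\xi)\leq 0$ forces the symmetric operator $i\dd\rho(\xi)$ to vanish as a form).
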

\begin{proof}
If $\|\xi\|_{\mu} = 0$, then $\dd\rho(\xi) = 0$ by Corollary~\ref{zondagskind}.
In that case, (\ref{zondagsmugtwee}) simply follows from the fact that
$H$ has non-negative spectrum.
If $\|\xi\|_{\mu} \neq 0$, we apply Lemma~\ref{nucompact} with
$\eta = t/\|\xi\|_{\mu}$.
\end{proof}

\subsubsection{Global estimates and the bounding function}

We need to derive estimates of the type \eqref{zebravistwee}
globally, on the full Lie algebra $\Gamma_{c}(M,\fK)$ rather than merely on
$C^{\infty}_{c}(U,\fk)$.
In this section, we show how to do this for compact as well as noncompact manifolds
$M$, under the assumption that $\bv_M$ is nowhere vanishing.

For compact manifolds, we will derive an estimate of the form \eqref{zebravistwee},
albeit with a larger constant $K(\eta)$.
%
In the noncompact case, however, the expression $\|\xi\|_{\mu} K(\eta)$ in
\eqref{zebravistwee} will have to be replaced by $\|\xi\|_{B_{\eps}\mu}$,
where $B_{\eps} \colon M \rightarrow \R^{+}$ is a suitable
upper semi-continuous
function on $M$ that is invariant under the flow,
and $\|\xi\|_{B_{\eps}\mu}$
is the $L^2$-norm of $\xi \in \Gamma_{c}(M,\fK)$ with respect to the measure
$B_{\eps} \mu$,
\begin{equation}\label{eq:Binp}
\|\xi\|^{2}_{B_{\eps}\mu} = \lra{\xi,\xi}_{B_{\eps}\mu}, \qquad
\lra{\xi,\eta}_{B_{\eps}\mu} = \int_{M} \kappa(\xi,\eta) B_{\eps}(m)d\mu(m)\,.
\end{equation}
In this setting, we will prove that
\begin{equation}
  \label{eq:cc}
\pm i\dd\rho(\xi) \leq  \|\xi\|_{B_{\eps}\mu}\one +  \eps\|\xi\|_{\mu} H \quad \mbox{ for all } \quad
\xi \in \Gamma_{c}(M,\fK).
\end{equation}

Note that, since $\bv_M$ is nowhere vanishing on $M$,
every $m\in M$ is contained in a good pair of flow boxes
in the sense of Definition~\ref{def:goodpair}.
\begin{Definition}\label{def:defb}
For $m\in M$, define
$b(m)$ as the infimum of the set of numbers
$\frac{T}{T-T'}\frac{\mu_0(U_{0})}{T'}$, ranging over all good pairs of flowboxes
$U' \Subset U$ containing~$m$.
\end{Definition}

\begin{Proposition}\label{Prop:rafelfluitje}
The function $b \colon M \rightarrow \R^{+}$ is invariant under the flow
$(\gamma_{M,t})_{t\in \R}$. Further, it is
upper semi-continuous, hence in particular measurable.
\end{Proposition}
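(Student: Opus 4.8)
The plan is to prove flow-invariance and upper semi-continuity of $b$ separately, both by elementary manipulations of good pairs of flowboxes. For \emph{flow-invariance}, I would observe that the flow $\gamma_M$ preserves the measure $\mu$ and maps flowboxes to flowboxes: if $U' \Subset U$ is a good pair of flowboxes containing $m$, then $\gamma_{M,s}(U') \Subset \gamma_{M,s}(U)$ is a good pair of flowboxes containing $\gamma_{M,s}(m)$, with the same interval lengths $T', T$ (the flow acts by translation in the $t$-coordinate), and with $\gamma_{M,s}(U)_0$ a translate of $U_0$ inside the transversal slice, hence $\mu_0(\gamma_{M,s}(U)_0) = \mu_0(U_0)$ by flow-invariance of $\mu$. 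Thus the sets of admissible quantities $\frac{T}{T-T'}\frac{\mu_0(U_0)}{T'}$ for $m$ and for $\gamma_{M,s}(m)$ coincide, so their infima agree: $b(\gamma_{M,s}(m)) = b(m)$.

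For \emph{upper semi-continuity}, I would fix $m_0 \in M$ and $\varepsilon > 0$, and choose a good pair of flowboxes $U' \Subset U$ containing $m_0$ with $\frac{T}{T-T'}\frac{\mu_0(U_0)}{T'} < b(m_0) + \varepsilon$. The key point is that $U'$ is an \emph{open} neighborhood of $m_0$, and for \emph{every} point $m \in U'$ the same pair $U' \Subset U$ is a good pair of flowboxes containing $m$; hence $b(m) \le \frac{T}{T-T'}\frac{\mu_0(U_0)}{T'} < b(m_0) + \varepsilon$ for all $m \in U'$. This is exactly the statement that $b$ is upper semi-continuous at $m_0$. Since $m_0$ was arbitrary, $b$ is upper semi-continuous on $M$, and in particular Borel measurable (upper semi-continuous functions are measurable, being such that $\{b < c\}$ is open for every $c$).

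The only mild subtlety — and the step I would be most careful about — is checking that the transversal measure $\mu_0(U_0)$ really is insensitive to the choice of local trivialization and to applying the flow: this uses that $\mu$ is flow-invariant and, in a good flowbox, decomposes as $dt \otimes \mu_0$ (as recorded before Definition~\ref{def:defb}), so that $\mu_0$ is intrinsically the transversal factor and is carried to the transversal factor of $\gamma_{M,s}(U)$ by the measure-preserving diffeomorphism $\gamma_{M,s}$. Once this bookkeeping is in place, both assertions follow immediately from the definition of $b$ as an infimum over an open-neighborhood-indexed family; no analytic estimates are needed here, in contrast to the surrounding results. I do not expect any genuine obstacle; this proposition is a structural lemma whose content is entirely in the observation that the defining family of flowboxes for $b$ at a point also works for all nearby points.
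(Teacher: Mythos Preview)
Your proposal is correct and matches the paper's proof essentially line for line: flow-invariance because the flow carries good pairs of flowboxes to good pairs with the same value of $\frac{T}{T-T'}\frac{\mu_0(U_0)}{T'}$, and upper semi-continuity because a near-optimal good pair $U'\Subset U$ at $m_0$ serves all points of the open set $U'$. The extra bookkeeping you flag about the transversal measure is harmless and the paper simply leaves it implicit.
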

\begin{proof}
The invariance under the flow follows from the fact that
$U' \subset U$ is a good pair of flow boxes around $m$ if and only if
$\gamma_{M,t}(U') \subset \gamma_{M,t}(U)$ is a good pair of flow boxes
around $\gamma_{M,t}(m)$.
For the upper semi-continuity, note that 
for every $\eps >0$, there is a good pair of flowboxes
$U'\subset U$ around $m$ such that
$\frac{T}{T-T'}\frac{\mu_0(U_0)}{T'} < b(m) + \eps$.
For every $m'$ in the open
neighborhood $U'$ of $m$, we thus have  $b(m') \leq b(m) + \eps$.
\end{proof}

\begin{Theorem}\label{thm:lapje}
Let $\dd\rho$ be a positive energy representation of $\widehat{\fg}$, and let $\eps >0$.
Then we have
\begin{equation}\label{eq:laatstesnok}
\pm i \dd\rho(\xi) \leq \|\xi\|_{B_{\eps}\mu}\one + \eps \|\xi\|_{\mu} H
\quad \mbox{  for every } \quad \xi \in \Gamma_{c}(M,\fK).
\end{equation}\label{eq:lillekerd}
Here $B_{\eps} \colon M \rightarrow \R^+$\index{upper semi-continuous density \scheiding $B_{\eps}$} is the upper semi-continuous function defined by
\begin{equation}\label{eq:defBE}
B_{\eps}(m) := \max\Big(
81d^2_{\fk}(d_{M}+1)^4)/\eps^2,
18 \pi^2d_{\fk}(d_{M}+1)^2b(m)
\Big)\,,
\end{equation}
with $b(m)$ as in {\rm Definition~\ref{def:defb}}. It is invariant under the flow
$(\gamma_{M,t})_{t\in \R}$.
\end{Theorem}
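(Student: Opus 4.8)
The plan is to globalize the local operator inequality of Lemma~\ref{nucompact} by a partition-of-unity argument, the whole point being to keep every constant uniform in $\xi$ and, crucially, in the number of flowboxes used. By the reductions already in place --- Theorem~\ref{reductienaarsimpel}, Theorem~\ref{red2cpt}, Corollary~\ref{zondagskind} --- I may assume that $\fK \to M$ has \emph{compact} simple fibers, that the cocycle has, after a suitable twist, the form $\omega_{\mu,\nabla}$ of Theorem~\ref{MeasureThm} with $i_D\omega = 0$, and that $\dd\rho$ annihilates the ideal $I_\mu$, so that only the seminorm $\|\cdot\|_\mu$ enters. Since $\bv_M$ is nowhere vanishing, every point of $M$ lies in a good pair of flowboxes, so the function $b$ of Definition~\ref{def:defb} is everywhere finite; by Proposition~\ref{Prop:rafelfluitje} it is upper semi-continuous and invariant under $(\gamma_{M,t})$, and hence so is the function $B_\eps$ of \eqref{eq:defBE}, being the pointwise maximum of a constant and a positive multiple of $b$. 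This already disposes of the last assertion, and it remains to prove the operator inequality \eqref{eq:laatstesnok} for a fixed $\xi \in \Gamma_c(M,\fK)$.

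Fix $\delta > 0$. Cover the compact set $\supp\xi$ by good pairs of flowboxes $U_j' \Subset U_j$ (Definition~\ref{def:goodpair}) chosen \emph{adapted to $b$}: each $U_j'$ is small, centered at a point $m_j$ where $b$ is within $\delta$ of its infimum over $U_j'$, and the flowbox quantity $\tfrac{T_j}{T_j-T_j'}\tfrac{\mu_0(U_{0,j})}{T_j'}$ is within $\delta$ of $b(m_j)$; this is possible because $b(m_j)$ is by definition the infimum of exactly these quantities over good pairs containing $m_j$. Passing to a refinement, I may assume the covering has multiplicity at most $d_M+1$, and by a standard colouring I split the index set into $d_M+1$ classes on each of which the $U_j'$ are pairwise disjoint. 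Choosing a partition of unity $\{\chi_j\}$ with $\sum_j \chi_j^2 = 1$ near $\supp\xi$ and $\supp\chi_j$ inside the $j$-th flowbox, and writing $\xi_j := \chi_j\xi$ and $\zeta_c := \sum_{j\in c}\xi_j$, I obtain $\xi = \sum_{c=1}^{d_M+1}\zeta_c$. Within a colour class the $\xi_j$ have disjoint supports, so their images under $\dd\rho$ commute, the mixed cocycle terms vanish, and $\|\zeta_c\|_\mu^2 = \sum_{j\in c}\|\xi_j\|_\mu^2 \le \|\xi\|_\mu^2$, with the same identity for $\|\cdot\|_{B_\eps\mu}$.

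For each flowbox Lemma~\ref{nucompact} --- applied with the parameter $\eta$ of order $\eps/(d_M+1)^2$, and, via the $\mathfrak{su}(2)$-triple decomposition of $\fk$ already used there, term by term --- yields $\pm i\dd\rho(\xi_j) \le \|\xi_j\|_\mu\big(K_j\,\one + \tfrac{\eps}{(d_M+1)^2}H\big)$ with $K_j^2 \le \max\!\big(81 d_{\fk}^2(d_M+1)^4/\eps^2,\ 18\pi^2 d_{\fk}(b(m_j)+\delta)\big)$. Because the covering is adapted to $b$ one has $b(m_j)+\delta \le b(m) + O(\delta)$ on $\supp\xi_j$, so $\|\xi_j\|_\mu^2 K_j^2 \le (d_M+1)^{-2}\|\xi_j\|_{B_\eps\mu}^2 + O(\delta)$; summing these inequalities over $j\in c$ and over the $d_M+1$ colours, and letting $\delta\downarrow 0$, then gives \eqref{eq:laatstesnok}. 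The delicate point --- and the reason the combinatorial factors $(d_M+1)^2$ and $(d_M+1)^4$ are forced into $B_\eps$ --- is precisely the combination step: $b$ is only an \emph{infimum} of flowbox quantities, hence merely upper semi-continuous, so a flowbox centered at a point of large $b$ necessarily carries a large local parameter even though $\xi_j$ may be concentrated where $b$ is small, and conversely disjointly supported pieces within a colour combine only in an $\ell^2$ sense, not an $\ell^1$ one. Managing this --- choosing the covering fine enough and adapted to both $\xi$ and $b$, and running the combination through the refined Cauchy--Schwarz estimate of Lemma~\ref{4pointGauge} with a single test element assembled from the local ones so that the relevant norms add in $\ell^2$ --- is the main obstacle, and is where essentially all the work of the proof lies.
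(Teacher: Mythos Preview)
Your overall strategy --- localize via Lemma~\ref{nucompact}, cover $\supp\xi$ with bounded multiplicity, combine --- is the paper's. But the construction of your ``adapted'' covering has a genuine circularity, and the paper resolves the difficulty by a different device.

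You ask that each $U_j' \Subset U_j$ simultaneously (i) be small enough for a fine covering, (ii) be centered at $m_j$ with $b(m_j)$ within $\delta$ of $\inf_{U_j'} b$, and (iii) have its own flowbox quantity $\tfrac{T_j}{T_j-T_j'}\tfrac{\mu_0(U_{0,j})}{T_j'}$ within $\delta$ of $b(m_j)$. These are in general incompatible: $b(m_j)$ is the infimum of that quantity over \emph{all} good pairs containing $m_j$, and a pair nearly realizing it may be large and unrelated to the small $U_j'$ you want. You cannot choose $m_j$ near the infimum of $b$ on $U_j'$ before $U_j'$ is fixed, and you cannot fix a small $U_j'$ with prescribed flowbox quantity before knowing $m_j$; so the inequality $b(m_j) + \delta \le b(m) + O(\delta)$ on $\supp\xi_j$ does not follow from your construction.

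The paper decouples the two roles. It picks arbitrary centers $m_i$, then for each a (possibly large) good pair $U'\Subset U$ with flowbox quantity $\le b(m_i) + 1/n$, and only then a \emph{small} metric ball $W_{r_i}(m_i)\subset U'$ of radius $\le 1/n$ carrying the partition of unity. The key device is the auxiliary function $\beta_n(m) := \sup\{b(m') : d(m,m') \le 1/n\}$: by construction $\beta_n(m) \ge b(m_i)$ for every $m \in W_{r_i}(m_i)$, with no constraint on where $m_i$ sits relative to the values of $b$, hence $K_j(\eta)^2 \le B_{n,\eta}(m)$ pointwise on $\supp(\phi_j\xi)$ and $\|\phi_j\xi\|_\mu K_j(\eta) \le \|\phi_j\xi\|_{B_{n,\eta}\mu}$. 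After summing the local operator inequalities and absorbing a factor $(d_M+1)^2$ coming from the Brouwer--Lebesgue multiplicity bound, one lets $n\to\infty$: upper semi-continuity of $b$ gives $\beta_n \downarrow b$, and dominated convergence yields~\eqref{eq:laatstesnok}.

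Your combination step is also not a proof as written. The appeal to Lemma~\ref{4pointGauge} ``with a single test element assembled from the local ones'' is not concrete, and the $\ell^1$ versus $\ell^2$ difficulty you flag is not actually overcome: summing operator inequalities is $\ell^1$ in nature, and with $\eta \sim \eps/(d_M+1)^2$ your $K_j^2$ already carries $81d_\fk^2(d_M+1)^4/\eps^2$, which cannot then be dominated by $(d_M+1)^{-2}B_\eps$. The paper takes $\eta = \eps/(d_M+1)$, sums the local inequalities directly, and accepts a single factor $(d_M+1)$ on each term, squaring it into~$B_\eps$ at the end.
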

\begin{proof}
Let $d$ be a Riemannian metric on $M$ for which $M$ is complete,
so that closed bounded subsets of $M$ are compact by the Hopf--Rinow Theorem.
Let $V \subseteq M$ be the compact support of
$\xi \in \Gamma_{c}(M,\fK)$.
Since $b$ is upper semi-continuous, the functions $\beta_{n} \colon M \rightarrow \R^{+}$
defined by
\[\beta_{n}(m) := \sup \{b(m')\,;\, d(m,m') \leq 1/n \}\]
constitute a decreasing sequence converging pointwise to $b$ as $n \rightarrow \infty$.
We now show that the functions $\beta_n$ are upper semi-continuous.
To see this, note that, for every $m_0 \in M$ and every $\epsilon>0$,
there exists an $n \in \N$ with
$b(m) < b(m_0) + \varepsilon/2$ for $m$ in the closed ball
$\overline{W}_{2/n}(m_0)$ with radius $2/n$ around $m_0$.
Since this ball is compact, it contains finitely many $m_i$
such that it is covered by  open neighborhoods $\mathcal{O}_i$ of $m_i$
such that $b(m) \leq b(m_i) + \varepsilon/2$ for all $m\in \mathcal{O}_i$.
If $d(m,m_0)< \frac{1}{n}$, then
$W_{1/n}(m) \subseteq \bigcup_{i}\mathcal{O}_{i}$, so that
$\beta_n(m) < \beta_n(m_0) + \varepsilon$.
In particular, $\beta_n$ is measurable, and bounded on the compact set $V$.
%

For every $n\in \N$, choose
a cover of $V$ by finitely many open balls $W_{r_i}(m_i)$ of radius
$r_i \leq 1/n$ around $m_i$, with the
property that $W_{r_i}(m_i) \subseteq U' \Subset U$ for a good pair $U'\Subset U$ of flow boxes
with $\frac{T}{T-T'}\frac{\mu_0 (U_{0})}{T} \leq b(m_{i}) + 1/n$.
Since $b(m_i) \leq \beta_{n}(m)$ for all $m\in W_{r_i}(m_i)$, it follows that
\begin{equation}\label{eq:flapmuts}
\textstyle{\frac{T}{T-T'}\frac{\mu_0 (U_{0})}{T}} \leq \beta_{n}(m) + 1/n
\qquad
\text{for all}
\qquad
m\in W_{r_i}(m_i)\,.
\end{equation}
By the Brouwer--Lebesgue Paving
Principle \cite[Thm.~V1]{Hurewicz41}, there exists a finite subcover $(W_{j})_{j\in J}$
with the property that every point $m \in V$
is contained in at most $d_{M} + 1$ sets.

Let $\phi_j$ be a partition of unity with respect to $(W_{j})_{j\in J}$.
By Lemma~\ref{nucompact}, applied to $\eta := \eps/(d_{M} +1)$, we obtain
$\pm i \dd\rho(\phi_j \xi) \leq \|\phi_j \xi\|_{\mu} (K_j(\eta)\one + \eta H)$, where $K_j(\eta)$ is given by
\eqref{eq:waaitrug} for a good pair of flowboxes $U' \Subset U$ containing~$W_j$.
From \eqref{eq:waaitrug} and \eqref{eq:flapmuts}, we find that
\[
B_{n,\eta}(m) := \max\Big( (9d_{\fk}/\eta)^2, 18 \pi^2 d_{\fk}(\beta_{n}(m) + 1/n) \Big) \geq
K_j(\eta)^2
\quad
\text{for all}
\quad
m\in W_{j}\,.
\]
As $\|\phi_{j}\xi\|_{\mu} K(\eta) \leq \|\phi_j \xi\|_{B_{n,\eta} \mu}$, we have
$\pm i \dd\rho(\phi_j\xi) \leq \|\phi_j\xi\|_{B_{n,\eta} \mu} \one + \eta \|\phi_j \xi\|_{\mu} H$ for all $j\in J$,
and thus
\begin{equation}
\pm i \dd\rho(\xi) \leq \Big(\sum_{j\in J} \|\phi_j\xi\|_{B_{n,\eta}\mu}\Big) \one +
\eta\,\Big(\sum_{j\in J} \|\phi_j\xi\|_{\mu}\Big) H\,.
\end{equation}
Since $\|(\phi_j \xi)(m)\|_{\kappa} \leq \|\xi(m)\|_{\kappa}$, and since
at most $d_{M}+1$ of the values $\phi_j(m)$ are nonzero,
it follows that
\[
\sum_{j\in J} \|\phi_j \xi\|_{\mu} \leq (d_{M}+1)\|\xi\|_{\mu}
\quad\text{and}\quad
\sum_{j\in J} \|\phi_j \xi\|_{B_{n,\eta}\mu} \leq (d_{M}+1)\|\xi\|_{B_{n,\eta}\mu}\,,
\]
so that
\begin{equation}\label{eq:plukjehaar}
\pm i \dd\rho(\xi) \leq (d_{M} +1) (\|\xi\|_{B_{n,\eta}\mu}\one +
 \eta \|\xi\|_{\mu} H)\,.
\end{equation}
To obtain \eqref{eq:laatstesnok} from \eqref{eq:plukjehaar},
recall that $\eta := \eps/(d_{M} +1)$.
The second term on the right is thus
$(d_{M}+1)\eta \|\xi\|_{\mu} = \eps \|\xi\|_{\mu}$, as required.
For the first term, note that
$\beta_n + 1/n$ is a bounded, decreasing sequence converging pointwise
to $b$ on $V$.
The bounded, decreasing sequence
$(d_{M}+1)^2B_{n,\eta}(m)$ thus converges to
$B_{\eps}(m)$ in \eqref{eq:laatstesnok}, where $\eps = (d_{M}+1)\eta$.
By the Dominated Convergence Theorem, we find that,
for $n \rightarrow \infty$, the squared norm
$((d_M + 1)\|\xi\|_{B_{n,\eta}})^2$
approaches
\[
 \int_{V}\|\xi\|^2_{\kappa} (d_M+1)^2B_{n,\eta}\, d\mu(m)
\rightarrow \int_{V}\|\xi\|^2_{\kappa}B_{\eps}\, d\mu(m)
= \|\xi\|_{B_{\eps}\mu}^2\,.
\]
Since \eqref{eq:plukjehaar} holds for every $n\in \N$, the proposition follows.
\end{proof}

Note that if the function $b \colon M \rightarrow \R^+$ of
Definition~\ref{def:defb} is bounded, then
so is $B_{\eps}$. If we define
$K(\eps)^2 := \|B_{\eps}\|_{\infty}$,
then we recover
the inequality
\begin{equation}\label{eq:cptbound}
\pm i \dd\rho(\xi) \leq \|\xi\|_{\mu}(K(\eps)\one + \eps H)\,,
\end{equation}
since $\|\xi\|_{B_{\eps}\mu} \leq K(\eps)\|\xi\|_{\mu}$.
This happens in particular if $M$ is compact because the
upper semi-continuous function $B_{\eps}$ is then automatically bounded.

\begin{Corollary}
Suppose that $M$ is compact and~$\bv_M$ is nowhere vanishing on~$M$.
Then, for every $\eps >0$, there exists a constant $K(\eps)>0$ such that
\eqref{eq:cptbound} holds for all $\xi \in \Gamma(M,\fK)$.
\end{Corollary}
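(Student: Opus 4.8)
The plan is to deduce this statement directly from Theorem~\ref{thm:lapje} by observing that compactness of $M$ forces the upper semi-continuous function $B_{\eps}$ to be bounded. First I would recall that, since $\bv_M$ is nowhere vanishing on the compact manifold $M$, every point $m\in M$ lies in a good pair of flowboxes (Definition~\ref{def:goodpair}), so $b(m)<\infty$ for all $m$ by Definition~\ref{def:defb}, and by Proposition~\ref{Prop:rafelfluitje} the function $b\colon M\to\R^{+}$ is upper semi-continuous and flow-invariant. An upper semi-continuous function on a compact space attains its supremum and is in particular bounded; hence $\|b\|_{\infty}<\infty$. From the explicit formula \eqref{eq:defBE} for $B_{\eps}$, namely
\[
B_{\eps}(m)=\max\Big(81 d_{\fk}^{2}(d_{M}+1)^{4}/\eps^{2},\ 18\pi^{2}d_{\fk}(d_{M}+1)^{2}b(m)\Big),
\]
it follows that $B_{\eps}$ is likewise bounded, with
\[
\|B_{\eps}\|_{\infty}=\max\Big(81 d_{\fk}^{2}(d_{M}+1)^{4}/\eps^{2},\ 18\pi^{2}d_{\fk}(d_{M}+1)^{2}\|b\|_{\infty}\Big).
\]

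Next I would set $K(\eps):=\|B_{\eps}\|_{\infty}^{1/2}$ and compare the two weighted $L^{2}$-norms. For $\xi\in\Gamma(M,\fK)$ (note that on a compact manifold $\Gamma(M,\fK)=\Gamma_{c}(M,\fK)$, so Theorem~\ref{thm:lapje} applies to every section), we have pointwise $\kappa(\xi,\xi)(m)B_{\eps}(m)\le \|B_{\eps}\|_{\infty}\,\kappa(\xi,\xi)(m)$, and integrating against the regular Borel measure $\mu$ from Theorem~\ref{MeasureThm} gives
\[
\|\xi\|_{B_{\eps}\mu}^{2}=\int_{M}\kappa(\xi,\xi)\,B_{\eps}\,d\mu\ \le\ \|B_{\eps}\|_{\infty}\int_{M}\kappa(\xi,\xi)\,d\mu = K(\eps)^{2}\|\xi\|_{\mu}^{2},
\]
so that $\|\xi\|_{B_{\eps}\mu}\le K(\eps)\|\xi\|_{\mu}$. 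Substituting this bound into the operator inequality \eqref{eq:laatstesnok} of Theorem~\ref{thm:lapje} yields
\[
\pm i\,\dd\rho(\xi)\ \le\ \|\xi\|_{B_{\eps}\mu}\one + \eps\|\xi\|_{\mu}H\ \le\ \|\xi\|_{\mu}\big(K(\eps)\one + \eps H\big),
\]
which is precisely \eqref{eq:cptbound}. Since $\eps>0$ was arbitrary, this proves the corollary.

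There is essentially no obstacle here: the corollary is a routine specialization of Theorem~\ref{thm:lapje}, and the only content is the elementary fact that an upper semi-continuous function on a compact space is bounded, together with the monotonicity of the integral. The one point to be careful about is the implicit claim (already noted just before the corollary in the excerpt) that $M$ compact makes $B_{\eps}$ automatically bounded — this is exactly where upper semi-continuity of $b$, hence of the max in \eqref{eq:defBE}, is used, so I would make sure to invoke Proposition~\ref{Prop:rafelfluitje} explicitly rather than take boundedness for granted. No integration by parts, no representation-theoretic input, and no delicate estimates are needed beyond what is already packaged in Theorem~\ref{thm:lapje}.
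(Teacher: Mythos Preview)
Your proposal is correct and follows essentially the same route as the paper: the corollary is stated immediately after a paragraph that already notes that $B_{\eps}$ is automatically bounded on a compact manifold (by upper semi-continuity), defines $K(\eps)^2 := \|B_{\eps}\|_\infty$, and derives \eqref{eq:cptbound} from \eqref{eq:laatstesnok} via $\|\xi\|_{B_{\eps}\mu}\le K(\eps)\|\xi\|_{\mu}$. Your write-up is slightly more explicit (invoking Proposition~\ref{Prop:rafelfluitje} for the upper semi-continuity and noting $\Gamma(M,\fK)=\Gamma_c(M,\fK)$), but the argument is the same.
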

Another important situation in which $B_{\eps}$ is bounded is for
product manifolds of the form $M = \R \times \Sigma$.

\begin{Corollary}\label{CorBEforR}
Suppose that $M \simeq \R \times \Sigma$ with
$\bv_M = \frac{\partial}{\partial t}$.
Then the inequality \eqref{eq:cptbound} holds for the compactly supported gauge algebra
$\fg = \Gamma_{c}(M,\fK)$,
with constant $K(\eps) = 9d_{\fk}(d_{M}+1)^2/\eps$
depending on $M$ and $\fK$ only through the dimension.
\end{Corollary}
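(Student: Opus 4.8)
The plan is to read off the corollary from Theorem~\ref{thm:lapje} together with the remark following it; the only point that needs checking is that the function $b\colon M\to\R^+$ of Definition~\ref{def:defb} vanishes identically when $M=\R\times\Sigma$ and $\bv_M=\partial_t$ (which is in particular nowhere vanishing, as required in this section). Granting $b\equiv 0$, the upper semi-continuous function $B_\eps$ of \eqref{eq:defBE} reduces to the constant $81\,d_\fk^2(d_M+1)^4/\eps^2$, so that $K(\eps):=\|B_\eps\|_\infty^{1/2}=9\,d_\fk(d_M+1)^2/\eps$ and $\|\xi\|_{B_\eps\mu}\le K(\eps)\,\|\xi\|_\mu$ for every $\xi$; then \eqref{eq:laatstesnok} becomes $\pm i\,\dd\rho(\xi)\le\|\xi\|_\mu\big(K(\eps)\one+\eps H\big)$, i.e.\ \eqref{eq:cptbound} with the asserted constant.

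To show $b\equiv 0$, I would proceed as follows. By Theorem~\ref{MeasureThm} the measure $\mu$ is invariant under the flow, which here is translation in the $\R$-factor; hence $\mu=\mu_\Sigma\otimes dt$ for a locally finite regular Borel measure $\mu_\Sigma$ on $\Sigma$. Fix $m=(t_0,s_0)\in M$, pick a relatively compact open ball $U_0\ni s_0$ in $\Sigma$, and for $T>0$ put $U:=(t_0-T/2,\,t_0+T/2)\times U_0$; since $\bv_M=\partial_t$ on $U$, this is a good flowbox containing $m$ (cf.\ Definition~\ref{def:goodflowbox}) with $\mu_0(U_0)=\mu_\Sigma(U_0)<\infty$ \emph{independent of $T$}. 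With $U':=(t_0-T/4,\,t_0+T/4)\times U_0'$ for $U_0'\Subset U_0$ containing $s_0$, the pair $U'\Subset U$ is a good pair of flowboxes (cf.\ Definition~\ref{def:goodpair}) with $|I|=T$ and $T'=T/2$, so
\[
\frac{T}{T-T'}\cdot\frac{\mu_0(U_0)}{T'}=\frac{4\,\mu_\Sigma(U_0)}{T}\longrightarrow 0\qquad(T\to\infty),
\]
and therefore $b(m)=0$.

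It remains to match this with the standing reduction to simple fibers. Replacing $(M,\fK)$ by $(\widehat M,\hfK)$ as in \S\ref{vanseminaarsimpel}, the cover $\widehat M\to M$ is finite and restricts trivially over any contractible flowbox $I\times U_0\subseteq M$, so over each such box $\widehat M$ consists of finitely many disjoint copies of $I\times U_0$ carrying the product measure; the computation above then applies verbatim on $\widehat M$ and again yields $b\equiv 0$. No compactness of $M$ enters: the essential feature of a product $\R\times\Sigma$ is that the interval length $|I|$ of a flowbox can be sent to infinity while the transversal measure $\mu_0(U_0)$ stays fixed, which is exactly why $B_\eps$, and hence $K(\eps)$, depends on the data only through $d_M$ and $d_\fk$. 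I expect no real obstacle here; the only mild bookkeeping concerns passing to the finite cover $\widehat M$.
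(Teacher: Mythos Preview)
Your proof is correct and follows essentially the same route as the paper: you show $b\equiv 0$ by exhibiting good pairs of flowboxes with $T'=T/2$ and letting $T\to\infty$ so that $\frac{T}{T-T'}\cdot\frac{\mu_0(U_0)}{T'}=\frac{4\mu_0(U_0)}{T}\to 0$, then read off $K(\eps)$ from \eqref{eq:defBE}. The final paragraph on the passage to $\widehat M$ is harmless but unnecessary here, since the standing assumption at the start of \S\ref{sec:6} already reduces to simple fibers.
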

\begin{proof}
For $(t,x) \in \R \times \Sigma$, choose $U'_{0} \Subset U_0 \subseteq \Sigma$
with $U_0 \subseteq \Sigma$ relatively compact, and
$x \in U'_{0}$. For $T'$ sufficiently large, $(t,x)$
is contained in the good pair of flowboxes $U'= U'_0 \times (-T'/2,T'/2)$,
and $U = U_0 \times (-T/2,T/2)$ for $T = 2T'$.
Since $\frac{T}{T - T'} \frac{\mu_0(U_0)}{T'} = 2\mu_0(U_0)/T'$ approaches
$0$
for $T'\rightarrow \infty$, it follows that $b(t,x) =0$.
In particular, $B_{\eps}(m) = 81d^2_{\fk}(d_{M}+1)^4/\eps^2$ is constant,
and the result follows.
\end{proof}


\subsection{Extending representations to Sobolev spaces}\label{sec:nummer3van6}

In this section,
we extend the map $\dd\rho$ to the Hilbert completion
$L^2_{B\mu}(M,\fK)$\index{gauge algebra!30@quadratic \scheiding $L^2_{B\mu}(M,\fK)$}
of $\fg/I_{\mu}$
with respect to the inner product \eqref{eq:Binp}
corresponding to $B_{\eps} \mu$.

Note that since $\|\xi\|_{\mu}$ is dominated by a multiple of
$\|\xi\|_{B_{\eps}\mu}$, the inner product $\lra{\xi,\eta}_{\mu}$
is continuous on $L^2_{B\mu}(M,\fK)$.
As the difference between $\|\xi\|_{B_{\eps}\mu}$ and
$\|\xi\|_{B_{\widetilde{\eps}}\mu}$ for $\eps, \widetilde{\eps} >0$ is a multiple of
$\|\xi\|_{\mu}$, the space $L^2_{B\mu}(M,\fK)$ and its topology are
independent of $\eps$. (This is why we omit $\varepsilon$ from the notation in
$L^2_{B\mu}(M,\cK)$.)

\subsubsection{The completion $L^2_{B\mu}(M,\fK)$ in $L^2$-norm}
We use Theorem~\ref{thm:lapje} to
extend $\dd\rho$ from $\fg$ to $L^2_{B\mu}(M,\fg)$.
Define\index{Hamiltonian!shifted \scheiding $\Gamma_{\xi}$, $A$}
\[
\Gamma_{\xi} = \|\xi\|_{B_{\eps}\mu}\one + \eps \|\xi\|_{\mu}H\,,
\]
and note that its domain $\cD(\Gamma_{\xi})$ is contained in the domain $\cD(H)$ of $H$.
With this notation, \eqref{eq:laatstesnok} becomes
\begin{equation}\label{eq:ineqgamma}
0 \leq \Gamma_{\xi} \pm i\dd\rho(\xi)\,,
\end{equation}
as an inequality of unbounded operators on $\cH^{\infty}$.
Further, define
\begin{equation}
  \label{eq:defA}
A := \one + H \quad \mbox{ with } \quad \cD(A) = \cD(H).
\end{equation}
%


\begin{Proposition}\label{zwak}
Let $0<\eps\leq 1$.
There exists a map $r$ from $L^2_{B\mu}(M,\fK)$ to
the unbounded, skew-symmetric operators on $\cH$ such that
$\cD(r(\xi))$ contains $\cD(H)$ for all $\xi \in L^2_{B\mu}(M,\fK)$,
$r(\xi)|_{\cH^{\infty}} = \dd\rho(\xi)$ for all $\xi \in \fg$, and,
for all $\psi \in \cD(H)$,
the functional
\[ L^{2}_{B\mu}(M,\fK)\rightarrow \C\quad \mbox{  defined by } \quad
\xi \mapsto \langle r(\xi) \rangle_{\psi} \]
is continuous.
%
Furthermore, there exists a continuous map
\[ \lambda \colon L^2_{B\mu}(M,\fK) \rightarrow B(\cH) \]
into the bounded operators such that
$\|\lambda(\xi)\| \leq \|\xi\|_{B_{\eps}\mu}$, $\lambda(\xi)$ is skew-hermitian,
$\lambda(\xi)$ leaves the domain of $A^{1/2}$ invariant, and
\[r(\xi) = A^{1/2} \lambda(\xi)A^{1/2},\]
as an equality of unbounded operators on $\cD(H)$.
%
\end{Proposition}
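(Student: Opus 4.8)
The plan is to build $r(\xi)$ first as a quadratic form and then realize it as $A^{1/2}\lambda(\xi)A^{1/2}$ for a bounded operator $\lambda(\xi)$. For $\xi\in\fg$ the inequality \eqref{eq:laatstesnok}, i.e.\ $0\le \Gamma_\xi\pm i\dd\rho(\xi)$ on $\cH^\infty$, says exactly that $\pm\langle i\dd\rho(\xi)\rangle_\psi\le \|\xi\|_{B_\eps\mu}\|\psi\|^2+\eps\|\xi\|_\mu\langle H\rangle_\psi\le \langle A_\xi\psi,\psi\rangle$ where $A_\xi:=\|\xi\|_{B_\eps\mu}\one+\eps\|\xi\|_\mu H\le \|\xi\|_{B_\eps\mu}A$ (using $0<\eps\le 1$ and $\|\xi\|_\mu\le c\|\xi\|_{B_\eps\mu}$). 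Hence, on $\cH^\infty$ (and by closability on $\cD(A^{1/2})=\cD(H^{1/2})$), the sesquilinear form $q_\xi(\phi,\psi):=\langle\phi, i\dd\rho(\xi)\psi\rangle$ satisfies $|q_\xi(\phi,\psi)|\le \|\xi\|_{B_\eps\mu}\,\|A^{1/2}\phi\|\,\|A^{1/2}\psi\|$ by polarization together with the Cauchy--Schwarz trick for positive forms (applied to the positive forms $\langle\cdot,(\Gamma_\xi\pm i\dd\rho(\xi))\cdot\rangle$). Defining $\tilde q_\xi(u,v):=q_\xi(A^{-1/2}u,A^{-1/2}v)$ on the dense set $A^{1/2}\cH^\infty$, we get a bounded sesquilinear form on $\cH$ of norm $\le\|\xi\|_{B_\eps\mu}$, hence a bounded operator $\lambda(\xi)\in B(\cH)$ with $\|\lambda(\xi)\|\le\|\xi\|_{B_\eps\mu}$ and $\langle u,\lambda(\xi)v\rangle=\langle A^{-1/2}u, i\dd\rho(\xi)A^{-1/2}v\rangle$. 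Skew-hermiticity of $\lambda(\xi)$ follows from skew-symmetry of $i\dd\rho(\xi)$ and self-adjointness of $A^{-1/2}$.

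Next I would set $r(\xi):=A^{1/2}\lambda(\xi)A^{1/2}$ with domain $\cD(A^{1/2}\lambda(\xi)A^{1/2})\supseteq\cD(A)=\cD(H)$; one has to check that $\lambda(\xi)$ maps $\cD(A^{1/2})$ into itself so that this composition makes sense on $\cD(A)$, and more sharply on $\cD(A^{1/2})$. Invariance of $\cD(A^{1/2})$ under $\lambda(\xi)$: for $v\in\cD(A^{1/2})$ and all $u\in\cD(A^{1/2})$, $|\langle A^{1/2}u,\lambda(\xi)v\rangle|=|\langle u, i\dd\rho(\xi)A^{-1/2}A^{1/2}v\rangle|$ — this needs care because $i\dd\rho(\xi)$ is only defined on $\cH^\infty$; the clean way is to observe that the form bound $|\langle\phi,i\dd\rho(\xi)\psi\rangle|\le\|\xi\|_{B_\eps\mu}\|A^{1/2}\phi\|\|A^{1/2}\psi\|$ means the bounded extension of $A^{-1/2}i\dd\rho(\xi)A^{-1/2}$ is exactly $\lambda(\xi)$, so $\lambda(\xi)=\overline{A^{-1/2}\,i\dd\rho(\xi)\,A^{-1/2}}$ on $\cH^\infty$ and this already shows $\lambda(\xi)\cD(A^{1/2})\subseteq \cD(A^{1/2})$ once we note $\langle A^{1/2}u,\lambda(\xi)v\rangle$ extends continuously in $u$. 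Then $r(\xi):=A^{1/2}\lambda(\xi)A^{1/2}$ agrees with $\dd\rho(\xi)$ on $\cH^\infty$ (by construction, since $A^{\pm1/2}$ preserve $\cH^\infty$, being functions of $H=i\dd\rho(D)$ and hence commuting with the smooth structure) and is skew-symmetric on $\cD(H)$ because $\lambda(\xi)$ is skew-hermitian and $A^{1/2}$ self-adjoint. The continuity of $\xi\mapsto\langle r(\xi)\rangle_\psi$ for fixed $\psi\in\cD(H)$ is immediate from $\langle r(\xi)\psi,\psi\rangle=\langle A^{1/2}\psi,\lambda(\xi)A^{1/2}\psi\rangle$ and $\|\lambda(\xi)\|\le\|\xi\|_{B_\eps\mu}$; continuity of $\xi\mapsto\lambda(\xi)\in B(\cH)$ is the same estimate applied to $\xi-\xi'$.

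Finally, the extension from $\fg$ to $L^2_{B\mu}(M,\fK)$: since $\fg/I_\mu$ is dense in $L^2_{B\mu}(M,\fK)$ by definition of the completion, and $\xi\mapsto\lambda(\xi)$ is $\|\cdot\|_{B_\eps\mu}$-to-norm continuous (in fact $\|\xi\|_{B_\eps\mu}$-Lipschitz) and vanishes on $I_\mu$ (because $\dd\rho$ does, by Corollary~\ref{zondagskind}, after the twist by $\chi$), it extends uniquely by continuity to $\lambda\colon L^2_{B\mu}(M,\fK)\to B(\cH)$ with the same norm bound, skew-hermiticity, and $\cD(A^{1/2})$-invariance (all closed conditions under norm limits in $B(\cH)$, using that a norm limit of operators preserving a fixed subspace need not preserve it in general — but here $\cD(A^{1/2})$-invariance is phrased via the form $\langle A^{1/2}u,\lambda(\xi)v\rangle$ being bounded in $A^{1/2}u$, which is a closed condition). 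Then $r(\xi):=A^{1/2}\lambda(\xi)A^{1/2}$ is defined for every $\xi\in L^2_{B\mu}(M,\fK)$ with $\cD(r(\xi))\supseteq\cD(H)$, is skew-symmetric, and restricts to $\dd\rho$ on $\fg$. The main obstacle, I expect, is the bookkeeping around unbounded operators: turning the operator inequality \eqref{eq:laatstesnok} on the form core $\cH^\infty$ into a genuine relatively-$A$-bounded form, checking closability, and verifying that $A^{\pm1/2}$ genuinely preserve $\cH^\infty$ and $\cD(A^{1/2})$ so that the factorization $r(\xi)=A^{1/2}\lambda(\xi)A^{1/2}$ holds as an identity of operators on $\cD(H)$ rather than merely of forms.
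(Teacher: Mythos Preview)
Your approach and the paper's are close cousins but run in opposite directions: you build the bounded operator $\lambda(\xi)$ first from the form inequality and then set $r(\xi)=A^{1/2}\lambda(\xi)A^{1/2}$, whereas the paper first produces an honest operator $r(\xi)\colon\cD(H)\to\cH$ (via the Friedrichs extensions $b^\pm(\xi)$ of the limiting positive forms $\overline{B}^\pm$, citing \cite[App.~I.A.2]{GJ87}) and only then defines $\lambda(\xi):=A^{-1/2}r(\xi)A^{-1/2}$. This order matters for exactly the step you flag as ``needs care'': the invariance $\lambda(\xi)\cD(A^{1/2})\subseteq\cD(A^{1/2})$.

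In the paper's route this invariance is automatic, because $A^{-1/2}$ maps $\cH$ onto $\cD(A^{1/2})$, so $\lambda(\xi)=A^{-1/2}r(\xi)A^{-1/2}$ visibly sends $\cD(A^{1/2})$ into $\cD(A^{1/2})$. In your route it is a genuine gap. Your proposed argument reduces to showing that $u\mapsto \overline q_\xi(u,A^{-1/2}v)$ is $\|u\|$-bounded on $\cD(A^{1/2})$ for every $v\in\cD(A^{1/2})$; but the only bound you have is $|\overline q_\xi(u,w)|\le\|\xi\|_{B_\eps\mu}\|A^{1/2}u\|\|A^{1/2}w\|$, which is continuity in $\|A^{1/2}u\|$, not in $\|u\|$. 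Form-boundedness relative to $A$ does not by itself give an operator with domain containing $\cD(A)$ (one can write down a bounded self-adjoint $M$ not preserving $\cD(A^{1/2})$, so that $A^{1/2}MA^{1/2}$ is form-equivalent to $A$ yet has $\cD\not\supseteq\cD(A)$). Likewise, the claim that $\cD(A^{1/2})$-invariance is ``a closed condition under norm limits'' is false without a uniform bound on $\|A^{1/2}\lambda(\xi_n)v\|$, which you do not have. This is precisely why the paper goes through the Friedrichs construction of $b^\pm(\xi)$ as \emph{operators} on $\cD(H)$ rather than working purely with forms.

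Two minor points. Your $q_\xi$ is hermitian (since $i\dd\rho(\xi)$ is symmetric), so the operator it represents is hermitian, not skew-hermitian; the skew-hermitian $\lambda(\xi)$ in the statement is $-i$ times yours. And the assertion that $A^{\pm1/2}$ preserve $\cH^\infty$ ``being functions of $H$'' is unjustified (smoothness is for all of $\hat G$, not just the one-parameter group of $H$); fortunately you do not actually need it, since agreement of $r(\xi)$ with $\dd\rho(\xi)$ on $\cH^\infty$ follows from equality of the forms on $\cH^\infty\times\cH^\infty$ together with density---which the paper also checks explicitly by proving that $A^{1/2}\cH^\infty$ is dense in~$\cH$.
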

\begin{proof}
Let $\xi_n$ be a sequence in $\g/I_{\mu}$
with $\|\xi - \xi_n\|_{B_{\eps}\mu} \rightarrow 0$,
and hence also $\|\xi - \xi_n\|_{\mu} \rightarrow 0$.
Without loss of generality, we assume that
$\|\xi - \xi_{n}\|_{B_{\eps}\mu} \leq  \frac{1}{2}$
and  $\eps \|\xi - \xi_{n}\|_{\mu} \leq \frac{1}{2}$ for all $n$, so that
\begin{equation}\label{eq:schathalf}
\Gamma_{\xi} - \Gamma_{\xi_n} + A \geq {\textstyle \frac{1}{2}}A\,.
\end{equation}
Define the sesquilinear forms
\[
B^{\pm}_n \colon \cH^{\infty} \times \cH^{\infty} \rightarrow \C,
\qquad
B^{\pm}_{n}(\psi,\chi):= \langle \psi,
\big((\Gamma_{\xi} + A) \pm i\dd\rho(\xi_{n})\big)\chi\rangle\,.\]
The forms $B^{\pm}_n$ are positive definite;
combining \eqref{eq:schathalf} with
inequality \eqref{eq:ineqgamma} applied to $\xi_n$, we find
\begin{equation} \label{eq:bn-esti}
B^{\pm}_n(\psi,\psi) \geq \lra{\psi, (\Gamma_{\xi} - \Gamma_{\xi_n}+ A) \psi}
\geq {\textstyle\frac{1}{2}}\lra{\psi, A\psi} \geq {\textstyle \frac{1}{2}}\|\psi\|^2.
\end{equation}
By \eqref{eq:ineqgamma} and the convergence of $\xi_n$, we find that
$B^{+}_{n}(\psi,\psi)$ is a Cauchy sequence
for every $\psi \in \cH^{\infty}$,
\[ |B^+_n(\psi,\psi) - B^+_m(\psi,\psi)|
= |\lra{\psi, i\dd\rho(\xi_n - \xi_m)\psi}|
\leq \lra{\psi, \Gamma_{\xi_n - \xi_m} \psi} \rightarrow 0\,.
\]
It follows that
$B^+(\psi,\chi) := \lim_{n\rightarrow \infty} B^+_n(\psi,\chi)$
defines a
positive definite, sesquilinear form $\cH^{\infty}\times \cH^{\infty} \rightarrow \C$.
Here we use that the estimate \eqref{eq:bn-esti} is independent of~$n$.
The same argument applies to $B^-(\psi,\chi) := \lim_{n\rightarrow \infty} B^-_n(\psi,\chi)$.
Note that
\begin{equation}\label{eq:vlotschrift}
 {\textstyle\frac{1}{2}}\langle\psi,A\psi\rangle
\leq B^{\pm}(\psi,\psi)\leq
\lra{\psi, (2\Gamma_{\xi} + A)\psi}
\leq c_{\xi} \lra{\psi, A \psi}
\end{equation}
for some $c_{\xi} >0$.
The forms $B^{\pm}$ therefore extend uniquely to closed, sesquilinear forms
$\overline{B}{}^{\pm} \colon \cD(A^{1/2}) \times \cD(A^{1/2}) \rightarrow \C$.
In turn, the forms $\overline{B}{}^{\pm}$ define a Friedrichs extension;
a closed, possibly unbounded positive operator $b^{\pm}(\xi) \colon \cD(H) \rightarrow \cH$,
such that
$\overline{B}{}^{\pm}(\psi,\chi) = \langle\psi,b^{\pm}(\xi)\chi\rangle$ for all
$\psi,\chi \in \cD(H)$
(cf.~\cite[App.~I.A.2]{GJ87}). Set
\[ r(\xi) := \frac{1}{2i}(b^+(\xi) - b^-(\xi)).\]
Since $b^+(\xi)$ and $b^-(\xi)$ are selfadjoint, $r(\xi)$ is skew-symmetric.
If $\xi \in \fg$, then
$\lra{\psi,r(\xi)\chi} = \lra{\psi,\dd\rho(\xi)\chi}$
for all $\psi, \chi \in \cH^{\infty}$,
so $r(\xi)$ is an extension of $\dd\rho(\xi)$.

Define
\[ \lambda(\xi) \colon D(A^{1/2}) \rightarrow  D(A^{1/2}), \qquad
	\lambda(\xi) := A^{-1/2}r(\xi) A^{-1/2}\,.
\]
Then, for $\psi, \chi \in D(A^{1/2})$, we have $A^{-1/2}\psi, A^{-1/2}\chi \in \cD(H)$.
Therefore,
\begin{equation}\label{eq:pootje}
\lra{\psi, \lambda(\xi) \chi} =
-\lra{\lambda(\xi)\psi,\chi} =
{\textstyle \frac{1}{2i}}(\overline{B}{}^+ - \overline{B}{}^{-})(A^{-1/2}\psi, A^{-1/2}\chi)\,.
\end{equation}
By \eqref{eq:vlotschrift} and Cauchy--Schwarz, we have
\[ |\ol{B}{}^{\pm}(\psi,\chi)| \leq c_{\xi}
\|A^{1/2}\psi\|\|A^{1/2}\chi\|, \]
so $|\lra{\psi, \lambda(\xi) \chi}| \leq c_{\xi}\|\psi\|\|\chi\|$
by \eqref{eq:pootje}. Therefore,
$\lambda(\xi)$ extends to a hermitian operator on $\cH$.
As such, the operator norm $\|\lambda(\xi)\|$ is the supremum of $|\lra{\psi,\lambda(\xi)\psi}|$
over all $\psi$ in the unit sphere of $\cH$.
For $\psi \in A^{1/2}\cH^{\infty}$,
\eqref{eq:ineqgamma}
yields
\begin{equation}\label{eq:druipsteensmurf}
|\lra{\psi, \lambda(\xi) \psi}| = \lim_{n \rightarrow \infty}
|\lra{A^{-1/2}\psi, \dd\rho(\xi_n)A^{-1/2}\psi}|
\leq \lra{\psi, A^{-1/2}\Gamma_{\xi} A^{-1/2} \psi}\,.
\end{equation}

We claim that
\begin{equation} \label{eq:a-est}
  A^{-1/2}\Gamma_{\xi}{A}^{-1/2} \leq \|\xi\|_{B_{\eps}\mu}
  \quad \mbox{ for } \quad 0<\eps<1.
\end{equation}
In fact, since
$\Gamma_{\xi}$ and $A$ commute, this is equivalent to
$\Gamma_{\xi} \leq A \|\xi\|_{B\mu}$, which in turn is equivalent to
\[ \|\xi\|_{B_\epsilon\mu} \one + \eps \|\xi\|_{\mu} H \leq \|\xi\|_{B_\epsilon\mu} (\one + H) \]
and this to $\eps \|\xi\|_{\mu} \leq \|\xi\|_{B_\epsilon\mu}$, which,
for $\eps < 1$, follows from the estimate ${B_{\eps}\geq 81 d_\fk^2(d_M +1)^4/\eps^2 > 1}$.
With \eqref{eq:a-est}, we find
\begin{equation}
|\lra{\psi, \lambda(\xi) \psi}|
\leq \|\xi\|_{B_{\eps}\mu}\|\psi\|^2
\quad \text{for} \quad
\psi \in A^{1/2}\cH^{\infty}
\,.
\end{equation}
To prove that $\|\lambda(\xi)\| \leq \|\xi\|_{B_{\eps}\mu}$, it therefore suffices to
show that $A^{1/2}\cH^{\infty}$ is dense in $\cH$.
First, we show that $A \cH^{\infty}$ is dense in $\cH$.
Since
$\exp(it A) = e^{it}\exp(itH)$ leaves the space
$\cH^{\infty}$ of smooth vectors invariant, the restriction
$A_0$ of $A$ to $\cH^{\infty}$ is essentially self-adjoint
\cite[\S VIII.4]{RS75}.
Suppose that
$\psi \perp A_0 \cH^{\infty}$. Then $\psi \in \cD(A^{*}_{0}) = \cD(A)$,
and $A_0^{*} \psi = A \psi = 0$.
Since $A$ is injective, $\psi = 0$
and $A \cH^{\infty}$ is dense in $\cH$.
Applying the contraction $A^{-1/2}$, we find that
$A^{1/2} \cH^{\infty}$ is dense in $A^{-1/2}\cH$.
Since $A^{-1/2}\cH = \cD(A^{1/2})$ is dense in $\cH$, we conclude that
$A^{1/2} \cH^{\infty}$ is dense in $\cH$, as required.
\end{proof}

For $s\in \R$, denote by
$\cH_{s}$ the Hilbert space completion of $\mathcal{D}(A^{s})$
with respect to the inner product
\begin{equation}
\lra{\psi,\chi}_{s} = \lra{A^{s}\psi,A^{s}\chi}\,.
\end{equation}
Denote the corresponding norm by $\|\psi\|_{s} = \|A^{s}\psi\|$,
and denote the norm of a continuous
operator $A \colon \cH_{s} \rightarrow \cH_{t}$ by $\|A\|_{s,t}$.
As $r(\xi) = A^{1/2}\lambda(\xi)A^{1/2}$ with
$\|\lambda(\xi)\| \leq \|\xi\|_{B_{\eps}\mu}$, the operator
$r(\xi) \colon \cD(A) \rightarrow \cH$ extends to a bounded operator
$r(\xi) \colon \cH_{1/2} \rightarrow \cH_{-1/2}$,
with
\begin{equation}\label{eq:grondstof}
	\|r(\xi) \psi\|_{-1/2} \leq  \|\xi\|_{B_{\eps}\mu} \|\psi\|_{1/2}\,.
\end{equation}
We thus have $\|r(\xi)\|_{1/2,-1/2} \leq \|\xi\|_{B_{\eps}\mu}$.


\subsubsection{The completion in Sobolev norm}

Note that convergence of $\xi_n$ to $\xi$ in $L^2_{B\mu}(M,\fK)$ only implies
\emph{weak} operator convergence of $r(\xi_n)$ to $r(\xi)$, as operators on the pre-Hilbert space
$\cD(H)$. In this section, we define a subspace $H^1_{B\mu}(M,\fK)$
of $L^2_{B\mu}(M,\fK)$ where convergence to $\xi$ implies
\emph{strong} convergence to $r(\xi)$.

\begin{Definition}[Parallel Sobolov spaces]
For $k\geq 0$, the
\emph{parallel Sobolev norm} $q_k$\index{norm!10@Sobolev \scheiding $q_{k}$} is defined by
\begin{eqnarray}
q_k(\xi) := \sum_{n=0}^{k} \|\xi\|_{n}\,,
\quad\text{where}\quad
\|\xi\|_{n} := \|D^n \xi\|_{B_{\eps}\mu}\,.
\end{eqnarray}
The \emph{parallel Sobolev space}\index{gauge algebra!50@Sobolev \scheiding $H^k_{B\mu}(M,\fK)$} $H^k_{B\mu}(M,\fK) \subseteq
L^2_{B\mu}(M,\fK)$ is the Banach completion of $\fg/I_{\mu}$
with respect to the norm $q_k$.
\end{Definition}

\begin{Proposition}\label{sterk}
Let $r$ be as in {\rm Proposition~\ref{zwak}}.
If $\xi \in H^k_{B\mu}(M,\fK)$, then $r(\xi)$
maps $\cD(H^{k+1})$ into $\cD(H^k)$. For $k = 1$, we have
\begin{equation}
[H,r(\xi)] = ir(D\xi) \label{ah}
\end{equation}
as an equality of unbounded
operators on $\cD(H^2)$.
Furthermore, if $\xi \in H^{k}_{B\mu}(M,\fK)$, then $r(\xi)$
extends to a continuous operator $\cH_{k+1/2} \rightarrow \cH_{k-1/2}$ with
\begin{equation}\label{eq:stapjes}
\|r(\xi) \psi\|_{k-1/2} \leq \sum_{j=0}^{k} \binom{k}{j} \|\xi\|_{j}\|\psi\|_{k-j+1/2}\,.
\end{equation}
Finally, for all $\xi \in H^1_{B\mu}(M,\fK)$, the
skew-symmetric operator $r(\xi)$ is essentially skew-adjoint.
\end{Proposition}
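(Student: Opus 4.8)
The plan is to build everything from the factorization $r(\xi)=A^{1/2}\lambda(\xi)A^{1/2}$ of Proposition~\ref{zwak} (with $A=\one+H\ge\one$), the relation $[H,\dd\rho(\xi)]=i\,\dd\rho(D\xi)$ on $\cH^{\infty}$ — which holds because $\dd\rho(D)=-iH$ and $\omega(D,\,\cdot\,)=0$ on $\fg$, so that $[\dd\rho(D),\dd\rho(\xi)]=\dd\rho(D\xi)$ — and the bound $\|\lambda(\eta)\|\le\|\eta\|_{B_{\eps}\mu}$, which makes $\eta\mapsto\lambda(\eta)$ a bounded linear map into $B(\cH)$. Since $(A^{it})_{t\in\R}$ is a unitary group commuting with $H$, conjugating $r(\xi)=A^{1/2}\lambda(\xi)A^{1/2}$ by the resolvents of $H$ upgrades the commutation relation, for $\xi\in\fg$, to $[H,\lambda(\xi)]=i\,\lambda(D\xi)$.

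First I would establish the estimates for $\xi\in\fg$, where $\dd\rho(\xi)$ preserves $\cH^{\infty}$ and the commutator algebra is clean: iterating $[H,\dd\rho(\xi)]=i\,\dd\rho(D\xi)$ (with $\ad_{A}=\ad_{H}$) gives, on $\cH^{\infty}$,
\[
A^{m}\dd\rho(\xi)=\sum_{j=0}^{m}\binom{m}{j}i^{j}\,\dd\rho(D^{j}\xi)\,A^{m-j},
\]
and conjugating the factorization yields the analogous identity $A^{m}\lambda(\xi)A^{-m}=\sum_{j}\binom{m}{j}i^{j}\,\lambda(D^{j}\xi)A^{-j}$, which is bounded with norm at most $\sum_{j}\binom{m}{j}\|\xi\|_{j}$. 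Hence $\lambda(\xi)$ maps the Hilbert-scale space $\cH_{m}=\cD(A^{m})$ to itself for every integer $m\ge0$, and by complex interpolation of the scale $(\cH_{s})_{s\ge0}$ it maps $\cH_{s}$ to itself for all $s\ge0$; therefore $r(\xi)=A^{1/2}\lambda(\xi)A^{1/2}$ maps $\cH_{s}$ into $\cH_{s-1}$ for $s\ge1$. The case $s=k+1$ gives $r(\xi)\colon\cD(H^{k+1})\to\cD(H^{k})$; applying the $\lambda$-identity with $m=k$ to $A^{1/2}\psi$ gives $\|r(\xi)\psi\|_{k-1/2}=\|A^{k}\lambda(\xi)A^{1/2}\psi\|\le\sum_{j}\binom{k}{j}\|\xi\|_{j}\,\|\psi\|_{k-j+1/2}$, i.e.\ \eqref{eq:stapjes}; and $[H,r(\xi)]=A^{1/2}[H,\lambda(\xi)]A^{1/2}=i\,r(D\xi)$ on $\cD(H^{2})$, i.e.\ \eqref{ah}.

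For general $\xi\in H^{k}_{B\mu}(M,\fK)$ I would choose $\xi_{n}\in\fg/I_{\mu}$ with $q_{k}(\xi-\xi_{n})\to0$; then $\|D^{j}\xi_{n}-D^{j}\xi\|_{B_{\eps}\mu}\to0$ for $j\le k$, so $\lambda(D^{j}\xi_{n})\to\lambda(D^{j}\xi)$ in operator norm and $r(\xi_{n})\to r(\xi)$ in $\|\,\cdot\,\|_{1/2,-1/2}$ by \eqref{eq:grondstof}. The bounds on $A^{m}\lambda(\xi_{n})A^{-m}$ survive the limit for $m\le k$, so after interpolation $\lambda(\xi)\colon\cH_{s}\to\cH_{s}$ is bounded for $0\le s\le k$, giving the mapping property and \eqref{eq:stapjes} for $\xi$; and \eqref{ah} follows by approximation on $\cD(H^{2})$, using the $\cH_{1}\to\cH_{0}$ bound for the $r(\xi_{n})$ obtained by interpolating \eqref{eq:grondstof} with \eqref{eq:stapjes} at $k=1$, the continuity of $H\colon\cH_{1/2}\to\cH_{-1/2}$, and the fact that, by the mapping property for $\xi$ and for $D\xi$, all three terms of the commutator in fact lie in $\cH$.

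Finally, for $\xi\in H^{1}_{B\mu}(M,\fK)$ I would obtain essential skew-adjointness from Nelson's commutator theorem \cite{Nel59} applied with $N:=A=\one+H\ge\one$ and the symmetric operator $S:=-i\,r(\xi)$, which is defined on $\cD(N)=\cD(H)$ by Proposition~\ref{zwak}. The relative bound $\|S\psi\|\le c\,\|N\psi\|$ for $\psi\in\cD(N)$ is the $\cH_{1}\to\cH_{0}$ estimate just mentioned. For the commutator estimate, for $\psi\in\cD(H^{2})$ one computes, using \eqref{ah} and $r(D\xi)=A^{1/2}\lambda(D\xi)A^{1/2}$,
\[
\langle N\psi,S\psi\rangle-\langle S\psi,N\psi\rangle=-\langle r(D\xi)\psi,\psi\rangle=-\langle\lambda(D\xi)A^{1/2}\psi,A^{1/2}\psi\rangle ,
\]
so that $|\langle N\psi,S\psi\rangle-\langle S\psi,N\psi\rangle|\le\|\lambda(D\xi)\|\,\|A^{1/2}\psi\|^{2}\le\|\xi\|_{1}\,\langle\psi,N\psi\rangle$, and this extends to all $\psi\in\cD(N)$ by density and the norm bounds. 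Nelson's theorem then gives that $S$ is essentially self-adjoint on $\cD(H)$, hence $r(\xi)=iS$ is essentially skew-adjoint. The main obstacle throughout is the passage from $\fg$ to $H^{k}_{B\mu}(M,\fK)$: the algebraic identities are transparent only on $\cH^{\infty}$, and transporting them requires careful bookkeeping of the half-integer powers of $A$ (hence the complex interpolation of the scale $(\cH_{s})$) and of the topologies in which the approximations converge, so that \eqref{ah} persists as an operator identity on $\cD(H^{2})$ rather than merely as a form identity.
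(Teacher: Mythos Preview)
Your proof is correct and takes a genuinely different route from the paper's. The paper never passes to the bounded operators $\lambda(\xi)$; instead it argues by induction on $k$ directly at the level of $r(\xi)$, verifying that $r(\xi)\psi\in\cD(H^{k})$ by showing that the functional $\chi\mapsto\langle r(\xi)\psi,H^{k+1}\chi\rangle$ is bounded (via approximation by $\xi_{n}\in\fg$ and the relation $Hr(\xi_{n})=r(\xi_{n})H+ir(D\xi_{n})$ on $\cH^{\infty}$), and then obtains \eqref{eq:stapjes} by the same induction using $Ar(\xi)=r(\xi)A+ir(D\xi)$. For essential skew-adjointness the paper invokes \cite[Prop.~2]{Nelson1972}, whose hypothesis is exactly that $r(\xi)$ and $[A,r(\xi)]=ir(D\xi)$ extend to bounded operators $\cH_{1/2}\to\cH_{-1/2}$; your use of Nelson's commutator theorem from \cite{Nel59} is a different (and equally standard) route.

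The trade-off is this: your approach front-loads one nontrivial step---namely that $[H,\lambda(\xi)]=i\lambda(D\xi)$ holds as an identity of \emph{bounded} operators for $\xi\in\fg$ (best seen via the covariance $U_{t}\lambda(\xi)U_{-t}=\lambda(\alpha_{t}\xi)$ and norm-$C^{1}$ dependence on $t$, rather than literally ``conjugating by resolvents'')---after which the algebra is clean and complex interpolation handles the half-integer scales uniformly. The paper's approach avoids interpolation theory and the passage to $\lambda$, at the price of repeated domain bookkeeping at each induction step. Your approximation argument for \eqref{ah} on $\cD(H^{2})$---proving the identity first in $\cH_{-1/2}$ and then observing all terms lie in $\cH$---is exactly the kind of care the passage to $H^{k}_{B\mu}$ requires, and it is sound.
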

\begin{proof}
We prove that for $\xi \in H^k_{B\mu}(M,\fK)$, $r(\xi)$ maps $\cD(H^{k+1})$ into $\cD(H^{k})$.
We proceed by induction on $k$.
Since $H^0_{B\mu}(M,\fK) = L^2_{B\mu}(M,\fK)$,
 the case $k=0$ follows from Proposition~\ref{zwak}.
Suppose that the statement holds for all $\xi \in H^k_{B\mu}(M,\fK)$.
%
For $\xi \in H^{k+1}_{B\mu}(M,\fK)$ and
$\psi \in \cD(H^{k+2})$, we show that $r(\xi) \psi \in \cD(H^{k+1})$.
Since $H^{k+1}$ is selfadjoint, it suffices to show that
$\chi \mapsto \langle r(\xi)\psi, H^{k+1}\chi \rangle$
is a continuous, linear functional on $\cH^{\infty}$ with respect to the
subspace topology induced by the inclusion in $\cH$.

Let
$\xi_{n} \in \g/I_{\mu}$ be a sequence such that
$\xi_{n} \rightarrow \xi$ and $D\xi_n \rightarrow D\xi$ in $L^{2}_{B\mu}(M,\fK)$.
Since $Hr(\xi_n) = r(\xi_{n})H + ir(D\xi_n)$ on $\cH^{\infty}$, we have
\begin{eqnarray}
\langle r(\xi)\psi, H^{k+1}\chi \rangle &=&
-\lim_{n\rightarrow \infty} \langle\psi, r(\xi_n)H^{k+1}\chi \rangle \nonumber\\
&=&
-\lim_{n\rightarrow \infty} \langle H\psi, r(\xi_n)H^{k}\chi \rangle
+
\lim_{n\rightarrow \infty} \langle \psi, ir(D\xi_n)H^{k}\chi \rangle \nonumber\\
&=&
\langle r(\xi) H\psi +i r(D\xi)\psi, H^{k}\chi \rangle\,.
\label{eq:stroomput}
\end{eqnarray}
As $\psi \in \cD(H^{k+2})$, both $H\psi$ and $\psi$ are in $\cD(H^{k+1})$.
Since $\xi \in H^{k+1}_{B\mu}(M,\fK)$, we have
$\xi, D\xi \in H^{k}_{B\mu}(M,\fK)$, so that
$r(\xi) H\psi + i r(D\xi)\psi \in \cD(H^{k})$ by the induction hypothesis.
From~\eqref{eq:stroomput}, we thus find that
\[
\langle r(\xi)\psi, H^{k+1}\chi \rangle = \langle H^{k}\big(r(\xi) H\psi +i r(D\xi)\psi\big), \chi \rangle\,,
\]
which is manifestly continuous in $\chi$. We conclude that $r(\xi)$ maps $\cD(H^{k+2})$
to $\cD(H^{k+1})$. Moreover, for $k=0$, we find that
$Hr(\xi) - r(\xi)H - ir(D\xi)$ vanishes
on $\cD(H^2)$.

The inequality \eqref{eq:stapjes} is proven in a similar fashion.
Assume by induction that \eqref{eq:stapjes} holds for all
$\xi \in H^k_{B\mu}(M,\fK)$
and $\psi \in \cH_{k+1/2}$, the case $k=0$
being \eqref{eq:grondstof}.
We recall that $\|\psi\|_{s} = \|A^s \psi\|$ with $A = \one + H$
(see \eqref{eq:defA}).
For $\xi \in H^{k+1}_{B\mu}(M,\fK)$ and $\psi \in \cH_{k+3/2}$,
we use $A r(\xi) \psi = r(\xi) A \psi + ir(D\xi) \psi$ to see that
\[
\|r(\xi) \psi\|_{k + 1/2} = \|A r(\xi) \psi\|_{k-1/2} 
\leq
\|r(\xi) A \psi\|_{k-1/2} + \|r(D\xi)\|_{k-1/2}.
\]
By the induction hypothesis with $\|A \psi\|_{k-j+1/2} = \|\psi\|_{(k+1)-j+1/2}$
(for the first term)
and $\|D\xi\|_{j} = \|\xi\|_{j+1}$ (for the second), we find that $\|r(\xi) \psi\|_{k + 1/2}$ is bounded by
\[
\sum_{j=0}^{k} \binom{k}{j} \|\xi\|_{j}\|\psi\|_{(k+1)-j+1/2}
+
\sum_{j=0}^{k} \binom{k}{j} \|\xi\|_{j+1}\|\psi\|_{k -j+1/2} \,.
\]
Since $\binom{k}{j} + \binom{k}{j-1} = \binom{k+1}{j}$, we have \\
\[
\|r(\xi) \psi\|_{k + 1/2} \leq
\sum_{j=0}^{k+1} \binom{k+1}{j} \|\xi\|_{j}\|\psi\|_{k+1 - j + 1/2}\,,
\]
as required.

Finally, if $\xi \in H^1_{B\mu}(M,\fK)$, then $\xi, D\xi \in L^2_{B\mu}(M,\fK)$.
By \eqref{eq:grondstof}, the operators
$r(\xi)$ and $[A, r(\xi)] = ir(D\xi)$
from $\cD(H)$ to  $\cH$ extend continuously to bounded operators
$\cH_{1/2} \rightarrow \cH_{-1/2}$.
It then follows from a result of Nelson (\cite[Prop.~2]{Nelson1972})
that $r(\xi)$ is essentially skew-adjoint.
\end{proof}

If we estimate $\|\xi\|_{j} \leq q_k(\xi)$
and $\|\psi\|_{k-j+1/2} \leq \|\psi\|_{k+1/2}$
in \eqref{eq:stapjes}, we find that
$r(\xi) \colon \cH_{k+1/2} \rightarrow \cH_{k-1/2}$ satisfies
\begin{equation}\label{eq:brombaard}
\|r(\xi) \psi\|_{k-1/2} \leq 2^k  q_k(\xi) \,\|\psi\|_{k+1/2}\,,
\end{equation}
so the linear map $H^k_{B\mu}(M,\fK) \times \cH_{k+1/2} \rightarrow \cH_{k-1/2}$
defined by $(\xi,\psi) \mapsto r(\xi)\psi$ is jointly continuous.
For $k=1$, we find from \eqref{eq:stapjes}
the slightly stronger estimate
\begin{equation}\label{zwakdik}
\|r(\xi) \psi\| \leq \|r(\xi)\psi\|_{1/2} \leq q_1(\xi) \|A^{3/2} \psi\|\,.
\end{equation}
In particular, convergence of $\xi_n$ to $\xi$ in $H^1_{B\mu}(M,\fK)$ implies
\emph{strong}
convergence of $r(\xi_n)$ to $r(\xi)$ on $\cD(A^{3/2})$.


\subsection{Sobolev--Lie algebras}\label{sec:SobolevLie}
Having established that the positive energy representation $\dd\rho$
extends to a continuous map $r$ on $H^k_{B\mu}(M,\fK)$, we would like
to determine whether $r$ gives rise to a Lie algebra representation.
Since the spaces $H^k_{B\mu}(M,\fK)$ do not inherit
the Lie algebra structure from $\fg/I_{\mu}$, we
introduce two spaces of \emph{bounded} Sobolev sections
of $\fK\rightarrow M$, both equipped with the pointwise Lie bracket.

For an open subset $N \subseteq M$, we define the Lie algebra $H^k_{b}(N,\fK)$
of bounded parallel Sobolev sections, and a certain subalgebra $H^k_{\partial}(N,\fK)$
of sections that vanish to order $k$ at the boundary
of the 1-point compactification of $N$.
As before, the underlying measure is the restriction to $N$ of the flow-invariant measure
$B_{\eps}\mu$ on $M$.
For convenience of notation, we will denote this measure by
\begin{equation}
\nu = B_{\eps}\mu\,.
\end{equation}

\subsubsection{The Lie algebra $L^2_{b}(N,\fK)$ of bounded $L^2$-sections}

Let $N$ be an open subset of $M$, and let
$\xi$ be a measurable section of $\fK \rightarrow N$.
Then
$\|\xi\|_{\kappa} = \sqrt{\kappa(\xi,\xi)}$ is a measurable function on $N$.
We define $\|\xi\|_{\infty}$ to be the essential supremum of $\|\xi\|_{\kappa}$
with respect to $\nu$, and we define $L^{\infty}(N,\fK)$ to be the Lie algebra
of equivalence classes of essentially bounded, measurable sections of
$\fK \rightarrow N$.
This is a Banach--Lie algebra with respect to the norm $\|\xi\|_{\infty}$, and the Lie bracket coming
from the pointwise bracket of sections. We define
$L^2_{b}(N,\fK)$\index{gauge algebra!40@bounded quadratic \scheiding $L^2_{b}(N,\fK)$} to be the space of equivalence
classes of sections which are both essentially bounded and square integrable
with respect to $\nu$.
Since both $L^2(N,\fK)$ and
$L^{\infty}(N,\fK)$ are complete, it follows that $L^2_{b}(N,\fK)$ is a Banach space
with respect to the norm $\|\xi\|_{\nu} + \|\xi\|_{\infty}$.

Let $c_{\fk}$ be a constant such that
\begin{equation}\label{eq:pruut}
\|[X,Y]\|_{\kappa} \leq c_{\fk}\|X\|_{\kappa}\|Y\|_{\kappa}
\end{equation}
for all $X,Y \in \fk$. Then we find
\begin{eqnarray}\label{eq:dera}
\|[\xi,\eta]\|_{\nu} &\leq& c_{\fk} \|\xi\|_{\infty}\|\eta\|_{\nu},\label{eq:infmu}\\
\|[\xi,\eta]\|_{\infty} &\leq& c_{\fk} \|\xi\|_{\infty}\|\eta\|_{\infty}\,.\label{eq:infinf}
\end{eqnarray}
It follows that the Lie bracket
$[\,\cdot\,,\,\cdot\,] \colon L_{b}^{2}(N,\fK) \times L_{b}^{2}(N,\fK) \rightarrow
L^{\infty}(N,\fK)$
takes values in $L^2_{b}(N,\fK)$ and is continuous with respect to the
norm $p_0(\xi) := \|\xi\|_{\nu} + \|\xi\|_{\infty}$.
In particular, $L^2_{b}(N,\fK)$ is a Banach--Lie algebra, and the inclusion
$L^2_{b}(N,\fK) \hookrightarrow L^{\infty}(N,\fK)$ is a continuous morphism of
Banach--Lie algebras.
If $N \subseteq N'$, then $L^2_{b}(N,\fK)$ is a subalgebra of
$L^2_b(N',\fK)$ in the natural fashion.

\subsubsection{The `parallel' Sobolev--Lie algebras $H^k_{b}(N,\fK)$}

Recall from Definition~\ref{def:gemetrictype} that a one-parameter group
$(\gamma_t)_{t\in \R}$ of automorphisms of $\fK\rightarrow M$ gives rise to a
one-parameter group
$(\alpha_t)_{t\in \R}$ of automorphisms of $\fg = \Gamma_{c}(M,\fK)$.
In the exact same way, we obtain a one-parameter group of automorphisms of
$L^2_{b}(M,\fK)$.

Indeed, since the Killing form is invariant under automorphisms, we have
$\|\alpha_t (\xi)\|_{\kappa}= \|\xi\|_{\kappa}\circ \gamma_{M,t}$,
so that in particular $\|\alpha_t(\xi)\|_{\infty} = \|\xi\|_{\infty}$.
Further, since the measure $\nu = B_{\eps}\mu$ is invariant under the flow
$\gamma_{M,t}$ (Theorem~\ref{MeasureThm}),
 we find $\|\alpha_t(\xi)\|_{\nu} = \|\xi\|_{\nu}$.

Since $\alpha_t$ is a one-parameter group of unitary transformations of the Hilbert space
$L^2_{\nu}(M,\fK)$, it is generated by a skew-adjoint operator $D$.
We define $H^1_{\nu}(N,\fK)$ to be the intersection of its domain with
$L^2_{\nu}(N,\fK)$, and we define $H^1_{b}(N,\fK)$ to be the space of all
$\xi \in H^1_{\nu}(N,\fK)$,
where both $\|\xi\|_{\infty}$ and $\|D\xi\|_{\infty}$ are finite.
In other words, $H^1_{b}(N,\fK)$ is the space of equivalence classes of
essentially bounded,
square integrable sections $\xi$ of $\fK \rightarrow N$ such that the $L^2$-limit
\begin{equation}
  \label{eq:d-def}
D(\xi) := L^2\text{-}\lim_{t \rightarrow 0} {\textstyle \frac{1}{t}}(\alpha_t(\xi) - \xi)
\end{equation}
exists, and $\|D(\xi)\|_{\infty}$ is finite.

\begin{Proposition}\label{prop:heen}
The space $H^1_{b}(N,\fK)$ is a Lie subalgebra of $L^2_{b}(N,\fK)$, and the generator
$D \colon H^1_{b}(N,\fK) \rightarrow L^2_{b}(N,\fK)$ satisfies
\begin{equation}\label{eq:derprop}
D([\xi,\eta]) = [D(\xi),\eta] + [\xi,D(\eta)]
\quad \text{for all} \quad \xi,\eta \in H^1_{b}(N,\fK)
\,.
\end{equation}
\end{Proposition}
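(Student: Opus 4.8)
The plan is to deduce both assertions directly from the two bracket estimates \eqref{eq:infmu} and \eqref{eq:infinf}, together with the fact that the flow acts \emph{pointwise} by Lie algebra automorphisms, so that $\alpha_t([\xi,\eta]) = [\alpha_t(\xi),\alpha_t(\eta)]$, and the strong continuity of $(\alpha_t)_{t\in\R}$ on $L^2_\nu(M,\fK)$ (which is what makes $D$ skew-adjoint by Stone's theorem, and which I would record as a preliminary). Fix $\xi,\eta \in H^1_b(N,\fK)$, extended by zero to all of $M$. First I would record the easy observations: the pointwise bracket $[\xi,\eta]$ vanishes wherever either factor does, hence is supported in $N$, and by \eqref{eq:infmu}, \eqref{eq:infinf} it satisfies $\|[\xi,\eta]\|_\nu \le c_\fk\|\xi\|_\infty\|\eta\|_\nu < \infty$ and $\|[\xi,\eta]\|_\infty \le c_\fk\|\xi\|_\infty\|\eta\|_\infty < \infty$, so $[\xi,\eta] \in L^2_b(N,\fK)$. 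It then remains to show $[\xi,\eta] \in \mathrm{dom}(D)$ with
\[ D([\xi,\eta]) = [D(\xi),\eta] + [\xi,D(\eta)]; \]
since $\eta$ and $\xi$ are supported in $N$, this right-hand side is again supported in $N$, and it is essentially bounded because $\|[D(\xi),\eta]\|_\infty \le c_\fk\|D(\xi)\|_\infty\|\eta\|_\infty < \infty$ and similarly for the other term, whence $[\xi,\eta] \in H^1_b(N,\fK)$ and the derivation identity holds. (That $D$ restricts to a map $H^1_b(N,\fK)\to L^2_b(N,\fK)$ is then clear, since $D(\xi)$ is essentially bounded by the very definition of $H^1_b$ and square-integrable as an element of the domain of $D$.)

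For the domain statement I would use the Leibniz splitting of the difference quotient,
\[ \frac{1}{t}\big(\alpha_t([\xi,\eta]) - [\xi,\eta]\big) = \Big[\tfrac{1}{t}(\alpha_t(\xi) - \xi),\, \alpha_t(\eta)\Big] + \Big[\xi,\, \tfrac{1}{t}(\alpha_t(\eta) - \eta)\Big], \]
and show each term converges in $L^2_\nu$-norm as $t\to 0$. The second term is immediate: by \eqref{eq:infmu},
\[ \Big\|\big[\xi,\, \tfrac{1}{t}(\alpha_t(\eta)-\eta) - D(\eta)\big]\Big\|_\nu \le c_\fk\,\|\xi\|_\infty\,\big\|\tfrac{1}{t}(\alpha_t(\eta)-\eta) - D(\eta)\big\|_\nu \longrightarrow 0, \]
using $\eta \in \mathrm{dom}(D)$ and $\|\xi\|_\infty < \infty$. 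For the first term I would insert and subtract $[D(\xi),\eta]$ and estimate the two errors by \eqref{eq:infmu} again, putting the essentially bounded factor in the slot carrying the $L^\infty$-norm:
\[ \Big\|\big[\tfrac{1}{t}(\alpha_t(\xi)-\xi) - D(\xi),\, \alpha_t(\eta)\big]\Big\|_\nu \le c_\fk\,\|\alpha_t(\eta)\|_\infty\,\big\|\tfrac{1}{t}(\alpha_t(\xi)-\xi) - D(\xi)\big\|_\nu \longrightarrow 0, \]
since $\|\alpha_t(\eta)\|_\infty = \|\eta\|_\infty$ is constant in $t$ and $\xi \in \mathrm{dom}(D)$, while
\[ \big\|[D(\xi),\, \alpha_t(\eta) - \eta]\big\|_\nu \le c_\fk\,\|D(\xi)\|_\infty\,\|\alpha_t(\eta) - \eta\|_\nu \longrightarrow 0 \]
by strong continuity of $(\alpha_t)$ on $L^2_\nu(M,\fK)$ and the hypothesis $\|D(\xi)\|_\infty < \infty$. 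Summing, $\tfrac{1}{t}(\alpha_t([\xi,\eta]) - [\xi,\eta]) \to [D(\xi),\eta] + [\xi,D(\eta)]$ in $L^2_\nu$, which is precisely the claim.

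I do not expect a serious obstacle; the only point needing genuine care is the interplay of the two norms. The difference quotients $\tfrac1t(\alpha_t(\xi)-\xi)$ converge only in $L^2_\nu$ and carry no $t$-uniform $L^\infty$-bound, so in every bracket one must arrange that the \emph{other} factor ($\alpha_t(\eta)$, respectively $\xi$ or $D(\xi)$) is the one with a $t$-uniform essential-sup bound — this is exactly what dictates the Leibniz splitting above and the choice to interpolate through $[D(\xi),\eta]$ in the first term. A minor preliminary that should be spelled out is the strong continuity of $(\alpha_t)$ on $L^2_\nu(M,\fK)$ (needed both for Stone's theorem and for $\alpha_t(\eta)\to\eta$ in $L^2_\nu$): this follows by approximating an arbitrary element of $L^2_\nu$ in $L^2$-norm by a continuous compactly supported section, on which $\alpha_t$ converges uniformly, and invoking the $\gamma$-invariance of $\nu$ to keep the approximation error uniform in $t$.
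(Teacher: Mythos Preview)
Your proof is correct and is essentially the same as the paper's: both use the Leibniz splitting of the difference quotient into three pieces (one converging to $[D\xi,\eta]$, one to $[\xi,D\eta]$, one to zero), each handled via \eqref{eq:infmu} by placing the $L^\infty$-bounded factor appropriately, followed by the $L^\infty$-bound on $D([\xi,\eta])$ via \eqref{eq:infinf}. Your explicit remark that the strong continuity of $(\alpha_t)$ on $L^2_\nu$ is a needed preliminary is a point the paper takes for granted.
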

\begin{proof}
Let $\xi,\eta \in H^1_{b}(N,\fK)$, and
denote by $L^2$-$\lim$ the limit with respect to the norm $\| \xi \|_{\nu}$.
First, we show that $[\xi,\eta]$ is in the domain
of $D$.
\begin{align*}
D([\xi,\eta]) &=
L^2\text{-}\lim_{t\rightarrow 0} {\textstyle \frac{1}{t}}(\alpha_t([\xi,\eta])-[\xi,\eta])\\
&=
L^2\text{-}\lim_{t\rightarrow 0} \,[D\xi, \alpha_t(\eta)]
+
L^2\text{-}\lim_{t\rightarrow 0} \,[{\textstyle \frac{1}{t}}(\alpha_t(\xi) - \xi) - D(\xi), \alpha_{t}(\eta)]
\\
&\quad+L^2\text{-}\lim_{t\rightarrow 0} \,[\xi, {\textstyle \frac{1}{t}}(\alpha_t(\eta) - \eta)]
 = [D\xi,\eta] + [\xi,D\eta]\,.
\end{align*}
In the last step, we used the inequality \eqref{eq:infmu} three times.
Since $\|D\xi\|_{\infty}$ is bounded and
$L^2$-$\lim_{t\rightarrow 0}\alpha_t(\eta) = \eta$,
it follows from \eqref{eq:infmu} that the first term is given by
$L^2\text{-}\lim_{t\rightarrow 0} \,[D\xi, \alpha_t(\eta)] = [D\xi,\eta]$.
Similarly, since
$\|\xi\|_{\infty}$ is bounded and
$L^2$-$\lim_{t\rightarrow 0}{\textstyle \frac{1}{t}}(\alpha_t(\eta) - \eta) = D(\eta)$,
the third term equals $[\xi,D(\eta)]$.
To see that the second term is zero, note that
$\|\alpha_t(\eta)\|_{\infty} = \|\eta\|_{\infty}$. It then follows from \eqref{eq:infmu}
and the fact that
$L^2$-$\lim_{t\rightarrow 0} {\textstyle \frac{1}{t}}(\alpha_t(\xi) - \xi) - D(\xi) = 0$.

This shows not only that $[\xi,\eta]$ is in the domain of $D$, but also that
\eqref{eq:derprop} holds.
By \eqref{eq:infinf}, it follows that
$\|D([\xi,\eta])\|_{\infty} \leq c_{\fk}(\|D\xi\|_{\infty}\|\eta\|_{\infty} + \|\xi\|_{\infty}\|D\eta\|_{\infty})$
is finite, so that $[\xi,\eta] \in H^1_{b}(N,\fK)$.
\end{proof}

This allows us to define parallel Sobolev--Lie algebras\index{gauge algebra!60@bounded Sobolev \scheiding $H^k_{b}(M,\fK)$} of order $k\in \N$.
we set $H^0_{b}(N,\fK) := L^2_b(N,\fK)$, and define $H^1_{b}(N,\fK)$ as above.
For ${k\geq 2}$, we define $H^k_{b}(N,\fK)$ as
$H^{k-1}_{b}(N,\fK)\cap D^{-1}(H^{k-1}_{b}(N,\fK))$.
In other words, $\xi$ is in $H^k_{b}(N,\fK)$ if both $\xi$ and $D\xi$
are in $H^{k-1}_{b}(N,\fK)$.

\begin{Proposition}
The space $H^k_{b}(N,\fK)$ is a Lie subalgebra of $H^{k-1}_{b}(N,\fK)$.
\end{Proposition}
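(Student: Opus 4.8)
The plan is to prove this by induction on $k$, reducing it to the case $k=1$ already handled in Proposition~\ref{prop:heen}. The statement for $k\geq 2$ is purely algebraic: I must show that if $\xi,\eta \in H^k_{b}(N,\fK)$, then $[\xi,\eta] \in H^k_{b}(N,\fK)$, using the recursive definition $H^k_{b}(N,\fK) = H^{k-1}_{b}(N,\fK)\cap D^{-1}(H^{k-1}_{b}(N,\fK))$.

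First I would set up the induction. The base case $k=1$ is exactly Proposition~\ref{prop:heen}. For the inductive step, assume $H^{k-1}_{b}(N,\fK)$ is a Lie subalgebra of $H^{k-2}_{b}(N,\fK)$ (with the convention $H^0_b = L^2_b$ handling the degenerate case), and let $\xi,\eta \in H^{k}_{b}(N,\fK)$. By definition, $\xi,\eta \in H^{k-1}_{b}(N,\fK)$ and $D\xi, D\eta \in H^{k-1}_{b}(N,\fK)$. By the inductive hypothesis, $[\xi,\eta] \in H^{k-1}_{b}(N,\fK)$, so it remains only to show $D([\xi,\eta]) \in H^{k-1}_{b}(N,\fK)$. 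Here the key input is the derivation property \eqref{eq:derprop} from Proposition~\ref{prop:heen}, which gives
\[
D([\xi,\eta]) = [D\xi,\eta] + [\xi,D\eta].
\]
Now $D\xi \in H^{k-1}_{b}(N,\fK)$ and $\eta \in H^{k}_{b}(N,\fK) \subseteq H^{k-1}_{b}(N,\fK)$, so $[D\xi,\eta] \in H^{k-1}_{b}(N,\fK)$ by the inductive hypothesis; similarly $[\xi,D\eta] \in H^{k-1}_{b}(N,\fK)$. Hence the sum lies in $H^{k-1}_{b}(N,\fK)$, so $[\xi,\eta]$ is in both $H^{k-1}_{b}(N,\fK)$ and $D^{-1}(H^{k-1}_{b}(N,\fK))$, i.e.\ in $H^{k}_{b}(N,\fK)$.

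One small technical point worth checking is that \eqref{eq:derprop} is genuinely applicable at each stage: Proposition~\ref{prop:heen} establishes it for $\xi,\eta \in H^1_{b}(N,\fK)$, and since $H^k_{b}(N,\fK)\subseteq H^1_{b}(N,\fK)$ for $k\geq 1$, the identity holds for our elements. I do not anticipate a genuine obstacle here; the only care needed is in the bookkeeping of which space each bracket lands in, and in stating the induction cleanly (one should perhaps prove the slightly stronger simultaneous statement that $H^k_b \subseteq H^{k-1}_b$ is a Lie subalgebra \emph{and} that $D\colon H^k_b \to H^{k-1}_b$ is a derivation into the bracket, though the latter is immediate from \eqref{eq:derprop} and the nesting of the spaces). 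The proof is therefore short.

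\begin{proof}
We argue by induction on $k$, the case $k=1$ being Proposition~\ref{prop:heen}
(with $H^0_{b}(N,\fK) = L^2_{b}(N,\fK)$, which is a Lie algebra by the
discussion preceding it). Let $k\geq 2$ and assume that
$H^{k-1}_{b}(N,\fK)$ is a Lie subalgebra of $H^{k-2}_{b}(N,\fK)$.
Let $\xi,\eta \in H^{k}_{b}(N,\fK)$, so that
$\xi,\eta, D\xi, D\eta \in H^{k-1}_{b}(N,\fK)$. By the induction hypothesis,
$[\xi,\eta] \in H^{k-1}_{b}(N,\fK)$. Moreover, since
$H^k_b(N,\fK) \subseteq H^1_b(N,\fK)$, Proposition~\ref{prop:heen} applies and
\[
D([\xi,\eta]) = [D\xi,\eta] + [\xi,D\eta]\,.
\]
As $D\xi \in H^{k-1}_{b}(N,\fK)$ and $\eta \in H^{k}_b(N,\fK) \subseteq H^{k-1}_b(N,\fK)$,
the induction hypothesis gives $[D\xi,\eta] \in H^{k-1}_{b}(N,\fK)$, and likewise
$[\xi,D\eta] \in H^{k-1}_{b}(N,\fK)$. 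Hence $D([\xi,\eta]) \in H^{k-1}_{b}(N,\fK)$,
so $[\xi,\eta] \in H^{k-1}_{b}(N,\fK) \cap D^{-1}(H^{k-1}_{b}(N,\fK)) = H^{k}_{b}(N,\fK)$.
This completes the induction.
\end{proof}
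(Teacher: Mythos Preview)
Your proof is correct and follows essentially the same approach as the paper: induction on $k$ with base case Proposition~\ref{prop:heen}, using the derivation identity $D([\xi,\eta]) = [D\xi,\eta] + [\xi,D\eta]$ and the inductive hypothesis that $H^{k-1}_b(N,\fK)$ is closed under the bracket. You are slightly more explicit than the paper in verifying that $[\xi,\eta]$ itself lies in $H^{k-1}_b(N,\fK)$ and that the derivation formula is applicable, but the argument is the same.
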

\begin{proof}
The proof is by induction on $k$, where $k=1$ is Proposition
\ref{prop:heen}. 
If $\xi,\eta \in H^k_{b}(N,\fK)$, then
$\xi, D(\xi), \eta, D(\eta) \in H^{k-1}_{b}(N,\fK)$.
Since $H^{k-1}_{b}(N,\fK)$ is a Lie algebra, it follows that
$D([\xi,\eta]) = [D(\xi),\eta] + [\xi,D(\eta)]$ is in $H^{k-1}_{b}(N,\fK)$.
Thus $[\xi,\eta] \in H^k_{b}(N,\fK)$, as required.
\end{proof}

On $H^k_{b}(N,\fK)$, we define for every
$n\in \{0,\ldots,k\}$ the derived norms 
\[
\|\xi\|_{n,\infty} := \|D^n\xi\|_{\infty} \quad\text{and}\quad\|\xi\|_{n} := \|D^n\xi\|_{\nu}\,.
\]
The parallel $C^k$-norm $q_{C^k}$ and the parallel Sobolev norm $q_k$
are defined by
\begin{equation}\label{parsobtwee}
q_{C^k}(\xi) := \sum_{n=0}^{k} \|\xi\|_{n,\infty}
\quad\text{and}\quad
q_{k}(\xi) := \sum_{n=0}^{k} \|\xi\|_{n}\,,
\end{equation}
respectively.
%
We equip $H^{k}_{b}(N,\fK)$ with the topology derived from
the combined norm\index{norm!20@bounded Sobolev \scheiding $p_k$}
\begin{equation}\label{eq:normbank}
p_k(\xi) := \sum_{n=0}^{k} \|\xi\|_{n,\infty} + \|\xi\|_{n}\,.
\end{equation}
Note that for $\xi \in H^{k}_{b}(N,\xi)$, we have
$p_{k-1}(\xi) \leq p_{k}(\xi)$, but also ${p_{k-1}(D(\xi)) \leq p_{k}(\xi)}$.
It follows that both the inclusion
$\iota \colon H^{k+1}_{b}(N,\fK) \hookrightarrow H^{k}_{b}(N,\fK)$
and the derivative $D \colon H^{k+1}_{b}(N,\fK) \rightarrow H^{k}_{b}(N,\fK)$
are continuous.

\begin{Proposition}\label{prop:isbanalg}
For every $k\geq 0$,
$H^k_{b}(N,\fK)$ is a Banach--Lie algebra
with respect to the norm $p_k$.
The Lie bracket is separately continuous
with respect to the Sobolev norm $q_{k}$.
%
\end{Proposition}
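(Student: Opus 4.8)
The plan is to prove both assertions by induction on $k$, the two ingredients being that the infinitesimal generator $D$ is closed on $L^2_\nu(M,\fK)$ and the pointwise product estimates \eqref{eq:infmu}--\eqref{eq:infinf}, combined with a Leibniz rule for $D$. First I would record that
\[
D^n[\xi,\eta] = \sum_{j=0}^{n}\binom{n}{j}\,[D^j\xi,\,D^{n-j}\eta]
\qquad\text{for all }\ \xi,\eta\in H^k_{b}(N,\fK),\ \ 0\le n\le k.
\]
This is proved by induction on $n$: the case $n=0$ is trivial, and in the inductive step one differentiates each summand once more. By the preceding proposition $H^k_{b}(N,\fK)$ is a Lie subalgebra of every $H^m_{b}(N,\fK)$ with $m\le k$, so for $j,n-j\le k-1$ both $D^j\xi$ and $D^{n-j}\eta$ lie in $H^1_{b}(N,\fK)$; hence Proposition~\ref{prop:heen} applies to $[D^j\xi,D^{n-j}\eta]$, and the Pascal identity $\binom{n}{j}+\binom{n}{j-1}=\binom{n+1}{j}$ completes the step.

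For completeness with respect to $p_k$ I would argue inductively, the base case $k=0$ being the completeness of $H^0_{b}(N,\fK)=L^2_{b}(N,\fK)$ for $p_0=\|\cdot\|_\nu+\|\cdot\|_\infty$ noted earlier. For the inductive step observe that $p_k$ is equivalent to the graph norm $\xi\mapsto p_{k-1}(\xi)+p_{k-1}(D\xi)$ of $D$, viewed as an operator on $H^{k-1}_{b}(N,\fK)$ with domain $H^k_{b}(N,\fK)=\{\xi\in H^{k-1}_{b}(N,\fK):D\xi\in H^{k-1}_{b}(N,\fK)\}$. This operator is closed: if $\xi_j\to\xi$ and $D\xi_j\to\zeta$ in $H^{k-1}_{b}(N,\fK)$, then, since $p_{k-1}$ dominates $\|\cdot\|_\nu$, both convergences hold in $L^2_\nu(M,\fK)$, and closedness of the generator $D$ gives $\xi\in\cD(D)$ with $D\xi=\zeta\in H^{k-1}_{b}(N,\fK)$, i.e.\ $\xi\in H^k_{b}(N,\fK)$. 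The graph of a closed operator between Banach spaces is a closed subspace of the product, so $H^k_{b}(N,\fK)$ is complete for $p_k$.

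For the continuity statements I would apply \eqref{eq:infmu} (resp.\ \eqref{eq:infinf}) to each term of the Leibniz expansion. For $0\le n\le k$ this gives
\[
\|D^n[\xi,\eta]\|_\nu\le c_{\fk}\sum_{j}\binom{n}{j}\|D^j\xi\|_\infty\|D^{n-j}\eta\|_\nu\le c_{\fk}\,2^{k}\,q_{C^k}(\xi)\,q_{k}(\eta),
\]
and likewise $\|D^n[\xi,\eta]\|_\infty\le c_{\fk}\,2^{k}\,q_{C^k}(\xi)\,q_{C^k}(\eta)$. Summing over $n$ and using $q_{C^k},q_k\le p_k$ yields $p_k([\xi,\eta])\le (k+1)\,c_{\fk}\,2^{k}\,p_k(\xi)\,p_k(\eta)$, so $H^k_{b}(N,\fK)$ is a Banach--Lie algebra. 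Keeping $\xi$ fixed, the first chain of estimates summed over $n$ gives $q_k([\xi,\eta])\le c_{\fk}\,2^{k}\,q_{C^k}(\xi)\,q_k(\eta)$, and by antisymmetry the analogous bound with the slots exchanged; since $q_{C^k}$ is finite on $H^k_{b}(N,\fK)$, the bracket is separately continuous for the $q_k$-topology.

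The proof is essentially bookkeeping and I do not expect a genuine obstacle; the only points that need attention are that in each term of the Leibniz expansion the factor pulled out in $\|\cdot\|_\infty$ must be the one \emph{not} carried by the target $L^2$-norm (so that $[\,\cdot\,,\,\cdot\,]$ lands back in $L^2_b$), and that the iteration of the Leibniz rule is legitimate — both of which are guaranteed by $H^k_{b}(N,\fK)$ being a Lie subalgebra of all the lower $H^m_{b}(N,\fK)$.
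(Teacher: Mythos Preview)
Your proof is correct and follows essentially the same route as the paper: both arguments derive the Leibniz expansion for $D^n[\xi,\eta]$ from Proposition~\ref{prop:heen}, apply the pointwise estimates \eqref{eq:infmu}--\eqref{eq:infinf} term by term to obtain $q_k([\xi,\eta])\le 2^k c_{\fk}\,q_{C^k}(\xi)\,q_k(\eta)$ and the analogous $q_{C^k}$-bound, and prove completeness by induction on $k$ via closedness of the generator $D$ on $L^2_\nu(M,\fK)$. The only cosmetic differences are that you phrase completeness in terms of the closed-graph of $D$ on $H^{k-1}_b$ rather than running the Cauchy-sequence argument explicitly, and you pick up an inessential extra factor of $(k{+}1)$ in the $p_k$-submultiplicativity constant.
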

\begin{proof}

By the derivation property and \eqref{eq:pruut}, we have
\[
\|D^n ([\xi,\eta])\|_{\kappa} \leq c_{\fk}
\sum_{j=0}^{n}\binom{n}{j}\|D^j\xi\|_{\kappa} \|D^{n-j}\eta\|_{\kappa}\,.
\]
Since $\|[\xi,\eta]\|_{n} = \|D^n([\xi,\eta])\|_{\nu}$
and $\|[\xi,\eta]\|_{n,\infty} = \|D^n([\xi,\eta])\|_{\infty}$
, it follows that
\begin{align}
\|[\xi,\eta]\|_{n} &\leq  c_{\fk} \sum_{j=0}^{n}
\binom{n}{j}\|\xi\|_{j,\infty} \|\eta\|_{n-j},
\label{eq:haakneus}
\\
\|[\xi,\eta]\|_{n,\infty} &\leq
c_{\fk}
\sum_{j=0}^{n}\binom{n}{j}\|\xi\|_{j,\infty} \|\eta\|_{n-j,\infty}\,.
\end{align}
Taking $n=k$ and estimating the binomial coefficients by $2^k$,
it follows that
\begin{align}
q_{k}([\xi,\eta]) &\leq 2^{k}c_{\fk} q_{C^k}(\xi) q_{k}(\eta),\label{eq:smosbrood}\\
q_{C^k}([\xi,\eta]) &\leq 2^{k}c_{\fk}q_{C^k}(\xi) q_{C^k}(\eta).
\end{align}
This shows that the Lie bracket is continuous
for the norm $p_k$, and separately continuous for the Sobolev norm
$q_{k}$.

%


To show that $H^{k}_{b}(N,\fK)$ is complete, we note that
$H^0_{b}(N,\fK) = L^2_{b}(N,\fK)$ is a Banach space, and proceed by
induction on $k$. Let $\xi_n \in H^{k}_{b}(N,\fK)$ with $p_k(\xi_n-\xi_m) \rightarrow 0$.
Then $p_{k-1}(\xi_n - \xi_m) \rightarrow 0$ and
$p_{k-1}(D(\xi_n)-D(\xi_m)) \rightarrow 0$, so by induction, there exist
$\xi,\Xi \in H^{k-1}_{b}(N,\fK)$ with $p_{k-1}(\xi - \xi_n) \rightarrow 0$
and $p_{k-1}(\Xi -D(\xi_n)) \rightarrow 0$.
Since $D \colon H^1_{\nu}(M,\fK) \rightarrow L^2(M,\fK)$ is the generator
of a strongly continuous 1-parameter group of unitary operators, Stone's Theorem
implies that it is selfadjoint, and hence in particular closed.
It follows that $\xi$ lies in the domain of $D$, and $D(\xi) = \Xi$ lies in
$H^{k-1}_{b}(N,\fK)$.
Thus $\xi \in H^{k}_{b}(N,\fK)$, and
$p_k(\xi - \xi_n) \leq p_{k-1}(\xi - \xi_n) + p_{k-1}(D(\xi) - D(\xi_{n})) \rightarrow 0$.
\end{proof}

We denote by $H^{\infty}_{b}(N,\fK)$ the Fr\'echet--Lie algebra
arising from the inverse limit of the Banach--Lie algebras
$H^k_{b}(N,\fK)$ with respect to the natural inclusions
$\iota \colon H^{k+1}_{b}(N,\fK) \hookrightarrow H^{k}_{b}(N,\fK)$.
The derivative $D\colon H^{\infty}_{b}(N,\fK) \rightarrow H^{\infty}_{b}(N,\fK)$
is a continuous derivation,
giving rise to the Fr\'echet--Lie algebra $H^{\infty}_{b}(N,\fK) \rtimes \R D$.

\subsubsection{Boundary conditions and the Lie algebras $H^k_{\partial}(N,\fK)$}

Let $H^1_{\partial}(N,\fK)$ be the closure of $\Gamma_{c}(N,\fK)$
in $H^1_{b}(N,\fK)$ with respect to the parallel Sobolev norm
$q_1(\xi) = \|\xi\|_{\nu} + \|\xi\|_{1,\nu}$.

\begin{Proposition}\label{prop:delLie1}
The space $H^1_{\partial}(N,\fK)$ is a closed Lie subalgebra of $H^1_{b}(N,\fK)$.
In particular,
it is a Banach--Lie algebra with respect to the subspace topology, induced by the
norm $p_1(\xi)$ of \eqref{eq:normbank}.
\end{Proposition}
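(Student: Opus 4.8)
The plan is to verify two things: that $H^1_{\partial}(N,\fK)$ is closed under the pointwise Lie bracket, and that it is closed as a subset of $H^1_b(N,\fK)$. The second is immediate from the definition, since $H^1_{\partial}(N,\fK)$ is defined as the closure of $\Gamma_c(N,\fK)$ with respect to the Sobolev norm $q_1$, and $q_1 \le p_1$, so the closure with respect to $q_1$ is automatically $q_1$-closed; but one should note that what is needed is closedness with respect to the full norm $p_1$. Here I would use that on $H^1_b(N,\fK)$, the $p_1$-topology is at least as fine as the $q_1$-topology, so the $q_1$-closure inside $H^1_b(N,\fK)$ is a $q_1$-closed, hence (a fortiori) a set one can equip with the subspace $p_1$-topology; the statement just asserts it is a Banach--Lie algebra in that subspace topology, which follows once the bracket is shown to preserve it and the ambient space $H^1_b(N,\fK)$ is already known to be a Banach--Lie algebra by Proposition~\ref{prop:isbanalg}. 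In fact I would remark that $H^1_\partial(N,\fK)$ is $p_1$-closed in $H^1_b(N,\fK)$: if $\xi_n \to \xi$ in $p_1$ with $\xi_n \in H^1_\partial(N,\fK)$, then $\xi_n \to \xi$ in $q_1$, and since each $\xi_n$ is a $q_1$-limit of elements of $\Gamma_c(N,\fK)$, so is $\xi$.

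For the subalgebra property, the key step is the continuity estimate for the bracket with respect to the parallel Sobolev norm $q_1$. Specifically, from \eqref{eq:haakneus} with $k=1$ we have
\[
q_1([\xi,\eta]) \le 2 c_{\fk}\, q_{C^1}(\xi)\, q_1(\eta),
\]
so the bracket $\Gamma_c(N,\fK) \times \Gamma_c(N,\fK) \to \Gamma_c(N,\fK)$ is continuous when the first factor carries the $q_{C^1}$-norm and the second the $q_1$-norm. The obstacle is that $q_{C^1}$ — which involves the $L^\infty$-norms $\|\xi\|_\infty$ and $\|D\xi\|_\infty$ — is \emph{not} controlled by $q_1$, so I cannot directly conclude continuity of the bracket in the $q_1 \times q_1$ topology. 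Instead I would argue as follows. Take $\xi, \eta \in H^1_\partial(N,\fK)$. Choose sequences $\xi_n, \eta_n \in \Gamma_c(N,\fK)$ with $q_1(\xi_n - \xi) \to 0$ and $q_1(\eta_n - \eta) \to 0$. Since $H^1_\partial(N,\fK) \subseteq H^1_b(N,\fK)$, the elements $\xi, \eta$ already have finite $q_{C^1}$-norm; but the approximants $\xi_n$ need \emph{not} converge to $\xi$ in $q_{C^1}$. This is the main difficulty. However, we can sidestep it: we know from Proposition~\ref{prop:isbanalg} that $H^1_b(N,\fK)$ is a Banach--Lie algebra in the norm $p_1$, so $[\xi,\eta]$ is a well-defined element of $H^1_b(N,\fK)$. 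It remains only to show $[\xi,\eta] \in H^1_\partial(N,\fK)$, i.e.\ that it is a $q_1$-limit of compactly supported sections.

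To produce such an approximation, I would combine two facts. First, $q_1(\xi_n - \xi) \to 0$ together with $\xi \in H^1_b(N,\fK)$ gives a uniform bound $\sup_n q_{C^1}(\xi_n) =: C_\xi < \infty$ only if the $\xi_n$ are $q_{C^1}$-bounded — which again is not automatic. So instead I would choose the approximants more carefully: take a fixed exhaustion of $N$ by relatively compact open subsets $N_j \Subset N$ with cutoffs $\chi_j \in C_c^\infty(N)$, $\chi_j \equiv 1$ on $N_j$, $0 \le \chi_j \le 1$, and with $D\chi_j$ uniformly bounded (possible by choosing the $N_j$ to respect the flow, using flow boxes as in Definition~\ref{def:goodflowbox}). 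Then $\chi_j \xi \in \Gamma_c(N,\fK)$ and one checks $q_1(\chi_j \xi - \xi) \to 0$ by dominated convergence, using $\|D(\chi_j\xi) - D\xi\|_\kappa \le \|(1-\chi_j)D\xi\|_\kappa + \|(D\chi_j)\xi\|_\kappa$ and the fact that $\xi, D\xi \in L^2_\nu$. Wait — but $\chi_j \xi$ need not be smooth. One should instead first reduce to the case of compactly supported \emph{smooth} sections, which is the actual definition of $\Gamma_c$. I would handle this by a standard two-step approximation: compactly supported sections in $q_1$-norm, then smoothing in the transversal directions by mollification (which is harmless for the parallel derivative $D$, since $D$ differentiates only along the flow), all of which is routine. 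Granting the reduction, the estimate \eqref{eq:haakneus} now applies with $\xi$ fixed in $H^1_b(N,\fK)$ (so $q_{C^1}(\xi)<\infty$) and $\eta_n \to \eta$ in $q_1$: we get $q_1([\xi,\eta_n]-[\xi,\eta]) = q_1([\xi,\eta_n-\eta]) \le 2c_{\fk}q_{C^1}(\xi)q_1(\eta_n-\eta) \to 0$, and symmetrically approximating $\xi$ first. Since $[\xi,\eta_n], [\xi_n,\eta] \in \Gamma_c(N,\fK)$ whenever $\xi_n, \eta_n \in \Gamma_c(N,\fK)$, a diagonal argument shows $[\xi,\eta]$ is a $q_1$-limit of elements of $\Gamma_c(N,\fK)$, hence lies in $H^1_\partial(N,\fK)$. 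This proves $H^1_\partial(N,\fK)$ is a Lie subalgebra; being $p_1$-closed in the Banach--Lie algebra $H^1_b(N,\fK)$, it is itself a Banach--Lie algebra in the subspace topology. The main obstacle, as indicated, is the mismatch between the $q_1$-topology (in which the closure is taken) and the $q_{C^1}$-control needed to estimate the bracket; the resolution is to exploit that one factor is always a \emph{fixed} element of the ambient algebra $H^1_b(N,\fK)$, for which $q_{C^1}$ is finite by construction.
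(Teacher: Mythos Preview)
Your argument is essentially correct and follows the same strategy as the paper: both rely on the separate $q_1$-continuity of $\mathrm{ad}_\xi$ for $\xi$ fixed in $H^1_b(N,\fK)$ (where $q_{C^1}(\xi)<\infty$), together with a two-step density argument. The paper's execution is cleaner: first, for $\xi\in\Gamma_c(N,\fK)$, the operator $\mathrm{ad}_\xi$ preserves $\Gamma_c$ and is $q_1$-continuous, hence preserves the $q_1$-closure $H^1_\partial$; second, for $\eta\in H^1_\partial$ it then follows that $\mathrm{ad}_\eta$ maps $\Gamma_c$ into $H^1_\partial$, and by $q_1$-continuity (now using $q_{C^1}(\eta)<\infty$) it maps all of $H^1_\partial$ into itself.

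Two remarks. Your detour through cutoffs $\chi_j$ and mollification is unnecessary: approximants in $\Gamma_c(N,\fK)$ exist by the very definition of $H^1_\partial$ as the $q_1$-closure of $\Gamma_c$, so you need not construct them. More importantly, your assertion that ``$[\xi,\eta_n]\in\Gamma_c(N,\fK)$ whenever $\xi_n,\eta_n\in\Gamma_c$'' is misstated, since that bracket involves the non-smooth $\xi$ rather than $\xi_n$; what you actually need (and what your ``diagonal argument'' presumably intends) is that $[\xi_m,\eta_n]\in\Gamma_c$ when \emph{both} factors lie in $\Gamma_c$, followed by two successive $q_1$-limits---exactly the paper's two-step scheme.
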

\begin{proof}
Since $H^1_{\partial}(N,\fK)$ is by definition closed with respect to
the Sobolev norm $q_1(\xi)$, it is a forteriori closed with respect to the larger
norm $p_1(\xi)$ that defines the Banach space topology on $H^1_{b}(N,\fK)$.
As $H^1_{\partial}(N,\fK)$ is a closed subspace of a Banach space, it is a
Banach space itself.

It remains to show that $H^1_{\partial}(N,\fK)$ is closed under the Lie bracket.
For every $\xi \in H^1_{b}(N,\fK)$, the linear operator
$\mathrm{ad}_{\xi} \colon H^1_{b}(N,\fK) \rightarrow H^1_{b}(N,\fK)$
is continuous with respect to the norm $q_1(\xi)$, as
$q_1(\mathrm{ad}_{\xi}(\eta)) \leq 2c_{\fk}q_{C^1}(\xi)q_1(\eta)$
by~\eqref{eq:smosbrood}.
If $\xi \in \Gamma_{c}(N,\fK)$, then $\mathrm{ad}(\xi)$ maps
$\Gamma_{c}(N,\fK)$ to
$\Gamma_{c}(N,\fK)$. As $\mathrm{ad}_{\xi}$ is continuous for the norm $q_1$, it also maps $H^1_{\partial}(N,\fK)$ to $H^1_{\partial}(N,\fK)$.
It follows that, for all $\eta \in H^1_{\partial}(N,\fK)$, $\mathrm{ad}_{\eta}$ maps
$\Gamma_{c}(N,\fK)$ to $H^1_{\partial}(N,\fK)$. By continuity with respect to
$q_1$, it therefore maps
$H^1_{\partial}(N,\fK)$ to $H^1_{\partial}(N,\fK)$, and we conclude that
$H^1_{\partial}(N,\fK)$ is closed under the Lie bracket.
\end{proof}

For $k\geq 2$, we define $H^k_{\partial}(N,\fK)$\index{gauge algebra!70@$1/\sqrt{t}$ decay \scheiding $H^k_{\partial}(N,\fK)$} as the space of all $\xi \in H^k_{b}(N,\fK)$
such that both $\xi$ and $D(\xi)$ lie in $H^{k-1}_{\partial}(N,\fK)$.
\begin{Proposition}
The space $H^k_{\partial}(N,\fK)$ is a closed Lie subalgebra of $H^k_{b}(N,\fK)$.
In particular, it is a Banach--Lie algebra with respect to the subspace topology, induced by the
norm $p_k(\xi)$ of \eqref{eq:normbank}.
\end{Proposition}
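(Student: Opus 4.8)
The plan is to argue by induction on $k$, the case $k=1$ being Proposition~\ref{prop:delLie1}. So suppose $k\geq 2$ and that $H^{k-1}_{\partial}(N,\fK)$ is a closed Lie subalgebra of $H^{k-1}_{b}(N,\fK)$. Recall that both the inclusion $\iota\colon H^k_b(N,\fK)\hookrightarrow H^{k-1}_b(N,\fK)$ and the derivation $D\colon H^k_b(N,\fK)\to H^{k-1}_b(N,\fK)$ are continuous for the norms $p_k$ and $p_{k-1}$. Since by definition
\[
H^k_{\partial}(N,\fK) = \iota^{-1}\big(H^{k-1}_{\partial}(N,\fK)\big)\cap D^{-1}\big(H^{k-1}_{\partial}(N,\fK)\big),
\]
and $H^{k-1}_{\partial}(N,\fK)$ is closed in $H^{k-1}_b(N,\fK)$ by the induction hypothesis, the set $H^k_{\partial}(N,\fK)$ is an intersection of two closed subsets of $H^k_b(N,\fK)$, hence closed. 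As a closed subspace of the Banach space $(H^k_b(N,\fK),p_k)$ it is itself a Banach space in the subspace topology.

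Next I would check that $H^k_{\partial}(N,\fK)$ is closed under the pointwise Lie bracket. Let $\xi,\eta\in H^k_{\partial}(N,\fK)$, so that $\xi,D\xi,\eta,D\eta\in H^{k-1}_{\partial}(N,\fK)$ by definition. Since $H^k_b(N,\fK)$ is a Lie algebra we have $[\xi,\eta]\in H^k_b(N,\fK)$, and by Proposition~\ref{prop:heen} (applied in $H^1_b(N,\fK)\supseteq H^k_b(N,\fK)$) the derivation identity $D([\xi,\eta]) = [D\xi,\eta]+[\xi,D\eta]$ holds. The induction hypothesis that $H^{k-1}_{\partial}(N,\fK)$ is a Lie subalgebra of $H^{k-1}_b(N,\fK)$ then gives $[\xi,\eta]\in H^{k-1}_{\partial}(N,\fK)$ and $[D\xi,\eta]+[\xi,D\eta]\in H^{k-1}_{\partial}(N,\fK)$; that is, both $[\xi,\eta]$ and $D([\xi,\eta])$ lie in $H^{k-1}_{\partial}(N,\fK)$. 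By the definition of $H^k_{\partial}(N,\fK)$ this means $[\xi,\eta]\in H^k_{\partial}(N,\fK)$, completing the induction. Finally, the bracket on $H^k_{\partial}(N,\fK)$ is continuous for $p_k$, being the restriction of the $p_k$-continuous bracket on $H^k_b(N,\fK)$ from Proposition~\ref{prop:isbanalg}, so $H^k_{\partial}(N,\fK)$ is a Banach--Lie algebra in the subspace topology.

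Since the argument is an entirely formal consequence of the corresponding facts for $H^k_b(N,\fK)$ and $H^{k-1}_{\partial}(N,\fK)$ (Propositions~\ref{prop:heen}, \ref{prop:isbanalg} and~\ref{prop:delLie1}), there is no genuine obstacle here; the only point requiring a little care is to keep track of which norm one is working with at each stage — the Sobolev norm $q_k$ used in the definition of $H^k_{\partial}(N,\fK)$ versus the stronger combined norm $p_k$ giving the Banach--Lie structure on $H^k_b(N,\fK)$ — exactly as was already handled in the proof of Proposition~\ref{prop:delLie1}.
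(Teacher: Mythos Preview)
Your proof is correct and follows essentially the same approach as the paper: induction on $k$ with base case Proposition~\ref{prop:delLie1}, closedness via the preimage description $H^k_{\partial}(N,\fK)=\iota^{-1}(H^{k-1}_{\partial})\cap D^{-1}(H^{k-1}_{\partial})$, and closure under brackets via the derivation identity together with the induction hypothesis. Your additional remarks on the continuity of the bracket and the distinction between $q_k$ and $p_k$ are helpful but not required for the argument.
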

\begin{proof}

We proceed by induction on $k$, the case $k=1$ being Proposition~\ref{prop:delLie1}.
Recall that both the inclusion $\iota \colon H^{k}_{b}(N,\fK) \hookrightarrow H^{k-1}_{b}(N,\fK)$
and the derivative $D \colon H^{k}_{b}(N,\fK) \rightarrow H^{k-1}_{b}(N,\fK)$
are continuous. Since
\[
H^{k}_{\partial}(N,\fK) =
\iota^{-1}(H^{k-1}_{\partial}(N,\fK)) \cap D^{-1}(H^{k-1}_{\partial}(N,\fK))\]
is the
intersection of two closed subspaces, it is a closed subspace of $H^k_{b}(N,\fK)$ itself.
To show that it is closed under the Lie bracket, suppose that 
${\xi,\eta \in H^k_{\partial}(N,\fK)}$,
so that $\xi,\eta,D\xi,D\eta \in H^{k-1}_{\partial}(N,\fK)$.
As $H^{k-1}_{\partial}(N,\fK)$ is a Lie algebra, it follows that
$[\xi,\eta]$ and $D([\xi,\eta]) = [D(\xi),\eta] + [\xi,D(\eta)]$ are both in
$H^{k-1}_{\partial}(N,\fK)$. From this, one sees that also
$[\xi,\eta] \in H^{k}_{\partial}(N,\fK)$.
\end{proof}

Note that the 2-cocycle $\omega(\xi,\eta) = \lra{D\xi,\eta}_{\mu}$ on $\fg$
is continuous for the Sobolev norm $q_1(\xi)$ and hence
extends uniquely to $H^k_{\partial}(N,\fK)$. This defines a continuous
central extension of $H^k_{\partial}(N,\fK)$,
\[\R C \oplus_{\omega} H^k_{\partial}(N,\fK)\,.\]

Define the Fr\'echet--Lie algebra $H^{\infty}_{\partial}(N,\fK)$ as the inverse limit
of the Banach--Lie algebras $H^{k}_{\partial}(N,\fK)$
under the natural inclusions $H^{k}_{\partial}(N,\fK) \rightarrow
H^{k-1}_{\partial}(N,\fK)$.
Since $D \colon H^{\infty}_{\partial}(N,\fK) \rightarrow H^{\infty}_{\partial}(N,\fK)$
is a continuous derivation, we obtain the double extension of Fr\'echet--Lie algebras
\[ (\R C \oplus_{\omega} H^{\infty}_{\partial}(N,\fK)) \rtimes \R D\,.\]

\subsubsection{Intervals and blocks}

Suppose that $N \simeq \Sigma \times I$, where $I \subseteq \R$ is an open, not necessarily finite interval with the Lebesgue measure $dt$,
and $\Sigma$ is a $(d_{M}-1)$-dimensional manifold with locally finite measure $\nu_0$.
The bundle $\fK\res_N \simeq N \times \fk$ is trivial, and the translation
by $t'$ sends $(x,t)$ to $(x,t-t')$ wherever it is defined.
In this cartesian product situation, it will be useful to
separate the variables in $\Sigma$ from those in $I$.

Define a 
bilinear map
\[ T \colon L^2_{b}(\Sigma,\R)\times L^2_{b}(I,\fk) \rightarrow L^2_{b}(N,\fk),
\qquad T(f, \xi)(x,t) = f(x)\xi(t).\]
It is continuous since
$\|f\xi\|_{\nu} = \|f\|_{\nu_0}\|\xi\|_{dt}$
and $\|f\xi\|_{\infty} = \|f\|_{\infty}\|\xi\|_{\infty}$.
\begin{Proposition}\label{prop:sietsiet}
The product $T(f, \xi) = f\xi$ defines a continuous bilinear map
\[ T \colon L^2_{b}(\Sigma,\R) \times H^{k}_{\partial}(I,\fk) \rightarrow
H^{k}_{\partial}(N,\fk)\,.
\]
\end{Proposition}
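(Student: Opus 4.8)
The statement to be proven asserts that if $f \in L^2_b(\Sigma,\R)$ and $\xi \in H^k_\partial(I,\fk)$, then $f\xi$ lies in $H^k_\partial(N,\fk)$ and that the product map is continuous. The plan is to proceed by induction on $k$, building on Proposition~\ref{prop:sietsiet}'s product estimates for the norms $\|\cdot\|_\nu$ and $\|\cdot\|_\infty$ together with the case-by-case description of the derivative on a cartesian product.

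First I would establish the key identity that, on $N \simeq \Sigma \times I$, the generator $D$ acts only in the $I$-direction: since the flow is $\partial_t$ and $f$ depends only on $x \in \Sigma$, one has $D(f\xi) = f\,D(\xi)$ for $\xi$ in the domain of $D$ on $I$. More precisely, I would verify from the $L^2$-limit definition \eqref{eq:d-def} that the difference quotient $\tfrac1t(\alpha_t(f\xi) - f\xi) = f\cdot \tfrac1t(\alpha_t\xi - \xi)$ converges in $L^2_\nu(N,\fk)$ to $f\,D\xi$, using $\|f \cdot g\|_\nu = \|f\|_{\nu_0}\|g\|_{dt}$ for the separation of variables (here $g = \tfrac1t(\alpha_t\xi-\xi) - D\xi \to 0$ in $L^2_{dt}(I,\fk)$, and multiplication by $f$ is bounded on $L^2_\nu$ with norm $\le \|f\|_{\nu_0}$ — wait, one needs $\|f\|_\infty$ here since $f$ multiplies an $L^2$ function: $\|fg\|_\nu \le \|f\|_\infty\|g\|_{dt}$ pointwise in the $\Sigma$-variable, integrated). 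This shows $f\xi$ lies in the domain of $D$ with $D(f\xi) = f\,D\xi$, and iterating gives $D^n(f\xi) = f\,D^n\xi$ for all $n \le k$.

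Next I would run the induction. For $k=0$, $H^0_\partial(I,\fk) = L^2_b(I,\fk)$ — here I should note that $H^0_\partial$ was not separately named, so strictly one interprets the statement for $k\ge 1$, and the base case is $k=1$: if $\xi \in H^1_\partial(I,\fk)$ is a $q_1$-limit of $\xi_m \in \Gamma_c(I,\fk)$, then $f\xi_m \in \Gamma_c(N,\fk)$ (taking $f$ to be approximated, or directly if $f$ has the right integrability — more carefully, one may need $f$ to be approximated by $\Gamma_c(\Sigma,\R)$ in the appropriate $L^2_b$-sense, but closure of $H^1_\partial$ in $H^1_b$ under $q_1$-limits and the continuity estimate $q_1(f\xi) \le C(\|f\|_\infty + \|f\|_{\nu_0})\,q_1(\xi)$ from Proposition~\ref{prop:sietsiet} combined with $D(f\xi_m) = f D\xi_m$ lets one pass $f\xi_m \to f\xi$ in $q_1$). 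Thus $f\xi \in H^1_\partial(N,\fk)$. For the inductive step, if $\xi \in H^k_\partial(I,\fk)$ then $\xi, D\xi \in H^{k-1}_\partial(I,\fk)$, so by the induction hypothesis $f\xi$ and $f\,D\xi = D(f\xi)$ lie in $H^{k-1}_\partial(N,\fk)$; by the defining characterization of $H^k_\partial(N,\fk)$ (namely $\eta$ with $\eta, D\eta \in H^{k-1}_\partial$), this gives $f\xi \in H^k_\partial(N,\fk)$. Continuity then follows by bounding $q_k(f\xi) = \sum_{n=0}^k \|f\,D^n\xi\|_\nu \le \|f\|_\infty \sum_{n=0}^k \|D^n\xi\|_{dt}$, and similarly for the $C^k$-part, which together control $p_k(f\xi)$.

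\textbf{Main obstacle.} The delicate point is the treatment of the boundary condition — showing that $f\xi$ lands in the \emph{closure} $H^k_\partial$ of compactly supported sections, not merely in $H^k_b$. This requires knowing that one can approximate $\xi$ in the $q_k$-norm (or $p_k$-norm) by elements of $\Gamma_c(I,\fk)$ whose products with $f$ still approximate $f\xi$; the separation-of-variables estimates make this work provided $f$ itself can be controlled, so one may wish to first reduce to $f \in \Gamma_c(\Sigma,\R)$ by density and then extend by the continuity bound. The bookkeeping with the two norms $\|\cdot\|_{\nu_0}$ and $\|\cdot\|_\infty$ on the $\Sigma$-factor, and ensuring the product estimates close up under the derivative $D$, is routine but must be done carefully; I expect the identity $D^n(f\xi) = f D^n\xi$ to be the linchpin that makes the induction trivial once it is in hand.
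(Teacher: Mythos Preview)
Your plan is correct and matches the paper's proof almost exactly: the paper also establishes $D(f\xi) = f\,D\xi$ from the product formula $\|f\xi\|_\nu = \|f\|_{\nu_0}\|\xi\|_{dt}$, handles the case $k=1$ by approximating both $f$ (in $L^2_{\nu_0}$ only) and $\xi$ (in $q_1$) by compactly supported functions, and then inducts via the recursive definition of $H^k_\partial$. Your parenthetical ``wait'' is a false alarm --- the exact equality $\|fg\|_\nu = \|f\|_{\nu_0}\|g\|_{dt}$ holds precisely because $f$ and $g$ depend on separate variables, and it is this identity (not an $\|f\|_\infty$ bound) that makes the $L^2$-only approximation of $f$ sufficient in the paper's estimate $\|f\xi - f_n\xi_n\|_\nu \le \|f-f_n\|_{\nu_0}\|\xi\|_{dt} + \|f_n\|_{\nu_0}\|\xi-\xi_n\|_{dt}$.
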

\begin{proof}
Since $\|f\xi\|_{\nu} = \|f\|_{\nu_0}\|\xi\|_{dt}$, and since time translation acts only on $\xi$,
it follows that $f\xi \in \mathcal{D}(D)$ if and only if $\xi \in \mathcal{D}(D)$,
and $D(f\xi) = fD(\xi)$.
From this, one derives that $T$ maps
$L^2_{b}(\Sigma,\R)\times H^{k}_{b}(I,\fk)$ to $H^{k}_{b}(N,\fk)$,
with ${\|f\xi\|_{n} = \|f\|_{\nu_0}\|\xi\|_{n}}$ and
$\|f\xi\|_{n,\infty} = \|f\|_{\infty}\|\xi\|_{n,\infty}$.

Suppose that $\xi \in H^1_{\partial}(I,\fk)$, so that there
exists a sequence $\xi_n \in C^{\infty}_{c}(I,\fk)$
with $\|\xi - \xi_{n}\|_{dt} \rightarrow 0$ and $\|D(\xi) - D(\xi_n)\|_{dt} \rightarrow 0$.
For every $f\in L^2_{b}(\Sigma,\R)$, it is possible to find a sequence
$f_n \in C^{\infty}_{c}(\Sigma,\R)$
with $\|f-f_{n}\|_{\nu_0} \rightarrow 0$.
Then
\[
\|f\xi - f_n\xi_n\|_{\nu} \leq \|f-f_n\|_{\nu_0}\|\xi\|_{dt} +
\|f_n\|_{\nu_0}\|\xi - \xi_n\|_{dt} \rightarrow 0\,.
\]
Similarly, one finds that $\|D(f\xi) - D(f_n\xi_n)\|_{\nu} = \|fD(\xi) - f_n D(\xi_n)\|
\rightarrow 0$.
It follows that $T$ maps $L^2_{b}(\Sigma,\R) \times H^1_{\partial}(I,\fk)$
to $H^1_{\partial}(N,\fk)$. From $D(f\xi) = fD(\xi)$, one then finds that it maps
$L^2_{b}(\Sigma,\R) \times H^k_{\partial}(I,\fk)$
to $H^k_{\partial}(N,\fk)$.
\end{proof}

In Lemma~\ref{lem:integralitymeasure}, we will need the above result in the following form.
\begin{Corollary}\label{cor:sietsiet}
Let $E \subseteq \Sigma$ be a subset of finite measure, and let
$\chi_{E}$
be the corresponding indicator function.
Then the map
$\iota_{E} \colon H^k_{\partial}(I,\fk) \rightarrow H^k_{\partial}(N,\fK)$
defined by $\iota_{E}(\xi)(x,t) = \chi_{E}(x)\xi(t)$ is a continuous Lie algebra homomorphism.
\end{Corollary}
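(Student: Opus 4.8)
\textbf{Proof plan for Corollary~\ref{cor:sietsiet}.}
The statement is an immediate specialization of Proposition~\ref{prop:sietsiet}, so the plan is simply to identify the data of the Corollary with the data of that Proposition and to check the two extra clauses — that $\iota_E$ is defined on \emph{all} of $H^k_{\partial}(I,\fk)$ and that it is a Lie algebra \emph{homomorphism}. First I would observe that the indicator function $\chi_E$ of a finite-measure subset $E \subseteq \Sigma$ lies in $L^2_{b}(\Sigma,\R)$: indeed $\|\chi_E\|_{\infty} = 1 < \infty$ and $\|\chi_E\|_{\nu_0}^2 = \nu_0(E) < \infty$, so $\chi_E$ is both essentially bounded and square-integrable with respect to $\nu_0$. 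Therefore the map $\xi \mapsto T(\chi_E,\xi) = \chi_E\,\xi$, i.e.\ $(x,t)\mapsto \chi_E(x)\xi(t)$, is exactly the restriction of the continuous bilinear map $T$ of Proposition~\ref{prop:sietsiet} to $\{\chi_E\}\times H^k_{\partial}(I,\fk)$, and hence it is a continuous \emph{linear} map $\iota_E \colon H^k_{\partial}(I,\fk)\to H^k_{\partial}(N,\fk) = H^k_{\partial}(N,\fK)$ (using the trivialization $\fK\res_N \simeq N\times\fk$).

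It then remains to see that $\iota_E$ respects the pointwise Lie bracket. Here the key point is that $\chi_E$ is an \emph{idempotent}, $\chi_E^2 = \chi_E$, so for $\xi,\eta\in H^k_{\partial}(I,\fk)$ and $(x,t)\in N$ one computes pointwise
\[
[\iota_E(\xi),\iota_E(\eta)](x,t) = \big[\chi_E(x)\xi(t),\,\chi_E(x)\eta(t)\big] = \chi_E(x)^2\,[\xi(t),\eta(t)] = \chi_E(x)\,[\xi,\eta](t) = \iota_E([\xi,\eta])(x,t),
\]
where $[\xi,\eta](t) := [\xi(t),\eta(t)]$ is the pointwise bracket in $\fk$. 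Since Proposition~\ref{prop:sietsiet} (with $f = \chi_E$) guarantees $\iota_E([\xi,\eta]) \in H^k_{\partial}(N,\fK)$, this identity shows $[\iota_E(\xi),\iota_E(\eta)]\in H^k_{\partial}(N,\fK)$ as well, and that $\iota_E$ intertwines the brackets. Combined with the continuity and linearity from the previous paragraph, this proves that $\iota_E$ is a continuous Lie algebra homomorphism.

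There is really no serious obstacle here — the content has already been extracted into Proposition~\ref{prop:sietsiet}. The only two things to verify carefully are the membership $\chi_E\in L^2_{b}(\Sigma,\R)$ (which needs only finiteness of $\nu_0(E)$, as stated in the hypothesis) and the bracket identity, which rests on nothing deeper than $\chi_E^2=\chi_E$ and the definition of the pointwise bracket; both are routine. If one wished to be fully explicit one could instead invoke the approximation argument inside the proof of Proposition~\ref{prop:sietsiet} directly, approximating $\chi_E$ in $L^2(\Sigma,\nu_0)$ by $f_n\in C^\infty_c(\Sigma,\R)$ and $\xi$ in the Sobolev norm $q_k$ by $\xi_n\in C^\infty_c(I,\fk)$, so that $\iota_E(\xi)$ is a $q_k$-limit of elements $f_n\xi_n\in\Gamma_c(N,\fK)$; but since Proposition~\ref{prop:sietsiet} already packages exactly this, the short argument above suffices.
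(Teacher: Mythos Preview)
Your proposal is correct and matches the paper's approach: the paper gives no explicit proof of this corollary, treating it as immediate from Proposition~\ref{prop:sietsiet}, and your argument spells out precisely the two routine checks (that $\chi_E\in L^2_b(\Sigma,\R)$ by finiteness of the measure and boundedness, and that the idempotence $\chi_E^2=\chi_E$ yields compatibility with the pointwise bracket) that make the deduction go through.
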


\subsection{The continuous extension theorem}\label{sec:context5van6}

It follows from Proposition \ref{zwak} that the Lie algebra representation
$\dd\rho$ extends from $\fg = \Gamma_{c}(M,\fK)$ to $L^2_{B\mu}(M,\fK)$.
In the following theorem, we show that
this extension yields a Lie algebra
representation of $\R C \oplus_{\omega}H^1_{\partial}(M,\fK)$,
which is compatible with the derivation
$D \colon H^1_{\partial}(M,\fK) \rightarrow L^2_{b}(M,\fK)$.

\begin{Theorem}{\rm(Continuous extension)}\label{Shylock}
Let $\rho$ be a positive energy representation of $\hat G$
with derived representation $\dd\rho$, and let $N\subseteq M$ be an open subset.
\begin{itemize}
\item[\rm(a)]
There exists a linear map $r$ from $L^2_{b}(N,\fK)$ to the unbounded,
skew-sym\-metric operators on $\cH$ with domain $\cD(H)$
such that $r(\xi)\psi$ coincides with $\dd\rho(\xi)\psi$ 
for all $\xi \in \Gamma_{c}(N,\fK)$ and $\psi \in \cH^{\infty}$.
\item[\rm(b)]
This defines a representation of the Banach--Lie algebra
$\R C \oplus_{\omega}H^1_{\partial}(N,\fK)$ by essentially skew-adjoint
operators.
For $\xi,\eta \in H^1_{\partial}(N,\fK)$, the operators $r(\xi)$ and $r(\eta)$
map $\cD(H^2)$ to $\cD(H)$.
On $\cD(H^2)$, we have the commutator relation
\begin{equation}
  \label{eq:r-bracket}
[r(\xi), r(\eta)] = r([\xi,\eta]) + i\omega(\xi,\eta)\one, 
\end{equation}
where $\omega(\xi,\eta) = \lra{D\xi,\eta}_{\mu}$.
\item[\rm(c)] If $\xi \in H^1_{\partial}(N,\fK)$, then
$D\xi \in L^2_{b}(N,\fK)$ and
\begin{equation}
[\dd\rho(D),r(\xi)] = r(D\xi)\,.
\end{equation}
In particular, we obtain a positive energy representation of
the Fr\'echet--Lie algebra
$\big(\R C \oplus_{\omega} H^{\infty}_{\partial}(N,\fK)\big) \rtimes \R D\,.$
\end{itemize}
%
%
%
\end{Theorem}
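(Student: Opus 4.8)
The plan is to assemble all the pieces that have already been constructed in this section and verify that they fit together into a genuine Lie algebra representation. First I would establish part (a): given $\xi \in L^2_b(N,\fK)$, note that $\xi$ extends by zero to an element of $L^2_{B\mu}(M,\fK)$ (since $\|\xi\|_\infty < \infty$ and $N$ carries the restriction of the locally finite measure $\nu = B_\eps\mu$, boundedness gives square-integrability when combined with the hypothesis that $\xi \in L^2_b$), so Proposition~\ref{zwak} produces an unbounded skew-symmetric operator $r(\xi)$ with $\cD(H) \subseteq \cD(r(\xi))$, agreeing with $\dd\rho(\xi)$ on $\cH^\infty$ for $\xi \in \Gamma_c(N,\fK)$. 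The compatibility with $\dd\rho$ is immediate because $\Gamma_c(N,\fK) \subseteq \Gamma_c(M,\fK) = \fg$ maps into $\g/I_\mu$ and $r$ restricts to $\dd\rho$ there by construction.

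For part (b), the essential skew-adjointness of $r(\xi)$ for $\xi \in H^1_\partial(N,\fK)$ is exactly the last statement of Proposition~\ref{sterk}, using that $\xi, D\xi \in L^2_{B\mu}$ for such $\xi$ and Nelson's criterion as quoted there. The commutator identity \eqref{eq:r-bracket} is the main content. I would prove it on $\cD(H^2)$ by approximation: choose sequences $\xi_n, \eta_n \in \g/I_\mu$ converging to $\xi, \eta$ in the $q_1$-norm (possible by definition of $H^1_\partial(N,\fK)$ as the $q_1$-closure of $\Gamma_c(N,\fK)$), so that by \eqref{zwakdik} one has strong convergence $r(\xi_n)\psi \to r(\xi)\psi$ and $r(\eta_n)\psi \to r(\eta)\psi$ on $\cD(A^{3/2}) \supseteq \cD(H^2)$. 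On $\cH^\infty$ the identity $[\dd\rho(\xi_n),\dd\rho(\eta_n)] = \dd\rho([\xi_n,\eta_n]) + i\omega(\xi_n,\eta_n)\one$ holds since $\dd\rho$ is the derived representation of $\hg$; one then passes to the limit, using that $[\xi_n,\eta_n] \to [\xi,\eta]$ in $q_1$ (separate continuity of the bracket, Proposition~\ref{prop:isbanalg}) together with Proposition~\ref{sterk} to control $r([\xi_n,\eta_n])$, and that $\omega(\xi_n,\eta_n) = \lra{D\xi_n,\eta_n}_\mu \to \lra{D\xi,\eta}_\mu = \omega(\xi,\eta)$ by continuity of $\lra{\cdot,\cdot}_\mu$ with respect to $q_1$. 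The fact that $r(\xi)$ and $r(\eta)$ map $\cD(H^2)$ into $\cD(H)$ is the $k=1$ case of Proposition~\ref{sterk}, which is also what makes both sides of \eqref{eq:r-bracket} well-defined operators on $\cD(H^2)$.

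For part (c), the relation $[\dd\rho(D), r(\xi)] = r(D\xi)$ on $\cD(H^2)$ is precisely equation~\eqref{ah} of Proposition~\ref{sterk} (with $H = i\dd\rho(D)$, so $[H, r(\xi)] = i r(D\xi)$ rearranges to the stated form), valid since $\xi \in H^1_\partial(N,\fK) \subseteq H^1_{B\mu}(M,\fK)$ and $D\xi \in L^2_b(N,\fK)$. Finally, to obtain the positive energy representation of $\big(\R C \oplus_\omega H^\infty_\partial(N,\fK)\big)\rtimes \R D$, I would observe that $H^\infty_\partial(N,\fK) = \varprojlim H^k_\partial(N,\fK)$ sits inside $H^1_\partial(N,\fK)$, that $C$ acts as $i\one$ (inherited from $\dd\rho(C) = i\one$), that $D$ acts as $\dd\rho(D) = -iH$ with $H \geq 0$ giving the positive energy condition, and that the bracket relations \eqref{eq:smurfenliedje} / \eqref{eq:d-elt} are respected by parts (b) and (c). The main obstacle is the limiting argument in (b): one must be careful that both the weak-operator content of Proposition~\ref{zwak} and the strong-convergence refinement \eqref{zwakdik} are invoked on the correct domains ($\cD(H^2) \subseteq \cD(A^{3/2})$), and that the bracket $[\xi_n,\eta_n]$ stays in the space where Proposition~\ref{sterk} applies — i.e. one needs $\g/I_\mu$ to be a subalgebra and the $q_1$-continuity of the bracket, both of which are available but must be used in tandem to justify interchanging the limit with the operator composition on the left-hand side of \eqref{eq:r-bracket}.
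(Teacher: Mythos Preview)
Your proposal is essentially correct and follows the paper's structure closely: parts (a) and (c) are exactly the references to Propositions~\ref{zwak} and~\ref{sterk} that the paper uses. The one point where you should tighten the argument is the limiting step in (b).

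You propose to show $r(\xi_n)r(\eta_n)\psi \to r(\xi)r(\eta)\psi$ strongly using \eqref{zwakdik}. The problem is that \eqref{zwakdik} gives $r(\eta_n)\psi \to r(\eta)\psi$ only in $\cH_{1/2}$ (for $\psi \in \cD(H^2) \subseteq \cH_{3/2}$), and to apply \eqref{zwakdik} a second time you would need this convergence in $\cH_{3/2}$, which is not available with only $H^1$-regularity on~$\eta$. So the composition converges in $\cH_{-1/2}$ via \eqref{eq:grondstof}, but not a priori in~$\cH$. The paper avoids this by working weakly from the start: it uses \eqref{dubbeldik}, namely
\[
|\langle r(\xi)\psi, r(\eta)\chi\rangle| \leq \|A^{3/2}\psi\|\,\|A^{3/2}\chi\|\, q_1(\xi)\,q_1(\eta),
\]
to show that the sesquilinear form $(\xi,\eta)\mapsto \langle\psi,[r(\xi),r(\eta)]\chi\rangle$ is jointly $q_1$-continuous, then combines this with the separate $q_1$-continuity of $(\xi,\eta)\mapsto \langle\chi, r([\xi,\eta])\psi\rangle$ (from \eqref{zwakdik} plus the separate $q_1$-continuity of the bracket in Proposition~\ref{prop:isbanalg}) and of $\omega$, and concludes by density of $\Gamma_c(N,\fK)$. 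Your $\cH_{-1/2}$-convergence is of course equivalent to this once you pair against $\chi \in \cD(H^2)$, so the fix is minor; but as written, ``strong convergence'' of the composition is the one step that does not go through.
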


\begin{proof}
The derived Lie algebra representation $\dd\rho$ is defined on the Lie algebra
$\widehat{\fg} = (\R C \oplus_{\omega} \fg) \rtimes \R D$.
By Proposition~\ref{zwak}, we obtain an extension
$r$ of $\dd\rho$ to $L^2_{B\mu}(M,\fK)$, hence in particular to
$L^2_{b}(N,\fK)$.
From Proposition~\ref{sterk}, we see that $r(\xi)$
is essentially skew-adjoint for $\xi$ in the smaller space
$H^1_{B\mu}(M,\fK) \subseteq L^2_{B\mu}(M,\fK)$, and that
$[\dd\rho(D),r(\xi)] = r(\xi')$ for all $\xi \in H^1_{B\mu}(M,\fK)$, hence in particular
for $\xi \in H^1_{\partial}(N,\fK) \subseteq H^1_{B\mu}(M,\fK)$.

By Cauchy--Schwarz and the inequality \eqref{zwakdik}, 
we have
\begin{equation}\label{dubbeldik}
|\langle r(\xi)\psi, r(\eta)\chi\rangle|
\leq
\|A^{3/2}\psi\|\|A^{3/2}\chi\| q_1(\xi)q_1(\chi)
\end{equation}
for all $\psi, \chi \in \cD(H^2)$ and $\xi,\eta \in H^{1}_{B\mu}(M,\fg)$, where
$A := \one + H$ and $q_1$ is the parallel Sobolev norm of \eqref{parsobtwee}.
Further, by Proposition~\ref{sterk},
 the products $r(\xi)r(\eta)$ and $r(\eta)r(\xi)$
are well defined on $\cD(H^2)$. Since
\[ \langle \psi, [r(\xi),r(\eta)]\chi\rangle = -\langle r(\xi)\psi, r(\eta)\chi\rangle
+ \langle r(\eta)\psi, r(\xi)\chi\rangle,\]
it follows that the bilinear form
\[ H^{1}_{B\mu}(M,\fK) \times H^{1}_{B\mu}(M,\fK) \rightarrow \C,
\qquad (\xi,\eta) \mapsto \langle \psi, [r(\xi),r(\eta)]\chi\rangle \]
is continuous with respect to the parallel Sobolev norm $q_1$.
In particular, its restriction to $H^1_{\partial}(N,\fK) \subseteq H^1_{B\mu}(M,\fK)$
is continuous with respect to $q_1$.


Similarly, using Cauchy--Schwarz and
(\ref{zwakdik}),  
we find for $\xi,\eta \in H^1_{\partial}(N,\fK)$ that
\[ |\langle\chi, r([\xi,\eta])\psi\rangle| \leq \|\chi\|\|A^{3/2}\psi\|
\,q_1([\xi,\eta])\,.\]
Since
the Lie bracket on
$H^1_{\partial}(N,\fK)$ is
\emph{separately} continuous for the norm $q_1$ by
Proposition~\ref{prop:isbanalg}, it follows that the
bilinear form
$H^1_{\partial}(N,\fK)\times H^1_{\partial}(N,\fK) \rightarrow \C$ defined by
$(\xi,\eta) \mapsto \langle\chi, r([\xi,\eta])\psi\rangle$ is \emph{separately} continuous
with respect to $q_1$.
%

As the cocycle $\omega(\xi,\eta) = \lra{D\xi,\eta}_{\mu}$
extends to a bilinear map
on $H^1_{\partial}(N,\fK)$ that is continuous with respect to $q_1$,
the bilinear form
\[
(\xi,\eta) \mapsto \left\langle \chi, \big([r(\xi),r(\eta)]
- r([\xi,\eta]) - i \omega(\xi,\eta)\big)\psi \right\rangle
\]
is \emph{separately} continuous for the $q_1$-topology.
Since it vanishes on the dense subset $\Gamma_{c}(N,\fK) \subseteq H^1_{\partial}(N,\fK)$,
it is identically zero.
It follows that \[[r(\xi),r(\eta)]\psi = r([\xi,\eta])\psi + i\omega(\xi,\eta)\psi\]
for all $\psi \in \cD(H^2)$. The operator $r([\xi,\eta]) + i\omega(\xi,\eta)\one$ with domain containing $\cD(H)$
is thus an essentially skew-adjoint extension of the operator $[r(\xi),r(\eta)]$ with domain~$\cD(H^2)$.
\end{proof}

\subsubsection{Semibounded representations}

The concept of a semibounded representation, introduced in
\cite{Ne09, Ne10a}, is much stronger than that of a positive energy condition.
As results in \cite{NSZ17} show,
it provides enough regularity
to lead to a
sufficient supply of $C^*$-algebraic tools to decompose representations as
direct integrals.

\begin{Definition} (Semibounded representations)
\label{def:semibounded}
We call a smooth representation
$(\rho, \cH)$ of a locally convex Lie group $G$
{\it semibounded}  \index{representation!semibounded \vulop}
if the function
\begin{equation}\label{eq:defsr}
s_\rho \: \g \to \R \cup \{ \infty\}, \quad
s_\rho(x)
:= \sup\big(\Spec(i\dd\rho(x))\big)
\end{equation}
is bounded on a neighborhood of some point $x_0 \in \g$.
Then the set $W_\rho$ of all such points $x_0$
is an open $\Ad(G)$-invariant convex cone in~$\g$.
\end{Definition}
For Lie groups $G$ which are locally exponential
or whose Lie algebra $\g$ is barreled,\begin{footnote}
{These are the locally convex spaces for which the Uniform Boundedness
Principle holds. All Fr\'echet spaces and
locally convex direct limits of Fr\'echet spaces are barreled, which
includes in particular LF spaces of test functions on non-compact manifolds.}
\end{footnote}a semibounded representation is bounded if and only if $W_\rho = \g$
(\cite[Thm.~3.1, Prop.~3.5]{Ne09}).
The positive energy representation $r$
of $H^1_{\partial}(N,\fK)$ fulfills the following semiboundedness condition.

%

\begin{Proposition} \label{Prop:propsembo}
Let $\xi \in L^2_{B\mu}(M,\fK)$, and let $t>0$.
Then
\begin{equation}
-\|\xi\|_{B_1\mu} - 9\|\xi\|^2_{\mu}d_{\fk}(d_M +1)^2/t  \leq
\inf\Big(
\mathrm{Spec}\big(ir(tD \oplus \xi)\big)
\Big)\,.
\end{equation}
In particular, the spectrum of
$tH \pm ir(\xi)$ is bounded below for every $t>0$, and this bound is uniform on an open neighborhood of $D$ in $L^2_{B\mu}(M,\fK) \rtimes \R D$.
\end{Proposition}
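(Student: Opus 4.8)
The plan is to derive the stated lower bound on $\Spec\big(ir(tD\oplus\xi)\big)$ directly from Theorem~\ref{thm:lapje} (or rather its constant-free reformulation via $B_1\mu$), and then read off both the semiboundedness of $tH\pm ir(\xi)$ and the uniformity near $D$. First I would observe that $r(tD\oplus\xi) = tr(D) + r(\xi) = t\,\dd\rho(D) + r(\xi)$, so that $ir(tD\oplus\xi) = tH + ir(\xi)$. Since every $\xi\in L^2_{B\mu}(M,\fK)$ is a limit in $\|\cdot\|_{B_\eps\mu}$ of elements $\xi_n\in\fg/I_\mu$, and since by Proposition~\ref{zwak} the functional $\eta\mapsto\langle r(\eta)\rangle_\psi$ is continuous on $L^2_{B\mu}(M,\fK)$ for fixed $\psi\in\cD(H)$, while the right-hand sides of the estimates below depend continuously on $\xi$ through $\|\xi\|_{\mu}$ and $\|\xi\|_{B_\eps\mu}$, it suffices to prove the inequality for $\xi\in\Gamma_c(M,\fK)$ and then pass to the limit. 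For such $\xi$, Theorem~\ref{thm:lapje} applied with parameter $\eps$ gives $\pm i\dd\rho(\xi)\le \|\xi\|_{B_\eps\mu}\one + \eps\|\xi\|_\mu H$ as unbounded operators on $\cH^\infty$.

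The core computation is then an optimization over $\eps$. From $\pm i\dd\rho(\xi)\le \|\xi\|_{B_\eps\mu}\one + \eps\|\xi\|_\mu H$ we get $tH \pm i\dd\rho(\xi) \ge -\|\xi\|_{B_\eps\mu}\one + (t-\eps\|\xi\|_\mu)H$, and if we choose $\eps\le t/\|\xi\|_\mu$ (assuming $\|\xi\|_\mu\ne 0$; the case $\|\xi\|_\mu=0$ is handled separately via Corollary~\ref{zondagskind}, as in the proof of Proposition~\ref{prop:explsemibound}), the coefficient of $H$ is nonnegative, so $tH\pm i\dd\rho(\xi)\ge -\|\xi\|_{B_\eps\mu}\one$. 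Now I would use the explicit form $B_\eps(m) = \max\big(81d_\fk^2(d_M+1)^4/\eps^2,\ 18\pi^2 d_\fk(d_M+1)^2 b(m)\big)$ from \eqref{eq:defBE}, so that $\|\xi\|^2_{B_\eps\mu} = \int_M \|\xi\|^2_\kappa B_\eps\,d\mu$. Taking $\eps = t/\|\xi\|_\mu$ and bounding the max by a sum, the first entry contributes $\le 81 d_\fk^2(d_M+1)^4 \|\xi\|^2_\mu /\eps^2 = 81 d_\fk^2(d_M+1)^4 \|\xi\|^4_\mu/t^2$, giving a term $9 d_\fk(d_M+1)^2\|\xi\|^2_\mu/t$ after taking the square root; the second entry is independent of $\eps$ and contributes $\|\xi\|^2_{B_1\mu}$ after noting $18\pi^2 d_\fk(d_M+1)^2 b(m) \le B_1(m)$. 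Using $\sqrt{x+y}\le\sqrt x+\sqrt y$ on $\|\xi\|^2_{B_\eps\mu}$ decomposed along this sum yields $\|\xi\|_{B_\eps\mu}\le \|\xi\|_{B_1\mu} + 9d_\fk(d_M+1)^2\|\xi\|^2_\mu/t$, which is exactly the claimed bound. The minor subtlety here is that $B_1$ is defined with $\eps=1$ while we want the $b$-dependent part of $B_\eps$; since that part does not involve $\eps$, and $B_1 \ge 18\pi^2 d_\fk(d_M+1)^2 b(m)$ by definition, the estimate goes through verbatim.

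For the last two assertions: the statement that $\Spec(tH\pm ir(\xi))$ is bounded below for every $t>0$ is the special case $\xi\mapsto\pm\xi$ of what was just proved (note $\|\xi\|_{B_1\mu}$ and $\|\xi\|_\mu$ are insensitive to sign). For the uniformity, I would parametrize a neighborhood of $D$ in $L^2_{B\mu}(M,\fK)\rtimes\R D$ by $tD\oplus\xi$ with $t$ in a neighborhood of $1$ and $\|\xi\|_{B_1\mu}$ small: on $\{t\ge 1/2\}\times\{\|\xi\|_{B_1\mu}\le 1\}$ (which, since $\|\xi\|_\mu\le\|\xi\|_{B_1\mu}$ up to a universal constant because $B_1\ge 1$, also bounds $\|\xi\|_\mu$), the right-hand side $-\|\xi\|_{B_1\mu} - 9\|\xi\|^2_\mu d_\fk(d_M+1)^2/t$ is bounded below by a fixed constant, so $s_r(tD\oplus\xi) = \sup\Spec\big(ir(tD\oplus\xi)\big)$ — applied with $\xi$ replaced by $-\xi$, and noting $i r(tD\oplus(-\xi)) = tH - ir(\xi)$ — has a uniform upper bound on that neighborhood. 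Hence $D\in W_r$ in the sense of Definition~\ref{def:semibounded}. The only genuinely delicate point is ensuring the limiting argument in the first paragraph is legitimate: weak continuity of $\xi\mapsto\langle r(\xi)\rangle_\psi$ on $L^2_{B\mu}$ gives the operator inequality $tH\pm ir(\xi)\ge -(\text{bound})\one$ in the quadratic-form sense on $\cH^\infty$, and since $\cH^\infty$ is a core for $H$ (hence for $tH\pm ir(\xi)$ via Proposition~\ref{sterk}, as $r(\xi)$ is essentially skew-adjoint and $H$-bounded in the relevant sense), this suffices to control the spectrum — this is the same mechanism already used in Proposition~\ref{prop:explsemibound}, so no new machinery is required.
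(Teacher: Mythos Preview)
The proposal is correct and follows essentially the same approach as the paper: extend the inequality from Theorem~\ref{thm:lapje} from $\Gamma_c(M,\fK)$ to $L^2_{B\mu}(M,\fK)$ via the continuity in Proposition~\ref{zwak}, set $\eps = t/\|\xi\|_\mu$, and bound $\|\xi\|_{B_\eps\mu}\le \|\xi\|_{B_1\mu} + 9d_\fk(d_M+1)^2\|\xi\|_\mu/\eps$ using the explicit form of~$B_\eps$. You have simply spelled out in more detail (the $\max\le$ sum and $\sqrt{x+y}\le\sqrt x+\sqrt y$ step, and the uniformity argument) what the paper records in one line.
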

\begin{proof}
Using Proposition~\ref{zwak}, one finds that the map
$L^2_{B\mu}(M,\fK) \rightarrow \C$
defined by
 $\xi \mapsto \lra{\Gamma_{\xi} \pm i r(\xi)}_{\psi}$
is continuous for every $\psi \in \cD(H)$, and every $\eps >0$.
It is non-negative on the dense subspace
$\Gamma_{c}(M,\fK)$ by Theorem~\ref{thm:lapje}, and hence on all of $L^2_{B\mu}(M,\fK)$ by continuity.
If $\|\xi\|_{\mu} = 0$, then $r(\xi) = 0$, and the proposition holds trivially.
If $\|\xi\|_{\mu} \neq 0$ and $\eps := t/\|\xi\|_{\mu}$, then
$\Gamma_{\xi} = tH + \|\xi\|_{B_{\eps}\mu}\one$  satisfies
$0 \leq \lra{\Gamma_{\xi} \pm i r(\xi)}_{\psi}$, and thus
\[
-\|\xi\|_{B_{\eps}\mu} \|\psi\|^2 \leq   \lra{\psi, tH \pm ir(\xi),\psi}\,.
\]
Since $\|\xi\|_{B_{\eps}\mu} \leq \|\xi\|_{B_1\mu} + 9\|\xi\|_{\mu}d_{\fk}(d_M +1)^2/\eps$,
the result follows by substituting $\eps = t/\|\xi\|_{\mu}$.
\end{proof}

\begin{Corollary}\label{cor:6.34}
The positive energy representation
$\dd\rho$ of the Lie algebra \linebreak[4]
$\big(\R C \oplus_{\omega} \Gamma_{c}(M,\fK)\big) \rtimes \R D$
is semibounded and the cone $W_\rho$ contains the open half space
$\big(\R C \oplus_{\omega} \Gamma_{c}(M,\fK)\big) - \R_+ D$.
\end{Corollary}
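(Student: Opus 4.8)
The plan is to deduce the corollary directly from Proposition~\ref{Prop:propsembo} by a sign computation and a continuity argument. Recall that $\dd\rho(C) = i\one$, that $H = i\dd\rho(D)$, and that $r$ restricts to $\dd\rho$ on $\Gamma_{c}(M,\fK)$, so that $i\dd\rho(\xi) = ir(\xi)$ there. For an element $x = zC + \xi - tD$ of the half space, with $z \in \R$, $\xi \in \Gamma_{c}(M,\fK)$ and $t > 0$, one therefore has $i\dd\rho(x) = -z\one - (tH - ir(\xi))$ on $\cH^{\infty}$, and hence after closure. Since $tH - ir(\xi) = ir(tD \oplus (-\xi))$ in the notation of Proposition~\ref{Prop:propsembo}, taking suprema of spectra and applying that proposition to $-\xi$ (noting $\|-\xi\| = \|\xi\|$) gives
\[ s_{\rho}(x) = \sup\Spec\big(i\dd\rho(x)\big) = -z - \inf\Spec\big(tH - ir(\xi)\big) \leq -z + \|\xi\|_{B_{1}\mu} + 9\,\|\xi\|_{\mu}^{2}\,d_{\fk}(d_{M}+1)^{2}/t\,. \]

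Next I would fix an arbitrary point $x_{0} = z_{0}C + \xi_{0} - t_{0}D$ of the half space (so $t_{0} > 0$) and argue that the right-hand side above is a \emph{continuous} function of $x$ on the open set where the $D$-component is negative. Indeed, $z$ and the $D$-component are continuous linear functionals, so $x \mapsto 1/t$ is continuous where $t > 0$; and $\xi \mapsto \|\xi\|_{B_{1}\mu}$ and $\xi \mapsto \|\xi\|_{\mu}^{2}$ are continuous on the LF-space $\Gamma_{c}(M,\fK)$. For the latter I would use that, by Theorem~\ref{thm:lapje}, $B_{1}$ is upper semi-continuous, hence bounded on every compact set, while $\mu$ is locally finite; thus on each step $\Gamma_{K}(M,\fK)$ of the inductive limit the seminorm $\|\cdot\|_{B_{1}\mu}$ (and hence also $\|\cdot\|_{\mu}$) is dominated by a continuous $C^{0}$-seminorm, so it is continuous on the inductive limit. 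Consequently $s_{\rho}$ is bounded above on a neighbourhood of $x_{0}$ by a finite constant, i.e.\ $x_{0} \in W_{\rho}$.

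Since $x_{0}$ was an arbitrary point of the nonempty half space $\big(\R C \oplus_{\omega} \Gamma_{c}(M,\fK)\big) - \R_{+}D$, this proves the inclusion $W_{\rho} \supseteq \big(\R C \oplus_{\omega} \Gamma_{c}(M,\fK)\big) - \R_{+}D$, and in particular that $\dd\rho$ is semibounded; the remaining properties of $W_{\rho}$ (openness, convexity, $\Ad(G)$-invariance) are automatic from Definition~\ref{def:semibounded}. The only mildly subtle point in this argument is the continuity of $\|\cdot\|_{B_{1}\mu}$ with respect to the LF-topology, which is exactly what the upper semi-continuity of $B_{1}$ in Theorem~\ref{thm:lapje} provides; everything else is bookkeeping with signs and spectra, which is why this is a corollary rather than a theorem.
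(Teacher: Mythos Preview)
Your proof is correct and follows essentially the same approach as the paper: apply Proposition~\ref{Prop:propsembo} to bound $s_\rho$ on the half space, use that $C$ acts by a scalar, and invoke continuity of the inclusion $\Gamma_c(M,\fK)\hookrightarrow L^2_{B\mu}(M,\fK)$. The paper compresses this into three lines (and leans on the final sentence of Proposition~\ref{Prop:propsembo} for the uniformity), whereas you spell out the sign bookkeeping and the reason the $B_1\mu$-norm is continuous on the LF-space; but the substance is the same.
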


\begin{proof} This follows from
Proposition~\ref{Prop:propsembo} because $\dd\rho$
comes from a group representation,
the central element $C$ is represented by a constant, and the
inclusion $\Gamma_{c}(M,\fK) \hookrightarrow L^2_{B\mu}(M,\fK)$
is continuous.
\end{proof}

\subsubsection{Analytic vectors}

A vector $\psi$ in a Banach space $\mathfrak{X}$
is called \emph{analytic}\index{analytic!vectors \scheiding $\cD^{\omega}$} for an unbounded operator
$A$ on $\mathfrak{X}$  if $\psi \in \bigcap_{n\in \N} \cD(A^n)$, and the series
$\sum_{n=0}^{\infty} \frac{s^n}{n!}\|A^n \psi\|$
has positive radius of convergence $R_A>0$.

\begin{Lemma}  \label{lem:6.26}
Let $\xi\in H^2_{\partial}(N,\fK)$, and consider $H$ and $r(\xi)$
as unbounded operators on $\cH$.
If $\psi \in \cH$ is an analytic vector for $H$, then it is also analytic for $r(\xi)$.
If $\psi$ has radius of convergence $R_H$ for $H$, then
the exponential series $\exp(r(\xi))\psi = \sum_{n=0}^{\infty} \frac{1}{n!}r(\xi)^n \psi$
is absolutely convergent on the ball 
defined by
\begin{equation}\label{eq:bahbah}
p_2(\xi) < - {\textstyle \frac{1}{2c_{\fk}}}\log\left(
1 - {\textstyle\frac{(2c_{\fk})^2}{(c_{\fk}+1)^2}}
\Big(
1-\exp\Big(
-{\textstyle \frac{(c_{\fk}+1)^2}{2c_{\fk}}R_{H}}
\Big)
\Big)
\right)\,.
\end{equation}
%
\end{Lemma}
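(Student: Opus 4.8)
The plan is to estimate $\|r(\xi)^n\psi\|$ by iterating the Sobolev-type bound from Proposition~\ref{sterk} and then to sum the resulting series. The key point is that, for $\xi\in H^2_{\partial}(N,\fK)\subseteq H^2_{B\mu}(M,\fK)$, the operator $r(\xi)$ maps $\cD(H^{k+1})$ into $\cD(H^k)$ for $k=0,1$, so that products $r(\xi)^n\psi$ make sense whenever $\psi$ is smooth, and more importantly $r(\xi)^n$ is controlled in the scale of spaces $\cH_s$. From \eqref{eq:stapjes} with $k=1$ we get $\|r(\xi)\chi\|_{1/2}\leq \|\xi\|_0\|\chi\|_{3/2}+\|\xi\|_1\|\chi\|_{1/2}$, and more generally I would record an estimate of the form $\|r(\xi)\chi\|_{s-1/2}\lesssim p_2(\xi)\,\|\chi\|_{s+1/2}$ for $s=1/2,3/2$, where the implied constant involves $c_{\fk}$. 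The technical device is to work with the shifted Hamiltonian $A=\one+H$ and the norms $\|\psi\|_s=\|A^s\psi\|$, exploiting $A r(\xi)=r(\xi)A+ir(D\xi)$ from Proposition~\ref{sterk} to push powers of $A$ through $r(\xi)$ at the cost of producing $r(D\xi)$, $r(D^2\xi)$, etc., all of which are again controlled by $p_2(\xi)$ since $\xi\in H^2_{\partial}$.

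First I would set up the recursion cleanly. Writing $\psi_n:=r(\xi)^n\psi$ and tracking the $\cH_{1/2}$-norm, one obtains $\|\psi_n\|_{1/2}\leq \big(a\,p_2(\xi)\big)^n\,\|A^{m(n)}\psi\|$ for a suitable sequence of exponents $m(n)$ growing linearly in $n$ (roughly $m(n)=n+\tfrac12$ since each application of $r(\xi)$ costs one unit in the $A$-scale, as \eqref{eq:stapjes} shows), and a numerical constant $a$ expressible through $c_{\fk}$. Here the hypothesis that $\psi$ is analytic for $H$, hence for $A$, with radius $R_H$ enters: it guarantees $\sum_n \tfrac{s^n}{n!}\|A^n\psi\|<\infty$ for $s<R_H$, so the growing powers of $A$ are absorbed. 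Combining the geometric factor $(a\,p_2(\xi))^n$ from the operator bounds with the factorial from the exponential series and the analytic-vector bound for $A$, one is left with a convergence condition of the shape $a\,p_2(\xi)\cdot(\text{something depending on }R_H)<1$. The precise bookkeeping — keeping careful track of the binomial coefficients from \eqref{eq:stapjes} and of how the $A$-powers recombine — is exactly what produces the somewhat baroque right-hand side of \eqref{eq:bahbah}, with its $\tfrac{1}{2c_{\fk}}\log(\cdots)$ and the factor $1-\exp(-\tfrac{(c_{\fk}+1)^2}{2c_{\fk}}R_H)$ reflecting the truncated exponential of $A$.

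The main obstacle, and the place where care is needed, is the interplay of two competing sources of growth: the operators $r(\xi)$ each lose half a derivative in the $A$-scale (so $n$-fold products require $\psi\in\cD(A^{n+1/2})$ and bring in $\|A^{n+1/2}\psi\|$, which only the analyticity of $\psi$ for $H$ can tame), while simultaneously the commutator corrections $Ar(\xi)-r(\xi)A=ir(D\xi)$ proliferate, generating at stage $n$ a sum of roughly $2^n$ terms $r(D^{j_1}\xi)\cdots r(D^{j_n}\xi)$ with $\sum j_i$ bounded. I would handle the latter by the estimate $\|D^j\xi\|_0=\|\xi\|_j\leq p_2(\xi)$ for $j\leq 2$ — this is why $H^2_{\partial}$, not merely $H^1_{\partial}$, is the right space — and absorb the $2^n$ into the base of the geometric series, which is precisely what forces the $2c_{\fk}$ (rather than $c_{\fk}$) in the exponent and the $\tfrac{(2c_{\fk})^2}{(c_{\fk}+1)^2}$ prefactor inside the logarithm in \eqref{eq:bahbah}. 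Once the estimate $\|r(\xi)^n\psi\|\leq C\,n!\,\big(b\,p_2(\xi)\big)^n$ with $b$ the reciprocal of the right-hand side of \eqref{eq:bahbah} is in hand — using $\|\psi_n\|\leq\|\psi_n\|_{1/2}$ at the last step — absolute convergence of $\sum_n\tfrac1{n!}r(\xi)^n\psi$ on the stated ball, and hence analyticity of $\psi$ for $r(\xi)$, is immediate.
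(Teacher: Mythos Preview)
Your approach has a genuine gap. The recursion you describe pushes powers of $A$ through products of $r(\xi)$'s using $[A,r(\xi)]=ir(D\xi)$, and you assert that this produces at stage $n$ a sum of terms $r(D^{j_1}\xi)\cdots r(D^{j_n}\xi)$ with $\sum j_i$ \emph{bounded}. That is not correct. Commuting $A^k$ past a single $r(\xi)$ already gives $\sum_{j=0}^{k}\binom{k}{j} i^{j} r(D^{j}\xi)A^{k-j}$, so the derivative order $j$ runs up to~$k$. Equivalently, if you try to iterate the Sobolev bound \eqref{eq:stapjes}, you need $\|r(\xi)^{n-1}\psi\|_{3/2}$ to control $\|r(\xi)^n\psi\|_{1/2}$; applying \eqref{eq:stapjes} with $k=2$ then demands $\|r(\xi)^{n-2}\psi\|_{5/2}$, which via $k=3$ demands $\|\xi\|_3$, and so on. After $n$ steps you need $\xi\in H^{n}_{B\mu}$, whereas the hypothesis gives only $\xi\in H^{2}_{\partial}$. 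So the scheme cannot close with constants depending only on $p_2(\xi)$.

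The paper circumvents this by invoking Nelson's theorem \cite[Thm.~1]{Nel59}, which converts the problem of bounding $\sum\frac{s^n}{n!}\|r(\xi)^n\psi\|$ into that of bounding the iterated \emph{adjoint} actions $\ad_{r(\xi)}^{n}(A)$. This is where the Lie algebra structure enters decisively: by the bracket relation \eqref{eq:r-bracket} one has
\[
\ad_{r(\xi)}^{n}(A)
= -i\,r\big(\ad_{\xi}^{\,n-1}(D\xi)\big) + \omega\big(\xi,\ad_{\xi}^{\,n-2}(D\xi)\big)\one,
\]
where $\ad_{\xi}$ is the \emph{pointwise} bracket in $\fk$, not the derivation $D$. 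Since $\xi,D\xi\in H^{1}_{\partial}(N,\fK)$ and the latter is a Banach--Lie algebra (Proposition~\ref{prop:delLie1}), every $\ad_{\xi}^{\,n-1}(D\xi)$ remains in $H^{1}_{\partial}$ with $q_1$-norm bounded by $(2c_{\fk} q_{C^1}(\xi))^{n-1} q_1(D\xi)$ — no higher $D$-derivatives of $\xi$ ever appear. That is the missing idea: trade commutators with $A$ (which raise the $D$-order) for commutators with $r(\xi)$ (which raise only the $\ad_\xi$-order and are controlled by~$\|\xi\|_\infty$). This is precisely why $H^{2}_{\partial}$, ensuring $\xi,D\xi\in H^{1}_{\partial}$, is the correct hypothesis, and it is the source of the constants $2c_\fk$ and $(c_\fk+1)^2/2c_\fk$ in \eqref{eq:bahbah}.
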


\begin{proof}
We apply \cite[Thm.~1]{Nel59} to
$r(\xi)$ and $A = \one + H$, considered as unbounded operators on the Banach
space~$\cH_{1/2}$.
For $\xi \in H^1_{B\mu}(M,\fK)$ and $\psi \in \cD(H^2) \subseteq \cH_{1/2}$,
the inequality~\eqref{zwakdik} 
yields
\begin{align}\label{eq:poedeltje}
\|r(\xi) \psi\|_{1/2} \leq q_1(\xi) \|A \psi\|_{1/2}.
\end{align}
By \eqref{ah},
we have $\mathrm{ad}_{r(\xi)} A = -ir(D\xi)$.
If $\xi \in H^2_{\partial}(N,\fK)$, then by definition, both
$\xi$ and $D\xi$ are in $H^1_{\partial}(N,\fK)$. It follows that also
$\mathrm{ad}^{n-1}_{\xi}(D\xi) \in H^1_{\partial}(N,\fK)$ for $n\geq 1$.
By \eqref{eq:r-bracket} and induction, we find
\begin{equation}\label{eq:weetsjon}
\mathrm{ad}^{n}_{r(\xi)} (A) =
-i \mathrm{ad}^{n-1}_{r(\xi)} (r(D\xi))
= -i r(\mathrm{ad}^{n-1}_{\xi}(D\xi)) +
\omega(\xi, \mathrm{ad}^{n-2}_{\xi}(D\xi) ) \one
\end{equation}
as an equality of unbounded operators from $\cD(H^2)$ to $\cH_{1/2}$.
From
\eqref{eq:dera} and \eqref{eq:smosbrood},
we infer that
\begin{eqnarray}\label{eq:vlotruit}
\|\mathrm{ad}^{n}_{\xi}(D\xi) \|_{B_{\eps}\mu} &\leq& (c_{\fk} \|\xi\|_{\infty})^n\|D\xi\|_{B_{\eps}\mu},\\
q_1\big(\mathrm{ad}^{n}_{\xi}(D\xi) \big) &\leq&
(2c_{\fk} q_{C^1}(\xi))^n q_{1}(D\xi)\,.\label{eq:vlotruit2}
\end{eqnarray}
Next we estimate $\|\mathrm{ad}^{n}_{r(\xi)} (A)\psi\|_{1/2}$.
Applying~\eqref{eq:weetsjon} and noting that
\[ |\omega(\xi,\eta)| = |\lra{D\xi,\eta}_{\mu}|\leq \|D\xi\|_\mu \|\eta\|_\mu
\quad \mbox{ and }\quad \|D\xi\|_{\mu} \leq \|D\xi\|_{B_{\eps}\mu},\] the second
term on the right hand side of \eqref{eq:weetsjon} satisfies
\begin{equation}\label{eq:woeiwaai1}
\|\omega(\xi, \mathrm{ad}^{n-2}_{\xi}(D\xi) ) \psi \|_{1/2} \leq
 (c_{\fk} \|\xi\|_{\infty})^{n-2}\|D\xi\|^2_{B_{\eps}\mu} \|\psi\|_{1/2}\,.
\end{equation}
Applying \eqref{eq:poedeltje} and \eqref{eq:vlotruit2} to the first term
on the right hand side of \eqref{eq:weetsjon}, we find
\begin{equation}\label{eq:woeiwaai2}
\|r(\mathrm{ad}^{n-1}_{\xi}(D\xi)) \psi\|_{1/2}
\leq (2 c_{\fk} q_{C^1}(\xi))^{n-1} q_1(D\xi) \|A \psi\|_{1/2}\,.
\end{equation}
Combining \eqref{eq:woeiwaai1} and \eqref{eq:woeiwaai2}
with
\eqref{eq:weetsjon},
and using that
$\|\psi\|_{1/2} \leq \|A \psi\|_{1/2}$,
 we find
\begin{align}\label{eq:ndec}
\|\mathrm{ad}^{n}_{r(\xi)} (A) \psi\|_{1/2}
\leq c_{n} \|A \psi\|_{1/2} \,,
\qquad \text{with} \qquad\qquad \nonumber \\
c_n =
q_1(D\xi)
\Big(\|D\xi\|_{B_{\eps}\mu} + (2c_{\fk} q_{C^1}(\xi)) \Big)
 (2c_{\fk} q_{C^1}(\xi))^{n-2}\,.
\end{align}

Since the series $\nu(s) := \sum_{n=1}^{\infty} \frac{c_n}{n!}s^n$ has positive
radius of convergence, we may now fix some
$t_0 > 0$ with $\nu(t_0) < 1$ and assume that $0 \leq s,t \leq t_0$.
Applying \cite[Thm.~1]{Nel59} to $\cH_{1/2}$ guarantees that
for $\varpi(s) := \int_{0}^{s}(1-\nu(t))^{-1}dt$, we have
\[
\sum_{n=0}^{\infty} \frac{s^n}{n!}\| r(\xi)^n \psi\|_{1/2}
\leq
\sum_{n=0}^{\infty} \frac{(c\cdot\varpi(s))^n}{n!}\|A^n \psi\|_{1/2}\,,
\quad \mbox{ with } \quad c := q_1(\xi) \] 
 as in \eqref{eq:poedeltje}. Since $\|r(\xi)^n\psi\| \leq \|r(\xi)^n \psi\|_{1/2}$ and
$\|A^n\psi\|_{1/2} \leq \|A^{n+1}\psi\|$, this yields
\begin{equation}\label{eq:schatexp}
\sum_{n=0}^{\infty} \frac{s^n}{n!}\| r(\xi)^n \psi\|
\leq
\sum_{n=0}^{\infty} \frac{(c\cdot\varpi(s))^n}{n!}\| A^{n+1} \psi\|\,.
\end{equation}
To get an explicit estimate on the radius of convergence,
note that all norms  of (derivatives of) $\xi$ occurring in
\eqref{eq:ndec}
are dominated by $p_2(\xi)$ (cf.~\eqref{eq:normbank}).
The estimate $c_n \leq a b^{n}$ with
$a := (1+2c_{\fk})/(2c_{\fk})^2$
and $b:= 2c_{\fk} p_2(\xi)$
yields ${\nu(s) \leq a(e^{bs} -1)}$.
Accordingly, $\nu(t_0) < 1$ is ensured if
\[  b t_0
< \log\Big(1 + \frac{1}{a}\Big)
= -\log\Big(1 - \frac{1}{1+a}\Big).\]
In particular, $s = 1$ is allowed if
$p_2(\xi) < \frac{1}{2c_\fk}\log\big(1 + \frac{1}{a}\big)$.
Substituting this in $\varpi(s) = \int_{0}^s(1-\nu(t))^{-1}dt$ and integrating, we obtain
\begin{equation}\label{eq:kruutje}
\varpi(s) \leq - {\textstyle \frac{1}{(1+a)b}} \log\left(
(1+a)e^{-bs} -a
\right)\,.
\end{equation}

If $\psi$ is an analytic vector for $H$, it is analytic for $A = \one + H$
with the same radius of convergence $R_H$.
The right hand hand side of \eqref{eq:schatexp} therefore
converges absolutely if $c\cdot \varpi(s)  < R_{H}$, where $c = q_1(\xi)$.
Since
$q_1(\xi) \leq p_2(\xi)$, we find
$\frac{c}{b} \leq \frac{1}{2c_{\fk}}$, and hence
$\frac{c}{(1+a)b} \leq 2c_{\fk}/(c_{\fk} +1)^2$.
Substituting this in \eqref{eq:kruutje}, we find that
$c\cdot \varpi(s) \leq R_{H}$ if
\[
bs \leq -\log\left(
1 - {\textstyle \frac{1}{1+a}}\Big(
1- \exp\Big(-{\textstyle \frac{(c_{\fk}+1)^2}{2c_{\fk}}} R_{H}\Big)
\Big)
\right) <
-\log\Big(1 - \frac{1}{1+a}\Big).\]
Putting $s=1$, and substituting $a$ and $b$ in the above equation,
we find that \eqref{eq:schatexp} converges if $p_2(\xi)$
satisfies \eqref{eq:bahbah}.
\end{proof}

\section{The Localization Theorem}\label{sec:locthmsec}

In this section, we use the continuity and analyticity results
from \S\ref{sec:6} to prove a \emph{Localization Theorem}.
Our main result reduces the classification of positive energy representations
of the identity component $\Gamma_c(M,\cK)_0$
to the case where the base manifold $M$ is one-dimensional.
We start in the setting of a fixed point free $\R$-action on the manifold $M$,
and extend this to more general
Lie group actions in \S\ref{sec:HigherDimSymmetry}.

\subsection{Statement and discussion of the theorem}\label{sec:StatementDiscussion}

\begin{Theorem} {\rm(Localization Theorem)} \label{thm:7.11} 
Let $\pi \: \cK \rightarrow M$ be a Lie group bundle with
$1$-connected, semisimple fibers. Let
$\gamma_\cK \: \R \to \Aut(\cK)$ be a
homomorphism that defines a smooth action on~$\cK$, and induces
a fixed-point free flow $\gamma_M$ on~$M$.
Then, for every projective positive energy representation
\[\oline\rho \: \Gamma_c(M,\cK)_0 \to \PU(\cH)\]
of the connected gauge group $\Gamma_c(M,\cK)_0$, there exists
a one-dimensional, closed, embedded, flow-invariant submanifold
$S \subeq M$ such that $\ol{\rho}$
factors through a 
projective positive energy
representation $\ol{\rho}_{S}$ of the connected Lie group $\Gamma_c(S,\cK)$. The diagram
\begin{center}
$ $
\xymatrix{
\Gamma_{c}(M,\cK)_{0} \ar[d]^{r_{S}} \ar[r]^{\ol{\rho}} & \mathrm{PU}(\cH)\\
\Gamma_{c}(S,\cK) \ar[ru]_{\ol{\rho}_{S}}
}
\end{center}
commutes,
where $ r_S\: \Gamma_c(M,\cK)_0 \to \Gamma_c(S,\cK)$ is the
restriction homomorphism.
\end{Theorem}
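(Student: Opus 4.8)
The plan is to assemble the proof from the machinery developed in \S\ref{SectionPPER}--\S\ref{sec:6}, following the five-step outline sketched in \S\ref{sec:structure}. First I would reduce to the standard setting: using Theorem~\ref{reductienaarsimpel} we may assume $\fK \to M$ has \emph{simple} fibers (replacing $M$ by the finite cover $\widehat M$ and $\cK$ by $\widehat\cK$), and by Theorem~\ref{ThmProjRepLARep} the projective representation $\oline\rho$ of $\Gamma_c(M,\cK)_0$ is encoded, up to unitary equivalence, by the derived representation $\dd\rho$ of the central extension $\widehat{\fg} = \R C \oplus_\omega (\Gamma_c(M,\fK)\rtimes_D \R)$, where $D = L_{\bv}$ comes from the geometric $\R$-action. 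Since the flow $\gamma_M$ is fixed-point free, $\bv_M$ is nowhere vanishing, so Theorem~\ref{MeasureThm} applies: $\omega$ is cohomologous to $\omega_{\mu,\nabla}$ for a flow-invariant, positive, regular, locally finite Borel measure $\mu$ on $M$ (supported where the fibers are compact simple), and by Remark~\ref{omooieconnectie} we may take $\nabla$ with $\bv$ horizontal so that $\omega(D,\xi)=0$ and $\omega(\xi,\eta) = \langle L_{\bv}\xi,\eta\rangle_\mu$. Theorem~\ref{red2cpt} lets us discard the noncompact components, so $\fK$ has compact simple fibers.

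Next I would invoke the continuity results of \S\ref{sec:6}. By Theorem~\ref{thm:lapje} there is, for each $\eps>0$, an upper semi-continuous flow-invariant bounding function $B_\eps$ with $\pm i\dd\rho(\xi) \le \|\xi\|_{B_\eps\mu}\one + \eps\|\xi\|_\mu H$, and by Corollary~\ref{zondagskind} (after a harmless twist by a linear functional $\chi$) $\dd\rho$ annihilates the ideal $I_\mu$ of $\mu$-null sections. Theorem~\ref{Shylock} then extends $\dd\rho$ to a representation $r$ of $\R C \oplus_\omega H^1_\partial(N,\fK)$ for any open $N\subseteq M$, with $[\dd\rho(D),r(\xi)]=r(D\xi)$, and Lemma~\ref{lem:6.26} shows that analytic vectors for $H$ are analytic for $r(\xi)$ when $\xi\in H^2_\partial(N,\fK)$. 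Now the localization mechanism: cover $M$ by flow boxes $U\simeq U_0\times I$ in which $\mu = \mu_0\otimes dt$ (Definition~\ref{def:goodflowbox}); for a Borel set $E\subseteq U_0$ of finite $\mu_0$-measure, Corollary~\ref{cor:sietsiet} gives a continuous Lie algebra homomorphism $\iota_E\colon H^2_\partial(I,\fk)\to H^2_\partial(U,\fK)$, $\xi\mapsto \chi_E\cdot\xi$, and composing with $r$ yields a projective $*$-representation of $C^\infty_c(I,\fk)$ with cocycle $\mu_0(E)\langle\,\cdot\,,\partial_t\,\cdot\,\rangle_{dt}$, i.e.\ central charge proportional to $\mu_0(E)$. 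Since $C^\infty_c(I,\fk)$ is $1$-connected with a dense space of analytic vectors (the analytic vectors of $H$), Theorem~\ref{thm:7.4} integrates this to a smooth projective unitary representation of the loop-type group $G$ integrating $C^\infty_c(I,\fk)$, hence to a smooth central $\T$-extension $G^\sharp\to G$.

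The crux — and the step I expect to be the main obstacle — is the integrality argument forcing $\mu_0$ to be atomic. For any smooth $\sigma\colon\bS^2\to G$ the pullback $\sigma^*G^\sharp\to\bS^2$ is a principal $\T$-bundle whose Chern class is integral; evaluating the period of the curvature $2$-form (built from the cocycle, hence proportional to $\mu_0(E)$) against $[\bS^2]$ shows $2\pi\mu_0(E)\in\Z$ — provided one can produce spheres $\sigma$ realizing the generator of $\pi_2$ of the relevant space, which is where $K$ being $1$-connected and semisimple enters (so $\pi_3(K)\cong\Z$ feeds the loop group central charge). Since this holds for \emph{every} Borel $E\subseteq U_0$ of finite measure, $\mu_0$ must be a locally finite sum of point masses; hence $\mu = \mu_0\otimes dt$ is concentrated on a closed embedded $1$-dimensional submanifold $S_U\subseteq U$. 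Gluing over a locally finite cover by flow boxes (the $S_U$ agree on overlaps since $\mu$ is intrinsic), we obtain a closed, embedded, flow-invariant $1$-manifold $S\subseteq M$ with $\supp\mu = S$.

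Finally I would descend to the quotient. By Corollary~\ref{zondagskind}, $\dd\rho$ kills $I_\mu = \{\xi : \xi|_S = 0\}$ (after the twist), so $\dd\rho$ factors through the restriction map $\Gamma_c(M,\fK)\to\Gamma_c(S,\fK)$, giving a derived representation $\dd\rho_S$ of $\R C\oplus_\omega(\Gamma_c(S,\fK)\rtimes_D\R)$; it is of positive energy because $H$ is unchanged. Since $\Gamma_c(S,\cK)$ is $1$-connected (as $K$ is $1$-connected and $S$ is a disjoint union of lines and circles, so $\pi_1(\Gamma_c(S,\cK))$ vanishes by the relevant mapping-space argument), Theorem~\ref{ThmProjRepLARep}(A)--(B) integrates $\dd\rho_S$ to a smooth projective positive energy representation $\oline\rho_S$ of $\Gamma_c(S,\cK)$, and by construction $\oline\rho_S\circ r_S$ and $\oline\rho$ have unitarily equivalent derived representations on $\Gamma_c(M,\cK)_0$, hence agree after adjusting by the intertwiner; undoing the cover $\widehat M\to M$ and the twist by $\chi$ (which only shifts the lift, not the projective representation) completes the proof. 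The one genuinely delicate point beyond bookkeeping is verifying that the twist $\chi$ can be chosen globally consistently and that the resulting $S$ does not depend on the flow-box cover — both follow from the uniqueness of $\mu$ in Theorem~\ref{MeasureThm}.
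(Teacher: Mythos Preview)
Your proposal is correct and follows essentially the same route as the paper (Lemmas~\ref{lem:integralitymeasure}, \ref{lemma:streepjes}, \ref{lem:kleinbewijs}, and the group-level proof on p.~\pageref{proof:Prrofmainresult}), including the period/Chern-class integrality argument for $\mu_0(E)$. Two small technical corrections: Theorem~\ref{thm:7.4} needs a \emph{Banach}--Lie group, so the paper integrates the representation of $\fh = H^2_\partial(I,\fk)$ rather than $C^\infty_c(I,\fk)$; and the final passage to the group level is not done by ``integrating $\dd\rho_S$ via Theorem~\ref{ThmProjRepLARep}'' (that theorem goes from groups to Lie algebras, not back), but via the factorization Lemma~\ref{lem:7.11}, which uses that $r_S$ is an open surjection of locally exponential groups and that $\Gamma_c(S,\cK)$ is $1$-connected (Lemma~\ref{lem:1-connectedfibers}) so the covering $G/(\ker r_S)_0 \to \Gamma_c(S,\cK)$ is an isomorphism.
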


\begin{Remark}
It is convenient to define $\Gamma_{c}(\emptyset,\cK) := \{\one\}$,
so that the above theorem holds for the trivial representation
with $S = \emptyset$.
\end{Remark}

\begin{Remark}{\rm (Localization for the simply connected cover)}
In fact, we will prove a slightly stronger result: every
projective positive energy representation
\[\oline\rho \: \tilde\Gamma_c(M,\cK)_{0} \to \PU(\cH)\]
of the simply connected cover of $\Gamma_{c}(M,\cK)_{0}$
factors through the composition
$\tilde r_S := {r_S \circ q_{\Gamma}}$
of the covering map $q_{\Gamma} \colon \tilde\Gamma_c(M,\cK)_{0} \rightarrow \Gamma_c(M,\cK)$
with the restriction ${r_S\: \Gamma_c(M,\cK)_0 \to \Gamma_c(S,\cK)}$.
This strengthening of Theorem~\ref{thm:7.11} is needed in Part II, where we handle localization for
gauge groups on manifolds with boundary.
\end{Remark}

Note that $M$ is not required to be compact or connected, and that
the fibers of $\cK\rightarrow M$ are not required to be compact.
The result for noncompact $M$ is a major feature, which we will use
extensively later on (see \S\ref{sec:9} and Part II). Allowing noncompact fibers, however, is not a big step.
Indeed, noncompact simple fibers result in trivial representations
by Theorem~\ref{red2cpt}, so we already know that the theorem
holds with $S=\emptyset$ in that case.
Before proceeding with the proof in \S\ref{sec:kleinbewijs},
we show that the assumption of 1-connectedness of the fibers
is not essential.

\begin{Remark}[Non-simply connected fibers] \label{rem:7.6}
Suppose that the typical fibers of $\cK \rightarrow M$ are connected, but not necessarily simply connected.
Let $K_i$ be the typical fiber over the connected component $M_i$ of~$M$,
and
let $\widetilde{K}_i$ be its universal 1-connected cover.
The kernel $\pi_1(K_i)$ of the covering map $\widetilde{K}_{i} \twoheadrightarrow K_i$
is a finite, central subgroup, yielding a central extension
\begin{equation}\label{eq:exactgroupsingle}
\pi_1(K_i) \hookrightarrow
\widetilde{K}_i \twoheadrightarrow K_i\,.
\end{equation}
%
The natural
inclusion $\Aut(K_i) \into \Aut(\tilde{K}_{i})$, obtained by the canonical lift of
automorphisms, yields a Lie group bundle $\tilde{\cK}_{i} \to M_{i}$ with
fiber $\tilde K_i$
over each $M_i$, and hence a Lie group bundle $\tilde \cK \to M$ over $M$.
 It comes with a
natural bundle map $\tilde{\cK} \to \cK$ over the identity of $M$,
which restricts to the universal covering map on every fiber.
The kernel $\cZ \subseteq \tilde \cK$ of this map is a bundle of discrete,
abelian groups, whose fibers over $M_i$ are isomorphic to $\pi_1(K_i)$.
Analogous to \eqref{eq:exactgroupsingle}, we thus obtain an exact sequence
of Lie group bundles
\begin{equation}
\cZ \hookrightarrow \tilde \cK \twoheadrightarrow \cK\,.
\end{equation}

The 1-parameter group $\gamma_{\cK} \colon \R \rightarrow \Aut(\cK)$ lifts to
$\gamma_{\widetilde{\cK}} \colon \R \rightarrow \Aut(\widetilde{\cK})$
with the same infinitesimal generator $\bv \in \Gamma(M,\mathfrak{a}(\fK))$
(cf. Remark~\ref{remarkalgebroids}).
As every smooth section
$\xi \in \Gamma_c(M,\cK)_0$ lifts to a section of $\tilde\cK$
because the natural map
$\Gamma_c(M,\tilde\cK) \to \Gamma_c(M,\cK)$ is a covering morphism of Lie groups,
the projection $\tilde\cK \to \cK$
yields a surjective Lie group homomorphism, and hence an exact sequence
\begin{equation}\label{eq:exactgroupbundle}
\Gamma_{c}(M,\cZ) \hookrightarrow
\Gamma_c(M,\tilde\cK) \rightarrow 
\Gamma_c(M,\cK).
\end{equation}
Since the fibers of $\cZ$ are discrete, the group $\Gamma_{c}(M_i,\cZ_i)$
of compactly supported sections of $\cZ_i \rightarrow M_i$ is trivial
if $M_i$ is noncompact.
If $M_i$ is compact, $\Gamma_{c}(M_i,\cZ_i)$ can be identified with
$\pi_1(K_i)^{\pi_1(M_i)}$, the fixed point subgroup of $\pi_1(K_i)$
under the monodromy action $\pi_1(M_i) \rightarrow \Aut(\pi_1(K_i))$.
We thus obtain an isomorphism
\begin{equation}\label{eq:formulapione}
\Gamma_{c}(M,\cZ) \simeq  \prod_{i\in I}{}^{'} \pi_1(K_i)^{\pi_1(M_i)}
\end{equation}
of discrete groups where $\prod_{i\in I}' \pi_1(K_i)^{\pi_1(M_i)}$ denotes
the weak direct product of the finite abelian groups $\pi_1(K_i)^{\pi_1(M_i)}$
(all tuples with finitely many non-zero entries),
running over all $i$ for which the
connected component $M_i$ is compact.
In particular, it follows from \eqref{eq:exactgroupbundle} and \eqref{eq:formulapione} that
projective positive energy representations of
$\Gamma_c(M,\cK)_0$ correspond to
projective positive energy representations of $\Gamma_c(M,\widetilde{\cK})_0$
that are trivial on
$Z_{[M]} := \Gamma_{c}(M,\cZ) \cap \Gamma_{c}(M,\tilde{\cK})_{0}$.
\end{Remark}

Note that the embedding $S \hookrightarrow M$
yields a `diagonal' morphism $Z_{[M]} \rightarrow Z_{[S]}$.
The term `diagonal' is justified by the special case
where $\cK$ is a trivial bundle over a compact, connected manifold $M$.
Then the embedded 1-dimensional submanifold
$\emptyset \neq S \subseteq M$ is the disjoint
union of $N$ circles, and $Z_{[M]} \simeq \pi_1(K)$ can literally
be identified with the diagonal subgroup of $Z_{[S]} \simeq \pi_1(K)^N$.

Combining Theorem~\ref{thm:7.11} with
Remark~\ref{rem:7.6}, we obtain a Localization Theorem
for bundles whose fibers are not necessarily simply connected.

\begin{Corollary}\label{cor:locnonsimply}{\rm (Localization Theorem for non-simply connected fibers)}
Suppose that the fibers of $\cK \rightarrow M$ are connected, but not
necessarily simply connected. Then
$\ol \rho$ arises by factorization from
a projective positive energy
representation of
$\Gamma_{c}(S,\tilde{\cK})$ that is trivial on
the image of  $Z_{[M]}$ in $Z_{[S]}$.
\end{Corollary}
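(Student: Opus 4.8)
The plan is to combine the Localization Theorem for 1-connected fibers (Theorem~\ref{thm:7.11}, in its stronger form for the simply connected cover) with the group-theoretic bookkeeping carried out in Remark~\ref{rem:7.6}. First I would pass from $\cK \to M$ to its fiberwise universal cover $\tilde\cK \to M$, lifting the action $\gamma_\cK$ to $\gamma_{\tilde\cK}$ with the same infinitesimal generator $\bv$. By Remark~\ref{rem:7.6}, the given projective positive energy representation $\ol\rho$ of $\Gamma_c(M,\cK)_0$ corresponds bijectively to a projective positive energy representation $\ol\rho^{\sim}$ of $\Gamma_c(M,\tilde\cK)_0$ (equivalently of its simply connected cover $\tilde\Gamma_c(M,\tilde\cK)_0$) that is trivial on the finite central subgroup $Z_{[M]} = \Gamma_c(M,\cZ) \cap \Gamma_c(M,\tilde\cK)_0$. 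Since the flow $\gamma_M$ is fixed-point free and the fibers of $\tilde\cK$ are $1$-connected semisimple, Theorem~\ref{thm:7.11} applies to $\ol\rho^{\sim}$: there is a one-dimensional, closed, embedded, flow-invariant submanifold $S \subseteq M$ and a projective positive energy representation $\ol\rho_S$ of $\Gamma_c(S,\tilde\cK)$ such that $\ol\rho^{\sim} = \ol\rho_S \circ \tilde r_S$, where $\tilde r_S$ is the restriction composed with the covering map.

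The remaining point is to identify which subgroup of $\Gamma_c(S,\tilde\cK)$ the representation $\ol\rho_S$ kills, and to see that it descends to $\Gamma_c(S,\cK)$. Here I would use the commuting square relating restriction and the covering projections: the embedding $S \hookrightarrow M$ induces $r_S^{\cZ} \colon \Gamma_c(M,\cZ) \to \Gamma_c(S,\cZ)$ and the bundle map $\tilde\cK \to \cK$ is natural in $M$, so restriction commutes with the projections $\Gamma_c(\,\cdot\,,\tilde\cK) \to \Gamma_c(\,\cdot\,,\cK)$. Consequently $\tilde r_S$ maps $Z_{[M]}$ into $Z_{[S]} = \Gamma_c(S,\cZ) \cap \Gamma_c(S,\tilde\cK)_0$, and since $\ol\rho^{\sim}$ is trivial on $Z_{[M]}$ while $\ol\rho^{\sim} = \ol\rho_S \circ \tilde r_S$, the representation $\ol\rho_S$ is trivial on the image of $Z_{[M]}$ in $Z_{[S]}$. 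Conversely, triviality of $\ol\rho_S$ on that image, together with the explicit form \eqref{eq:formulapione} of $\Gamma_c(S,\cZ)$, is exactly the condition (again via Remark~\ref{rem:7.6}, now applied over $S$) for $\ol\rho_S$ to arise by factorization from a genuine projective positive energy representation of $\Gamma_c(S,\cK)$; but for the statement as phrased it suffices to produce $\ol\rho_S$ on $\Gamma_c(S,\tilde\cK)$ trivial on the image of $Z_{[M]}$.

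To close the argument I would verify that the diagram
\[
\xymatrix{
\Gamma_c(M,\cK)_0 \ar[r]^{\ol\rho} \ar[d] & \PU(\cH)\\
\Gamma_c(S,\tilde\cK) \ar[ru] &
}
\]
commutes, where the left vertical arrow is the composite $\Gamma_c(M,\cK)_0 \xleftarrow{\ \sim\ } \Gamma_c(M,\tilde\cK)_0 / Z_{[M]} \xrightarrow{\ \tilde r_S\ } \Gamma_c(S,\tilde\cK)$ supplied by \eqref{eq:exactgroupbundle}; this is immediate once one unwinds the identifications, since both $\ol\rho$ and $\ol\rho_S$ are pulled back from $\ol\rho^{\sim}$ on the cover. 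The main obstacle, such as it is, is purely organizational: making sure that the naturality of all the maps $\Gamma_c(\,\cdot\,,\cZ) \hookrightarrow \Gamma_c(\,\cdot\,,\tilde\cK) \twoheadrightarrow \Gamma_c(\,\cdot\,,\cK)$ under the restriction $r_S$ is used consistently, so that the notion of "the image of $Z_{[M]}$ in $Z_{[S]}$" is well defined and compatible with the factorization — there is no new analysis here, only a careful transport of Theorem~\ref{thm:7.11} and Remark~\ref{rem:7.6} along the restriction homomorphism.
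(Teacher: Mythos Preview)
Your proposal is correct and follows exactly the approach the paper indicates: the corollary is stated without a separate proof, being presented simply as the combination of Theorem~\ref{thm:7.11} with the bookkeeping of Remark~\ref{rem:7.6}. Your write-up spells out precisely these steps---lift to $\tilde\cK$, apply the Localization Theorem for $1$-connected fibers, and track $Z_{[M]}$ through the restriction map---with no deviation from the intended argument.
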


\begin{Remark}{\rm(Abelian groups)} In the Localization Theorem~\ref{thm:7.11} we have
assumed that the fiber Lie group $K$ is semisimple. We now explain why this is crucial
and that there is no localization for abelian target groups, so that the
Localization Theorem does not extend to bundles with general compact fiber Lie algebras.
To this end, let $K = (\fk,+)$ be a finite-dimensional real vector space
and fix a positive definite scalar product $\kappa$ on $\fk$.
Further, let $M$ be a smooth manifold and consider the Lie group $G := \g
:= C^\infty_c(M,\fk)$,
which can be identified with the group of
compactly supported sections of the trivial bundle $\cK = M \times K$.
We also fix a smooth flow $\gamma_M \:  \R \to \Diff(M)$, its generator
$\bv_M \in \cV(M)$, and a $\gamma_M$-invariant positive Radon measure $\mu$ on $M$.
Then
\[ \kappa_\g(\xi,\eta) := \int_M \kappa(\xi,\eta)\, d\mu \]
defines a positive semidefinite scalar product on $\g$ which is invariant
under the $\R$-action on $\g$ given by $\alpha_t\xi := \xi \circ \gamma_M(t)$,
whose infinitesimal generator is $D\xi = \cL_{\bv_M} \xi$.
Then
\[  \omega(\xi,\eta) := \kappa_\g(D\xi,\eta)
= \int_M \kappa(\cL_{\bv_M}\xi,\eta)\, d\mu \]
is an $\R$-invariant skew-symmetric form on the abelian Lie algebra $\g$, hence a
$2$-cocycle. Combining Theorems~3.2 and 5.9 in \cite{NZ13}, it now follows that
all these cocycles
are obtained from projective positive energy representations of the
groups $G \rtimes_\alpha \R$. This shows that, for abelian fibers, no restrictions
on the measure $\mu$ exist.
\end{Remark}

\begin{Example} We consider the Lie algebra
$\g = C^\infty(\T^d,\fk)$, $\fk$ compact simple and
$\alpha_t(\xi) = \xi \circ \gamma_{t}$, where
\[ \gamma_t(z_1, \ldots, z_d) = (e^{2\pi it\theta_1}z_1, \ldots,
e^{2\pi it \theta_{d-1} }z_{d-1}, e^{2\pi it}z_d).\]
This means that $\bv_M$ is the invariant vector field on the Lie group
$\T^d \cong \R^d/\Z^d$ with exponential function
\[ \exp(x_1,\ldots, x_d) = (e^{2\pi i x_1}, \ldots, e^{2\pi i x_d}) \]
whose value in $\one$ is given by $x = (\theta_1, \ldots, \theta_{d-1},1)$.
This action has a closed orbit if and only if
the one-parameter group $A := \exp(\R x)$ is closed, which is equivalent
to $\theta_j \in \Q$ for all $j$.

If this condition is satisfied, then $A \cong \T$
and the $\alpha$-orbits are the $A$-cosets in the group $\T^d$.
This situation is also studied by Torresani in \cite{To87}.
If this condition is not satisfied, then the Localization Theorem
implies that there are no non-trivial projective positive energy
representations. \end{Example}

\begin{Remark} \label{rem:7.7}
The Localization Theorem also yields partial information
for flows with fixed points, and for manifolds with boundary.

(a)
If the vector field $\bv_M$ has zeros, then
\[ M^\times := \{ x  \in M \: \bv_M(x) \not=0\} \]
is an open flow-invariant submanifold of $M$ and the
Localization Theorem applies to the bundle $\fK\res_{M^\times}$.
In this context, this theorem does not provide a complete reduction
to the one-dimensional case for two reasons.
One is that the representations of $\Gamma_{c}(M^{\times},\fK)$
do not uniquely determine those of  $\Gamma_{c}(M,\fK)$
and the other is that the $1$-dimensional
submanifold $S$ of $M^\times$ need not be closed in $M$,
so that the extendability of the representation
of $\Gamma_c(M^\times,\fK)$ to the Lie algebra
$\Gamma_c(M,\fK)$ provides ``boundary conditions at infinity'' for
the corresponding representations of $\Gamma_c(S,\fK)$.
We will further explore these boundary conditions in future work.

(b) Similarly, if $\ol{M}$ is a manifold with boundary, then both its interior
$M = \ol{M} \setminus\partial M$ and its boundary $\partial M$ are invariant under the flow.
In Part~II of this series of papers, we apply the Localization Theorem to
$M$ and $\partial M$ separately, and combine the information to obtain classification
results for positive energy representations of the gauge group $\Gamma(\ol{M}, \cK)$.
The main challenge here is that although
every projective unitary representation of $\Gamma(\ol{M},\cK)$ automatically restricts to
$\Gamma_{c}(M,\cK)$, we heavily rely on the positive energy condition to obtain a representation of
$\Gamma_{c}(\partial M, \cK)$.
\end{Remark}

\begin{Example} \label{ex:R2circles} A typical example of a flow with fixed points
is the 2-sphere $M = \bS^2$, where
\begin{equation}\label{eq:circleactions2}
\gamma_{M,t}(x,y,z) = \begin{pmatrix} \cos(t) & \sin(t) & 0\\
						-\sin(t) & \cos(t) & 0\\
						0&0&1
		 	\end{pmatrix} \begin{pmatrix}x\\y\\z\end{pmatrix}
\end{equation}
is the rotation around the $z$-axis with unit angular velocity, and
$P = \bS^2 \times \fk$ is the trivial bundle.
The lift of the infinitesimal action is then given by
\begin{equation}\label{eq:liftsphere}
	\bv(x,y,z) = (y\partial_x - x\partial_y) + A(x,y,z),
\end{equation}
where the first part is the horizontal lift of the infinitesimal action corresponding to \eqref{eq:circleactions2},
and the second part is the vertical vector field corresponding to a smooth function $A \colon \bS^2 \rightarrow \fk$.

Then $M^\times = \bS^2 \setminus \{(0,0,\pm 1)\}$, and the integral curves on $\bS^2$
are precisely the circles of latitude.
Therefore $S$ is either compact and a finite union of circles, or it is
non-compact and an infinite union of circles. More precisely,
\[
S = \big\{ (x,y,z) \in \bS^2 \: z \in J\big\},
\]
where $J \subset (-1,1)$ is a discrete set that has at most two accumulation points $\pm 1$,
corresponding to the two fixed points of the circle action.
We return to this example in \S\ref{sec:fixedpoint}.
\end{Example}

\subsection{Localization at the Lie algebra level}\label{sec:kleinbewijs}
The remainder of this section is devoted to the proof of Theorem~\ref{thm:7.11}.
We start by proving the statement at the level of Lie algebras.
This proceeds through several lemmas. In the first one,
relying heavily on Theorem~\ref{Shylock} and Lemma~\ref{lem:6.26},
we
derive integrality results
for the flow-invariant measure $\mu$ of \S\ref{subsec:5.2redcurmeas}.

\begin{Lemma}\label{lem:integralitymeasure}
Suppose that the fibers of $\fK \rightarrow M$ are simple Lie algebras.
Let $U \simeq U_0 \times I \subseteq M$ be a good flow box around $x\in M$
in the sense of {\rm Definition~\ref{def:goodflowbox}}, so that
the restriction of the invariant measure $\mu$
to $U \simeq U_0 \times I$ takes the form $\mu|_{U} = \mu_0 \otimes dt$.
Then, for every measurable subset $E \subseteq U_0$,
\[
\mu_0(E) \in \frac{1}{2\pi} \N_0\,.
\]
\end{Lemma}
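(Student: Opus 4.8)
The plan is to exploit the continuous extension result (Theorem~\ref{Shylock}) together with the analytic vector estimate (Lemma~\ref{lem:6.26}) to integrate a suitable local piece of the representation to a \emph{bona fide} projective unitary representation of a $1$-connected loop group, and then read off the integrality of $2\pi\mu_0(E)$ from the integrality of a Chern class. Concretely: fix the good flow box $U \simeq U_0 \times I$, and via Lemma~\ref{Uedele} and Corollary~\ref{cor:sietsiet} embed $C^\infty_c(I,\fk)$ into $H^k_\partial(U,\fK) \subseteq H^k_\partial(M,\fK)$ by $\xi \mapsto \chi_E\cdot\xi$, where $\chi_E$ is the indicator function of $E$. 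By Theorem~\ref{Shylock}(b), composing with $r$ gives a projective $*$-representation of $C^\infty_c(I,\fk)$ on $\cD(H^2)$, with cocycle $\omega(\xi,\eta) = \lra{D\xi,\eta}_\mu$; since $\mu|_U = \mu_0\otimes dt$ and the $\Sigma$-variable is frozen to the set $E$, this cocycle is exactly $\mu_0(E)$ times the basic cocycle $\int_I \kappa(\xi,\eta')\,dt$ on $C^\infty_c(I,\fk)$. In the normalization \eqref{eq:kappa-normal} the basic cocycle has central charge $1$, so the central charge of our representation is $\mu_0(E)$ (up to the conventional factor $2\pi$ coming from $\T \cong \R/2\pi\Z$, which is where the $\frac{1}{2\pi}$ in the statement originates).

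\textbf{Integration step.} Next I would integrate this Lie algebra representation to the group level. By Theorem~\ref{Shylock} the operators $r(\xi)$ are essentially skew-adjoint, and by Lemma~\ref{lem:6.26} every analytic vector for $H$ is analytic for all $r(\xi)$ with $\xi \in H^2_\partial$; since $H \geq 0$ comes from a strongly continuous one-parameter group, its analytic vectors are dense. Thus $\cD := \cD^\omega$ contains a dense set of analytic vectors for the projective $*$-representation of the Banach--Lie algebra completion (one works with $H^2_\partial(I,\fk)$, or a Banach--Lie algebra of Sobolev loops, so that Lemma~\ref{lem:6.26} applies uniformly on a neighborhood of $0$). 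Applying the Integrability Theorem for projective representations (Theorem~\ref{thm:7.4}) to the $1$-connected Banach--Lie group $G$ integrating this Sobolev loop algebra, we obtain a smooth projective unitary representation $\pi \: G \to \PU(\cH)$, and hence by Theorem~\ref{ThmProjRepLARep}(A) a central extension $\T \to G^\sharp \to G$ whose cocycle class is $[\mu_0(E)\cdot\omega_{\mathrm{basic}}]$.

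\textbf{Integrality via Chern class.} The last step is the integrality argument sketched in Step~5 of the introduction. The $1$-connected group $G$ has $\pi_2(G)$ controlled by the period homomorphism of the left-invariant $2$-form corresponding to the cocycle; equivalently, for any smooth $\sigma \: \bS^2 \to G$, the pullback $\sigma^*G^\sharp \to \bS^2$ is a principal $\T$-bundle, and the Chern number $\int_{\bS^2}\sigma^*(\text{curvature})$ must be an integer. Choosing $\sigma$ to detect a generator of $\pi_2$ of the underlying loop group — i.e.\ using that $\pi_2$ of the smooth loop group $\Omega K$ (or its Sobolev completion) is nontrivial and the basic cocycle evaluates to a nonzero integer on it — forces $\mu_0(E)\cdot(\text{basic period})$ to lie in $\Z$. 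Since the basic cocycle \eqref{eq:kappa-normal} is normalized precisely so that its minimal nonzero period is $1$ (after the $2\pi$ from $\T\cong\R/2\pi\Z$), we conclude $2\pi\mu_0(E) \in \N_0$, i.e.\ $\mu_0(E) \in \frac{1}{2\pi}\N_0$. The main obstacle here is the bookkeeping of the Sobolev completion: one must check that $C^\infty_c(I,\fk)$ sits densely in a Banach--Lie algebra of loops to which both Theorem~\ref{thm:7.4} and the Chern class argument apply, that the cocycle extends continuously (it does, being continuous for $q_1$), and that the relevant $2$-cycle in the loop group is realized by a smooth $\sigma \: \bS^2 \to G$ on which the period is the expected generator — this is where one invokes the known topology of (Sobolev) loop groups and the normalization conventions of \cite{PS86, Ne01}. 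Everything else is an assembly of the results already proved in \S\S\ref{sec:6}--\ref{SectionPPER}.
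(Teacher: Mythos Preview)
Your proposal is correct and follows essentially the same route as the paper: embed $C^\infty_c(I,\fk)$ via $\iota_E(\xi)=\chi_E\xi$ into $H^2_\partial(U,\fK)$ (Corollary~\ref{cor:sietsiet}), pull back $r$ from Theorem~\ref{Shylock} to obtain a projective $*$-representation of the Banach--Lie algebra $\fh=H^2_\partial(I,\fk)$ with cocycle $\mu_0(E)\int_I\kappa(\xi',\eta)\,dt$, use Lemma~\ref{lem:6.26} for dense analytic vectors, integrate via Theorem~\ref{thm:7.4} to a smooth projective representation of the $1$-connected group $H$ with $\Lie(H)=\fh$, and conclude integrality from the period homomorphism $\per_\omega:\pi_2(H)\to\R$ having image in $2\pi\Z$. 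The one point where the paper is more explicit than your sketch is the period computation itself: since $\fh$ consists of functions on an open interval (not loops), the paper first rescales $I$ to $(-\pi,\pi)$, then invokes \cite[Lemma~V.11]{Ne04} to identify the period group of the basic cocycle on $C^\infty_c((-\pi,\pi),\fk)$ with that on $C^\infty(\bS^1,\fk)$, and finally cites \cite[Thm.~II.5]{Ne01} for the latter---this is exactly the ``bookkeeping'' you flagged as the main obstacle.
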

\begin{proof}
We may assume without loss of generality that the fibers of $\fK$ over $U$
are compact, as $\mu_0(E)$ would otherwise be zero by Corollary~\ref{Cer:acorro}.
Let ${\chi_{E} \colon U_0 \rightarrow \{0,1\}}$ be the
indicator function of $E$.
Consider
the Lie algebra homomorphism
\[ \iota_{E} \colon \R C\oplus_\omega H^{2}_{\partial}(I,\fk) \rightarrow
\R C\oplus_\omega H^2_{\partial}(U,\fk), \qquad
z C \oplus \xi  \mapsto z C \oplus \chi_{E}\xi\]
whose continuity follows from Corollary~\ref{cor:sietsiet}.
If we pull back the representation $r$ of $\R C\oplus_{\omega} H^2_{\partial}(U,\fk)$
of Theorem~\ref{Shylock} along $\iota_E$, we obtain a projective
$*$-representation of the Banach--Lie algebra $\fh := H^2_\partial(I,\fk)$.
By Lemma~\ref{lem:6.26}, its space of analytic vectors is dense
in $\cH$.

Since $\fh$ consists of functions $I \to \fk$ and it contains
$C^\infty_c(I,\fk)$, the fact that $\fz(\fk) = \{0\}$ implies that the center
of $\fh$ is trivial.
As $\fh$ is a Banach--Lie algebra, it is in particular locally exponential,
so there exists a $1$-connected
Lie group $H$ with Lie algebra $\fh$ by \cite[Thm.~IV.3.8]{Ne06} (see \cite{GN} for a complete proof).

Now Theorem~\ref{thm:7.4} provides a smooth,
projective, unitary representation ${\pi \: H \to \PU(\cH)}$.
By Theorem~\ref{MeasureThm},
the corresponding Lie algebra cocycle is given by
\begin{align} \label{eq:omega-form}
\omega(\xi,\eta)
&= -\int_U \kappa(\chi_E \xi,\nabla_{\bv_M}(\chi_E\eta))\, d\mu
= -\int_{U_0 \times I} \kappa(\chi_E \xi,\chi_E \eta')\, d\mu_0\, dt  \notag \\
&= -\mu_0(E) \int_I \kappa(\xi,\eta')\, dt
= \mu_0(E) \int_I \kappa(\xi',\eta)\, dt\,.
\end{align}
Theorem~\ref{ThmProjRepLARep} now implies the existence of a
central Lie group extension $H^\sharp$ of $H$ by $\T\cong \R/2\pi \Z$
with Lie algebra $\fh^\sharp_\omega = \R C \oplus_\omega \fh$.


This in turn implies integrality conditions on the values of $\mu_0(E)$.
To see how these can be obtained, we associate to $\omega$ the corresponding
left invariant $2$-form $\Omega$ on $H$ with $\Omega_\one = \omega$. This form
defines a {\it period homomorphism} \index{period homomorphism \scheiding $\per_\omega$}
\[  \per_\omega \: \pi_2(H) \to \R, \quad
[\sigma] \mapsto \int_\sigma \Omega \]
(cf.\ \cite[Def.~5.8]{Ne02}), and \cite[Lemma~5.11]{Ne02} implies that
\[ \im(\per_\omega) \subeq  2\pi \Z.\]
Since the rescaling map
\[\gamma\colon C^\infty_c(I,\fk) \rightarrow C^{\infty}_{c}((-\pi,\pi),\fk),\quad
\gamma(\xi)(\theta) = \xi({\textstyle\frac{T}{2\pi}}\theta)
\]
from the interval $I = (-T/2, T/2)$ to the interval $(-\pi,\pi)$
is an isomorphism of Lie algebras,
the cocycle $\int_I \kappa(\xi',\eta)\, dt$ on $C^\infty_c(I,\fk)$
has the same period group as the cocycle $\int_{-\pi}^{\pi} \kappa(\xi',\eta)\, d\theta$ on
$C^\infty_c((-\pi,\pi),\fk)$.
In \cite[Lemma~V.11]{Ne04}, it was shown that this, in turn,
has the same period group as the cocycle
$\int_{-\pi}^{\pi}\kappa(\xi',\eta)\, d\theta$
on $C^\infty_c(\bS^1,\fk)$.
By \cite[Thm.~II.5]{Ne01}, the period group of
$\frac{1}{2\pi}\int_{-\pi}^{\pi}\kappa(\xi',\eta)d\theta$
is $2\pi \Z$, provided that $\kappa$ is normalized as in \eqref{eq:kappa-normal}.
Combining all this, we conclude that $\mu_0(E)  \in {\textstyle \frac{1}{2\pi}}\Z$.
%
\end{proof}

As the measure $2\pi \mu_0$ takes integral values, the following
proposition shows that it is automatically discrete.

\begin{Proposition} \label{prop:discretemeasure}
Let $\zeta$ be a locally finite, regular Borel measure on a locally compact
space $\Sigma$.
If $\zeta$ takes values in $\N_0 \cup \{\infty\}$, then
there exists a locally finite
subset $\Lambda \subseteq \Sigma$ and natural numbers $c_x = \zeta(\{x\})$
such that $\zeta = \sum_{x\in \Lambda} c_{x} \delta_{x}$.
\end{Proposition}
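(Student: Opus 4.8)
The plan is to show first that the support of $\zeta$ is a discrete closed subset of $\Sigma$, and then to check that $\zeta$ is exactly the sum of point masses supported there. The key observation is that a measure with values in $\N_0 \cup \{\infty\}$ cannot distribute positive mass ``continuously'': if $x \in \supp(\zeta)$, then every open neighborhood of $x$ has measure at least $1$, so by inner and outer regularity combined with the local finiteness we can find a small enough compact neighborhood $\ol{W}$ of $x$ with $\zeta(\ol{W}) = n < \infty$ for some $n \in \N$, and then among all open $U$ with $x \in U \subseteq \ol{W}$ the infimum of $\zeta(U)$ is a positive integer $c_x \geq 1$, attained (by outer regularity of $\zeta$ restricted to the finite-measure set $\ol W$) on some open $U_x$. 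I would then argue that $\zeta(\{x\}) = c_x$: indeed $\{x\} = \bigcap_k U_x^{(k)}$ for a decreasing sequence of open neighborhoods with $\zeta(U_x^{(k)}) = c_x$ (each such neighborhood can be shrunk without lowering the measure, by minimality), and continuity of measure from above (valid since $\zeta(\ol W) < \infty$) gives $\zeta(\{x\}) = \lim_k \zeta(U_x^{(k)}) = c_x$.

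Next I would show that $\Lambda := \supp(\zeta)$ is locally finite. Fix a point $y \in \Sigma$ and a relatively compact open neighborhood $V$ of $y$; local finiteness of $\zeta$ gives $\zeta(\ol V) =: N < \infty$. If $\Lambda \cap V$ were infinite, pick distinct points $x_1, x_2, \dots \in \Lambda \cap V$; shrinking the neighborhoods $U_{x_i}$ from the previous paragraph we may assume they are pairwise disjoint and contained in $\ol V$ (using that $\Sigma$ is locally compact Hausdorff, hence the $x_i$ can be separated, and minimality lets us intersect with the separating opens without changing $\zeta(U_{x_i}) = c_{x_i} \geq 1$). Then $\zeta(\ol V) \geq \sum_{i} c_{x_i} = \infty$, a contradiction. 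Hence $\Lambda \cap V$ is finite, so $\Lambda$ is locally finite and in particular closed and discrete.

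Finally, set $\zeta' := \sum_{x \in \Lambda} c_x \delta_x$, which is a well-defined locally finite Borel measure since $\Lambda$ is locally finite. The difference $\eta := \zeta - \zeta'$ is a (signed, but in fact nonnegative once we verify $c_x \le \zeta(\{x\})$, which holds because $\{x\} \subseteq U_x^{(k)}$ forces $\zeta(\{x\}) \ge c_x$ is the wrong direction — here I use instead the equality $\zeta(\{x\}) = c_x$ established above) measure with $\eta(\{x\}) = 0$ for all $x \in \Lambda$. For any relatively compact open $V$, $\eta(\ol V) = \zeta(\ol V) - \sum_{x \in \Lambda \cap \ol V} c_x < \infty$ and $\eta \geq 0$ on $V \setminus \Lambda$, while $\eta$ assigns no mass to points of $\Lambda$; since $\supp(\eta) \subseteq \supp(\zeta) = \Lambda$ (any point outside $\Lambda$ has a neighborhood of $\zeta$-measure $0$, hence of $\eta$-measure $0$), and $\eta$ kills every point of $\Lambda$, we get $\eta(\Sigma) = 0$ by the same dichotomy as in step one: a nonzero nonnegative measure must have nonempty support. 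Therefore $\zeta = \zeta' = \sum_{x \in \Lambda} c_x \delta_x$ with $c_x = \zeta(\{x\}) \in \N$.

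The main obstacle I anticipate is the bookkeeping in the minimality argument: making rigorous that the infimum of $\zeta(U)$ over open neighborhoods $U \subseteq \ol W$ of $x$ is both a positive \emph{integer} and \emph{attained}, and that one may simultaneously disjointify the attaining neighborhoods of distinct support points without increasing their measures. This requires careful use of outer regularity on a set of finite measure (so that continuity from above applies) together with the Hausdorff/locally compact separation axioms; the integer-valuedness then follows because $\zeta(U)$ is an integer for every such $U$ and the infimum of a set of integers bounded below is attained.
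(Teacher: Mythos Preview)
Your approach is sound and shares the same key idea as the paper's proof: for any $x$ in the support, take an open neighborhood $U_x$ of minimal (integer) $\zeta$-measure $c_x$, and deduce that all the mass near $x$ sits at $x$. There is, however, one genuine gap in your execution: your claim that $\{x\} = \bigcap_k U_x^{(k)}$ for a decreasing \emph{sequence} of open neighborhoods presupposes that $x$ has a countable neighborhood base, which is not assumed for a general locally compact space. The fix is immediate and you already have the right tool at hand: outer regularity of $\zeta$ gives $\zeta(\{x\}) = \inf\{\zeta(U) : x \in U \text{ open}\} = c_x$ directly, with no sequence needed. Equivalently, since for every open $V$ with $x \in V \subeq U_x$ minimality forces $\zeta(V) = c_x$ and hence $\zeta(U_x \setminus V) = 0$, inner regularity on the open set $U_x \setminus \{x\}$ yields $\zeta(U_x \setminus \{x\}) = 0$; this is the route the paper takes.

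The paper organizes the argument somewhat differently: it first reduces to compact $\Sigma$ by regularity, then sets $C := \bigcap_{F \in \cF} F$ where $\cF$ is the family of compact subsets of full measure, shows that $C$ itself has full measure, and runs the minimal-neighborhood argument on points of $C$ to conclude that $C$ is discrete, hence finite. Your framework---work with $\supp(\zeta)$ directly, show every support point is an atom of mass $\geq 1$, then count atoms in a relatively compact set---is more pedestrian but equally valid. Incidentally, your Step~3 can be simplified: once you know $\zeta(\{x_i\}) = c_{x_i} \geq 1$, the singletons $\{x_i\}$ are already pairwise disjoint, so $\zeta(\ol V) \geq \sum_i c_{x_i}$ follows immediately and the disjointification of open neighborhoods is unnecessary.
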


\begin{proof}
By regularity, $\zeta$ is determined by its values on compact subsets,
so it suffices to assume that $\Sigma$ is compact and to show that, in this case,
$\zeta$ is a finite sum of Dirac measures.

Let $\cF$ be the family of compact subsets of full measure.
For $F_1, F_2\in \cF$, we have $\mu(F_1\backslash F_2) = \mu(F_2\backslash F_1) = 0$,
so that $F_1 \cap F_2$ also has full measure. This shows that $\cF$ is closed under finite intersections.
We show that ${C := \bigcap_{F\in \cF} F}$ has full measure.
Let $V$ be an open set containing $C$.
Since the open complements $F^{c}$ cover the
compact set $V^{c}$, there exist finitely many $F_i$ such that
${F^{c}_{1}\cup \ldots \cup F^{c}_{k} \supseteq V^{c}}$, and hence
$F_{1}\cap \ldots \cap F_{k} \subseteq V$.
Since $\cF$ is closed under finite intersections, every open set $V$
containing $C$ has full measure. By regularity, we conclude that $C$ has full measure itself.

Pick $x \in C$. For any open neighborhood $U$ of $x$ in $C$, the minimality
of
$C$ implies that $\zeta(C\setminus U) < \zeta(C)$, so that $\zeta(U) > 0$.
Let $U$ be an open neighborhood of $x$ in $C$ for which $\zeta(U)$ is minimal;
here we use that the values of $\zeta$ are contained in $\N_0$.
For any smaller open neighborhood $V\subeq U$ of $x$ in $C$
we then have ${\zeta(V) = \zeta(U)}$ and therefore $\zeta(U \setminus V) = 0$.
This implies that $\zeta(K) = 0$ for any compact subset
$K \subeq U \setminus \{x\}$ and hence that $\zeta(U\setminus \{x\}) = 0$
by the regularity of $\zeta$. Now the minimality of $C$ entails
that $C = \{x\} \cup (C \setminus U)$. Since $x\in C$ was arbitrary,
it follows that $C$ is discrete, hence finite:
${C = \{x_1, \ldots, x_k\}}$. Accordingly, the restriction of
$\zeta$ to a compact subset is the finite sum 
${\zeta = \sum_{j = 1}^k \zeta(\{x_j\}) \delta_{x_j}}$ of
Dirac measures.
\end{proof}

Recall from Theorem~\ref{reductienaarsimpel} that the bundle $\fK \rightarrow M$ of semisimple Lie algebras gives rise to a bundle $\widehat{\fK} \rightarrow \widehat{M}$ of simple Lie algebras with $\Gamma_{c}(M,\fK) \simeq \Gamma_{c}(\widehat{M},\widehat{\fK})$. By Remark~\ref{remark:overdederivatie}, it inherits the 1-parameter group of automorphisms.

\begin{Lemma}\label{lemma:streepjes}
If the flow on $M$ has no fixed points, then the support $\widehat{S}$ of $\mu$ is a one-dimensional, flow-invariant, closed embedded submanifold of $\widehat{M}_{\mathrm{cpt}}$,
the part of~$\widehat{M}$ over which the fibers of $\widehat{\fK}$ are compact.
\end{Lemma}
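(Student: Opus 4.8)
The plan is to localize the problem to good flow boxes and then patch together the conclusions of Lemma~\ref{lem:integralitymeasure} and Proposition~\ref{prop:discretemeasure}. First I would recall that, by Theorem~\ref{MeasureThm}, the measure $\mu$ lives on $\widehat{M}$, is invariant under the flow $\gamma_{\hat M}$, and is supported on the union $\widehat{M}_{\mathrm{cpt}}$ of those connected components where the fibers of $\widehat{\fK}$ are compact simple. So $\widehat S := \supp(\mu) \subseteq \widehat{M}_{\mathrm{cpt}}$, and flow-invariance of $\widehat S$ is immediate from flow-invariance of $\mu$. The remaining content is that $\widehat S$ is a closed embedded one-dimensional submanifold.

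Since the flow on $M$ — hence (Remark~\ref{remark:overdederivatie}) the flow on $\widehat M$ — has no fixed points, every point $x \in \widehat M$ lies in a good flow box $U \simeq U_0 \times I$ in the sense of Definition~\ref{def:goodflowbox}, on which $\mu|_U = \mu_0 \otimes dt$ with $\mu_0$ a locally finite regular Borel measure on $U_0$ (this decomposition of a flow-invariant measure on a flow box is standard and was already used in the outline). By Lemma~\ref{lem:integralitymeasure}, $2\pi\mu_0(E) \in \N_0$ for every measurable $E \subseteq U_0$; that is, the measure $\zeta := 2\pi\mu_0$ takes values in $\N_0 \cup \{\infty\}$. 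Proposition~\ref{prop:discretemeasure} then applies on $\Sigma = U_0$ (a locally compact space), yielding a locally finite subset $\Lambda_U \subseteq U_0$ and positive integers $c_x$ with $2\pi\mu_0 = \sum_{x \in \Lambda_U} c_x \delta_x$. Consequently $\supp(\mu|_U) = \Lambda_U \times I$, which is a closed embedded one-dimensional submanifold of $U$: it is the disjoint union of finitely many (locally) integral curves of the flow inside $U$.

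Next I would patch these local pieces together. The family of good flow boxes covers $\widehat M$, and on overlaps the local descriptions agree because $\supp(\mu)$ is intrinsically defined. Within each box, $\widehat S \cap U = \Lambda_U \times I$ is a $1$-manifold transverse to the $U_0$-slices and tangent to $\bv_{\hat M}$; since the transition maps between good flow boxes send flow lines to flow lines, the charts $\Lambda_U \times I \to U$ glue to give $\widehat S$ the structure of a smooth embedded $1$-dimensional submanifold. Closedness of $\widehat S$ in $\widehat M$ is automatic: the support of a (regular, locally finite Borel) measure is by definition a closed set. Flow-invariance has already been noted. This completes the argument, and one may remark that $\widehat S$ need not be closed in the original $M$ once one passes back through a noncompact $M^\times$, but that subtlety does not arise here since we work on $\widehat M$ directly.

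The main obstacle is the patching step: one has to be careful that the local submanifold structures $\Lambda_U \times I$ are genuinely compatible on overlaps, i.e. that a point in $\widehat S$ has a consistent $1$-dimensional chart independent of the flow box chosen. This follows because the $\bv_{\hat M}$-flow lines through points of $\widehat S$ are intrinsic, and two good flow boxes around the same point induce the same germ of flow line; but making this precise requires invoking that good flow boxes are $\bv$-equivariant trivializations, so the coordinate $t$ is determined up to an additive constant and the $U_0$-direction up to a flow-preserving diffeomorphism, under which $\Lambda_U$ transforms equivariantly. Once this bookkeeping is in place, the rest is a routine application of the two cited results.
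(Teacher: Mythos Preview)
Your proof is correct and follows essentially the same approach as the paper: localize to good flow boxes, apply Lemma~\ref{lem:integralitymeasure} and Proposition~\ref{prop:discretemeasure} to conclude that $\mu_0$ is supported on a locally finite (hence, after shrinking, finite) set, and deduce that $\widehat S \cap U$ is a finite union of integral curves. Your treatment is in fact more careful than the paper's about the patching step, and your citation of Theorem~\ref{MeasureThm} for the containment $\widehat S \subseteq \widehat M_{\mathrm{cpt}}$ is arguably more direct than the paper's appeal to Theorem~\ref{red2cpt}.
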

\begin{proof}
Since the flow on $M$ has no fixed points,
the vector field $\bv_M$ on $M$ has no zeros.
As the same holds for its lift to $\widehat{M}$, every point
$x \in \widehat{M}$ is contained in a good flow box $U \cong U_0 \times I$
in the sense of Definition~\ref{def:goodflowbox}.
In any such flow box, the measure
$\mu$ is of the form $\mu_0 \otimes dt$, where
$\mu_0$ is a regular measure on~$U_0$.
From Lemma~\ref{lem:integralitymeasure} and Proposition~\ref{prop:discretemeasure},
we conclude that
$\mu_0$ has finite support in $U_0$, so that
$\widehat{S} \cap U \cong F \times I$, where $F \subeq U_0$ is a finite subset.
This implies that $\widehat{S}$ is a one-dimensional, closed embedded submanifold
invariant under the flow on~$\widehat{M}$.
The final statement follows from Theorem~\ref{red2cpt}.
\end{proof}

Combined with Corollary~\ref{zondagskind}, this shows that
Theorem~\ref{thm:7.11} holds at the level of Lie algebras.

\begin{Lemma}\label{lem:kleinbewijs}
There exists a
1-dimensional, closed, embedded, flow-invariant submanifold $S\subseteq M$
such that
the projective positive energy representation $\dd\rho$ of
$\Gamma_{c}(M,\fK)$
factors through the restriction map
$r_S^\fk \: \Gamma_c(M,\fK) \to \Gamma_c(S,\fK)$.
\end{Lemma}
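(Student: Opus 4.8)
The plan is to pull the representation back to the finite cover of Theorem~\ref{reductienaarsimpel}, where the structure algebra is simple, to identify the measure attached to its cocycle via Theorem~\ref{MeasureThm}, and then to take $S$ to be the image under the covering map of the support of that measure.

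First I would use Theorem~\ref{reductienaarsimpel} to fix an isomorphism $\fg = \Gamma_{c}(M,\fK) \cong \Gamma_{c}(\widehat{M},\hfK)$, with $\hfK \to \widehat{M}$ a Lie algebra bundle with \emph{simple} fibers over a finite cover $\pi \colon \widehat{M}\to M$. By Remark~\ref{remark:overdederivatie} the induced flow on $\widehat{M}$ is again fixed-point free, so its generator $\pi_{*}\hbv$ has no zeros, and Theorem~\ref{MeasureThm} applies: the cocycle $\omega$ of $\dd\rho$ on $\fg\rtimes_{D}\R$ is cohomologous to $\omega_{\mu,\nabla}$ for a flow-invariant, positive, regular, locally finite Borel measure $\mu$ on $\widehat{M}$ whose support lies in the union $\widehat{M}_{\mathrm{cpt}}$ of the components over which the fibers of $\hfK$ are compact. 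I would fix, once and for all, the linear lift of $\dd\rho$ corresponding to the representative $\omega_{\mu,\nabla}$; this changes nothing at the projective level and is precisely the twist occurring in Corollary~\ref{zondagskind} and Theorem~\ref{red2cpt}.

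Next I would invoke Lemma~\ref{lemma:streepjes}: the set $\widehat{S} := \supp(\mu)$ is a one-dimensional, flow-invariant, closed embedded submanifold of $\widehat{M}_{\mathrm{cpt}}$, which in every good flow box $U \simeq U_{0}\times I$ (Definition~\ref{def:goodflowbox}) has the normal form $\widehat{S}\cap U \simeq F\times I$ with $F\subseteq U_{0}$ finite. The kernel of the restriction homomorphism $r_{\widehat{S}}\colon \Gamma_{c}(\widehat{M},\hfK)\to \Gamma_{c}(\widehat{S},\hfK)$ is contained in the ideal $I_{\mu}=\{\xi : \mu(\{\xi\neq 0\})=0\}$, since a section vanishing on $\supp(\mu)$ vanishes $\mu$-almost everywhere. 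Corollary~\ref{zondagskind}, applied to the compact-fiber components, together with Theorem~\ref{red2cpt}, applied to the remaining ones, shows that $\dd\rho$ vanishes on $I_{\mu}$, hence in particular on $\ker r_{\widehat{S}}$, so that $\dd\rho$ factors through $r_{\widehat{S}}$.

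Finally I would descend to $M$ by setting $S := \pi(\widehat{S})$. As $\pi$ is a finite, hence proper, covering, $S$ is closed in $M$; and over a good flow box $U$ of $M$ one has $\pi^{-1}(U)=\bigsqcup_{\ell}U^{(\ell)}$ with each $U^{(\ell)}$ a flow box of $\widehat{M}$, so that $S\cap U$ is the union of the images of the sets $\widehat{S}\cap U^{(\ell)} \simeq F_{\ell}\times I$, namely $\big(\bigcup_{\ell}F_{\ell}\big)\times I$ with $\bigcup_{\ell}F_{\ell}\subseteq U_{0}$ finite; thus $S$ is a one-dimensional, closed, embedded, flow-invariant submanifold of $M$. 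By naturality of the construction of Theorem~\ref{reductienaarsimpel} under restriction to closed subsets, the isomorphism $\fg\cong\Gamma_{c}(\widehat{M},\hfK)$ carries $r_{S}^{\fk}$ to the restriction $\Gamma_{c}(\widehat{M},\hfK)\to\Gamma_{c}(\pi^{-1}(S),\hfK)$, whose kernel is contained in $\ker r_{\widehat{S}}$ because $\widehat{S}\subseteq\pi^{-1}(S)$; hence $\dd\rho$ factors through $r_{S}^{\fk}$, as required. The step I expect to be the main obstacle is this last descent: checking that $\pi(\widehat{S})$ is a genuinely \emph{closed embedded} --- not merely immersed --- submanifold of $M$, and that restriction to it is compatible with the identification over $\widehat{M}$; both points rest on the flow-box normal form for $\widehat{S}$ supplied by Lemma~\ref{lemma:streepjes}.
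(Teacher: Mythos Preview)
Your proof is correct and follows essentially the same route as the paper: pass to $\widehat{M}$ via Theorem~\ref{reductienaarsimpel}, use Lemma~\ref{lemma:streepjes} to identify $\widehat{S}=\supp(\mu)$, combine Corollary~\ref{zondagskind} and Theorem~\ref{red2cpt} to kill the vanishing ideal $J_{\widehat{S}}$, and then push $\widehat{S}$ down to $S=\pi(\widehat{S})\subseteq M$. Your flow-box verification that $S$ is closed and embedded, and your kernel-inclusion argument for the descent, spell out steps the paper leaves implicit; the only detail you might add is that $r_S^\fk$ is a surjective quotient map of locally convex spaces, so the induced map on $\Gamma_c(S,\fK)$ is automatically continuous.
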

\begin{proof}
Combining Lemma~\ref{lemma:streepjes} with Corollary~\ref{zondagskind}
and Theorem~\ref{red2cpt},
we conclude that the projective Lie algebra representation $\dd\rho$
of $\Gamma_{c}(\widehat{M},\widehat{\fK})$
vanishes on the ideal
\[ J_{\widehat{S}} := \{ \xi \in \Gamma_c(\widehat{M},\widehat{\fK}) \: \xi\res_{\widehat{S}} = 0\}.\]
It follows that the projective positive energy representation of $\Gamma_{c}(M,\fK)$
vanishes on
$J_S := \{ \xi \in \Gamma_c(M,\fK) \: \xi\res_S =0\}$, where
$S \subseteq M$ is the image of $\widehat{S}$ under the finite, $\R$-equivariant covering map $\widehat{M} \rightarrow M$.
Since $\widehat{S} \subseteq \widehat{M}$ is a
1-dimensional, closed, embedded, flow-invariant submanifold,
the same holds for $S\subseteq M$.
This implies that the projective representation
factors through the restriction map
$r_S^\fk \: \Gamma_c(M,\fK) \to \Gamma_c(S,\fK)$,
which is a quotient map of locally convex spaces.
\end{proof}

\subsection{Twisted loop groups}
\label{subsec:8.1}

Let $S$ be a one-dimensional, embedded, flow-invariant submanifold of $M$.
Then $S = \bigsqcup_{j\in J}S_j$ is the disjoint union of its connected components
$S_j$, which are either diffeomorphic
to $\R$ (for a non-periodic orbit), or to $\bS^1 \cong \R/\Z$
(for a periodic orbit).

Fix $j \in J$ and let $K = K_j$ denote the fiber of $\cK\res_{S_j}$.
If $S_j \cong \R$, then the bundle $\cK\res_{S_j}$ is trivial,
i.e., equivalent to $S_j \times K \cong \R \times K$. This trivialization
can be achieved $\R$-equivariantly, using an integral curve in the corresponding
frame bundle $\Aut(\cK) \to \R$, a principal bundle with fiber $\Aut(K)$.
The action of $\R$ on $\cK$ 
is then simply given by
\begin{equation}
  \label{eq:action-on-line}
\gamma_t(x,k) = (x+t, k) \quad \text{for}\quad t,x \in \R \quad\text{and}\quad k \in K.
\end{equation}

If $S_j \cong \bS^1$ is a periodic orbit,
then the universal covering map
$q_j \: \tilde S_j \to S_j$ can be identified with the
quotient map $\R \to \R/\Z$.
If the period of the orbit $S_j$ is $T$,
then we scale the $\R$-action on $\bS^1 = \R/\Z$ by $1/T$, yielding
\[\gamma_{\bS^1,t}([x]) = [x + t/T].\]
We have seen above that the pullback $q_j^*(\cK\res_{S_j})$
is equivariantly equivalent to the trivial bundle
$\R \times K$ on which $\R$ acts by translation in the first factor.
The action of the fundamental group $\pi_1(S_j) \cong \Z$
on $\R \times K$ is given by
bundle automorphisms that commute with the $\R$-action; there exists an
automorphism $\Phi \in \Aut(K)$ such that
\[ n.(x,k) = (x+ n, \Phi^{-n}(k)) \quad\mbox{ for all }\quad  n \in \Z.\]
Accordingly, we have an equivariant isomorphism
\[ \cK\res_{S_j} \cong (\R \times K)/\sim,\qquad\text{where}\qquad
(x,k) \sim (x+n, \Phi^{-n}(k))
\]  for all $
x\in\R, k \in K$ and $n \in \Z$.
We write the equivalence classes as $[x,k]$,
and we denote the $K$-bundle over $\bS^1 = \R/\Z$ obtained in this way by\index{bundle!80@of groups over $\bS^1$ \scheiding $\cK_{\Phi}$}
\[ \cK_\Phi := (\R \times K)/\sim, \quad\text{with}\quad
\cK_{\Phi} \to \R/\Z \quad\text{given by} \quad
 [x,k] \mapsto [x] = x + \Z\,.  \]
The $\R$-action is given in these terms by
\[ \gamma_t([x,k]) = [x+t/T, k].\]
Note that
\[ \gamma_T([x,k]) = [x+1, k] = [x, \Phi(k)],\]
so that $\Phi$ can be interpreted as a {\it holonomy}.  \index{holonomy \vulop}

Recall that, for two automorphisms
$\Phi,\Psi \in \Aut(K)$, the corresponding $K$-bundles
$\cK_\Phi$ and $\cK_\Psi$ are equivalent if and only if the
classes $[\Phi]$ and $[\Psi]$ are conjugate in the component group
$\pi_0(\Aut(K))$, and they are $\R$-equivariantly isomorphic if and only if
$\Phi$ and $\Psi$ are conjugate in $\Aut(K)$.

Indeed, any isomorphism
$\Gamma_{\Psi,\Phi} \: \cK_\Phi \to \cK_\Psi$
inducing the identity on the base is of the form
\[ \Gamma_{\Psi,\Phi}([x,k]) = [x, \zeta_x(k)], \]
where $\zeta \: \R \to \Aut(K)$ is smooth and satisfies
\begin{equation}
  \label{eq:zeta}
 \zeta_{x+1} = \Psi^{-1} \circ \zeta_{x} \circ \Phi \quad \mbox{ for all } \quad
x \in \R.
\end{equation}
Such a smooth curve $\zeta$ exists if and only if
$[\Phi]$ and $[\Psi]$ are conjugate in the finite group $\pi_0(\Aut(K))$.
In particular, the set of equivalence classes of group bundles with fiber $K$
over $\bS^1$
corresponds to
the set of conjugacy classes in the group  $\pi_0(\Aut(K))$,
which is finite for a semisimple compact Lie group~$K$.
This follows from the compactness of the group
$\Aut(K) \subeq \Aut(\tilde K) \cong \Aut(\fk)$
as a subgroup of $\GL(\fk)$ preserving the scalar product~$\kappa$.

The bundle isomorphism $\Gamma_{\Psi,\Phi}$ is $\R$-equivariant
if and only if the function $\zeta$ is constant. Accordingly, the two bundles
$\cK_\Phi$ and $\cK_\Psi$ are $\R$-equivariantly isomorphic if and only if
$\Phi$ and $\Psi$ are conjugate in $\Aut(K)$, so that
equivariant isomorphism classes of principal
$K$-bundles over $\bS^1$ correspond to 
conjugacy classes in the group $\Aut(K)$
(cf.\ \cite[\S4.4]{PS86}, \cite[\S9]{FHT11}).

The group $\Gamma_c(\R/\Z, \cK_\Phi)$
is isomorphic to the twisted loop group\index{loop group!10@twisted \scheiding $\cL_\Phi(K)$}
\begin{equation}
  \label{eq:twistloop1}
\cL_\Phi(K) := \{ \xi \in C^\infty(\R,K)\:
(\forall x \in \R)\ \xi(x + 1) = \Phi^{-1}(\xi(x))\}
\end{equation}
with Lie algebra\index{loop algebra!10@twisted \scheiding $\cL_\phi(\fk)$}
\begin{equation}
  \label{eq:twistloop1b}
\cL_{\phi}(\fk) := \{ \xi \in C^\infty(\R,\fk)\:
(\forall x \in \R)\ \xi(x + 1) = \phi^{-1}(\xi(x))\},
\end{equation}
where $\phi \in  \Aut(\fk)$ is the automorphism of $\fk$ induced by~$\Phi$.
The $\R$-action on $\cL_\phi(\fk)$ is given by
\[ \alpha_t(\xi)(x) =\xi(x+t/T) \quad \mbox{ and } \quad
D\xi = \frac{1}{T}\xi'.\]

In some situations it is convenient to use a slightly
different normalization for which $\Phi$ is of finite order,
but then the $\R$-action becomes more complicated.
If $K$ is compact, then $\Aut(K)$ is compact as well.
In this case, there exists a finite
subgroup $F \subeq \Aut(K)$ with $\Aut(K) = F \Aut(K)_0$
(\cite[Thm.~6.36]{HM98}) and we may choose $\Phi_0 \in [\Phi] \in \pi_0(\Aut(K))$
in such a way that $\Phi_0 \in F$.

If $\Gamma_{\Phi,\Phi_0} \: \cK_{\Phi_0} \to \cK_\Phi$ is a group bundle isomorphism
specified by the smooth curve $\zeta \: \R \to \Aut(K)$ satisfying
$\zeta_{x+1} = \Phi^{-1} \zeta_x \Phi_{0}$ for $x \in \R$ (see \eqref{eq:zeta}), then the
$\R$-action on $\Gamma_c(\R/\Z, \cK_{\Phi_0}) \cong \cL_{\Phi_0}(K)$ takes the form
\[ \tilde\alpha_t(\xi)(x) = \zeta^{-1}_x \zeta_{x+t/T} \,\xi(x+t/T)\quad \mbox{ for } \quad
\xi \in \cL_{\Phi_0}(K). \]
On the Lie algebra level we obtain the corresponding derivation given by \\ 
\[ \tilde D \xi = \frac{1}{T}\left(\xi' + \delta^l(\zeta)\xi\right),\]
where 
\[ \delta^l(\zeta) \: \R \to \Lie(\Aut(K)) = \der(\fk), \quad \delta^l(\zeta)_x
= \derat0 \zeta_x^{-1} \zeta_{t+x} \]
is the left logarithmic derivative of $\zeta$.
Identifying $\fk$ via the adjoint representation  with $\der(\fk)$,
we obtain a smooth curve $A \: \R \to \fk$ with
$\ad \circ A = \delta^l(\zeta)$ for which
\begin{equation}
  \label{eq:tilde-D}
\tilde D \xi = \frac{1}{T}\left( \xi' + [A,\xi] \right).
\end{equation}
Note that $A$ 
belongs to the twisted loop algebra $\cL_{\phi_0}(\fk)$; since
$\zeta_{x+1} = \Phi^{-1}\zeta_{x}\Phi_{0}$,
we have $\zeta_{x+1}^{-1}\zeta_{x+1+t} = \Phi_0^{-1}(\zeta_{x}^{-1}\zeta_{x+t})\Phi_0$, and hence
$\delta^{l}(\zeta)_{x+1} = \phi_0^{-1}\delta^l(\zeta)_{x}\phi_0$.
It follows that the curve $A$ satisfies $A_{x+1} = \phi^{-1}_{0}A_x$,
so that $A\in \cL_{\phi_0}(\fk)$.

\begin{Remark} \label{rem:7.13}
We denote by $\cL^{\sharp}_\Phi(K)_{c}$ the central $\T$-extension
of $\cL_\Phi(K)$ corresponding to the
Lie algebra cocycle
\[ \omega(\xi,\eta) = \frac{c}{2\pi} \int_0^1 \kappa(\xi',\eta)\, dt, \qquad
c \in \Z \]
with period group $2\pi c \,\Z$ (see the discussion in \S\ref{sec:kleinbewijs}).
If the central charge $c$ is~$1$, we omit the subscript and simply write $\cL^{\sharp}_\Phi(K)$.
Since the Lie algebra $\cL_\phi(\fk)$ of $\cL_\Phi(K)$
is perfect, \cite[Thm.~VI.3]{MN03} implies that
the $\R$-action $\alpha$ on $\cL_\Phi(K)$
lifts to a smooth $\R$-action $\alpha^\sharp$ on $\cL^{\sharp}_\Phi(K)_{c}$, and
we obtain a double extension of the form
\[ \hat\cL_\Phi(K)_{c} \cong \cL^{\sharp}_\Phi(K)_{c}\rtimes_{\alpha^\sharp} \R.\]
The $c$-fold cover $\T \twoheadrightarrow \T\colon z \mapsto z^c$ extends to a $c$-fold cover
$\cL^{\sharp}_{\Phi}(K) \twoheadrightarrow \cL^{\sharp}_{\Phi}(K)_{c}$, for which the following diagram commutes:
\begin{center}
$ $
\xymatrix{
\T \ar[d]_{z^c} \ar[r] & \cL_{\Phi}^{\sharp}(K)\ar[r]\ar[d] & \cL_{\Phi}(K)\ar[d]^{\id} \\
\T \ar[r] & \cL_{\Phi}^{\sharp}(K)_{c}\ar[r] & \cL_{\Phi}(K)\,.
}
\end{center}
Using this covering map, we can identify the representations of
$\cL_{\Phi}(K)_{c}$ with those representations of $\cL_{\Phi}(K)$
for which the roots $\{z\in \T\,;\, z^{c}=1\} \subseteq \T$ of order $c$ acts trivially.
\end{Remark}

\subsection{Localization at the group level}
\label{subsec:7.4b}

To obtain the localization result at the group level, we need the
following factorization lemma.

\begin{Lemma} \label{lem:7.11} Let $r \: G \to H$ be an open, surjective
morphism of locally exponential Lie groups, and let
$R \: G \to U$ be a continuous homomorphism of topological groups such that
\[ \Lie(\ker R) := \{ x \in \g \: \exp(\R x) \subeq \ker R\}
\supseteq \ker \Lie(r) = \Lie(\ker r).\]
Then $R$ factors through a continuous homomorphism
$\oline R \: G/(\ker r)_0 \to U$ and $r$ induces a covering morphism
$G/(\ker r)_0 \to H$ of Lie groups.
\end{Lemma}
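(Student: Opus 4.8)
The plan is to reduce the statement to its counterpart for abstract topological groups and then invoke the structure theory of locally exponential Lie groups for the remaining regularity. Write $N := \ker r$ and $\fn := \ker\Lie(r)$. For $x \in \fn$ the morphism identity $r \circ \exp_G = \exp_H \circ \Lie(r)$ gives $r(\exp_G(tx)) = \exp_H(t\,\Lie(r)x) = \one$ for all $t \in \R$, so $\exp_G(\fn) \subseteq N$; since $r$ is an open surjective morphism of locally exponential Lie groups, $N$ is a normal locally exponential Lie subgroup of $G$ with $\Lie(N) = \fn$ (cf.\ the theory of quotients of locally exponential Lie groups in \cite{Ne06, GN}), and hence its identity component $N_0$, which is open in $N$, is generated by $\exp_G(\fn)$. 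The hypothesis $\Lie(\ker R) \supseteq \fn$ says precisely $\exp_G(\fn) \subseteq \ker R$, so $N_0 \subseteq \ker R$. As $N_0$ is a closed normal subgroup of $G$ (the identity component of the closed normal subgroup $N$), the homomorphism $R$ descends to a homomorphism of abstract groups $\oline R \colon G/N_0 \to U$. Since the quotient map $q \colon G \to G/N_0$ is continuous, open and surjective and $R = \oline R \circ q$ is continuous, $\oline R$ is continuous; this is the first assertion.

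For the second assertion I factor $r$ as the composite of the quotient map $q \colon G \to G/N_0$ with a homomorphism $\oline r \colon G/N_0 \to H$. Because $q$ is a quotient map and $r$ is continuous, $\oline r$ is continuous; because $r$ is open and $q$ is continuous and surjective, $\oline r$ is open. Its kernel is $N/N_0 = \pi_0(N)$, which is discrete since $N_0$ is open in $N$. An open, continuous, surjective homomorphism of topological groups with discrete kernel is a covering map: choosing a symmetric identity neighbourhood $V$ in $G/N_0$ with $V V^{-1} \cap (N/N_0) = \{\one\}$, the restriction $\oline r|_V$ is a homeomorphism onto the open set $\oline r(V)$, and $\oline r^{-1}(\oline r(V)) = \bigsqcup_{d \in N/N_0} Vd$ is a disjoint union of open sets each mapped homeomorphically onto $\oline r(V)$; translating by elements of $H$ yields an open cover of $H$ by evenly covered sets. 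Hence $\oline r$ is a topological covering of $H$.

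Finally I transport the Lie group structure: pulling back the charts of $H$ through local inverses of the local homeomorphism $\oline r$ equips $G/N_0$ with the structure of a locally exponential Lie group for which $\oline r$ is a covering morphism, and the quotient map $q$ then becomes smooth, since near $\one$ one has $q = (\oline r|_V)^{-1} \circ r$ and left translations propagate this; the continuity of $\oline R$ persists for this structure. The only step that is not a formal diagram chase is the structure-theoretic input that the kernel of an open surjective morphism of locally exponential Lie groups — given the compatibility $\Lie(\ker r) = \ker\Lie(r)$ built into the hypothesis — is itself a locally exponential Lie subgroup, so that $N_0$ is exponentially generated and closed; I expect this, rather than the topological covering argument or the transport of the Lie structure, to be the point requiring care, and it is supplied by the results of \cite{Ne06, GN}.
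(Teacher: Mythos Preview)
Your proof is correct and follows essentially the same route as the paper: both invoke the structure theory from \cite{Ne06, GN} to see that $N = \ker r$ is a closed locally exponential Lie subgroup, so that $N_0$ is open in $N$ and generated by $\exp_G(\fn)$, whence the hypothesis forces $N_0 \subeq \ker R$ and $R$ factors. The paper cites \cite[Thm.~IV.3.5]{Ne06} directly for the covering morphism $G/N_0 \to H$, whereas you spell out the topological covering argument and transport the Lie structure by hand, but these are the same content.
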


\begin{proof} In view of \cite[Prop.~IV.3.4]{Ne06}
(see \cite{GN} for a complete proof), ${N := \ker r}$ is a closed,
locally exponential Lie subgroup of $G$. In particular, its identity component
$N_0$ is open in $N$, so that the isomorphism $G/N \to H$ of locally exponential Lie groups
leads to a covering morphism $G/N_0 \to H$ (\cite[Thm.~IV.3.5]{Ne06}).
For every $x \in \Lie(N)$,
we have $\exp (\R x)  \subeq \ker R$, so that
$N_0 = \la \exp \Lie(N)\ra \subeq \ker R$. Therefore $R$ factors through $G/N_0$.
\end{proof}

\begin{Lemma}\label{lem:1-connectedfibers}
Let $S\subseteq M$ be a closed, 1-dimensional submanifold
and suppose that the fibers of $\cK|_{S} \rightarrow S$
are 1-connected, semisimple Lie groups.
Then $\Gamma_{c}(S,\cK)$ is $1$-connected.
\end{Lemma}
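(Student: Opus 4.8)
The plan is to reduce the statement to a standard homotopy computation for loop groups. Since $\dim S=1$, every connected component $S_j$ of $S$ is diffeomorphic to $\R$ or to $\bS^1$, and a compactly supported section is supported in a compact subset of $S$ meeting only finitely many components; hence $\Gamma_c(S,\cK)$ is the strict inductive limit $\varinjlim_F\prod_{j\in F}\Gamma_c(S_j,\cK)$ over the finite subsets $F$ of the index set, based at the identity section. As any continuous map of a sphere or a disk into such an inductive limit has relatively compact image and therefore factors through one of the finite stages $\prod_{j\in F}\Gamma_c(S_j,\cK)$, the group $\Gamma_c(S,\cK)$ will be $1$-connected once every $\Gamma_c(S_j,\cK)$ is. So it suffices to show that $\pi_0$ and $\pi_1$ of $\Gamma_c(S_j,\cK)$ vanish for each $j$.

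Next I would record the topological input. Writing $K$ for the fibre of $\cK$ over $S_j$, the hypothesis gives $\pi_0(K)=\pi_1(K)=0$, and moreover $\pi_2(K)=0$ because the second homotopy group of any finite-dimensional Lie group vanishes (pass to a maximal compact subgroup and invoke Cartan's theorem). Hence the based loop group $\Omega K$ is itself $1$-connected, since $\pi_0(\Omega K)=\pi_1(K)=0$ and $\pi_1(\Omega K)=\pi_2(K)=0$. Throughout I would pass freely between smooth and continuous section and mapping spaces, whose homotopy groups agree by the usual smoothing arguments for gauge groups (cf.\ \cite{Ne06}).

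For a periodic component $S_j\cong\bS^1$, which is compact, $\Gamma_c(S_j,\cK)=\Gamma(\bS^1,\cK)\cong\cL_\Phi(K)$ in the notation of \S\ref{subsec:8.1}. Fixing $x_0\in\bS^1$, evaluation at $x_0$ gives a Hurewicz fibration $\Gamma(\bS^1,\cK)\to\cK_{x_0}\cong K$, which is surjective because $K$ is connected; its fibre — the sections that are trivial at $x_0$ — is homotopy equivalent to $\Omega K$, for upon trivializing $\cK$ over $\bS^1\setminus\{x_0\}$ such sections become based paths in $K$, and the clutching automorphism $\Phi$ is irrelevant here since $\Phi(e)=e$. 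The long exact homotopy sequence of this fibration then sandwiches $\pi_1$ and $\pi_0$ of $\Gamma(\bS^1,\cK)$ between vanishing groups. For a non-periodic component $S_j\cong\R$ the bundle $\cK|_{S_j}$ is trivial, so $\Gamma_c(S_j,\cK)\cong C^\infty_c(\R,K)$; identifying $\R$ with $\bS^1$ minus a point, the compactly supported maps are precisely the smooth maps $\bS^1\to K$ that are constant equal to $e$ near that point, and the inclusion of this group into $\Omega K$ is a weak homotopy equivalence, whence $\pi_0$ and $\pi_1$ again vanish.

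The only genuinely non-formal ingredient — and the point I expect to need the most care — is the identification up to homotopy of these section and mapping groups with $\Omega K$, together with the smooth-versus-continuous comparison: both are standard in the locally convex Lie theory literature, but one must verify that the clutching automorphism on an $\bS^1$-component does not change the homotopy type of the fibre of the evaluation map (it does not, exactly because it fixes $e$). Granting this, the lemma follows by combining the two short pieces of long exact sequence with the inductive-limit reduction of the first step; in particular each $\Gamma_c(S_j,\cK)$ is connected, so $\Gamma_c(S,\cK)$ is a connected, simply connected locally convex Lie group, i.e.\ $1$-connected.
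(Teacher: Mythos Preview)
Your proof is correct and follows essentially the same approach as the paper: decompose into connected components via the weak direct product, handle the circle case by the evaluation fibration whose fibre is (homotopy equivalent to) $\Omega K$, handle the line case by identifying $C^\infty_c(\R,K)$ with a model of $\Omega K$, and use $\pi_2(K)=0$. The paper discharges the technical steps by citing \cite{Gl08} for the weak product, \cite[Cor.~3.4]{NW09} for the smooth--continuous comparison on twisted based loops, and \cite[Thm.~A.10]{Ne04} for the line case, but the underlying arguments are the ones you sketch.
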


For $S \cong \R/T\Z \cong \bS^1$, it follows in particular that, for a $1$-connected Lie group
$K$ and an automorphism $\Phi \in \Aut(K)$, the twisted loop group\index{loop group!20@$T$-periodic \scheiding $\cL_\Phi^{T}(K)$}.
\begin{equation}\label{eq:twistloop}
 \cL_{\Phi}^T(K) :=
\{ \xi \in C^\infty(\R,K)\:
(\forall t \in \R)\ \xi(t + T) = \Phi^{-1}(\xi(t))\}
\end{equation}
is $1$-connected.

\begin{proof} If $S$ has connected components $(S_j)_{j \in J}$ with typical fiber $K_j$
of $\cK\res_{S_j}$, then
\begin{equation}
  \label{eq:gauge-dirsum}
\Gamma_c(S,\cK) \cong {\textstyle \prod_{j \in J}^{'}} \Gamma_c(S_j,\cK).
\end{equation}
(We refer to \cite[Prop.~7.3]{Gl03} for a discussion of weak direct products of Lie groups.)

If $S_{j} \simeq \bS^1$, then $\Gamma_{c}(S_j,\cK)$
is isomorphic to the twisted loop group
$\cL_{\Phi_{j}}^T(K_j)$, where $\Phi_{j}$ is an automorphism of $K_j$.
Since $\pi_{0}(K_{j})$, $\pi_1(K_j)$ vanish, $\pi_2(K_j)$ vanishes as well.
\begin{footnote} {Since
$K_{j}$ is homotopy equivalent to a maximal compact subgroup,
this follows from Cartan's theorem \cite[Thm.~3.7]{Mim95}.}
\end{footnote}
The long exact sequence
of homotopy groups corresponding to the Serre fibration
$\ev_{0} \colon \cL_{\Phi_{j}}^T(K_j) \rightarrow K_j$ thus yields an
isomorphism between the homotopy groups $\pi_0$ and $\pi_1$
of $\cL_{\Phi_{j}}^T(K_j)$ and
$\cL_{\Phi_{j}}^T(K_j)_{*} := \ker(\ev_0)$.
Since the inclusion $\cL_{\Phi_{j}}^T(K_j)_{*} \hookrightarrow
\cL_{\Phi_{j}}^T(K_j)_{*,\mathrm{ct}}$ into the
group of continuous, based, twisted
loops is a homotopy equivalence by \cite[Cor.~3.4]{NW09},
and since
$\pi_m(\cL_{\Phi_{j}}^T(K_j)_{*,\mathrm{ct}}) \simeq \pi_m(\Omega K_j) \simeq \pi_{m+1}(K_j)$ for $m\in \N_0$
(cf.\ \cite[p.~391]{NW09}), we conclude that $\cL_{\Phi_{j}}^T(K_j)$ is 1-connected.

If $S_j \simeq \R$, then
$\Gamma_{c}(S_{j},\cK)\simeq C^\infty_c(\R,K_j)$
is $1$-connected by \cite[Thm.~A.10]{Ne04}.
From \cite[Prop.~3.3]{Gl08}, we then conclude that the
locally exponential Lie group
\eqref{eq:gauge-dirsum}
is $1$-connected. 
\end{proof}

With these topological considerations out of the way, we now complete the proof of the Localization Theorem.

\begin{proof}[{\bf Proof of Theorem~\ref{thm:7.11}.}] \label{proof:Prrofmainresult}
In Lemma~\ref{lem:kleinbewijs}, we showed that
the projective positive energy representation $\dd\rho$ of $\Gamma_{c}(M,\fK)$
factors through the restriction map
$r_S^\fk \: \Gamma_c(M,\fK) \to \Gamma_c(S,\fK)$, so
it remains to prove the corresponding factorization on the group level.
For this, apply Lemma~\ref{lem:7.11} to the locally exponential Lie groups
$G = \tilde\Gamma_{c}(M,\cK)_{0}$ and $H = \Gamma_{c}(S,\cK)$
(which are both $1$-connected by Lemma~\ref{lem:1-connectedfibers}),
and the topological group $U = \PU(\cH)$.
The homomorphism $r$ is the homomorphism
$\tilde r_S \: \tilde\Gamma_c(M,\cK)_{0} \to \Gamma_c(S,\cK)$,
induced by the restriction
$r_S \: \Gamma_c(M,\cK)_0 \to \Gamma_c(S,\cK)$, and $R$
is the projective representation
$\ol \rho \: \tilde\Gamma_c(M,\cK)_0 \to \PU(\cH)$.
We conclude that $\ol \rho$ factors through a projective positive energy
representation of the $1$-connected Lie group $\Gamma_c(S,\cK)$.

Since every representation of $\Gamma_c(M,\cK)_0$ defines by pullback a
representation of its simply connected covering, the assertion also follows
for representations of this group.
This concludes the proof of the theorem.
\end{proof}

\subsection{Localization for equivariant representations}
\label{sec:HigherDimSymmetry}

In this subsection we 
extend the Localization Theorem~\ref{thm:7.11}
to the \emph{equivariant} setting, where
the action of $\R$ on $M$ is replaced by
a smooth action of a Lie group $P$ on~$M$.
The positive energy condition (cf.\ \S\ref{sec:eqposener}) then refers not to an $\R$-action, but to the \emph{positive energy cone}
$\cone \subseteq \fp$ inside the Lie algebra $\fp$ of $P$.

Let $M$ be a manifold, let $P$ be a Lie group
acting smoothly on~$M$,
and let $\cK \rightarrow M$ be a bundle of $1$-connected, semisimple Lie groups
that is equipped with a lift of this action.
We denote the $P$-action on~$M$ by
$\gamma_{M} \colon P \rightarrow \Diff(M)$, its lift to $\cK$ by
$\gamma \: P \to \Aut(\cK)$, and the corresponding action on the compactly supported gauge group by
$\alpha \: P \to \Aut(\Gamma_c(M,\cK))$.
On the infinitesimal level, the action of $P$ on $M$ gives rise to the
action $\bv_M \colon \fp \rightarrow \cV(M), p \mapsto \bv_M^p$
of the Lie algebra $\fp := \Lie(P)$.

Let $(\overline{\rho},\cH)$ be
a smooth,
projective, positive energy representation of the semidirect product
${\Gamma_c(M,\cK) \rtimes_{\alpha} P}$
(cf.\ Definition~\ref{def:posenerdefsym}), with positive energy cone
$\cone \subeq \fp$.

\begin{Definition}
The \emph{fixed point set}\index{fixed point set \scheiding $\Sigma$}
$\Sigma \subseteq M$ of the positive energy cone $\cone \subseteq \fp$
(a closed convex invariant cone in $\fp$) is defined as
\[
\Sigma := \{m\in M\: (\forall p \in \cone)\ \bv_{M}^{p}(m) = 0\}\,.
\]
\end{Definition}
Since the positive energy cone $\cone$ is $\Ad_{P}$-invariant, its fixed point set
$\Sigma$ is a closed, $P$-invariant subset of $M$.
In the following we first consider the fixed-point-free scenario $\Sigma = \emptyset$, and return to the general case in Part II
(\cite{JN20}).


\begin{Definition}
Let $\ol{\rho}$ be a smooth, projective, unitary representation
of $\Gamma_c(M,\cK)$.
The \emph{support} of $\ol{\rho}$, denoted $\supp(\ol{\rho})$,
is defined as the complement of the union of all open subsets
$U \subeq M$ for which the kernel of $\ol{\rho}$
contains the normal subgroup $\Gamma_c(U,\cK)$.
Similarly, the {\it support} of $\dd\rho$ \index{support of representation \vulop}
is the complement of the union of all open sets
$U\subseteq M$ such that the kernel of 
$\dd\rho$ 
contains $\Gamma_{c}(U,\fK)$.
\end{Definition}


Note that the support is a closed subset of $M$. If the representation $\oline\rho$ extends to the semidirect product
$\Gamma_c(M,\cK) \rtimes_{\alpha} P$, then the support of $\ol{\rho}$ is invariant
under the action of $P$ on~$M$. This leads to severe restrictions
for positive energy representations.

\begin{Theorem}[Equivariant Localization Theorem]\label{Thm:equivarloc}
Let $(\overline{\rho},\cH)$ be
a smooth, projective, positive energy representation of
$\Gamma_c(M,\cK)_{0} \rtimes_{\alpha} P$,
and suppose that $\cone$ has no fixed points.
Then there exists a 1-dimensional, $P$-equivariantly embedded
submanifold $S \subseteq M$ such that
$\ol{\rho}$ factors through the restriction homomorphism
$r_{S} \colon \Gamma_{c}(M,\cK)_{0} \rightarrow \Gamma_{c}(S,\cK)$.
\end{Theorem}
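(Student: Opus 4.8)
The strategy is to reduce the equivariant statement to the already-established $\R$-version, Theorem~\ref{thm:7.11}, by intersecting the one-dimensional submanifolds produced for enough one-parameter subgroups of $P$. First I would fix a basis $p_1, \ldots, p_k$ of the linear span of the positive energy cone $\cone \subseteq \fp$, chosen so that each $p_i$ actually lies in $\cone$ (possible since $\cone$ is a generating cone of its span, being $\Ad_P$-invariant and nonempty by hypothesis). For each $i$, the representation $\ol\rho$ restricts to a projective positive energy representation of $\Gamma_c(M,\cK)_0 \rtimes_{\alpha\circ\exp_{p_i}}\R$ in the sense of Definition~\ref{def:posenerdef}. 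The flow generated by $p_i$ on $M$ is $\gamma_M\circ\exp_{p_i}$; its generating vector field is $\bv_M^{p_i}$. Since $\cone$ has no fixed points, for every $m \in M$ there is \emph{some} $p \in \cone$ with $\bv_M^p(m)\neq 0$; but to apply Theorem~\ref{thm:7.11} to a single $p_i$ I need $\bv_M^{p_i}$ to be nowhere vanishing, which need not hold for any single generator. So the argument must be run on the open sets where the individual flows are nonsingular.

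The key step is the following localization-and-intersection argument. For each $i$, let $M_i^\times := \{m \in M : \bv_M^{p_i}(m)\neq 0\}$, an open $\exp(\R p_i)$-invariant subset of $M$; by Remark~\ref{rem:7.7}(a) and the proof of Theorem~\ref{thm:7.11} (more precisely Lemma~\ref{lem:kleinbewijs} applied on $M_i^\times$), the measure $\mu_i$ attached to the cocycle via Theorem~\ref{MeasureThm} is concentrated on a closed, embedded, $1$-dimensional, $\exp(\R p_i)$-flow-invariant submanifold $S_i \subseteq M_i^\times$, and $\dd\rho$ vanishes on the sections of $\fK$ vanishing $\mu_i$-a.e.\ on $M_i^\times$. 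Consequently $\supp(\dd\rho)\cap M_i^\times \subseteq S_i$. Since $\cone$ has no fixed points and the $p_i$ span the span of $\cone$, the sets $M_i^\times$ cover $M$; hence $\supp(\dd\rho) \subseteq \bigcup_i (S_i \cap M_i^\times)$. Now I would argue that on an overlap $M_i^\times \cap M_j^\times$ the two $1$-manifolds must agree with $\supp(\dd\rho)$: indeed the cocycle $[\omega]$ is intrinsic to the representation, so on $M_i^\times\cap M_j^\times$ the measure obtained from $p_i$ and from $p_j$ must have the same null sets as $\supp(\dd\rho)$ (both equal the complement of the interior of the kernel locus). This forces $S_i$ and $S_j$ to coincide with a single closed embedded $1$-dimensional submanifold $S := \supp(\dd\rho) \subseteq M$. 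By construction $S$ is invariant under every $\exp(\R p_i)$, hence under the connected subgroup they generate, and — using $\Ad_P$-invariance of $\cone$ together with the identity $\gamma_M(g)\exp_{p}\gamma_M(g)^{-1} = \exp_{\Ad_g p}$ — under all of $P_0$; an elementary argument then upgrades this to $P$-invariance of $S$ as an embedded submanifold (the $P$-action permutes the connected components of $S$). Finally, $\dd\rho$ factors through $r_S^\fk\colon \Gamma_c(M,\fK)\to\Gamma_c(S,\fK)$ exactly as in Lemma~\ref{lem:kleinbewijs}, and the group-level factorization follows from Lemma~\ref{lem:7.11} together with the $1$-connectedness of $\Gamma_c(S,\cK)$ (Lemma~\ref{lem:1-connectedfibers}), just as in the proof of Theorem~\ref{thm:7.11}. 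Equivariance of the embedding then gives that $\ol\rho$, which already intertwines the $P$-actions, descends compatibly; since we only claim factorization of $\ol\rho|_{\Gamma_c(M,\cK)_0}$ through $r_S$, no further compatibility statement is needed.

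\textbf{Main obstacle.} The delicate point is the \emph{gluing} step: showing that the locally defined $1$-manifolds $S_i$ on the overlaps $M_i^\times\cap M_j^\times$ patch into a single globally embedded closed submanifold $S$ of $M$, and that $\mu_i\big|_{M_i^\times\cap M_j^\times}$ and $\mu_j\big|_{M_i^\times\cap M_j^\times}$ have the same support. The cleanest way to handle this is to bypass the measures and work directly with $T := \supp(\dd\rho)$: one shows (i) $T$ is closed and $P$-invariant, (ii) $T\cap M_i^\times = S_i$ is a $1$-manifold for each $i$ by Theorem~\ref{thm:7.11}/Lemma~\ref{lem:kleinbewijs} applied on $M_i^\times$, (iii) every $m\in M$ lies in some $M_i^\times$, so $T$ is locally a $1$-manifold everywhere, hence a closed embedded $1$-dimensional submanifold of $M$. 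Then $S := T$ works. A secondary subtlety is the reduction from $P$-actions to the span of $\cone$: one must check that the positive-energy cocycle-and-measure machinery of \S\ref{PEcocycles}--\S\ref{sec:6}, which is phrased for a single derivation $D$, can be applied to each $D_i := L_{\bv^{p_i}}$ separately and that the resulting operator inequalities (Theorem~\ref{thm:lapje}) are compatible; but since all of these are consequences of the single Cauchy--Schwarz estimate Lemma~\ref{Lem:CSRaw}, which holds for each positive energy direction independently, this is routine. I expect the write-up to be short modulo carefully stating the support-gluing lemma and the passage $P_0$-invariance $\Rightarrow$ $P$-equivariant embedding.
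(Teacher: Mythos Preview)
Your approach is essentially the same as the paper's: reduce to the $\R$-case on the open sets where a given positive-energy generator is nonvanishing, identify the resulting local $1$-manifolds with $\supp(\dd\rho)$ on those opens, and glue. The paper ranges over \emph{all} $p\in\cone$ rather than a finite spanning set; this lets it obtain $P$-invariance of $S=\bigcup_{p\in\cone}S_p$ directly from the relation $gS_p=S_{\Ad_g(p)}$ together with the $\Ad_P$-invariance of $\cone$. Your finite basis $\{p_1,\dots,p_k\}$ is not $\Ad_P$-stable, so that route is closed to you---but your alternative $S=\supp(\dd\rho)$, which is $P$-invariant simply because $\ol\rho$ is a representation of the semidirect product, works equally well and is exactly how the paper argues that the $S_p$ agree on overlaps.

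One genuine (if easily repaired) gap: your intermediate argument for $P_0$-invariance, via ``invariant under every $\exp(\R p_i)$, hence under the connected subgroup they generate,'' does not give $P_0$. The Lie algebra generated by $p_1,\dots,p_k$ is the Lie subalgebra generated by $\mathrm{span}(\cone)$, which need not be all of $\fp$ (take for instance $P=Q\times\R$ with $\cone=\{0\}\times\R_{\geq 0}$). Drop this step and use $S=\supp(\dd\rho)$ throughout; then $P$-invariance is immediate and the rest of your outline goes through exactly as in the paper.
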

\begin{Remark}{\rm{(Equivariant localization for the simply connected cover)}}
Since the $P$-action on $\Gamma_{c}(M,\cK)$ preserves
the identity component $\Gamma_{c}(M,\cK)_{0}$, it lifts to the simply connected cover
$\tilde\Gamma_{c}(M,\cK)_{0}$.
In this context the same result remains valid: every
smooth, projective, positive energy representation $\ol{\rho}$ of
${\tilde\Gamma_c(M,\cK) \rtimes_{\alpha} P}$ factors through the homomorphism
$\tilde r_{S} \colon \tilde\Gamma_{c}(M,\cK)_{0} \rightarrow \Gamma_{c}(S,\cK)$
obtained by composing the restriction $r_S$ with the covering map.
\end{Remark}
\begin{proof}
For every $p\in \cone$, let $U_{p} \subseteq M$ be
the open
set of points in $M$ where $\bv_{M}^{p}$ is nonvanishing.
Applying Lemma~\ref{lem:kleinbewijs} to the manifold $U_{p}$, with
the gauge group $\Gamma_{c}(U_{p},\cK)$  and
the $\R$-action $\alpha_{p}(t):= \alpha(\exp(tp))$, one finds
an embedded, 1-dimensional submanifold $S_{p} \subseteq U_{p}$ such that
the projective Lie algebra representation $\dd\rho$ factors through
the restriction map $r^{\fk}_{S_{p}} \colon \Gamma_{c}(U_{p},\fK) \rightarrow \Gamma_{c}(S_{p},\fK)$.
The support of $\dd\rho|_{\Gamma_{c}(U_{p},\fK)}$
is thus contained in $S_{p}$. It actually equals
$S_{p}$ because the cocycle on $\Gamma_{c}(U_{p},\fK)$ is given
by a measure with support $S_{p}$. The sets
$S_{p}$ and $S_{p'}$ therefore coincide on $U_{p}\cap U_{p'}$, so
the union $S = \bigcup_{p\in \cone} S_{p}$
is a 1-dimensional, closed embedded submanifold of $M$. Here
we use that the $U_p$ cover $M$ because $\cone$ has no common fixed point.
Since $g S_{p} = S_{\mathrm{Ad}_{g}(p)}$ for every $g\in P$,
the union $S$ is $P$-invariant.


Let $I_{S} := \{\xi \in \Gamma_{c}(M,\fK)\,;\, \xi|_{S} = 0\}$ be the vanishing ideal
of $S$ in $\Gamma_{c}(M,\fK)$.
Since any $\xi \in I_{S}$ can be written as a finite sum of $\xi_{p}\in I_{S_{p}}\subseteq \Gamma_{c}(U_p,\fK)$,
and since the restriction of $\dd\rho$ to $\Gamma_{c}(U_{p},\fK)$ vanishes on
$I_{S_{p}}$,
we conclude that $\dd\rho$ vanishes on $I_{S}$.
From Lemma~\ref{lem:7.11} and Lemma~\ref{lem:1-connectedfibers},
we then find (as in the proof of Theorem~\ref{thm:7.11}
on p.~\pageref{proof:Prrofmainresult}) that \ol{\rho} factors through the restriction
$\Gamma_{c}(M,\cK)_{0} \rightarrow \Gamma_{c}(S,\cK)$
and that the corresponding assertion holds for representations
of the covering group $\tilde\Gamma_c(M,\cK)_{0}$.
\end{proof}

The building blocks
for positive energy representations therefore come from
actions of $P$ on one-dimensional manifolds on which $\cone$ has no fixed point.
According to the classification of hyperplane subalgebras of
finite dimensional Lie algebras (\cite{Ho65, Ho90}), an effective action of
a connected finite-dimensional Lie group $P$
on a simply connected one-dimensional manifold is of one of the following 3
types:
\begin{itemize}
\item the action of $P = \R$ on the line $\R$,
\item the action of the affine group $P = \Aff(\R)$ on the real line $\R$,
\item the action of $P = \widetilde{\SL}(2, \R)$ on the real
line $\R$, considered as
the simply connected cover of $\bP_1(\R)\cong \bS^1$.
\end{itemize}
In the infinite dimensional context, the action of the simply connected
covering group $P = \tilde\Diff_+(\bS^1)$
on $\R \cong \tilde\bS^1$ is a natural example.

\section{The classification for \texorpdfstring{$M$}{M} compact}\label{sec:8}


If the flow $\gamma_M$ on $M$
has no fixed points,
the Localization Theorem~\ref{thm:7.11} reduces the classification of
projective positive energy representations of the identity component $\Gamma_{c}(M,\cK)_{0}$
of the compactly supported gauge group
to the situation where the base manifold is a closed, embedded, flow-invariant
submanifold ${S\subseteq M}$ of dimension one.

The connected components of $S$
are either diffeomorphic
to $\R$ (for a non-periodic orbit), or to $\bS^1 \cong \R/\Z$
(for a periodic orbit).
Since a gauge group on $\R$ is
equivariantly isomorphic to
$C^\infty_c(\R,K)$ (with $\R$ acting by translation), and a gauge group on
$\bS^1$ is equivariantly isomorphic to a twisted
loop group (with $\R$ acting by rotation), the gauge group on $S$
is a product of twisted loop groups and groups of the form $C^{\infty}_{c}(\R,K)$.

In this section, we describe the complete classification of positive
energy representations for twisted loop groups.
This
leads to a classification of the positive
energy representations of $\Gamma_{c}(M,\cK)_{0}$ for which the one-dimensional submanifold
$S$ is compact.
Since this is automatically the case if $M$ is compact, we arrive at a complete classification in this setting.

\subsection{Positive energy representation of twisted loop groups} \label{subsec:7.4}

We now describe the complete classification of projective positive
energy representations for twisted  loop groups.

In this subsection $K$ denotes a $1$-connected compact (hence semisimple) Lie group,
$\Phi \in \Aut(K)$ is an automorphism of \emph{finite order} $\Phi^N = \id_K$,
and $\phi = \Lie(\Phi) \in \Aut(\fk)$ is the corresponding automorphism of $\fk$.
We further assume that the invariant form $\kappa$ on $\fk$
is normalized in such a way that
$\kappa(i\alpha^\vee, i\alpha^\vee) = 2$ for all long roots~$\alpha$.
We denote the (twisted) loop groups and algebras by
$\cL_\Phi(K)$ and $\cL_\phi(\fk)$ respectively, as in \eqref{eq:twistloop1} and \eqref{eq:twistloop1b}.
The (double) extensions with $c=1$ are
denoted%
\index{loop group!30@extended \scheiding $\cL^{\sharp}_\phi(K)$}
\index{loop group!40@Kac-Moody \scheiding $\widehat{\cL}_\phi(K)$}
by $\cL^{\sharp}_{\Phi}(K)$ and $\widehat{\cL}_{\Phi}(K)$, cf.~Remark~\ref{rem:7.13}.



\begin{Definition}  We call a positive energy representation $(\rho, \cH)$ of
$\hat\cL_\Phi(K)$
\begin{itemize}
\item[\rm(i)] {\it basic}  \index{representation!basic \vulop}
if $U_t := \rho(\exp tD)\subeq \rho(\cL^{\sharp}_\Phi(K))''$
for every $t \in \R$,
\item[\rm(ii)] {\it periodic} \index{representation!periodic \vulop}
if $U_T  = \one$ for some $T > 0$.
\end{itemize}
\end{Definition}
Note that if $\rho$ is minimal (Definition~\ref{def:mini}), then it is in particular
basic.


\begin{Remark} \label{rem:x.1} If $(\rho, \cH)$ is periodic
with $U_T =  \one$,
then \cite[Lemma~5.1]{Ne14a} implies that the space
$\cH^\infty$ of smooth vectors is invariant under the operators
\[ p_n(v) := \frac{1}{T} \int_0^T e^{- 2\pi i nt/T} U_tv\, dt\,.\]
These are orthogonal projections onto the eigenvectors of
$H = i\dd\rho(D)$ for the eigenvalues $-2 \pi n/T$, $n \in \Z$.
%
\end{Remark}
Recall from \S\ref{subsec:8.1}
that with the identification $\bS^1 \simeq \R/\Z$ and with $\Phi^{N} = \mathrm{id}_{K}$, we have $D(\xi) = \frac{1}{T}(\xi' + [A,\xi])$.
It will be convenient to introduce the derivative
\begin{equation}\label{eq:defvanbd}
\bd(\xi) = \frac{d}{dx}\xi\,,
\quad\mbox{so that}\quad D = \frac{1}{T}(\bd + \mathrm{ad}_{A}).
\end{equation}

\begin{Remark} \label{rem:indep-posen}
(Independence of positive energy condition from lift of $\R$-action) 
From Proposition~\ref{Prop:propsembo},  
applied to $M = \R/\Z$, it follows
that
a smooth representation of $\cL_\Phi^\sharp(K)$
is of positive energy with respect to
the derivation $D$
if and only if it is of positive energy with respect to the derivation $\bd$.
Then the representation is semibounded
in the sense of Definition~\ref{def:semibounded}.
As this holds for $D = \frac{1}{T}(\bd+ \mathrm{ad}_{L})$ with any $T>0$ and $L \in \cL_{\phi}(\fk)$,
the positive energy condition does not depend on the choice
of the vector field $\bv$ on $\cK_\Phi = \R \times_\Phi K$ lifting the
vector field $\bv_{M} = \frac{1}{T}\frac{d}{dt}$ on $\bS^1\cong \R/\Z$.
\end{Remark}


From $\Phi^N= \id_K$, we immediately derive that $\phi^N = \id_\fk$.
For 
$\hat\g = \hat\cL_\phi(\fk)$, we define the canonical triangular decomposition by
\[ \hat\g_\C = \hat\g_\C^+ \oplus \hat\g_\C^0 \oplus \hat \g_\C^- \quad \mbox{ with }\quad
\hat\g_\C^\pm := \oline{\sum_{\pm n > 0} \hat\g_\C^n}\,,\]
where
\[\hat\g_\C^n := \ker\left(\mathbf{d} + \frac{2\pi i n}{N}\one\right) \quad \mbox{ for } \quad n \in \Z\]
(see \eqref{eq:roots} in the appendix).
For $\fg = \cL_{\phi}(\fk)$, 
we have the analogous decomposition
$\fg_{\C} = \fg^+_{\C}\oplus \fg^{0}_{\C} \oplus \fg^{-}_{\C}$
with $\fg^+_{\C} = \hat \fg^+_{\C}$
and $\fg^-_{\C} = \hat \fg^-_{\C}$.

For a smooth unitary representation of $\hat\cL_\Phi(K)$, we define
its \emph{minimal energy subspace} \index{minimal energy subspace \scheiding $\cE$}
with respect to $i\dd\rho(\bd)$ by
\begin{equation}
  \label{eq:cale}
 \cE := \oline{(\cH^\infty)^{\fg_\C^-}} \quad \mbox{ for } \quad
 (\cH^\infty)^{\g_\C^-} := \{ \psi \in \cH^\infty \:
(\forall x \in \g_\C^-)\; \dd\rho(x)\psi = 0\}.
\end{equation}

\begin{Lemma} \label{lem:1.9} For every smooth 
positive energy representation $(\rho,\cH)$ of
$\hat\cL_\Phi(K)$, the subspace $\cE$ is generating
for $\cL_\Phi^\sharp(K)$.
\end{Lemma}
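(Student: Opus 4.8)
The plan is to show that the orthogonal complement $\cE^{\perp}$ is invariant under $\cL^{\sharp}_{\Phi}(K)$ and then invoke irreducibility-type arguments, but more precisely, the natural route is to argue directly that the closed $\cL^{\sharp}_{\Phi}(K)$-invariant subspace generated by $\cE$ is all of $\cH$. First I would use the positive energy condition together with Remark~\ref{rem:indep-posen} to pass to the derivation $\bd$, so that $H_{\bd} := i\dd\rho(\bd)$ has spectrum bounded below; by Corollary~\ref{cor:borch} we may further assume $\rho$ is minimal, so that $U_t = \rho(\exp t\bd)$ lies in $\rho(\cL^{\sharp}_{\Phi}(K))''$ and $H_{\bd}\geq 0$. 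The key structural input is the triangular decomposition $\fg_{\C} = \fg^{+}_{\C}\oplus\fg^{0}_{\C}\oplus\fg^{-}_{\C}$ coming from the grading $\fg^{n}_{\C} = \ker(\bd + \tfrac{2\pi i n}{N}\one)$: the operators $\dd\rho(x)$ for $x\in\fg^{-n}_{\C}$ lower the $H_{\bd}$-eigenvalue by $\tfrac{2\pi n}{N}$, those for $x\in\fg^{n}_{\C}$ raise it, and $\dd\rho(\fg^{0}_{\C})$ preserves each eigenspace. Since $H_{\bd}\geq 0$, iterated application of lowering operators to any smooth vector must eventually terminate, producing nonzero vectors annihilated by all of $\fg^{-}_{\C}$, i.e.\ vectors in $\cE$.

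The main step is to make this descent argument precise. Given a nonzero $\psi\in\cH^{\infty}$, I would first decompose it into $H_{\bd}$-eigencomponents. When the representation is periodic we may use the projections $p_n$ of Remark~\ref{rem:x.1}, which preserve $\cH^{\infty}$; in the general (non-periodic, but minimal) case one works with spectral subspaces of $H_{\bd}$ and uses that $H_{\bd}$ has discrete spectrum contained in a coset of $\tfrac{2\pi}{N}\Z$ bounded below (this follows because $e^{iN\bd}$ is central in the relevant sense, or directly from the grading). Pick a smooth vector with minimal $H_{\bd}$-eigenvalue among those appearing in the cyclic span; applying a product of lowering operators $\dd\rho(x_1)\cdots\dd\rho(x_k)$ with $x_j\in\fg^{-}_{\C}$ as long as the result is nonzero, we reach after finitely many steps a nonzero vector killed by $\fg^{-}_{\C}$, hence in $(\cH^{\infty})^{\fg^{-}_{\C}}\subseteq\cE$. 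This shows $\cE\neq 0$ whenever $\cH\neq 0$, but for the generating statement I need more: I would argue that the closed $\cL^{\sharp}_{\Phi}(K)$-invariant subspace $\cH_{0}$ generated by $\cE$ contains $\cH^{\infty}$. Suppose not; then $\cH_{0}^{\perp}$ is a nonzero closed $\cL^{\sharp}_{\Phi}(K)$-invariant subspace. Since $U_t$ lies in the von Neumann algebra generated by $\cL^{\sharp}_{\Phi}(K)$ (minimality), $\cH_{0}^{\perp}$ is also $U_t$-invariant, hence invariant under $\hat\cL_{\Phi}(K)$ and its derived representation; it is itself a smooth positive energy representation. By the descent argument just given, $\cE\cap\cH_{0}^{\perp}\neq 0$, contradicting $\cE\subseteq\cH_{0}$.

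I expect the main obstacle to be the analytic bookkeeping in the descent: ensuring that the iterated lowering process genuinely terminates and lands in $\cH^{\infty}$ rather than merely in $\cH$, and that the spectrum of $H_{\bd}$ on any invariant subspace is bounded below \emph{and} discrete (i.e.\ lies in a translate of $\tfrac{2\pi}{N}\Z$). The discreteness should be handled by the periodicity of the one-parameter group $e^{iN t\bd}$ together with the structure of the grading — one can reduce to the periodic case by a central extension/covering trick as in Remark~\ref{rem:7.13}, or use that $\dd\rho(\bd)$ differs from the generator of a periodic group by a central element. Once that is in place, the termination of the descent is immediate from $H_{\bd}\geq 0$ and the fact that each nonzero lowering step decreases the eigenvalue by a fixed positive amount $\tfrac{2\pi}{N}$. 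The invariance of $\cH_{0}^{\perp}$ under the full $\hat\cL_{\Phi}(K)$ (so that it is again a positive energy representation to which the descent applies) is where minimality via Corollary~\ref{cor:borch} is essential, and I would make sure to record that the reduction to minimal representations does not lose generality for the purposes of this lemma, since $\cE$ and the generating property are unaffected by twisting $U_t$ by a one-parameter group in the commutant.
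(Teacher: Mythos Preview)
Your proposal is correct and follows essentially the same architecture as the paper's proof: reduce via Corollary~\ref{cor:borch}, show the orthogonal complement of the $\cL^\sharp_\Phi(K)$-span of $\cE$ is itself a positive energy subrepresentation, and derive a contradiction by producing a nonzero $\fg_\C^-$-annihilated smooth vector there.

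The one place the paper is cleaner is precisely where you flag your ``main obstacle'': rather than invoking minimality and then arguing separately for discreteness of $\Spec(H_\bd)$, the paper uses $\alpha_N = \id_{\cL_\Phi(K)}$ together with Corollary~\ref{cor:borch}(iii) to pass directly to a \emph{periodic} representation. Periodicity gives $\Spec(H_\bd)\subeq \tfrac{2\pi}{N}\Z$ for free, and Remark~\ref{rem:x.1} then guarantees that the eigenspace projections preserve $\cH^\infty$. This replaces your iterated-descent argument by a single observation: any smooth vector in the minimal eigenspace of $(\cH')^\bot$ is automatically killed by $\fg_\C^-$ and hence lies in $\cE$, yielding the contradiction in one line. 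Your descent works too, but the periodicity shortcut removes all the analytic bookkeeping you were worried about.
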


\begin{proof}
Note that $\cE$ is defined in terms
of $\rho\res_{\cL^\sharp_\Phi(K)}$.
In view of Corollary~\ref{cor:borch}
and the fact that $\alpha_N = \id_{\cL_\Phi(K)}$, we may therefore assume w.l.o.g.\
that $\rho$ is periodic.

Let $\cH' \subeq \cH$ denote the smallest closed
$\cL_\Phi^\sharp(K)$-invariant subspace containing~$\cE$. Then $\cH'$ is
$U$-invariant, and the representation of $\hat\cL_\Phi(K)$ on
$(\cH')^\bot$ is also a positive energy representation.
If $(\cH')^\bot\not=\{0\}$, then its minimal energy subspace $\cF$ is
non-zero by Remark~\ref{rem:x.1}, and since it contains smooth vectors,
we obtain a contradiction to $\cF \bot \cE$.
Therefore $(\cH')^\bot = \{0\}$ and the subspace $\cE$ is $\cL_\Phi^\sharp(K)$-generating.
\end{proof}

We now abbreviate
\begin{equation}
  \label{eq:gdach}
G := \cL_\Phi(K), \quad
\hat G := \hat\cL_\Phi(K)\quad \mbox{ and }  \quad
G^\sharp := \cL^\sharp_\Phi(K)
\end{equation}
and denote the corresponding groups of fixed points by
\begin{equation}
  \label{eq:defhdach}
 L = K^\Phi, \quad
\hat L := \Fix_\alpha(\hat G)\cong \T \times K^\Phi \times \R, \quad
L^\sharp := \hat L \cap G^\sharp \cong \T \times L.
\end{equation}
From the discussion in \cite[\S5.2 and App.~C]{Ne14a},
it follows that the homogeneous space
$G/L \cong \hat G/\hat L \cong G^\sharp/L^\sharp $
carries the structure of a complex
Fr\'echet manifold on which $\hat G$ acts analytically, and the tangent space in
the base point is isomorphic to the quotient space $\hat\g_\C/(\hat\g_\C^0 + \g_\C^+)$.
For any bounded unitary representation $(\rho^L,E)$ of $\hat L$, we then
obtain a holomorphic vector bundle $\bE := \hat G \times_{\hat L} E$ over
$\hat G/\hat L$.
We write $\Gamma_{\rm hol}(G/L, \bE)$ for the space of holomorphic sections of~$\bE$.
\begin{Definition}[Holomorphically induced representations] \label{def:holind}
A unitary representation $(\rho, \cH)$ of $\hat G$ is said to be
{\it holomorphically induced} \index{representation!holomorphically induced \vulop}
from $(\rho^L,E)$
if there exists a $G$-equivariant linear injection
$\Psi \: \cH \to \Gamma_{\rm hol}(G/L,\bE)$ such that the
adjoint of the evaluation map $\ev_{\one \hat L} \: \cH \to E = \bE_{\one \hat L}$
defines an isometry $\ev_{\one \hat L}^* \: E \into \cH$.
If there exists a unitary representation $(\rho, \cH)$ holomorphically induced
from $(\rho^L,E)$, then it is uniquely determined
(\cite[Def.~3.10]{Ne13}). We then call the representation
$(\rho^L,E)$ of $\hat L$ {\it (holomorphically)
inducible}. 
The same statements apply to $G^{\sharp}$ and~$L^\sharp$.
\end{Definition}

Let $\ft^\circ \subeq \fk^\phi$ be maximal abelian\index{maximal!20@abelian subalgebra \scheiding $\ft$}, so that
\[\ft = \R C \oplus \ft^\circ \oplus \R \bd\] is maximal abelian
in $\hat\cL_\phi(\fk)$.
We write $T^{\sharp}=\T \times T^{\circ}$ for the torus group\index{maximal!10@torus \scheiding $T$, $T^{\sharp}$} with
Lie algebra $\ft^{\sharp} = \R C \oplus \ft^\circ$.
Let $\Delta^+$ be a positive system\index{roots!20@positive system \scheiding $\Delta^{+}$}for the affine Kac--Moody Lie algebra
$\hat\cL_\phi(\fk_\C)$
with respect to the Cartan subalgebra $\ft_\C$ such that,
for all $\alpha \in \Delta$, the relation
$\alpha(i\bd) >0$ implies $\alpha \in \Delta^+$ (cf.~Appendix~\ref{app:1}).


\begin{Proposition} \label{prop:7.22}
A bounded representation $(\rho^L,E)$ of
\[ L^\sharp = \exp(\R C) \times L \cong \T \times L \]
is holomorphically inducible if and only if
\begin{equation}
  \label{eq:holindcon}
\dd\rho^L([z^*,z]) \geq 0 \quad \mbox{ for all} \quad z \in \g_\C^n,\, n > 0.
\end{equation}
In particular, the irreducible, holomorphically inducible representations of $L^{\sharp}$
are parametrized by the anti-dominant, integral
weights $\lambda$ of the form
\begin{equation}
  \label{eq:loweight}
\lambda = (\lambda(C),\lambda_0,0) \in i\ft^*
\end{equation}
of the affine Kac--Moody Lie algebra
$\hat\cL_\phi(\fk_\C)$ with respect to the Cartan subalgebra $\ft_\C$ and the
positive system $\Delta^+$.
For every central charge $c \in \N_{0}$,
there are only finitely many such representations
with $\lambda(C) = ic$
\end{Proposition}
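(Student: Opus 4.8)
\textbf{Proof plan for Proposition~\ref{prop:7.22}.}
The plan is to use the holomorphic induction machinery from \cite{Ne13, Ne14a} together with the structure theory of affine Kac--Moody algebras (Appendix~\ref{app:1}). First I would recall that a bounded unitary representation $(\rho^L, E)$ of $L^\sharp = \T \times L$ is holomorphically inducible to $\widehat{G}$ precisely when the pair $(\rho^L, E)$ satisfies the positivity condition \eqref{eq:holindcon}; this is the content of the general inducibility criterion \cite[Thm.~3.12]{Ne13} (cf.\ also the discussion in \cite[\S5]{Ne14a}), applied to the complex homogeneous space $\widehat{G}/\widehat{L} \cong G^\sharp/L^\sharp$ whose tangent space at the base point is $\widehat{\g}_\C/(\widehat{\g}_\C^0 + \g_\C^+)$. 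Here one uses that the negative part $\g_\C^-$ integrates to the ``unipotent'' part acting on sections, so that holomorphy forces $\g_\C^-$ to annihilate the image of $E$, and unitarity then forces the operator inequality $\dd\rho^L([z^*, z]) \geq 0$ for $z \in \g_\C^n$, $n > 0$. The converse (that this positivity suffices) is exactly where \cite[Thm.~3.12]{Ne13} does the work, via construction of the reproducing kernel on $\Gamma_{\mathrm{hol}}(G/L, \bE)$.

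Next I would pass from general bounded representations $(\rho^L, E)$ to the irreducible ones. Since $L^\sharp = \T \times L$ with $L = K^\Phi$ compact, every irreducible bounded (hence norm-continuous, hence finite-dimensional) unitary representation of $L^\sharp$ is a finite-dimensional representation of $\T \times K^\Phi$, determined by its highest weight for a chosen positive system of the (finite, reductive) Lie algebra $\R C \oplus \fk^\phi$ relative to $\ft^\sharp = \R C \oplus \ft^\circ$. I would then observe that the positivity condition \eqref{eq:holindcon}, when restricted to the simple coroots $\alpha^\vee$ with $\alpha \in \Delta^+$ coming from the affine (imaginary-direction) simple root, translates into the statement that the weight $\lambda = (\lambda(C), \lambda_0, 0) \in i\ft^*$ is anti-dominant and integral with respect to $\Delta^+$ for the affine Kac--Moody algebra $\widehat{\cL}_\phi(\fk_\C)$: the finite-part dominance is the usual highest-weight condition for $\fk^\phi$, and the affine node inequality is precisely $\lambda(\alpha_0^\vee) \leq 0$, which is the $\mathfrak{sl}_2$-positivity \eqref{eq:holindcon} for the lowest root vector $z$ in $\g_\C^{N}$ (or the appropriate $\g_\C^n$). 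The bookkeeping that $[z^*, z]$ for $z \in \g_\C^n$ ranges over (positive multiples of) the affine coroots, so that \eqref{eq:holindcon} is equivalent to anti-dominance on the full affine root system, is the step I would carry out explicitly using the formulas in Appendix~\ref{app:1}.

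Finally, for the counting statement: fix the central charge by $\lambda(C) = ic$ with $c \in \N_0$. An anti-dominant integral weight $\lambda$ of $\widehat{\cL}_\phi(\fk_\C)$ with $\lambda(C) = ic$ and $\lambda(\bd) = 0$ is determined by the finite part $\lambda_0 \in i(\ft^\circ)^*$, which must be an anti-dominant integral weight of $\fk^\phi$ subject to the single additional affine inequality relating $\lambda_0(\theta^\vee_{\phi})$ (for the relevant highest short/long $\phi$-twisted root $\theta_\phi$) to $c$. This cuts out a bounded region in the weight lattice of $\fk^\phi$: concretely, the level-$c$ condition forces $\langle \lambda_0, \alpha_i^\vee \rangle \geq 0$ for all finite simple coroots and $\langle \lambda_0, \theta_\phi^\vee \rangle \leq c \cdot a_0$ for a fixed positive integer $a_0$, and this defines a simplex (a dilated fundamental alcove) containing only finitely many lattice points. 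Hence there are only finitely many such $\lambda$, as claimed. I would cite the analogous finiteness for untwisted affine algebras (e.g.\ \cite{Ka85} or \cite[\S II]{Ne01}) and note that the twisted case is identical once one works with the twisted affine root datum from Appendix~\ref{app:1}.

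The main obstacle I anticipate is not the finiteness count --- that is a standard alcove-counting argument --- but rather the precise matching between the abstract operator-positivity condition \eqref{eq:holindcon} (phrased over all $z \in \g_\C^n$, $n>0$) and the finite list of affine simple-coroot inequalities that define anti-dominance. One must check that it suffices to test \eqref{eq:holindcon} on the root vectors associated to the (finitely many) \emph{simple} affine roots, rather than all positive roots; this reduction uses that $E$ is irreducible for the reductive part $\ft^\sharp$-centralizer and an $\mathfrak{sl}_2$-representation-theory argument (the ``lowest weight'' of the $\mathfrak{sl}_2$-triple attached to a simple root already controls the general case by the dominance of $\lambda$ on the finite part). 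Getting this reduction cleanly, with the twist $\phi$ present, is the technical heart of the argument, and I would lean on \cite[Thm.~3.12, Prop.~3.14]{Ne13} and the twisted root-space description in Appendix~\ref{app:1} to handle it.
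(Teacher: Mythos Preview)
Your plan is broadly in the right direction, but it diverges from the paper's proof in two notable ways.

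First, on the passage from \eqref{eq:holindcon} to anti-dominance: the paper does \emph{not} reduce to simple roots. It checks $\lambda((\alpha,n)^\vee) \leq 0$ directly for every $(\alpha,n) \in \Delta^+$, splitting into $n=0$ (where this is the lowest-weight condition for the irreducible $\rho^L$ relative to $\Delta_0^+$ on $\fk^\phi_\C$) and $n>0$ (where the coroot formula \eqref{eq:coroot} in Appendix~\ref{app:1}, together with \eqref{eq:holindcon} applied to root vectors $z \in \fk_\C^{(\alpha,n)} \otimes e_n$, gives the inequality immediately). So the reduction you flag as the ``main obstacle'' is simply not needed. Integrality is handled separately by exhibiting an $\su(2,\C)$-triple $\fk(\alpha,n)$ inside the compact real form for which $\exp(2\pi i (\alpha,n)^\vee) = \one$ in $T^\sharp$; since $\lambda$ exponentiates to a character of $T^\sharp$, this forces $\lambda((\alpha,n)^\vee) \in \Z$. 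You gloss over this step.

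Second, and more importantly, for the converse direction the paper does not rely on an abstract ``if and only if'' inducibility criterion from \cite{Ne13}. For necessity of \eqref{eq:holindcon} it cites \cite[Prop.~5.6]{Ne14a}; for sufficiency it instead invokes Kac's unitarity theorem \cite[Thm.~11.7]{Ka85} for the lowest weight module $L(\lambda, -\Delta^+)$ of $\hat\cL_\phi(\fk_\C)$, and then uses \cite[Thm.~C.6]{Ne14a} (as in the proof of \cite[Thm.~5.10]{Ne14a}) to deduce that $(\rho^L, E_\lambda)$ is holomorphically inducible. In other words, sufficiency is established by pointing to an existing unitary representation rather than by verifying an abstract positivity-of-kernel criterion. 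Whether \cite[Thm.~3.12]{Ne13} alone would deliver sufficiency here is not obvious: condition \eqref{eq:holindcon} is an infinitesimal positivity at the identity coset, and promoting it to global positive-definiteness of a reproducing kernel on the infinite-dimensional space $G/L$ is precisely where the existence of the unitary lowest-weight module is doing real work in the paper's argument. Your alcove-counting argument for finiteness is fine; the paper leaves that step implicit.
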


\begin{proof} Since the representation $\rho^L$ of the compact group $L^{\sharp}$ is a direct sum of irreducible representations,
we may assume that
it is  a representation with lowest weight $\lambda$ with respect to the positive
system of roots $\Delta_0^+$ of $(\fk_\C^\phi,\ft^\circ_\C)$.

The necessity of \eqref{eq:holindcon} follows from  \cite[Prop.~5.6]{Ne14a}.
To show that
$\lambda$ is anti-dominant for $(\hat\cL_\phi(\fk_\C), \ft_\C, \Delta^+)$,
we need that $\lambda((\alpha,n)^\vee) \leq 0$ for $(\alpha,n) \in \Delta_+$.
We distinguish the cases $n>0$ and $n=0$.
If $n>0$, we use
\eqref{eq:coroot} in Appendix~\ref{app:1}, to see that
 \eqref{eq:holindcon} implies
$\lambda((\alpha,n)^\vee) \leq 0$ for
$0 \not=\alpha \in \Delta_0$.
For $n=0$, the assertion follows from $\lambda(\beta^\vee) \leq 0$ for $\beta \in \Delta_0^+$.

Next, we prove the integrality of $\lambda$.
For $\alpha \not=0$, the relation
\begin{equation}
  \label{eq:2}
\exp(2\pi i (\alpha,n)^\vee) = \one
\end{equation}
in $T^\sharp$
follows from the fact that
\[ \fk(\alpha,n) := \Spann_\R \{ x \otimes e_n- x^* \otimes e_{-n},
i(x \otimes e_n + x^* \otimes e_{-n}), i(\alpha,n)^\vee  \} \cong \su(2,\C). \]
Since $\lambda$ corresponds to a character of $T^\sharp$, the relation
\eqref{eq:2}
implies that \[\exp(2\pi i \lambda((\alpha,n)^\vee)) = 1,\]
so that
$\lambda((\alpha,n)^\vee)\in \Z$. We conclude that $\lambda$ is anti-dominant integral.

We now argue that every integral, anti-dominant weight $\lambda$
as in \eqref{eq:loweight} specifies a holomorphically
inducible representation $(\rho^L,E_{\lambda})$ of $L^{\sharp}$.
In fact, the unitarity of the corresponding lowest weight module
$L(\lambda,-\Delta^+)$ of $\hat\cL_\phi(\fk_\C)$ (\cite[Thm.~11.7]{Ka85})
can be used as in the proof of \cite[Thm.~5.10]{Ne14a} to see with
\cite[Thm.~C.6]{Ne14a} that $(\rho^L,E_{\lambda})$ is holomorphically inducible.
\end{proof}

The following theorem is well-known for untwisted loop groups $\cL(K)$, but
we did not find an appropriate
statement in the literature for the twisted case.
It requires some refined methods based on holomorphic induction
which we draw from \cite{Ne14a}.

\begin{Theorem} \label{thm:7.16} 
If $K$ is $1$-connected and
$(\rho, \cH)$ is a positive energy representation of $\hat\cL_\Phi(K)$,
then its restriction to $\cL^\sharp_\Phi(K)$
is a finite direct sum of factor representations of
type $I$, hence in particular a direct sum of irreducible representations.
\end{Theorem}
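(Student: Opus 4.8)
The plan is to reduce to a minimal periodic representation, extract the minimal energy subspace, and show that it carries a bounded representation of a compact group, whence complete reducibility propagates back up. First I would invoke Corollary~\ref{cor:borch}: since $\alpha_N = \id_{\cL_\Phi(K)}$, we may pass to the minimal representation $\rho_0$, whose restriction to $G^\sharp = \cL^\sharp_\Phi(K)$ generates the same von Neumann algebra, $\rho_0(\hat G)'' = \rho(G^\sharp)''$, so it suffices to treat $\rho_0$; moreover $\rho_0$ is periodic ($U_N = \one$). By Lemma~\ref{lem:1.9} (together with Remark~\ref{rem:x.1}), the minimal energy subspace $\cE = \overline{(\cH^\infty)^{\fg_\C^-}}$ is $G^\sharp$-generating, and by Remark~\ref{rem:x.1} the spectral projection $p_0$ onto $\ker H_0$ maps $\cH^\infty$ into itself, so $\cE \cap \cH^\infty$ is dense in $\cE$.

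The heart of the argument is then to analyze the representation of $\hat L = \Fix_\alpha(\hat G) \cong \T \times K^\Phi \times \R$ on $\cE$. Because $H_0 \geq 0$ vanishes on $\cE$ and $\cE$ consists (densely) of lowest-weight vectors for the triangular decomposition $\hat\g_\C = \hat\g_\C^+ \oplus \hat\g_\C^0 \oplus \hat\g_\C^-$, the subalgebra $\hat\g^0_\C$ — in particular the reductive part $\fk^\phi_\C$ — preserves $\cE$, and the resulting representation $\rho^L$ of $L^\sharp \cong \T\times K^\Phi$ on $\cE$ is a \emph{smooth} unitary representation of a \emph{compact} Lie group. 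I would verify that it is in fact norm-continuous: the operators $\dd\rho(x)$ for $x\in\ft^\circ$ act on $\cE$ with spectrum controlled by the positive-energy/semiboundedness estimates (Proposition~\ref{Prop:propsembo} applied to $\bS^1$), and since all energies on $\cE$ are zero, the weights appearing are confined to a bounded region of the weight lattice at each fixed central charge $\lambda(C)=ic$. Thus $\rho^L$ is a bounded representation of the compact group $L^\sharp$, hence a finite direct sum of irreducibles; by Proposition~\ref{prop:7.22} each summand is the lowest-weight representation $E_{\lambda_j}$ attached to an anti-dominant integral weight $\lambda_j$ with $\lambda_j(C) = ic$, and there are only finitely many such weights.

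Finally I would transport this decomposition of $\cE$ to all of $\cH$ by holomorphic induction. Each isotypic component of $\cE$ under $L^\sharp$ generates, under $G^\sharp$, a subrepresentation holomorphically induced from the corresponding $E_{\lambda_j}$ (Definition~\ref{def:holind}); by the uniqueness of holomorphically induced representations (\cite[Def.~3.10]{Ne13}) each such subrepresentation is the irreducible (or factorial) lowest-weight representation $\cH_{\lambda_j}$ of the affine Kac--Moody algebra, and since $\cE$ is $G^\sharp$-generating these subrepresentations exhaust $\cH$. Hence $\rho\res_{G^\sharp}$ is a finite direct sum of the factor (type~$I$) representations $\cH_{\lambda_j}$, which gives the claim. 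The main obstacle I anticipate is the middle step: proving that the a priori merely \emph{smooth} representation $\rho^L$ of the compact group $L^\sharp$ on the infinite-dimensional space $\cE$ is norm-continuous (equivalently, that only finitely many weights occur). This requires combining the semiboundedness estimate of Proposition~\ref{Prop:propsembo} with the integrality constraint from Lemma~\ref{lem:integralitymeasure}/Remark~\ref{rem:7.13} to pin the central charge, and then the structure theory of affine Kac--Moody algebras (\cite[Ch.~11]{Ka85}) to bound the set of admissible anti-dominant integral weights at that charge; the twisted case needs the careful bookkeeping of $\Delta^+$ from Appendix~\ref{app:1} and the methods of \cite{Ne14a} rather than a direct citation.
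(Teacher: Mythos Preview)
Your overall strategy---reduce to a minimal periodic representation via Corollary~\ref{cor:borch}, analyze the $\hat L$-representation on the minimal energy subspace $\cE$, lift the isotypic decomposition to $G^\sharp$ by holomorphic induction, and deduce finiteness from Proposition~\ref{prop:7.22}---is exactly the paper's approach.

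However, the ``main obstacle'' you anticipate is illusory, and your proposed workaround is unnecessary. You worry that the representation $\rho^L$ of $L^\sharp \cong \T \times K^\Phi$ on $\cE$ is only smooth, and propose to establish norm-continuity via the semiboundedness estimates of Proposition~\ref{Prop:propsembo}. But $L^\sharp$ is already a \emph{compact} group, so the Peter--Weyl theorem applies to any strongly continuous unitary representation: $\cE$ decomposes as an orthogonal direct sum of finite-dimensional irreducibles regardless of norm-continuity. The paper makes this even more transparent by observing that, thanks to periodicity ($U_N = \one$), the image $\rho(\hat L)$ of the full group $\hat L \cong \T \times K^\Phi \times \R$ is a compact subgroup of $\U(\cH)$, so the decomposition of the $\hat L$-invariant subspace $\cH_0 \subeq \cE$ into finite-dimensional irreducibles is immediate. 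The \emph{finiteness} of the isotypic components---which is a separate issue from the decomposition itself---comes entirely from Proposition~\ref{prop:7.22} (finitely many anti-dominant integral weights at a fixed central charge), not from any boundedness of weights derived from semiboundedness. So you should drop the middle paragraph's detour through Proposition~\ref{Prop:propsembo} and Lemma~\ref{lem:integralitymeasure}; the compactness argument is both simpler and already implicit in your setup.
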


\begin{proof} Since the assertion only refers to
the restriction $\rho\res_{\cL^\sharp_\Phi(K)}$, we may assume w.l.o.g.\ that
$\rho = \rho_0$ is minimal (Definition~\ref{def:mini},
Theorem~\ref{thm:BAthm}).
Then ${\alpha_N = \id_{\cL_\Phi(K)}}$ implies that $\rho$ is periodic
and that every subrepresentation is generated by the fixed points of
$U_t= \rho(\one,t) = e^{-itH}$.

In view of Remark~\ref{rem:x.1}, the space $\cH^\infty$ of smooth vectors for
$\hat G = \hat\cL_\Phi(K)$ (see \eqref{eq:gdach})
is invariant under the projections $p_n\: \cH \to \cH_n$ onto the
eigenspaces of $H = i \dd\rho(D)$. Since $\rho = \rho_0$ is minimal,
we have $\cH_n = \{0\}$ for $n < 0$ and $\cH_0$ is generating.
Now $\cH_0\cap \cH^\infty$ is contained in $\cE$, the closure
of $(\cH^\infty)^{\fg_\C^-}$ from \eqref{eq:cale}.
As the intersection $\cH_0\cap \cH^\infty$ is dense in $\cH_0$, we have
$\cH_0 \subeq \cE.$

Recall that
\begin{equation}
  \label{eq:hath}
 \hat L = \Fix_\alpha(\hat G)\cong L \times \R \cong \T \times K^\Phi \times \R.
\end{equation}
As $K^\Phi$ is compact and $U_N = \one$ follows from $\phi^N = \id_\fk$,
$\rho(\hat L)$ is a compact subgroup of $\U(\cH)$.
Hence the $\hat L$-invariant subspace $\cH_0 \subeq \cE$ is a direct sum of finite dimensional
subrepresentations. In particular, it decomposes into isotypic components
$\cE_j := E_j \otimes \cM_j$, $j \in J$,
where $\cM_j \cong B(E_j,\cH_0)^{\hat L}$ is the multiplicity
space of the (finite dimensional) irreducible
representation $(\rho^L_j, E_j)$.
It also follows that  the representation of $\hat L$ on each $\cE_j$ is
semisimple in the algebraic sense and that the irreducible subrepresentations
are of the form $E_j \otimes \psi$, $\psi \in \cM_j$. As a consequence,
every $\hat L$-invariant subspace of $\cE_j$ is of the form
$E_j \otimes \cM_j'$ for a linear subspace $\cM_j' \subeq \cM_j$.


The dense subspace $(\cH^\infty)^{\g_\C^-}$ of $\cE$ is invariant under the
projections onto the isotypic components because they are given by integration
over a compact group.\begin{footnote}
{This follows  by differentiation under the integral sign,
see \cite[Prop.~1.3.25]{GN}.}
\end{footnote} This implies that
$\cE_j \cap (\cH^\infty)^{\g_\C^-}$ is dense in $\cE_j$.
In view of the preceding discussion, we thus obtain
\[ \cE_j \cap (\cH^\infty)^{\g_\C^-}
\cong E_j \otimes \cM_j^\infty \]
for a dense linear subspace $\cM_j^\infty \subeq \cM_j$.
In view of Lemma~\ref{lem:1.9},
we now have to show that, for every $\psi \in \cM_j^\infty$, the subspace
$E_j \otimes \psi \subeq \cE$ generates an irreducible
subrepresentation of $G^\sharp = \cL^\sharp_\Phi(K)$.

For the untwisted case, i.e., $\Phi = \id_K$, this follows from
\cite[Prop.~VII.1]{Ne01}.
For the twisted case we have to invoke the machinery
of holomorphic induction described in Definition~\ref{def:holind}.
For the following argument, observe that $\cL^{\sharp}_{\Phi}(K)$
is connected by Lemma~\ref{lem:1-connectedfibers}.
On the finite dimensional subspace
$E := E_j \otimes \psi \subeq (\cH^\infty)^{\g_\C^-}$, the representation
of $\hat L$ is bounded. Hence \cite[Thm.~C.3]{Ne14a} implies that
the $\hat G$-subrepresentation $(\rho',\cH')$ of $(\rho,\cH)$ generated by
$E$ is holomorphically induced from the $\hat L$-representation
$(\rho^L,E)$. In view of \cite[Thm.~C.2]{Ne14a}, the irreducibility
of $(\rho^L,E)$ implies the irreducibility of $(\rho',\cH')$.

We have seen in the proof of Proposition~\ref{prop:7.22}
that the holomorphically inducible
irreducible representations $\rho^L$ of $\hat L$
are parametrized by a set of anti-dominant integral
weights of an affine Kac--Moody algebra $\hat\cL_\psi(\fk_\C)$
with a fixed central charge.
This implies the finiteness of the possible types.
\end{proof}

The following corollary can be used to deal with gauge groups
if the structure group $K$ is not $1$-connected.
It covers in particular the case $K =\Aut(\fk)$ that arises from
structure groups of Lie algebra bundles $\fK \to \bS^1$.


\begin{Corollary} \label{cor:7.16} {\rm(Non-connected fibers)} 
If $K$ is a compact Lie group with simple Lie algebra and
$(\rho, \cH)$ a positive energy representation of $\hat\cL_\Phi(K)$,
then its restriction to $\cL^\sharp_\Phi(K)$
is a finite direct sum of factor representations of
type $I$, hence in particular a direct sum of irreducible representations.
\end{Corollary}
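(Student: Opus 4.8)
\textbf{Proof proposal for Corollary~\ref{cor:7.16}.}
The plan is to reduce the statement for a non-simply-connected compact $K$ (with simple Lie algebra $\fk$) to the $1$-connected case treated in Theorem~\ref{thm:7.16}. First I would pass to the universal cover: let $\tilde K$ be the $1$-connected compact Lie group with Lie algebra $\fk$, so that there is a finite central subgroup $Z = \pi_1(K) \subseteq \tilde K$ with $\tilde K/Z \cong K$. The automorphism $\Phi \in \Aut(K)$ lifts canonically to an automorphism $\tilde \Phi \in \Aut(\tilde K)$ with $\Lie(\tilde\Phi) = \phi$, so we obtain a group bundle $\cK_{\tilde\Phi} \to \bS^1$ and a covering morphism $\cK_{\tilde\Phi} \to \cK_\Phi$ with kernel the (finite, possibly disconnected over $\bS^1$) bundle $\cZ$ with fibers $Z$. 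By the same reasoning as in Remark~\ref{rem:7.6}, applied to the compact connected base $\bS^1$, the group $\cL_{\tilde\Phi}(\tilde K)$ is $1$-connected (Lemma~\ref{lem:1-connectedfibers}) and maps onto $\cL_\Phi(K)$ with kernel the finite abelian group $Z^{\pi_1(\bS^1)} = Z^{\langle \Phi \rangle}$, the fixed points of $Z$ under the monodromy. Consequently, a projective positive energy representation $(\rho,\cH)$ of $\hat\cL_\Phi(K)$ pulls back to a projective positive energy representation of $\hat\cL_{\tilde\Phi}(\tilde K)$ — the positive energy condition is insensitive to the choice of lift of the $\R$-action by Remark~\ref{rem:indep-posen}, and the pullback is still smooth since the covering $\cL_{\tilde\Phi}(\tilde K) \to \cL_\Phi(K)$ is a covering morphism of Lie groups.

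Next I would invoke Theorem~\ref{thm:7.16} for the $1$-connected group $\tilde K$: the pulled-back representation, restricted to $\cL^\sharp_{\tilde\Phi}(\tilde K)$, is a finite direct sum of type~$I$ factor representations, hence a direct sum of irreducibles. The point is then that this decomposition descends. Since $\cL^\sharp_\Phi(K)$ is a quotient of $\cL^\sharp_{\tilde\Phi}(\tilde K)$ by the finite central subgroup $Z^{\langle\Phi\rangle}$ (together with a possible adjustment of the circle factor $\T$ as in Remark~\ref{rem:7.13}), every $\cL^\sharp_\Phi(K)$-invariant closed subspace is $\cL^\sharp_{\tilde\Phi}(\tilde K)$-invariant and conversely: the finite central kernel acts by scalars on each irreducible summand of the $\tilde K$-decomposition (Schur), and since $\rho$ was a genuine representation of $\hat\cL_\Phi(K)$ rather than merely of the cover, these scalars are all trivial on the kernel. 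Hence the von Neumann algebras generated by the two group images coincide, $\cL^\sharp_{\tilde\Phi}(\tilde K)'' = \cL^\sharp_\Phi(K)''$, and the finite type~$I$ factor decomposition for the former is a finite type~$I$ factor decomposition for the latter.

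The main obstacle I anticipate is bookkeeping the central extensions correctly: one must check that the cocycle $\omega(\xi,\eta) = \frac{1}{2\pi}\int_0^1 \kappa(\xi',\eta)\,dt$ on $\cL_\phi(\fk)$ used to form $\cL^\sharp_{\tilde\Phi}(\tilde K)$ restricts compatibly (via the identity on Lie algebras) to the one used for $\cL^\sharp_\Phi(K)$, and that no spurious central charge is introduced when passing through the quotient — this is exactly the content of the $c$-fold cover diagram in Remark~\ref{rem:7.13}, and one should note that the central charge constraint $c \in \N$ coming from integrality (Lemma~\ref{lem:integralitymeasure}) is unaffected. A secondary point is to confirm that the weights $\lambda$ appearing in Proposition~\ref{prop:7.22} that actually arise are precisely those trivial on the finite central subgroup, which again follows from the fact that $\rho$ is a representation of the correct group; but this is not needed for the factor/type~$I$ statement itself, only for an explicit parametrization. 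Since the excerpt asks only for the finite direct sum of type~$I$ factors, the proof reduces, modulo this central-extension bookkeeping, to the already-established $1$-connected case plus the descent argument above.
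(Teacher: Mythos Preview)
There is a genuine gap. The corollary is precisely about the case where $K$ --- and hence $\cL_\Phi(K)$ --- may fail to be connected (the heading reads ``Non-connected fibers'', and the text singles out $K = \Aut(\fk)$). Your reduction assumes both that $\tilde K/Z \cong K$ and that $\cL_{\tilde\Phi}(\tilde K) \to \cL_\Phi(K)$ is surjective. The first fails whenever $K$ is disconnected, since the $1$-connected $\tilde K$ only covers the identity component $K_0$. The second fails even for \emph{connected} $K$: because $\cL_{\tilde\Phi}(\tilde K)$ is connected, its image can only be $\cL_\Phi(K)_0$, and a twisted loop in $K$ lifts to $\tilde K$ exactly when its class in $\pi_0(\cL_\Phi(K)) \cong \pi_1(K)/\im(\pi_1(\Phi)-\id)$ is trivial --- a group that is nontrivial whenever $\pi_1(K)$ is. Your descent argument therefore only yields the decomposition for the restriction to $\cL_\Phi^\sharp(K)_0$, not for $\cL_\Phi^\sharp(K)$.

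The paper's proof supplies exactly the missing step. It uses the exact sequence
\[
\one \to \pi_1(K)/\im(\pi_1(\Phi)-\id) \into \pi_0(\cL_\Phi(K)) \onto \pi_0(K)^\Phi \to \one
\]
to show that $\cL_\Phi(K)_0$ has \emph{finite index} in $\cL_\Phi(K)$; identifies $\cL_\Phi(K)_0$ with $\cL_\Phi(\tilde K_0)$ so that Theorem~\ref{thm:7.16} applies to the restriction of $\rho$ to this normal subgroup (this is essentially your argument, correctly confined to the identity component); and then invokes Theorem~\ref{thm:typeicrit}, an abstract result asserting that if $N \trianglelefteq G$ has finite index and $\pi|_N$ decomposes discretely with finitely many isotypes, then so does $\pi$. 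The central-extension bookkeeping you flag as the main obstacle is a side issue; the real difficulty is the finite group $\pi_0(\cL_\Phi(K))$ permuting the $\cL_\Phi^\sharp(K)_0$-isotypes, which requires the separate argument of Theorem~\ref{thm:typeicrit}.
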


\begin{proof} Since $K$ is compact with simple Lie algebra, the
groups $\pi_0(K)$ and $\pi_1(K)$ are finite.
Therefore the exact sequence
\[ {\bf 1} \to \pi_1(K)/\im(\pi_1(\Phi) -\id)
\into \pi_0(\cL_\Phi(K)) \onto \pi_0(K)^\Phi \to {\bf 1}  \]
from \cite[Rem.~2.6]{NW09}(a)
implies that $\pi_0(\cL_\Phi(K))$
is finite. The identity component $\cL_\Phi(K)_0$ is isomorphic to
$\cL_\Phi(\tilde K_0)$, where $\tilde K_0$ is the simply connected covering
of the identity component $K_0$ of~$K$.
Now the assertion follows by combining Theorem~\ref{thm:7.16}
with Theorem~\ref{thm:typeicrit}.
\end{proof}

\begin{Remark} \label{rem:8.10} (Explicit aspects of the Borchers--Arveson Theorem)

(a) Let $(\rho, \cH)$ be a positive energy representation of
$\hat\cL_\Phi(K)$ for which the restriction $\rho^{\sharp}$ to
$\cL^\sharp_\Phi(K)$ is isotypic.
Then the proof of Theorem~\ref{thm:7.16} shows that
$\rho^{\sharp}$  is holomorphically induced from
$(\rho^L,\cE)$, where $\cE \cong E \otimes \cM$ and $(\rho^L,E)$ is an irreducible
representation of $L^\sharp$, and hence of $K^\Phi$.

That the representation is basic,
$U_\R \subeq \rho(G^\sharp)''$,  is equivalent to
$U_\R$ commuting with the commutant $\rho(G^\sharp)'$. Since the restriction to $\cE$
yields an isomorphism $\rho(G^\sharp)' \to
\rho^L(L^\sharp)' = \rho^L(L)'$ (\cite[Thm.~C.2]{Ne14a})
and $\cE$ is invariant under
$U_\R$, the inclusion  
$U_\R \subeq (\rho(G^\sharp)')'$ is equivalent to
$U_\R\res_{\cE} \subeq (\rho^L(L)')' = B(E) \otimes \one$.
Since $\hat L = \T \times K^{\Phi} \times \R$, where $K^{\Phi}$
is considered as a subgroup of constant sections, we have
${U_{\R}}\res_{\cE} \subseteq \rho(\hat L)'$.
The representation is therefore basic if and only if
${U_{\R}}\res_{\cE} \subseteq \rho(\hat L)' \cap \rho^L(L)'' = \C\one$, that is,
if and only if $U$ acts on $\cE$ by a
character.

(b) We construct an example which is not basic,
but which is factorial on~$G^{\sharp}$.
 Let $(\rho, \cH)$ be an irreducible positive energy representation of $\hat G
= \hat\cL_\Phi(K)$. For any nontrivial character
$\chi \: \R \to \T$, the representation
$\rho \oplus (\hat\chi \otimes \rho)$ with
${\hat\chi(g,t) := \chi(t)}$ is factorial on $G^\sharp$,
but not on~$\hat  G$.
\end{Remark}

\subsection{The classification theorem for compact base manifolds}
\label{subsec:8.3}

Let $M$ be a manifold on which the flow $\gamma_{M}$
has no fixed points, and let $K$ be a compact, connected, simple Lie group.
We now obtain a full classification of the projective
positive energy representations of $\Gamma_c(M,\cK)_0$
in the case where $M$ is compact,
by combining  the Localization Theorem~\ref{thm:7.11}
with the results on twisted loop groups from
\S\ref{subsec:7.4}.

\subsubsection{One-dimensional manifolds with compact components}\label{sec:1dcompact}

By Theorem~\ref{thm:7.11} and Corollary~\ref{cor:locnonsimply}, every projective positive energy
representation of $\Gamma_{c}(M,\cK)_{0}$ factors through
the gauge group $\Gamma_c(S,\tilde{\cK})$ of a $1$-dimensional,
$\R$-equivariantly closed embedded submanifold $S \subseteq M$.
If $S$ is compact, then  it is the disjoint union of
finitely many circles $S_j$ on which $\R$ acts with period $T_j$.

In this subsection we assume that $S$ is a (not necessarily finite) union of circles.
The restricted gauge group $G := \Gamma_c(S,\tilde{\cK})$ is
then a restricted direct product of loop groups
$\cL_{\Phi_j}(\tilde{K}_j)$, where $\tilde{K}_j$ is the 1-connected cover
of the structure group $K_j$ of $\cK\res_{S_j}$.
On the Lie algebra level, we have a direct sum of Lie algebras
\begin{equation}
\fg \cong \bigoplus_{j\in J} \cL_{\phi_j}(\fk_{j}).
\end{equation}
As in \eqref{eq:defvanbd}, the infinitesimal generator $D$ of the $\R$-action
acts on $\xi \in \fg$ by
\begin{equation} D(\xi) = \bigoplus_{j\in J} \frac{1}{T_j}\left(\bd_{j} \xi_j + [A_j, \xi_j]\right),
\end{equation}
where
$ A_j \in \cL_{\phi_{j}}(\fk)$ is determined by the $\R$-action
according to \eqref{eq:tilde-D}.

Let $(\dd\rho,\cH)$ be a positive energy representation
of $\g^{\sharp} = \R C \oplus_{\omega} \fg$
with cocycle
\[ \omega(\xi,\eta) = \sum_{j \in J} \frac{c_j}{2\pi} \int_{0}^{1}
\kappa(\xi_{j}',\eta_{j})\, dt \qquad \mbox{ with } \quad c_j \in \N_0. \]
For each $j$, $\dd\rho$ restricts to a positive energy representation
of the centrally extended twisted loop algebra $\cL^{\sharp}_{\phi_j}(\fk_{j})$
with central charge $c_{j}$.
By Proposition~\ref{prop:7.22} and Remark~\ref{rem:indep-posen}
(cf.\ \cite[Ch.~9]{PS86} for the untwisted case),
the irreducible positive energy representations
of $\cL_{\phi_j}^{\sharp}(\fk_{j})$ with central charge $c_j$ are precisely the irreducible unitary
lowest weight representations $(\dd\rho_{\lambda}, \cH_{\lambda})$ with
integral anti-dominant weight $\lambda$ satisfying $\lambda(C) = ic_j$.
Since there are finitely many of these, the representation can be written
as a finite sum
\begin{equation}
  \label{eq:isotypes}
\cH = \bigoplus_\lambda \cH_{\lambda}^j\otimes \cM_{\lambda}^j
\end{equation}
where the sum runs over the integral anti-dominant weights of
$\cL_{\phi_j}^\sharp(\fk_j)$ with central charge $c_j$ (cf.~\eqref{eq:loweight}) and
$\cL_{\phi_j}^\sharp(\fk_j)$ acts trivially on the multiplicity
space $\cM_{\lambda}^j$
(Theorem~\ref{thm:7.16}).


Now suppose that $(\rho,\cH)$ is a positive energy factor representation
of $G^\sharp$.
Then the restriction to a normal subgroup $G_j := \cL_{\Phi_j}(\tilde{K})$
decomposes discretely with finitely many isotypes
(Theorem~\ref{thm:7.16}).
For a subset $F \subeq J$, we denote
the corresponding normal subgroup of $G$ by
\[G_F := \bigoplus_{j \in F}\cL_{\Phi_j}(\tilde{K}_{j})\,.\]
Since $G_{j}$ commutes with $G_{J \setminus \{j\}}$,
the factoriality of $\rho$ on $G^{\sharp}$ implies that the restriction of
$\rho$ to
$G_j^\sharp$ is factorial as well. Hence
there is only one summand in \eqref{eq:isotypes}, and we have
\[
\cH = \cH_{\lambda}^{j} \otimes \cH'
\]
for some multiplicity space $\cH'$.
Although a priori we only have a single operator $H$ for all components $S_j$,
we now obtain an operator $\dd\rho(\bd_j)$  satisfying
\[ [\dd\rho(\bd_j),\dd\rho(\xi_i)] = \delta_{ij}\dd\rho(\xi'_j)  \]
from the minimal implementation\footnote{One could use the Segal--Sugawara
construction
(\cite[\S3]{KR87}, \cite{GW84}), but this leads to a non-zero minimal eigenvalue;
see \S\ref{subsec:9.3} for more details.}
in
Corollary~\ref{cor:borch}.

Since $H':= H - \frac{i}{T_j}\dd\rho(\bd_j + A_j)$
commutes with $\hat\cL_{\phi_j}(\fk)$, we obtain
a positive energy representation on $\cH'$ with Hamiltonian $H'$, but now
for the group $G^{\sharp}_{J \setminus\{j\}}$.
Continuing this way, we obtain for each
$j \in J$ an integral anti-dominant weight $\lambda_j$ of central charge $c_j$,
and for each finite subset $F \subeq J$ a tensor product decomposition
\begin{equation}\label{eq:eenlabel}
\rho = \rho_F \otimes \rho_F', \qquad \cH = \cH_F \otimes \cH_F' \quad \mbox{ with }\quad
\cH_F := \bigotimes_{j \in F} \cH_{\lambda_{j}}
\end{equation}
into positive energy representations for the gauge groups $G_F^\sharp$ and
$G_{J \setminus F}^\sharp$.

%
%

\subsubsection{Compact base manifolds}

For gauge groups over a compact base manifold $M$, we thus obtain
the following classification result.
It contains in particular Torresani's classification
for linear flows on a torus; see \cite{To87} and \cite[\S5.4]{A-T93}.

\begin{Theorem}\label{Thm:ClassificationCompactM}
Let $M$ be a compact manifold with a fixed point free $\R$-action~$\gamma_{M}$, and let $\cK \rightarrow M$ be a bundle of Lie groups with compact, simple, connected fibers.
Let $\overline{\rho} \colon \Gamma(M,\cK)_{0} \rtimes \R \rightarrow \PU(\cH)$
be a minimal projective positive energy representation with respect to a lift $\gamma$ of the $\R$-action to $\cK$.
Then there exist finitely many
$\R$-orbits $S_j \subseteq M$, $j \in J$, with central charge $c_j\in \N_{0}$ 
such that $\overline{\rho}$ arises by factorization from an isotypic positive energy
representation $\rho_{S}$ of
\[
\hat{G} = G^{\sharp}\rtimes \R, \quad\text{where}\quad
G := \Gamma(S,\tilde \cK)
\simeq  \prod_{j\in J}\cL_{\Phi_j}(\tilde{K}_{j})\,.
\]
If $\rho_{S}$ is irreducible, then
\[\cH = \bigotimes_{j\in J} \cH_{\lambda_j}\]
is a tensor product of lowest weight
representations $(\rho_{\lambda_j},\cH_{\lambda_{j}})$
of the corresponding Kac--Moody group $\hat{\cL}_{\Phi_j}(\tilde{K}_j)$,
where $\lambda_j$ is an integral anti-dominant weight
of central charge $c_j$.
On the level of the Lie algebra
\[\hat{\fg} = \R C \times_{\omega} \Big( \bigoplus_{j\in J} \cL_{\phi_j}(\fk) \rtimes
\R D\Big),\]
the central element acts by
$\dd\rho(C) = i\one$, and the generator $D$ acts by
\[
\dd\rho(D) = \sum_{j\in J}\frac{1}{T_j}\dd\rho_{\lambda_j}(\bd_{j} + A_{j})\,,
\]
where $A_j \in \cL_{\phi_{j}}(\fk)$ is specified by the
$\R$-action on $G$.
 \end{Theorem}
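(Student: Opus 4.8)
The plan is to assemble the statement from the three main ingredients developed earlier in the paper: the Equivariant Localization Theorem~\ref{Thm:equivarloc} (or its $\R$-version, Theorem~\ref{thm:7.11}), the Borchers--Arveson reduction to minimal representations (Corollary~\ref{cor:borch}), and the classification of positive energy representations of twisted loop groups from~\S\ref{subsec:7.4}. Since $M$ is compact and the flow $\gamma_M$ is fixed-point free, Theorem~\ref{thm:7.11} produces a $1$-dimensional, closed, embedded, flow-invariant submanifold $S\subseteq M$ through which $\ol\rho$ factors; because $M$ is compact, $S$ is a \emph{finite} union of periodic orbits $S_j$, $j\in J$, each of period $T_j$ under the flow. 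By Corollary~\ref{cor:locnonsimply} (using that $K$ is connected, and replacing $\cK$ by $\tilde\cK$ with $1$-connected fibers) and Lemma~\ref{lem:1-connectedfibers}, we reduce to a positive energy representation $\rho_S$ of the $1$-connected group $G=\Gamma(S,\tilde\cK)_0\simeq \prod_{j\in J}\cL_{\Phi_j}(\tilde K_j)$. Since the original representation is assumed minimal, applying Corollary~\ref{cor:borch} to $\rho_S$ (the correction term $W_t$ lies in the commutant) lets us assume $\rho_S$ is minimal, hence in particular periodic since some power $\alpha_N$ is the identity on $\cL_\Phi(\tilde K)$.

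\textbf{Key steps in order.} First I would record the geometric output of localization: $S=\bigsqcup_{j\in J}S_j$ with $|J|<\infty$, and fix the equivariant identifications $\Gamma(S_j,\tilde\cK)\cong\cL_{\Phi_j}(\tilde K_j)$ from~\S\ref{subsec:8.1}, so that the derivation $D$ decomposes as $D\xi=\bigoplus_j\frac1{T_j}(\bd_j\xi_j+[A_j,\xi_j])$ with $A_j\in\cL_{\phi_j}(\fk)$ as in~\eqref{eq:tilde-D}. Second, by Theorem~\ref{MeasureThm} the cocycle $\omega$ on $\fg\rtimes_D\R$ is given by a flow-invariant measure on $S$, which on each circle $S_j$ (being $1$-dimensional) is a multiple $c_j\in\N_0$ of the normalized rotation-invariant measure; this is exactly where the integrality $c_j\in\N_0$ enters, via Lemma~\ref{lem:integralitymeasure}. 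Third, I would run the argument of~\S\ref{sec:1dcompact} verbatim: restrict $\rho_S$ to each normal subgroup $\cL^\sharp_{\Phi_j}(\tilde K_j)$, invoke Theorem~\ref{thm:7.16} to see that the restriction is a finite direct sum of type~I factors with isotypes indexed by integral anti-dominant weights $\lambda$ of central charge $c_j$ (Proposition~\ref{prop:7.22}), and use the commutation $[\cL_{\Phi_i},\cL_{\Phi_j}]=0$ together with factoriality on $G^\sharp$ to peel off one tensor factor $\cH_{\lambda_j}$ at a time, as in~\eqref{eq:eenlabel}. Iterating over the finite set $J$ yields $\cH\cong\bigotimes_{j\in J}\cH_{\lambda_j}$ in the irreducible case. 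Fourth, for the Hamiltonian: on each step the operator $H':=H-\frac{i}{T_j}\dd\rho(\bd_j+A_j)$ commutes with $\hat\cL_{\phi_j}(\fk)$, so minimality of $H$ forces the minimal implementation of $\bd_j$ on $\cH_{\lambda_j}$ (lowest eigenvalue $0$) and, summing the contributions, $\dd\rho(D)=\sum_{j\in J}\frac1{T_j}\dd\rho_{\lambda_j}(\bd_j+A_j)$; that $\dd\rho(C)=i\one$ is built into the definition of a positive energy representation (Definition~\ref{Posendef}). Finally, factorization at the group level from the Lie algebra statement is the standard argument using Lemma~\ref{lem:7.11} and the $1$-connectedness of $\Gamma(S,\tilde\cK)$ (Lemma~\ref{lem:1-connectedfibers}), exactly as in the proof of Theorem~\ref{thm:7.11}.

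\textbf{Main obstacle.} The delicate point is the passage from "factorial on $G^\sharp$" to the clean tensor-product decomposition with a \emph{single} anti-dominant weight per component and a consistent Hamiltonian. A priori one has only one operator $H$ for the whole of $S$, not a separate energy operator per circle; the subtlety is to manufacture the commuting operators $\dd\rho(\bd_j)$ with $[\dd\rho(\bd_j),\dd\rho(\xi_i)]=\delta_{ij}\dd\rho(\xi_j')$ from the Borchers--Arveson minimal implementation on the normal subgroup $\cL^\sharp_{\Phi_j}(\tilde K_j)$ and to check that, after subtracting $\frac{i}{T_j}\dd\rho(\bd_j+A_j)$, the remainder $H'$ is again a \emph{positive} energy Hamiltonian for the complementary group $G^\sharp_{J\setminus\{j\}}$. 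This requires Remark~\ref{rem:indep-posen} (independence of the positive energy condition from the choice of lift $\bv$, equivalently from $A_j$ and $T_j$) together with the explicit Borchers--Arveson analysis of Remark~\ref{rem:8.10}, and the fact that $\rho(\hat L_j)$ is a compact group of unitaries so that its isotypic decomposition is available on smooth vectors. Everything else is bookkeeping: since $J$ is finite, the inductive peeling terminates, and no convergence issues of the kind that plague the noncompact case (infinite tensor products, type~II/III phenomena) arise here.
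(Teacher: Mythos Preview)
Your proposal is correct and follows essentially the same route as the paper: localize via Theorem~\ref{thm:7.11}/Corollary~\ref{cor:locnonsimply} to a finite union of periodic orbits, identify $\Gamma(S,\tilde\cK)$ with a finite product of twisted loop groups as in~\S\ref{subsec:8.1}, and then run the peeling argument of~\S\ref{sec:1dcompact} (Theorem~\ref{thm:7.16}, Proposition~\ref{prop:7.22}, equation~\eqref{eq:eenlabel}) to obtain the tensor product $\bigotimes_{j\in J}\cH_{\lambda_j}$ and the formula for $\dd\rho(D)$.

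The one ingredient the paper adds and you omit is the verification that the diagonal subgroup $Z$ (the image of $Z_{[M]}$ in $Z_{[S]}$, arising from the passage to $1$-connected fibers in Corollary~\ref{cor:locnonsimply}) is central not only in $G$ but in the extension $\hat G$, and hence acts by a character in every factor representation. The paper obtains this by observing that $Z$ sits inside the connected group $\prod_{j}(\tilde K_j)^{\Phi_j}$ of constant sections, whose Lie algebra $\bigoplus_j\fk^{\phi_j}$ lies in the radical of $\omega$. This check is not strictly needed for the forward direction you prove (the representation $\rho_S$ already descends from $\ol\rho$, so $Z$-triviality is automatic), but it makes the classification tight: it shows that \emph{every} tensor product $\bigotimes_j\cH_{\lambda_j}$ with the stated data actually does arise from a representation of $\Gamma(M,\cK)_0$, with no hidden constraint on the $\lambda_j$ coming from the cover $\tilde\cK\to\cK$.
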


\begin{proof}
Since $M$ is compact, the $\R$-invariant, embedded,
one-dimensional submanifold $S$ is a union of finitely many periodic orbits.
By Corollary~\ref{cor:locnonsimply}, the projective positive
energy representation $\overline{\rho}$ of $\Gamma(S,\cK)_0$ thus
arises by factorization from
a projective positive energy representation of
$G =  \Gamma(S,\tilde \cK)$, which is trivial on the image $Z$ of the diagonal
group $Z_{[M]}$ in $Z_{[S]}$ (cf.\ Remark~\ref{rem:7.6}).
It then follows from \eqref{eq:eenlabel} and the discussion in
\S\ref{sec:1dcompact}
that every factorial positive energy
representation is a multiple of a product of
lowest weight representations as described above.
The only thing left to check is that this representation restricts to a character
on the image $Z$ of $Z_{[M]}$ in~$G$.
Since $Z$ is a subgroup of the central group
$Z_{[S]} = \prod_{j\in J} \pi_1(K_{j})^{\Phi_j}$,
it is in particular contained in the
group $\prod_{j\in J} (\tilde{K}_j)^{\Phi_{j}}$
of constant sections, which is connected by \cite[Thm.~12.4.26]{HN12}.
Its Lie algebra
$\bigoplus_{j\in J}\fk^{\phi_j}$ is contained in the
radical of the cocycle~$\omega$.
Since $\cL(\tilde{K}_j)^{\Phi_j}$ is 1-connected (Lemma~\ref{lem:1-connectedfibers}), this implies that $Z$ is not only central in $G$, but also in $\hat{G}$.
In particular, every factor representation restricts to
a character on~$Z$.
\end{proof}

\begin{Remark} {\rm(Semisimple groups)}
In Theorem~\ref{Thm:ClassificationCompactM}, the restriction to simple fibers
is by no means essential.
For Lie group bundles $\cK \rightarrow M$ with compact semisimple, 
$1$-connected fibers, the representation still localizes
to an embedded {$1$-dimensional}
submanifold $S \subseteq M$ by Theorem~\ref{thm:7.11}. As $M$ is compact,
$S$ consists of finitely many circles $S_{j}$.
Since the fibers of $\cK \rightarrow M$ are $1$-connected,
the passage from $M$ to the finite cover $\hat{M}$ (Theorem~\ref{reductienaarsimpel}) yields
not only a Lie algebra bundle $\hat{\fK} \rightarrow \hat{M}$,
but also a Lie group bundle $\hat{\cK} \rightarrow \hat{M}$ with simple, compact fibers.
By the same argument as in Remark~\ref{remark:overdederivatie},
the $\R$-action on $\cK\rightarrow M$ lifts to $\hat{\cK} \rightarrow \hat{M}$.
Applying Theorem~\ref{Thm:ClassificationCompactM} to
$\hat{\cK} \rightarrow \hat{M}$, we find that the minimal factorial positive
energy representations are again multiples of the irreducible ones.
The latter are now parametrized
by embedded circles $\hat{S}_{j,r} \subseteq \widehat{M}$, together
with an integral anti-dominant weight $\lambda_{j,r}$ with central
charge $\lambda_{j,r}(C) = ic_{j,r}$.
Here, the circle $\hat{S}_{j,r} \subseteq \hat {M}$ is a finite cover of the circle
$S_{j} \subseteq M$. The weight
$\lambda_{j,r}$ is associated to the Kac--Moody algebra
$\hat{\cL}_{\Phi_{j,r}}(\fk_{j,r})$, where $\fk_{j,r}$ is a simple ideal
in the semisimple Lie algebra $\fk_{j}$, and $\Phi_{j,r}$ is the smallest power of the holonomy
around $S_{j}$ that maps $\fk_{j,r}$ to itself.
\end{Remark}

\subsection{Extensions to non-connected groups}\label{sec:nonconnected}

In this subsection we discuss several phenomena related to
non-connected variants of the group $G$. Dealing with
non-connected groups is typically more complicated because
they may not have a simply connected covering group,
nor do central extensions or representations of the
identity component always extend to the whole group.

This suggests the following classification scheme to deal with projective
positive energy representations of $G \rtimes_\alpha \R$ if
$G$ is not connected:
\begin{itemize}
\item determine which central extensions of $G_0$
extend to the non-connected groups $G$.
\item determine which of these do this in an $\R$-equivariant fashion.
This leads us to central extensions of the non-connected group
$G \rtimes_\alpha \R$.
\item determine the irreducible positive energy representations
of the non-con\-nected
groups $\hat G$ in terms of the representations of $\hat G_0$
(this may be carried out with Mackey's method of unitary induction, as in \cite{TL99}).
\end{itemize}

The following factorization theorem reduces
the classification of the irreducible representations to the corresponding
problem for the identity component $G_0$ and the group
$\pi_0(G)$ of connected components. It shows in particular that
no additional difficulties arise if $K$ is a $1$-connected simple group.
We shall use the notation
$G \to \pi_0(G), g \mapsto [g]$ for the quotient map.
\begin{Theorem} \label{thm:factorize-conncomp}
{\rm(Factorization Theorem for non-connected gauge groups)}
Suppose that $K$ is a $1$-connected simple compact Lie group, that $M$ is compact
and that $\bv_M$ has no zeros.
Then 
every positive energy representation
$(\rho,\cH)$ of $G^\sharp = \Gamma(M,\cK)^\sharp$
can be written as $\rho(g) = \rho'(g) \zeta([g])$, where 
$\rho'$ factors through a 1-dimensional, closed, $\R$-equivariantly embedded
submanifold $S\subseteq M$, and
$\zeta \: \pi_0(G) \to \U(\cH)$ is a representation that commutes with
$\rho'(G^\sharp_0) = \rho(G^\sharp_0)$. In particular, every irreducible
positive energy representation of $G^{\sharp}$ is of the form $\rho' \otimes \zeta$ where both
$\rho'$ and $\zeta$ are irreducible, and, conversely, any such tensor product is irreducible.
\end{Theorem}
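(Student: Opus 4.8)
The plan is to combine the Localization Theorem~\ref{thm:7.11} (together with its non-simply-connected refinement in Corollary~\ref{cor:locnonsimply} and Remark~\ref{rem:7.6}) with the Borchers--Arveson machinery of \S\ref{sec:minimal} and the loop-group structure theory of \S\ref{subsec:8.1}--\S\ref{subsec:8.3}. Since $K$ is $1$-connected simple compact, no passage to a finite cover $\widehat M$ is needed, and since $\bv_M$ has no zeros, Theorem~\ref{thm:7.11} applies directly: the derived representation $\dd\rho$ of $\Gamma(M,\fK)$ vanishes on the vanishing ideal $J_S$ of a closed, $\R$-equivariantly embedded $1$-dimensional submanifold $S\subseteq M$, which is a finite union of circles because $M$ is compact. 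The first step is therefore to upgrade this to the group level: because $K$ is $1$-connected, $G_0 = \Gamma(M,\cK)_0$ and $\Gamma(S,\cK)$ are both handled by Lemma~\ref{lem:7.11} and Lemma~\ref{lem:1-connectedfibers}, so $\rho\res_{G_0^\sharp}$ factors through the restriction $r_S \colon \Gamma(M,\cK)_0 \to \Gamma(S,\cK) \cong \prod_{j\in J}\cL_{\Phi_j}(K)$; call the resulting representation $\rho'$ on $G_0^\sharp$ and extend it (trivially on the center, using that it agrees with $\rho$ there) to all of $G^\sharp$ as a possibly-projective representation of $\pi_0(G)\ltimes$-type, which we will disentangle below.

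Next I would invoke Corollary~\ref{cor:borch} to reduce to the \emph{minimal} representation: replacing $\rho$ by $\rho_0(g,t) = \rho(g,t)W_t^{-1}$ with $W$ in the commutant changes nothing about the restriction to $G^\sharp$, and by Theorem~\ref{thm:BAthm}(iii) (since $\alpha_N = \id$ because $\phi^N = \id_\fk$) makes the representation periodic. Now the key structural input: on $\rho(G^\sharp_0)$ each loop-group factor $\cL^\sharp_{\Phi_j}(K)$ acts as a positive energy representation, which by Theorem~\ref{thm:7.16} decomposes as a finite direct sum of type~I factor representations, hence is a direct sum of irreducibles with finitely many isotypes. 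The point of \S\ref{sec:1dcompact} is that on an isotypic/factorial piece one gets a tensor splitting $\cH = \big(\bigotimes_{j\in J}\cH_{\lambda_j}\big)\otimes \cH'$ as in \eqref{eq:eenlabel}, and $\rho(G_0^\sharp)$ acts only on the first factor via the lowest weight representations of the Kac--Moody groups. The group $\pi_0(G)$ acts on $\cH$ by intertwiners for $\rho(G^\sharp_0)$ (since the $\pi_0(G)$-action normalizes $G_0$ and stabilizes $S$), and because $\bigotimes_j \cH_{\lambda_j}$ is already irreducible under $\rho(G^\sharp_0)$, Schur's Lemma forces every element of $\pi_0(G)$ to act as (scalar on the first factor) $\otimes$ (operator $\zeta$ on $\cH'$); absorbing the scalar cocycle we obtain an honest representation $\zeta\colon\pi_0(G)\to\U(\cH')\subseteq\U(\cH)$ commuting with $\rho'(G_0^\sharp)$, and $\rho(g) = \rho'(g)\zeta([g])$. (One must check $g\mapsto\rho'(g)$ is still a homomorphism on all of $G^\sharp$, which follows from normality of $G_0$ and the fact that $\rho'$ and $\zeta$ together reconstruct $\rho$.)

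Finally, for the equivalence ``$\rho$ irreducible $\iff$ $\rho'$ and $\zeta$ both irreducible,'' I would argue as follows. If $\rho = \rho'\otimes\zeta$, then $\rho(G^\sharp)'' \supseteq B(\cH_{\mathrm{KM}})\otimes\one$ and $\rho(G^\sharp)'' \supseteq \one\otimes\zeta(\pi_0(G))''$ (using that $\rho'$ generates the full algebra on the lowest-weight tensor factor, an irreducibility statement one extracts from \cite[Thm.~C.2]{Ne14a} via holomorphic induction as in the proof of Theorem~\ref{thm:7.16}); hence $\rho(G^\sharp)'' = B(\cH_{\mathrm{KM}})\otimes\zeta(\pi_0(G))''$, which is $B(\cH)$ precisely when $\zeta(\pi_0(G))'' = B(\cH')$, i.e.\ when $\zeta$ is irreducible. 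Conversely, if $\rho$ is irreducible it is a fortiori factorial on $G^\sharp$, so the above discussion applies and yields $\rho = \rho'\otimes\zeta$ with $\rho'$ a single lowest-weight tensor product (each $\lambda_j$ uniquely determined), and irreducibility of $\rho$ then forces $\zeta(\pi_0(G))'' = B(\cH')$. The main obstacle I anticipate is the bookkeeping around cocycles and central extensions of the \emph{non-connected} group: one must be careful that the projective representation of $\pi_0(G)$ arising from Schur's Lemma can be rectified to a genuine representation commuting with $\rho'(G_0^\sharp)$ without disturbing the already-fixed behaviour of $\rho$ on the central circle, and that the extension of $\rho'$ from $G_0^\sharp$ to $G^\sharp$ is well-defined — this is where the $1$-connectedness and simplicity of $K$ (ruling out the finite abelian obstruction group $Z_{[M]}$ of Remark~\ref{rem:7.6}) does the essential work.
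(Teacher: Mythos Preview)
Your overall architecture is right, and you correctly identify the central difficulty at the end: how to extend $\rho'$ from $G_0^\sharp$ to all of $G^\sharp$ without picking up a projective cocycle on $\pi_0(G)$. But you do not actually resolve this difficulty, and the paper's proof turns on precisely the observation that makes it disappear.

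The point you are missing is that the restriction map $\ev = r_S \colon G = \Gamma(M,\cK) \to G_S = \Gamma(S,\cK)$ is a group homomorphism defined on \emph{all} of $G$, not only on $G_0$. Since $K$ is $1$-connected, Lemma~\ref{lem:1-connectedfibers} gives that $G_S \cong \prod_{j\in J}\cL_{\Phi_j}(K)$ is \emph{connected}. Hence if $\rho_1$ is the positive energy representation of $G_S^\sharp$ with $\rho\res_{G_0^\sharp} = \rho_1 \circ \ev\res_{G_0^\sharp}$, then $\rho' := \rho_1 \circ \ev$ is already a genuine (linear, not projective) representation of the full group $G^\sharp$ that agrees with $\rho$ on $G_0^\sharp$. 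There is nothing to ``disentangle'': the extension is handed to you by the evaluation map, and this is exactly where the $1$-connectedness of $K$ does its work. Your remark that ``the $\pi_0(G)$-action \ldots\ stabilizes $S$'' is not quite the right formulation; what matters is that $\ev(G)$ lands in the connected group $G_S$, so conjugation by any $g\in G$ becomes an \emph{inner} automorphism of $G_S$, which fixes every equivalence class of irreducible representations of $G_S^\sharp$.

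Once you have this, the rest of your argument is essentially the paper's: the existence of the extension $\rho'$ shows that $\pi_0(G)$ acts trivially on the set of $G_0^\sharp$-isotypes, so $\rho(G^\sharp)$ preserves each isotypic subspace $\cH_j \cong \cF_j \otimes \cM_j$, and since $\rho_j$ already extends to $G^\sharp$ via $\rho'$, one reads off $\rho\res_{\cH_j} = \tilde\rho_j \otimes \zeta_j$ with a genuine representation $\zeta_j \colon \pi_0(G) \to \U(\cM_j)$. Your Schur-type analysis of the irreducible case is then correct. The Borchers--Arveson reduction you invoke is not needed for the factorization itself; it enters only indirectly through the classification of loop group representations cited from Theorem~\ref{Thm:ClassificationCompactM}.
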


\begin{proof} Let $(\rho,\cH)$ be a positive energy representation of $G^\sharp$.
From Theorem~\ref{Thm:ClassificationCompactM} we know
that the restriction of $\rho$ to $G_0^\sharp$
factors through an evaluation homomorphism
\[ \ev \: G \to G_S := \Gamma(S,\cK)
\cong \prod_{j \in J} \cL_{\Phi_j}(K),\]
that is, there exists a positive energy representation $\rho_1$ of~$G_S^\sharp$ such that
\[\rho\res_{G_0^\sharp} = \rho_1 \circ \ev\res_{G_0^\sharp}.\]
Since $K$ is $1$-connected, the groups $\cL_{\Phi_j}(K)$ are connected and therefore
$G_S$ is connected.
{\color{red}}
Then $\rho' := \rho_1 \circ \ev$ is a positive energy representation
of $G^\sharp$ that coincides with $\rho$ on $G_0^\sharp$.

This construction shows in particular that
$\pi_0(G)$ acts trivially on the set of equivalence classes of irreducible
positive energy representation
of $G_0^\sharp$.  Indeed, for every irreducible representation
$\rho_1$ of $G_S^\sharp$, the representation
$\rho'$ extends the representation $\rho_1 \circ \ev\res_{G_0^\sharp}$ to a representation of $G^\sharp$
on the same space.

As every positive energy representation of $G^\sharp$ decomposes on $G_0^\sharp$ into
irreducible ones (Theorem~\ref{Thm:ClassificationCompactM}), it follows that
$\rho(G^\sharp)$ preserves all $G^\sharp_0$ isotypic subspaces
$\cH_j \cong \cF_j \otimes \cM_j$, $j \in J$, and
on these the representation
of $G^\sharp_0$ has the form $\rho_j \otimes \one$.
Extending $\rho_j$ to a representation $\tilde\rho_j$ of $G^\sharp$,
the restriction of $\rho$ from $\cH$ to $\cH_{j}$
takes the form
$\tilde\rho_j \otimes \zeta_j$, where
$\zeta_j \: \pi_0(G^\sharp) \cong \pi_0(G) \to \U(\cM_j)$ is a unitary
representation on the multiplicity space.
Putting everything together, we obtain a factorization
$\rho = \rho' \otimes \zeta$, where $\zeta$ is a representation of
$\pi_0(G)$  that commutes with $\rho(G_0^\sharp)$.

In view of Schur's Lemma, our construction shows in particular that the
representation $\rho$ is irreducible if and only if it is
isotypical on $G_0^\sharp$, that is, $\cH = \cF \otimes \cM$,
and the representation $\zeta$ of $\pi_0(G)$ on $\cM$ is irreducible.
\end{proof}

\begin{Remark} (a) If $K$ is connected but not simply connected
and $\fk$ is a compact simple Lie algebra, then
the classification in \cite{TL99} shows that not all central extensions of
$\cL(K)_0$ extend to the whole group $\cL(K)$, so that the situation
becomes more complicated. Likewise, irreducible
projective positive energy representations of $\cL(K)_0$ do not in general
extend to the whole group $\cL(K)$. In \cite{TL99} one finds a classification
of the irreducible projective positive energy representations
of the groups $\cL(K)$ for connected simple groups~$K$.
Here the new difficulty is that the group
$\pi_0(\cL(K))\cong \pi_1(K)$ acts non-trivially on the alcove
whose intersection with the weight lattice classifies the irreducible
projective positive energy representations of the connected group
$\cL(K)_0 \cong \cL(\tilde K)$ for a fixed central charge.

(b) If we
start with a projective representation of the non-connected group
$\Gamma_c(M,\cK)$, we get a representation of the image of $\Gamma_c(M,\cK)$
in $\Gamma_c(S,\cK)$, which is a restricted direct product
of twisted loop groups.
It maps $\Gamma_c(M,\cK)_0$ onto the identity component, but
additional information is contained in the images of the other
connected components.
We then get  a projective
representation of a Lie group whose Lie algebra
is $\Gamma_c(S,\fK)$ and whose group of connected components
is an image of $\pi_0(\Gamma_c(M,\cK))$. Its action on the Lie algebra does not
permute the ideals of the type $\cL_\phi(\fk)$, so it acts on each twisted
loop algebra separately by the adjoint action of some
element of $\cL_\Phi(K)$. This suggests that one
needs a description of those Lie algebra
cocycles $\omega$ on $\Gamma_c(M,\fK)$ that actually correspond to central
Lie group extensions of the full group $\Gamma_c(M,\cK)$. Here the obstructions
lie in $H^3(\pi_0(\Gamma_c(M,\cK)), \T)$.
We refer to \cite{Ne07} for further details
on such obstructions and for methods of their computation.

(c) For a bundle of Lie groups $\cK \to M$, passing to the simply connected covering of the structure group~$K$ may
not always be possible. For this, an obstruction class in
$H^3(M,\pi_1(K))$ has to vanish (see \cite{NWW13}). Since $\pi_1(K)$ is finite
for semisimple compact groups $K$, this is a torsion class.
So for a discrete central subgroup $D \subeq K$,
every bundle with structure group $K$ factorizes to a bundle with structure group
$K/D$, but in general, not all bundles with structure group $K/D$ are of this form.
\end{Remark}

\section{The classification for \texorpdfstring{$M$}{M} noncompact}
\label{sec:9}

Even in the noncompact case, the techniques developed so far open up a number of
new perspectives.
The Localization Theorem~\ref{thm:7.11} allows us to restrict
attention to a 1-dimensional invariant 
submanifold ${S\subseteq M}$. 
If $M$ is noncompact, then $S$ can have infinitely many connected components $S_j$,
each of which is diffeomorphic to either $\R$ or $\bS^1$.
We consider these two cases separately.

In \S\ref{sec:PosenNoncptComponent} we consider the case where $S$ consists of infinitely many lines.
In order to arrive at a (partial) classification, we
impose the additional condition that the positive energy representation $(\ol{\rho}, \cH)$ admits
a cyclic ground state vector $\Omega \in \cH$ that is unique up to scalar.
In Theorem~\ref{Thm:NoncompactClassification} we show that
these \emph{vacuum representations}
are classified up to unitary equivalence by a nonzero central charge
$c_j \in \N$ for every connected component $S_j\simeq \R$.
The proof proceeds by reducing to the (important) special case $M=\R$, where the classification is
essentially due to Tanimoto~\cite{Ta11}.


In \S\ref{subsec:9.2} we consider the case where $S$ consists of infinitely many circles.
Here we impose the much less restrictive
condition that $\cH$ is a \emph{ground state representation}. This means that $\cH$ is generated by the space of ground states, but we do not require these ground states to be unique.
We show that this condition is automatically satisfied if the periods \eqref{eq:Period} of the $\R$-action are
uniformly bounded.
In Theorem~\ref{prop:9.5} we classify this type of representations in terms of $C^{*}$-algebraic data, using
techniques similar to those used in \cite{JN15}
for norm-continuous representations. The possibility of an infinite dimensional space of ground states
gives rise to interesting phenomena, such as factor representations of type II and III.

Finally, in \S\ref{sec:fixedpoint}, we briefly explore a simple situation where the $\R$-action has a fixed point.
The main thing we wish to point out is that the lift of the $\R$-action \emph{at the fixed point} has a qualitative influence
on the type of representation theory that one encounters.
In Part II of this series we develop the necessary tools to resolve the positive energy representation theory in more detail.

%
%
%
%
%

\subsection{Infinitely many lines}\label{sec:PosenNoncptComponent}
In contrast to the case of (twisted) loop groups,
the classification of projective positive energy representations of
$C^{\infty}_{c}(\R,K)$, for $K$ a compact $1$-connected
simple Lie group, is an open problem---closely
related to the
classification problem for representations of loop group nets
(cf.~\cite{Wa98,TL97} and Remark~\ref{rem:10.11}).

A large class of examples can be obtained by restricting projective positive energy representations
of the loop group\index{loop group!05@of $K$ \scheiding $\cL(K)$} $G := \cL(K)$ to
$G_{\mathrm{cs}} := C^{\infty}_{c}(\R,K)$\index{loop group!60@flat \scheiding $G_{\mathrm{cs}}$}, where the latter is considered as a subgroup
by identifying the circle
with the real projective line
$\bP^1(\R) = \R \cup \{\infty\}$.
The restriction of an irreducible projective positive energy representation
of $\cL(K)$ remains irreducible, essentially by
\cite[Cor.~IV.1.3.3]{TL97}.
In \S\ref{subsec:9.4} we show that the restriction remains of positive energy as well.
This is not a priori clear,
since the positive energy is defined in terms of rotations of
the circle for $G$ and in terms of translations on the real line for~$G_{\mathrm{cs}}$.

It is
not true that \emph{all} projective unitary positive energy representations of $G_{\mathrm{cs}}$ arise by restriction in this way,
and the classification remains an open problem.
We can, however, classify the projective positive energy representations under the additional assumption
that they admit a cyclic ground state vector which is unique up to scalar.
These  \emph{vacuum representations}
were classified by Tanimoto for the Lie algebra
of $\fk$-valued Schwartz functions \cite{Ta11}, and in \S\ref{sec:groundstates}
we use Theorem~\ref{Shylock}
to push these results to the compactly supported setting.

Finally, in \S\ref{sec:groundstatesncpt}, we classify the vacuum state representations for a noncompact
manifold $M$ with a free $\R$-action. The proof proceeds by identifying  the restricted gauge group
$\Gamma_{c}(S,\cK)$ with the weak product $\prod{}^{'}_{j}C^{\infty}_{c}(S_j,K)$, where $j$ labels the connected components $S_j \simeq \R$.
We then use the results from Appendix~\ref{subsec:a.5}, where we show that
the classification of vacuum
representations for a weak product of Lie groups reduces to the same problem for each of its factors.

\subsubsection{Restriction from $\cL(K)$ to $C^{\infty}_{c}(\R,K)$}\label{subsec:9.4}

By identifying the circle $\bS^1$
with the real projective line
$\bP^1(\R) = \R \cup \{\infty\}$, we can
consider $G_{\cs} := C^{\infty}_{c}(\R,K)$ as a subgroup of the loop group
$G := \cL(K)$. \index{flat loops \scheiding $G_{\cs}$}

Note that the natural $\R$-action by translations on $G_{\cs}$ does not agree with
the $\R$-action by rigid rotations on $G$.
In terms of the real projective line,
the rotation action of $\R/\Z$ is given by the fractional linear maps
\[ R_t(x) = \frac{\cos 2\pi t \cdot x + \sin 2\pi t}
{-\sin 2\pi t \cdot x + \cos {2\pi}t}, \quad
x \in \R\cup \{\infty\}, \;{[t] \in \R/\Z},\]
whereas the translation action of is given by $T_{t}(x) = x+t$.

\begin{Proposition}{\rm{(Restriction of positive energy representations)}}\label{prop:restrictionPEtoline}
Let $(\rho,\cH)$ be an irreducible positive energy representation
of $\cL^\sharp(K)$ with respect to the $\R$-action by rotations.
Then the restriction of $\rho$ to $C^{\infty}_{c}(\R,K)^{\sharp}$ is an irreducible positive energy with respect to
the $\R$-action by translations.
\end{Proposition}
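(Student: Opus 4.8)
The plan is to separate the two assertions---irreducibility and the positive energy property---since the first is essentially citation work and the second is where the real content lies. For irreducibility, I would invoke \cite[Cor.~IV.1.3.3]{TL97}: the subgroup $C^\infty_c(\R,K)^\sharp \subseteq \cL^\sharp(K)$ corresponds, under the identification $\bS^1 \cong \bP^1(\R)$, to the subgroup of loops trivial near the point $\infty$, and the cited result states that an irreducible positive energy representation of the loop group remains irreducible on restriction to loops supported away from a point. (If one wants to stay internal to this paper, one can alternatively note that by Corollary~\ref{cor:borch} an irreducible positive energy representation of $\hat\cL(K)$ restricts irreducibly to $\cL^\sharp(K)$, and then use that $C^\infty_c(\R,K)$ is dense in an appropriate sense; but the cleanest route is the cited theorem.)

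The heart of the matter is showing that the representation is of positive energy with respect to \emph{translations}, given that it is of positive energy with respect to \emph{rotations}. The key observation is that the translation action and the rotation action generate, together, a $\PSL(2,\R)$-action: the fractional linear maps $R_t$ (rotations fixing no point of $\R$) and $T_t$ (translations fixing $\infty$) are both one-parameter subgroups of $\PSL(2,\R)$ acting on $\bP^1(\R)$. Moreover, $C^\infty_c(\R,K) = C^\infty(\bS^1,K)_\infty$ is the group of loops trivial near $\infty$, and the translation flow is precisely the restriction to this subgroup of a flow that extends (as the $\PSL(2,\R)$-orbit of the rotation vector field) to a globally defined vector field on $\bS^1$. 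So what I really need is a statement of the form: \emph{if a positive energy representation is of positive energy for one generator $\bv_M$ of a $\PSL(2,\R)$-action on $\bS^1$, then it is of positive energy for the whole forward light cone of generators, in particular for the parabolic generator giving translations.} This is exactly the kind of statement that Remark~\ref{rem:indep-posen} and Proposition~\ref{Prop:propsembo} gesture at, and more precisely it is a consequence of the theory of semibounded representations: the set $\cone \subseteq \fp$ of positive-energy generators is a closed $\Ad$-invariant convex cone (Definition~\ref{def:posenerdefsym}), and for $\fp = \fsl(2,\R)$ acting on the positive energy representations of $\hat\cL(K)$, this cone is known to contain the forward light cone (which includes the parabolic direction). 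Concretely, I would argue: let $L_0, L_{\pm 1}$ be the standard $\fsl(2,\R)$-triple inside $\Vir$ (or inside $\mathrm{Lie}(\PSL(2,\R))$) acting on $\cH$, with $H_{\mathrm{rot}} = i\dd\rho(L_0) \geq 0$ the rotation Hamiltonian; then the translation Hamiltonian is (up to conjugation by the Cayley transform) $i\dd\rho(L_0 + \half(L_1 + L_{-1}))$ or a similar parabolic combination, and one checks that $L_0 \pm \half(L_1 + L_{-1})$ lies in the closure of the adjoint orbit cone of $L_0$, hence is a positive-energy generator.

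To make this rigorous I would proceed in these steps. First, recall that the rotation action on $\cL^\sharp(K)$ extends to a (projective) action of $\widetilde{\PSL}(2,\R)$ (indeed of $\Vir$), and that the corresponding one-parameter groups are generated by $L_n$, $n \in \{-1,0,1\}$, with the rotation generator $L_0$ satisfying $i\dd\rho(L_0)\geq 0$. Second, identify the translation vector field on $\bP^1(\R) = \R \cup\{\infty\}$ as the parabolic element $p_0 \in \fsl(2,\R)$ fixing $\infty$; under the Cayley transform sending $\infty$ to $1 \in \bS^1$, this becomes $L_0 - \half(L_1 + L_{-1})$ (the precise normalization to be checked). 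Third, verify that this parabolic element lies in the forward cone: since $\{L_0 + \cos\theta\, L_1 + \sin\theta\, \ldots\}$ are all $\Ad$-conjugate to multiples of $L_0$ and the parabolic direction is a limit of elliptic ones, the positive-energy cone---being closed, convex, and $\Ad$-invariant---contains it. Hence $i\dd\rho(p_0)$ has spectrum bounded below, which is the positive energy condition for translations. The main obstacle I anticipate is the third step: being careful about the precise structure of the $\Ad$-invariant cone for $\fsl(2,\R)$ and confirming the parabolic generator is genuinely in it (not merely on an ambiguous boundary), as well as handling the projective/central-extension bookkeeping so that "spectrum bounded below" transfers correctly between the $L_0$-Hamiltonian and the $p_0$-Hamiltonian. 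I expect that the cleanest formulation uses the fact, standard in the theory of positive energy representations of $\Vir$ and loop groups (see e.g. the discussion around Neretin's and Segal's work), that the positive cone for the $\PSL(2,\R)$ subalgebra is the entire forward light cone; once that is cited or reproven via the explicit $\fsl(2,\R)$-representation theory, the proposition follows.
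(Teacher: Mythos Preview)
Your approach is essentially the paper's: irreducibility via \cite[Cor.~IV.1.3.3]{TL97}, and positive energy by extending to a projective $\tilde\SL(2,\R)$-action via Segal--Sugawara, then comparing the rotation generator $\bd_0$ with the translation generator $\bd_1$ inside $\fsl(2,\R)$. Two refinements the paper supplies are worth noting.

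First, the citation from \cite{TL97} yields $\rho(G_*^\sharp)' = \C\one$ for the \emph{based} loop group $G_* = \{\xi \in \cL(K) : \xi(\infty) = e\}$, which strictly contains $G_{\cs} = C^\infty_c(\R,K)$ (loops trivial in a \emph{neighborhood} of $\infty$). You assume the citation covers $G_{\cs}$ directly; the paper closes this gap with a short density argument: the representation extends to the $H^1$-loop group by \cite[App.~A]{Ne14a}, and under the derivative isomorphism $H^1_*(\bS^1,\fk) \cong L^2_*(\bS^1,\fk)$ the image of $C^\infty_c(\R,\fk)$ is dense, so $\rho(G_{\cs}^\sharp)$ is strongly dense in $\rho(G_*^\sharp)$.

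Second, your concern about the parabolic element $\bd_1$ sitting on the boundary of the cone is exactly what the paper's Lemma~\ref{lem:sl2} (citing \cite{Lo08}) packages: for any unitary representation of $\tilde\SL(2,\R)$, the operator $i\dd\rho(\bd_0)$ is bounded below if and only if $i\dd\rho(\bd_1)$ is, and in that case both are nonnegative. So no limiting argument through the interior of the cone is needed---the equivalence is a direct fact about $\tilde\SL(2,\R)$-representations.
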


We first prove that the restriction remains irreducible, and then continue with the positive energy condition.

\begin{proof}[Proof of irreducibility]
Let
$ G_* :=  \{ \xi \in G \: \xi(\infty) = \one\} $
be the subgroup of based loops.
\index{loop group!80@pointed \scheiding $G_*$}
Since
$\rho(G^\sharp_*)'' = \C \one$ by\begin{footnote}{Alternatively, one can use \cite[Thm.~6.4]{Ta11},
which uses \cite[Cor.~1.2.3]{Ba95}.}\end{footnote} \cite[Cor.~IV.1.3.3]{TL97},
it suffices to show that $\rho(G^\sharp_{\cs})$ is dense in $\rho(G^\sharp_*)$
for the strong operator topology.
By \cite[App.~A]{Ne14a}, the representation of $G^\sharp$ extends
 to a smooth representation
of the Banach--Lie group $H^1(\bS^1,K)$ of $H^1$-loops, whose Lie algebra
is the space $H^1(\bS^1, \fk)$ of $H^1$-functions $\xi \: \bS^1 \to \fk$.
Since these are the absolutely continuous functions whose derivatives are~$L^2$,
the derivative $\xi \mapsto \xi'$ maps the
subspace $H^1_*(\bS^1, \fk)$ of $H^1$-functions
that vanish in the base point homeomorphically to
\[ L^2_*(\bS^1,\fk) = \Big\{ \xi \in L^2(\bS^1, \fk) \,\: \int_{\bS^1} \xi(t)\, dt = 0
\Big\}.\]
In this space the subspace $\{ \eta' \: \eta \in C^\infty_c(\R,\fk) \}$
is easily seen to be dense.
Since $G^\sharp_*$ is connected, this implies that
$\rho(G^\sharp_{\cs})$ is dense in~$\rho(G^\sharp_*)$.
\end{proof}

To prove the positive energy condition for the restriction, we need to compare the generator $\bd_0$
of rigid rotations with the generator $\bd_1$ of translations. In $\fsl(2,\R)$, these are given by
\begin{equation}\label{eq:d1andd2}
\bd_0 = \frac{1}{2}\pmat{0 & 1 \\ -1 & 0}, \quad  \bd_1 = \pmat{0 & 1 \\ 0 & 0}.
\end{equation}
The fact that $\bd_0$ and $\bd_1$ generate the same
$\Ad$-invariant closed convex cone in the Lie algebra $\fsl(2,\R)$
leads to the following characterization
(cf.\ \cite[\S 1.3]{Lo08}).
\begin{Lemma} \label{lem:sl2}
For a unitary representation $(\rho,\cH)$ of
$\tilde\SL(2,\R)$, the generator $i\dd\rho(\bd_0)$ is bounded from below
if and only if $i\dd\rho(\bd_1)$ is bounded from below.
Moreover, if this is the case, then $i\dd\rho(\bd_0)\geq 0$ and $i\dd\rho(\bd_1)\geq 0$.
\end{Lemma}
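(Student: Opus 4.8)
The statement is a classical fact about the metaplectic-type representation theory of $\tilde\SL(2,\R)$, and the cleanest route is to use the representation theory of the group directly rather than to reprove cone-theoretic lemmas from scratch. First I would recall that the Lie algebra $\fsl(2,\R)$ carries an $\Ad$-invariant open convex cone $C$ (the cone of "timelike" elements, i.e.\ those $x$ with $\det x>0$ lying in one of the two components), and that both $\bd_0$ and $\bd_1$ lie in its closure: indeed $\bd_0$ is an interior point of one nappe $C$ of the double cone, while $\bd_1 = \lim_{s\to 0}\tfrac1s(\text{rotation by }s)\cdot\text{something}$ lies on the boundary $\partial C$. Concretely, $\bd_1 = \bd_0 + \tfrac12\begin{pmatrix}0&1\\1&0\end{pmatrix}$ is the limit of a curve $\Ad_{g_n}(\lambda_n \bd_0)$ for suitable $g_n\in\tilde\SL(2,\R)$, $\lambda_n\to 0^+$; equivalently, $\bd_1$ lies in the boundary orbit and $\bd_0 \in \R_{>0}\bd_1 + C$ while $\bd_1 \in \overline{C}$.

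The key step is then the standard semiboundedness/positivity transfer for invariant cones: if $(\rho,\cH)$ is a unitary representation of $\tilde\SL(2,\R)$ and $i\dd\rho(x_0)\geq -c\,\one$ for one $x_0$ in the interior of the invariant cone $C$, then by $\Ad$-invariance $i\dd\rho(\Ad_g x_0)\geq -c\,\one$ for all $g$, hence by taking limits $i\dd\rho(x)$ is bounded below for every $x\in\overline{C}$, and in fact $i\dd\rho(x)\geq 0$ for $x\in\overline{C}\setminus\{0\}$ after one checks there is no room for a negative constant on the boundary. The argument that the constant must actually be $0$ (so $i\dd\rho(\bd_0)\geq0$ and $i\dd\rho(\bd_1)\geq0$, not merely bounded below) uses that $\bd_0$ and $\bd_1$ are \emph{nilpotent-or-limit} in the sense that $\R_{>0}\bd_0$ and $\R_{>0}\bd_1$ are accumulated by their own $\Ad$-orbit: if $i\dd\rho(\bd_j)\geq -c\one$ with $c>0$, then since $\Ad_{g}(\bd_j)$ can be scaled down ($\lambda\Ad_g(\bd_j)\to 0$ along an orbit but more usefully $\bd_j$ is conjugate to $\lambda\bd_j$ for all $\lambda>0$ in the $\SL(2,\R)$ case — $\bd_1$ is conjugate to $\lambda\bd_1$ via a diagonal matrix, and $\bd_0$ acquires this after passing through the cone), one gets $i\dd\rho(\bd_j) = i\dd\rho(\Ad_g(\lambda^{-1}\bd_j)) = \lambda^{-1} i\dd\rho(\bd_j)\circ(\text{unitary conj})$ forcing $c \le \lambda c$ for all $\lambda>0$, hence $c\le 0$. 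For $\bd_1$ this conjugacy to $\lambda\bd_1$ is immediate; for $\bd_0$ one instead argues that $\bd_0$ lies in the interior $C$, and for interior points the inequality $i\dd\rho(\bd_0)\ge 0$ follows once we know it holds on the boundary together with $\bd_0 = \bd_1 + (\text{another boundary element})$ up to positive scalars — or more simply one just quotes that the only invariant cones and the spectrum condition for $\tilde\SL(2,\R)$ are governed by the (anti)holomorphic discrete series plus the trivial representation, on all of which $i\dd\rho(\bd_0)\ge 0$.

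Since the excerpt already cites \cite[\S1.3]{Lo08} for exactly this circle of ideas, the proof I would write is short: (i) exhibit the invariant convex cone $C\subset\fsl(2,\R)$ and verify $\bd_0\in C$, $\bd_1\in\partial C$, $\bd_1\in\overline{\R_{>0}\bd_0 + C}$ and symmetrically $\bd_0 \in \overline{\R_{>0}\bd_1 + C}$, all by the explicit $2\times2$ computations using \eqref{eq:d1andd2}; (ii) invoke the transfer lemma: for a unitary rep of $\tilde\SL(2,\R)$, boundedness below of $i\dd\rho(x)$ on a generating set of $\overline{C}$ is equivalent to boundedness below on all of $\overline{C}$, because $\overline{C}$ is the closed invariant cone generated by any of its interior points and because the map $x\mapsto \inf\Spec(i\dd\rho(x))$ is $\Ad$-invariant, concave-superadditive and lower semicontinuous on $\overline{C}$; (iii) upgrade "bounded below" to "$\ge 0$" using the scaling $\Ad_{\mathrm{diag}(s,s^{-1})}\bd_1 = s^2\bd_1$ together with the limit $\Ad_{k_t}\bd_1 \to$ multiples of $\bd_0$ as the rotation parameter varies, so that the best constant is simultaneously homogeneous and rotation-invariant and hence $0$.

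\textbf{Main obstacle.} The genuinely delicate point is step (iii), the passage from semiboundedness to actual positivity: one must be careful that $\bd_0$ is an \emph{interior} point of the invariant cone (where naive scaling arguments fail, since $\Ad_g\bd_0$ never shrinks to $0$) and handle it by relating it to the boundary generator $\bd_1$. The safest implementation is to note that $\pm\bd_0$ are \emph{not} conjugate, that the invariant cone $C$ together with $-C$ and $\{0\}$ exhausts the elliptic and parabolic elements, and to use the explicit classification of unitary irreducibles of $\tilde\SL(2,\R)$ with semibounded generator (lowest/highest weight modules and the trivial rep) on each of which $i\dd\rho(\bd_0)$ and $i\dd\rho(\bd_1)$ are manifestly $\ge 0$; a general unitary rep with $i\dd\rho(\bd_0)$ bounded below is a direct integral of such, by the disintegration theory for semibounded representations, and positivity is preserved under direct integrals. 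I expect the whole proof to occupy under half a page once \cite{Lo08} is cited for the cone identity $\overline{C(\bd_0)} = \overline{C(\bd_1)}$.
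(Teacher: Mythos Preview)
Your proposal is correct and follows essentially the same approach as the paper: the paper does not give a detailed proof of this lemma but simply records, in the sentence preceding it, that $\bd_0$ and $\bd_1$ generate the same $\Ad$-invariant closed convex cone in $\fsl(2,\R)$ and cites \cite[\S1.3]{Lo08}. Your writeup is a faithful expansion of precisely this cone argument, and your handling of the upgrade from ``bounded below'' to ``$\geq 0$'' via the scaling $\Ad_{\mathrm{diag}(s,s^{-1})}\bd_1 = s^{2}\bd_1$ (and then $2\bd_0 = \bd_1 + \Ad_{k}\bd_1$ for a suitable rotation $k$) is the clean way to close that gap.
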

In particular, an $\tilde\SL(2,\R)$-representation is of positive energy for $\bd_0$ if and only
if it is of positive energy for $\bd_1$.
To prove that the restriction from $\cL(K)^{\sharp}$ to $C^{\infty}_{c}(\R,K)$
is of positive energy with respect to $\bd_1$, it therefore suffices to extend the action by rigid rotations
to an action of $\tilde\SL(2,\R)$. This is done using the
Segal--Sugawara construction.

\begin{proof}[Proof of positive energy]
Recall from \cite[\S7]{GW84} and \cite{GW85} that every
irreducible projective positive energy representation
$(\oline\rho_\lambda, \cH_\lambda)$ of
$\cL(K)$ with lowest weight $\lambda$
extends to a projective representation of the semidirect
product $\cL(K) \rtimes \Diff_+(\bS^1)$,
where $\Diff_+(\bS^1)$ acts on $\cL(K)$ by
$\alpha_\phi(\xi) := \xi \circ \phi^{-1}$. The cocycle
\begin{equation}
  \label{eq:cocyc-onemore}
\omega(\xi,\eta) = \frac{c}{2\pi} \int_{\bS^1} \kappa(\xi'(t),\eta(t))\, dt
\end{equation}
is easily seen to be invariant under the action of $\Diff_+(\bS^1)$, but it
is much harder to verify the covariance of the representations $\oline\rho_\lambda$.
In \cite[\S7.2]{GW84}, the representation of the Virasoro algebra
obtained from the Segal--Sugawara construction is
integrated
to a group representation.
Since this respects the semidirect product structure of $\cL(K) \rtimes \Diff_+(\bS^1)$,
it follows in particular that
\begin{equation}
  \label{eq:twistbydiff}
\oline\rho_\lambda \circ \phi \cong \oline\rho_\lambda
\quad \mbox{ for every } \quad \phi \in \Diff_+(\bS_1).
\end{equation}

By Schur's Lemma and the irreducibility of
$\oline\rho_\lambda$, the projective representation $\oline\rho^P$ of
$\Diff_+(\bS^1)$ on $\cH_\lambda$ is uniquely determined by
the intertwining property
\[ \oline\rho^P(\phi) \oline\rho_\lambda(\xi) \oline\rho^P(\phi)^{-1}
= \oline\rho(\alpha_\phi \xi)
\quad \mbox{ for } \quad
\xi \in \cL(K), \phi \in \Diff_+(\bS^1).\]
Since $\Diff_+(\bS^1)$ contains the group of rigid rotations
with respect to which $\ol{\rho}_{\lambda}$ is a positive energy representation,
the Hamiltonian
$H = i \dd\rho^P(\bd_0)$ associated to $\bd_0$
is bounded below.
Since $\rho^P$ is a positive energy representation of the
Virasoro group,
it restricts to a positive energy representation of
its subgroup $\tilde\SL(2,\R)$, the
simply connected cover of the group $\PSL(2,\R) \subeq \Diff_+(\bS^1)$ of
fractional linear transformations of $\bS^1 \cong \bP_1(\R)$. By Lemma~\ref{lem:sl2},
the generator $i\dd\rho^P(\bd_1)$ then has nonnegative spectrum.
\end{proof}
\begin{Remark}
Since
the cocycle  \eqref{eq:cocyc-onemore}
is invariant under the action of $\Diff_+(\bS^1)$, twisting $\ol{\rho}_{\lambda}$with
$\phi \in \Diff_+(\bS^1)$ leads to an irreducible
projective unitary representation $\oline\rho_\lambda \circ \phi$ with the same
central charge~$c$.
By Proposition~\ref{prop:7.22},
there are only finitely many types of such representations
satisfying the positive energy condition.
If we knew a priori that this twist preserves the positive
energy condition (which is presently not the case), then we could bypass the integration
procedure in \cite{GW84}, and construct the projective representation of $\Diff_+(\bS^1)$ as follows.

By the Epstein--Hermann--Thurston Theorem,
$\Diff(M)_0$ is a simple group for every compact connected smooth manifold~$M$
(\cite{Ep70}). In particular, $\Diff_+(\bS^1)$ is a simple group. Since it has no normal subgroup
of finite index, it acts trivially on any finite set.
This implies that $\oline\rho_\lambda \circ \phi\cong \oline\rho_\lambda$
for every $\phi \in \Diff_+(\bS^1)$. The unitaries that implement this equivalence constitute
a projective unitary representation of $\Diff_+(\bS^1)$.
\end{Remark}

\begin{Remark} \label{rem:10.11}
The class of positive energy representations is by no means exhausted by the
representations of Proposition~\ref{prop:restrictionPEtoline}.
We briefly sketch the construction of
a class of type~III$_1$ factor representations, following \cite{We06,FH05}.

Recall from \cite[\S7.2]{GW84} that an irreducible positive energy vacuum representation $\rho$
of $G^\sharp = \cL^\sharp(K)$ gives rise to a vacuum representation $\rho^P$ of $\Diff_+(\bS^1)^{\sharp}$.
If we lift the $\R$-action by translations
along the $2$-fold covering
$q \: \bS^1 \to \R \cup \{\infty\} \cong \bS^1$, we obtain a flow on $\bS^1$
with exactly two fixed points. Its generator $\bv$
is obtained from the vector field $\bd_0$ generating rigid
rotations by multiplication with a non-negative function.
This implies that 
the operator $i \dd \rho^P(\bv)$ is bounded from below.

Let $I \subeq \bS^1$ be one of the two connected components
of $q^{-1}(\R)$ and identify
$G_{\cs} = C^\infty_c(\R,K)$ with $C^\infty_c(I, K)$.
Then the restriction of $\rho$ to $G^\sharp_{\cs}$
is a factor representation of type III$_1$. Combining this with
the one-parameter group generated by the vector field $\bv$, we
obtain a projective positive energy representation
of $G^\sharp_{\cs} \rtimes \R$ with respect to the translation
action on $\R$ (\cite[Prop.~3.2]{We06}). We refer to \cite{We06}
and \cite{dVIT20} for further
details (see also the Remark after \cite[Thm.~IV.2.2.1]{TL97}).

More generally, we may consider smooth vector fields
$\bv \in \cV(\bS^1)$ which are non-negative multiples
$f \bd_0$, $f \geq 0$, of the generator $\bd_0$ of rigid rotations.
For vacuum representations of $\cL(K)\rtimes \Diff_{+}(\bS^1)$, the corresponding selfadjoint operator
$i\dd \rho^P(\bv)$ is bounded from below
(cf.~\cite{FH05}).
If $I \subeq \bS^1$ is an open interval on which $\bv$ has no zeros
but for which $\bv$ vanishes in the boundary $\partial I$, then
we obtain an embedding
\[ C^\infty_c(\R, \fk) \rtimes \R \into \cL(\fk)  \rtimes \R \bv \]
that integrates to the group level, where we obtain a
projective positive energy representation of $C^\infty_c(\R,K)$.
\end{Remark}


\subsubsection{Vacuum representations of $C^{\infty}_{c}(\R,K)$}\label{sec:groundstates}
\label{subsec:9.3}

Although the classification of projective positive energy representations $(\ol{\rho},\cH)$ of $C^{\infty}_{c}(\R,K)$ is an open problem in general, it can be resolved under the additional assumption that $\cH$ admits a unique, cyclic ground state.

\begin{Definition} \label{def:groundstate}
Let $(\rho,\cH)$ be a positive energy representation
of $\hat G$.
\begin{itemize}
\item[a)] A {\it ground state vector} \index{ground state vector \scheiding $\Omega$}
is a vector  $\Omega \in \cD(H) \subseteq \cH$ such that
$H\Omega = E_0\Omega$ for $E_0 := \inf(\mathrm{spec}(H))$. We denote the space of ground state vectors by $\cE$.
\item[b)] A \emph{ground state representation} is a positive energy representation $(\rho,\cH)$  \index{representation!ground state \vulop}
that is generated by its space of ground states, in the sense that the linear span of $\rho(\widehat{G})\cE$ is dense in $\cH$.
\item[c)] A \emph{vacuum representation} is a ground state representation where the ground state is   \index{representation!vacuum \vulop}
unique up to scalar, $\cE = \C \Omega$.
\end{itemize}
\end{Definition}

At the Lie algebra level, we obtain analogous definitions if we replace
the requirement that $\rho(\widehat{G})\cE$ is dense in $\cH$ by the requirement
that
$\mathcal{U}(\fg)\Omega = \mathcal{U}(\fg^\sharp)\Omega$ is dense in $\cH$.
Although the translation between these two concepts requires some caution,
the two notions turn out to be compatible for positive energy representations.

\begin{Proposition}\label{Prop:analyticPEvacuum}
Let $(\rho,\cH)$ be a positive energy representation of $\hat G$ with ground state vector $\Omega$.
Then
$\cU(\g^\sharp)\Omega$ is dense in $\cH$
if and only if $\Omega$ is cyclic under $\rho(G^\sharp)$.
\end{Proposition}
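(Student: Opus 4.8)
The plan is to show that the two notions of cyclicity differ only by the action of the one-parameter group $U_t = \rho(\exp tD) = e^{-itH}$, and that on a ground state vector this action adds nothing new. Concretely, write $\cH_0 := \overline{\rho(G^\sharp)\Omega}$ for the closed $G^\sharp$-invariant subspace generated by $\Omega$, and write $\cH_1 := \overline{\cU(\g^\sharp)\Omega}$ for the closed subspace generated by the Lie algebra action (here $\cU(\g^\sharp)\Omega \subseteq \cH^\infty$ since $\Omega$, being a ground-state vector, lies in $\cD(H)$, but more to the point we will see it is a smooth vector). One inclusion is essentially immediate: since $\rho$ is a smooth representation, $\rho(G^\sharp)$ maps smooth vectors to smooth vectors and the derived representation $\dd\rho$ is obtained by differentiating $\rho(\exp t\xi)$, so any vector in $\cU(\g^\sharp)\Omega$ lies in the closure of $\rho(G^\sharp)$ applied to suitable smooth vectors; once we know $\Omega \in \cH^\infty$, standard arguments give $\cH_1 \subseteq \cH_0$. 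For the reverse inclusion, the key point is that $\cH_1$ is invariant not only under $G^\sharp$ but under the full group $\hat G = G^\sharp \rtimes \R$, because $\widehat{\fg}^\sharp = \R C \oplus_\omega (\fg \rtimes \R D)$ and $H\Omega = E_0\Omega$ means $\dd\rho(D)\Omega = -iE_0\Omega \in \C\Omega \subseteq \cH_1$, so $\widehat{\fg}^\sharp\Omega \subseteq \cH_1$ and hence $\cU(\widehat{\fg}^\sharp)\Omega = \cU(\fg^\sharp)\Omega$ (the $D$-direction contributes only scalars on $\Omega$, and on the rest it is absorbed by the commutator relations).

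The first step is therefore to establish that $\Omega$ is a smooth vector, i.e. $\Omega \in \cH^\infty$. This is where I expect the main obstacle to lie, since a priori a ground state vector is only assumed to lie in $\cD(H)$. The cleanest route is to invoke the semiboundedness already proved in Corollary~\ref{cor:6.34}: the representation $\dd\rho$ of $(\R C \oplus_\omega \Gamma_c(M,\fK)) \rtimes \R D$ is semibounded, with $W_\rho$ containing the open half-space $(\R C \oplus_\omega \Gamma_c(M,\fK)) - \R_+ D$. For semibounded representations there is an automatic smoothness result for vectors satisfying spectral conditions — precisely the kind cited in the paper around \cite{Ze17, Ne09, NSZ17}: a ground state vector (an eigenvector for $H$ at the bottom of the spectrum, which lies in the ``interior'' direction of the semiboundedness cone) is automatically smooth. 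Alternatively, and perhaps more self-containedly, one can use the fact that $\Omega$ is an analytic vector for $H$ (trivially, since $H\Omega = E_0\Omega$ makes the exponential series $\sum \frac{s^n}{n!}\|H^n\Omega\| = e^{|E_0|s}\|\Omega\|$ converge for all $s$), and then apply Lemma~\ref{lem:6.26}: analytic vectors for $H$ are analytic for $r(\xi)$ for $\xi \in H^2_\partial(N,\fK)$, hence for all $\xi$ in $\Gamma_c(M,\fK)$ since $\Gamma_c(M,\fK) \hookrightarrow H^2_\partial(M,\fK)$. Being analytic, hence smooth, for every $\dd\rho(\xi)$ with $\xi$ ranging over $\widehat{\fg}^\sharp$, together with the fact that $\widehat{\fg}^\sharp$ is generated as a Banach--Lie algebra by a bounded set of such elements, yields $\Omega \in \cH^\infty$ via the standard criterion (e.g. \cite[\S III]{Ne11} or the discussion in \cite{JN19}).

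Once smoothness of $\Omega$ is in hand, the argument closes quickly. Suppose $\cU(\fg^\sharp)\Omega$ is dense; I want $\Omega$ cyclic under $\rho(G^\sharp)$. Let $\psi \perp \rho(G^\sharp)\Omega$. Then the orbit map $g \mapsto \langle \psi, \rho(g)\Omega\rangle$ vanishes identically on $G^\sharp$, and since this map is smooth (as $\Omega \in \cH^\infty$), all its derivatives at the identity vanish, giving $\langle \psi, \dd\rho(\xi_1)\cdots\dd\rho(\xi_n)\Omega\rangle = 0$ for all $\xi_i \in \fg^\sharp$ and all $n$; hence $\psi \perp \cU(\fg^\sharp)\Omega$, so $\psi = 0$. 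Conversely, suppose $\Omega$ is cyclic under $\rho(G^\sharp)$. Since $G^\sharp$ is connected (as is relevant here — if not, replace $G^\sharp$ by $G^\sharp_0$, noting that for the gauge groups in question the relevant structure groups give connected $\cL_\Phi(K)$ by Lemma~\ref{lem:1-connectedfibers}) and $\Omega$ is smooth, $\rho(\exp\xi_1)\cdots\rho(\exp\xi_n)\Omega$ can be approximated in norm by elements of $\cU(\fg^\sharp)\Omega$: indeed $\rho(\exp t\xi)\Omega = \sum_{k\ge 0} \frac{t^k}{k!}\dd\rho(\xi)^k\Omega$ converges for small $t$ because $\Omega$ is analytic for $\dd\rho(\xi)$ (Lemma~\ref{lem:6.26} again), and iterating this and using that a connected Lie group is generated by any identity neighbourhood shows that $\overline{\cU(\fg^\sharp)\Omega} \supseteq \overline{\rho(G^\sharp)\Omega} = \cH$. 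This gives density of $\cU(\fg^\sharp)\Omega$ and completes the equivalence.
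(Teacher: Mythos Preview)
Your strategy is close to the paper's --- both rely on Lemma~\ref{lem:6.26} to obtain analyticity of $\Omega$ for each $\dd\rho(\xi)$, and use this to pass between group cyclicity and Lie algebra cyclicity. However, there is a genuine gap in your argument: you assert that ``$\widehat{\fg}^\sharp$ is generated as a Banach--Lie algebra by a bounded set of such elements.'' But in this section $G = C^\infty_c(\R,K)$, so $\fg = C^\infty_c(\R,\fk)$ is an LF-space and $\fg^\sharp$ is \emph{not} a Banach--Lie algebra. The analytic-vector machinery you want to invoke (\cite[Prop.~4.10]{Ne11}, Lemma~\ref{lem:ana-invar}, and the claim that analyticity for each generator yields analyticity for the group) is stated for Banach (or at least BCH) Lie algebras, and does not apply directly here. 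In particular, the radius on which the exponential series $\sum \frac{t^k}{k!}\dd\rho(\xi)^k\Omega$ converges is controlled by $p_2(\xi)$ via \eqref{eq:bahbah}, which does not give a uniform identity neighbourhood in the LF-topology; your ``iterating this'' step is therefore not justified.

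The paper resolves this by localizing: for each compact interval $I\subset\R$ the subgroup $G_I = \{\xi \in G : \xi(\R\setminus I) = \{e\}\}$ is a Fr\'echet--Lie group, and its central extension $G_I^\sharp$ is BCH. On $\g_I^\sharp$ the analytic-vector theory applies legitimately, yielding $\overline{\cU(\g_I^\sharp)\Omega} = \overline{\rho(G_I^\sharp)\Omega}$. Since $\fg^\sharp = \bigcup_I \g_I^\sharp$ and $G^\sharp = \bigcup_I G_I^\sharp$, taking the union over $I$ gives the full statement. Your argument becomes correct once you insert this localization step; the remaining ingredients (Lemma~\ref{lem:6.26} and the analytic orbit-map argument) are exactly what the paper uses.
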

\begin{proof}
For a closed interval $I \subeq \R$, let
\[  G_I := \{ \xi \in G = C^\infty_c(\R,K) \:
\xi(\R \setminus I) = \{e\} \} \]
denote
the Fr\'echet--Lie subgroup of maps supported by $I$.
We claim that the Lie group $G_I^\sharp$ is  BCH, i.e.,
it is locally exponential
and its Lie algebra $\g_I^\sharp$ is BCH, which means that
the Baker--Campbell--Hausdorff
series defines an analytic local multiplication
on a $0$-neighborhood of $\g$ (\cite[Thm.~14.7.1]{GN}).
For $G_I$ this follows from \cite[Ex.~6.1.4(c)]{GN}
because the BCH property is inherited
from the target group $K$. Further \cite[Thm.~14.4.19]{GN} implies
that the centrally extended Lie algebra $\g_I^\sharp$
is also locally exponential
and the proof of this theorem shows that the analyticity of the local
multiplication is inherited by the central extension.

Lemma~\ref{lem:6.26} implies that $\Omega$ is an analytic
vector for each element in $\g^\sharp_I$, so that
\cite[Prop.~4.10]{Ne11} further entails that  $\Omega$ is an analytic
vector for $G^\sharp_I$. Hence
the closure of $\cU(\g^\sharp_I)\Omega$ is
$G^\sharp_I$-invariant. As the interval $I$ was arbitrary,
the closure of $\cU(\g^\sharp)\Omega$ is
invariant under $G^\sharp$,
hence also under $\hat G$, because $\Omega$ is an $H$-eigenvector.
This shows that $\cU(\g^\sharp)\Omega$ is dense in $\cH$
if and only if $\Omega$ is cyclic under $\rho(G^\sharp)$.
\begin{footnote}
{For the concept of an analytic map to make sense,
we need the group to be analytic. Since the
$\R$-action on $G$ need not be analytic,
the semidirect product $G \rtimes_\alpha \R$ is in general not an analytic Lie group.
In particular, $\hat G$ need not be an analytic Lie group.}
\end{footnote}
\end{proof}


The vacuum representations
for the Lie algebra $\fg_{\cS} = \cS(\R,\fk)$ of $\fk$-valued Schwartz
functions have been classified by Yoh Tanimoto.
\begin{Theorem}{\rm (\cite[Cor.~5.8]{Ta11})} \label{Thm:Tanimoto}
Let $(\pi, \cH^{\infty})$ be a vacuum representation of $\widehat{\fg}_{\cS}$
with respect to the $\R$-action by translations. Suppose that
for all $\psi, \chi \in \cH^{\infty}$,
the functional
$\xi \mapsto \lra{\psi, \pi(\xi) \chi}$ is a tempered distribution.
Then $(\pi,\cH^{\infty})$ is characterized up to unitary equivalence by its central charge $c \in \N_0$.
\end{Theorem}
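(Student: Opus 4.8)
The plan is to treat this as a Wightman-type reconstruction problem: first show that the unitary equivalence class of a cyclic vacuum representation is encoded entirely in its vacuum expectation values, and then that these are pinned down by the single parameter $c$. The starting observation is that for a cyclic representation $(\pi,\cH,\Omega)$ with $\cU(\fg^\sharp_{\cS})\Omega$ dense (the vacuum condition, justified in the present setting by Proposition~\ref{Prop:analyticPEvacuum}), the equivalence class is determined by the family of \emph{vacuum expectation values}
\[ W_n(\xi_1,\dots,\xi_n) := \lra{\Omega, \pi(\xi_1)\cdots\pi(\xi_n)\Omega}, \qquad \xi_i \in \fg_{\cS}. \]
Two cyclic representations with coinciding $W_n$ are intertwined by $\pi(\xi_1)\cdots\pi(\xi_n)\Omega \mapsto \pi'(\xi_1)\cdots\pi'(\xi_n)\Omega'$, which is isometric precisely because all inner products on $\cU(\fg^\sharp_{\cS})\Omega$ \emph{are} the $W_n$; this is the standard GNS argument. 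The temperedness hypothesis is exactly what makes each $W_n$ a well-defined tempered distribution (via the Schwartz nuclear theorem), which is what licenses the Fourier-analytic arguments below. So it suffices to prove every $W_n$ is determined by $c$.

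Next I would exploit the covariance $[H,\pi(\xi)] = i\pi(\xi')$ together with $H\Omega=0$ and $H\geq 0$. Conjugating by $e^{-itH}$ and Fourier-transforming in $t$, each smeared current splits into energy-shift components $\pi_p(\xi)$, where $\pi_p(\xi)=\pi(\xi_{(p)})$ with $\xi_{(p)}$ the modulation of $\xi$ carrying frequency $p$, and $\pi_p(\xi)$ lowers $H$-energy by $p$. Since $\Omega$ sits at the bottom of the spectrum, any component that would lower its energy must annihilate it:
\[ \pi_p(\xi)\Omega = 0 \quad \text{for all} \quad p>0. \]
Thus $\Omega$ is a genuine lowest-weight vector, killed by all positive-frequency modes, and $\pi(\xi)\Omega$ lives in the spectral subspace for frequencies in $\supp(\widehat{\xi})\cap[0,\infty)$. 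The same manipulation shows $W_1$ is translation invariant and concentrated at the zero mode; since $\fg_{\cS}=\cS(\R,\fk)$ carries no constant direction (Schwartz decay at infinity), the zero mode is absent and $W_1=0$.

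With the annihilation property in hand I would compute $W_2$: positive energy forces its Fourier transform to be supported in $[0,\infty)$, translation invariance makes it depend only on the difference of arguments, and the commutation relation gives
\[ W_2(\xi,\eta) - W_2(\eta,\xi) = W_1([\xi,\eta]) + i\,\omega(\xi,\eta) = i\,\omega(\xi,\eta), \]
fixing its antisymmetric part as $i\omega\propto c$. Combined with positivity of $\xi\mapsto\lra{\pi(\xi)\Omega,\pi(\xi)\Omega}$ and the half-line support, this determines $W_2$ as $c$ times a fixed $\kappa$-dependent kernel; in particular distinct $c$ give distinct $W_2$, hence inequivalent representations. The central step is the inductive claim that $W_n$ is determined by $W_2$: commute the positive-frequency part of an outer current through the product using $[\pi(\xi),\pi(\eta)]=\pi([\xi,\eta])+i\omega(\xi,\eta)$, so that each commutator yields either a central scalar (a contraction lowering $n$ by two and inserting a factor of $W_2$) or a new current $\pi([\xi_i,\xi_j])$; on reaching $\Omega$ the residual annihilation component vanishes, and iterating expresses $W_n$ through lower $W_m$ and $c$.

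The genuine obstacle is this recursion in the \emph{nonabelian} setting: a contraction of two currents does not simply produce a scalar, since the structure-constant term $\pi([\xi_i,\xi_j])$ reinjects a current, so the expansion is not a clean Wick recursion and one must argue that the coupled system of identities for the $W_n$ has a \emph{unique} solution once $c$ is fixed. I would resolve this by the representation-theoretic reformulation: the annihilation property and energy-positivity identify $\cH$ with a highest-weight (vacuum) module generated from $\Omega$, and the invariant Hermitian form on the vacuum Verma module of $\widehat{\fg}_{\cS}$ with central charge $c$ is unique up to scale — a Shapovalov-type form computed purely from $c$, $\kappa$, and the brackets. Unitarity then forces $\cH$ to be the Hilbert completion of the quotient of this Verma module by the radical of its form; since that quotient depends only on $c$, the triple $(\pi,\cH,\Omega)$, and hence $\pi$ up to unitary equivalence, is determined by $c$, with integrality $c\in\N_0$ already supplied by the period computation underlying Lemma~\ref{lem:integralitymeasure} and Remark~\ref{rem:7.13}. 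I expect the most delicate technical points to be the passage from the operator-valued-distribution components $\pi_p(\xi)$ to honest lowest-weight module elements (to be legitimized by the dense analytic vectors of Lemma~\ref{lem:6.26} and the continuity of Theorem~\ref{Shylock}), and the positivity and temperedness control needed to certify that the $W_2$-kernel and the radical of the Verma form are the correct $c$-dependent objects.
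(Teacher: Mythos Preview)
The paper does not actually prove this theorem: it is stated as a citation of \cite[Cor.~5.8]{Ta11} and used as a black box (the paper's contribution around it is Proposition~\ref{cor:verlegenslot}, which feeds the hypotheses of Tanimoto's result, and Theorem~\ref{thm:10.9}, which transports the conclusion to $C^\infty_c(\R,\fk)$). So there is no ``paper's own proof'' to compare against.

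That said, your sketch is a faithful outline of Tanimoto's argument in \cite{Ta11}: the cyclic vacuum reduces the problem to determining the $n$-point functions $W_n$; translation covariance plus the spectral condition force the positive-frequency annihilation property on $\Omega$; $W_1=0$ and $W_2$ is fixed by $c$; and the higher $W_n$ are determined recursively by commuting positive-frequency components to the right. You correctly flag the genuine difficulty, namely that the nonabelian commutator reinjects a current rather than a scalar, so the recursion is not a naive Wick expansion. Tanimoto handles this by showing that the system of identities nonetheless has a unique solution (essentially the uniqueness of the contravariant Hermitian form on the vacuum module at fixed $c$), which is exactly the representation-theoretic reformulation you propose. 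Your invocation of Lemma~\ref{lem:6.26} and Theorem~\ref{Shylock} to justify the operator-valued-distribution manipulations is appropriate in spirit, though strictly speaking Tanimoto works directly in the Schwartz setting and does not need the continuity machinery of this paper.
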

Using the continuity results from \S\ref{sec:6}, we show that Tanimoto's theorem
remains true for the smaller Lie algebra $\fg_{\cs} := C^{\infty}_{c}(\R,\fk)$
of compactly supported smooth $\fk$-valued
functions.
This is an important improvement because the relevant Lie algebra
for the classification of ground states of loop group nets
is not $\cS(\R,\fk)$, but $C^{\infty}_{c}(\R,\fk)$ (cf. \cite[\S6]{Ta11}).

As usual, we denote
\[
\hat{\fg}_{\cs} := \R C \oplus_{\omega} \fg_{\cs} \rtimes \R D, \;\text{ and }\;
\hat{\fg}_{\cS} := \R C \oplus_{\omega} \fg_{\cS} \rtimes \R D,
\]
where $D$ acts by infinitesimal translations.
Since the inclusion of $\cS(\R,\fk)$ in $H^1_{\partial}(\R,\fk)$
is continuous, the following is an immediate consequence of Theorem~\ref{Shylock}.
\begin{Proposition}\label{cor:verlegenslot}
Let $(\rho,\cH)$ be a
positive energy representation of $\widehat{G}_{\cs} := (C^{\infty}_{c}(\R,K)\rtimes \R)^{\sharp}$. 
Then the derived representation $\dd\rho$ of $\widehat{\fg}_{\cs}$
extends uniquely to a positive energy representation $r$ of $\widehat{\fg}_{\cS}$
such that,
for all $\psi, \chi \in \cH^{\infty}$,
the functional
$\xi \mapsto \lra{\psi, r(\xi) \chi}$ is a tempered distribution.
\end{Proposition}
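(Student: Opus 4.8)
\textbf{Proof strategy for Proposition~\ref{cor:verlegenslot}.}
The plan is to obtain the extension as the composite of two maps whose existence is already furnished by results in the excerpt. First I would invoke Theorem~\ref{Shylock}, applied to the manifold $M = \R$ with the translation flow $\bv_M = \partial_t$, the bundle $\fK = \R \times \fk$, and the open set $N = \R$ itself. Since the translation action on $\R$ is fixed-point free, the measure $\mu$ of Theorem~\ref{MeasureThm} exists, and all the Sobolev--Lie algebra machinery of \S\ref{sec:SobolevLie} applies. Theorem~\ref{Shylock}(b) then already gives a representation $r$ of the Banach--Lie algebra $\R C \oplus_\omega H^1_\partial(\R,\fk)$ by essentially skew-adjoint operators satisfying the commutation relation \eqref{eq:r-bracket}, and part~(c) promotes this to a positive energy representation of $\big(\R C \oplus_\omega H^\infty_\partial(\R,\fk)\big) \rtimes \R D$, with $\dd\rho(D)$ implemented by the same Hamiltonian $H \geq 0$. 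So the only genuine work is to (i) identify $\cS(\R,\fk)$ as a continuously included Lie subalgebra of $H^\infty_\partial(\R,\fk)$ and (ii) verify the tempered distribution property.

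For step (i): recall $H^k_\partial(\R,\fk)$ is the closure of $\Gamma_c(\R,\fk) = C^\infty_c(\R,\fk)$ in $H^k_b(\R,\fk)$ with respect to the parallel Sobolev norm $q_k$. Since on $\R$ with the translation flow the underlying measure is $\nu = B_\eps\mu = dt$ (up to a constant; note $b \equiv 0$ here by the argument of Corollary~\ref{CorBEforR}, so $B_\eps$ is constant), the norm $q_k(\xi)$ is just $\sum_{n=0}^k \|\xi^{(n)}\|_{L^2(\R,\fk)}$ up to a fixed factor, and the combined norm $p_k$ of \eqref{eq:normbank} adds the $L^\infty$-norms of the first $k$ derivatives. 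A Schwartz function has all these norms finite, and $C^\infty_c(\R,\fk)$ is dense in $\cS(\R,\fk)$ in every such Sobolev--sup norm; hence $\cS(\R,\fk) \subseteq H^k_b(\R,\fk)$ and in fact $\cS(\R,\fk) \subseteq H^k_\partial(\R,\fk)$ for every $k$, with continuous inclusion. Passing to the inverse limit gives a continuous Lie algebra homomorphism $\cS(\R,\fk) \hookrightarrow H^\infty_\partial(\R,\fk)$, which respects the derivation $D = \partial_t$ and extends the cocycle $\omega$. Precomposing the representation $r$ of Theorem~\ref{Shylock} with this inclusion yields the desired positive energy representation of $\hat\fg_\cS$.

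For step (ii): by Theorem~\ref{Shylock} (and the Sobolev estimates of Proposition~\ref{sterk}, in particular \eqref{eq:brombaard}), for $\psi,\chi \in \cD(H^k) \supseteq \cH^\infty$ we have a bound of the form $|\langle \psi, r(\xi)\chi\rangle| \leq C\, q_k(\xi)\, \|A^{k+1/2}\psi\|\,\|\chi\|$, or more simply $|\langle\psi, r(\xi)\chi\rangle| \le \|A^{3/2}\psi\|\,\|A^{3/2}\chi\|\,q_1(\xi)$ by \eqref{zwakdik}. Since $q_1(\xi) \leq p_2(\xi)$ is dominated by a finite sum of Schwartz seminorms of $\xi$, the functional $\xi \mapsto \langle\psi, r(\xi)\chi\rangle$ is continuous for the Schwartz topology on $\cS(\R,\fk)$, i.e.\ it is a tempered distribution. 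Uniqueness of the extension is immediate: $C^\infty_c(\R,\fk)$ is dense in $\cS(\R,\fk)$ and $r$ is constrained to be $q_1$-continuous (hence Schwartz-continuous) in $\xi$ on the relevant matrix coefficients, so any two extensions with the tempered distribution property agree on a dense set and therefore coincide.

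\textbf{Main obstacle.} The substantive content is entirely inherited from Theorem~\ref{Shylock}; the one point requiring genuine care is checking that $\cS(\R,\fk)$ really embeds into the \emph{closure} $H^\infty_\partial(\R,\fk)$ of $C^\infty_c$ (not merely into $H^\infty_b$), i.e.\ that Schwartz functions can be approximated by compactly supported ones simultaneously in all the $L^2$-Sobolev \emph{and} $L^\infty$-Sobolev norms appearing in $p_k$. This follows from the standard fact that $C^\infty_c(\R,\fk)$ is dense in $\cS(\R,\fk)$ (multiply by a smooth cutoff that is $1$ on $[-n,n]$ and has derivatives bounded independently of $n$), but one must confirm that the $L^\infty$-parts of the norms, which are not automatically controlled by $L^2$-convergence, are handled by the Schwartz decay; the tail estimates $\|\xi^{(j)}\|_{L^\infty(\R\setminus[-n,n])} \to 0$ take care of this. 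Everything else is a bookkeeping exercise.
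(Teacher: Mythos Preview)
Your proposal is correct and follows the same route as the paper: the paper's proof is the single sentence ``Since the inclusion of $\cS(\R,\fk)$ in $H^1_{\partial}(\R,\fk)$ is continuous, the following is an immediate consequence of Theorem~\ref{Shylock},'' and your write-up supplies precisely the details (the continuous embedding of Schwartz space, the estimate \eqref{zwakdik} for temperedness, and uniqueness by density) that justify this one-liner. One small remark: your ``main obstacle'' is slightly overstated, since $H^1_\partial$ is defined as the $q_1$-closure (not the $p_1$-closure) of $C^\infty_c$, so only $L^2$-Sobolev approximation is needed, and membership in $H^k_\partial$ for $k\geq 2$ is then a pointwise intersection condition requiring only that the derivatives be bounded, not that they be approximated in $L^\infty$.
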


Combined with Theorem~\ref{Thm:Tanimoto}, this immediately yields the classification of vacuum representations
in the compactly supported setting.

%

\begin{Theorem} {\rm(Vacuum representations of $C^{\infty}_{c}(\R,K)$)}
\label{thm:10.9}
Let $K$ be a $1$-connected, compact, simple Lie group and
$G_{\cs} = C^{\infty}_{c}(\R,K)$.
Then a vacuum representation $(\rho,\cH)$ of $\hat{G}_{\cs}$
is characterized up to unitary equivalence
by its central charge $c\in \N_{0}$.
\end{Theorem}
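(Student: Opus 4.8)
The plan is to reduce Theorem~\ref{thm:10.9} to Tanimoto's classification (Theorem~\ref{Thm:Tanimoto}) by bridging the gap between the compactly supported Lie algebra $\fg_{\cs} = C^\infty_c(\R,\fk)$ and the Schwartz Lie algebra $\fg_{\cS} = \cS(\R,\fk)$, using the continuity machinery of \S\ref{sec:6}. First, given a vacuum representation $(\rho,\cH)$ of $\hat G_{\cs}$ with cyclic ground state vector $\Omega$ unique up to scalar, I would pass to the derived representation $\dd\rho$ of $\hat\fg_{\cs}$ on $\cH^\infty$. By Proposition~\ref{cor:verlegenslot}, this extends uniquely to a positive energy representation $r$ of $\hat\fg_{\cS}$ for which all matrix coefficients $\xi \mapsto \langle \psi, r(\xi)\chi\rangle$ ($\psi,\chi \in \cH^\infty$) are tempered distributions. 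The key point to check is that $\Omega$ remains a \emph{vacuum} vector for the extended representation $r$ of $\hat\fg_{\cS}$: it is still a ground state vector (the Hamiltonian $H = ir(D)$ is unchanged), and I must verify that $\cU(\hat\fg_{\cS})\Omega$ is dense in $\cH$. But this is immediate: $\cU(\hat\fg_{\cs})\Omega \subseteq \cU(\hat\fg_{\cS})\Omega$, and the former is dense by Proposition~\ref{Prop:analyticPEvacuum} (which identifies cyclicity of $\Omega$ under $\rho(G^\sharp)$ with density of $\cU(\fg^\sharp)\Omega$). Uniqueness of the ground state is a statement about the Hilbert space $\cH$ and $H$, so it transfers verbatim.

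Having produced a vacuum representation of $\hat\fg_{\cS}$ with tempered matrix coefficients, I would invoke Theorem~\ref{Thm:Tanimoto} directly: it asserts that such a representation is determined up to unitary equivalence by its central charge $c \in \N_0$. This $c$ is the same central charge appearing in the cocycle $\omega$ on $\hat\fg_{\cs}$, since the extension $r$ restricts to $\dd\rho$ on $\fg_{\cs}$ and the cocycle is read off from the commutator $[r(\xi),r(\eta)] = r([\xi,\eta]) + i\omega(\xi,\eta)\one$, which holds on $\cD(H^2)$ by Theorem~\ref{Shylock}(b). Conversely, for each $c \in \N_0$ such a vacuum representation exists (e.g.\ via the level-$c$ vacuum representation of $\hat\cL(K)$ restricted as in Proposition~\ref{prop:restrictionPEtoline}, or the Schwartz-level construction in \cite{Ta11}), and its restriction to $\hat\fg_{\cs}$ is again a vacuum representation by the same density argument run in reverse.

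The final step is to lift the uniqueness statement from the Lie algebra level back to the group level. Two vacuum representations $(\rho,\cH)$ and $(\rho',\cH')$ of $\hat G_{\cs}$ with the same central charge $c$ give, after extension, unitarily equivalent representations $r, r'$ of $\hat\fg_{\cS}$, hence a unitary $U\colon \cH \to \cH'$ intertwining $r$ and $r'$, in particular intertwining $\dd\rho$ and $\dd\rho'$ as representations of $\hat\fg_{\cs}$. By Theorem~\ref{ThmProjRepLARep}(B) — applied to the connected Lie group $\widehat{G}_{\cs}$, whose identity component representation is determined up to equivalence by the derived Lie algebra representation — this $U$ promotes to a unitary equivalence of the original projective representations of $G_{\cs} \rtimes \R$. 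Here one uses that $G_{\cs} = C^\infty_c(\R,K)$ is $1$-connected (by \cite[Thm.~A.10]{Ne04}, as recalled in the proof of Lemma~\ref{lem:1-connectedfibers}), so no fundamental-group subtleties intervene.

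The main obstacle I anticipate is not in the overall strategy but in one technical verification: that the extended representation $r$ of $\hat\fg_{\cS}$ genuinely satisfies the hypotheses of Tanimoto's theorem as stated — in particular that the \emph{vacuum condition} (density of $\cU(\hat\fg_{\cS})\Omega$) and the temperedness of matrix coefficients hold \emph{simultaneously} and are not subtly incompatible, and that Tanimoto's ``$\cS(\R,\fk)$'' setup matches our normalization of the cocycle $\omega(\xi,\eta) = \frac{c}{2\pi}\int_\R \kappa(\xi',\eta)\,dt$ with $\kappa$ normalized as in \eqref{eq:kappa-normal}. One should also double-check that the continuity of the inclusion $\cS(\R,\fk) \hookrightarrow H^1_\partial(\R,\fk)$ invoked before Proposition~\ref{cor:verlegenslot} is correct with respect to the relevant Sobolev norm $q_1$; granting that (which follows since Schwartz functions decay rapidly while $H^1_\partial$ only demands $L^2$-integrability of $\xi$ and $D\xi$ against the locally finite measure $\nu = B_\eps\mu = B_\eps\,dt$, with $B_\eps$ bounded on $\R$ by Corollary~\ref{CorBEforR}), the rest of the argument is bookkeeping.
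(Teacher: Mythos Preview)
Your proposal is correct and follows essentially the same route as the paper's proof: use Proposition~\ref{Prop:analyticPEvacuum} to pass to the derived Lie algebra representation, extend to $\hat\fg_{\cS}$ via Proposition~\ref{cor:verlegenslot}, apply Tanimoto's theorem, and then lift back to the group level using Theorem~\ref{ThmProjRepLARep} together with the $1$-connectedness of $G_{\cs}$. Your version is more explicit about the intermediate verifications (that $\Omega$ remains a vacuum vector for the extended representation, that the central charge is unchanged), but these are exactly the checks the paper's terse argument implicitly relies on.
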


\begin{proof}
By Proposition~\ref{Prop:analyticPEvacuum}, the derived representation
$\dd\rho$ of $\widehat{\fg}_{\cs}$ is a vacuum representation
which by Proposition~\ref{cor:verlegenslot}  extends
to a continuous representation of the Lie algebra $\widehat{\fg}_{\cS}$.
By \cite[Cor.~5.8]{Ta11}, the latter is determined up to isomorphism by its central charge
$c\in \N_{0}$. Since $G_{\cs}$ is connected (Lemma~\ref{lem:1-connectedfibers}),
the representation $\rho$ of $\hat{G}_{\cs}$ is uniquely determined by
its derived Lie algebra representation (Theorem~\ref{ThmProjRepLARep}), and the result follows.
\end{proof}

In \S\ref{subsec:9.4}, we saw that the restriction of an irreducible positive energy representation of $\cL^{\sharp}(K)$
(with respect to rotations) yields an irreducible positive energy representation of
$C^{\infty}_{c}(\R,K)^{\sharp}$ (with respect to translations).
We now show that the unique vacuum representation of $C^{\infty}_{c}(\R,K)^{\sharp}$ with central charge $c$
arises by restriction of the irreducible positive energy representation
of $\cL^{\sharp}(K)$ with lowest weight $\lambda = (ic, 0, 0)$.

\begin{Proposition}\label{prop:groundstatezeroweight}
The irreducible lowest weight representation of $\cL^{\sharp}(K)$ with lowest weight $\lambda$
restricts to a vacuum representation of $C^{\infty}_{c}(\R,K)^{\sharp}$ if and only if
the restriction $\lambda_0$ of $\lambda$ to $i\ft$ is zero.
\end{Proposition}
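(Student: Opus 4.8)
The plan is to analyze the lowest weight representation $(\rho_\lambda,\cH_\lambda)$ of $\cL^\sharp(K)$ and determine exactly when the restriction to $C^\infty_c(\R,K)^\sharp$ is a vacuum representation in the sense of Definition~\ref{def:groundstate}. Recall from Proposition~\ref{prop:restrictionPEtoline} that this restriction is always an \emph{irreducible} positive energy representation with respect to the translation action. By Proposition~\ref{prop:groundstatezeroweight}'s companion results, the key is to understand the ground state space $\cE$ for the translation Hamiltonian $H = i\dd\rho(\bd_1)$, and to compare it with the lowest weight space for the rotation Hamiltonian $i\dd\rho(\bd_0)$.

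First I would recall the Segal--Sugawara setup from the proof of Proposition~\ref{prop:restrictionPEtoline}: the representation $\rho_\lambda$ extends to a projective positive energy representation of $\cL(K)\rtimes\widetilde{\SL}(2,\R)$, where $\widetilde{\SL}(2,\R)$ acts via fractional linear transformations of $\bS^1\cong\bP^1(\R)$, with $\bd_0$ generating rigid rotations and $\bd_1$ generating translations as in \eqref{eq:d1andd2}. By Lemma~\ref{lem:sl2}, both $i\dd\rho^P(\bd_0)$ and $i\dd\rho^P(\bd_1)$ are $\geq 0$. The crucial observation is that in any positive energy representation of $\widetilde{\SL}(2,\R)$, the kernel of $i\dd\rho^P(\bd_1)$ is contained in the kernel of $i\dd\rho^P(\bd_0)$: this follows because $\bd_0$ lies in the cone generated by the $\Ad$-orbit of $\bd_1$, so a vector annihilated by the generator of translations is a fixed point for the whole one-parameter group $\exp(t\bd_1)$, hence (by a standard argument with the $\fsl(2,\R)$ commutation relations, or by invoking that the minimal $\SL(2,\R)$ positive energy representations have this structure) is annihilated by $\bd_0$ as well. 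Thus the translation ground state space $\cE$ is contained in the rotation lowest weight space; conversely the lowest weight vector $v_\lambda$ (which generates $\cH_\lambda$ over $\cL^\sharp(K)$) satisfies $\dd\rho(\bd_0)v_\lambda = \lambda(\bd_0)v_\lambda$, and one checks that $v_\lambda$ is a ground state for $H$ if and only if $\lambda_0 = \lambda|_{i\ft} = 0$ (the $\ft$-weight $\lambda_0$ records precisely the obstruction).

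Next I would handle the two directions separately. If $\lambda_0 \neq 0$, then $v_\lambda$ is not translation-invariant, and I claim the restriction has no cyclic ground state vector unique up to scalar; indeed the ground state space $\cE$ for $H$, being contained in the $i\dd\rho^P(\bd_0)$-kernel, is either trivial or — if it contains some vector — that vector cannot generate $\cH_\lambda$ under $C^\infty_c(\R,K)^\sharp$ unless it generates under $\cL^\sharp(K)$, which by the lowest weight structure forces it to be proportional to $v_\lambda$, contradicting $\lambda_0\neq 0$ once one verifies $v_\lambda\notin\cD(H)$ with eigenvalue $E_0$. If $\lambda_0 = 0$, then $v_\lambda$ is fixed by $\exp(t\bd_0)$ up to the central phase, hence (absorbing the phase) fixed by translations, so $H v_\lambda = E_0 v_\lambda$ with $E_0 = \inf\Spec(H)$. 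Since $v_\lambda$ generates $\cH_\lambda$ cyclically under $\cL^\sharp(K)$, and (by the density argument in the proof of Proposition~\ref{prop:restrictionPEtoline}, passing through $H^1$-loops) the image of $C^\infty_c(\R,K)^\sharp$ is strongly dense in the image of $\cL^\sharp(K)_*$, the vector $v_\lambda$ is cyclic for $C^\infty_c(\R,K)^\sharp$ as well. By Proposition~\ref{Prop:analyticPEvacuum}, uniqueness of the ground state is equivalent to $\cU(\fg^\sharp)\Omega$ being dense, which holds here, and irreducibility (Proposition~\ref{prop:restrictionPEtoline}) rules out any second linearly independent ground state. Hence the restriction is a vacuum representation.

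The main obstacle I anticipate is the careful bookkeeping of the central extension phases when passing between the rotation action and the translation action: the cocycle \eqref{eq:cocyc-onemore} is $\Diff_+(\bS^1)$-invariant, so there is no central discrepancy at the level of $\cL^\sharp(K)$, but one must make sure that when we say $v_\lambda$ is ``translation-fixed'' we mean fixed up to a phase that is consistent with normalizing $H\geq 0$ with $\inf\Spec H = 0$, i.e.\ that the minimal translation Hamiltonian indeed annihilates $v_\lambda$ rather than acting by a nonzero constant. This is settled by the minimality statement in Corollary~\ref{cor:borch} together with the fact that, for $\lambda_0 = 0$, the $\widetilde{\SL}(2,\R)$-subrepresentation generated by $v_\lambda$ is the trivial one (since $\bd_0$ acts by $0$ on it after removing the central charge contribution), on which both $i\dd\rho^P(\bd_0)$ and $i\dd\rho^P(\bd_1)$ vanish identically.
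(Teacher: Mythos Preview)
Your overall strategy---extend to $\tilde\SL(2,\R)$ via Segal--Sugawara and compare the kernels of $i\dd\rho^P(\bd_0)$ and $i\dd\rho^P(\bd_1)$---matches the paper's. However, two genuine ingredients are missing, and your ``only if'' argument is logically tangled.

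\textbf{Gap 1: the Sugawara eigenvalue.} You assert that $v_\lambda$ is fixed by rotations ``up to the central phase'' and that this phase can be absorbed. It cannot: the $\fsl(2,\R)$ relation $[L_1,L_{-1}]=2L_0$ pins down $L_0$ with no additive freedom, so the $\fsl(2,\R)$-compatible rotation generator $H_0=i\dd\rho^P(\bd_0)$ is not the Kac--Moody $i\dd\rho(\bd)$ (which does annihilate $v_\lambda$) but the Sugawara $L_0$. The paper invokes \cite[Thm.~3.5(iii)]{GW84} to get that the minimal eigenvalue of $H_0$ equals a positive multiple of the $K$-Casimir on the lowest $\bd$-eigenspace $\cE_0$, hence vanishes iff $\lambda_0=0$ iff $\dim\cE_0=1$. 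Without this input your last paragraph (``$\bd_0$ acts by $0$ on it after removing the central charge contribution'') is unjustified.

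\textbf{Gap 2: scale invariance.} You pass from ``ground state space $\cE$ for $H_1$'' to ``contained in $\ker(i\dd\rho^P(\bd_0))$'' via your kernel inclusion. But a ground state is an eigenvector for $E_0=\inf\Spec(H_1)$, which a priori need not be $0$; only then is it in $\ker H_1$. The paper fills this by observing that the $\Ad$-orbit of $\bd_1$ contains $\R^+\bd_1$, so $\Spec(H_1)$ is scale-invariant and any eigenvalue at the bottom must be $0$.

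\textbf{The converse argument.} Your chain ``that vector cannot generate $\cH_\lambda$ under $C^\infty_c(\R,K)^\sharp$ unless it generates under $\cL^\sharp(K)$, which \dots forces it to be proportional to $v_\lambda$'' does not work: the representation is irreducible, so \emph{every} nonzero vector is cyclic, and cyclicity alone never singles out $v_\lambda$. The paper's route is cleaner: if $\cE_1\neq 0$, scale invariance gives $E_0=0$, the Mautner--Moore equality $\ker H_0=\ker H_1$ (both directions, not just your one-sided inclusion) gives $\ker H_0\neq 0$, hence $\inf\Spec(H_0)=0$, hence $\lambda_0=0$ by the Casimir formula.
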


\begin{proof}
Recall from \S\ref{subsec:9.4} that
every irreducible projective positive energy representation of $\cL(K)\rtimes \R \dd_0$ with lowest weight $\lambda$
extends to $\cL(K) \rtimes \mathrm{Diff}_{+}(\bS^1)$.
By Lemma~\ref{lem:sl2}, this induces a unitary representation of
\[\widetilde{\SL}(2,\R) \subseteq \mathrm{Diff}_{+}(\bS^1)^{\sharp},\]
which is of positive energy not only with respect to $\bd_0\in \fsl(2,\R)$, but also with respect to~$\bd_1\in \fsl(2,\R)$ (cf.~\eqref{eq:d1andd2}).


Recall from Remark~\ref{rem:8.10} that the space $\cE_0$ of ground states for $i\dd\rho(\bd_0)$ is an irreducible unitary
$K$-representation. Its lowest weight $\lambda_0$ is the restriction of $\lambda$ to $i\ft$.
By the formula in \cite[Thm.~3.5(iii)]{GW84},
the minimal eigenvalue of $H_0 = i\dd\rho(\bd_0)$ is a positive
multiple of the Casimir eigenvalue for $K$ on $\cE_0$.
In particular
it vanishes if and only if $\lambda_0 = 0$, which is the case if and only if $\dim\cE_0 = 1$,
\begin{equation}
  \label{eq:infspec-Suga}
\inf\Spec(H_0) = 0 \quad
\Longleftrightarrow \quad \lambda_0 = 0 \quad
\Longleftrightarrow \quad \dim \cE_0 = 1.
\end{equation}
By a result of Mautner and Moore \cite{Mautner1957,Moore1966},
\begin{equation}\label{eq:kerniskern}
\ker(\dd\rho(\bd_0)) = \ker(\dd\rho(\bd_1))
\end{equation}
coincides with the subspace of vectors that are fixed under
$\tilde\SL(2,\R)$.\begin{footnote}
{See Appendix~\ref{Appendix:sl2} for a simplified direct proof.}
\end{footnote}If $\lambda_0 = 0$, the ground state for $H_1 := i\dd\rho(\bd_1)$ is therefore unique up to a scalar.

Conversely, suppose that the space $\cE_{1}$ of ground states for $H_1$ is nontrivial.
Since the adjoint orbit through $\bd_1$
contains $\R^+\bd_1$, the spectrum of $H_1 = i\dd\rho(\bd_1)$ is scale invariant.
Any ground state $H_1\Omega = E \Omega$ then has $E=0$,
and will satisfy $H_0 \Omega = 0$ by \eqref{eq:kerniskern}.
Since $H_0$ is nonnegative, it has minimal eigenvalue zero, the space $\cE_0$ of ground states for $i\dd\rho(\bd_0)$
is one-dimensional. We conclude that $\lambda_0 = 0$, and that $\cE_{1}\subseteq \cE_{0}$ is one-dimensional as well.
\end{proof}

\subsubsection{Vacuum representations for noncompact manifolds}\label{sec:groundstatesncpt}

Let $\cK \rightarrow M$ be a bundle of 1-connected simple compact
Lie groups over a $2^{\mathrm{nd}}$ countable
manifold $M$, equipped with a smooth $\R$-action by automorphisms.

\begin{Theorem}\label{Thm:NoncompactClassification}
If the action of $\R$ on $M$ is free, then
up to unitary equivalence, there is a bijective correspondence between:
\begin{itemize}
\item[{\rm a)}] Smooth projective unitary representations $\ol{\rho} \colon \Gamma_{c}(M,\cK)_0 \rightarrow \PU(\cH)$ that extend to a vacuum representation of $\Gamma_{c}(M,\cK)_0^{\sharp}\rtimes_{\alpha} \R$
with smooth ground state vector $\Omega$.
\item[{\rm b)}] Closed, embedded, 1-dimensional flow-invariant submanifolds
$S$, together with a nonzero central charge $c_j \in \N$ for every connected
component ${S_{j}\simeq\R}$ of~$S$.
\end{itemize}
Under this correspondence we have
\[
(\cH,\Omega) = \bigotimes_{j\in J}(\cH_j,\Omega_j)\quad\text{and}\quad
\rho(g) = \bigotimes_{j \in J} \rho_{j}(g|_{S_{j}}),\]
where $(\rho_j, \cH_j, \Omega_j)$
is the restriction to $C^{\infty}_{c}(\R,K) \simeq \Gamma_{c}(S_j,\cK)$
of the lowest weight representation of $\cL^{\sharp}(K)$
with lowest weight $\lambda = (c_j,0,0)$ and
$J$ is the countable set of connected components of $S$.
\end{Theorem}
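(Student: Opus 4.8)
\textbf{Proof plan for Theorem~\ref{Thm:NoncompactClassification}.}
The plan is to deduce the statement by combining the Localization Theorem with the classification of vacuum representations of $C^\infty_c(\R,K)$ and a factorization result for vacuum representations of weak products of Lie groups (Appendix~\ref{subsec:a.5}). First I would observe that since the $\R$-action on $M$ is free, in particular it is fixed-point-free, so Theorem~\ref{thm:7.11} applies: any smooth projective unitary representation $\ol\rho$ of $\Gamma_c(M,\cK)_0$ that extends to a positive energy representation of $\Gamma_c(M,\cK)_0^\sharp\rtimes_\alpha\R$ factors through the restriction homomorphism $r_S\colon \Gamma_c(M,\cK)_0\to\Gamma_c(S,\cK)$, where $S\subseteq M$ is a closed, embedded, $1$-dimensional, flow-invariant submanifold. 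Since the action is free, every connected component $S_j$ of $S$ is a non-periodic orbit, hence diffeomorphic to $\R$ (there are no circles), and the bundle $\cK|_{S_j}$ trivializes $\R$-equivariantly by \eqref{eq:action-on-line}, so $\Gamma_c(S,\cK)\cong\prod_{j\in J}'\,C^\infty_c(\R,K)$ as a weak direct product of Lie groups, equivariantly with respect to the translation $\R$-action, with $J$ countable because $M$ is second countable.

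Next I would reduce to the single-factor case. On the Lie algebra level, the cocycle $\omega$ associated to the representation is, by Theorem~\ref{MeasureThm} and Remark~\ref{omooieconnectie}, of the form $\omega(\xi,\eta)=\langle D\xi,\eta\rangle_\mu$ for a flow-invariant measure $\mu$ supported on $S$; restricting to the flow box $S_j\times I$ around a point of $S_j$, Lemma~\ref{lem:integralitymeasure} shows that the transversal measure is a sum of point masses with weights in $\tfrac1{2\pi}\N_0$, so $\omega$ decomposes as an orthogonal direct sum over $j\in J$ of the standard cocycles on $C^\infty_c(\R,\fk)$ with integer central charges $c_j\in\N_0$. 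By the factorization result for vacuum representations of weak products (Appendix~\ref{subsec:a.5}), a vacuum representation of $\big(\prod_j' C^\infty_c(\R,K)\big){}^\sharp\rtimes_\alpha\R$ is unitarily equivalent to an (infinite) tensor product $\bigotimes_{j\in J}(\cH_j,\Omega_j)$ of vacuum representations of the individual $C^\infty_c(\R,K)^\sharp$ with respect to translations, with $\rho(g)=\bigotimes_j\rho_j(g|_{S_j})$; here one uses that the central element acts by a scalar $i\mathbf{1}$ on each factor, which is what makes the central extensions compatible with the tensor product. Discarding the components where $c_j=0$ (on which the representation is trivial by Corollary~\ref{zondagskind}/Theorem~\ref{red2cpt}-type vanishing, i.e.\ the cocycle vanishes and the vacuum condition forces triviality), we may and do assume $c_j\in\N$ for all $j\in J$.

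Then Theorem~\ref{thm:10.9} classifies each factor: a vacuum representation of $C^\infty_c(\R,K)^\sharp$ (with respect to translations) is determined up to unitary equivalence by its central charge $c_j\in\N$. To identify it concretely with the restriction of the lowest weight representation of $\cL^\sharp(K)$ with lowest weight $\lambda=(c_j,0,0)$, I would invoke Proposition~\ref{prop:restrictionPEtoline} (irreducibility and positive energy of the restriction) together with Proposition~\ref{prop:groundstatezeroweight}, which says precisely that this restriction is a vacuum representation iff $\lambda_0=0$; since a vacuum representation with central charge $c_j$ is unique, it must be this one. This gives the displayed formula $(\cH_j,\Omega_j)=$ restriction of $\cH_\lambda$, $\lambda=(c_j,0,0)$. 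Conversely, given $S$ and $(c_j)_{j\in J}$ with $c_j\in\N$, the tensor product $\bigotimes_j(\cH_j,\Omega_j)$ of these restricted representations, pulled back along $r_S$, is a smooth projective unitary representation of $\Gamma_c(M,\cK)_0$ that extends to a vacuum representation of $\Gamma_c(M,\cK)_0^\sharp\rtimes_\alpha\R$, and distinct data yield inequivalent representations (different $S$ give different supports; different $(c_j)$ are distinguished factor-by-factor by Theorem~\ref{thm:10.9}). I expect the main obstacle to be the infinite-tensor-product bookkeeping in the factorization step: one must check that the vacuum vectors $\Omega_j$ assemble into a genuine stabilizing sequence for an incomplete tensor product in the sense of von Neumann, that the central $\T$-extensions glue coherently (the scalar action of $C$ is what saves this), and that cyclicity of $\bigotimes_j\Omega_j$ under the weak product follows from cyclicity of each $\Omega_j$ — all of which is the content of Appendix~\ref{subsec:a.5} but requires care to apply, especially the smoothness of the resulting ground state vector.
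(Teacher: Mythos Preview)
Your proposal is correct and follows essentially the same route as the paper: localize via Theorem~\ref{thm:7.11}, use freeness to get $S_j\simeq\R$ and $\Gamma_c(S,\cK)\cong\prod'_j C^\infty_c(\R,K)$, factorize the vacuum representation over the weak product via Appendix~\ref{subsec:a.5} (Theorem~\ref{thm:a.11} and Proposition~\ref{prop:a.11} for smoothness), then classify each factor by Theorem~\ref{thm:10.9} and identify it concretely via Proposition~\ref{prop:groundstatezeroweight}. The paper phrases the central-extension compatibility a bit more explicitly---writing $G^\sharp\cong\big(\prod'_j G_j^\sharp\big)/N$ with $N=\ker\chi$ for the multiplication character $\chi\colon\prod'_j\T_j\to\T$, so that vacuum representations of $G^\sharp$ correspond to those of $\prod'_j G_j^\sharp$ on which the center acts by $\chi$---but this is exactly the point you flag with ``the central element acts by $i\mathbf{1}$ on each factor.''
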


\begin{proof}
By the Localization Theorem \ref{thm:7.11}, every projective positive energy representation $\ol{\rho}$
factors through $\Gamma_{c}(S,\cK)$ for a closed, embedded, 1-dimensional submanifold $S \subseteq M$.
It follows that $\rho$ factors through $\Gamma_{c}(S,\cK)^{\sharp}$.
Since $M$ is $2^{\mathrm{nd}}$ countable, $S$ has at most
countably many connected components $S_{j}$, $j \in J$,
and the freeness of the action implies that
each of these is $\R$-equivariantly isomorphic to $\R$.
By Lemma~\ref{eq:LemmaWeakProductsGauge}, the Lie group $\Gamma_{c}(S,\cK)$ is isomorphic
to the weak product
\[
	G := \textstyle{\prod}'_{j\in J} \,G_{j}, \quad\text{with}\quad
G_j = \Gamma_{c}(S_{j},\cK).\]
The cocycle $\psi \colon \fg \times \fg \rightarrow \R$
on $\fg := \Lie(G) = \bigoplus_{j\in J}\fg_j$ vanishes on
$\fg_i \times \fg_j$ for $i\neq j$. Since every $G_j$ is connected, this implies that
\[
 G^{\sharp} \cong \big(\textstyle{\prod}'_{j\in J} \,G^{\sharp}_{j}\big)/N,
\]
where $N \subseteq \textstyle{\prod}'_{j\in J} \T_j$ is the kernel of
the smooth character
\[
\chi \colon \textstyle{\prod}'_{j\in J}\;\T_j \rightarrow \T, \quad (z_j)_{j \in J}
 \mapsto \prod_{j\in J} z_j.
\]
The vacuum representations of $G^{\sharp}$ therefore correspond
to vacuum representations of
the weak product
$\prod'_{j\in J}G_j^{\sharp}$ such that the central subgroup
$\prod'_{j\in J}\T_{j}$ acts by $\chi$.
We may assume without loss of generality that
the ground state energy is zero, $H\Omega = 0$.

By Theorem~\ref{thm:a.11}, every  vacuum representation $(\rho,\cH,\Omega)$ of the weak product
$\prod'_{j\in J}G_j^{\sharp}$ is a product of vacuum representations
$(\rho_j,\cH_j,\Omega_j)$
of $G_j^{\sharp}$, and by Proposition~\ref{prop:a.11},
$\rho$ is smooth with smooth ground state vector $\Omega$
if and only if all the $\rho_i$ are smooth with smooth ground state $\Omega_j$.

Since $\rho_j$ is irreducible by Proposition~\ref{prop:a.9},
its restriction to the central subgroup
$\T_j \subseteq G^{\sharp}_j$ is a character
$\chi_j \colon \T_j \rightarrow \T$. The product $\rho = \bigotimes_{j\in J}\rho_j$
acts by $\chi$ on the center
$\prod'_{j\in J}\T_j$
if and only if $\chi_j(z) = z\one$ for all $j\in J$.

Using the free $\R$-action to identify $\cK|_{S_j}$ with $\R \times K$,
we obtain an $\R$-equivariant
isomorphism between $G_j = \Gamma_{c}(S_j,\cK)$ and $C^{\infty}_{c}(\R,K)$
(cf.\ \S\ref{subsec:8.1}).
By Theorem~\ref{thm:10.9}, the  vacuum representations of $G_j^{\sharp}$ are characterized up to unitary equivalence by their central charge $c_j \in \N$, and
by Proposition~\ref{prop:groundstatezeroweight},
$(\rho_j, \cH_j, \Omega_j)$ is unitarily equivalent to
the restriction to $C^{\infty}_{c}(\R,K)$
of the lowest weight representation of $\cL^{\sharp}(K)$ with $\lambda = (c_j,0,0)$.
If $c_j=0$, then the corresponding representation is trivial,
so we can omit both $S_j$ and $c_j$ from the description.
\end{proof}

\subsection{Infinitely many circles}
\label{subsec:9.2}

We continue with the case where all connected components $S_j$ of $S$ are
circles. In marked contrast with the case of infinitely many lines, the projective positive energy
representations associated to a single connected component $S_j$ are well understood, allowing
us to classify the projective positive energy representations of $\Gamma_c(S,\cK)$
under the much weaker condition that the Hilbert space $\cH$ is generated by the space $\cE$ of ground states.
These are the \emph{ground state representations} of Definition~\ref{def:groundstate}.

As before, we assume that $K$ is a $1$-connected compact Lie group,  which is not a serious restriction as long as $K$ is connected (cf.~Remark~\ref{rem:7.6}).
In \S\ref{sec:spectralgap} we describe the \emph{spectral gap condition}, an essentially geometric
sufficient condition for all positive energy representations to be generated by the space of ground states.
The main result of this section is Theorem~\ref{prop:9.5} in \S\ref{sec:classificationnoncompactcircles},
where we describe the ground state representations
in terms of the representation theory of UHF $C^{*}$-algebras.

\subsubsection{The spectral gap condition}\label{sec:spectralgap}

Following the line of reasoning in \S\ref{sec:8}, we associate to every compact connected component $S_j$
a `local' Hamiltonian $H_j$. If these local hamiltonians have a uniform spectral gap, we say that
$(\rho,\cH)$ satisfies the \emph{spectral gap condition}. We show that this (essentially geometric) condition
guarantees that the
positive energy representations are generated by their space of ground states.

We continue with the notation
\[ G = \Gamma_c(S,\cK) \cong \prod_{j \in J}{}^{'} \cL_{\Phi_j}( K_j),
\]
where $\prod_{j\in J}'$ denotes the weak direct product\index{weak product!10@group \scheiding $\prod^{'}_{n \in \N} G_n$} as in \S\ref{appendix:weakproducts}.
As in \S\ref{subsec:8.1}, we identify $S_j$ with $\R/\Z$, where the time translation
$\gamma_{S,t}$ acts on $[x_{j}] \in S_{j}$ by
\[\gamma_{S, t}([x_{j}]) = \Big[x_{j}+\frac{t}{T_{j}}\Big].\]
The derivation acts on $\xi_{j} \in \cL_{\Phi_{j}}(\fk_j)$
by
\[
D\xi_{j} = \frac{1}{T_{j}} (\bd_j \xi_{j} + [A_{j},\xi_j])\,.
\]



By choosing a suitable parametrization of $\cK|_{S_{j}}$,
we may assume that $A_{j}$ is constant (see \cite[Prop.~2.14]{Ne14a} or \cite[\S5.2]{MW04})
and lies in the maximal abelian
subalgebra $\ft^{\circ}$ of $\fk^{\phi_{j}}$ (Theorem~\ref{thm:a.3}).
By acting with the $\phi_j$-twisted Weyl group $\cW$, i.e.,
the Weyl group of the underlying Kac--Moody Lie algebra, we may
also assume that $\bd_j + A_{j}$ lies in the positive Weyl chamber,
i.e., ${(\alpha,n)(i(\bd_j + A_j)) \geq 0}$ for all positive roots
$(\alpha, n) \in \Delta^+$ (\cite[\S3]{MW04} and Appendix~\ref{app:1}).

In the following $(\rho_{\lambda_j}, \cH_{\lambda_j})$ denotes the
irreducible positive energy representation of
\[ G_j^\sharp \cong  \cL_{\Phi_j}^\sharp(K_j) \cong \Gamma(S_{j},\cK)^\sharp\]
with lowest weight $\lambda_j$ (cf.~\S\ref{subsec:8.3}).
Then the minimal eigenspace $V^{0}_{j}$ of $\bd_{j}$ in $\cH_{\lambda_{j}}$
is an irreducible $K^{\Phi}$-representation.
Since $A_{j}$ is anti-dominant,
the minimal eigenspace $W^{0}_{j}$ of $H_{j}$
(which is also finite dimensional by Kac--Moody theory)
contains all weight vectors $v_{\mu}$ in $V_{j}^0$
with $\mu(A_{j}) = 0$.
Note that $W_{j}^{0}$ is $1$-dimensional for generic $A_{j}$
and increases in dimension as $\bd_j + A_{j}$ is contained in a smaller face
of the Weyl chamber (or, equivalently, as $A_j$ is contained in a smaller
face of the Weyl alcove),
and that $W_{j}^0 = V_j^0 = V_{\lambda^{0}_{j}}$
if $A_{j} = 0$.
We denote the orthogonal projection $\cH_{\lambda_j} \rightarrow V^{0}_{j}$
by $P_{j}$, and for a finite subset $F \subeq J$, we
set $P_{F} := \prod_{j\in F}P_j$.

Let $(\rho, \cH)$ be a factorial projective positive energy
representation of $G^\sharp$.
Recall from \S\ref{sec:1dcompact} that, for every finite subset $F \subseteq J$,
we have a tensor product decomposition $\cH = \cH_{F} \otimes \cH'_{F}$.
Here $\cH_{F} = \bigotimes_{j\in F} \cH_{\lambda_{j}}$
is a positive energy representation of
$G^{\sharp}_{F}$ with Hamiltonian
\begin{equation}
  \label{eq:HFsplit}
H_{F} = \sum_{j\in F} H_{\lambda_{j}}, \quad\text{where}\quad
H_{\lambda_j} = i\dd\rho_{\lambda_{j}}\Big(\frac{1}{T_{j}} (\bd_j  + A_{j})\Big)
-  \frac{i}{T_j}\lambda_j(\bd_j + A_j)\one
\end{equation}
is the minimal non-negative Hamiltonian on $\cH_{\lambda_j}$
from \S\ref{sec:minimal}.
The other factor $\cH_F'$ is a minimal positive energy representation
of $G^{\sharp}_{J\setminus F}$ with Hamiltonian $H'$, and we have
\[H = H_{F}\otimes \one + \one \otimes H'\,. \]


The ground states for a $\widehat{G}$-representation (in the sense of Definition~\ref{def:groundstate})
can be characterized in terms of the `local' Hamiltonians as follows.
\begin{Lemma} \label{lem:9.3}
For a factorial minimal positive energy representation $(\rho,\cH)$ of~$\hat G$,
a vector $\Omega \in \cD(H) \subeq \cH$ is a
ground state vector if and only if $H_{\lambda_j} \Omega = 0$ for every $j \in J$.
\end{Lemma}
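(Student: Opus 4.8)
The plan is to use the factorization $\cH = \cH_F \otimes \cH'_F$ for every finite subset $F\subseteq J$, together with the splitting $H = H_F\otimes\one + \one\otimes H'$ where $H_F = \sum_{j\in F}H_{\lambda_j}$ and both summands are non-negative (each $H_{\lambda_j}\geq 0$ is the \emph{minimal} Hamiltonian from \S\ref{sec:minimal}, and $H'\geq 0$ because $\rho$ is a positive energy representation). The ``if'' direction is immediate: if $H_{\lambda_j}\Omega = 0$ for all $j$, then in particular $\Omega$ lies in the kernel of each $H_{\lambda_j}$; since $H$ is a (norm-closure of a) sum of mutually commuting non-negative operators $H_{\lambda_j}$ together with the leftover non-negative part, one shows $H\Omega = 0$, and since $\inf\Spec(H) = 0$ (the representation is minimal, so the ground state energy is $0$), $\Omega$ is a ground state vector.

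For the ``only if'' direction, suppose $H\Omega = 0$. Fix $j\in J$ and choose a finite $F\ni j$. In the decomposition $\cH = \cH_F\otimes\cH'_F$ write $H = H_F\otimes\one + \one\otimes H'$ with $H_F\geq 0$, $H'\geq 0$. The key point is that a vector annihilated by a sum of two commuting non-negative self-adjoint operators on a tensor product $\cH_F\otimes\cH'_F$ must be annihilated by each summand separately: if $H\Omega = 0$ then $0 = \langle\Omega,(H_F\otimes\one)\Omega\rangle + \langle\Omega,(\one\otimes H')\Omega\rangle$, and since both expectation values are non-negative they both vanish, whence $(H_F\otimes\one)^{1/2}\Omega = 0$ and so $(H_F\otimes\one)\Omega = 0$. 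Thus $H_F\Omega = 0$ (identifying $\cH$ with $\cH_F\otimes\cH'_F$). Repeating the argument inside $\cH_F = \bigotimes_{i\in F}\cH_{\lambda_i}$, where $H_F = \sum_{i\in F}H_{\lambda_i}$ is again a finite sum of mutually commuting non-negative operators acting on distinct tensor factors, we conclude $H_{\lambda_i}\Omega = 0$ for each $i\in F$, in particular for $i = j$. Since $j$ was arbitrary, $H_{\lambda_j}\Omega = 0$ for all $j\in J$.

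The only genuine subtlety — and the step I expect to require the most care — is the domain bookkeeping: $\Omega\in\cD(H)$ does not by itself place $\Omega$ in $\cD(H_{\lambda_j})$ for all $j$ simultaneously, since $H$ is the self-adjoint generator of a one-parameter group and a priori only the \emph{closure} of $\sum_j H_{\lambda_j} \oplus H'$ is available. I would handle this by working with the spectral projections of $H$: since $\Omega = P^H(\{0\})\Omega$ and $P^H(\{0\}) = \bigwedge_{j\in F} (P^{H_{\lambda_j}}(\{0\})\otimes\one)\wedge(\one\otimes P^{H'}(\{0\}))$ for each finite $F$ (because the $H_{\lambda_j}$ and $H'$ strongly commute and their non-negative sum has kernel equal to the intersection of the kernels), one directly gets $\Omega\in\ker H_{\lambda_j}$ in the strong (spectral) sense, so $H_{\lambda_j}\Omega = 0$ makes sense and holds. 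The fact that the $H_{\lambda_j}$ act on distinct tensor factors of $\cH_F$, and that $H'$ commutes with all of $\rho_{\lambda_j}(G_j^\sharp)$ by construction in \S\ref{sec:1dcompact}, guarantees the required strong commutativity.
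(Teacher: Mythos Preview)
Your ``only if'' direction (ground state $\Rightarrow H_{\lambda_j}\Omega = 0$) is correct and is exactly the paper's one-line argument $0 \leq H_{\lambda_j} \leq H$; your domain discussion via spectral projections is fine but already more than is needed. However, you have the two directions reversed in difficulty: the ``if'' direction, which you call ``immediate'', is the substantive half of the lemma, and your sketch for it does not close.

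The gap is your assertion that $H$ is ``a (norm-closure of a) sum of mutually commuting non-negative operators $H_{\lambda_j}$ together with the leftover non-negative part''. For each \emph{finite} $F$ one has $H = H_F \otimes \one + \one \otimes H'_F$, but for infinite $J$ there is no a priori decomposition $H = \overline{\sum_j H_{\lambda_j}} + R$ with $R \geq 0$; and even granting one, $H_{\lambda_j}\Omega = 0$ for all $j$ tells you nothing about $R\Omega$. In your argument minimality enters only to fix $\inf\Spec(H)=0$, but that is not where it does the work. The paper's proof uses minimality in the Borchers--Arveson sense as a genuine comparison principle: assuming w.l.o.g.\ that $\Omega$ is $G^\sharp$-cyclic, one builds a competitor one-parameter group by observing that $V^F_t := e^{-itH_F}$ fixes $\Omega$ and implements $\alpha_t$ on $G_F^\sharp$, and that $V^{F'}_t$ agrees with $V^F_t$ on the cyclic subspace $\cH^F := \overline{\rho(G_F^\sharp)\Omega}$ for $F' \supseteq F$. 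The resulting direct-limit group $V_t = e^{-it\tilde H}$ has $\tilde H \geq 0$, implements $\alpha_t$ on all of $G^\sharp$, and satisfies $\tilde H\Omega = 0$. Minimality of $H$ then yields $0 \leq H \leq \tilde H$, and only now does $H\Omega = 0$ follow. Without this construction (or an equivalent one), the ``if'' direction is not proved.
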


\begin{proof} ``$\Rarrow$'': Suppose first that $\Omega$ is a ground state
vector. Then $0 \leq H_{\lambda_j} \leq H$ implies that $H_{\lambda_j} \Omega = 0$.

``$\Leftarrow$'': Conversely, suppose that $H_{\lambda_j} \Omega = 0$
holds for all $j \in J$. By minimality, the cyclic subspace generated by
$\Omega$ under $G^\sharp$ is $\hat G$-invariant and the corresponding representation
on this subspace is minimal. We may therefore assume that $\Omega$ is cyclic.

For every finite subset $F \subeq J$, $\Omega$ is fixed by the  operators
${V^F_t := e^{-it H_F}}$, $t \in \R$. These operators satisfy
\[ V^F_t \rho(g) V^F_{-t} = \rho(\alpha_t(g)) \quad \mbox{ for }\quad g \in G_F^\sharp,
t \in \R.\]
For any finite superset $F' \supeq F$ we then have
\[V^{F'}_t \rho(g) \Omega
= \rho(\alpha_t(g)) \Omega
=V^{F}_t \rho(g) \Omega \quad \mbox{ for } \quad g \in G_F^\sharp.\]
This means that $V^F_t$ and $V^{F'}_t$ coincide on the closed subspace $\cH^F$ generated
by $\rho(G_F^\sharp)\Omega$. Since the union of these subspaces is dense in $\cH$,
we obtain a unitary one-parameter group $(V_t)_{t \in \R}$ on $\cH$
whose restriction to $\cH^F$  coincides with $(V^F_t)_{t \in \R}$.
This implies that $V_t \rho(g) V_{-t} = \rho(\alpha_t(g))$ for $g \in G^\sharp, t \in \R$.

Write $V_t = e^{-it \tilde H}$ for a positive selfadjoint operator $\tilde H$.
Then our construction shows that
$\tilde H$ coincides with $H_F = \sum_{j \in F} H_{\lambda_j}$ on $\cH^F$,
and thus $\tilde H \geq 0$.
By minimality of $H$, we have
$0 \leq H \leq \tilde H$, so that $\tilde H\Omega = 0$ leads to
$H\Omega = 0$.
\end{proof}

\begin{Definition}{\rm (Spectral gap)} \label{def:spectralgap}
We say that the family
$(\lambda_j, A_j, T_j)_{j \in J}$ satisfies the \index{spectral gap condition \vulop}
{\it spectral gap condition} if
there exists a positive real number $\Delta E$ such that, for every $j \in J$,
\[ \Spec(H_{\lambda_j}) \subeq \{0\} \cup [\Delta E,\infty).\]
\end{Definition}

The spectral gap condition is essentially geometric in nature.
Recall that for $m\in M$, the $\R$-action $\gamma_t$ yields a group automomorphism
$\gamma_{t}(m) \colon \cK_{m} \rightarrow \cK_{\gamma_{M}(m)}$.
The spectral gap condition is
automatically satisfied if the period
\begin{equation}\label{eq:Period}
T(m) := \inf \Big\{t > 0\,;\, \gamma_{M,t}(m) = m, \; \gamma_{t}  = \mathrm{Id} \in \mathrm{Aut}(\cK_{m})\Big\}
\end{equation}
is uniformly bounded on $M$.
Indeed, the $\R$-action on $\Gamma(S_j,\cK)$ then has period $T_j \leq\sup_{m\in M}T(m)$,
so the spectrum of $\bd_j + A_j$ in every minimal unitary positive energy representation
will be contained in $(2\pi i /T_j) \Z$.

\begin{Proposition} {\rm(Spectral gaps yield ground state vectors)}\label{Prop:spectralGAP}
Let $(\rho,\cH)$ be a factorial
minimal positive energy representation of $\hat G$
such that the corresponding family $(\lambda_j, A_j, T_j)_{j \in J}$
satisfies the spectral gap condition with some $\Delta E > 0$. Then $\cH$ is generated
under $G^\sharp$ by the subspace $\ker H$ of ground state vectors.
\end{Proposition}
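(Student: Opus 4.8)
\textbf{Proof plan for Proposition~\ref{Prop:spectralGAP}.}
The plan is to exploit the spectral gap to build a ``total'' Hamiltonian whose kernel already generates $\cH$, and then to identify that kernel with the ground-state space $\ker H$ via Lemma~\ref{lem:9.3}. First I would fix an exhaustion $F_1 \subseteq F_2 \subseteq \cdots$ of the index set $J$ by finite subsets, and consider the associated partial Hamiltonians $H_{F_k} = \sum_{j \in F_k} H_{\lambda_j}$ acting on the tensor factor $\cH_{F_k}$ of the splitting $\cH = \cH_{F_k} \otimes \cH'_{F_k}$ from \S\ref{sec:1dcompact}. Exactly as in the proof of Lemma~\ref{lem:9.3}, the one-parameter groups $V^{F_k}_t := e^{-itH_{F_k}}$ are compatible in the sense that $V^{F_{k'}}_t$ and $V^{F_k}_t$ agree on the closed subspace $\cH^{F_k}$ generated by $\rho(G^\sharp_{F_k})\Omega$-type vectors; more carefully, since each $G^\sharp_j$ commutes with $G^\sharp_{J \setminus F_k}$, the operator $H_{F_k}$ implements the $\R$-action $\alpha_t$ on $\rho(G^\sharp_{F_k})''$ and acts trivially on the commuting part, so the $V^{F_k}_t$ assemble (on the dense union $\bigcup_k \cH^{F_k}$, which by minimality/cyclicity considerations is all of $\cH$ once we reduce to a cyclic subspace) into a single strongly continuous unitary one-parameter group $V_t = e^{-it\widetilde H}$ with $\widetilde H \geq 0$ and $\widetilde H \Omega = 0$ on ground states. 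This is the same bookkeeping already carried out in Lemma~\ref{lem:9.3}, so I would cite it rather than repeat it.

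The new ingredient is the following quantitative statement: \emph{the spectral gap for the pieces $H_{\lambda_j}$ forces a spectral gap for the partial sums $H_{F_k}$, uniformly in $k$}. Indeed, since $H_{F_k}$ is a sum of commuting self-adjoint operators each with $\Spec \subseteq \{0\} \cup [\Delta E, \infty)$, its spectrum is contained in $\{0\} \cup [\Delta E, \infty)$ as well: if all summands are in $\ker$, the sum is $0$; otherwise at least one summand contributes $\geq \Delta E$ and the rest are $\geq 0$, so the sum is $\geq \Delta E$. Passing to the strong limit, $\Spec(\widetilde H) \subseteq \{0\} \cup [\Delta E, \infty)$ too (the spectral projection $P^{\widetilde H}((0,\Delta E))$ is the strong limit of the corresponding projections for $H_{F_k}$, hence zero). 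Therefore $\widetilde H$ is \emph{not merely} positive but gapped, and its kernel $\cH_0 := \ker \widetilde H$ is a genuine spectral subspace, on which the projection $P_0$ is bounded.

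Next I would show $\cH_0$ is $G^\sharp$-invariant and generating. Invariance: for $g \in G^\sharp_{F_k}$, the relation $V_t \rho(g) V_{-t} = \rho(\alpha_t(g))$ shows $\rho(g)$ carries $\ker\widetilde H$ into vectors whose $\widetilde H$-energy is bounded by the $\R$-spectral content of $g$, which is bounded; combined with the gap $\Delta E$, a standard argument (approximating $g$ by elements with small high-energy tails, or using that $\rho$ is smooth so $\rho(G^\sharp_{F_k})$ preserves a core on which $\widetilde H$ has controlled behaviour) shows $\rho(G^\sharp)\cH_0 \subseteq \cH_0$ --- here the gap is essential, since it is what makes ``energy $< \Delta E$'' mean ``energy $= 0$''. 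Actually the cleanest route: since $H_{\lambda_j}$ acts only on the factor $\cH_{\lambda_j}$ and $\rho(G^\sharp_j)$ acts irreducibly there, $\ker H_{\lambda_j}$ is \emph{not} $G^\sharp_j$-invariant --- so I should instead argue that $\cH_0$ is generating. Generating: if $\cH_0^\perp \neq \{0\}$, then the subrepresentation of $\hat G$ on $\cH_0^\perp$ is again a positive energy representation (it is $V_t$-invariant, hence $\widehat G$-invariant since $V_t$ implements $\alpha$), and by the gap its Hamiltonian $\widetilde H|_{\cH_0^\perp} \geq \Delta E$; but a positive energy representation of $\hat G$ whose energy is bounded below by $\Delta E > 0$ still has a minimal energy subspace (by the Borchers--Arveson reduction, Corollary~\ref{cor:borch}, and Remark~\ref{rem:x.1} applied componentwise after passing to a periodic model), and subtracting a constant would contradict minimality of $H$ --- more directly, the reasoning of Lemma~\ref{lem:1.9} shows any nonzero positive energy subrepresentation contains smooth vectors in its minimal energy space, and translating back these produce vectors with $\widetilde H = 0$ (this uses $H_{\lambda_j} \leq H$ and the spectral-gap-forced discreteness), contradicting $\cH_0^\perp \perp \cH_0$. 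Hence $\cH_0 = \cH$... which is false in general; so the correct statement is that $\cH_0$ \emph{generates} $\cH$ under $G^\sharp$: the smallest closed $G^\sharp$-invariant subspace containing $\cH_0$ is $\widehat G$-invariant, and its orthocomplement is a positive energy representation with no $\widetilde H$-ground states, which by the gap argument above is $\{0\}$.

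Finally, by Lemma~\ref{lem:9.3} a vector lies in $\ker H$ if and only if $H_{\lambda_j}$ annihilates it for all $j$, which (since $\widetilde H$ is the ``sum'' $\sum_j H_{\lambda_j}$ in the limiting sense above, with all summands $\geq 0$) is equivalent to $\widetilde H$-annihilation, i.e. $\ker H = \cH_0$. Combining with the previous paragraph, $\cH$ is generated under $G^\sharp$ by $\ker H$, as claimed. The main obstacle I expect is the bookkeeping in passing from the finite partial Hamiltonians $H_{F_k}$ to the limit operator $\widetilde H$ and checking that the gap survives the strong limit and that $\ker\widetilde H$ coincides with the space of simultaneous kernels --- this is where the spectral gap $\Delta E > 0$ is genuinely used, since without it the ``infinite sum of positive operators'' need not have $0$ as an isolated point of its spectrum, and the generating property could fail. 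Once that is in place, Lemma~\ref{lem:9.3} does the rest mechanically.
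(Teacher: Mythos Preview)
Your construction of $\widetilde H$ in the first paragraph is circular. The assembly of the $V^{F_k}_t$ that you cite from Lemma~\ref{lem:9.3} takes as \emph{input} a vector $\Omega$ already annihilated by every $H_{\lambda_j}$: the subspaces $\cH^{F_k} = \oline{\rho(G^\sharp_{F_k})\Omega}$ on which the one-parameter groups agree are defined via such an $\Omega$, and their compatibility holds precisely because $\Omega \in \bigcap_{j \notin F_k} \ker H_{\lambda_j}$, so that the extra factors $e^{-itH_{\lambda_j}}$ act trivially. You are trying to invoke this construction in order to \emph{produce} ground states, so you cannot presuppose one. You could instead try to build $\widetilde H$ as a form supremum of the increasing family $H_{F_k}$ (using $H_{F_k} \leq H$ to control the domain), but then the persistence of the gap under the strong resolvent limit, and the identification $\ker\widetilde H = \bigcap_j \ker H_{\lambda_j}$, become nontrivial technical points you have not addressed; your claim that the spectral projection onto $(0,\Delta E)$ is a ``strong limit of the corresponding projections'' needs justification.

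The paper bypasses $\widetilde H$ entirely with an observation you mention only in passing: the operator inequality $H_{\lambda_j} \leq H$, which is immediate from $H = H_{F}\otimes\one + \one\otimes H'$ with both summands nonnegative. Applied directly to the spectral subspace $P\cH := P^H([0,\Delta E/2])\cH$ of the \emph{given} Hamiltonian $H$, it yields $H_{\lambda_j}\res_{P\cH} \leq \Delta E/2 < \Delta E$, hence $P\cH \subeq \ker H_{\lambda_j}$ for every $j$ by the gap. Lemma~\ref{lem:9.3} then gives $P\cH \subeq \ker H$, so $P\cH = \ker H$; minimality guarantees $P\cH \neq \{0\}$. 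Your generating argument is then essentially the paper's: the $G^\sharp$-closure $\cH^1$ of $\ker H$ is $\hat G$-invariant by minimality, and $(\cH^1)^\bot$ is again a minimal positive energy subrepresentation; running the first step on $(\cH^1)^\bot$ forces $P^H([0,\Delta E/2])(\cH^1)^\bot \neq \{0\}$ unless $(\cH^1)^\bot = \{0\}$, but that subspace is contained in $\ker H \subeq \cH^1$. So the whole proof is two short paragraphs working entirely with $H$ --- no limiting operator is needed.
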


\begin{proof} The minimality implies that $0$ is the infimum of the spectrum of $H$,
so that the spectral projection $P := P([0,\Delta E/2])$ is non-zero. First we show that
$P\cH$ is contained in the kernel of every $H_{\lambda_j}$. In fact, the operator
$H - H_{\lambda_j}$ is non-negative.
Since the minimal non-zero spectral value of $H_{\lambda_j}$ is $\geq \Delta E$, it follows that
$P\cH \subeq \ker H_{\lambda_j}$.
Lemma~\ref{lem:9.3} now shows that $H\Omega = 0$.
Therefore $\cF := P\cH$ coincides with the subspace $\ker H$ of ground state vectors.

Next we show that $\cF$ is generating under $G^\sharp$.
Let $\cH^1 \subeq \cH$ be the closed subspace generated by $\cF$ under $G^\sharp$.
Then we obtain a $G^\sharp$-invariant decomposition
$\cH = \cH^1 \oplus \cH^2$. Minimality of $\rho$ now
implies that it is also $\hat G$-invariant, so that
the Hamiltonian $H$ decomposes accordingly as $H = H^1 \oplus H^2$.
Since $\cF \cap \cH^2 = \{0\}$, we obtain $\cH^2 = \{0\}$ by minimality of $H^2$
and the first part of the proof.
This shows that $\cH = \cH^1$ is generated by $\cF$ under $G^\sharp$.
\end{proof}

\subsubsection{Classification in terms of UHF $C^{*}$-algebras}\label{sec:classificationnoncompactcircles}

As in \cite{JN15}, where we dealt with norm continuous representations
of gauge groups,
we aim at a description of the factor representations
of positive energy in terms of $C^*$-algebras. As semiboundedness is crucial
to obtain corresponding $C^*$-algebras (\cite{NSZ17}), we first
observe that positive energy representations are semibounded (cf.~Definition~\ref{def:semibounded}).

Applying Corollary~\ref{cor:6.34} with $M = S$, we immediately obtain:

\begin{Theorem} \label{thm:nc1} If all connected components of $S$ are compact,
then every projective positive energy representation $(\rho,\cH)$ of
$\Gamma_c(S,\cK) \cong \prod'_{j \in J} \cL_{\Phi_j}(K_j)$
is semibounded with the affine hyperplane
$\Gamma_c(S,\fK)^\sharp - D$ contained in the open cone $W_\rho$, so that
$W_\rho$ is an open half space.
In particular, it is a positive energy representation for all derivations
$D_A := D - \ad A$, $A \in \Gamma_c(S,\fK)$.
\end{Theorem}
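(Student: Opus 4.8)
The statement is Theorem~\ref{thm:nc1}, which asserts semiboundedness of projective positive energy representations of $\Gamma_c(S,\cK)$ when all connected components $S_j$ are compact, together with the location of the cone $W_\rho$. The plan is to reduce this directly to Corollary~\ref{cor:6.34}, which already establishes exactly this kind of semiboundedness statement for $\Gamma_c(M,\cK)$ whenever $\bv_M$ is nowhere vanishing on $M$. The key observation is that here the base manifold is $S$ itself, which is one-dimensional and on which the flow $\gamma_S$ is fixed-point free (each $S_j$ is a periodic orbit, so $\bv_S$ never vanishes). Thus all the hypotheses feeding into Corollary~\ref{cor:6.34} are met with $M$ replaced by $S$.

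First I would recall that a projective positive energy representation $\ol\rho$ of $\Gamma_c(S,\cK)$ gives rise, via Theorem~\ref{ThmProjRepLARep} and the discussion in \S\ref{Sec:doubleextensions}, to a linear positive energy representation $\rho$ of the double extension $\widehat{G}$ with Lie algebra $\widehat{\fg} = \big(\R C \oplus_\omega \Gamma_c(S,\fK)\big) \rtimes \R D$, where $D$ is the infinitesimal generator of the flow on $S$. Since $\bv_S$ is nowhere vanishing, the entire machinery of \S\ref{sec:6} applies: in particular Theorem~\ref{thm:lapje} and Proposition~\ref{Prop:propsembo} hold, and Corollary~\ref{cor:6.34} then states precisely that the derived representation $\dd\rho$ of $\big(\R C \oplus_\omega \Gamma_c(S,\fK)\big) \rtimes \R D$ is semibounded, with the cone $W_\rho$ containing the open half space $\big(\R C \oplus_\omega \Gamma_c(S,\fK)\big) - \R_+ D$. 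Because $S$ has only compact components, each $\Gamma_c(S_j,\cK)$ is a twisted loop group and $\Gamma_c(S,\cK) \cong \prod'_{j\in J}\cL_{\Phi_j}(K_j)$, so this is the same object named in the theorem; and the affine hyperplane $\Gamma_c(S,\fK)^\sharp - D$ lies inside the open half space $\big(\R C \oplus_\omega \Gamma_c(S,\fK)\big) - \R_+ D \subseteq W_\rho$.

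For the last sentence of the theorem, I would argue as follows: semiboundedness together with $W_\rho$ being an open half space containing all of $\Gamma_c(S,\fK)^\sharp - D$ means that for every $A \in \Gamma_c(S,\fK)$ the element $-D + A = -(D - A)$ lies in $W_\rho$; equivalently $i\dd\rho$ is bounded above on a neighborhood of $-(D-\ad A)$, which is exactly the positive energy condition for the derivation $D_A := D - \ad A$ (after passing to $-\dd\rho(D_A) = H_A$, whose spectrum is then bounded below). One subtlety to address cleanly is that the derivation implemented by $A \in \Gamma_c(S,\fK)$ is $\ad_A$ on $\fg$, and the corresponding element of $\widehat{\fg}$ differs from $D$ by $A + \chi(A)C$ for the relevant cocycle correction; but since adding a constant multiple of $C$ only shifts the Hamiltonian by a constant, it does not affect boundedness below, so this causes no real difficulty.

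The only genuine point of care — and the place where I would spend a sentence or two in the actual proof — is verifying that the hypotheses of \S\ref{sec:6} (fibers simple, or the reduction from semisimple to simple via \S\ref{vanseminaarsimpel}, and the replacement of $\omega$ by the measure-cocycle of Theorem~\ref{MeasureThm}) transfer cleanly to the base manifold $S$ rather than $M$. Since $S \hookrightarrow M$ is a closed embedded flow-invariant submanifold, $\cK|_S \to S$ is again a bundle of $1$-connected semisimple Lie groups with a lifted fixed-point-free $\R$-action, so literally every hypothesis used in \S\ref{sec:6} and Corollary~\ref{cor:6.34} holds verbatim with $M$ replaced by $S$. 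I do not anticipate any real obstacle; the content of the theorem is simply the instantiation of Corollary~\ref{cor:6.34} in the one-dimensional localized setting, recorded here for later use in the $C^*$-algebraic analysis of \S\ref{sec:classificationnoncompactcircles}.
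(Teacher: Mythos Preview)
Your proposal is correct and follows exactly the paper's approach: the paper's entire proof is the single sentence ``Applying Corollary~\ref{cor:6.34} with $M = S$, we immediately obtain'' placed just before the theorem statement. Your expansion of why the hypotheses of \S\ref{sec:6} transfer to $S$, and your handling of the $D_A$ clause, are all consistent with this reduction and simply spell out what the paper leaves implicit.
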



Let $(\rho,\cH)$ be a factorial
minimal positive energy representation of $\hat G$
and let $(\lambda_j, A_j, T_j)_{j \in J}$ be as above.
Since
the projection ${P_j \: \cH_{\lambda_j} \to V_j^0}$ onto the minimal
energy space for $H_{\lambda_j}$ in $\cH_{\lambda_j}$ is finite dimensional,
$P_j$ is a compact operator.
We may therefore consider the direct limit\index{direct limit $C^*$-algebra \scheiding $\cB$}
\begin{equation}
  \label{eq:defcB}
  \cB := \bigotimes_{j \in J} (K(\cH_{\lambda_j}), P_j)
\end{equation}
of the $C^*$-algebras
\[   \cB_F := \bigotimes_{j \in F} (K(\cH_{\lambda_j}), P_j), \qquad
F \subeq J \ \mbox{ finite}, \]
where 
the tensor product of the non-unital algebras $K(\cH_{\lambda_j})$
is constructed  as in \cite{GrNe09} with the inclusions
\[ \cB_{F_1} \into \cB_{F_2}, \quad
A \mapsto A \otimes \bigotimes_{j \in F_2\setminus F_1} P_j\]
for finite subsets $F_1 \subeq F_2$ of $J$.
We write $B \otimes \bigotimes_{j \in J \setminus F} P_j$ for the image of
$B \in \cB_F$ in $\cB$ and
\begin{equation}
  \label{eq:pinf}
P_\infty := \bigotimes_{j \in J} P_j.
\end{equation}
If $J$ is finite, then $\cB \cong \cB_J$ and the above tensor product is finite.
The $C^*$-algebra $\cB$ carries a natural one-parameter group
of automorphisms $(\alpha^\cB_t)_{t \in \R}$ specified by
\[ \alpha^\cB_t(B) = e^{-it H_F} B e^{it H_F} \quad \mbox{ for } \quad
t \in \R, B \in \cB_F,\]
which fixes the projection $P_\infty$.

Since every ground state representation can be written as a direct sum of cyclic ones, we may
assume w.l.o.g.\ that $\cH$ has a cyclic ground state $\Omega \in \cH$. This
defines a state of $\cB$ by
\[ \omega(B)
:= \lra{\Omega, B \Omega}
\quad \mbox{ for } \quad B \in \cB_F\]  because
$P_j$ projects onto the kernel of $H_{\lambda_j}$ which contains~$\Omega$.
Conversely, if $(\pi, \cH)$ is a representation of the $C^*$-algebra $\cB$ that is generated by a
vector $\Omega$ with
\[  \pi(P_\infty)\Omega = \Omega,\]
then we obtain commuting representations of the multiplier algebras $B(\cH_{\lambda_j})$ of $\cK(\cH_{\lambda_{j}})$.
In particular we recover a
 unitary representation of the restricted product
$\Pi'_{j \in J} \U(\cH_{\lambda_j})$, and hence a unitary representation
of $G^\sharp$. This representation extends canonically to a minimal
positive energy representation of $\hat G$, where the Hamiltonian $H$
is determined uniquely by
\begin{equation}
  \label{eq:bf2}
e^{-it H}\pi(B)\Omega
= \pi(\alpha^\cB_t(B))\Omega \quad \mbox{ for } \quad
B \in \cB.
\end{equation}
The representations constructed above are now positive energy
representations for the $C^*$-dynamical system $(\cB, \R,\alpha^\cB)$
generated by ground states (cf.~\cite{BR02}).

From this correspondence, we derive
 the following non-compact analogue of Theorem~\ref{Thm:ClassificationCompactM}.
\begin{Theorem} \label{prop:9.5}
Let $\cB$ be the $C^*$-algebra constructed for 
$(\lambda_j, A_j, T_j)_{j \in J}$ with a possibly infinite index set $J$
as above. Then the above construction
yields a one-to-one correspondence between:
\begin{itemize}
\item[{\rm(a)}] Isomorphism classes of
minimal factorial positive energy representations of $\hat G$ corresponding to the
family $(\lambda_j, A_j, T_j)_{j \in J}$. 
\item[{\rm(b)}] Isomorphism classes of
factorial representations of $\cB$ that are generated by fixed points of the projection
$P_\infty$.
\end{itemize}
\end{Theorem}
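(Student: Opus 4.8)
\textbf{Proof plan for Theorem~\ref{prop:9.5}.}
The plan is to establish the claimed bijection by verifying, in both directions, that the constructions sketched before the statement are mutually inverse and restrict correctly to the factorial, ground-state-generated subclasses. First I would make precise the map from (a) to (b): given a minimal factorial positive energy representation $(\rho,\cH)$ of $\hat G$ corresponding to $(\lambda_j,A_j,T_j)_{j\in J}$, Proposition~\ref{Prop:spectralGAP} and Lemma~\ref{lem:9.3} are not needed here since we do not assume a spectral gap; instead one uses the tensor product decomposition $\cH = \cH_F \otimes \cH_F'$ from \S\ref{sec:1dcompact} together with \eqref{eq:HFsplit} to obtain commuting representations of each multiplier algebra $B(\cH_{\lambda_j})$ on $\cH$. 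This yields a representation $\pi$ of the directed system $\{\cB_F\}$, hence of $\cB$, and one checks that factoriality of $\rho$ on $G^\sharp$ passes to factoriality of $\pi$ on $\cB$ (the von Neumann algebras generated coincide, since $\cB_F$ and $\rho(G_F^\sharp)$ generate the same algebra up to weak closure, using that $\cK(\cH_{\lambda_j})$ is strongly dense in $B(\cH_{\lambda_j})$). That $\pi$ is generated by fixed points of $P_\infty$ follows because the minimal energy spaces $V_j^0 = P_j\cH_{\lambda_j}$ are $\hat L$-invariant and, by minimality of $H$ on each finite block, the cyclic span of $\bigcup_F \pi(\cB_F)\cdot(\text{ground states})$ is dense; more precisely one shows $\cH$ is generated by $\pi(P_\infty)\cH$ under $G^\sharp$ using that any minimal positive energy representation is generated by the fixed points of $V_t = \rho(\exp tD)$ (Corollary~\ref{cor:borch}(iii) with $\alpha_N = \mathrm{id}$, reducing to the periodic case) and these lie in $\bigcap_j \ker H_{\lambda_j}$ up to closure.

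Second, I would construct the reverse map from (b) to (a). Given a factorial representation $(\pi,\cH)$ of $\cB$ generated by vectors fixed by $P_\infty$, the multiplier algebra of each $\cK(\cH_{\lambda_j})$ acts, so we obtain commuting unitary representations $\rho_j$ of $\U(\cH_{\lambda_j})$, hence of $G_j^\sharp = \cL^\sharp_{\Phi_j}(K_j)$ via $\rho_{\lambda_j}$. These assemble into a unitary representation of the restricted product $\prod'_{j}\U(\cH_{\lambda_j})$ and thus of $G^\sharp = \prod'_j G_j^\sharp$ (after passing to the quotient by the kernel of the multiplication character, as in the proof of Theorem~\ref{Thm:NoncompactClassification}). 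Smoothness of this representation follows fibrewise: each $\rho_{\lambda_j}$ is smooth, and by the weak-product results (Proposition~\ref{prop:a.11}, or directly because smoothness is a local condition and each finite block is a smooth representation of a Lie group) the assembled representation is smooth. The Hamiltonian is defined by \eqref{eq:bf2}, i.e.\ one extends $\pi$ to $\hat G = G^\sharp \rtimes \R$ by implementing $\alpha^\cB_t$; minimality of $H$ is arranged by the Borchers--Arveson construction (Theorem~\ref{thm:BAthm}), and positive energy holds because on each finite block $H_F = \sum_{j\in F} H_{\lambda_j} \geq 0$. Factoriality of $\rho$ on $G^\sharp$ is equivalent to factoriality of $\pi$ on $\cB$, since $\rho(G^\sharp)'' = \pi(\cB)''$ (again using strong density of compacts in bounded operators).

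Third, I would verify that the two constructions are mutually inverse. Starting from $(\rho,\cH)$, forming $\pi$, and then reconstructing a representation of $\hat G$ returns the same $\cH$ with the same action of $G^\sharp$ on the nose (both recover $\rho_{\lambda_j}$ on $G_j^\sharp$ through the same multiplier-algebra action), and the Hamiltonian agrees after passing to the minimal one; by Theorem~\ref{ThmProjRepLARep} and connectedness of $G^\sharp$ (Lemma~\ref{lem:1-connectedfibers}) unitary equivalence on the Lie algebra level suffices. Conversely, starting from $(\pi,\cH)$, passing to $\hat G$ and back to $\cB$ returns $\pi$ because $P_\infty$ and the $\cK(\cH_{\lambda_j})$ are recovered from the $V_j^0$, which are intrinsic to the decomposition. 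The bookkeeping that the central circles act by the correct characters (so that we genuinely land in representations of $G^\sharp$ rather than of $\prod'_j G_j^\sharp$) is handled exactly as in Theorem~\ref{Thm:NoncompactClassification}.

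The main obstacle I anticipate is the equivalence of factoriality on the two sides when $J$ is infinite: one must control the von Neumann algebra generated by the infinite restricted tensor product and check that $\rho(G^\sharp)'' = \pi(\cB)''$, which requires showing that the weak closures of the directed unions $\bigcup_F \rho(G_F^\sharp)''$ and $\bigcup_F \pi(\cB_F)''$ coincide --- essentially that strong density of $\cK(\cH_{\lambda_j})$ in $B(\cH_{\lambda_j})$ survives the infinite tensor product limit in the relevant representation. This is where the semiboundedness of $\rho$ (Theorem~\ref{thm:nc1}) and the $C^*$-algebraic techniques of \cite{NSZ17}, together with the infinite tensor product formalism of \cite{GrNe09}, do the real work; the rest of the argument is a careful but routine unwinding of definitions.
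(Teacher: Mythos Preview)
There is a genuine structural gap in your construction of the $\cB$-representation in the direction ${\rm(a)}\Rightarrow{\rm(b)}$, and a mirror-image gap in ${\rm(b)}\Rightarrow{\rm(a)}$. You propose to obtain $\pi$ from the commuting multiplier-algebra actions of $B(\cH_{\lambda_j})$ on $\cH$ coming from the factorizations $\cH \cong \cH_{\lambda_j}\otimes \cH'$. But the connecting maps in the directed system defining $\cB$ are $A\mapsto A\otimes\bigotimes_{j\in F_2\setminus F_1}P_j$, \emph{not} $A\mapsto A\otimes\one$. Your natural action sends $A\in\cB_{F_1}$ to $A\otimes\one_{\cH'_{F_1}}$, whereas compatibility with $\cB_{F_2}$ would require it to agree with $A\otimes\bigotimes_{j\in F_2\setminus F_1}P_j\otimes\one_{\cH'_{F_2}}$; these differ precisely on the complement of $\bigcap_{j>N}\ker H_{\lambda_j}$. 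So the family of maps $\cB_F\to B(\cH)$ you describe is \emph{not} coherent and does not induce a representation of~$\cB$. The paper avoids this by constructing $\pi_n$ only on the subspaces $\cH_n=\cH_{F_n}\otimes\cE'_n$ (where $\cE'_n$ is the minimal $H'_{F_n}$-eigenspace, on which the extra $P_j$ act as identity), checks that these are consistent, and uses density of $\bigcup_n\cH_n$---which is exactly where the ``generated by ground states'' hypothesis enters. The same issue obstructs your ${\rm(b)}\Rightarrow{\rm(a)}$ step: the embedding $K(\cH_{\lambda_j})\hookrightarrow\cB$ lands in $K(\cH_{\lambda_j})\otimes\bigotimes_{i\neq j}P_i$, so $\pi$ restricted to this copy is non-degenerate only on $\pi(\cB_{\{j\}})\cH$, not on all of~$\cH$; hence the multiplier algebra $B(\cH_{\lambda_j})$, and with it $G_j^\sharp$, does not act on $\cH$ directly. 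Again the paper works on the filtration $\cH_n:=\pi(\cB_n)\pi(P_\infty)\cH$, obtains consistent $G_{F_n}^\sharp$-representations there via the relation~\eqref{eq:mult-rel2}, and passes to the limit.

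Two smaller points. Your appeal to Corollary~\ref{cor:borch}(iii) with $\alpha_N=\id$ to force generation by $H$-fixed points is unavailable here: for infinite $J$ with distinct periods $T_j$ the $\R$-action on the weak product $G$ is \emph{not} periodic, so there is no global $N$ with $\alpha_N=\id_G$. And your smoothness argument via Proposition~\ref{prop:a.11} applies only to vacuum representations (single cyclic $\Omega$); in the general ground-state setting the paper instead invokes Zellner's theorem \cite[Thm.~2.9]{Ze17}, which requires verifying the Trotter property for the groups $\hat G_{F_n}$ (done via \cite{NS13}) and identifying $\cD^\infty(H)$ with the common smooth domain. This is the ``real work'' you allude to at the end, but it is not the $[\mathrm{NSZ17}]$/$[\mathrm{GrNe09}]$ machinery you name.
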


\begin{proof} ``${\rm(a)} \Rarrow {\rm(b)}$'':
Let $(\rho,\cH)$ be a factorial minimal positive energy
representation of $\hat G$ corresponding to the
family $(\lambda_j, A_j, T_j)_{j \in J}$.
As $J$ is at most countably infinite, we may w.l.o.g.\ assume that
$J = \N$ (the case of finite $J$ is proved along the same lines)
and put $\cB_n := \cB_{F_n}$ for $F_n = \{ 1,\ldots, n\}$.
Then we inductively choose factorizations of
$(\rho,\cH)$ as $(\rho_{F_n}\otimes \rho_{F_n}', \cH_{F_n} \otimes
\cH_{F_n}')$ with
\[ \cH_{F_n} = \cH_{\lambda_1} \otimes \cdots \otimes \cH_{\lambda_n},
\quad \rho_{F_n} \cong \rho_{\lambda_1} \otimes \cdots \otimes
\rho_{\lambda_n}.\]
We then obtain a consistent sequence of representations of the
$C^*$-algebras $\cB_n$ on the subspaces
$\cH_n := \cH_{F_n} \otimes \cE_n'$, where
$\cE_n' \subeq \cH_{F_n}'$ is the minimal eigenspace of $H'_F$ on $\cH_{F_n}'$, by
\[ \pi_n \: \cB_n \to B(\cH_{F_n} \otimes \cE_n'), \quad
\pi_n(B) := B \otimes \one \quad \mbox{ for } \quad
B \in \cB_n.\]
As the union of the subspaces $(\cH_n)_{n \in \N}$ is dense in $\cH$,
we thus obtain a non-degenerate representation $(\pi, \cH)$ of $\cB$
satisfying
\begin{equation}
  \label{eq:mult-rel}
 \rho_{F_n}(g) \pi_n(B) = \pi_n(\rho_{F_n}(g)B) \quad \mbox{ for } \quad
g\in G_{F_n}^\sharp, B \in \cB_n,
 \end{equation}
and $\pi(P_\infty)$ is the projection onto the minimal eigenspace of~$H$.
Note that \eqref{eq:mult-rel} determines the representation $\rho$ uniquely
in terms of the representation $(\pi,\cH)$ of~$\cB$.

``${\rm(b)} \Rarrow {\rm(a)}$'': Suppose, conversely, that
$(\pi, \cH)$ is a factorial representations of $\cB$
generated by the subspace $\cE := P_\infty \cH$.
Then the union of the closed subspaces
\[ \cH_n := \pi(\cB_n)\pi(P_\infty)\cH \]
is dense in $\cH$. Since the representation of $\cB_n \cong \cK(\cH_{F_n})$
on $\cH_n$ is non-degenerate, we obtain consistent factorizations
\[ \cH_n \cong \cH_{\lambda_1}  \otimes \cdots \otimes
\cH_{\lambda_n} \otimes \cE_n' \cong \cH_{F_n} \otimes \cE_n'
\quad \mbox{ with } \quad \pi(B) = B \otimes \one_{\cE_n'}, \quad B \in \cB_n.\]
This implies the existence of smooth unitary representations
$\rho_n$ of the groups $G_{F_n}^\sharp$ on $\cH_n$ which are uniquely determined by
\begin{equation}
  \label{eq:mult-rel2}
 \rho_{n}(g) \pi(B) \pi(P_\infty)= \pi(\rho_{F_n}(g)B) \pi(P_\infty)
\quad \mbox{ for } \quad
g\in G_{F_n}^\sharp, B \in \cB_n.
 \end{equation}
The uniqueness implies that
$\rho_{n+1}(g)\res_{\cH_n} =\rho_{n}(g)$ for $g \in G_{F_n}^\sharp$,
so that we obtain a unitary representation of $G^\sharp = \bigcup_F G^\sharp_F$
on~$\cH$ which naturally extends to $\hat G = G^\sharp \rtimes \R$.
Its continuity follows from \cite[Lemma~4.4]{Gl07}.

For the smoothness we use \cite[Thm.~2.9]{Ze17}: The Lie algebra
$\hat\g$ is the union of the subalgebras $\hat \g_{F_n}$, and the
representation is smooth on the corresponding subgroup $\hat G_{F_n}$.
Further, the element $D \in \bigcap_n \hat\g_{F_n}$ lies in the
interior of the open cones
\[W_{\rho\res_{\hat G_{F_n}}}
\supeq \g^\sharp_{F_n} + (0,\infty) D,\]
which are open half spaces
(Theorem~\ref{thm:nc1}).
To apply Zellner's Theorem,
we have to show that the groups $\hat G_{F_n}$ have the
Trotter property, i.e., for any two elements $x,y$ in the Lie algebra, we
have
\[ \exp(t(x+y))=\lim_{n\to\infty}
\Big(\exp\Big(\frac{t}{n}x\Big)
\exp\Big(\frac{t}{n}y\Big)
\Big)^n \]
in the sense of uniform convergence on compact subsets of $\R$.
We first use \cite[Thm.~4.11]{NS13} to see that
$G_{F_n} \rtimes \R$ has the Trotter property; as these groups
are $C^0$-regular (\cite[Thm.~J]{Gl16}),
\cite[Thm.~4.15]{NS13} implies that the central extension $\hat G_{F_n}$
also has the Trotter property.
As any two elements $x,y \in \hat\g$ are contained in some $\hat \g_{F_n}$,
the group $\hat G$ also has the Trotter property.
Therefore \cite[Thm.~2.9(a)]{Ze17} implies that
the dense subspace $\cD^\infty(\dd \rho(D))$ of
smooth vectors of the Hamiltonian coincides with $\cD^\infty_c(\hat \g)$,
the set of all vectors $\xi$ in the common domain of all
finite products of elements in $\hat\g$, for which all maps
\[ \hat\g^n \to \cH, \quad (x_1, \ldots, x_n) \mapsto
\dd\rho(x_1) \cdots \dd \rho(x_n) \xi \]
are continuous and $n$-linear.
As the subgroup $G^\sharp$ is locally exponential,
\cite[Lemma~4.3]{Ne10b} now implies that $\xi$ is a smooth vector
for $G^\sharp$, and since it is also smooth for $H = i \oline{\dd \rho(D)}$,
\cite[Thm.~7.2]{Ne10b} further entails that it is smooth for $\hat G$.
This proves the smoothness of $\rho$.

Clearly, the two constructions are mutually inverse, up to unitary equivalence.
\end{proof}

\begin{Remark} (a) By Lemma~\ref{lem:9.3}, the preceding theorem
covers all minimal factorial representations for which
$(\lambda_j, A_j, T_j)_{j \in J}$
satisfies the spectral gap condition.

(b) The projection $P_\infty \in \cB$ defines the hereditary subalgebra
$\cA := P_\infty \cB P_\infty$ onto which
\[ \eps \: \cB \to \cA, \quad B \mapsto P_\infty B P_\infty \]
defines a conditional expectation, so in particular a completely positive map.
From this perspective, the representations specified in
Theorem~\ref{prop:9.5} are precisely those obtained by
Stinespring dilation from the completely positive maps
of the form $\omega = \pi \circ \eps$, where $(\pi,\cF)$ is a
non-degenerate representation of $\cA$.

For $n_j := \tr P_j$, we have
\[ \cA \cong \bigotimes_{j \in J} M_{n_j}(\C),\]
showing that $\cA$ is a UHF algebra (\cite{Po67}).
The representation theory of these algebras also appears naturally
in the context of norm continuous representations
of gauge groups (cf.~\cite{JN15}).
If infinitely many of the $n_j$ are $> 1$, this
leads to factor representations of type II and III. So the situation depends on
the size of the minimal energy spaces in $\cH_{\lambda_j}$.
In particular, we obtain factorial representations as infinite tensor products
corresponding to factorial product states on $\cA$ because they correspond
to product states on $\cB$. We refer to \cite{JN15} for details
on the connection between norm-continuous representations of the restricted
product $\prod'_{j \in J} K_j$ of the compact groups $K_j$ and
representations of infinite tensor products of matrix algebras.
\end{Remark}

\subsection{A simple example with fixed points}\label{sec:fixedpoint}

In Part II of this series, we will focus on the type of phenomena one encounters when the $\R$-action on $M$
is \emph{not} fixed point free. To give a preview of the problems one encounters there,
we briefly revisit the simple example of the circle action on $\bS^2$, lifted to an $\R$-action
on the trivial bundle $\cK = \bS^2 \times K$ (cf.\ Example~\ref{ex:R2circles}).
The fixed points are then the `north pole' $n = (0,0,1)$ and the `south pole' $(0,0,-1)$.


Since every projective positive
energy representation of $G = C^{\infty}(\bS^2,K)$ restricts to a projective
positive energy representation of the normal subgroup \[G^{\times} = C^{\infty}_{c}(\bS^2\setminus \{n,s\}, K),\]
we can apply the techniques developed so far to $G^{\times}$.
The two problems that remain are then
to determine if a representation extends from $G^{\times}$ to $G$,
and, if so, to classify the possible extensions. We will pursue these problems elsewhere,
and for the moment content ourselves with describing the representation theory
of~$G^{\times}$. Although the Lie algebra bundle $\fK\rightarrow \bS^2$ is trivial,
the $\R$-action \eqref{eq:liftsphere} on $\fK$ that covers
the circle action on $\bS^2$ will in general not be trivializable.
It turns out that
the lift of the circle action at the fixed points $n, s\in \bS^2$ has a
qualitative effect on the positive energy representation theory of $G^{\times}$.


By Theorem~\ref{thm:7.11}, every projective positive energy representation of $G^{\times}$
factors through a projective positive energy representation of $C^{\infty}_{c}(S,K)$, where
\[
S = \big\{ (x,y,z) \in \bS^2 \: z \in J\big\}
\]
is a union of circles labeled by a discrete subset $J \subset (-1,1)$ that has at most two accumulation
points $\pm 1$, corresponding to the fixed points $n$ and $s$.
Recall from Example~\ref{ex:R2circles} that the fundamental vector field for the $\R$-action is of the form
\[
\mathbf{v}(x,y,z) = (y\partial_x - x \partial_y) + A(x,y,z),
\]
with $A(x,y,z)\in \fk$. For simplicity, consider first the case where
$A\in \fk$ is \emph{independent} of $(x,y,z)$.
If we identify the loop algebras of the various circles in the obvious manner, then
the infinitesimal $\R$-action is represented by the \emph{same}
element $\bd_j + A_j = \bd + A$ for every circle $S_j$.
It follows that $(\lambda_j, A_j, T_j) = (\lambda_j, A, 2\pi)$, so the $C^*$-algebra $\cA = P_{\infty}\cB P_{\infty}$
that governs the ground state representations is essentially determined
by 
a sequence $\lambda_j$ of anti-dominant integral weights for the affine Kac-Moody algebra $\widehat{\cL}(\fk)$.

The operators $H_j = i\pi_{\lambda_{j}}(\bd + A)$ are readily seen to satisfy the spectral gap property~\ref{def:spectralgap}.
Indeed, the operators $i\pi_{\lambda_j}(\bd)$ on $\cH_{\lambda_j}$ have a uniform
spectral gap because the $\R$-action on $\bS^2$ is periodic.
Since $A\in \fk$ has a uniform spectral gap in all finite dimensional lowest weight representations,
it also has a uniform spectral gap in
the minimal eigenspaces $W_{j}^{0}$ of the operators $i\pi_{\lambda_j}(\bd)$.
The spectral gap for $H_j$ then follows from the fact that $\bd$ commutes with $A$ in $\widehat{\cL}(\fk)$.

By Proposition~\ref{Prop:spectralGAP}, every
factorial positive energy representation of $\widehat{G}^{\times}$ is a ground state representation, so
the factorial projective positive energy representations are
completely classified by Theorem~\ref{prop:9.5}.

\begin{itemize}
\item[a)] If $A$ is an inner point of the Weyl chamber, then the minimal eigenspace of $H_j$ in an irreducible $\fk$-representation
is always 1-dimensional, $W_j^{0} = \C\Omega_j$. In this case every
projective irreducible positive energy representation is a
\emph{vacuum representation},
and it is of the form
\begin{equation}\label{eq:productsphere}
	(\cH,\Omega) = \bigotimes_{j\in j}(\cH_{\lambda_j}, \Omega_j)
\end{equation}
by the results in \S\ref{appendix:vacuum}.
Moreover, every factorial positive energy representation
of $\widehat{G}^{\times}$ is of type I, i.e., a direct sum of irreducible
representations. This follows from the fact that, if in the construction
of Subsection~\ref{subsec:9.2} all projections $P_j$ are
of rank~$1$, then the projection $P_\infty \in \cB$ has the property
that the subalgebra $P_\infty \cB P_\infty$ is one-dimensional. In particular
$P_\infty a P_\infty = \phi(a) P_\infty$ defines a state of $\cB$ and
every representation $(\pi,\cH)$ of $\cB$ generated by the range of
$\pi(P_\infty)$ is a multiple of the GNS representation
$(\pi_\phi, \cH_\phi)$. For unit vectors $\Omega_j \in \im(P_j)$,
\eqref{eq:defcB} implies that
\[ (\cH_\phi, \Omega_\phi) \cong
\bigotimes_{j \in J} (\cH_{\lambda_j}, \Omega_j), \]
that the representation $\pi_\phi$ is faithful, and that
$\pi_\phi(\cB) \cong K(\cH_\phi).$ \\

\item[\rm b)] If $A$ lies in at least one face of the Weyl
alcove, then the space $W^{0}_j$ of ground states
need not be 1-dimensional. The projective positive energy factor representations of $\fg^{\times}$ are then classified by the lowest weights
$\lambda_{j}$, together with a representation of the UHF $C^*$-algebra
\[
	\cA = \bigotimes_{j\in J} B(W^0_{j}).
\]
If $W^{0}_j$ is of dimension $>1$ for infinitely many $j$, then this is an
infinite tensor product of matrix algebras.  By \cite{Po67} it follows that $G^{\times}$ admits factor representations
of type II and III.
\end{itemize}

If $A(x,y,z)$ is not constant and $A(n)$ and $A(s)$ are inner points of the
Weyl
alcove,
then the situation remains qualitatively the same as in a).
Indeed, the
holonomy with respect to $A$ on $S_j$ will approach $\exp(A(n))$ or $\exp(A(s))$
as $z_j \rightarrow \pm 1$, so the spectral gap condition holds for all but finitely many circles.
In this case one finds a tensor product decomposition analogous to \eqref{eq:productsphere},
where all but finitely terms are vacuum representations.
In particular the space of ground states
is finite dimensional.

However, if either $A(n)$ or $A(s)$ is not an inner point of the Weyl
alcove, then the spectral gap condition
need no longer be satisfied. The ground state representations can still be
classified in the manner outlined above, but these can no longer be expected to
exhaust the positive energy representations.

\appendix

\section{Appendix}
\subsection{Twisted loop algebras and groups}
\label{app:1}

Let $K$ be a simple compact Lie group, $\Phi\in \Aut(K)$ and $\phi = \Lie(\Phi) \in \Aut(\fk)$. We assume that $\phi^N = \id_K$ and let
\[  \cL_\Phi^T(K)
:= \big\{ \xi \in C^\infty(\R,K) \: (\forall t \in \R)\
\xi(t +T) = \Phi^{-1}(\xi(t))\big\} \]
be the corresponding {\it twisted loop group}\index{loop group!20@$T$-periodic \scheiding $\cL_\Phi^{T}(K)$}.
The rotation
action
$(\alpha_t f)(s) := f(s+t)$ satisfies $\alpha_{NT} = \id_{\cL_\Phi(K)}$.
The Lie algebra of $\cL^{T}_\phi(K)$ is the twisted loop algebra\index{loop algebra!20@$T$-periodic \scheiding $\cL^{T}_\phi(\fk)$}
\[ \cL^{T}_\phi(\fk)
:= \big\{ \xi \in C^\infty(\R,\fk) \: (\forall t \in \R)\
\xi(t +T) = \phi^{-1}(\xi(t))\big\}. \]
Accordingly, we obtain
\[  \hat\cL^{T}_\phi(\fk) :=
(\R \oplus_{\omega} \cL^{T}_\phi(\fk)) \rtimes_{D} \R,\quad D\xi = \xi', \]
where
\[ \omega(\xi,\eta)
:=  \frac{c}{2\pi T}\int_0^T \kappa(\xi'(t),\eta(t))\, dt \]
for some $c \in \Z$ (the {\it central charge}).  \index{central charge \scheiding $c$}
Here $\kappa$ is the Killing form of $\fk$,
normalized as in \eqref{eq:kappa-normal} by
$\kappa(i\alpha^\vee, i\alpha^\vee)=2$ for the coroots
corresponding to long roots.

We write
\[ \cL^{T,\sharp}_\phi(\fk) = \R \oplus_{\omega} \cL^{T}_\phi(\fk).\]


Let  $\ft^\circ \subeq \fk^\phi$ be a maximal abelian
subalgebra, so that $\fz_\fk(\ft^{\circ})$ 
is maximal abelian in $\fk$
by \cite[Lemma~D.2]{Ne14a} (see also \cite{Ka85}).
Then $\ft = \R \oplus \ft^\circ \oplus \R$
is maximal abelian\index{maximal!20@abelian subalgebra \scheiding $\ft$} in $\hat\cL^{T}_\phi(\fk)$ and the corresponding set of roots\index{roots!10@set of \scheiding $\Delta$}
$\Delta$ can be identified with the set of pairs
$(\alpha,n)$, where
\begin{equation}
  \label{eq:roots}
(\alpha,n)(z,h,s) := (0,\alpha,n)(z,h,s) = \alpha(h) + i s \frac{2\pi n}{NT}, \quad
n \in \Z, \alpha \in \Delta_n.
\end{equation}
Here $\Delta_n \subeq i(\ft^\circ)^*$ is the set of
$\ft^\circ$-weights in
\[ \fk_\C^n  = \{ x \in \fk_\C \: \phi^{-1}(x) = e^{2\pi in/N} x\}.\]
For $(\alpha,n) \not=(0,0)$, the corresponding root space is
\[  \cL^{T,\sharp}_\phi(\fk_\C)^{(\alpha,n)}
=  \fk_\C^{(\alpha,n)} \otimes e_n
= (\fk_\C^\alpha \cap \fk_\C^{n}) \otimes e_n,
\quad \mbox{ where } \quad
e_n(t) = e^{\frac{2\pi i nt}{NT}}. \]
The set
\[ \Delta^\times = \{ (\alpha,n) \: 0 \not= \alpha \in \Delta_n, n \in \Z \} \]
has an $N$-fold layer structure
\[ \Delta^\times  = \bigcup_{n = 0}^{N-1} \Delta_n^\times \times (n+ N \Z),
\quad \mbox{ where } \quad
\Delta_n^\times := \Delta_n \setminus \{0\}.\]
For $n \in \Z$ and $x \in \fk_\C^{(\alpha,n)}$ with
$[x,x^*] = \alpha^\vee$, the element
$e_n \otimes x \in \cL^{T,\sharp}_\phi(\fk_\C)^{(\alpha,n)}$ satisfies
$(e_n \otimes x)^* = e_{-n} \otimes x^*$,
which leads to the coroot
\begin{align}
  \label{eq:coroot}
[e_n \otimes x, (e_n \otimes x)^*] 
&= (\alpha,n)^\vee
= \Big( -i\frac{cn}{NT} \frac{\|\alpha^\vee\|^2}{2}, \alpha^\vee,0\Big)\notag\\
&= \alpha^\vee - \frac{i cn}{NT}\frac{\|\alpha^\vee\|^2}{2} C,
\end{align}
where
$C = (1,0,0)$. Here we have used that
\[ \omega(e_n \otimes x, e_{-n} \otimes x^*)
=  \frac{i c n}{NT} \kappa(x,x^*) \]
and
\[ \kappa(x,x^*)
= \shalf \kappa([\alpha^\vee, x], x^*)
= \shalf \kappa(\alpha^\vee, [x, x^*])
= \shalf \kappa(\alpha^\vee, \alpha^\vee)
= -\shalf \|\alpha^\vee\|^2.\] 

Since $\fk$ is simple, $\Delta^\times$ does not decompose
into two mutually orthogonal proper subsets (\cite[Lemma~D.3]{Ne14a}), so that
\[  \hat\cL^{T}_\phi(\fk)^{\rm alg}_\C :=
\ft_\C + \sum_{(\alpha,n)\in \Delta} \cL^{T,\sharp}_\phi(\fk_\C)^{(\alpha,n)} \]
is an affine Kac--Moody--Lie algebra (\cite[Thm.~8.5]{Ka85}).
In this context the root $(\alpha,n)$ is real if and only if $\alpha \not=0$.
Choosing a positive system\index{roots!20@positive system \scheiding $\Delta^{+}$} $\Delta^+ \subeq \Delta$ such that
the roots $(\alpha, n)$, $n > 0$, are positive, the lowest weights of
unitary lowest weight representations of $\hat\cL^{T}_\phi(\fk)$ are the
anti-dominant
integral weights:
\[ 
\cP(\ft, \Delta^+) := \{ \lambda \in i\ft^* \:(\forall (\alpha,n))\
0 \not= \alpha, (\alpha,n) \in \Delta^+ \ \ \Rarrow \ \
\lambda((\alpha,n)^\vee\,) \in \N_0\}.\] 
Note that, for $n > 0$, we have
\[ \lambda((\alpha,n)^\vee\,)
= \lambda(\alpha^\vee) + \frac{cn}{NT}\frac{\|\alpha^\vee\|^2}{2},\]
so that we obtain $c > 0$ as a necessary condition for the existence
of non-trivial unitary lowest weight modules.

\subsection{Twisted conjugacy classes in compact groups}
\label{app:a.3} 

In this appendix we collect some more details concerning twisted
conjugacy classes in compact groups.

A~{\it Cartan subgroup} \index{Cartan subgroup \vulop}
of a compact Lie group $K$ is an abelian subgroup $S$
topologically generated by a single element $s$ ($s^\Z$ is dense in $S$)
which has finite index in its
normalizer $N_K(S) = \{ k \in K \: kSk^{-1} = S\}$.

\begin{Remark} \label{rem:a.3.1}
(a) For any Cartan subgroup $S$, the identity component $S_0$ is an abelian compact Lie group,
hence a torus, and since tori are divisible, the short exact sequence
$S_0 \into S \onto \pi_0(S)$ splits, so that $S \cong S_0 \times \pi_0(S)$.
By construction, $\pi_0(S)$ is a finite cyclic group.
If $s_0 \in S_0$ is a topological generator,  then, for every $N \in \Z$, the closure of
$s_0^{N\Z}$ is a closed subgroup of finite index in $S_0$, hence equal to $S_0$.
This implies that the topological generators of $S$ are the elements of the
form $s = (s_0, s_1) \in S_0 \times \pi_0(S)$, where
$s_0$ is a topological generator of $S_0$ and $s_1$ is a generator of the cyclic
group $\pi_0(S)$.

(b) By \cite[Prop.~IV.4.2]{BtD85}, every element $k \in K$ is contained
in a Cartan subgroup $S$ such that the connected component $k S_0$ generates
$\pi_0(S)$. The preceding discussion now shows that there exists an element
$s_0 \in S_0$ such that $z := k s_0$ is a topological generator of $S$.
Now \cite[Prop.~IV.4.3]{BtD85} implies that every element $g \in k K_0 = zK_0$
is conjugate to an element of $k S_0$.
\end{Remark}

\begin{Theorem} \label{thm:a.3}
Let $K$ be a compact connected Lie group and $\Phi \in \Aut(K)$ be an automorphism
of finite order $N \in \N$.
We consider the {\it twisted conjugation action} of $K$
on itself given by \index{twisted conjugation \scheiding $g * k$}
\[ g * k := gk \Phi(g)^{-1} \quad \mbox{ for }\quad g,k \in K.\]
Then the orbit of every element in $K$ under this action
intersects a maximal torus $T^\Phi$ of the subgroup $K^\Phi$ of $\Phi$-fixed points.
\end{Theorem}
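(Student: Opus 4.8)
The plan is to realize the twisted conjugation action as honest conjugation in a disconnected compact group, and then to conjugate an arbitrary element into $T^\Phi\sigma$ in two stages: first into a fixed $\Phi$-stable maximal torus of $K$, and then, by an elementary torus computation, into $T^\Phi$ itself.

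First I would form the compact Lie group $\widehat K := K \rtimes \langle\sigma\rangle$, where $\sigma$ is a formal element with $\sigma^N = \one$ acting on $K$ by $\Phi$. For $g,k \in K$ one has $g(k\sigma)g^{-1} = gk\Phi(g)^{-1}\sigma = (g*k)\sigma$, so $k \mapsto k\sigma$ intertwines the twisted conjugation action of $K$ on itself with the conjugation action of $K = \widehat K_0$ (here $K$ is connected) on the connected component $K\sigma$, and it carries $T^\Phi$ into $T^\Phi\sigma$, where $T^\Phi$ is a maximal torus of $K^\Phi = (\widehat K_0)^\sigma$. Thus it suffices to show that every element of $K\sigma$ is $K$-conjugate to an element of $T^\Phi\sigma$.

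For the first stage I would fix the relevant maximal torus of $K$: choose a maximal abelian subalgebra $\ft^\circ \subseteq \fk^\phi$, so that $\ft := \fz_\fk(\ft^\circ)$ is $\Phi$-stable and maximal abelian in $\fk$ by \cite[Lemma~D.2]{Ne14a} (see Appendix~\ref{app:1}) with $\ft^\phi = \ft^\circ$; then $T := Z_K(\exp\ft^\circ)$ is connected (centralizers of tori in connected compact groups are connected), has Lie algebra $\ft$, hence is a $\Phi$-stable maximal torus of $K$ with $(T^\phi)_0 = T^\Phi$. Given $x \in K\sigma$, I would apply Remark~\ref{rem:a.3.1}(b) to $\widehat K$: after conjugating by an element of $K$ we may assume $x$ topologically generates a Cartan subgroup $S$ of $\widehat K$ whose image generates $\pi_0(S) = \pi_0(\widehat K) \cong \Z/N$. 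A short computation with the $\phi$-weight decomposition of $\fk$ shows that for a topological generator $t$ of $T^\Phi$ with $t\sigma$ generating $\pi_0(\widehat K)$ one has $\fz_\fk(t\sigma) = \ft^\circ$, so $\overline{\langle t\sigma\rangle}$ is a Cartan subgroup of $\widehat K$ with identity component $T^\Phi$; since the Cartan subgroups of $\widehat K$ surjecting onto the cyclic group $\pi_0(\widehat K)$ form one conjugacy class (\cite[Ch.~IV]{BtD85}), after a further conjugation we may take $S_0 = T^\Phi$. As $S$ is abelian, $x$ centralizes $T^\Phi$, and since $\Phi$ fixes $T^\Phi$ pointwise, writing $x = k_1\sigma$ forces $k_1 \in Z_K(T^\Phi) = T$, i.e. $x \in T\sigma$.

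For the second stage, decompose $\ft = \ft^\circ \oplus \ft^\perp$ orthogonally for a $\phi$-invariant inner product; then $\phi$ preserves $\ft^\perp$ with no nonzero fixed vector, so $\one - \phi$ is invertible on $\ft^\perp$. Writing $k_1 = \exp(Y_0 + Y_\perp)$ with $Y_0 \in \ft^\circ$, $Y_\perp \in \ft^\perp$, I would pick $X \in \ft^\perp$ with $(\one - \phi)X = -Y_\perp$; since everything lies in the abelian group $T$, $\exp(X)*k_1 = \exp(X)\,k_1\,\Phi(\exp X)^{-1} = \exp\big(Y_0 + Y_\perp + (\one-\phi)X\big) = \exp(Y_0) \in (T^\phi)_0 = T^\Phi$, so $x$ is conjugate to $\exp(Y_0)\sigma \in T^\Phi\sigma$. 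The main obstacle is the structural input in the first stage — that the identity component of a Cartan subgroup of $\widehat K$ surjecting onto $\pi_0(\widehat K)$ is conjugate to the maximal torus $T^\Phi$ of $K^\Phi$, equivalently that all such Cartan subgroups are conjugate; this is where the theory of Cartan subgroups of disconnected compact groups (\cite[Ch.~IV]{BtD85}, the same results underlying Remark~\ref{rem:a.3.1}) does the real work, after which the second stage is routine.
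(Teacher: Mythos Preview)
Your proposal is correct, and it begins the same way as the paper: form $\widehat K = K\rtimes\langle\sigma\rangle$ and translate twisted conjugation into ordinary conjugation on the coset $K\sigma$. From there the two arguments diverge.

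The paper picks a Cartan subgroup $S\subeq\widehat K$ containing the element $(\one,\Phi)=\sigma$ itself. Since $S_0$ commutes with $\sigma$, one has $S_0\subeq K^\Phi$; choosing a maximal torus $T^\Phi\subeq K^\Phi$ with $S_0\subeq T^\Phi$, one observes that $T^\Phi$ centralizes $S$, hence $T^\Phi\subeq N_{\widehat K}(S)$, and the finite-index condition on Cartan subgroups forces the connected group $T^\Phi$ into $S_0$. Thus $S_0=T^\Phi$ directly, and a single application of Remark~\ref{rem:a.3.1}(b) conjugates every element of $K\sigma$ into $\sigma S_0 = T^\Phi\sigma$.

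Your route instead builds the model Cartan subgroup $T^\Phi\langle\sigma\rangle$ explicitly, then invokes the conjugacy of Cartan subgroups with the same $\pi_0$-image to move an arbitrary $x$ into it, and finally adds a torus computation using invertibility of $\one-\phi$ on $\ft^\perp$. Two remarks: (i) the conjugacy statement you cite is not stated verbatim in \cite[Ch.~IV]{BtD85}, though it follows from Remark~\ref{rem:a.3.1}(b) by a short dimension argument; the paper avoids it altogether via the normalizer trick above. (ii) Your second stage is redundant: once $S$ is conjugated to $T^\Phi\langle\sigma\rangle$, the element $x\in K\sigma$ already lies in $T^\Phi\langle\sigma\rangle\cap K\sigma = T^\Phi\sigma$, so the passage through the larger torus $T$ and the $\ft^\perp$-computation are not needed. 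The paper's approach is shorter and uses only the Cartan-subgroup definition and Remark~\ref{rem:a.3.1}(b); your approach is more hands-on but carries some unnecessary baggage.
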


\begin{proof} We consider the compact Lie group
$K_1 := K \rtimes \Phi^\Z$, where $\Phi^\Z \subeq \Aut(K)$ is the finite subgroup
generated by~$\Phi$. For $g,k \in K$, we then have
\[ (g,\one)(k,\Phi)(g,\one)^{-1} = (gk \Phi(g)^{-1}, \Phi),\]
so that the conjugacy classes in the coset $K \times \{\Phi\} \subeq K_1$ correspond to the
$\Phi$-twisted conjugacy classes in~$K$.

According to Remark~\ref{rem:a.3.1}(b), the element $(\one,\Phi) \in K_1$ is contained in a
Cartan subgroup $S$ which is generated by an element of the form
$z = (s_0, \Phi)$. As $S_0$ is abelian and commutes with $(\one,\Phi)$,
it is contained in $K^\Phi$. Let $T^\Phi \subeq K^\Phi$ be a maximal torus
containing $S_0$. Then $T^\Phi$ commutes with $S$, so that the finiteness
of $N_K(S)/S$ shows that $T^\Phi \subeq S_0$. We conclude that
\[ S = T^\Phi \times \Phi^\Z \]
is a Cartan subgroup of $K_1$. Therefore Remark~\ref{rem:a.3.1}(b) implies that
every $\Phi$-twisted conjugacy class in $K$ intersects $S_0 = T^\Phi\subeq K^\Phi$.
\end{proof}

We refer to \cite{MW04} for more details on twisted conjugacy classes in compact
groups, representatives, and stabilizer groups.

\begin{Remark} If $\Phi$ is not of finite order, then the situation is more complicated.
If, however, $K$ is a compact Lie group with semisimple Lie algebra, then
$\Aut(K)$ is a compact group with the same Lie algebra and one can apply the theory
of Cartan subgroups of compact Lie groups to $\Aut(K)$.
\end{Remark}

 \subsection{Restricting representations to normal subgroups}
%
%
 \begin{Theorem} \label{thm:typeicrit}
 Let $G$ be a group, and let $N \trile G$ be a normal subgroup of finite
 index. Suppose that $(\pi, \cH)$ is a unitary representation of $G$
 whose restriction $\pi\res_N$ decomposes discretely with finitely many isotypic
 components. Then the same holds for~$\pi$.
 \end{Theorem}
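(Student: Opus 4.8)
The statement is a soft, purely representation-theoretic fact: if $N \trianglelefteq G$ has finite index and $\pi|_N$ is a finite direct sum of isotypic (factorial, type~I) subrepresentations, then $\pi$ itself has the same structure. The plan is to exploit the finite group $Q := G/N$ acting on the set of isotypic components of $\pi|_N$.

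First I would write $\cH = \bigoplus_{j=1}^{m} \cH_j$ as the decomposition of $\cH$ into $N$-isotypic components, where $\cH_j \cong \cF_j \otimes \cM_j$ with $(\sigma_j,\cF_j)$ an irreducible $N$-representation of type~I and $\cM_j$ the (nonzero) multiplicity space. For $g \in G$ the operator $\pi(g)$ intertwines $\pi|_N$ with its conjugate $\pi|_N \circ \mathrm{conj}_g$; since conjugation by $g$ permutes the equivalence classes of irreducible $N$-representations occurring in $\cH$, it follows that $\pi(g)$ permutes the finitely many isotypic subspaces $\cH_j$. This gives an action of $Q = G/N$ on the finite index set $\{1,\dots,m\}$. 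Decompose $\{1,\dots,m\}$ into $Q$-orbits; the subspace $\cH^{\mathcal{O}} := \bigoplus_{j \in \mathcal{O}} \cH_j$ attached to an orbit $\mathcal{O}$ is a $G$-invariant subspace, and $\cH = \bigoplus_{\mathcal{O}} \cH^{\mathcal{O}}$ as $G$-representations. It therefore suffices to treat the case where $Q$ acts transitively on $\{1,\dots,m\}$.

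In the transitive case, fix the index $1$, let $G_1 \le G$ be the stabilizer of $\cH_1$ (which contains $N$ and has index $m$ in $G$), and observe that $\pi(G_1)$ preserves $\cH_1 \cong \cF_1 \otimes \cM_1$; on this subspace the $N$-representation is $\sigma_1 \otimes \one$, so by the usual argument (every operator commuting with $\sigma_1(N) = \sigma_1(N)''$ on $\cF_1$, together with Schur's lemma for the type~I factor $B(\cF_1)$) one gets that $\pi|_{G_1}$ restricted to $\cH_1$ factors as a projective-type tensor product $\tilde\sigma_1 \otimes \tau_1$ with $\tilde\sigma_1$ a representation of $G_1$ extending $\sigma_1$ up to a cocycle and $\tau_1$ a representation of $G_1$ on $\cM_1$ (trivial on $N$). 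One then identifies $\pi$ with the induced representation $\mathrm{Ind}_{G_1}^{G}(\tilde\sigma_1 \otimes \tau_1)$; since $[G:G_1] = m < \infty$, induction from a finite-index subgroup of a representation that is a finite sum of isotypics yields again a finite sum of isotypics, and the type~I property is preserved because $B(\mathrm{Ind}\,\cF) \cong M_m(B(\cF))$ is still a type~I factor (more precisely, the commutant of $\pi(G)$ is a finite amplification of the commutant of $\tilde\sigma_1 \otimes \tau_1|_{G_1}$, hence a finite direct sum of type~I factors). Summing over orbits $\mathcal{O}$ gives the claim for $\pi$.

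The main obstacle is the bookkeeping in the transitive case: making the factorization $\pi|_{G_1}\big|_{\cH_1} \cong \tilde\sigma_1 \otimes \tau_1$ precise when $\sigma_1$ does not literally extend to $G_1$ (only a projective extension exists, with a $2$-cocycle on $G_1/N$ valued in $\T$), and then checking that the Mackey-style identification $\pi \cong \mathrm{Ind}_{G_1}^{G}(\tilde\sigma_1 \otimes \tau_1)$ is an isomorphism of $G$-representations. This is standard Clifford theory, but the verification that isotypic-with-finitely-many-components and "finite sum of type~I factor representations" are both stable under finite induction requires spelling out the commutant computation; everything else is routine. I expect no genuine difficulty beyond this, since finiteness of $[G:N]$ makes all the relevant sums finite and keeps us inside the type~I world.
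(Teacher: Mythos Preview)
Your proposal is correct and follows essentially the same Clifford-theoretic route as the paper: decompose into $N$-isotypics, let $G/N$ permute them, reduce to a single orbit, pass to the stabilizer $P_1 = G_1$, and factor $\pi|_{P_1}$ on $\cH_1$ as $\rho_1^\sharp \otimes \beta_1$ with $\beta_1$ trivial on $N$. The only difference is in the endgame: rather than invoking stability of the isotypic property under finite induction, the paper observes that $\cH_1$ is separating for $\pi(G)'$, so $\pi(G)' \cong \one \otimes \beta_1(P_1^\sharp)'$, and then notes that $\beta_1(P_1^\sharp)$ is a finite group (being a quotient of $P_1^\sharp/N$, a finite central extension of the finite group $P_1/N$), whence its commutant is a finite sum of full matrix algebras---this is exactly the commutant computation you flag as the main obstacle, carried out explicitly.
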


 \begin{proof} We consider the two von Neumann algebras
 \[ \cN := \pi(N)'' \subeq  \cM := \pi(G)''.\]
 Let
 \[ \cH = \bigoplus_{j =1}^m \cH_j, \quad \text{with} \quad \cH_j = \cF_j \otimes \cC_j \]
 be the isotypic decomposition
 for $N$, where the representations $(\rho_j, \cF_j)$ of $N$ are irreducible and
 $N$ acts on $\cH_j$ by $\pi_j := \rho_j\otimes \one$. Then
 \[ \cN' = \pi(N)' \cong \bigoplus_{j = 1}^m B(\cC_j). \]
 The conjugation action of $G$ on $\cN'$ factors through an action of the finite
 group $G/N$. We have to show that $\cM' = (\cN')^{G/N}$
 also is a finite direct sum of full operator algebras.

 Let $F := \{[\rho_j] \: j =1,\ldots, m\} \subeq \hat N$ be the support of the
 restriction $\rho\res_N$. This set decomposes
 under the natural action of $G/N$ on the unitary dual $\hat N$ into finitely
 many orbits $F_1, \ldots, F_k$.
 The group $G$ permutes the isotypic subspaces $\cH_j$ of $N$ and, accordingly,
 \[ \pi_{k} \cong \pi_j \circ c_{g}^{-1}\res_N \quad \mbox{ if and only if } \quad
 \pi(g)\cH_j = \cH_{k}.\]
 This follows from the relation $\pi_k(n)\pi(g) = \pi(g) \pi_j(g^{-1}ng)$
 for $g \in G$, $n \in N$. We conclude that
 \[ P_j := \{ g\in G \: \pi(g) \cH_j = \cH_j \}
 = \{ g \in G \: \rho_j \circ c_g \cong \rho_j\}.\]
 For every $g \in P_{j}$, we thus obtain a unitary operator $U_g \colon \cF_j \rightarrow \cF_j$ such that $\rho \circ c_g = U_g \rho U^{-1}_{g}$.
 Since $\cF_j$ is irreducible, $U_g$ is well defined modulo $\T$.
 The projective unitary representation $\overline{\rho}_j(g) = [U_g]$ of $P_j$
 yields a central extension $q_j \: P_j^\sharp \to P_j$, a homomorphic lift
 $N \into P_j^\sharp$
 and an extension $\rho_j^\sharp \: P_j^\sharp \to \U(\cF_j)$ of the unitary representation
 $\rho_j$ of $N$ to $P_j^\sharp$. Accordingly, the representation
 of $P_j^\sharp$ on $\cH_j$ takes the form
 \[ \pi_j(q_j(p)) = \rho_j^\sharp(p) \otimes \beta_j(p), \]
 where $\beta_j \: P_j^\sharp \to \U(\cC_j)$ a unitary representation with
 $\ker\beta_j \supeq N$.

 Let $\cH = \cH^1 \oplus \cdots \oplus \cH^k$ denote the decomposition of $\cH$ under
 $G$, corresponding to the decomposition of $F$ under $G/N$, so that each
 subspace $\cH^j$ is a sum of certain subspaces $\cH_\ell$.
 We may  assume w.l.o.g.\ that $G/N$ acts transitively on $F$, i.e., that
 \[ \cH = \Spann\big(\pi(G) \cH_1\big)
 = \bigoplus_{[g] \in G/P_1} \pi(g) \cH_1.\]
 This means that $(\pi, \cH)$ is  induced from the
 representation $\rho_1^\sharp \otimes \beta_1$ of $P_1$ on~$\cH_1$.

 The subspace $\cH_1$ is generating for $G$, and hence separating for the commutant
 $\cM'$. As $\cH_1$ is isotypic for $N$, the commutant $\cM' \subeq \cN'$ leaves
 $\cH_1$ invariant; likewise all subspaces $\cH_j$ are $\cM'$-invariant.
 Since an operator $A \in B(\cH_1)$ extends to an element
 of $\cM'$ if and only if it commutes with $P_1$, we have
 \[ \cM' \cong (\rho_1^\sharp \otimes \beta_1)(P_1)'
 = \one \otimes \beta(P_1^\sharp)'.\]
 Since $\beta(P_1^\sharp)$ is a finite group, the assertion follows from the
 fact that every unitary representation of a finite group decomposes discretely
 with finitely many isotypes.
 \end{proof}

\subsection{Vacuum representations}
\label{subsec:a.5}

In this appendix, we show that vacuum representations of weak products of topological groups arise as products of vacuum representations.

\subsubsection{Weak products and $\R$-actions}\label{appendix:weakproducts}

The \emph{weak product} \index{weak product!10@group \scheiding $\prod^{'}_{n \in \N} G_n$}
of a sequence $(G_{n})_{n\in \N}$ of topological groups is defined~as
\[ G := \textstyle{\prod^{'}_{n \in \N}} G_n = \bigcup_{N = 1}^\infty G^N, \quad
G^N = G_1 \times \cdots \times G_N,\]
where the group structure is inherited from the product group
$\prod_{n \in \N} G_n$.  However, we will need a topology
that is finer than the product topology.
We equip $G$ with the {\it box topology}, \index{weak product!20@box topology \vulop}
for which a basis
of $e$-neighborhoods consists of the sets
$G \cap \prod_{n = 1}^\infty U_n$, where $U_n \subeq G_n$ is an
$e$-neighborhood in $G_n$.  By \cite[Lemma~4.4]{Gl07}, this turns
$G$ into a topological group, and $G$ is the direct limit in
the category of topological groups
of the increasing sequence of subgroups $G^N$, endowed with the
product topology.

To study vacuum representations of weak products, consider a sequence $(G_n, \R, \alpha_n)_{n \in \N}$
of topological groups, with for each $n\in \N$ a homomorphism
$\alpha_n \: \R \to \Aut(G_n)$ that defines a continuous action
of $\R$ on $G_n$.
The
homomorphisms $\alpha_n \: \R \to \Aut(G_n)$ combine to a homomorphism
$\alpha \: \R \to \Aut(G)$ by
\[ \alpha_t(g_1, \ldots, g_N, e,\ldots)
:= (\alpha_{1,t}(g_1), \ldots, \alpha_{N,t}(g_N), e,\ldots), \]
where $\Aut(G)$ denotes the group of topological automorphisms.

\begin{Proposition} \label{prop:a.10}
The above map $\alpha$ is a continuous action of $\R$ on $G$.
\end{Proposition}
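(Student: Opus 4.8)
The statement to prove is that the map $\alpha\colon\R\to\Aut(G)$, where $G=\prod'_{n\in\N}G_n$ carries the box topology, is a continuous action. The plan is to separate this into three pieces: (i) each $\alpha_t$ is indeed a topological automorphism of $(G,\text{box})$; (ii) $\alpha$ is a group homomorphism; (iii) the action map $\R\times G\to G$, $(t,g)\mapsto\alpha_t(g)$, is continuous. Parts (i) and (ii) are essentially formal. For (ii), the homomorphism property $\alpha_{t+s}=\alpha_t\circ\alpha_s$ and $\alpha_0=\id_G$ are inherited coordinatewise from the homomorphisms $\alpha_n\colon\R\to\Aut(G_n)$, since all operations here are performed componentwise on tuples with finitely many nontrivial entries. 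For (i), I would check that $\alpha_t$ is continuous by noting that a basic $e$-neighborhood $V=G\cap\prod_n U_n$ is carried by $\alpha_t^{-1}$ to $G\cap\prod_n\alpha_{n,t}^{-1}(U_n)$, again a basic $e$-neighborhood since each $\alpha_{n,t}$ is continuous on $G_n$; translating by a group element and using that $G$ is a topological group in the box topology (by \cite[Lemma~4.4]{Gl07}) upgrades continuity at $e$ to continuity everywhere. Its inverse is $\alpha_{-t}$, which is continuous by the same argument, so $\alpha_t\in\Aut(G)$.

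The substantive point is (iii), the joint continuity of $\R\times G\to G$. Here one must be careful: the box topology is strictly finer than the product topology, so joint continuity does not follow automatically from the coordinatewise continuity of the actions $\R\times G_n\to G_n$. The key structural fact to exploit is that $G$ is the direct limit (in the category of topological groups) of the subgroups $G^N=G_1\times\cdots\times G_N$ with the product topology, as recorded in the excerpt just before the statement. Since $\R$ is locally compact (indeed a $k$-space), the product $\R\times G$ should be identified with the direct limit of the spaces $\R\times G^N$; this is the technical lemma I expect to lean on, and it is where I would cite a standard result on commuting colimits with products by locally compact (or locally finite, or $k$-) spaces — e.g.\ the fact that $X\times(-)$ preserves direct limits of sequences of closed embeddings when $X$ is locally compact Hausdorff. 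Granting this, it suffices to check that each composite $\R\times G^N\hookrightarrow\R\times G\to G$ is continuous. But the image of $\R\times G^N$ lands in $G^N$, the action of $\R$ on $G^N$ is simply the product of the actions $\alpha_{n}\colon\R\to\Aut(G_n)$ for $n\le N$, and a finite product of continuous actions on a finite product of topological groups (with the product topology) is continuous by the standard elementary argument. Hence each composite is continuous, and by the universal property of the direct limit $\R\times G=\varinjlim(\R\times G^N)$, the map $\R\times G\to G$ is continuous.

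The main obstacle, as indicated, is the interchange of the direct limit defining $G$ with the product by $\R$: one needs $\R\times\varinjlim G^N\cong\varinjlim(\R\times G^N)$ as topological spaces. Two routes are available. The cleanest is to invoke that $\R$ is locally compact Hausdorff, hence exponentiable, so that $\R\times(-)$ is a left adjoint and preserves all colimits, in particular the sequential colimit $\varinjlim G^N$; combined with the fact (from \cite[Lemma~4.4]{Gl07}) that this colimit is taken in topological groups and is computed as the underlying colimit of spaces, this gives the identification directly. The alternative, if one wants to avoid categorical machinery, is a hands-on argument: given $(t_0,g_0)$ with $g_0\in G^N$ and a basic neighborhood $\alpha_{t_0}(g_0)\cdot V$ of $\alpha_{t_0}(g_0)$ with $V=G\cap\prod_n U_n$, one uses that $\alpha_{n,t}$ fixes $e$ for all $t$ (since $\alpha_{n,t}$ is a group automorphism) to see that for $n>N$ the $n$-th coordinate stays at $e\in U_n$ automatically, so only the finitely many coordinates $n\le N'$ for a suitable $N'$ need to be controlled; continuity of each $\R\times G_n\to G_n$ at $(t_0,(g_0)_n)$ then supplies a product-type neighborhood of $(t_0,g_0)$ mapping into $\alpha_{t_0}(g_0)\cdot V$. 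I would present the direct-limit argument as the main proof and perhaps remark on the elementary reformulation. Beyond this one point, everything reduces to the already-cited topological-group structure of the box-topology weak product and to coordinatewise bookkeeping.
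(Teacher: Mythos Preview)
Both routes you sketch have a genuine gap, and it is the same gap in disguise. For the categorical route: the excerpt (and \cite[Lemma~4.4]{Gl07}) only tells you that $G$ with the box topology is the direct limit of the $G^N$ in the category of \emph{topological groups}, not in $\mathrm{Top}$. These generally differ --- already for $G_n=\R$ the $\mathrm{Top}$-colimit of the $\R^N$ is strictly finer than the box topology on $\bigoplus_n\R$ (and fails to be a topological group). Exponentiability of $\R$ gives $\R\times\varinjlim^{\mathrm{Top}}G^N\cong\varinjlim^{\mathrm{Top}}(\R\times G^N)$, but this says nothing about $\R\times G_{\mathrm{box}}$; continuity from the finer colimit topology on the domain is a weaker statement than what you need. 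Your parenthetical assertion that the colimit ``is computed as the underlying colimit of spaces'' is exactly the unproved (and generally false) point.

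Your hands-on alternative runs into the same obstacle concretely. You write that for $n>N$ ``the $n$-th coordinate stays at $e\in U_n$ automatically'', but this is only true for $g_0$ itself; a box-neighborhood of $g_0$ contains elements $g$ with $g_n\neq e$ for \emph{all} $n$. Continuity of $\R\times G_n\to G_n$ at $(t_0,e)$ gives you, for each $n>N$, some $\eps_n>0$ and a neighborhood $V_n\ni e$ with $\alpha_{n,t}(V_n)\subeq U_n$ for $|t-t_0|<\eps_n$ --- but nothing prevents $\eps_n\to 0$, so you cannot combine these into a single product-type neighborhood in $\R\times G$. The paper's proof supplies precisely the missing uniformity: using compactness of $[-1,1]$, it first chooses $W_n$ with $W_nW_n\subeq U_n$ and then $V_n\subeq W_n$ such that $\alpha_{n,t}(V_n)\subeq W_n$ holds for \emph{all} $|t|\leq 1$, uniformly in $t$. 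This handles the infinitely many coordinates $n>N$ with the fixed window $|t|\leq 1$, and then only the finitely many $n\leq N$ require shrinking $\eps$ further. That compactness step is the key idea you are missing.
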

\begin{proof}
To see this, 
we first note that all orbit maps are continuous because the subgroups
$G^N$ carry the product topology. Since all automorphisms
$\alpha_t$ are continuous by \cite[Lemma~4.4]{Gl07}, it suffices to verify continuity
of the action in all pairs $(0,g) \in \R \times G^N$.
So we have to find for every
sequence $(U_n)_{n \in \N}$ of $e$-neighborhoods in $G_n$
an $\eps > 0$ and a sequence of $e$-neighborhoods ${V_n \subeq G_n}$
such that
\[ \alpha_{n,t}(g_n V_n) \subeq g_n U_n \quad \mbox{ for } \quad
|t| < \eps, n \in \N.\]
As $[-1,1] \subeq \R$ is compact, we find for every $n \in \N$
an identity neighborhood ${V_n\subeq  W_n \subeq G_n}$ such that
$W_n W_n \subeq U_n$ and
$\alpha_{n,t}(V_n) \subeq W_n$ for $|t| \leq 1$.
For $n \leq N$ we now choose $\eps > 0$ in such a way that
$\alpha_{n,t}(g_n) \in g_n W_n$ holds for $|t| \leq \eps$. Then
\[ \alpha_{n,t}(g_n V_n)
= \alpha_{n,t}(g_n) \alpha_{n,t}(V_n)
\subeq g_n W_n W_n \subeq g_n U_n \]
holds for $|t| < \eps$ and $n \leq N$. For $n > N$, we have
$g_n = e$ and
\[ \alpha_{n,t}(g_n V_n)
= \alpha_{n,t}(V_n) \subeq W_n \subeq U_n \quad \mbox{ for } \quad
|t| \leq \eps.\]
Therefore $\alpha$ defines a continuous action on the
weak direct product~$G$.
\end{proof}

If, in addition, the groups $G_n$ are Lie groups, then
the box topology on $G$ is compatible with a Lie group
structure on $G$ (\cite[Rem.~4.3]{Gl07}).

\begin{Lemma} \label{lem:a.11} If all groups $G_n$ are locally exponential, then
$\alpha$ defines a smooth action on $G$.
\end{Lemma}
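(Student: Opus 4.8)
The statement is Lemma~\ref{lem:a.11}: if all $G_n$ are locally exponential Lie groups, then the $\R$-action $\alpha$ on the weak product $G = \prod'_{n\in\N}G_n$ (with box topology) is smooth. The plan is to verify smoothness by unwinding what ``smooth action'' means for a Lie group modeled on a locally convex space, and then reduce everything to the already-known smoothness of the component actions $\alpha_n\colon\R\to\Aut(G_n)$. Recall from \cite[Rem.~4.3]{Gl07} that the box topology on $G$ underlies a Lie group structure for which each inclusion $G^N \hookrightarrow G$ is a smooth embedding of the finite product $G^N = G_1\times\cdots\times G_N$, and $G$ is the direct limit of the $G^N$ in the category of Lie groups; correspondingly the Lie algebra $\g$ of $G$ is the locally convex direct limit (direct sum) $\bigoplus_{n\in\N}\g_n$, and $G$ is locally exponential with $\exp_G = \bigoplus_n \exp_{G_n}$ on a suitable $0$-neighborhood. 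The action map in question is $a\colon \R\times G \to G$, $a(t,g) = \alpha_t(g)$.

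First I would record that $a$ restricts on each $\R\times G^N$ to the map $(t,g)\mapsto(\alpha_{1,t}(g_1),\dots,\alpha_{N,t}(g_N))$, which is a finite product of the smooth maps $\R\times G_n\to G_n$, $(t,g_n)\mapsto\alpha_{n,t}(g_n)$ (smooth because each $\alpha_n$ is a smooth $\R$-action by hypothesis), hence is smooth on $\R\times G^N$. Next I would use the universal property of the direct limit: since $G = \varinjlim G^N$ as Lie groups and the box topology makes $G$ the direct limit in the category of topological groups (Proposition~\ref{prop:a.10} already gives continuity), a map out of $\R\times G$ that is smooth on each $\R\times G^N$ is smooth, provided one knows that $\R\times G$ carries the appropriate direct limit structure, i.e.\ that $\R\times G = \varinjlim(\R\times G^N)$ as locally convex manifolds. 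This last point is where I expect the genuine work to lie: direct limits do not automatically commute with products of infinite-dimensional manifolds, so one needs that $\R$ (finite-dimensional, hence locally compact) is ``nice enough'' for $\R\times(\varinjlim G^N) = \varinjlim(\R\times G^N)$ to hold as smooth manifolds. For LF-type direct limits this is a standard fact (direct limits of ascending sequences commute with products by locally compact, or more generally finite-dimensional, spaces); I would cite the relevant result from Gl\"ockner's work on direct limit Lie groups (e.g.\ the results underlying \cite{Gl07, Gl08}) rather than reprove it.

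Alternatively — and this is the route I would actually take to keep the argument self-contained at the level this paper works — I would verify smoothness \emph{locally}, exploiting local exponentiality. Since $G$ is locally exponential with $\exp_G = \prod'_n\exp_{G_n}$, it suffices to check that the conjugated action $(t,x)\mapsto \exp_G^{-1}\bigl(\alpha_t(\exp_G x)\bigr)$ is smooth near $(0,0)\in\R\times\g$, where $\g = \bigoplus_n\g_n$. But this map is simply $(t,(x_n)_n)\mapsto (\,\exp_{G_n}^{-1}\alpha_{n,t}(\exp_{G_n}x_n)\,)_n = (\,\beta_n(t,x_n)\,)_n$, where each $\beta_n\colon\R\times V_n\to\g_n$ is smooth (again by smoothness of $\alpha_n$ together with local exponentiality of $G_n$). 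A map into a locally convex direct sum $\bigoplus_n\g_n$ that is ``coordinatewise'' of this block-diagonal form — the $n$-th coordinate depending only on $(t,x_n)$ and vanishing when $x_n=0$ for $n$ outside any fixed finite set near a given base point, because on $\R\times G^N$ only finitely many blocks are nontrivial — is smooth for the box/direct-sum topology: on each $\R\times\bigoplus_{n\le N}V_n$ it is a finite product of smooth maps landing in $\bigoplus_{n\le N}\g_n$, and the direct sum topology is exactly the one for which such compatible families glue to a smooth map. The main obstacle, then, is purely the bookkeeping around infinite-dimensional direct limits: making precise that ``smooth on every $\R\times G^N$, compatibly'' implies ``smooth on $\R\times G$'' for the box topology. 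This is handled by the cited direct-limit results; once invoked, the lemma follows, and combined with Lemma~\ref{lem:a.11} the weak product $G\rtimes_\alpha\R$ is a Lie group, so the results on positive energy and vacuum representations apply to it.
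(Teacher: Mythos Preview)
Your overall strategy of reducing via local exponentiality to an action on the Lie algebra $\g=\bigoplus_n\g_n$ is the same starting point as the paper, but you then take a harder road than necessary. You write the local expression as $(t,(x_n))\mapsto(\beta_n(t,x_n))$ with each $\beta_n$ smooth, and then try to glue smoothness across the direct-limit stages $\R\times G^N$, conceding that the commutation $\R\times\varinjlim G^N=\varinjlim(\R\times G^N)$ is ``where the genuine work lies.'' The paper avoids this entirely by observing a point you overlook: since each $\alpha_{n,t}$ is a \emph{group automorphism}, in exponential coordinates it becomes exactly the \emph{linear} Lie algebra action, so $\beta_n(t,x_n)=e^{tD_n}x_n$ and the full action on $\g$ is the linear flow $(t,x)\mapsto e^{tD}x$ for the continuous linear operator $D=(D_n)_n$ on the direct sum.

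Once you have linearity, no direct-limit gymnastics are needed. The tangent map of $\alpha(t,x)=e^{tD}x$ is explicitly
\[
\dd\alpha(t,x)(s,y)=sD(\alpha(t,x))+\alpha(t,y),
\]
which is continuous because $D$ is continuous linear and $\alpha$ itself is continuous (Proposition~\ref{prop:a.10}). This gives $C^1$; iterating the same formula gives $C^k$ for every $k$, hence smoothness. Your proposal is not wrong, but the direct-limit route you outline would require citing or proving commutation results that are unnecessary here; the paper's argument is elementary precisely because it exploits the linear structure you had in hand but did not use.
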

\begin{proof} By \cite[Rem.~4.3]{Gl07}, the group $G$ is locally exponential
as well. Therefore it suffices to show that the $\R$-action on
the Lie algebra $\g \cong \oplus_{n \in \N} \g_n$
(the locally convex direct sum), is smooth.
Let $D_n \in \der(\g_n)$ denote the infinitesimal generator of the smooth
actions $\alpha^n$ on $\g_n$. Then
\[ \alpha(t,x) = (e^{t D_n} x_n)_{n \in \N} = e^{tD} x
\quad \mbox{ for } \quad
D(x_n) = (D_n x_n)\]
and  the tangent map of $\alpha$ is given by
\[ \dd \alpha(t,x)(s,y) = s D(\alpha(t,x)) + \alpha(t,y).\]
As $D \: \g \to \g$ is a continuous linear operator, we inductively obtain
from the continuity of $\alpha$
(Proposition~\ref{prop:a.10}) that $\alpha$ is $C^k$ for each $k \in \N$,
and hence that $\alpha$ is smooth.
\end{proof}

The weak products encountered in this paper are mostly of the following form.
\begin{Lemma}\label{eq:LemmaWeakProductsGauge} Suppose that the smooth manifold
$S$ has countably many connected components and that $\cK \to S$
is a Lie group bundle. Then the Lie group
$\Gamma_c(\cK)$ is isomorphic to the restricted Lie group
product $\prod'_{n \in \N} \Gamma_c(\cK\res_{S_n})$.
\end{Lemma}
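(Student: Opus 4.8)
\textbf{Proof plan for Lemma~\ref{eq:LemmaWeakProductsGauge}.}
The plan is to exhibit an explicit isomorphism of Lie groups between $\Gamma_c(S,\cK)$ and the restricted product $\prod'_{n\in\N}\Gamma_c(S_n,\cK\res_{S_n})$, where $(S_n)_{n\in\N}$ is the (countable) set of connected components of $S$. First I would fix an enumeration of the connected components; since $S$ is a smooth manifold with at most countably many components, this enumeration exists, and each $S_n$ is an open and closed submanifold of $S$, so $\cK\res_{S_n}\to S_n$ is again a Lie group bundle. A compactly supported smooth section $\xi\in\Gamma_c(S,\cK)$ restricts to a compactly supported smooth section $\xi\res_{S_n}\in\Gamma_c(S_n,\cK\res_{S_n})$ for each $n$, and because $\supp(\xi)$ is compact it meets only finitely many of the pairwise disjoint open sets $S_n$; hence $\xi\res_{S_n}=0$ for all but finitely many $n$. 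This gives a well-defined group homomorphism
\[
\Psi\colon \Gamma_c(S,\cK)\to {\textstyle\prod'_{n\in\N}}\,\Gamma_c(S_n,\cK\res_{S_n}),
\qquad \Psi(\xi)=(\xi\res_{S_n})_{n\in\N}.
\]
Conversely, given a tuple $(\xi_n)_{n\in\N}$ in the restricted product, only finitely many $\xi_n$ are nonzero, so gluing the $\xi_n$ (extended by the identity section outside $S_n$) produces a well-defined compactly supported smooth section of $\cK$; this is the inverse map, so $\Psi$ is a group isomorphism.

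Next I would check that $\Psi$ is an isomorphism of Lie groups, i.e.\ a diffeomorphism. The topology on $\Gamma_c(S,\cK)$ is the LF-topology, which is the locally convex inductive limit of the Fr\'echet spaces $\Gamma_K(S,\cK)$ of sections supported in a fixed compact $K\subseteq S$; since any such $K$ meets only finitely many $S_n$, one has $\Gamma_K(S,\cK)\cong\prod_{n: S_n\cap K\neq\emptyset}\Gamma_{K\cap S_n}(S_n,\cK\res_{S_n})$ as topological vector spaces (finite products, so the box and product topologies agree here). Passing to the inductive limit over $K$, the Lie algebra $\Gamma_c(S,\fK)$ is isomorphic as a locally convex space to the locally convex direct sum $\bigoplus_{n\in\N}\Gamma_c(S_n,\fK\res_{S_n})$, which carries exactly the box topology of the restricted product (cf.\ \cite[Rem.~4.3]{Gl07}). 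Since $\Gamma_c(S,\cK)$ and each $\Gamma_c(S_n,\cK\res_{S_n})$ are locally exponential with the stated Lie algebras \cite[Prop.~2.3]{JN17}, and $\Psi$ is a group isomorphism intertwining the pointwise exponential maps, it restricts to a diffeomorphism on a suitable identity neighborhood and hence is an isomorphism of Lie groups. (Here one uses \cite[Prop.~7.3]{Gl03} to identify the weak product with the box topology as a Lie group in the first place.)

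The only point requiring genuine care — and the main potential obstacle — is the topological identification: one must verify that the inductive-limit (LF) topology on $\Gamma_c(S,\cK)$ really does match the box topology on the restricted product, rather than the coarser product topology. This is where the hypothesis that $S$ has \emph{countably} many components is used, together with the fact that compact subsets of $S$ meet only finitely many components; for an uncountable family of components the weak product would still make sense but the LF-structure argument would need to be revisited. Once this identification is in place, everything else is a routine unwinding of definitions, and the lemma follows. I would phrase the final write-up by first constructing $\Psi$ and its inverse at the level of abstract groups, then invoking the compact-support argument to get the LF/box topology match, and finally citing local exponentiality to upgrade the continuous group isomorphism to an isomorphism of Lie groups.
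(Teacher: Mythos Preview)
Your proposal is correct and follows essentially the same approach as the paper: both reduce to showing that the Lie algebra $\Gamma_c(S,\fK)$ is the locally convex direct sum of the $\Gamma_c(S_n,\fK\res_{S_n})$ via the observation that compact subsets meet only finitely many components, and then invoke local exponentiality to pass to the group level. The paper is slightly more economical, arguing directly at the Lie algebra level (continuity of the summation map from the universal property of the direct sum, continuity of its inverse by checking it on each Fr\'echet piece $\Gamma_D$), whereas you first build the group isomorphism and then verify the topology---but the content is the same.
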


\begin{proof} Since the groups $G = \Gamma_c(\cK)$ and $G_n = \Gamma_c(\cK\res_{S_n})$ are locally exponential,
it suffices to verify that the Lie algebra
$\g = \Gamma_c(\fK)$ is the locally convex direct sum of the ideals
$\g_n = \Gamma_c(\fK\res_{S_n})$. That the summation map
\[ \Phi \: \bigoplus_{n \in \N} \Gamma_c(\fK\res_{S_n}) \to\g \]
is continuous follows  from the universal property of the locally convex
direct sum. That its inverse $\Phi^{-1}$ is also continuous, follows
from its continuity on the Fr\'echet subspaces
$\Gamma_D(\cK)$, where $D \subeq S$ is compact, because any compact
subset intersects at most finitely many connected components.
\end{proof}

\subsubsection{Vacuum representations}\label{appendix:vacuum}
Let $G$ be a topological group,
and let $\alpha \: \R \to \Aut(G)$ be a homomorphism that defines
a continuous action of $\R$ on $G$.

\begin{Definition} A triple $(\rho,\cH, \Omega)$
is called a {\it vacuum representation} of \index{representation!vacuum \vulop}
$(G,\R,\alpha)$, if
$\rho \: G \rtimes_\alpha \R \to \U(\cH)$ is a continuous unitary
representation, $\Omega \in \cH$ is a $G$-cyclic unit vector,
and the selfadjoint operator $H$, defined by
$U_t := \rho(e,t) = e^{-itH}$ for $t \in \R$, satisfies
$\ker(H- E_0\one) = \C \Omega$ for $E_0 = \inf(\mathrm{spec}(H))$.
\end{Definition}

The following is an immediate consequence of \cite[Prop.~5.4]{BGN20}.
\begin{Proposition} \label{prop:a.9} For a vacuum representation $(\rho,\cH, \Omega)$ of $(G,\R,\alpha)$, the
following assertions hold:
\begin{itemize}
\item[\rm(a)] $U_\R \subeq \rho(G)''$.
\item[\rm(b)] The representation $\rho\res_G$ of $G$ on $\cH$ is irreducible.
\end{itemize}
\end{Proposition}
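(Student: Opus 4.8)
\textbf{Proof proposal for Proposition~\ref{prop:a.9}.}

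The statement to prove is that for a vacuum representation $(\rho,\cH,\Omega)$ of $(G,\R,\alpha)$ one has $U_\R \subeq \rho(G)''$ and that $\rho\res_G$ is irreducible. The excerpt already points to the route: both assertions follow from \cite[Prop.~5.4]{BGN20}. So the plan is essentially to set up the hypotheses of that result and then quote it. Concretely, I would first note that since $\Omega$ is $G$-cyclic, it is in particular cyclic for the von Neumann algebra $\cM := \rho(G)''$, and that by construction $\Omega$ spans the lowest-energy eigenspace $\ker(H-E_0\one)$ of the Hamiltonian $H$ generating $U_t = \rho(e,t)=e^{-itH}$, which has spectrum bounded below by $E_0$. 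After replacing $H$ by $H-E_0\one$ (equivalently, $U_t$ by $e^{itE_0}U_t$, which does not change $\rho(G)''$ and only rescales $\Omega$), we are exactly in the situation of a positive Hamiltonian with a one-dimensional, $\cM$-cyclic kernel.

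Next I would invoke the Borchers--Arveson machinery in the sharp form recorded in Theorem~\ref{thm:BAthm}. Apply it with the von Neumann algebra $\cM = \rho(G)''$: the one-parameter group $\Ad(U_t)$ leaves $\cM$ invariant because $U_t\rho(g)U_t^{-1} = \rho(\alpha_t(g))\in\cM$, and $U_t = e^{-itH}$ with $H\geq 0$. This produces the minimal implementing group $V_t = e^{-itH_0}\in\cM$ with $H_0\geq 0$, and $W_t := V_{-t}U_t$ lies in the commutant $\cM' = \rho(G)'$. The key point is that $\Omega$, being the unique (up to scalar) ground state of $H$, is annihilated by $H_0$: indeed $0\le H_0\le H$ as unbounded operators (this is part of the minimality in Theorem~\ref{thm:BAthm}(i) together with the elementary fact that $H - H_0 = $ the generator of the central group $W$, which is nonnegative for the minimal choice), so $H\Omega = 0$ forces $H_0\Omega = 0$, and hence also $W_t\Omega = e^{itH_0}e^{-itH}\Omega = \Omega$.

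From here the two conclusions fall out. For irreducibility of $\rho\res_G$: suppose $\cM' = \rho(G)'$ is nontrivial and pick a nonzero projection $Q\in\cM'$ with $Q\ne\one$. Since $\Omega$ is $\cM$-cyclic and separating for $\cM'$, the vector $Q\Omega$ is nonzero; but $W_t\in\cM'$ commutes with $Q$ and fixes $\Omega$, so $Q\Omega$ is again an $H_0$-null, hence (using $H-H_0\geq 0$ restricted to the $\cM$-invariant subspace $\overline{\cM Q\Omega}$ and minimality on that subspace, i.e. Theorem~\ref{thm:BAthm}(ii)) forces $Q\Omega$ to be a ground state of $H$, hence proportional to $\Omega$; running this for $\one-Q$ as well contradicts $Q\neq\one$. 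The cleanest packaging is to say: the closed $\cM$-invariant subspace $\cF := \overline{\cM' \Omega}^{\perp}$ would, if nonzero, contain a ground state of $H_0$ distinct from $\Omega$ by Theorem~\ref{thm:BAthm}(ii) applied with $\cM'$ in place of $\cM$, contradicting uniqueness; therefore $\cM'\Omega$ is dense, which together with cyclicity of $\Omega$ for $\cM$ forces $\cM' = \C\one$. For $U_\R\subeq\rho(G)''$: once $\rho\res_G$ is irreducible, $\rho(G)' = \C\one$, so the central part $W_t\in\rho(G)'$ is a scalar $e^{i\theta(t)}\one$; absorbing it gives $U_t = V_t\cdot(\text{scalar})\in\cM = \rho(G)''$. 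Alternatively, and without passing through irreducibility first, one argues directly that $W_t$ fixes the cyclic vector $\Omega$ and lies in $\cM'$, hence acts as a cocharacter whose nonnegative generator must be $0$ by minimality of $V$, giving $W_t=\one$ and $U_t=V_t\in\cM$.

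The only genuine subtlety — the ``hard part'' — is the bookkeeping around unbounded operators and domains when asserting $0\le H_0\le H$ and deducing $H\Omega=0\Rightarrow H_0\Omega=0$ and $W_t\Omega=\Omega$; this is exactly the content that \cite{BGN20} (and Theorem~\ref{thm:BAthm} here) is designed to handle cleanly, so in the write-up I would lean on those statements rather than reprove the operator inequalities. Everything else is a short deduction from Theorem~\ref{thm:BAthm} and Schur's lemma.
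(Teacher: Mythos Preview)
Your overall strategy is right---lean on \cite[Prop.~5.4]{BGN20} and the Borchers--Arveson machinery---but the logical order is inverted relative to the paper, and your argument for irreducibility has a genuine gap. You invoke Theorem~\ref{thm:BAthm}(ii) (and later with $\cM'$ in place of $\cM$) to produce ground states in invariant subspaces; but part~(ii) of that theorem requires the periodicity hypothesis $V_T = \one$, which is nowhere assumed for a general vacuum representation of $(G,\R,\alpha)$. So neither your ``$Q\Omega$ is a ground state'' step nor the ``cleanest packaging'' via $\cF = \overline{\cM'\Omega}^\perp$ goes through as written.

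The paper avoids this by proving (a) first and deriving (b) from it in two lines. For (a), \cite[Prop.~5.4]{BGN20} is quoted directly: it asserts that the shifted group $U_t^0 := e^{itE_0}U_t$ \emph{is} the minimal positive implementer for $\cM = \rho(G)''$, hence $U_t^0 \in \cM$ and therefore $U_t \in \cM$. (This is stronger and more direct than your route through the decomposition $U_t = V_tW_t$ followed by the operator inequality $H_0 \le H$; that inequality is not given by Theorem~\ref{thm:BAthm}(i) as stated, since minimality there compares only implementers lying \emph{in} $\cM$.) Once (a) is in hand, (b) is immediate: $U_t \in \cM$ means the spectral subspace $\C\Omega = \ker(H - E_0\one)$ is invariant under $\cM' = \rho(G)'$; since $\Omega$ is cyclic for $\cM$ it is separating for $\cM'$, so the restriction map $\cM' \to B(\C\Omega) \cong \C$ is injective, forcing $\cM' = \C\one$ and irreducibility by Schur.

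Your ``alternative'' paragraph for (a) is close to salvageable---once you know $W_t\Omega = \Omega$ and $W_t \in \cM'$, the separating property of $\Omega$ gives $W_t = \one$ directly (no need for the ``cocharacter with nonnegative generator'' digression). But establishing $W_t\Omega = \Omega$ still requires $H_0\Omega = 0$, and for that you need exactly the content of \cite[Prop.~5.4]{BGN20}, not Theorem~\ref{thm:BAthm}. So the cleanest write-up is: quote \cite[Prop.~5.4]{BGN20} for (a), then run the two-line $\cM'$-invariance argument for~(b).
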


\begin{proof} (a) The one-parameter group $(U_t^0)_{t \in \R}$
defined by $U_t^0 := e^{it E_0} U_t$ is minimal
for the
von Neumann algebra $\rho(G)''$ {\rm(cf.\ Definition~\ref{def:mini})}
by  \cite[Prop.~5.4]{BGN20}, hence contained in $\rho(G)''$, and this
implies (a).

\nin (b) From (a) it follows that the closed subspace $\C\Omega =
\ker (H -E_0\one) \subeq \cH$ is invariant under the commutant
$\cM' := \rho(G)'$ of $\cM := \rho(G)''$.
As $\Omega$ is generating for $\cM$, it is separating for $\cM'$,
so that $\dim \ker (H_0-E_0\one) = 1$ leads to $\cM' = \C \one$.
Now the assertion follows  from Schur's Lemma.
\end{proof}

Let $(G_n, \R, \alpha_n)_{n \in \N}$ be a sequence
of topological groups, with for each $n\in \N$ a homomorphism
$\alpha_n \: \R \to \Aut(G_n)$ that defines a continuous action
of $\R$ on $G_n$.
The following theorem identifies the vacuum representations
of the weak product $(G,\R,\alpha)$ in terms of  vacuum representations of
the triples $(G_n, \R, \alpha_n)$.

\begin{Theorem} \label{thm:a.11}
For any sequence $(\rho_n, \cH_n, \Omega_n)$
of  vacuum representations of $(G_n, \R, \alpha_n)$
with minimal energy $E_0 =0$,
the infinite tensor product
\begin{equation}\label{eq:inftensorHilbert}
(\cH,\Omega) :=  \bigotimes_{n = 1}^\infty (\cH_n, \Omega_n)
\end{equation}
carries a continuous vacuum representation of
$(G,\R,\alpha)$, defined by
\begin{equation}\label{eq:tensorrepA5}
 \rho(g_1, \ldots, g_n,e,\ldots)
:= \rho_1(g_1) \otimes \cdots \otimes \rho_n(g_n)
\otimes \one_{n+1} \otimes \cdots.
\end{equation}
Conversely, every  vacuum representation of $(G,\R,\alpha)$
with $E_0 = 0$ is equivalent to such a representation.
\end{Theorem}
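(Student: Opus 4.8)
\textbf{Proof plan for Theorem~\ref{thm:a.11}.}
The plan is to split the statement into the ``easy'' direction (product of vacuum representations is a vacuum representation) and the ``hard'' direction (every vacuum representation of the weak product factorizes), and to handle each by reducing to the structure of the minimal one--parameter group furnished by Proposition~\ref{prop:a.9} and the Borchers--Arveson machinery of Theorem~\ref{thm:BAthm}.

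For the forward direction, I would first recall that the infinite tensor product Hilbert space $\bigotimes_{n}(\cH_n,\Omega_n)$ in \eqref{eq:inftensorHilbert} is well defined with respect to the stabilizing sequence $(\Omega_n)$, and that the formula \eqref{eq:tensorrepA5} defines a unitary representation of the algebraic weak product $G = \bigcup_N G^N$: on each $G^N$ it is an honest tensor product of the $\rho_n$, and the finite-level representations are compatible. Continuity with respect to the box topology follows from \cite[Lemma~4.4]{Gl07} together with continuity of $\rho$ on each Fr\'echet (here: product) subgroup $G^N\rtimes\R$, since the box topology makes $G\rtimes\R$ the direct limit of the $G^N\rtimes\R$ (using Proposition~\ref{prop:a.10} for the $\R$-part). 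The Hamiltonian is $H=\overline{\sum_n H_n}$ in the sense that $U_t=\bigotimes_n e^{-itH_n}$; since each $H_n\geq 0$ with simple kernel $\C\Omega_n$, one checks $H\geq0$ and $\ker H = \bigcap_n (\C\Omega_1\otimes\cdots\otimes\ker H_n\otimes\cdots) = \C\Omega$, using that $\Omega$ is the unique (up to scalar) vector annihilated by every $H_n$. Finally $\Omega$ is $G$-cyclic because $\rho(G^N)(\Omega_1\otimes\cdots\otimes\Omega_N)$ is dense in $\cH_1\otimes\cdots\otimes\cH_N$ for each $N$ (cyclicity of each $\Omega_n$), and the union of these subspaces is dense in $\cH$.

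For the converse, let $(\rho,\cH,\Omega)$ be a vacuum representation of $(G,\R,\alpha)$ with $E_0=0$. The key first step is Proposition~\ref{prop:a.9}: $\rho\res_G$ is irreducible and $U_\R\subseteq\rho(G)''$. Restricting $\rho$ to the commuting normal subgroups $G_n\subseteq G$ (embedded as the $n$-th factor), I would apply the minimal-implementation part of Theorem~\ref{thm:BAthm}(i) to the von Neumann algebra $\cM_n:=\rho(G_n)''$ and the one-parameter group $U_t$: this produces a minimal positive one-parameter group $U_t^{(n)}=e^{-itH_n}\in\cM_n$ with $H_n\geq0$, and since the $G_n$ pairwise commute the operators $H_n$ (and the $U^{(n)}$) pairwise commute as well. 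The vector $\Omega$ is then annihilated by each $H_n$ (because $0\leq H_n\leq H$ and $H\Omega=0$), and the closed subspace $\cH_n:=\overline{\rho(G_n)\Omega}$ is a vacuum representation of $(G_n,\R,\alpha_n)$ with ground state $\Omega_n:=\Omega$ and Hamiltonian $H_n\res_{\cH_n}$; irreducibility of $\rho\res_{G_n}$ on $\cH_n$ again follows from Proposition~\ref{prop:a.9}. Because distinct $G_n$ commute, the subspaces $\cH_1,\dots,\cH_N$ are ``independent'' in the GNS sense: the map $\rho(g_1)\cdots\rho(g_N)\Omega \mapsto \rho_1(g_1)\Omega_1\otimes\cdots\otimes\rho_N(g_N)\Omega_N$ extends to a unitary $\overline{\rho(G^N)\Omega}\;\xrightarrow{\sim}\;\cH_1\otimes\cdots\otimes\cH_N$ intertwining the two representations of $G^N$; here one uses that $\Omega$ is a cyclic, jointly ``factorized'' vacuum vector, so the functional $g_1\cdots g_N\mapsto\langle\Omega,\rho(g_1)\cdots\rho(g_N)\Omega\rangle$ is the product $\prod_n\langle\Omega_n,\rho_n(g_n)\Omega_n\rangle$. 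Taking the inductive limit over $N$ and using that $\bigcup_N\overline{\rho(G^N)\Omega}$ is dense in $\cH$ (cyclicity of $\Omega$ for $G=\bigcup_N G^N$) yields the unitary $\cH\cong\bigotimes_n(\cH_n,\Omega_n)$ intertwining $\rho$ with the tensor product representation \eqref{eq:tensorrepA5}; consistency of $H_n$ with this limit identifies $H$ with $\overline{\sum_n H_n}$, so the minimal energy is $0$ with simple ground state $\Omega$, as required.

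The main obstacle I anticipate is the factorization of the GNS data, i.e.\ proving rigorously that $\langle\Omega,\rho(g_1)\cdots\rho(g_N)\Omega\rangle = \prod_n\langle\Omega_n,\rho_n(g_n)\Omega_n\rangle$ and that this forces the unitary identification with the tensor product. This is exactly the point where the vacuum hypothesis does real work: without uniqueness of the ground state one cannot conclude that the ``vacuum correlation functions'' for commuting subgroups multiply. Concretely I would argue: for fixed $n$, the operator-valued map $A\mapsto P\,A\,P$, where $P$ is the projection onto $\overline{\rho(G_{\neq n})\Omega}$ with $G_{\neq n}$ the subgroup generated by all $G_m$, $m\neq n$, together with the fact that $\C\Omega$ is the joint kernel of all $H_m$, lets one disintegrate $\Omega$ and peel off one factor at a time by induction on $N$; the cleanest route is probably to invoke Theorem~\ref{thm:BAthm}(ii)/(iii) (or \cite[Prop.~5.4]{BGN20}, as already used in Proposition~\ref{prop:a.9}) to see that $\Omega$ is separating for $\rho(G_{\neq n})'$ and cyclic for $\rho(G_n)''$ on the appropriate subspace, which pins down the tensor splitting uniquely. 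Everything else---continuity over the box topology, well-definedness of the infinite tensor product, identification of $H$---is routine bookkeeping given the results already assembled in the paper.
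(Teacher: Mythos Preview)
Your forward direction matches the paper's. The paper is slightly more careful about $\ker H = \C\Omega$: it splits $\cH = \cH^N \otimes \cK^N$, invokes \cite[Lemma~A.3]{BGN20} to write $H = (H_V \otimes \one) + (\one \otimes H_W)$ as a sum of commuting nonnegative pieces, and then intersects over~$N$; your ``$\ker H = \bigcap_n(\ldots)$'' is the same idea but needs that decomposition lemma to be made rigorous.

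For the converse the paper takes a shorter and structurally different route. Rather than building local Hamiltonians $H_n$ via Borchers--Arveson and then attempting to prove that vacuum correlations factorize, the paper uses irreducibility of $\rho\res_G$ (Proposition~\ref{prop:a.9}) together with the product splitting $G = G^N \times G^{>N}$: the cyclic subspace $\cH^N := \oline{\Spann\,\rho(G^N)\Omega}$ carries a vacuum, hence irreducible, representation of $G^N$, so $\rho\res_{G^N}$ contains an irreducible subrepresentation and is therefore type~I factorial. This immediately yields $\cH \cong \cH^N \otimes \cK^N$ with $\rho\res_{G^N} = \rho^N \otimes \one$, and induction on $N$ gives $\rho^N \cong \rho_1 \otimes \cdots \otimes \rho_N$ together with $\Omega = \Omega_1 \otimes \cdots \otimes \Omega_N \otimes \tilde\Omega_N$. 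The correlation-function computation you flag as the main obstacle is never needed.

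Two concrete gaps in your route. First, the inequality $0 \leq H_n \leq H$ does not follow from Theorem~\ref{thm:BAthm}: minimality of $H_n$ is a comparison among positive implementers \emph{inside} $\cM_n = \rho(G_n)''$, and $e^{-itH}$ is not in $\cM_n$. You can still get $H_n\Omega = 0$ (argue that $e^{-itH_n}$ commutes with $U_s$ by uniqueness of the minimal implementer under conjugation, hence preserves $\ker H = \C\Omega$, and that the resulting character must be trivial), but this needs to be written out. Second, the factorization $\la\Omega,\rho(g_1)\cdots\rho(g_N)\Omega\ra = \prod_n \la\Omega,\rho(g_n)\Omega\ra$ is the tensor splitting in disguise; your sketch via $A \mapsto PAP$ does not establish it, and proving it directly is no easier than the paper's von Neumann algebra argument. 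The type~I factoriality route gives you the splitting as a structural consequence of irreducibility, without computing a single correlation function.
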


\begin{proof}
First we prove that if all $(\rho_n, \cH_{n}, \Omega_n)$ are vacuum representations, then so is
their infinite tensor product.
Since the $\Omega_n$ are unit vectors, the infinite
tensor product Hilbert space $\cH$ is defined. It contains the
subspaces
\[ \cH^N :=
\cH_1 \otimes \cdots \otimes \cH_N
\otimes \Omega_{N+1} \otimes \cdots
\cong \cH_1 \otimes \cdots \otimes \cH_N,  \]
whose union is dense in $\cH$. On $\cH^N$,
the representation
$\rho^N$ of $G^N \rtimes \R$,  defined by
\[ \rho^N((g_1, \ldots, g_N),t)
:= \rho_1(g_1, t) \otimes \cdots \otimes \rho_N(g_N,t),\]
 is continuous with cyclic vector
$\Omega = \otimes_{n = 1}^{\infty} \Omega_n.$
The representation $(\rho,\cH)$ of $G$ now  is a direct limit of
the representations $(\rho^N, \cH^N)$ of the subgroups $G^N$, hence
a continuous unitary representation. Further,
the invariance of $\Omega_n$ under the one-parameter group
$U^n_t := \rho_n(e,t)$ implies that
\begin{equation}
  \label{eq:u-def}
U_t(v_1 \otimes \cdots \otimes v_N \otimes \Omega_{N+1}
\otimes \cdots)
:= U^1_t v_1 \otimes \cdots \otimes U^N_t v_N \otimes \Omega_{N+1}
\otimes \cdots
\end{equation}
defines a continuous unitary one-parameter group on $\cH$ satisfying
\[ U_t \rho(g) U_t^* = \rho(\alpha_t(g)) \quad \mbox{ for } \quad
g \in G, t \in \R.\]
By $\rho(g,t) := \rho(g) U_t$, we thus obtain a continuous
unitary representation of $G$ on $\cH$ for which
$\Omega$  is a
$G$-cyclic unit vector fixed by the one-parameter group $(U_t)_{t \in \R}$.
Writing $U_t = e^{-itH}$ and
$U^n_t = e^{-it H_n}$ for selfadjoint operators $H_n \geq 0$,
\eqref{eq:u-def} implies that $H \geq 0$.
To see that $\ker H = \C \Omega$, we decompose
\[ \cH = \cH^N \otimes \cK_N \quad \mbox{ for }\quad N \in \N.\]
Accordingly,
\[ U_t = V_t \otimes W_t \quad \mbox{ with } \quad
V_t = U^1_t \otimes \cdots \otimes U^N_t,\]
and both one-parameter groups $(V_t)_{t \in \R}$ and
$(W_t)_{t \in \R}$ have positive generators
$H_V$ and $H_W$. From \cite[Lemma~A.3]{BGN20} we thus infer that
\[ H = (H_V \otimes \one_{\cK^N}) + (\one_{\cH^N} \otimes H_W)\]
in the sense of unbounded operators,
hence in particular that
\[  \cD(H) = (\cD(H_V) \otimes \cK^N)
\cap (\cH^N \otimes \cD(H_W)).\]
We conclude that, for every $N \in \N$,
\[ \ker H \subeq \ker H_V \otimes \cK^N
=  \Omega_1 \otimes \cdots \otimes \Omega_N \otimes \cK^N,\]
and this shows that
\[ \ker H \subeq \bigcap_N
 \Omega_1 \otimes \cdots \otimes \Omega_N \otimes \cK^N
= \C \Omega.\]
Therefore $(\rho,\cH,\Omega)$ is a  vacuum representation
of $(G,\R,\alpha)$.\\

Now we assume, conversely, that
$(\rho,\cH,\Omega)$ is a  vacuum representation
of $(G,\R,\alpha)$. Then the subspace
\[ \cH^N := \oline{\Spann{\rho(G^N) \Omega}} \]
carries a  vacuum representation
of $(G^N,\R,\alpha^N)$. In particular, this representation
is irreducible by
Proposition~\ref{prop:a.9}. The group $G$ is a topological product
\[ G = G^N \times G^{>N},
\quad \mbox{ where } \quad
G^{>N} := \textstyle{\prod_{n > N}'} G_n,\]
and the representation $\rho$ is irreducible by
Proposition~\ref{prop:a.9}. Since its restriction to
$G^N$ carries an irreducible subrepresentation, the
restriction to $G^N$ is factorial of type I, hence of the form
\[ \rho\res_{G^N} = \rho^N  \otimes \one \]
with respect to some factorization
$\cH = \cH^N \otimes \cK^N$.
Starting with $N = 1$ and proceeding inductively, we see that
\[ \rho^N \cong \rho_1 \otimes \cdots \otimes \rho_N \]
for  vacuum representations
$(\rho_n, \cH_n, \Omega_n)$ of $(G_n,\R, \alpha_n)$.
In particular, we obtain factorizations
\[ \Omega = \Omega^N \otimes \tilde\Omega_N
= \Omega_1  \otimes \cdots \otimes \Omega_N \otimes \tilde\Omega_N,\]
so that we may identify $\cH^N$ with the subspace
\[ \cH^N \otimes \tilde\Omega_N \subeq \cH.\]
As $\Omega$ is $G$-cyclic, the union of these $G^N$-invariant subspaces
is dense in $\cH$. This implies that the  vacuum representation
$(\rho, \cH,\Omega)$ is equivalent to the
infinite tensor product
$\otimes_{n \in \N} (\rho_n, \cH_n, \Omega_n)$ of the ground
state representations $(\rho_n, \cH_n, \Omega_n)$.
This completes the proof.
\end{proof}

The following allows us to reduce the classification of smooth
 vacuum representations to the local case,
under the assumption that the ground state is smooth.

\begin{Proposition} \label{prop:a.11}
Suppose that the $G_n$ are
Lie groups and that the $\R$-actions on $G_n$ are smooth.
Then the  vacuum representation $(\rho,\cH,\Omega)$ is smooth with smooth vector $\Omega$
if and only if
the  vacuum representations $(\rho_n,\cH_n,\Omega_n)$ are smooth with smooth vector $\Omega_n$.
\end{Proposition}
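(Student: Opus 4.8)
\textbf{Proof proposal for Proposition~\ref{prop:a.11}.}

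The plan is to leverage Theorem~\ref{thm:a.11}, which already gives the unitary equivalence of $(\rho,\cH,\Omega)$ with the infinite tensor product $\bigotimes_n(\rho_n,\cH_n,\Omega_n)$, and then to analyze smoothness of this tensor product representation directly. For the forward direction, suppose $(\rho,\cH,\Omega)$ is smooth with $\Omega \in \cH^\infty$. Fix $n$, and identify $\cH_n$ with the subspace $\Omega_1 \otimes \cdots \otimes \Omega_{n-1} \otimes \cH_n \otimes \Omega_{n+1} \otimes \cdots \subseteq \cH$, on which $G_n$ acts through the embedding $G_n \hookrightarrow G$ as the $n$-th factor. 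Since the inclusion $G_n \hookrightarrow G \rtimes_\alpha \R$ (into $G_n \rtimes_{\alpha_n}\R$ as the $n$-th block, with trivial action on the other factors) is a morphism of Lie groups, and since the orbit map of a smooth vector under $G \rtimes_\alpha \R$ restricts to a smooth orbit map under any Lie subgroup, $\Omega = \Omega_1 \otimes \cdots$ has smooth orbit map under $G_n \rtimes_{\alpha_n}\R$. But restricting the orbit map $g_n \mapsto \rho(g_n)\Omega$ to the tensor factor and projecting onto $\cH_n$ (via the bounded projection onto the $n$-th slot, i.e.\ pairing the other slots against $\Omega_m$, $m\neq n$) shows that $g_n \mapsto \rho_n(g_n)\Omega_n$ is smooth into $\cH_n$; hence $\Omega_n \in \cH_n^\infty$. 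For density of $\cH_n^\infty$ in $\cH_n$: apply the same projection argument to an arbitrary smooth vector of $\cH$ of the form $v_1 \otimes \cdots \otimes v_N \otimes \Omega_{N+1}\otimes\cdots$, or simply observe that $\rho_n(G_n^\infty)\Omega_n \subseteq \cH_n^\infty$ by the same restriction-of-orbit-map argument, and $\Omega_n$ is $G_n$-cyclic, so the $G_n$-translates of $\Omega_n$ already span a dense subspace; smoothing these by convolution with test functions on $G_n$ (using local exponentiality, so that $C^\infty_c(G_n)$ acts and produces smooth vectors) yields a dense set of smooth vectors in $\cH_n$.

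For the converse, assume each $(\rho_n,\cH_n,\Omega_n)$ is smooth with $\Omega_n \in \cH_n^\infty$. The first task is to show $\Omega = \bigotimes_n \Omega_n \in \cH^\infty$, i.e.\ that the orbit map $G \rtimes_\alpha\R \to \cH$, $(g,t)\mapsto \rho(g,t)\Omega$, is smooth. Because $G = \bigcup_N G^N$ is a direct limit of Lie groups in the box topology (and, by Lemma~\ref{lem:a.11}, $\alpha$ is smooth so $G\rtimes_\alpha\R$ is again locally exponential of this form), it suffices by the standard criterion for direct-limit Lie groups to check smoothness of the orbit map on each $G^N \rtimes \R$. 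On $G^N\rtimes\R$ the representation is the finite tensor product $\rho_1(\cdot)\otimes\cdots\otimes\rho_N(\cdot)$ acting on $\cH^N \otimes \bigotimes_{n>N}\Omega_n$, and the orbit map of $\Omega$ factors through the finite tensor product of the smooth orbit maps $(g_n,t)\mapsto \rho_n(g_n,t)\Omega_n$; finite tensor products of smooth Hilbert-space-valued maps are smooth (the multiplication map $\cH_1\times\cdots\times\cH_N \to \cH^N$ is continuous multilinear, hence smooth, and one composes), so $\Omega$ restricts to a smooth vector on each $G^N\rtimes\R$, hence $\Omega \in \cH^\infty$. Finally, density of $\cH^\infty$ in $\cH$: the finite tensor products $v_1\otimes\cdots\otimes v_N\otimes\Omega_{N+1}\otimes\cdots$ with $v_i \in \cH_i^\infty$ are smooth vectors (same finite-tensor-product argument, noting $\Omega_i$ is smooth), and since $\cH_i^\infty$ is dense in each $\cH_i$ and the subspaces $\cH^N\otimes\tilde\Omega_N$ exhaust $\cH$, such vectors are dense in $\cH$. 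Therefore $\rho$ is smooth with smooth cyclic vector $\Omega$.

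I expect the main technical obstacle to be the careful handling of smoothness across the direct limit: one must confirm that a representation of the box-topology weak product $G = \prod'_n G_n \rtimes_\alpha \R$ is smooth precisely when its restriction to every finite-stage subgroup $G^N \rtimes \R$ is smooth and the smooth vectors are dense, and that the relevant notion of "smooth map into $\cH$" is compatible with the direct-limit chart structure of $G$ (cf.\ \cite[Rem.~4.3, Lemma~4.4]{Gl07}). The projection-onto-a-tensor-slot trick—bounded linear maps preserve smoothness of orbit maps—takes care of passing smoothness from $G$ down to each $G_n$, and the continuity of the multilinear multiplication map $\cH_1\times\cdots\times\cH_N\to\cH_1\otimes\cdots\otimes\cH_N$ (which is smooth because bounded multilinear maps between locally convex spaces are smooth) takes care of the reverse direction; once these two ingredients are in place the proof is essentially bookkeeping. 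A minor point worth flagging is that the $\R$-action on $G$ need not be analytic, but this is irrelevant here since only $C^\infty$ is claimed, and smoothness of $\alpha$ is guaranteed by Lemma~\ref{lem:a.11}.
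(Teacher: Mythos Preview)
Your forward direction matches the paper's, only with more detail: the paper simply observes that $\Omega$ smooth for $G$ restricts to $\Omega_n$ smooth for the subgroup $G_n$, and cyclicity then gives density of smooth vectors.

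For the converse the paper proceeds differently, and the difference matters. Rather than checking smoothness of the $\cH$-valued orbit map on each finite stage $G^N\rtimes\R$, the paper invokes \cite[Thm.~7.2]{Ne10b} to reduce everything to smoothness of the \emph{scalar} matrix coefficient $\phi(g)=\langle\Omega,\rho(g)\Omega\rangle$. Since $\phi(g)=\prod_n\phi_n(g_n)$ with all but finitely many factors equal to $1$, one factorizes
\[
G \xrightarrow{\ (\phi_n)\ } \one+{\textstyle\bigoplus_n}\C \hookrightarrow \one+\ell^1(\N) \xrightarrow{\ \prod\ } \C,
\]
where the first map is smooth because it is a box-wise product of smooth maps into a box target, the second is continuous affine, and the third is holomorphic.

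The step in your argument that is not as routine as you suggest is the ``standard criterion for direct-limit Lie groups'': that an $\cH$-valued map on $G=\prod'_nG_n$ is smooth once its restriction to every $G^N$ is. For Michal--Bastiani smoothness this requires joint continuity of $D^kf$ on $G\times\g^k$, and whether that can be checked stage-wise depends on the direct limit $\g=\bigoplus_n\g_n$ commuting with the relevant finite products --- true for countable strict inductive limits of Fr\'echet spaces, but the $\g_n$ here are arbitrary locally convex (in the intended application, $\g_n=C^\infty_c(\R,\fk)$ is itself only LF). The paper's route sidesteps this entirely: the first map lands in a \emph{box} target $\bigoplus_n\C$, where component-wise smoothness is tautological, and the passage through $\ell^1$ handles the infinite product without ever touching an $\cH$-valued direct-limit argument. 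Your strategy is plausible and may well go through with a sharp reference to Gl\"ockner's direct-limit results, but the paper's $\ell^1$-factorization buys you a proof that works uniformly for all locally convex $G_n$ without that dependence.
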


\begin{proof} If $(\rho,\cH,\Omega)$ is a smooth representation with $\Omega \in \cH^{\infty}$,
then $\Omega$ will be a smooth vector for every $(\cH_{n},\rho_{n},\Omega)$ as well. Since $\Omega$ is cyclic
in $\cH_n$, the latter will be a smooth representation.\\
Suppose, conversely,
that the  vacuum representations $(\rho_n,\cH_n,\Omega_n)$ are smooth,
and that $\Omega_n \in \cH^{\infty}_n$ for all $n\in \N$.
From Theorem~\ref{thm:a.11}, we know that the tensor product
representation $(\rho,\cH,\Omega)$ is continuous and cyclic.
To show that the  vacuum representation $(\rho,\cH,\Omega)
= \bigotimes_{n=1}^{\infty}(\rho_n,\cH_n,\Omega_n)$
is smooth with smooth vector $\Omega \in \cH^{\infty}$, it suffices
by \cite[Thm.~7.2]{Ne10b} to show that
$\phi(g) := \langle\Omega, \rho(g) \Omega\rangle$ is a smooth function from $G$ to $\C$.

Note that $\phi$ is the infinite product $\prod_{n = 1}^\infty \phi_n(g_n)$ of the smooth,
positive definite functions $\phi_n \colon G_n \rightarrow \C$ defined by
$\phi_{n}(g) :=  \langle\Omega_n, \rho_n(g)\Omega_n\rangle$.
To see that $\phi \colon G \rightarrow \C$ is smooth, note that
it can be decomposed into the smooth maps
\[ G = {\prod}'_{n \in \N} G_n
\ssmapright{\Phi_1}  \one + {\prod}'_{n \in \N} \C
\ssmapright{\Phi_2} \one + \ell^1(\N) \ssmapright{\Phi_3} \C, \]
where $\one = (1)_{n \in \N}$ and
\[ \Phi_1((g_n)) = (\phi_n(g_n)), \quad
\Phi_2((z_n)) = (z_n), \quad
\Phi_3((z_n)) = \prod_{n \in \N} z_n.\]
Here the smoothness of $\Phi_1$ follows from the compatibility
with the box manifold structure, $\Phi_2$ is continuous affine,
and $\Phi_3$ is holomorphic. It follows that
\[ \phi = \Phi_3 \circ \Phi_2 \circ \Phi_1\] is smooth, and hence that $(\rho,\cH,\Omega)$
is a smooth  vacuum representation with smooth vector $\Omega$.
\end{proof}

\subsection{Ergodic property of 1-parameter subgroups of
\texorpdfstring{$\widetilde{\SL}(2,\R)$}{SL(2,R)}}\label{Appendix:sl2}

We give a simplified proof for the following characterization of the ergodic property for 1-parameter subgroups of
$\widetilde{\SL}(2,\R)$ due to Mautner and Moore.
Define the 1-parameter groups $x(t)$, $y(t)$ and $h(t)$ in $\SL(2,\R)$ by
\[
x(t) = \begin{pmatrix} 1 & t\\ 0 & 1\end{pmatrix}, \quad y(t) = \begin{pmatrix} 1 & 0\\ t & 1\end{pmatrix},
\quad\text{and}\quad h(t) = \begin{pmatrix} e^{t} & 0\\ 0 & e^{-t}\end{pmatrix},
\]
and let $\widetilde{x}(t)$, $\widetilde{y}(t)$ and $\widetilde{h}(t)$ be their lift to $\widetilde{\SL}(2,\R)$.

\begin{Lemma}\label{Lemma:sl2}
Let $(\pi, \cH)$ be a continuous unitary representation of $\widetilde{\mathrm{SL}}(2,\R)$, and let $\Omega\in \cH$ be a unit vector. Then the following are equivalent:
\begin{itemize}
\item[{\rm (a)}] $\pi(\widetilde{x}(t))\Omega = \Omega$ for all $t\in \R$
\item[{\rm (b)}] $\pi(\widetilde{h}(t))\Omega = \Omega$ for all $t\in \R$
\item[{\rm (c)}] $\pi(g)\Omega = \Omega$ for all $g\in \widetilde{\mathrm{SL}}(2,\R)$.
\end{itemize}
\end{Lemma}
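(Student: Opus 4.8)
The plan is to prove $\mathrm{(a)}\Rightarrow\mathrm{(b)}\Rightarrow\mathrm{(c)}$, the implications $\mathrm{(c)}\Rightarrow\mathrm{(a)}$ and $\mathrm{(c)}\Rightarrow\mathrm{(b)}$ being trivial. Two preliminary facts are used throughout. First, the matrix identities $h(s)x(t)h(-s)=x(e^{2s}t)$ and $h(s)y(t)h(-s)=y(e^{-2s}t)$ in $\SL(2,\R)$ lift to $\widetilde h(s)\widetilde x(t)\widetilde h(-s)=\widetilde x(e^{2s}t)$ and $\widetilde h(s)\widetilde y(t)\widetilde h(-s)=\widetilde y(e^{-2s}t)$ in $\widetilde{\SL}(2,\R)$: both sides are continuous maps of $(s,t)\in\R^2$ into the simply connected group $\widetilde{\SL}(2,\R)$ covering the same map to $\SL(2,\R)$ and agreeing at $(0,0)$, hence equal by uniqueness of lifts. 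Second, the elementary Mautner contraction lemma: if $\pi(g)v=v$ and $g^{-n}hg^{n}\to e$ in the group, then $\pi(g^{-n})\pi(h)v=\pi(g^{-n}hg^{n})v\to v$ by continuity of $\pi$, so $\|\pi(h)v-v\|=\|\pi(g^{-n})\pi(h)v-v\|$ is a constant tending to $0$, whence $\pi(h)v=v$.

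For $\mathrm{(b)}\Rightarrow\mathrm{(c)}$: assuming $\pi(\widetilde h(s))\Omega=\Omega$ for all $s$, apply the contraction lemma with $g=\widetilde h(s_0)$, $s_0>0$, using $g^{-n}\widetilde x(t)g^{n}=\widetilde x(e^{-2ns_0}t)\to e$, to get $\pi(\widetilde x(t))\Omega=\Omega$ for all $t$; with $s_0<0$ and the lower unipotent, $\pi(\widetilde y(t))\Omega=\Omega$ for all $t$. The subgroup of $\widetilde{\SL}(2,\R)$ generated by $\widetilde x(\R)$ and $\widetilde y(\R)$ is arcwise connected, hence (by Yamabe's theorem) an immersed Lie subgroup; its Lie algebra contains $X$ and $Y$, hence $[X,Y]$, hence all of $\mathfrak{sl}(2,\R)$, so the subgroup is the full connected group $\widetilde{\SL}(2,\R)$. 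Thus $\Omega$ is $\widetilde{\SL}(2,\R)$-fixed, proving $\mathrm{(c)}$; in particular $\mathrm{(b)}\Rightarrow\mathrm{(a)}$.

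The main step is $\mathrm{(a)}\Rightarrow\mathrm{(b)}$. Fix any lift $\widetilde w\in\widetilde{\SL}(2,\R)$ of $w=\left(\begin{smallmatrix}0&-1\\1&0\end{smallmatrix}\right)$. A direct $2\times2$ computation gives, for each $a>0$, the Bruhat-cell contraction $x(ae^{\sigma})\,h(\sigma)\,w\,x(a^{-1}e^{\sigma})\to h(\log a)$ as $\sigma\to+\infty$; lifting this path, $\widetilde g^{(a)}_{\sigma}:=\widetilde x(ae^{\sigma})\widetilde h(\sigma)\widetilde w\,\widetilde x(a^{-1}e^{\sigma})\to\widetilde h(\log a)\,z_a$ for some central element $z_a$ (the tail of the path lies in a single sheet over a simply connected neighbourhood of $h(\log a)$). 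Using $\pi(\widetilde x(t))\Omega=\Omega$ to strip the outer unipotent factors, $\langle\pi(\widetilde g^{(a)}_{\sigma})\Omega,\Omega\rangle=\langle\pi(\widetilde h(\sigma)\widetilde w)\Omega,\Omega\rangle$ for every $a$; hence this matrix coefficient is independent of $a$ and converges, as $\sigma\to\infty$, to $\langle\pi(\widetilde h(\log a))\pi(z_a)\Omega,\Omega\rangle$. Replacing $\cH$ by the separable cyclic subspace $\overline{\pi(\widetilde{\SL}(2,\R))\Omega}$ and decomposing it as a direct integral over the spectrum of $\pi(z_0)$, where $z_0$ generates the centre $Z\cong\Z$, we may assume $\pi(z_0)=\mu\,\mathbf 1$ with $\mu\in\T$; writing $z_a=z_0^{k_a}$ and $f(\tau):=\langle\pi(\widetilde h(\tau))\Omega,\Omega\rangle$, the previous limit reads $\mu^{k_a}f(\log a)=L$ with $L$ independent of $a$. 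Taking $a=1$ gives $L=\mu^{k_1}$, so $f(\log a)=\mu^{k_1-k_a}\in\mu^{\Z}$; being continuous on the connected space $\R$ with values in the countable set $\mu^{\Z}$, $f$ is constant, equal to $f(0)=\|\Omega\|^2$. The equality case of Cauchy--Schwarz then gives $\pi(\widetilde h(\tau))\Omega=\Omega$ for all $\tau$, which is $\mathrm{(b)}$.

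I expect the last step to be the only genuine obstacle: it is the Mautner--Moore phenomenon that a unipotent-fixed vector is $\widetilde{\SL}(2,\R)$-fixed, and within it the subtle point is keeping track of the central element $z_a$ of $\widetilde{\SL}(2,\R)$ produced in the limit, which is exactly what the reduction to a scalar action of the centre is designed to absorb; the remaining ingredients are routine $2\times2$ matrix algebra and standard facts about covering groups and continuous unitary representations.
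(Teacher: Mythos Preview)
Your argument is correct, and the $(b)\Rightarrow(c)$ part is essentially the paper's: both use the conjugation relations $h(s)x(t)h(-s)=x(e^{2s}t)$ and its $y$-analogue to pass from $\widetilde h$-invariance to $\widetilde x$- and $\widetilde y$-invariance, then conclude by generation.

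For $(a)\Rightarrow(b)$ the paper takes a shorter and more elementary route that sidesteps the central-character reduction entirely. It sets $w(t):=x(t)y(-t^{-1})x(t)$, checks $h(t)=w(e^t)w(1)^{-1}$, and defines the lift explicitly as the product $\widetilde w(t):=\widetilde x(t)\widetilde y(-t^{-1})\widetilde x(t)$, so the identity $\widetilde h(t)=\widetilde w(e^t)\widetilde w(1)^{-1}$ holds in $\widetilde{\SL}(2,\R)$ with no central defect. Stripping the outer $\widetilde x$-factors (using (a)) gives $\langle\pi(\widetilde w(t))\Omega,\Omega\rangle=\langle\pi(\widetilde y(-t^{-1}))\Omega,\Omega\rangle\to 1$ as $|t|\to\infty$, hence $\pi(\widetilde w(t))\Omega\to\Omega$; with $\psi:=\pi(\widetilde w(1))\Omega$ this yields $\lim_{t\to\infty}\pi(\widetilde h(t))\psi=\Omega$, and the shift $\pi(\widetilde h(s))\Omega=\lim_{t}\pi(\widetilde h(s+t))\psi=\Omega$ finishes. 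Your Bruhat contraction $x(ae^\sigma)h(\sigma)wx(a^{-1}e^\sigma)\to h(\log a)$ is equally valid, but because you start from an arbitrary lift $\widetilde w$ of $w$ rather than one built from the lifted one-parameter groups, the limit picks up the central element $z_a$, which then forces the direct-integral reduction and the connectedness argument on $f$. (Incidentally, you could shorten that last step: $a\mapsto z_a$ is a continuous map from the connected set $(0,\infty)$ into the discrete center, hence constant, giving $f\equiv 1$ directly without invoking countability of $\mu^{\Z}$.) The paper's device---writing $\widetilde w(t)$ as an explicit product so that no central bookkeeping is needed---is the main simplification.
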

This well-known result plays an important role in ergodic theory.
It is due to Calvin Moore \cite{Moore1966, Moore1970}, and in the proof below
we almost literally follow his argument for the implication ${\rm(a)} \Rightarrow {\rm(b)}$
from \cite[p.7]{Moore1970}. The implication ${\rm(b)} \Rightarrow {\rm(c)}$
is due to Mautner \cite{Mautner1957}, and this is implicitly used by Moore in \cite{Moore1970}, and by
Howe and Moore in their seminal paper \cite{HoweMoore1979}.
In his proof, Mautner uses the classification of irreducible unitary representations of $\widetilde{\mathrm{SL}}(2,\R)$.
We bypass this with a simple argument.

\begin{proof}
For $(a) \Rightarrow (b)$, 
let $w(t) = x(t)y(-t^{-1})x(t)$, and note that we have 
${h(t) = w(e^{t})w(1)^{-1}}$ for all $t\in \R$.
If we define $\widetilde{w}(t) := \widetilde{x}(t)\widetilde{y}(-t^{-1})\widetilde{x}(t)$, then the curve $\widetilde{w}(e^t)\widetilde{w}(1)^{-1}$ covers $h(t)$.
Since it is the identity for $t=0$, we have $\widetilde{w}(e^t)\widetilde{w}(1)^{-1} = \widetilde{h}(t)$.
Since $\|\pi(\widetilde{w}(t))\Omega\| = 1$ for all $t\neq 0$, it follows from
\[
\lim_{|t|\rightarrow \infty}
\langle \pi(\widetilde{w}(t)) \Omega, \Omega \rangle 
= \lim_{|t|\rightarrow \infty} \langle \pi(\widetilde{y}(-t^{-1})) \Omega, \Omega \rangle = 1
\]
that $\lim_{|t|\rightarrow \infty} \pi(\widetilde{w}(t))\Omega = \Omega$. So for $\psi = \pi(\widetilde{w}(1))\Omega$, we find
${\lim_{t\rightarrow \infty} \pi(\widetilde{h}(t)) \psi = \Omega}$.
For every $s\in \R$ we thus have
\[
\Omega = \lim_{t\rightarrow \infty} \pi(\widetilde{h}(s + t)) \psi = \pi(\widetilde{h}(s)) \lim_{t\rightarrow \infty} \pi(\widetilde{h}(t)) \psi = \pi(\widetilde{h}(s))\Omega,
\]
so $\Omega$ is fixed by $\widetilde{h}(s)$ for all $s\in \R$.

For $(b) \Rightarrow (a)$, note that since
\[x(te^{-2s}) = h(-s)x(t)h(s)\quad \mbox{ for all } \quad
s, t \in \R,\] the same equation
$\widetilde{x}(te^{-2s}) = \widetilde{h}(-s)\widetilde{x}(t)\widetilde{h}(s)$ holds in $\widetilde{\mathrm{SL}}(2,\R)$
(both sides are the identity for $s=t=0$).
The invariance of $\Omega$ under the 1-parameter group $\widetilde{h}$ then implies
\[
	\langle \pi(\widetilde{x}(te^{-2s})) \Omega, \Omega \rangle 
	=
	\langle \pi(\widetilde{x}(t)) \Omega,\Omega\rangle.
\]
Since $\lim_{s\rightarrow \infty} \widetilde{x}(te^{-2s})$ is the identity, we have $\langle \pi(\widetilde{x}(t)) \Omega,\Omega\rangle = 1$, and it
follows that
$\pi(\widetilde{x}(t))\Omega = \Omega$ for all $t\in \R$.

Since $h(s)y(t)h(-s) = y(te^{-2s})$,
a similar argument shows that if $\Omega$ is fixed by $\widetilde{h}$, then it is fixed by $\widetilde{y}$.
It follows that if either (a) or (b) hold, then $\Omega$ is fixed by $\widetilde{x}(t)$, $\widetilde{y}(t)$ and $\widetilde{h}(t)$ alike, and hence by
the group $\widetilde{\mathrm{SL}}(2,\R)$ that they generate.
\end{proof}

\bibliographystyle{alpha}
{\small

}

\printindex 

\end{document}